\newenvironment{proof}{\textbf{Proof}.}{\hfill$\square$\\}
\newtheorem{theorem}{{\textbf Theorem}}[section]}
\newtheorem{definition}[theorem]{{\textbf Definition}}}
\newtheorem{corollary}[theorem]{{\textbf Corollary}}}
\newtheorem{lemma}[theorem]{{\textbf Lemma}}}
\newtheorem{conjecture}[theorem]{{\textbf Conjecture}}}
\newtheorem{proposition}[theorem]{{\textbf Proposition}}}
\def\contentdeclaration{\indent Some of the material in this thesis has been published in journals. The author of this thesis acknowledges the input of his collaborators, and has credited them appropriately throughout. A list of papers which overlap with this thesis are presented here.
	\begin{itemize}
	\item \textit{Simulation of non-Pauli Channels}, Thomas Cope, Leon Hetzel, Leonardo Banchi and Stefano Pirandola. Published in Physical Review A $\mathbf{96}$ (2017).\cite{THBP2017}
	\item \textit{Adaptive estimation and discrimination of Holevo-Werner channels}, Thomas Cope and Stefano Pirandola. Published in Quantum Measurements and Quantum Metrology  $\mathbf{4}$ (2017).\cite{TP2017}
	\item \textit{Converse bounds for quantum and private communication over Holevo-Werner channels}, Thomas Cope, Kenneth Goodenough and Stefano Pirandola. Published in Journal of Physics A $\mathbf{51}$ (2018).\cite{TGP2018}
	\end{itemize}
	Also relevant is the publication,
	\begin{itemize}
	\item \textit{Theory of channel simulation and bounds for private communication}, Stefano Pirandola, Samuel Braunstein, Riccardo Laurenza, Carlo Ottaviani, Thomas Cope, Gaetana Spedalieri and Leonardo Banchi. Published in Quantum Science and Technology $\mathbf{3}$ (2018).\cite{PBLOCSB2018}
	\end{itemize}	
}
\def\contentacknowledgement{\indent I am very grateful to my supervisors, Stefano Pirandola and Roger Colbeck - without them this thesis would not be possible, and their insights have been more than invaluable. My thanks also go to my peers - Riccardo, Panos, Vicky, Peter and Oliver, whose discussions and support have been helpful at every step of the process. This gratitude is extended to Carlo, Cosmo, Gaetana, Sammy and Mirjam, whose experience has been wholly beneficial. This too is accurate of my co-authors Leon, Leonardo and Kenneth; thank you for working with me. Finally, the support of my friends and especially my family has been vital, and never taken for granted.
}
\def\contentabbreviations{\vspace{-1.5cm}\indent 
 \begin{acronym}[RPPT] % put the longest/widest acronym between the square brackets; it defines the left hand column width
 \acro{LOCC}{Local Operations and Classical Communication}
 \acro{PVM}{Projective Valued Measure}
 \acro{POVM}{Positive Operator-Valued Measure}
 \acro{PPT}{Positive Partial Transpose} 
 \acro{NPT}{Negative Partial Transpose} 
 \acro{REE}{Relative Entropy of Entanglement} 
 \acro{RPPT}{Relative entropy of entanglement, with respect to Positive Partial Transpose}
 \acro{CCQ} {Correlated Classical Quantum}
 \acro{QCB}{Quantum Chernoff Bound}
 \acro{HW}{Holevo-Werner} 
 \acro{PD}{Pauli-Damping}
 \acro{LP}{Linear Program}
 \acro{SDP}{SemiDefinite Program}
 \acro{EPR}{Einstein-Podolsky-Rosen}
 \acro{KKT}{Karush–Kuhn–Tucker}
 \acro{KL}{Kullback-Leibler}
 \acro{CHSH}{Clauser-Horne-Shimony-Holt}
 \acro{CH}{Clauser-Horne}
 \acro{NP}{Non-deterministic Polynomial time}
 \end{acronym}
}
\def\contentabstract{\indent Entanglement is a feature at the heart of quantum information. Its enablement of unusual correlations between particles drives a new wave of communication and computation. This thesis explores some of the ways in which the tools for studying entanglement can be used to quantify the transmission of quantum information, and compares the use of different techniques.

We begin this thesis by expanding the technique of teleportation simulation, which adds noise to the entangled resource state to mimic channel effects. By introducing classical noise in the communication step, we show it is possible to simulate more than just Pauli channels using teleportation. This new class is characterised, and studied in detail for a particular resource state, leading to a family of simulable channels named ``Pauli-Damping channels" whose properties are analysed.

%We begin this thesis by expanding the technique of teleportation simulation to include a wide new class of qubit channels which are in general non-Pauli, in contrast to the standard teleportation simulation method. 

Also introduced are a new family of quantum states, ``phase Werner" states, whose entanglement properties relate to the interesting conjecture of bound entangled states with a negative partial transpose. Holevo-Werner channels, to which these states are connected, are shown to be teleportation covariant. We exploit this to present several interesting results, including the optimal estimation of the channel-defining parameter. The minimal binary-discrimination error for Holevo-Werner channels is bounded for the first time with the analytical form of the quantum Chernoff bound. We also consider the secret key capacity of these channels, showing how different entanglement measures provide a better upper bound for different regions of these channels.

Finally, a method for generating new Bell inequalities is presented, exploiting non-physical probability distributions to obtain new inequalities. Tens of thousands of new inequivalent inequalities are generated, and their usefulness in closing the detection loophole for imperfect detectors is examined, with comparison to the current optimal construction. Two candidate Bell inequalities which may equal or beat the best construction are presented.

}
\newcommand{\mmid}{\,\middle\vert\,}
\newcommand{\abs}[1]{\lvert #1 \rvert}
\newcommand{\norm}[1]{\lVert #1 \rVert}
\newcommand{\fbb}[2]{\left( #1 \right)/\left( #2 \right)}
\newcommand{\fob}[2]{ #1 /\left( #2 \right)}
\newcommand{\fbo}[2]{\left( #1 \right)/#2}
\newcommand{\foo}[2]{#1/#2}
\DeclareMathOperator{\grad}{\nabla}
\author{Thomas Cope}
\title{	The Role of Entanglement in Quantum Communication, and Analysis of the Detection Loophole}
\date{September 2018}
\abstract{\contentabstract}
\begin{document}
\pagenumbering{roman}
\setcounter{page}{1}
\maketitle
\makededication
\makeabstract
\tableofcontents
\setcounter{lofdepth}{2}
\listoffigures
\listoftables
\makeacknowledgements
\makedeclaration
%\makeforeword

\cleardoublepage
\pagenumbering{arabic}
\setcounter{page}{1}

%include your chapters here
\chapter{Foreword and Preliminaries}
\label{ch:LitRev}

\section{Foreword}
\label{ch1:foreword}
The field of quantum mechanics emerged around a century ago, a reaction to new experiments which defied accepted physical theories. The work of Maxwell and Hertz had shown how light behaved as a wave, building on Young's double slit experiment. Yet observed phenomena such as black body radiation and the photoelectric effect contradicted this stance. Furthermore, discoveries were made at the atomic level suggesting a Newtonian planetary-like structure, but with strictly discrete orbital energies. From these observations emerged the idea that energy came in discrete ``quanta" and that these quanta could show wave-like behaviour; an idea then extended to matter particles. From these concepts emerged mathematical frameworks, such as Schr\"{o}dinger's wave functions and Heisenberg's uncertainty principle, to form the basis of quantum theory.\\

The theory predicted many unusual and counter-intuitive effects, with several concepts controversial even amongst the pioneers of quantum mechanics. Einstein, a Nobel prize winner for his resolution of the photoelectric effect, along with Podolsky and Rosen presented the ``EPR paradox" \cite{EPR1935}, showing the predictions of quantum theory appeared to violate the principles of relativity; measurement of the position of one particle could seemingly disturb another particle distantly separated, such that a measurement of the second particle's momentum became uncertain. They proposed underlying ``hidden" variables, with the uncertainty emerging only due to our lack of knowledge of the true physical description. This argument was investigated by John Bell \cite{B1964}, who proved the physical assumptions made by Einstein could not match all quantum behaviours. Experimental results since have all ruled against these hidden variables\cite{Aspect81,Tittel1998,Giustina&,H2015,Shalm&}. %\cite{ADG1982,H2015}
Schr\"{o}dinger also took issue with the \emph{interpretation} of the theory - illustrated with the famous ``Schr\"{o}dinger's cat" thought experiment, an intended absurdity which has become a legitimate philosophical argument, as well as a common reference point. It was here that the term ``entanglement" was coined, to describe systems of particles which cannot be fully described independently of each other.\\
%measurement on one particle could seemingly cause uncertainty of a measurement on another particle distantly separated.

From these origins has arisen a field known as ``quantum information".  Abstracted away from particles and waves, instead the counter-intuitive properties are embedded within a language of Hilbert spaces, vectors, and linear operators, developed by pioneers such as von Neumann, Dirac and Weyl. Inspired by the work of Claude Shannon, who developed information theory to understand and quantify information storage, transmission and extraction, researchers in quantum information seek to study these same problems utilising quantum states. Whilst Shannon's work used the bit, a binary digit taking value either 0 or 1, as its fundamental unit of information, quantum information instead relies on the qubit, a quantum state allowing for a superposition of both 0 and 1.\\

Areas of research in quantum information include quantum computation, quantum cryptography and quantum communication, and in all three of those fields, many important developments are promised. These promises have been taken seriously - quantum random number generators are commerically available, whilst quantum computer chips are in development by Google, IBM and Microsoft, and governments have invested huge sums in order to be the first to develop quantum technologies for economic and security purposes. Alongside this, hundreds of academics worldwide are looking to understand the possibilities and limitations, trying to enact them in laboratories worldwide - and even beyond, to satellites orbiting Earth.\\

In order to achieve these amazing new developments, entangled states are often required as a resource. In the field of quantum communication, much work has been done on how to establish entanglement between remote parties, and the rate at which this can be done. Recently though, entanglement has obtained a new role as a \emph{quantifier} of communication. Connections between quantum states and channels have allowed the communication abilities of certain channels to be determined via the entanglement properties of related quantum states. It is this new role of entanglement theory with which the majority of this thesis concerns itself.\\

Chapter 1 is a primer for the rest of the thesis. It introduces the mathematical language used to understand the field of quantum information, explaining how we describe quantum states, measurements and channels. It also introduces the measures used to quantify entanglement and the properties desirable for such a measure. Also discussed are capacities of quantum channels, measures which describe the rate at which useful resources such as qubit states or entanglement may be sent or established through use of the channel.\\

Chapter 2 looks at the process of quantum teleportation. First introduced in \cite{BBCJPW1993}, it allows one remote party to perfectly transmit a quantum state to another, by consuming a maximally entangled state pre-shared between them and transmitting some classical information. It was observed that by replacing the resource state shared between the parties by a less entangled state, the effect on the transmitted state was to distort it, as if it had physically been transmitted through a noisy quantum channel. In effect, the channel has been ``simulated", and a connection can be made between the capacity of the channel and the entanglement of the resource state. Whilst this protocol may only simulate Pauli channels, we show that by adding classical noise into the communication step, one may expand the range of channels one can simulate, and provide a necessary criterion for non-Pauli properties. We then focus on a specific resource state, characterising the set of possible simulable channels.\\ 

%We expand on this idea by introducing noise into the classical transmission stage of the protocol, expanding the range of channels one can simulate. We then focus on a specific resource state, characterising the set of possible simulable channels.\\

 Chapter 3 introduces the reader to the Werner states, a highly symmetrical family of quantum states which has been used multiple times to highlight surprising properties of quantum information theory. After explaining some unusual entanglement properties of these states, we introduce a generalisation called ``phase Werner" states, and present a potential connection to the important NPT bound entanglement conjecture, including a class of phase Werner states which may exhibit this property.\\

Chapter 4 looks at the properties of Holevo-Werner channels, quantum channels related to Werner states by channel-state duality. By showing these channels satisfy a property known as ``teleportation covariance", we are able to bound how accurately one is able to estimate the channel-defining parameter via quantum measurement, either when the parameter is completely unknown or known to be one of two possible values. To do this, we establish the analytical form of the quantum Chernoff bound for Werner and isotropic states. We also look at the secret key capacity of Holevo-Werner channels, which quantifies the rate at which secure classical bits may be established between parties via this channel. By exploiting the unusual entanglement of Werner states, we are able to show the novel feature that different entanglement measures provide a better bound on this capacity, depending on the channel parameter.\\

Chapter 5 focuses instead on the topic of \emph{non-locality} - the property shown by Bell to separate quantum mechanics from Einstein's hidden variable theories. Whilst entanglement is necessary for non-locality, it is not sufficient, and proving the existence of non-locality is paramount for the most secure quantum cryptography protocols. We show how one can exploit non-physical extremal no-signalling distributions in order to generate new Bell inequalities, which certify the existence of non-locality, having done so to create tens of thousands of new inequivalent inequalities. These new inequalities are then analysed to determine their usefulness in tackling the ``detection loophole", a cryptographically-relevant problem in which maliciously preprogrammed distributions may appear non-local, and therefore cryptographically secure, by measurement failure. Given are two Bell inequalities which appear to match or better the detection efficiency of the current optimal construction.\\

Finally, Chapter 6 summarises the results and implications of the work presented, and discusses the limitations and potential further directions for future research. 

% ==========================================================================================================

\section{Structure of this Chapter}
\label{ch1:structure}
The first section of this chapter will introduce the very basics of discrete variable quantum information, defining states, measurements and channels, along with concepts such as purity and entanglement. It will finish with two famous protocols; teleportation and key distribution, in order to show what is possible with such tools. The second section introduces the idea of entanglement measures; what such a measure requires, what properties one would like it to have, and the common measures we use. Following this, a discussion of the connections between different measures is given. This is then followed by the same treatment for the capacities of quantum channels.

% ==========================================================================================================

\section{The Mathematics of Quantum Information}
\label{ch1:first}

In order to understand an area of scientific research, we must first have a firm grasp of the underlying mathematical structure that we work with. The field of quantum information is split into two camps, \emph{discrete} variable and \emph{continous} variable. Discrete models have a finite number of basis states, of which the system can exists in any superposition; however, measurement of the state in any given basis can only produce a finite numbers of outcomes. By far the most studied discrete model is the qubit, in which the basis is made up of two states, explained in subsection \ref{ch2:first:a}. By contrast, continuous variable quantum information allows measurements to have a continuous spectrum of possible outcomes. Normally continuous variable systems are characterised by the two \emph{quadratures}, $\hat{q}$ and $\hat{p}$, for each particle, which can be understood as position and momentum respectively. This thesis shall focus on discrete variable quantum information.

\subsection{The Qubit}
\label{ch2:first:a}
In classical computational theory, the fundamental unit of information is the \emph{bit}, a single binary value $0$ or $1$. In quantum theory the basic unit of information is the \emph{qubit} - the qubit can also be in the state $\ket{0}$ or $\ket{1}$, but also in any \emph{superposition} of the two: a linear combination
\begin{equation}
\ket{\phi}=\alpha_0\ket{0}+\alpha_1\ket{1},\;\;\abs{\alpha_0}^2+\abs{\alpha_1}^2=1
\end{equation}
with $\alpha_0,\,\alpha_1$ complex.

The state $\ket{\phi}$ is not in the state $\ket{0}$ nor $\ket{1}$ - when measured it will collapse into one of the two, with probability $\abs{\alpha_0}^2$ and $\abs{\alpha_1}^2$ respectively. These qubit states are complex vectors $\ket{\phi}=\left(\begin{array}{c}
\alpha_0 \\
\alpha_1
\end{array}\right)$  living in the Hilbert space $\mathcal{H}_2$, with the inner product between two states $\ket{\phi_1}=\alpha_0\ket{0}+\alpha_1\ket{1},\ket{\phi_2}=\beta_0\ket{0}+\beta_1\ket{1}$ given by 
\begin{equation}
\braket{\phi_1|\phi_2}=\alpha_0^{*}\beta_0+\alpha_1^{*}\beta_1
\end{equation}
where $\bra{\phi_1}=\left(\alpha_0^{*},\alpha_1^{*}\right)$ is the \emph{conjugate transpose} $\ket{\phi_1}^\dagger$. Note that that this is a \emph{complex} inner product, $\braket{\phi_2|\phi_1}=\braket{\phi_1|\phi_2}^*$. However, we always have that states $\ket{\phi}$ satisfy $\braket{\phi|\phi}=1$.
This idea is generalised to \emph{qudits} - states with basis states $\ket{0}\ldots \ket{d-1}$, and complex coefficients $\alpha_0\ldots \alpha_{d-1}$ with $\braket{\phi|\phi}=\sum_i |\alpha_i|^2=1$. \\

Since these states represent physical systems, a natural question is how to combine them. Suppose I have two non-interacting particles, one in state $\ket{\phi_A}=\sum_i\alpha_i\ket{i}_A$ and one in state $\ket{\phi_B}=\sum_j\beta_j\ket{j}_B$ - is there a way I can describe the two together? This is indeed possible by making use of the \emph{tensor product} $\otimes$:
\begin{equation}
\ket{\phi_A\phi_B}=\ket{\phi_A}\otimes\ket{\phi_B}=\sum_{i,j}\alpha_i\beta_j\ket{i}_A\otimes\ket{j}_B.
\end{equation}
Normally we omit the $\otimes$ and group them together within one ket\footnote{The term bra refers to $\bra{\cdot}$, and ket $\ket{\cdot}$ - together making a ``bra-ket".} bracket. Given $\ket{\phi_A}\in\mathcal{H_A},\;\ket{\phi_B}\in\mathcal{H_B}$, we say the new state $\ket{\phi_A\phi_B}\in \mathcal{H}_A\otimes \mathcal{H}_B$, or $\mathcal{H}_{AB}$ for short.
One of the most important features of quantum theory arises from the fact that this product does \emph{not} describe all allowable states on $\mathcal{H}_{AB}$ - clarified by the following definition.
\begin{definition}
A bipartite state $\ket{\phi}\in\mathcal{H}_{AB}$ is called \emph{separable} if there exists two states $\ket{\phi_A}\in \mathcal{H}_A,\;\ket{\phi_B}\in \mathcal{H}_B$ such that $\ket{\phi}=\ket{\phi_A}\otimes\ket{\phi_B}$. If this is not possible, then $\ket{\phi}$ is said to be \emph{entangled}.
\end{definition}
Here it is useful to note the following distinction: in this thesis $\ket{\phi}_A$ and $\ket{\phi}_B$ represent the same state, on two different Hilbert spaces; whereas $\ket{\phi_A}$ and $\ket{\phi_B}$ generally represent two different states (which may or may not belong to different Hilbert spaces).\\

The most important entangled state is the \emph{Bell pair}, or \emph{maximally entangled state},
\begin{equation}
\ket{\Phi^+}:=\frac{\ket{00}+\ket{11}}{\sqrt{2}}.
\end{equation} Measuring one subsystem of this state in the $\{\ket{0},\ket{1}\}$ basis leads to $\ket{0}$ half the time, after which the state collapses into $\ket{00}$, and  $\ket{1}$ the other half, giving state $\ket{11}$. The significance of this is that if the second party also measures the $\{\ket{0},\ket{1}\}$ basis, they will obtain a perfectly correlated outcome to the first party's measurement, despite having no prior knowledge of its outcome! This unusual feature forms the basis of many quantum cryptographic protocols, as we shall see later on. 

\begin{theorem}[Schmidt Decomposition]
For any pure state $\ket{\phi}\in \mathcal{H}_A\otimes \mathcal{H}_B$, there exists a set of orthonormal states $\left\{\ket{i_A}\right\}$ on $\mathcal{H}_A$ and  $\left\{\ket{i_B}\right\}$ on $\mathcal{H}_B$ such that 
\begin{equation}
\ket{\phi}=\sum_i \lambda_i \ket{i_A}\ket{i_B}
\end{equation}
where $\lambda_i$ are non-negative real numbers which satisfy 
\begin{equation*}
\sum_i\lambda_i^2=1
\end{equation*}
and are known as \emph{Schmidt} coefficients. Due to the orthonormality condition, $ \abs{\left\{\lambda_i\right\}}\leq \min\left\{\mathrm{dim}[\mathcal{H}_A],\mathrm{dim}[\mathcal{H}_B]\right\}$ with $\abs{\left\{\lambda_i\right\}}=1$ iff $\ket{\phi}$ is separable.
\end{theorem}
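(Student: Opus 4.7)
The plan is to reduce the theorem to the singular value decomposition (SVD) of the coefficient matrix obtained from expanding $\ket{\phi}$ in an arbitrary product basis. First, I would fix any orthonormal bases $\{\ket{j}_A\}$ of $\mathcal{H}_A$ and $\{\ket{k}_B\}$ of $\mathcal{H}_B$ and write
\begin{equation}
\ket{\phi}=\sum_{j,k}c_{jk}\ket{j}_A\ket{k}_B,
\end{equation}
so that the state is fully encoded in the rectangular matrix $C=(c_{jk})$ of size $\mathrm{dim}[\mathcal{H}_A]\times\mathrm{dim}[\mathcal{H}_B]$.

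Next I would invoke the SVD from linear algebra to write $C=U\Sigma V^\dagger$, where $U$ and $V$ are unitary and $\Sigma$ is diagonal with non-negative real entries $\lambda_i$ (the singular values). Substituting this factorisation back into the expansion gives
\begin{equation}
\ket{\phi}=\sum_i\lambda_i\Big(\sum_j U_{ji}\ket{j}_A\Big)\Big(\sum_k V^{*}_{ki}\ket{k}_B\Big),
\end{equation}
so the natural definition $\ket{i_A}:=\sum_j U_{ji}\ket{j}_A$ and $\ket{i_B}:=\sum_k V^{*}_{ki}\ket{k}_B$ immediately yields the claimed form $\ket{\phi}=\sum_i\lambda_i\ket{i_A}\ket{i_B}$. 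Orthonormality of $\{\ket{i_A}\}$ and $\{\ket{i_B}\}$ is a direct consequence of unitarity of $U$ and $V$, which is the cleanest part of the argument.

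To finish, I would verify the remaining assertions by straightforward computation. The constraint $\sum_i\lambda_i^2=1$ follows from $\braket{\phi|\phi}=1$ together with orthonormality, since the inner product collapses to $\sum_i\lambda_i^2$. The cardinality bound $\abs{\{\lambda_i\}}\leq\min\{\mathrm{dim}[\mathcal{H}_A],\mathrm{dim}[\mathcal{H}_B]\}$ is inherited from $\Sigma$, which has at most that many non-zero diagonal entries. For the separability equivalence, if only one $\lambda_i$ is non-zero then $\ket{\phi}$ is manifestly a product state by construction; conversely, if $\ket{\phi}=\ket{\phi_A}\otimes\ket{\phi_B}$, then the associated $C$ has rank one, so its SVD produces exactly one non-zero singular value.

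The main subtlety I anticipate is really a bookkeeping one: the Hilbert spaces may have different dimensions, so I would take care to state the SVD for rectangular matrices and to handle complex conjugation of $V$ consistently when turning matrix indices into ket labels. Everything else reduces to quoting the SVD theorem and matching notation, so the ``hard part'' is essentially conceptual---recognising that bipartite pure states and matrices are the same object, and that orthonormal change of basis on each side is exactly a pair of unitaries acting on $C$.
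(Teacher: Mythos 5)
Your proof is correct and complete: the reduction of a bipartite pure state to its coefficient matrix, followed by an application of the singular value decomposition, with orthonormality of $\{\ket{i_A}\}$ and $\{\ket{i_B}\}$ inherited from the unitarity of $U$ and $V$, is the standard argument for the Schmidt decomposition, and you have also cleanly disposed of the normalisation, cardinality, and separability claims. The paper states this theorem without proof, so there is nothing in the source to compare against, but your SVD-based argument is exactly what one would expect to find there and has no gaps.
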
 
The number of non-zero $\lambda_i$ is known as the \emph{Schmidt rank}.

\subsection{Measuring States}
\label{ch2:first:a:1}
We have touched upon measurement and ``collapse" of a quantum state, but now we define it more rigorously.
\begin{definition}
A quantum measurement is a collection of measurement operators $\left\{M_m\right\}$ which are linear operators acting on a Hilbert space. The probability of obtaining outcome $m$ is given by $\bra{\phi}M_m^{\dagger}M_m\ket{\phi}$, where $\dagger$ is the conjugate transpose. The state after outcome $m$ is given by
\begin{equation}
\ket{\phi_m}=\frac{M_m\ket{\phi}}{\sqrt{\bra{\phi}M_m^{\dagger}M_m\ket{\phi}}}.
\end{equation} 
These $\left\{M_m\right\}$ must satisfy $\sum_m M_m^\dagger M_m = \mathrm{I}$. 
\end{definition}
The set of  linear operators on $\mathcal{H}$ is denoted $\mathcal{L}\left(\mathcal{H}\right)$, and operators can always be represented as matrices. In fact for any given orthonormal basis $\left\{\ket{e_k}\right\}$ a linear operator can be represented as $A_{kl}=\bra{e_k}A\ket{e_l}$.
\begin{definition}
A \emph{Projective Valued Measure} (PVM) is a measurement $\left\{ P_m\right\}$ such that $P_kP_l=\delta_{kl}P_k$.
\end{definition}
The state after measurement is therefore
\begin{equation}
\ket{\phi_m}=\frac{P_m\ket{\phi}}{\norm{P_m\ket{\phi}}},
\end{equation} 
where $\norm{\ket{\phi}}=\sqrt{\braket{\phi|\phi}}$.\\

Often we are not concerned with the post-measurement state; in which case we can use a Positive Operator-Valued Measure (POVM). This is where we take $E_m=M_m^\dagger M_m$ as the elements, so that our conditions are $\sum_m E_m = \mathrm{I}$, and our probabilities $p(m)=\bra{\phi}E_m\ket{\phi}$.

\subsection{Mixed States and Density Operators}
The formalism introduced so far is useful, but does not fully describe all the possibilities of quantum theory. Suppose I flip a coin and prepare state $\ket{\phi_H}$ if I obtain a heads, and $\ket{\phi_T}$ if tails. I then give the resulting state to you, without telling you the outcome of the coin flip. How would you describe the state? One would be tempted to write it as
\begin{equation}
\ket{\phi_{\mathrm{coin}}}=\frac{1}{2}\ket{\phi_H}+\frac{1}{2}\ket{\phi_T}
\end{equation}
but now calculating the inner product $\braket{\phi_{\mathrm{coin}}|\phi_{\mathrm{coin}}}$ we find
\begin{align}
\braket{\phi_{\mathrm{coin}}|\phi_{\mathrm{coin}}}&=\left(\frac{1}{2}\bra{\phi_H}+\frac{1}{2}\bra{\phi_T}\right)\left(\frac{1}{2}\ket{\phi_H}+\frac{1}{2}\ket{\phi_T}\right)\nonumber\\
&=\frac{1}{4}\left(\braket{\phi_H|\phi_H}+\braket{\phi_T|\phi_T}+\braket{\phi_H|\phi_T}+\braket{\phi_T|\phi_H}\right)\nonumber\\
&=\frac{1}{2}+\frac{\braket{\phi_H|\phi_T}+\braket{\phi_T|\phi_H}}{2}\neq 1
\end{align}
when $\ket{\phi_H}\neq \ket{\phi_T}$. Thus $\ket{\phi_{\mathrm{coin}}}$ is \emph{not} a valid quantum state. In order to describe these states, we must instead turn to \emph{density operators}.
\begin{definition}
A density operator $\rho\in \mathcal{L}(\mathcal{H}_d)$ is expressed:
\begin{equation}
\rho=\sum_i p_i \ket{\phi_i}\bra{\phi_i},\;p_i\geq 0,\;\sum_i p_i=1
\end{equation}
with each $\ket{\phi_i}\in \mathcal{H}_d$. They are normally represented as matrices, with $\rho_{kl}=\bra{e_k}\rho\ket{e_l}$.
\end{definition}
In our example previously, the correct way to describe the state would be
\begin{equation}
\rho_{\mathrm{coin}}=\frac{1}{2}\ket{\phi_H}\bra{\phi_H}+\frac{1}{2}\ket{\phi_T}\bra{\phi_T}.
\end{equation}
Normally we will use as shorthand $\rho\in\mathcal{H}_d$ to refer to a $d$-dimensional density matrix.

\begin{lemma}
Density matrices are:
\begin{itemize}
\item Hermitian: $\rho^\dagger=\rho$,
\item Trace 1: $\mathrm{Tr}[\rho]=1$,
\item Positive Semidefinite $\bra{v}\rho\ket{v}\geq 0, \forall \ket{v}$.
\end{itemize}
\end{lemma}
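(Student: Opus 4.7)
The plan is to verify each of the three properties separately, each of which reduces to a one-line calculation by substituting the defining expression $\rho=\sum_i p_i \ket{\phi_i}\bra{\phi_i}$ together with the stipulations that $p_i\geq 0$, $\sum_i p_i=1$, and $\braket{\phi_i|\phi_i}=1$. There is no deep conceptual obstacle here; the lemma is essentially a consistency check confirming that the density operator formalism faithfully encodes a classical mixture of pure states.

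First I would establish Hermiticity by applying the conjugate transpose termwise. Linearity of the adjoint gives $\rho^\dagger=\sum_i p_i^\ast (\ket{\phi_i}\bra{\phi_i})^\dagger$; since $p_i$ is a non-negative real one has $p_i^\ast=p_i$, and since $(\ket{\phi_i}\bra{\phi_i})^\dagger=\bra{\phi_i}^\dagger\ket{\phi_i}^\dagger=\ket{\phi_i}\bra{\phi_i}$, the sum reassembles to $\rho$, giving $\rho^\dagger=\rho$.

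For the trace condition, linearity of the trace yields $\mathrm{Tr}[\rho]=\sum_i p_i\,\mathrm{Tr}[\ket{\phi_i}\bra{\phi_i}]$. Evaluating each term in any orthonormal basis $\{\ket{e_k}\}$ (or by the cyclic property) shows $\mathrm{Tr}[\ket{\phi_i}\bra{\phi_i}]=\sum_k \braket{e_k|\phi_i}\braket{\phi_i|e_k}=\braket{\phi_i|\phi_i}=1$. The probability normalisation $\sum_i p_i=1$ then closes the argument.

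Finally, positive semidefiniteness is obtained by sandwiching between an arbitrary $\ket{v}\in\mathcal{H}_d$: one computes
\begin{equation}
\bra{v}\rho\ket{v}=\sum_i p_i\braket{v|\phi_i}\braket{\phi_i|v}=\sum_i p_i\,\abs{\braket{\phi_i|v}}^2,
\end{equation}
which is a sum of non-negative reals and so is non-negative. The only step requiring a touch of care is the Hermitian one, where one must recognise that the adjoint of an outer product is itself; everything else is just bookkeeping, so I expect no real obstacle in the proof.
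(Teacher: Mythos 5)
Your proposal is correct and follows essentially the same route as the paper: term-by-term verification of Hermiticity via the adjoint of the outer product, the trace condition via evaluation in an orthonormal basis with $\braket{\phi_i|\phi_i}=1$, and positive semidefiniteness by sandwiching to obtain $\sum_i p_i\abs{\braket{v|\phi_i}}^2\geq 0$. The only cosmetic difference is that you explicitly note $p_i^\ast=p_i$ in the Hermiticity step, which the paper leaves implicit.
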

\begin{samepage}
\begin{proof}
\begin{itemize}
\item Hermitian:\\
%As $\rho=\sum_i p_i \ket{\phi_i}\bra{\phi_i}$, 
\begin{equation}
\rho^\dagger=\left(\sum_i p_i \ket{\phi_i}\bra{\phi_i}\right)^\dagger=\sum_i p_i \bra{\phi_i}^\dagger \ket{\phi_i}^\dagger=\sum_i p_i \ket{\phi_i}\bra{\phi_i}=\rho. 
\end{equation}
\item Trace 1:\\
\begin{align}
\mathrm{Tr}[\rho]&=\sum_k\bra{e_k}\rho\ket{e_k}\nonumber\\
&=\sum_k\bra{e_k}\left(\sum_i p_i \ket{\phi_i}\bra{\phi_i}\right)\ket{e_k}\nonumber\\
&=\sum_i p_i \sum_k \braket{e_k|\phi_i}\braket{\phi_i|e_k}\nonumber\\
&=\sum_i p_i \norm{\ket{\phi_i}}^2=\sum_i p_i =1.
\end{align}
\item Positive Semidefinite:
\begin{equation}
\bra{v}\rho\ket{v}=\bra{v}\left(\sum_i p_i \ket{\phi_i}\bra{\phi_i}\right)\ket{v}=\sum_i p_i \braket{v|\phi_i}\braket{\phi_i|v} =\sum_i p_i \abs{\braket{v|\phi_i}}^2 \geq 0.
\end{equation}
\end{itemize}
\end{proof}
\end{samepage}
If a matrix $\rho$ is positive semidefinite, we express this as $\rho \geq 0$. A matrix is positive semidefinite  iff its eigenvalues $\left\{\lambda_i\right\}$ satisfy $\lambda_i \geq 0$, $\forall i$.

\begin{definition}
A state $\rho$ is said to be \emph{pure} iff it can be written $\rho=\ket{\phi}\bra{\phi}$ for some state $\ket{\phi}$. Else the state is said to be \emph{mixed}. If a state is pure, then it can be equivalently described as $\rho$ or $\ket{\phi}$. 
\end{definition}
One can see how for pure states, the trace 1 condition reduces to $\norm{\ket{\phi}}=1$. For mixed states, the state norm $\norm{\ket{\phi}}$ generalises to the \emph{trace norm}:
\begin{definition}
The trace norm is defined:
\begin{equation}
\norm{\rho}_1:=\mathrm{Tr}\left[\sqrt{\rho\rho^\dagger}\right]
\end{equation}
\end{definition}
\begin{definition}
The purity of a state $\rho$ is given by $\mathrm{Tr}[\rho^2]$, with $\rho$ pure iff $\mathrm{Tr}[\rho^2]=1$. 
\end{definition}
The state for which the purity is minimised is the state known as the \emph{maximally mixed} state, $\mathrm{I}_d/d$. For this state the purity is $1/d$. \\

We can extend the concept of entanglement to density matrices.
\begin{definition}
A state $\rho\in\mathcal{H}_{AB}$ is separable iff it can be written in the form 
\begin{equation}
\rho=\sum_i q_i \rho^i_A\otimes \rho^i_B,\; q_i\geq 0,\; \sum_i q_i=1,
\end{equation}
where $\rho^i_A=\ket{\phi_i}\bra{\phi_i}\in\mathcal{H}_A$, $\rho^i_B=\ket{\psi_i}\bra{\psi_i}\in\mathcal{H}_B$. Else $\rho$ is said to be entangled.
\end{definition}

The tensor product on density matrices is the natural extension from pure states:
\begin{equation}
\sum_i p_i\, \ket{\phi_i}\bra{\phi_i} \otimes \sum_i r_j \ket{\psi_j}\bra{\psi_j}=\sum_{i,j} p_ir_j \ket{\phi_i\psi_j}\bra{\phi_i\psi_j}.
\end{equation} 

Given a bipartite density matrix $\rho_{AB}\in \mathcal{H}_A\otimes \mathcal{H}_B$, we are often interested in what the reduced state is for just one subsystem; referred to as $\rho_A\in\mathcal{H}_A$ or $\rho_B\in\mathcal{H}_B$ respectively. we can do this via the \emph{partial trace}.
\begin{definition}
The partial trace with respect to B is defined as
\begin{equation}
\mathrm{Tr}_B[\ket{a_i}\bra{a_j}\otimes \ket{b_k}\bra{b_l}]:=\ket{a_i}\bra{a_j}\mathrm{Tr}[\ket{b_k}\bra{b_l}]\equiv \braket{b_l|b_k}\ket{a_i}\bra{a_j}
\end{equation}
and generalised to $\mathcal{H}_A\otimes \mathcal{H}_B$ by linearity. Partial trace with respect to A is defined analogously. For a bipartite state $\rho_{AB}\in \mathcal{H}_A\otimes \mathcal{H}_B$, the \emph{reduced state} on A is given by
\begin{equation}
\rho_A:=\mathrm{Tr}_B[\rho_{AB}].
\end{equation}
\end{definition}

%\subsubsection{Bloch Sphere}
\subsubsection{Measuring Mixed States}
Now that we have expanded our definition of quantum states, we need to expand how we perform measurements. Luckily the measurement operators remain unchanged - we only need to change how to apply them. The probability of measurement $m$ is given by $\mathrm{Tr}[M_m^\dagger M_m \rho]$, whilst the postmeasurement state is 
\begin{equation}
\rho_m=\frac{M_m\rho M_m^\dagger}{\mathrm{Tr}[M_m^\dagger M_m \rho]}.
\end{equation}

Similarly, we can apply POVMs to density matrices: the probabilities we obtain are simply $p(m)=\mathrm{Tr}[E_m \rho]$. 
\subsection{Quantum Operations}
Aside from measurement, there are other processes which can occur to a quantum state. The simplest of these is a unitary operation $U:\rho \rightarrow U\rho U^\dagger$, where $U$ is a $d\times d$ unitary matrix. These are reversible operations: if $\rho'=U\rho U^\dagger$, then we may simply apply 
\begin{equation}
U^\dagger:\rho'\rightarrow U^\dagger\rho' U = U^\dagger U \rho U^\dagger U =\rho
\end{equation}
 to return to our original state.\\

This is not the most general class of operations however; we want to consider all trace-preserving\footnote{Of course, one can also consider operations which do not preserve trace, but are ``trace non-increasing", allowing for sub-normalised states - but we do not here.} completely positive maps - that is, maps $\mathcal{E}$ which satisfy
\begin{equation}
\mathrm{Tr}[\mathcal{E}\left(\rho\right)]=\mathrm{Tr}[\rho]=1
\end{equation}
and 
\begin{equation}
\forall n \in \mathbb{N}, \rho\geq 0 \Rightarrow \left(\mathbb{I}_n\otimes \mathcal{E}\right)\left(\rho\right)\geq 0.
\end{equation}
These requirements are simply stating that valid quantum states should be mapped to valid quantum states. The second condition (complete positivity) may seem unusual - why not require that the map is simply positive? However, one should be able to apply such a map to a subsystem of any quantum state and still obtain a valid \emph{overall} state - it is this property that complete positivity embodies. \\

Maps satisfying the above criteria are known as \emph{quantum channels}. They can be expressed as :
\begin{equation}\label{channeldef}
\mathcal{E}\left(\rho\right)=\sum_i K_i \rho K_i^\dagger, 
\end{equation}
with $\sum_i K_i^\dagger K_i=\mathrm{I}$. If $\mathcal{E}:\mathcal{H}_{d_1}\rightarrow \mathcal{H}_{d_2}$, then the $K_i$ are of size $d_2\times d_1$, and $\sum_i K_i^\dagger K_i=\mathrm{I}_{d_1}$.\\

In general, these channels are not reversible, and often we are interested in the quantum channel which accurately describes the noise (unwanted operations) that occurs when trying to send/store a quantum state.

\subsubsection{Quantum Channel Examples}
\begin{itemize}
\item \textbf{Bit flip channel}\\
Perhaps the simplest quantum channel, which has its roots in classical information theory. With probability $1-p$ the state $\ket{0}$ is sent to $\ket{1}$ (and vice versa) whilst with probability $p$ it is left unchanged. In the form~(\ref{channeldef}) , we have $K_0=\sqrt{p}\,\mathrm{I}$, $K_1=\sqrt{1-p}\sigma_x$, where $\sigma_x$ is the Pauli matrix 
\begin{equation}
\sigma_x=\left(\begin{array}{cc}
0 & 1 \\
1 & 0 \\
\end{array}\right),
\end{equation}
 the unitary which flips $\ket{0}$ and $\ket{1}$. Unless $p=0,1$ this channel is not unitary, and thus not reversible.\\

\item \textbf{Depolarising channel}\\
Another common channel is the depolarising channel. With probability $p$, the quantum state is transmitted successfully, whilst with probability $1-p$ the state is completely lost - replaced by the maximally mixed state $\mathrm{I}/d$. Thus
\begin{equation}
\mathcal{D}_p:=p\rho +(1-p)\frac{\mathrm{I}}{d}.
\end{equation}
For $d=2$, there are four Kraus operators:
\begin{align*}
K_0&=\sqrt{\frac{1-3p}{4}}\mathrm{I}=\sqrt{\frac{1-3p}{4}}\left(\begin{array}{cc}
1 & 0\\
0 & 1
\end{array}\right), & K_1&=\sqrt{\frac{p}{4}}\sigma_x=\sqrt{\frac{p}{4}}\left(\begin{array}{cc}
0 & 1\\
1 & 0
\end{array}\right), \\
K_2&=\sqrt{\frac{p}{4}}\sigma_y=\sqrt{\frac{p}{4}}\left(\begin{array}{cc}
0 & -i\\
i & 0
\end{array}\right), & K_3&=\sqrt{\frac{p}{4}}\sigma_z=\sqrt{\frac{p}{4}}\left(\begin{array}{cc}
1 & 0\\
0 & -1
\end{array} \right).
\end{align*}
The three Pauli matrices $\sigma_x,\sigma_y,\sigma_z$ play an important role in qubit quantum theory, as we shall see later.\\

\item \textbf{Amplitude damping channel}\\
Another useful channel is the amplitude damping channel - it has an important operational interpretation, of the decay of an excited atom. If the atom is ground state $\ket{0}$, then it cannot decay further. However, if it is in the excited state $\ket{1}$ then it has probability $\gamma$ of losing a photon into the environment and dropping to the ground state, and probability $1-\gamma$ of staying in the excited state. The action on both the atom and the environment is:
\begin{align*}
\ket{0}_A\ket{0}_E &\rightarrow \ket{0}_A\ket{0}_E,\\
\ket{1}_A\ket{0}_E &\rightarrow \sqrt{\gamma}\ket{0}_A\ket{1}_E+\sqrt{1-\gamma}\ket{1}_A\ket{0}_E.
\end{align*}
The action on the atomic system can be described as a channel with Kraus operators:
\begin{align}
K_0&=\left(\begin{array}{cc}
1 & 0\\
0 & \sqrt{1-\gamma}
\end{array}\right), & K_1&=\left(\begin{array}{cc}
0 & \sqrt{\gamma}\\
0 & 0
\end{array}\right).
\end{align}
Unlike the previous examples, the Kraus operators are not Pauli matrices - a point we shall return to in chapter \ref{ch:simul}.
\end{itemize}
 
\subsubsection{Local Operations and Classical Communication}
A particularly important class of quantum operations is that of local operations and classical communication (LOCC). In this scenario, two parties Alice and Bob share a quantum state $\rho_{AB}\in\mathcal{H}_A\otimes \mathcal{H}_B$, with Alice owning $\mathcal{H}_A$ and Bob $\mathcal{H}_B$. The class of LOCC operations are all possible processes they can do whilst being confined to separate laboratories. They can transform their subsystem, send them through channels, prepare auxillary states and measure them jointly with their subsystem - anything so long as it is confined to local Hilbert spaces. The classical communication allows them to co-ordinate their efforts - for example, they could perform the operation
\begin{equation}
\sum_i p_i \left(\mathcal{E}_A^i\otimes \mathcal{E}_B^i\right)\left(\rho_{AB}\right),\;p_i \geq 0,\; \sum_i p_i=1
\end{equation}
 by Alice sampling a probability distribution, and communicating the result to Bob. We shall see much more of this general class of operations throughout the thesis.

\subsection{Quantum Protocols}
In this section I shall outline two important quantum protocols, which serve to illustrate and motivate some of the research in this thesis. 
\subsubsection{Teleportation}\label{ch1:Teleportation}
Perhaps one of the most interesting protocols within quantum information theory - and one that always provokes a reaction when explaining one's research to strangers to the field - is the concept of \emph{teleportation}\cite{BBCJPW1993}. In this scenario, Alice owns a qubit state $\ket{\phi}_C=\alpha_0\ket{0}+\alpha_1\ket{1}$ she wishes to send to Bob - but they do not share a quantum channel. Fortunately, they have a classical line of communication between the two of them\footnote{Or just Alice to Bob is sufficient.}, and they had hitherto shared between themselves a maximally entangled state, $\ket{\Phi^+}=\left(\ket{00}+\ket{11}\right)/\sqrt{2}$. With this, it is possible for Alice to \emph{perfectly} transmit $\ket{\phi}$ to Bob. The protocol is as follows:
\begin{enumerate}
\item First consider the overall state 
\begin{align}
\ket{\phi}_C\otimes \ket{\Phi^+}_{AB}&=\left(\alpha_0\ket{0}+\alpha_1\ket{1}\right)_C\otimes \frac{\ket{00}_{AB}+\ket{11}_{AB}}{\sqrt{2}}\nonumber\\
&=\frac{1}{\sqrt{2}}\left(\alpha_0\ket{00}_{CA}\ket{0}_B+\alpha_0\ket{01}_{CA}\ket{1}_B+\alpha_1\ket{10}_{CA}\ket{0}_B+\ket{11}_{CA}\ket{1}_B\right).\label{ffs}
\end{align}
We will re-express this overall state using a different basis for Alice: the \emph{Bell} basis. This is a two-qubit orthonormal basis consisting of the four maximally entangled two-qubit states:
\begin{align*}
\ket{\Phi^+}&=\mathrm{I}\otimes \mathrm{I} \ket{\Phi^+}=\frac{\ket{00}+\ket{11}}{\sqrt{2}} & \ket{\Phi^-}&=\mathrm{I}\otimes \sigma_z \ket{\Phi^+}=\frac{\ket{00}-\ket{11}}{\sqrt{2}} \\
\ket{\Psi^+}&=\mathrm{I}\otimes \sigma_x \ket{\Phi^+}=\frac{\ket{01}+\ket{10}}{\sqrt{2}} & \ket{\Psi^-}&=\mathrm{I}\otimes i\sigma_y \ket{\Phi^+}=\frac{\ket{01}-\ket{10}}{\sqrt{2}} 
\end{align*}
 and we can express 
\begin{align*}
\ket{00}&=\frac{\ket{\Phi^+}+\ket{\Phi^-}}{\sqrt{2}},&\ket{01}&=\frac{\ket{\Psi^+}+\ket{\Psi^-}}{\sqrt{2}},\\
\ket{10}&=\frac{\ket{\Psi^+}-\ket{\Psi^-}}{\sqrt{2}},& \ket{11}&=\frac{\ket{\Phi^+}-\ket{\Phi^-}}{\sqrt{2}}.
\end{align*}
This means we can rewrite our state in  Eq.~(\ref{ffs}) as:
\begin{align}
\frac{1}{2}\bigg(&\ket{\Phi^+}_{CA}\otimes(\alpha_0\ket{0}+\alpha_1\ket{1})_B+
\ket{\Psi^-}_{CA}\otimes(\alpha_0\ket{0}-\alpha_1\ket{1})_B\nonumber\\
+&\ket{\Psi^+}_{CA}\otimes(\alpha_0\ket{1}+\alpha_1\ket{0})_B+
\ket{\Psi^+}_{CA}\otimes(-\alpha_0\ket{1}+\alpha_1\ket{0})_B\bigg).\label{sfs}
\end{align}
\item Alice performs a projective measurement on her subsystem $\mathcal{H}_{CA}$ in the Bell basis - we can see from our rewriting that this will collapse the superposition of the four terms in Eq.~(\ref{sfs}) into just one of them, each with equal probability. 
\item Alice then communicates her outcome over the classical communication channel - this will take two classical bits (as there are four possible outcomes).
\item Dependent on Alice's outcome, Bob performs one of the following unitaries:
\begin{enumerate}
\item Alice obtained $\ket{\Phi^+}$:\\
Bob's state is of the form $(\alpha_0\ket{0}+\alpha_1\ket{1})_B$ - this is exactly $\ket{\phi}$ and Bob performs no unitary/ the identity unitary.
\item Alice obtained $\ket{\Phi^-}$:\\
Bob's state is of the form $(\alpha_0\ket{0}-\alpha_1\ket{1})_B$ - by performing the operator $\sigma_z$, Bob obtains 
\begin{equation}
\left(\begin{array}{cc}
1 & 0 \\
0 & -1 
\end{array}\right)\left(\begin{array}{c}
\alpha_0\\
-\alpha_1
\end{array}\right)=\left(\begin{array}{c}
\alpha_0\\
\alpha_1
\end{array}\right)=\ket{\phi}.
\end{equation}
\item Alice obtained $\ket{\Psi^+}$:\\
Bob's state is of the form $(\alpha_1\ket{0}+\alpha_0\ket{1})_B$ - by performing the operator $\sigma_x$, Bob obtains 
\begin{equation}
\left(\begin{array}{cc}
0 & 1 \\
1 & 0 
\end{array}\right)\left(\begin{array}{c}
\alpha_1\\
\alpha_0
\end{array}\right)=\left(\begin{array}{c}
\alpha_0\\
\alpha_1
\end{array}\right)=\ket{\phi}.
\end{equation}
\item Alice obtained $\ket{\Psi^-}$:\\
Bob's state is of the form $(\alpha_1\ket{0}-\alpha_0\ket{1})_B$ - by performing the operator $i\sigma_y$, Bob obtains 
\begin{equation}\left(\begin{array}{cc}
0 & -1 \\
1 & 0 
\end{array}\right)\left(\begin{array}{c}
\alpha_1\\
-\alpha_0
\end{array}\right)=\left(\begin{array}{c}
\alpha_0\\
\alpha_1
\end{array}\right)=\ket{\phi}.
\end{equation}
\end{enumerate}
Thus regardless of Alice's outcome, Bob is able to recover $\ket{\phi}$.
\end{enumerate}
Although we have only described this for pure states in detail, the protocol works for general mixed and/or entangled states. Moreover, by using the \emph{generalised Pauli matrices} 
$\left\{X_d^i,Z_d^j\right\}_{i,j=0}^{d-1}$, where
\begin{align}
[X_d]_{kl}&=\delta_{k,(l-1)} \text{ mod } d,& [Z_d]_{kl}=e^{\frac{2\pi i (k-1)}{d}}\delta_{kl}
\end{align}
one may generalise this protocol for qudits. For this, the measurement performed is the projection onto the $d^2$ states $\left\{X_d^iZ_d^j\ket{\Phi}_d\right\}_{i,j}$ with $\ket{\Phi}_d=\sum_{i=0}^d \ket{ii}/\sqrt{d}$ the $d$-dimensional maximally entangled state.\\

There are a few interesting things to note about this protocol - the first, that it is an LOCC protocol - Alice performed a local measurement, Bob a local unitary, and there was classical communication only. The second, that Alice ends up owning one of the four maximally entangled states, each with probability $1/4$ - her measurement told her no information about $\ket{\phi}$, and she no longer holds a copy of $\ket{\phi}$. Finally, in order to send one arbitrary qubit, one ``ebit" (a maximally entangled Bell pair) and two bits were required.

\subsubsection{BB84 - Key Distribution}
Another important application of quantum information is key distribution for cryptography purposes. The current schemes used today are classical techniques relying on the difficulty of certain number theory problems, primarily integer factorisation. However, in 1994 Peter Shor proposed\cite{S1997} an algorithm which solves this problem in polynomial time - the gold standard for algorithms. This algorithm is a quantum algorithm i.e. it requires a quantum computer to run. Whilst this means communication is secure for the time being\footnote{Provided no-one finds a non-quantum polynomial algorithm for factorisation.}, with the development of practical quantum computers we must look to an alternate solution. One such scheme, which is utilised by many experimentalists, is the BB84 protocol\cite{BB1984}, introduced by Bennett and Brassard. The protocol works as follows:
\begin{enumerate}
\item Alice and Bob share a perfect quantum channel $\mathbb{I}:\mathcal{H}_A \rightarrow \mathcal{H}_B$, and an unsecure classical channel. Alice generates two  uniformly random bits, and depending on the result prepares one of four states:
\begin{equation}
\begin{array}{cc}
\text{ bits } & \text{ state} \\
00 & \ket{0}\\
01 & \ket{1} \\
10 & \ket{+}=\frac{\ket{0}+\ket{1}}{\sqrt{2}} \\
11 & \ket{-}=\frac{\ket{0}-\ket{1}}{\sqrt{2}}
\end{array}
\end{equation}
the first bit refers to the basis, whilst the second refers to the state prepared.
\item Alice then transmits the state to Bob. 
\item Bob then randomly choose a basis, $\{\ket{0},\ket{1}\}$ or $\{\ket{+},\ket{-}\}$, and measures - taking $\ket{0}$ and $\ket{+}$ as outcome 0 and $\ket{1}$ and $\ket{-}$ as outcome 1. 
\item Alice and Bob repeat steps 1-3 many times - until Bob has an $n$-bit string. Bob then communicates to Alice that he has received the states.
\item Alice then communicates to Bob the first bit of each random pair; i.e. which basis the state was prepared in. Bob checks each of his choices, and discards all bits where the basis does not match.
\item Of the remaining bits, Alice randomly selects $1/2$ of them, telling Bob their position and value. Bob checks to see if they coincide (which they should, having being measured in the same basis as they were prepared in) - if they pass up to a certain error tolerance - then they use the remaining bits to send a message by use of a \emph{one time pad}.
\end{enumerate}

The one time pad is a classical security method in which, for a message of length $m$ bits, $m$ uniformly random bits are added to the message via an $XOR$ operation. If the $m$ bits are used for only a single message, and kept completely secret, then this message is ``perfectly secure" - all possible messages have a unique mapping to the encrypted string, and are thus all equally likely - without the added bit string, no information can be learned about the original message \cite{S1949}.\\
%For large messages this is completely secure, since the only possible cracking method is brute force - which grows exponentially with the size of $m$. Since BB84 generates such a uniform random string (subject to practical considerations) this provides a promising replacement for current cryptographic methods.

Why can we trust this protocol? There are three possible attacks to the protocol.
\begin{itemize}
\item \textbf{``The man in the middle" attack.}\\
Our malicious eavesdropper, Eve, poses as Alice to Bob, and Bob to Alice, performing the protocol with each. This is a fundamental attack to all cryptography schemes, and thus we require the quantum channel is ``authenticated" - it really goes from Alice to Bob.
\item \textbf{Eve copies the state then passes it on to Bob, waiting for Alice's basis choices.} \\
In classical communication, this attack would be devastating. In quantum theory however, this is made impossible by the ``no-cloning theorem"\cite{WZ1982}.
\begin{theorem}[No-Cloning Theorem]
There exists no unitary such that for arbitrary $\ket{\phi}$,
\begin{equation}
U\ket{\phi}\ket{r}=\ket{\phi}\ket{\phi}
\end{equation}
where $\ket{r}$ is an arbitrary resource state.
\end{theorem}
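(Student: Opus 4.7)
The plan is to derive a contradiction by exploiting the fact that a cloning unitary would have to respect two structural properties of quantum mechanics simultaneously: unitarity (inner product preservation) and linearity. Either one alone gives a clean contradiction, and I would present the inner product approach as the main line since it is shorter.

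First I would assume, for contradiction, that there exists some unitary $U$ and some fixed resource state $\ket{r}$ such that $U\ket{\phi}\ket{r}=\ket{\phi}\ket{\phi}$ for every $\ket{\phi}$. I would then specialise this to two arbitrary states $\ket{\phi}$ and $\ket{\psi}$, writing down the two cloning equations
\begin{align}
U\ket{\phi}\ket{r}&=\ket{\phi}\ket{\phi},\\
U\ket{\psi}\ket{r}&=\ket{\psi}\ket{\psi}.
\end{align}
Taking the inner product of the two left-hand sides and using $U^\dagger U=\mathrm{I}$ together with $\braket{r|r}=1$ gives $\braket{\phi|\psi}$; taking the inner product of the two right-hand sides gives $\braket{\phi|\psi}^2$. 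Equating these yields $\braket{\phi|\psi}=\braket{\phi|\psi}^2$, which forces $\braket{\phi|\psi}\in\{0,1\}$. Since one can clearly choose $\ket{\phi}$ and $\ket{\psi}$ that are neither orthogonal nor identical (e.g.\ $\ket{0}$ and $\ket{+}$), this is the desired contradiction.

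As a brief alternative (which I would mention for intuition, but not fully develop), linearity gives the same conclusion: applying $U$ to the superposition $\alpha\ket{\phi}+\beta\ket{\psi}$ tensored with $\ket{r}$ produces $\alpha\ket{\phi}\ket{\phi}+\beta\ket{\psi}\ket{\psi}$, whereas genuine cloning would require $(\alpha\ket{\phi}+\beta\ket{\psi})^{\otimes 2}$, and these two expressions differ by the cross terms $\alpha\beta(\ket{\phi}\ket{\psi}+\ket{\psi}\ket{\phi})$.

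The main obstacle is really just a matter of presentation rather than mathematical difficulty: one must be careful that $\ket{r}$ is a fixed ancilla independent of $\ket{\phi}$ (otherwise the statement is trivially false or trivially true depending on interpretation), and one should note that the proof does not forbid cloning of a known orthonormal set, only of \emph{arbitrary} unknown states. With those caveats stated, the inner product calculation itself is a two-line manipulation.
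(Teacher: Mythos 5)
Your proposal is correct and follows essentially the same route as the paper: both assume a cloning unitary, apply it to two arbitrary states, use unitarity and $\braket{r|r}=1$ to equate $\braket{\phi|\psi}$ with $\braket{\phi|\psi}^2$, and conclude $\braket{\phi|\psi}\in\{0,1\}$, contradicting arbitrariness. The linearity remark and the caveats about a fixed ancilla and known orthonormal sets are sensible additions but not part of the paper's argument.
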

\begin{proof}
Consider the inner product of two states, $\braket{\phi|\psi}$. As $\ket{r}$ is a valid quantum state, $\braket{\phi|\psi}=\braket{\phi|\psi}\times 1 = \braket{\phi|\psi}\braket{r|r}$. If such a cloning unitary exists, then we could write
\begin{equation}
 \braket{\phi|\psi}\braket{r|r}= \braket{\phi r|\psi r}=\bra{\phi r}U^\dagger U \ket{\phi r} - \bra{\phi}\braket{\phi|\psi}\ket{\psi}=\braket{\phi|\psi}^2
 \end{equation}
 this forces either $\braket{\phi|\psi}=1$ or $\braket{\phi|\psi}=0$, a contradiction to our arbitrary state assumption.
\end{proof}
This proof generalises to mixed states and general operations; the important thing is Eve cannot copy the states and pass them on.
\item \textbf{Eve measures the states, copying it and then passing it on to Bob.}\\
In this case, Eve \emph{can} clone her post-measurement state, as she may simply prepare the same state. However, to do this she must choose a basis to measure in. If she chooses the same basis as Alice (which occurs with probability $1/2$) then she can do this without Bob detecting her measurement. However, if she chooses the wrong basis - for example, if she measures the state $\ket{+}$ in the $\{\ket{0},\ket{1}\}$ basis she will collapse the state into either $\ket{0}=\left(\ket{+}+\ket{-}\right)/\sqrt{2}$ or $\ket{1}=\left(\ket{+}-\ket{-}\right)/\sqrt{2}$. Thus when Bob makes his measurement, if he chooses the $\{\ket{+},\ket{-}\}$ basis, there is a $1/2$ chance he will obtain $\ket{-}$, a result which would be impossible had there been no interference! Thus if a single such result (although in practice due to practical considerations, there is usually a tolerable number of fails) is found in the check, Alice and Bob should abort the protocol. Note that the protocol only establishes a secret key, so no sensitive information has been leaked.
\end{itemize}

\subsection{Relations between dimensions}
The final part of this first section states a few important results regarding the \emph{expansion} of lower dimensional spaces to higher ones.

\begin{theorem}[Purification]
Given a state $\rho_A$ on $\mathcal{H}_A$, it is always possible to define a pure state $\ket{\phi}_{AR}$ on $\mathcal{H}_A \otimes \mathcal{H}_R$, such that
\begin{equation}
\rho_A=\mathrm{Tr}\left(\ket{\phi}_{\,AR}\bra{\phi}\right).
\end{equation}
System R is known the \emph{reference} system. 
\end{theorem}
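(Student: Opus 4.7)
The plan is to construct the purification explicitly using the spectral decomposition of $\rho_A$. Since $\rho_A$ is Hermitian and positive semidefinite with unit trace (as proved in the density matrix lemma above), the spectral theorem supplies an orthonormal eigenbasis $\{\ket{i_A}\}$ of $\mathcal{H}_A$ together with non-negative eigenvalues $\{p_i\}$ obeying $\sum_i p_i = 1$ such that
\begin{equation}
\rho_A = \sum_i p_i \ket{i_A}\bra{i_A}.
\end{equation}
This puts $\rho_A$ in exactly the form needed to mimic a Schmidt decomposition in reverse.

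Next, I would introduce a reference Hilbert space $\mathcal{H}_R$ of dimension at least the rank of $\rho_A$ (taking $\dim\mathcal{H}_R = \dim\mathcal{H}_A$ is always sufficient), equipped with any orthonormal basis $\{\ket{i_R}\}$, and define the candidate
\begin{equation}
\ket{\phi}_{AR} := \sum_i \sqrt{p_i}\, \ket{i_A}\ket{i_R}.
\end{equation}
Normalisation $\braket{\phi|\phi} = \sum_i p_i = 1$ is immediate from the orthonormality of the two bases, so $\ket{\phi}_{AR}$ is a bona fide pure state on $\mathcal{H}_A\otimes \mathcal{H}_R$.

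The final step is verification of the partial trace condition using the definition of $\mathrm{Tr}_R$ given earlier in the chapter:
\begin{equation}
\mathrm{Tr}_R\!\left(\ket{\phi}_{AR}\bra{\phi}\right) = \sum_{i,j} \sqrt{p_i p_j}\, \ket{i_A}\bra{j_A}\, \braket{j_R|i_R} = \sum_i p_i \ket{i_A}\bra{i_A} = \rho_A,
\end{equation}
where the Kronecker delta $\braket{j_R|i_R}=\delta_{ij}$ collapses the double sum. This completes the argument.

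There is no real obstacle here; the proof is essentially a direct construction and the main subtlety worth flagging is that the purification is not unique, since applying any isometry to $\mathcal{H}_R$ yields another valid purification of the same $\rho_A$. The only point that requires care is invoking diagonalisability of $\rho_A$, which is precisely what the Hermiticity and positive semidefiniteness established earlier give for free via the spectral theorem.
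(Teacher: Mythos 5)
Your proof is correct and follows essentially the same route as the paper: both diagonalise $\rho_A$ in an orthonormal eigenbasis, define $\ket{\phi}_{AR}=\sum_i\sqrt{p_i}\ket{i_A}\ket{i_R}$ over a reference system with matching orthonormal basis, and verify the partial trace collapses to $\rho_A$ via $\braket{j_R|i_R}=\delta_{ij}$. Your additional remarks on the spectral theorem, the reference-system dimension, and non-uniqueness are accurate but do not change the argument.
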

\begin{proof}
Suppose that $\rho_A=\sum_i p_i \ket{i_A}\bra{i_A}$, where $\left\{\ket{i_A}\right\}$ is an orthonormal basis for $\mathcal{H}_A$ and $\sum_i p_i =1$  (we can do this for any state). Then introduce a Hilbert space $\mathcal{H}_R$ of the same dimension as $\mathcal{H}_A$, with an orthonormal basis $\left\{\ket{i_R}\right\}$. We can define a pure state 
\begin{equation}
\ket{\phi}_{AR}:=\sum_i \sqrt{p_i}\ket{i_A}\ket{i_R}.
\end{equation}
Then
\begin{align*}
\mathrm{Tr}_R\left(\ket{\phi}_{\,AR}\bra{\phi}\right)&=\sum_{i,j}\sqrt{p_i p_j}\ket{i_A}\bra{j_A}\mathrm{Tr}(\ket{i_R}\bra{j_R})\\
&=\sum_{i,j}\sqrt{p_i p_j}\ket{i_A}\bra{j_A}\delta_{ij}\\
&=\sum_i p_i \ket{i_A}\bra{i_A}\\
&=\rho_A.
\end{align*}
\end{proof}
\begin{theorem}[Naimark Extension]
Any POVM can be realised as a projective measurement, on a higher dimensional Hilbert space.
\end{theorem}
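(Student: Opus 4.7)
The plan is to take a POVM $\{E_m\}_{m=1}^M$ on $\mathcal{H}_d$ and build around it an ancillary Hilbert space $\mathcal{H}_A$ of dimension $M$ with orthonormal basis $\{\ket{m}\}_{m=1}^M$, then define a candidate isometry $V\colon \mathcal{H}_d \to \mathcal{H}_d \otimes \mathcal{H}_A$ by
\begin{equation}
V\ket{\psi} \;=\; \sum_{m=1}^{M} \sqrt{E_m}\,\ket{\psi} \otimes \ket{m}.
\end{equation}
Because each $E_m$ is positive semidefinite (being of the form $M_m^\dagger M_m$), the square roots are well defined; this is the one place where the POVM property is genuinely used. A short computation then shows $V^\dagger V = \sum_m E_m = \mathrm{I}_{\mathcal{H}_d}$, so $V$ really is an isometry of $\mathcal{H}_d$ into the larger space $\mathcal{H}_d \otimes \mathcal{H}_A$.

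Next I would propose the projective measurement on the enlarged space by taking $P_m := \mathrm{I}_{\mathcal{H}_d} \otimes \ket{m}\bra{m}_A$. These are trivially mutually orthogonal and sum to the identity, so $\{P_m\}$ is a PVM in the sense of the earlier definition. A direct expansion of $V\rho V^\dagger$ and use of $\mathrm{Tr}[\ket{m'}\bra{n}] = \delta_{m'n}$ yields
\begin{equation}
\mathrm{Tr}\bigl[P_m\, V\rho V^\dagger\bigr] \;=\; \mathrm{Tr}\bigl[\sqrt{E_m}\,\rho\,\sqrt{E_m}\bigr] \;=\; \mathrm{Tr}[E_m\,\rho],
\end{equation}
so the enlarged projective measurement, applied to the isometrically embedded state, reproduces the original POVM statistics for every $\rho$.

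To present this in the more standard ancilla-plus-unitary language, the final step is to extend $V$ to a unitary $U$ on $\mathcal{H}_d \otimes \mathcal{H}_A$ satisfying $U(\ket{\psi}\otimes\ket{1}) = V\ket{\psi}$. Since the subspace $\mathcal{H}_d \otimes \ket{1}$ is $d$-dimensional and the image of $V$ is also a $d$-dimensional subspace of the $dM$-dimensional ambient space, one may complete both to orthonormal bases of $\mathcal{H}_d \otimes \mathcal{H}_A$ and declare $U$ to send one basis to the other; that this produces a unitary follows from the fact that $V$ is isometric on its domain. With this in hand, the operational protocol "adjoin ancilla in state $\ket{1}$, apply $U$, measure $\{P_m\}$" realises the POVM $\{E_m\}$ as a genuine projective measurement on the enlarged space $\mathcal{H}_d \otimes \mathcal{H}_A$.

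I expect the main obstacle to be conceptual rather than technical: the right definition of $V$ (indexing the POVM outcomes by orthogonal ancilla states and weighting by $\sqrt{E_m}$) is the nontrivial guess, after which the verification of isometry, reproduction of statistics, and extension to a unitary are routine and rest on standard linear algebra.
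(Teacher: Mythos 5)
Your proof is correct, and it realises exactly the construction the paper alludes to without spelling out: the paper states only that the Naimark theorem is ``proved in a similar manner to the purification theorem above,'' and your isometry $V\ket{\psi}=\sum_m\sqrt{E_m}\ket{\psi}\otimes\ket{m}$ is the precise analogue of the purification map $\ket{\phi}_{AR}=\sum_i\sqrt{p_i}\ket{i_A}\ket{i_R}$, with the $\sqrt{E_m}$ playing the role of the $\sqrt{p_i}$ and the ancilla basis $\{\ket{m}\}$ playing the role of $\{\ket{i_R}\}$. The verification of isometry, the check that $P_m=\mathrm{I}\otimes\ket{m}\bra{m}$ reproduces $\mathrm{Tr}[E_m\rho]$, and the completion of $V$ to a unitary on the enlarged space are all sound.
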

This theorem is proved in similar manner to the purification theorem above.
\begin{theorem}[Stinespring's Dilation Theorem]
Every channel $\mathcal{E}:\mathcal{H}_A\otimes \mathcal{H}_B$ can be seen as a transformation of the form
\begin{equation}
\mathcal{E}\left(\rho\right)=\mathrm{Tr}_{E'}\left[U_{AE}\left(\rho\otimes\ket{r}_{\,E}\bra{r}\right)U^\dagger_{AE}\right]
\end{equation}
with $\ket{r}\in\mathcal{H}_{E}$ an ancilla state, with $\mathcal{H}=\mathcal{H}_A\otimes\mathcal{H}_E\equiv \mathcal{H}_B\otimes \mathcal{H}_{E'}$. That is, every channel can be seen as a restriction of a unitary transformation on a larger Hilbert space.
\end{theorem}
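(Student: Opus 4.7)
The plan is to bootstrap from the Kraus representation of quantum channels stated in Eq.~(\ref{channeldef}), namely $\mathcal{E}(\rho) = \sum_i K_i \rho K_i^\dagger$ with $\sum_i K_i^\dagger K_i = \mathrm{I}_{d_A}$, and to construct the environment so that the unitary dilation reads off the Kraus operators as ``columns'' in an ancilla index. Suppose $\mathcal{E}$ has $N$ Kraus operators (which we may always pad with zeros so that the number matches any convenient $N$). I would introduce an ancilla Hilbert space $\mathcal{H}_E$ of dimension $N$ with a chosen orthonormal basis $\{\ket{e_i}\}$, and designate $\ket{r} := \ket{e_0}$ as a fixed reference state.

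The key object is the linear map $V:\mathcal{H}_A \to \mathcal{H}_B \otimes \mathcal{H}_E$ defined on vectors by
\begin{equation}
V\ket{\psi} := \sum_i \bigl(K_i \ket{\psi}\bigr) \otimes \ket{e_i}.
\end{equation}
A short calculation using orthonormality of $\{\ket{e_i}\}$ gives $\bra{\psi}V^\dagger V\ket{\psi'} = \sum_i \bra{\psi}K_i^\dagger K_i\ket{\psi'} = \braket{\psi|\psi'}$, so $V$ is an isometry. I would then interpret $V$ as acting on the subspace $\mathcal{H}_A \otimes \mathrm{span}\{\ket{r}\} \subset \mathcal{H}_A \otimes \mathcal{H}_E$, via $V\ket{\psi} = W(\ket{\psi}\otimes\ket{r})$ for a partial isometry $W$ with domain this subspace.

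The main obstacle — and the only genuinely nontrivial step — is extending $W$ to a full unitary $U_{AE}$ on $\mathcal{H}_A \otimes \mathcal{H}_E$ (identified with $\mathcal{H}_B \otimes \mathcal{H}_{E'}$, which we arrange by choosing dimensions so that $d_A \cdot \dim\mathcal{H}_E = d_B \cdot \dim\mathcal{H}_{E'}$, padding with trivial direct summands if needed). Here I would invoke the standard linear-algebra fact that any isometry between subspaces of Hilbert spaces of equal finite dimension can be extended to a unitary on the whole space: pick an orthonormal basis of the orthogonal complement of the domain subspace, pick an orthonormal basis of the orthogonal complement of the range, and define $U_{AE}$ to send one to the other while agreeing with $W$ on the original subspace. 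The extension is not unique, but any valid choice suffices because the dilation only uses $U_{AE}$ on inputs of the form $\rho \otimes \ket{r}\bra{r}$.

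Finally I would verify the dilation formula directly. Taking the partial trace in the $\{\ket{e_j}\}$ basis gives
\begin{align}
\mathrm{Tr}_{E'}\bigl[U_{AE}(\rho \otimes \ket{r}\bra{r})U_{AE}^\dagger\bigr] &= \sum_j (\mathrm{I}_B \otimes \bra{e_j})\Bigl(\sum_{i,k} K_i \rho K_k^\dagger \otimes \ket{e_i}\bra{e_k}\Bigr)(\mathrm{I}_B \otimes \ket{e_j}) \nonumber\\
&= \sum_j K_j \rho K_j^\dagger = \mathcal{E}(\rho),
\end{align}
which is the claimed representation, completing the proof.
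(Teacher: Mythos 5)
Your proof is correct and is the standard argument for Stinespring's dilation theorem; the paper itself does not supply a proof of this statement (it is stated without proof, following the remark that the Naimark extension is proved like the purification theorem). Your route — read off the Kraus operators from Eq.~(\ref{channeldef}), assemble them into the isometry $V\ket{\psi}=\sum_i (K_i\ket{\psi})\otimes\ket{e_i}$, check $V^\dagger V=\mathrm{I}$ via the completeness relation, and extend the resulting partial isometry $W$ on $\mathcal{H}_A\otimes\mathrm{span}\{\ket{r}\}$ to a unitary $U_{AE}$ — is the canonical one, and the final partial-trace verification is correct. Two small remarks worth tightening: (i) in the last display you trace over $E'$ using the basis $\{\ket{e_j}\}$ that was introduced for $E$, which tacitly sets $\mathcal{H}_{E'}=\mathcal{H}_E$; this is fine when $d_A=d_B$ (as the paper's qubit-centric setting implicitly assumes), but when $d_A\neq d_B$ one should first fix the identification of $\mathcal{H}_B\otimes\mathcal{H}_{E'}$ with $\mathcal{H}_A\otimes\mathcal{H}_E$ and pick a basis of $\mathcal{H}_{E'}$ in which the range of $V$ sits block-diagonally before taking the trace; (ii) the unitary extension requires that the domain and target have equal total dimension, and you handle this by padding — it is worth stating explicitly that the extension only needs to reproduce $W$ on the subspace $\mathcal{H}_A\otimes\mathrm{span}\{\ket{r}\}$, which you do. Neither is a gap, only a matter of notational hygiene.
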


% ==========================================================================================================

\section{Measures of Entanglement}
\label{ch1:second}
Since entanglement is such an important part of quantum theory, we would like to be able to accurately quantify it, and compare whether one state is ``more entangled" than another. In this section we shall introduce some of the important entanglement measures\cite{HHHH2009}, focusing on the ones that we use later on in the thesis. Before this however, we would like to lay out some key features\cite{PV2007} that a good measure of entanglement should have. Suppose we have an entanglement measure $E\left(\rho\right)$ - we would like the following:
\begin{itemize}
\item \emph{Separable states $\sigma$ have $E(\sigma)=0$}.\\
Since entangled states are defined as states which are not separable, this seems like a very reasonable requirement. \\
\item \emph{Entanglement should not increase under LOCC operations}: For $\Lambda$, an LOCC, $E(\Lambda(\rho))\leq E(\rho)$.\\
A more operational requirement; it is easy to show that entangled states cannot be created using LOCC operations, but this statement is slightly stronger, since one could envision an ``entanglement expansion" protocol. However, imagine it is possible to create state $\sigma$ from $\rho$ by a LOCC operation. Then any protocol we could do with  $\sigma$ (and LOCC operations) we could also do with $\rho$, first by transforming it using LOCC into $\sigma$ and then performing the same operation. Thus we cannot say $\sigma$ is more entangled than $\rho$, because as a resource $\rho$ is at least as useful as $\sigma$.\\
\item \emph{Entanglement is invariant under local unitaries.}\\
Since local unitary operations are invertible, this follows as a consequence of the previous requirement.\\
\item \emph{There exist maximally entangled states $\ket{\Phi}_d$ such that $E(\rho)\leq E(\ket{\Phi}_{\,d}\bra{\Phi}),\; \forall \rho\in\mathcal{H}_d\otimes \mathcal{H}_d$.}\\
In particular, for bipartite systems, it should be the state\footnote{As well as all states local unitarily equivalent.} $\ket{\Phi}_d=\sum_{i=0}^{d-1}\ket{ii}/\sqrt{d}$ - this is because \emph{any} other state on $\mathcal{H}_d\otimes\mathcal{H}_d$ can be formed using this state and LOCC operations. This property is limited to the bipartite case; for example the ``W state" $\left(\ket{001}+\ket{010}+\ket{100}\right)/\sqrt{3}$ and the ``GHZ state" $\left(\ket{000}+\ket{111}\right)/\sqrt{2}$ are both entangled, yet it has been shown there exists a partition between entangled states one can create via LOCC from the two \cite{DVC2000}. Thus the concept of a maximally entangled state is much more nuanced for multipartite entanglement. This thesis is restricted to the bipartite scenario.  %This requirement is impossible to satisfy in the multipartite scenario; for example the ``W state" $\frac{1}{\sqrt{3}}\left(\ket{001}+\ket{010}+\ket{100}\right)$ and the ``GHZ state" $\frac{1}{\sqrt{2}}\left(\ket{000}+\ket{111}\right)$ are both entangled, yet neither can be formed from the other using LOCC operations.
\end{itemize}
A measure that satisfies the above properties is normally referred to as an ``entanglement monotone" - since it decreases monotonically under LOCC. There are other requirements which whilst not necessary, it would be desirable for our measure to have. These include:
\begin{itemize}
\item \emph{$E(\rho)$ is well-defined for all states $\rho$.}\\
You may be surprised that this is not on the list above, but this is not the case; there exist entanglement measures which are monotones on pure states but do not satisfy the above conditions when extended to mixed states. Thus we restrict their use to the treatment of pure states only. An example of this is the \emph{von Neumann entropy}, which we shall see explicitly later on in this section.\\
\item \emph{Convexity}: $E(\sum_i p_i \rho_i)\leq \sum_i p_i E(\rho_i)$.\\
Whilst there are authors who make arguments for this property based on information about the system, the primary advantage of this property is simply mathematical, as it allows for easier testing of monotonicity and calculation for mixed states.\\
\item \emph{Additivity}: weak: $E(\rho^{\otimes n})=nE(\rho)$ and strong: $E(\rho\otimes \sigma)=E(\rho)+E(\sigma)$.\\
This property is also extremely useful mathematically, allowing for higher dimensional tensor product states to be analysed more easily, and initially seems to make sense operationally - given two copies of an entangled state, it seems reasonable to achieve twice as much. This is a misleading thought though, as it fails to take into account joint local operations - for example, Alice could transform jointly her two subsystems by a unitary $U_{A_1A_2}\neq U_{A_1}\otimes U_{A_2}$. Many important measures of entanglement fail this condition, even in its weaker form - whilst the strong condition is only satisfied by a few relevant measures. %This means often a trade-off is required between the computability and the ``meaningfulness" of a given measure; although both of those are relatively subjective properties.\\
\item \emph{Continuity}: $\forall\; \epsilon>0\; \exists\; \delta $ s.t. $\forall \rho,\rho'$, $\norm{\rho-\rho'}_1\leq \delta \Rightarrow \abs{E\left(\rho\right)-E\left(\rho'\right)}\leq \epsilon$. This property is a nice mathematical property for functions to have, and also makes sense intuitively - we would expect two quantum states which are very similar to have roughly equal amounts of entanglement. Moreover, as the distance between states gets smaller and smaller then so too should the difference in entanglement between them. In fact, is is desirable for our measure to have a \emph{stronger} form on continuity, known as \emph{asymptotic continuity} - this requires that, given $\norm{\rho-\rho'}_1\leq \delta$ , the measure satisfies a ``Fannes type" inequality 
\begin{equation}
\abs{E\left(\rho\right)-E\left(\rho\right)} \leq k\delta \log d +O(\delta),
\end{equation}
with $k$ some constant and $O(\delta)$ a function dependent only on $\delta$, which vanishes as $\delta \rightarrow 0$. This trait is desirable as many entanglement measures are considered in the limit of an infinite number of copies of the state - as the dimension of the space goes to infinity, we wish that the continuity distance grows sufficiently slowly that the ``regularised version" $\lim_{n\rightarrow \infty} E(\rho^{\otimes n})/n$ is also continuous. 

\end{itemize}
  Let us have a look at some such measures.
\begin{itemize}
\item \textbf{Von Neumann entropy}\\
For $\rho=\ket{\phi}\bra{\phi}$, we can define the von Neumann entropy measure as:
\begin{equation}
E_S\left(\rho\right):=S(\mathrm{Tr}_B\left[\rho\right]):=-\mathrm{Tr}\left[\rho_A\mathrm{log}\rho_A\right].
\end{equation}
For a separable state $\ket{\phi}=\ket{\phi_A}\otimes\ket{\phi_B}$, the subsystem $\ket{\phi_A}$ is pure - and so the von Neumann entropy is 0. For the maximally entangled state $\ket{\Phi}_d=\sum_{i=0}^{d-1}\ket{ii}/\sqrt{d}$ the reduced state is the $d$ dimensional maximally mixed state $\mathrm{I}_d/d$ - and so we obtain $\mathrm{log}\;d$ as our entanglement. This property of entanglement measures (where $E(\ket{\Phi}_d)=\log d$ is known as \emph{normalisation}, and is common amongst measures. \\
Whilst a good measure for pure states, it is simple to see that the $d^2$ maximally mixed state $\mathrm{I}_{d^2}/d^2$ also gives value $\log d$, yet clearly $\mathrm{I}_{d^2}/d^2=\mathrm{I}_{d}/d\otimes\mathrm{I}_{d}/d$. Thus this measure is insufficient for general purposes - although defined for all states, it only provides an entanglement measure for pure states.\\
\item \textbf{Distillable entanglement}\\
We have seen already that the maximally entangled state is the ``most useful" state, in that any LOCC protocol over any state can be done instead using the maximally entangled state as a starting resource. Most protocols are qubit based, and thus one often wants as many copies of $\ket{\Phi}_2\equiv\ket{\Phi^+}=\left(\ket{00}+\ket{11}\right)/\sqrt{2}$ as possible. This idea gives rise to the notion of entanglement distillation - how efficiently can we convert copies of some less entangled state $\rho$ into fewer, but maximally entangled, copies of the state $\ket{\Phi}_2$? Formally:
\begin{equation}
E_D\left(\rho\right):=\sup\left\{r:\lim_{n\rightarrow\infty}\left(\inf_{\Lambda\in\mathrm{LOCC}}\norm{\Lambda\left(\rho^{\otimes n}\right)-\ket{\Phi}_{\,2}\bra{\Phi}^{\otimes rn}}_1\right)=0\right\}.
\end{equation}
This measure reduces to the von Neumann entropy when considering pure states, but generally is difficult to calculate - a frustrating outcome given its operational usefulness.

One could ask instead the distillable entanglement using maximally entangled states $\ket{\Phi}_d$ - however, the states $\ket{\Phi}_2^{\otimes rn}$ and $\ket{\Phi}_{2^{rn}}$ are local unitarily equivalent; thus the $\ket{\Phi}_d$ distillation rate is simply $E_D/\left(\log d\right)$.\\
\item \textbf{Entanglement cost}\\
Entanglement cost can be thought of as a counterpart of the distillable entanglement, going in the reverse direction. This operational cost instead looks at the optimum rate at which our maximally entangled Bell states $\ket{\Phi}_2$ may be converted into a given state via LOCC operations. Mathematically:
\begin{equation}\label{ecost}
E_C\left(\rho\right):=\sup\left\{r:\lim_{n\rightarrow\infty}\left(\inf_{\Lambda\in\mathrm{LOCC}}\norm{\Lambda\left(\ket{\Phi}_{\,2}\bra{\Phi}^{\otimes rn}\right)-\rho^{\otimes n}}_1\right)=0\right\}.
\end{equation}
Unsurprisingly, this too is difficult to compute, due to the generality of LOCC operations. For pure states though, this reduces to von Neumann entropy once more.
\item \textbf{Entanglement of formation}\\
In an effort to simplify the entanglement cost, instead we may consider the entanglement of formation, 
\begin{equation}
E_F\left(\rho\right):=\inf\left\{\sum_i p_i E_S(\ket{\phi_i}\bra{\phi_i})\mmid \sum_i p_i \ket{\phi_i}\bra{\phi_i}=\rho\right\}
\end{equation}
which we can see looks to construct the state $\rho$ from pure states with the minimal total entanglement. This measure only looks at a single copy of the state however, whereas entanglement cost is considered in the limit of $n\rightarrow \infty$ copies. Due to this, we are introduced to our first \emph{regularised} measure
\begin{equation}
E^{\infty}_F\left(\rho\right):=\lim_{n\rightarrow\infty}\frac{E_F\left(\rho^{\otimes n}\right)}{n}.
\end{equation}
This regularised entanglement of formation was shown to satisfy $E^{\infty}_F\left(\rho\right)=E_C(\rho)$, and it was hoped that $E_F$ would be additive, since it seemed to be for many $\rho$ where both $E_F$ and $E_C$ were known. Ultimately though, this was shown not to be the case \cite{H2009},\cite{HHHH2009} and the difficulty of LOCC optimisation seen in  calculating the entanglement cost is replaced by the difficulty in regularising $E_F$.\\

\item \textbf{Relative entropy of entanglement} \\
This measure is one we shall repeatedly apply throughout the thesis, due to some of the results in the literature we take advantage of. This measure aims to look at the \emph{distance}\footnote{This is not a true distance, since $S(\rho\Vert\sigma)\neq S(\sigma\Vert\rho)$.} between the state and its closest separable state counterpart, by using the quantum relative entropy $S(\rho\Vert\sigma)=\mathrm{Tr}\left[\rho\log \rho- \rho \log \sigma\right]$. It is defined in the following way
\begin{equation}
E_R(\rho):=\min_{\sigma\in\mathrm{Sep}}S(\rho\Vert\sigma).
\end{equation}
Like the entanglement of formation, the relative entropy of entanglement (REE) is also subadditive over tensor products - we shall explore this property in detail later in the thesis by looking at \emph{Werner states} - the first such states shown to exhibit subadditivity in this measure. Thus we also have a regularised version
\begin{equation}
E^\infty_R(\rho):=\lim_{n\rightarrow\infty}\frac{E_R\left(\rho^{\otimes n}\right)}{n}.
\end{equation}
\item \textbf{Relative entropy of entanglement with respect to PPT states}\\
The concept of relative entropy can easily be generalised to other sets, rather than simply separable states - as long as the set is closed under LOCC operations, then we have a valid entanglement measure. One of the most used instances of this is the set of states with \emph{positive partial transpose} (PPT); that is $\sigma^{T_B}:=\left(\mathbb{I}\otimes T\right)\left(\sigma\right)\geq 0$. This is a necessary condition\cite{P1996} for separability, but is only sufficient\cite{HHH1997} for $\mathcal{H}_2\otimes \mathcal{H}_2$ and $\mathcal{H}_2\otimes \mathcal{H}_3$. The formal definition of the relative entropy of entanglement with respect to positive partial transpose states (RPPT) is
\begin{equation}
E_P(\rho):=\min_{\sigma,\;\sigma^{T_B}\geq 0}S(\rho\Vert\sigma),
\end{equation}
and like the REE admits a regularised version 
\begin{equation}
E^\infty_P(\rho):=\lim_{n\rightarrow\infty}\frac{E_P\left(\rho^n\right)}{n}.
\end{equation}
\item \textbf{Negativity and logarithmic negativity}\\
As mentioned above, a partial positive transpose is a necessary condition for separable states. Thus a simple measure of entanglement is simply the absolute sum of the negative eigenvalues of $\rho^{T_B}$:
\begin{equation}
N(\rho):=\sum_i\frac{\abs{\lambda_i}-\lambda_i}{2}
\end{equation}
with $\left\{\lambda_i\right\}$ the spectrum of $\rho^{T_B}$. \\
This measure is convenient due to its simplicity in calculation, but is not additive over tensor products. To achieve this, a related measure, the \emph{logarithmic negativity} was defined:
\begin{equation}
E_N\left(\rho\right):=\log\left(\sum_i\abs{\lambda_i}\right)\equiv \log\left(2N(\rho)+1\right)
\end{equation}
which retains the convenience of calculation. Unfortunately, $E_N$ is not asymptotically continuous - we shall see the significance of this limitation in the next section.\\
\item \textbf{Squashed entanglement}\\
Another measure we shall employ later in this thesis; it was motivated by ideas in classical cryptography. It is defined as
\begin{equation}
E_{\mathrm{sq}}\left(\rho\right):=\frac{1}{2}\min_{\rho'_{ABE}\in\Omega_{AB}} S(A:B|E)
\end{equation}
where $\Omega_{AB}$ is the set of density matrices $\rho'_{ABE}$ such that $\mathrm{Tr}_E[\rho'_{ABE}]=\rho\in\mathcal{H}_{AB}$. The ancilla space $\mathcal{H}_E$ may be of any (even infinite) dimension. The function $S(A:B|E)$ is the quantum conditional mutual information
\begin{equation}
 S(A:B|E):=S(\rho'_{AE})+S(\rho'_{BE})-S(\rho'_E)-S(\rho'_{ABE})
\end{equation}
where $S(\rho'_{AE})$ denote the von Neumann entropy of the reduced state $\rho'_{AE}=\mathrm{Tr}_B[\rho'_{ABE}]$. It looks to quantify how much may be learned about the full state from only the ancilla dimensions. The minimisation over all possible extensions makes this a very difficult measure to find in general - however, it is additive over tensor products, and can always be upper bounded by choosing a specific extension.\\
\item \textbf{Key Distillation Rate}\\
Another operational measure, $K(\rho)$, it is inspired from one the most important applications of quantum information theory - quantum key distribution. It quantifies the rate at which the two parties holding the entangled state may obtain shared secret bits\footnote{Classical bits, not qubits!} i.e. known to both of them, but not to any external eavesdroppers. Like the distillable entanglement, this measure is taken in the asymptotic limit $\lim_{n\rightarrow \infty} \rho^{\otimes n}$. 

The formal definition for this rate is:
\begin{equation}
K\left(\rho\right):=\sup\left\{r:\lim_{n\rightarrow\infty}\left(\inf_{\Lambda\in\mathrm{LOCC}}\norm{\Lambda\left(\rho^{\otimes n}\right)-\Psi^{\otimes rn}}_1\right)=0\right\}.
\end{equation}
where $\Psi$ is a ``private state" - a state of the form:
\begin{equation}\label{firstprivate}
\Psi=U_T\left(\ket{\Phi}_{\,2}\bra{\Phi}\otimes \tau_{A'B'}\right)U_T^\dagger
\end{equation}
where $\ket{\Phi}_2$ is the maximally entangled state, and $\tau_{A'B'}$ is an arbitrary state often referred to as the ``shield state". The unitary $U_T$ is known as ``twisting unitary" and takes the form:
\begin{equation}
U_T=\sum_{i,j=0}^{1}\ket{ij}_{\,AB}\bra{ij}\otimes U^{ij}_{A'B'}
\end{equation}
where the unitaries $U^{ij}$ are arbitrary. These states have the property that for any purification $\ket{\Phi}_{AA'BB'E}$, if Alice and Bob measure in the computational basis and trace out subsystems $A',B'$, the resultant state is
\begin{equation}
\Phi_{\mathrm{CCQ}}=\frac{1}{2}\left(\ket{0}\bra{0}+\ket{1}\bra{1}\right)\otimes \tau'_{E},
\end{equation}
where we see that $\Phi_{\mathrm{CCQ}}$ is a \emph{correlated classical quantum} (CCQ) state where Alice and Bob share one correlated classical bit, which is completely uncorrelated with the quantum state on the \emph{environment} state on $\mathcal{H}_E$. Thus, even if a malicious eavesdropper controls the environment, she can learn nothing about their shared bit. It should be noted that the maximally entangled state itself is an example of a private state.

Alternatively, one could consider an alternate definition of $K(\rho)$ in which $\ket{\Phi}_d$ is used to define a private state instead - however, it has been shown that the two definitions coincide\cite{CF2017}.
\end{itemize}

\subsection{Relations Between Entanglement Measures}
\label{ch1:second:a}
In this section, we present some relations between entanglement measures. Whilst a lot of these measures are interesting in their own right, many of them were created to better understand and quantify the operational entanglement measures - since at the heart of quantum information is the desire to understand and apply physical processes. In particular, we seek to understand their relationship to the entanglement cost and distillable entanglement. Before that however, we shall look at their application on \emph{pure states only}.
\begin{theorem}[\cite{DHR2002}]
If an entanglement measure $E$ satisfies
\begin{itemize}
\item $E(\sigma)=0$ for $\sigma$ separable;
\item $E(\ket{\Phi}_d)=\log d$;
\item $E(\Lambda(\rho))\leq E(\rho)$ for LOCC operations $\Lambda$;
\item $E$ is asymptotically continuous;
\end{itemize}
then the regularised version $E^\infty\left(\rho\right):=\lim_{n\rightarrow}E(\rho)/n$ coincides with the von Neumann entropy on pure states. Moreover, if $E$ is weakly additive, then $E=E^\infty$ and coincides with the von Neumann entropy on pure states.
\end{theorem}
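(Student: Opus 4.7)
The natural strategy is a sandwich argument: show that any $E$ meeting the four axioms satisfies $E_D(\rho) \leq E^\infty(\rho) \leq E_C(\rho)$ for every bipartite state $\rho$, and then invoke the standard pure-state result that $E_D$ and $E_C$ both coincide with the von Neumann entropy $E_S$ on pure states (this is the well-known entanglement concentration/dilution theorem of Bennett--Bernstein--Popescu--Schumacher). Since $E^\infty(\rho) \leq E_C(\rho) = E_S(\rho) = E_D(\rho) \leq E^\infty(\rho)$ for pure $\rho$, the first claim follows. The second claim is then immediate: weak additivity gives $E(\rho^{\otimes n}) = n E(\rho)$, so $E = E^\infty$ identically.

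To set up the lower bound, I would fix $r < E_D(\rho)$ and use the definition of distillable entanglement to pick, for each large $n$, an LOCC map $\Lambda_n$ with $\bigl\lVert \Lambda_n(\rho^{\otimes n}) - (\ket{\Phi}_2\bra{\Phi})^{\otimes \lfloor rn \rfloor}\bigr\rVert_1 \leq \delta_n \to 0$. Monotonicity gives $E(\rho^{\otimes n}) \geq E(\Lambda_n(\rho^{\otimes n}))$, and asymptotic continuity applied to the pair $\Lambda_n(\rho^{\otimes n})$ and $(\ket{\Phi}_2\bra{\Phi})^{\otimes \lfloor rn \rfloor}$ gives
\begin{equation}
\bigl|E(\Lambda_n(\rho^{\otimes n})) - E((\ket{\Phi}_2\bra{\Phi})^{\otimes \lfloor rn \rfloor})\bigr| \leq k\,\delta_n \cdot rn + O(\delta_n).
\end{equation}
Now $(\ket{\Phi}_2)^{\otimes \lfloor rn \rfloor}$ is local-unitarily equivalent to $\ket{\Phi}_{2^{\lfloor rn \rfloor}}$, so by LU-invariance and the normalisation axiom $E$ on this state equals $\lfloor rn \rfloor \log 2$. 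Dividing by $n$ and sending $n \to \infty$ yields $E^\infty(\rho) \geq r\log 2$ (choosing $\delta_n$ small enough relative to $1/n$ so that the $k\delta_n rn$ term is negligible); taking the supremum over $r < E_D(\rho)$ gives the lower bound. The upper bound is completely symmetric: approximate $\rho^{\otimes n}$ from $(\ket{\Phi}_2)^{\otimes \lceil rn\rceil}$ with $r > E_C(\rho)$, apply monotonicity in the other direction, and use asymptotic continuity in the same way.

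The main technical obstacle is controlling the asymptotic continuity error, because the relevant approximation takes place in a Hilbert space whose dimension grows exponentially in $n$. This is exactly why a naive continuity bound (with error scaling as $\delta \log d$ and $d = 2^{O(n)}$) is not enough and the stronger ``Fannes-type'' asymptotic continuity of the form $k\delta \log d + O(\delta)$ is imposed as a hypothesis: it keeps the error of order $\delta_n \cdot n$, which can be made $o(n)$ by choosing the distillation/dilution sequence so that $\delta_n$ vanishes faster than $1/n$, which the definitions of $E_D$ and $E_C$ permit. Once this bookkeeping is done, both inequalities close the sandwich, and the remaining steps (invoking $E_D = E_C = E_S$ on pure states, and noting that weak additivity trivialises $E = E^\infty$) are one-liners.
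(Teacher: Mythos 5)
Your proposal is essentially correct and follows the standard route: establish the sandwich $E_D(\rho)\leq E^\infty(\rho)\leq E_C(\rho)$ via monotonicity, normalisation and asymptotic continuity, and then specialise to pure states using the BBPS concentration/dilution result $E_D=E_C=E_S$. The paper itself defers the proof of this theorem to [DHR2002] without reproducing it, but the proof it does give of the closely related Theorem~\ref{distilltheor} (the $E_D(\rho)\leq E^\infty(\rho)$ direction) uses exactly the chain you set out for the lower bound; your upper-bound argument is the natural mirror image of it, so there is no substantive divergence.

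There is one small technical misstatement worth fixing. You claim the asymptotic-continuity bookkeeping forces $\delta_n$ to vanish faster than $1/n$ and that the definitions of $E_D$, $E_C$ "permit" this. Neither is true or needed. After you divide $E(\rho^{\otimes n})$ by $n$ to form $E^\infty$, the continuity error $k\delta_n\cdot rn + O(\delta_n)$ becomes $k\delta_n r + O(\delta_n)/n$, both of which already tend to zero under the weaker (and standard) condition $\delta_n\rightarrow 0$. That condition is what the definitions of $E_D$ and $E_C$ actually provide; no faster decay rate is guaranteed by them nor required by the argument. With that bookkeeping corrected, the rest of your proof — including the observation that weak additivity makes $E=E^\infty$ trivially — is sound.
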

Most of the above measures satisfy these conditions: both entanglement cost and distillable entanglement do this as mentioned before; so too does the entanglement of formation (trivially), and both the regularised REE and RPPT. The logarithmic negativity does \emph{not} however - but squashed entanglement does. In much of the literature the term ``entanglement measure" is reserved for entanglement monotones satisfying this property; we have not kept to that here, but it is an important distinction worth noting.
\begin{theorem}[\cite{DHR2002}]\label{distilltheor}
If an entanglement measure $E$ satisfies
\begin{itemize}
\item $E(\ket{\Phi}_d)=\log d$;
\item $E(\Lambda(\rho))\leq E(\rho)$ for LOCC operations $\Lambda$;
\item $E$ is asymptotically continuous\footnote{Actually, it is sufficient for this property to be satisfied for pure states only.};
\item $E$ is convex;
\end{itemize}
if $E$ is weakly additive, then $E_D(\rho)\leq E(\rho)\leq E_C(\rho)$, and if $E$ is subadditive, then $E_D(\rho)\leq E^{\infty}(\rho)\leq E_C(\rho)$.
\end{theorem}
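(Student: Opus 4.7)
The plan is to sandwich $E(\rho)$ between $E_D(\rho)$ and $E_C(\rho)$ by running their defining LOCC protocols through $E$, exploiting the normalisation $E(\ket{\Phi}_d)=\log d$ together with monotonicity, (weak or sub-) additivity, and asymptotic continuity to pass to the $n\to\infty$ limit.

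For the lower bound $E_D(\rho)\leq E(\rho)$ in the weakly-additive case, I would pick any rate $r<E_D(\rho)$ and an LOCC sequence $\Lambda_n$ achieving $\delta_n := \norm{\Lambda_n(\rho^{\otimes n})-(\ket{\Phi}_{\,2}\bra{\Phi})^{\otimes rn}}_1\to 0$. LOCC monotonicity and weak additivity give $E(\Lambda_n(\rho^{\otimes n}))\leq E(\rho^{\otimes n})=nE(\rho)$, while normalisation plus weak additivity yield $E((\ket{\Phi}_{\,2}\bra{\Phi})^{\otimes rn})=rn$. Asymptotic continuity applied on the joint dimension $2^{2rn}$ bounds $|E(\Lambda_n(\rho^{\otimes n}))-rn|\leq 2krn\delta_n+O(\delta_n)$; dividing through by $n$ and sending $n\to\infty$ gives $E(\rho)\geq r$, and taking the supremum over $r<E_D(\rho)$ closes the inequality.

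The upper bound $E(\rho)\leq E_C(\rho)$ runs in reverse: for any $r>E_C(\rho)$, take LOCC maps $\Lambda_n$ diluting $(\ket{\Phi}_{\,2}\bra{\Phi})^{\otimes rn}$ to within $\delta_n\to 0$ of $\rho^{\otimes n}$. Monotonicity combined with normalisation and weak additivity yields $rn\geq E(\Lambda_n((\ket{\Phi}_{\,2}\bra{\Phi})^{\otimes rn}))$; asymptotic continuity then transfers this estimate across to $nE(\rho)$ up to an error $2kn\delta_n\log d_\rho+O(\delta_n)$ which is $o(n)$, so taking $n\to\infty$ gives $r\geq E(\rho)$. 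For the merely subadditive case, weak additivity is unavailable, so $nE(\rho)$ is replaced everywhere above by $E(\rho^{\otimes n})$; dividing by $n$ then converges to the per-copy limit $E^\infty(\rho)=\lim_n E(\rho^{\otimes n})/n$, which is well-defined by Fekete's lemma, and the same chain of inequalities yields $E_D(\rho)\leq E^\infty(\rho)\leq E_C(\rho)$.

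The main technical obstacle is the careful application of asymptotic continuity. Because the Hilbert-space dimensions grow exponentially in $n$, the bound $k\delta\log d+O(\delta)$ itself scales linearly in $n$; survival through the final division by $n$ requires only $\delta_n\to 0$ (at no particular rate), which is exactly what the operational definitions of $E_D$ and $E_C$ supply. Convexity enters because asymptotic continuity is assumed only for pure-state comparisons (per the footnote): one of the two states in each comparison is generally mixed, and decomposing such a state as $\sigma=\sum_i p_i\ket{\phi_i}\bra{\phi_i}$ and invoking $E(\sigma)\leq\sum_i p_i E(\ket{\phi_i})$ transports the pure-state Fannes-type estimate to the mixed setting.
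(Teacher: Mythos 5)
Your proof of the key bound $E_D(\rho)\le E^\infty(\rho)$ follows essentially the same route as the paper: fix a distillation protocol at rate $r$, use normalisation plus weak/sub-additivity to evaluate $E$ on the target $\ket{\Phi}_2^{\otimes rn}$, apply asymptotic continuity to transfer the estimate to $E(\Lambda_n(\rho^{\otimes n}))$, then LOCC monotonicity and division by $n$ with $\delta_n\to0$. The paper only writes out this lower-bound half; your symmetric argument for $E\le E_C$ via a dilution protocol is the natural mirror and is correct. (A cosmetic point: the paper takes the Fannes prefactor as $\log\dim=\log 2^{rn}=rn$; you wrote $2rn$ --- either bookkeeping of the bipartite dimension works because the whole expression is $o(n)$ once divided by $n$.)

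One claim in your last paragraph is not quite right, and it is worth flagging because it purports to justify the footnote. You suggest that to extend pure-state asymptotic continuity to the mixed state $\sigma=\Lambda_n(\rho^{\otimes n})$, one decomposes $\sigma=\sum_i p_i\ket{\phi_i}\bra{\phi_i}$ and applies the pure-state Fannes estimate term by term via convexity. That decomposition does not help here: even if $\sigma$ is trace-distance close to $\ket{\Phi}_2^{\otimes rn}$, the individual $\ket{\phi_i}$ in a convex decomposition of $\sigma$ need not be close to $\ket{\Phi}_2^{\otimes rn}$ at all (e.g.\ $\sigma$ could be a small admixture of a far-away pure state), so the pure-state Fannes bound cannot be invoked on each piece. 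The actual reduction in \cite{DHR2002} is more delicate, and the paper itself sidesteps the issue, assuming mixed-state asymptotic continuity and only footnoting that pure suffices. If you want to retain the remark, I would soften it to an observation that convexity together with LOCC monotonicity is what makes the pure-state reduction possible, rather than asserting the particular decomposition argument.
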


It shall be useful in later chapters to understand the proof behind this theorem, in regards to upper bounding the distillable entanglement, and therefore we shall present it here.\\
\begin{proof}
We begin with the definition of entanglement distillation,
\begin{equation}
E_D\left(\rho\right):=\sup\left\{r:\lim_{n\rightarrow\infty}\left(\inf_\Lambda\norm{\Lambda\left(\rho^{\otimes n}\right)-\ket{\Phi}_{\,2}\bra{\Phi}^{\otimes rn}}_1\right)=0\right\}.
\end{equation}
Consider a specific LOCC distillation protocol, $\Lambda$ with rate $r$. This protocol must distill $n$ copies of $\rho$ into $rn$ copies of $\ket{\Phi}_2$, with some error $\epsilon$, which vanishes in the limit $n\rightarrow \infty$. Formally, we have that:
\begin{equation}
\norm{\Lambda\left(\rho^{\otimes n}\right)-\ket{\Phi}_{\,2}\bra{\Phi}^{\otimes rn}}_1\leq \epsilon.
\end{equation}
Suppose we have an entanglement measure $E$ which satisfies the above conditions; we may then apply asymptotic continuity\footnote{For this proof, we shall assume this holds for mixed states.}:
\begin{align}
\norm{\Lambda\left(\rho^{\otimes n}\right)-\ket{\Phi}_{\,2}\bra{\Phi}^{\otimes rn}}_1\leq \epsilon &\Rightarrow \abs{E(\Lambda\left(\rho^{\otimes n}\right)-E(\ket{\Phi}_{\,2}\bra{\Phi})^{\otimes rn}}\leq k\epsilon \log 2^{rn} +O(\epsilon)\nonumber\\ 
&\Rightarrow 
E(\ket{\Phi}_{\,2}\bra{\Phi}^{\otimes rn})\leq E(\Lambda\left(\rho^{\otimes n}\right))  +k\epsilon rn +O(\epsilon)
\end{align}
where $O(\epsilon)$ is a function that vanishes as $\epsilon \rightarrow 0$.
We can now apply the condition that $E(\ket{\Phi}_d)=\log d$ to the left hand side of our inequality to obtain:
\begin{equation}
E(\ket{\Phi}_{\,2}\bra{\Phi}^{\otimes rn})=E(\ket{\Phi}_{\,2^{rn}}\bra{\Phi})=\log 2^{rn} = rn.
\end{equation}
Plugging this in we have
\begin{align}
E(\ket{\Phi}_{\,2}\bra{\Phi}^{\otimes rn})&\leq E(\Lambda\left(\rho^{\otimes n}\right))  +k\epsilon rn +O(\epsilon)\nonumber\\
\Rightarrow rn &\leq E(\Lambda\left(\rho^{\otimes n}\right))  +k\epsilon rn +O(\epsilon)\nonumber\\
\Rightarrow rn &\leq E(\rho^{\otimes n})  +k\epsilon  rn +O(\epsilon)\nonumber\\
\end{align}
where we have applied the monotonicity of $E$ under LOCC. We can therefore bound $r$ by 
\begin{equation}
r \leq \frac{E(\rho^{\otimes n})}{n}  +k\epsilon r +\frac{O(\epsilon)}{n}\nonumber\\
\end{equation}
We can now take the limit $n\rightarrow \infty$ - by definition of $E_D(\rho)$, $\epsilon$ must vanish in this limit, and thus
\begin{equation}
r = \lim_{n\rightarrow \infty} r \leq \lim_{n\rightarrow \infty} \frac{E(\rho^{\otimes n})}{n}  +k\epsilon r +\frac{O(\epsilon)}{n} = \lim_{n\rightarrow \infty}\frac{E(\rho^{\otimes n})}{n} = E^{\infty}(\rho).
\end{equation}
As this holds for all valid $r$, we thus conclude 
\begin{equation}
E_D(\rho)\leq E^{\infty}(\rho).
\end{equation}
We have proved our result, and when $E$ is weakly additive, $E^{\infty}(\rho)=E(\rho)$.
\end{proof}
%We can now substitute this into our definition of $E_D(\rho)$ to find
%\begin{equation}
%E_D(\rho)\leq \sup\{\frac{E(\rho^{\otimes n})}{n}  +k\epsilon r +\frac{O(\epsilon)}{n}, \lim_{n\rightarrow \infty}\}=\lim_{n\rightarrow\infty} \frac{E(\rho\otimes n)}{n}=E^{\infty}(\rho)
%\end{equation}
%as $\epsilon$ vanishes as $n\rightarrow \infty$. We have proved our result, and when $E$ is weakly additive, $E^{\infty}(\rho)=E(\rho)$.
%\end{proof}
The relative entropy of entanglement with respect to both separable and PPT states satisfies the above conditions, and thus both $E_R^{\infty}(\rho)$ and $E_P^{\infty}(\rho)$ upper bound $E_D(\rho)$. As this regularised measure is generally difficult to calculate, one may instead use that $E_{R/P}^\infty\leq E_{R/P}$ to provide a single-letter bound on $E_D(\rho)$ - though in general this is a looser bound. Note that we cannot do this for $E_C$.  Another entanglement measure satisfying the conditions is the squashed entanglement - as this is an additive measure, one can use $E_{\mathrm{sq}}$ directly.\\

Despite failing to be asymptotically continuous, nevertheless the logarithmic negativity $E_N\left(\rho\right)$ also provides an upper bound for the distillable entanglement\cite{VW2002}, although there is no known connection to the entanglement cost. As $E_N\left(\rho\right)$ is easy to calculate and strongly additive, this provides the simplest way to provide upper bounds - although often it provides a looser bound than those measures satisfying the conditions above.\\

%Both regularised relative entropies satisfy this, as does the additive squashed entanglement. Once again logarithmic negativity fails this theorem - however it still upper bounds $E_D$.(CITE) It is precisely this - the upper bounding of distillable entanglement - for which so many of these measures are used. Bearing in mind this goal, the difficulty in finding the regularised version of the REE or RPPT can be avoided, if one if willing to sacrifice the tightness of the upper bound, by simply calculating the single letter version. Since it is obvious $E^{\infty}_{R/P}\leq E_{R/P}$, we still obtain an upper bound to $E_D$, although it tells us nothing about $E_C$. In many cases this single copy upper bound can be combined with a lower bound in order to provide the exact distillable entanglement.

Another important comparison to make is that between secret key rate and distillable entanglement.  One maximally entangled state $(\ket{00}+\ket{11})/\sqrt{2}$ can give one bit of secret key, as both Alice and Bob can measure in the $\{\ket{0},\ket{1}\}$ basis\footnote{Or indeed, any basis.} to obtain a shared bit, which is secure by the monogamy of entanglement\cite{CKW2000} - showing $E_D(\rho)\leq K(\rho)$. The two are nonequivalent though, as there exist states with a positive partial transpose known as \emph{bound entangled states}. These states are undistillable - for these states $E_D(\rho)\leq E^{\infty}_P(\rho)=0$ - but some of these states have been shown to have a non-zero key distillation rate\cite{HHHO2005}.\\
%The two are nonequivalent though, since states with a positive partial transpose, known as \emph{bound entangled states}, and undistillable - since for these $E_D(\rho)\leq E^{\infty}_P(\rho)=0$ - yet may be used to share secret key between parties\cite{HHHO2005}.\\

One may also generalise the proof of theorem \ref{distilltheor} in order to provide an upper bound for the secret key distillation rate - replacing $\ket{\Phi}_{\,2}\bra{\Phi}$ by $\Psi$, a suitable private state. In order to apply the same methodology, there is the extra condition of $E$ that $E(\Psi^{\otimes rn})\geq rn$, a property satisfied by the relative entropy of entanglement \cite{HHHO2009} and squashed entanglement \cite{C2006}\footnote{These two papers were the first show this upper bound for $E^{\infty}_R$ and $E_{\mathrm{sq}}$ respectively.}, but \emph{not} by the RPPT.  One also requires that the dimension of the private state $\mathrm{dim}\left[\Psi^{\otimes rn}\right]\leq 2^{d_n n}$, with $\liminf_{n\rightarrow} d_n =d_k$, a constant. This is to ensure the term $\left(k\epsilon\dim[\Psi^{\otimes rn}]\right)/n$ vanishes as $n\rightarrow \infty$. Although this restricts the set of private states one may consider, it was shown in \cite{CEHHOR2007} that the key distillation rate for all states may be achieved considering these restricted private states only - therefore one needs only consider distillation protocols $\Lambda$ for which the dimensional constraint holds. 
%(After not by the RPPT) This means there are states $\rho_{\mathrm{PPT}}$ such that $E_P(\rho_{\mathrm{PPT}})=0$, but $K(\rho_{\mathrm{PPT}})>0$ - examples of such states are given in \cite{HHHO2005}

%Finally, the key generation rate of a state has been proven to be bounded above by both the regularised REE (and thus any finite copy version) and squashed entanglement. Note the same is \emph{not} true for the RPPT - as bound entangled states give 0 for this measure, but can still generate key. \cite{HHHO2005}.

\section{Capacities of Quantum Channels}
\label{ch1:second}
Before we explore the realm of quantum channel capacities, it is worth briefly returning to the classical world and looking at what capacity means in terms of a classical channel. To do this we need to define exactly what a \emph{classical channel} is. In mathematical terms, it is simply a map from an alphabet $\mathcal{X}$ to another alphabet $\mathcal{Y}$ defined by a conditional probability distribution $p(Y=y|X=x)$. Note this is a \emph{digital channel} - this concept can be generalised to analogue by considering instead sets $\mathcal{X}$, $\mathcal{Y}$ allowing continuous variables, and a conditional probability density function $f_{Y}(y|x)$. The capacity of such a channel is the maximal amount of information that be transmitted over the channel per use, such that the probability of error can be made arbitrarily small. Claude Shannon, the father of information theory, showed that the capacity $C$ of a channel $\pi$, was given by
\begin{equation}
C(\pi)=\sup_{p_X(x)}I(X:Y):=H(Y)-H(Y|X)
\end{equation}
where $I(X:Y)$ is the mutual information between two random variables, given by the Shannon entropy of $Y$, $H(Y)=\sum_y -p_Y(y)\log p_Y(y)$, whilst $H(Y|X)$ is the conditional Shannon entropy $H(Y|X)=\sum_x p_X(x)\left(\sum_y -p_{Y|X}(y|x)\log p_{Y|X}(y|x)\right)$. The mutual information is optimised over all probability distributions for the input alphabet, to allow for biases in the channel in transmitting certain letters.\\

In the above scenario, one is concerned with transmitting bits across a classical channel. For quantum channels however, the situation is slightly different. Quantum channels are maps $\mathcal{E}:\mathcal{H}_A\rightarrow\mathcal{H}_B$ such that
\begin{equation}
\mathcal{E}(\rho)=\sum_i K_i\rho K_i^\dagger,\;\;\sum_i K_i^\dagger K_i=\mathrm{I}_A.
\end{equation}
Since we are now transmitting quantum states, some ambiguity arises regarding capacities - should we alter our definition to instead consider quantum states, rather than bits? Or should we retain the desire to transmit bits, with the additional power of a quantum channel? The solution to this conundrum is instead for each channel to have \emph{multiple} capacities, each referring to the transmission of a specific type of desired information. We also must consider a capacity $C$ in the context of three separate scenarios - \emph{no-communication}, denoted $C$, where both sender and receiver are allowed local quantum operations, but no additional classical communication; \emph{one-way communication}, denoted $C_1$, where the sender may also transmit classical communication to the receiver with each use of the channel, to aid them in their local operations, and \emph{two-way communication} $C_2$, where between each use of the channel sender and receiver are allowed to communicate freely before and after each use of the channel - the general term for such protocols is \emph{adaptive}. This may seem to defeat the purpose of transmitting information though the quantum channel at all, but as we shall see, the nature of the capacities we wish to determine allow for a rigorous treatment of all three of these scenarios - assuming the classical channel is ``unsecure" - meaning any eavesdropper may obtain the information sent. Naturally for any capacity the following holds true:
\begin{equation}
C\leq C_1 \leq C_2.
\end{equation}
Some of the most important capacities are:
\begin{itemize}
\item \textbf{Entanglement distillation capacity}\\
This capacity quantifies the optimal rate at which a channel may be used to establish maximally entangled states $\ket{\Phi}_2$ between sender and receiver - it can be thought of as analogous to the distillable entanglement. The two-way entanglement distillation capacity is defined:
\begin{equation}
D_2(\mathcal{E}):=\sup\left\{r:\lim_{n\rightarrow\infty}\left(\inf_\Lambda\norm{\rho^n_{AB}-\ket{\Phi}_{\,2}\bra{\Phi}^{\otimes rn}}_1\right)=0\right\}.
\end{equation}
Where $\rho^n_{AB}$ is the overall joint state between sender and receiver after $n$ uses of $\mathcal{E}$, with LOCC operation $\Lambda_i,i\in\{0\ldots n+1\}$ performed after the $i$\textsuperscript{th} use of the channel - we refer to these as the overall LOCC operation $\Lambda$. For the no-communication and one-way versions, the $\Lambda_i$ are restricted to the relevant communication operations. Since entanglement cannot be increased by LOCC operations, we see that the two-way communication does not ``break" this capacity - distillation is impossible over the classical channel alone.\\
\item \textbf{Quantum capacity}\\
This capacity seeks to quantify how many qubits may be sent through the channel per use - note that this refers to an arbitrary qubit - thus the amplitude damping channel, which adds noise to the state $\ket{1}$ but leaves $\ket{0}$ unchanged, does not have capacity 1. It is defined in a similar manner to $D_2$, namely:
\begin{equation}\label{Q2Def}
Q_2(\mathcal{E}):=\sup\left\{r:\lim_{n\rightarrow\infty}\left(\sup_{\tau}\inf_\Lambda\norm{\rho^n_{AB}-\tau}_1\right)=0\right\},\;\;\tau\in\mathcal{H}_2^{\otimes rn}
\end{equation}
where the supremum over $\tau$ ensures an arbitrary $rn$ qubit state can be sent. Like the entanglement distillation capacity, this capacity is not made unbounded by the addition of classical capacity - since it is known that even an infinite amount of classical communication is insufficient to perfectly send an arbitrary quantum state.\\
%\footnote{Since it requires the description of two complex numbers.}\\
\item \textbf{Secret key capacity}\\
This capacity is the analog of the key distillation rate - the rate at which shared, secret key bits can be established between parties. This can be formulated in the same manner as the previous two:
\begin{equation}
K_2(\mathcal{E})=\sup\left\{r:\lim_{n\rightarrow\infty}\left(\inf_\Lambda\norm{\rho^n_{AB}-\Psi^{\otimes rn}}_1\right)=0\right\}
\end{equation}
where $\Psi$ is a private state, as seen in Eq.~(\ref{firstprivate}).
%\begin{equation}
%\Psi=U\left(\ket{\Psi}_2\bra{\Psi}_2\otimes \tau_{A'B'}\right)U^\dagger
%\end{equation}
%where $\ket{\Psi}_2$ is the maximally entangled state, and $\tau_{A'B'}$ is some state. From these state it possible to get exactly 1 bit of shared secure key, and include the maximally entangled states also.
Notice again how no secret key bits can be sent directly using the auxiliary classical channel, due to our assumption the classical channel is unsecure. As this capacity is almost always considered with the assistance of two-way communication, we shall write $K$ to mean $K_2$ for the rest of this thesis.\\
\item \textbf{Private communication capacity}\\
A similar concept to the secret key capacity above, the private communication capacity ($P_2$) quantifies how many bits of secure classical information may be transmitted from the sender to receiver. We shall not define a supremum-based definition like the others as $P_2=K$, so any results about secret key capacity can be directly translated.
\end{itemize}
\subsection{Relations Between Capacities}
Aside from the relations concerning the allowance of various amounts of additional classical communication, there are also some important relations between the two-way capacities. The first is that for all channels, $Q_2=D_2$. That $D_2\leq Q_2$ is immediately obvious - Alice (the sender) and Bob (the receiver) can distill $r$ maximally entangled states between them, and use them to send $r$ qubits via teleportation. The reverse is also surprisingly simple, once we remember that $\tau^n$ was an arbitrary $rn$ qubit string - we can take this to be $rn$ halves of maximally entangled qubit pairs to achieve the second inequality.\\

We also know that $D_2\leq K$ - since maximally entangled states are specific examples of private states. It is likely there exist channels for which $D_2<K$, akin to bound entangled states, although no such channel has yet been shown. The final relation states that $K=P_2$ - a secret key string of length $rn$ can send securely $rn$ bits by use of a one-time pad, whilst if Alice has been able to send $rn$ private uniform bits securely to Bob, then they would also be able to use that string as a secure key string.

\subsection{Connecting Capacities and Entanglement Measures - and LOCC Simulation}
It is easy to see the similarities between many of the defined entanglement measures for channels and capacities of channels. It turns out they can be connected in terms of their values also. 
\begin{theorem}
For a channel $\mathcal{E}$, we may lower bound either the entanglement distillation capacity or secret key capacity in the following way:
\begin{align}
E_D(\chi_\mathcal{E})&\leq D_2\left(\mathcal{E}\right) & K(\chi_\mathcal{E})&\leq K_2\left(\mathcal{E}\right)
\end{align}
where $\chi_{\mathcal{E}}$ is the \emph{Choi matrix} of $\mathcal{E}:\mathcal{H}_A\otimes \mathcal{H}_B$, defined as:
\begin{equation}
\chi_\mathcal{E}:=\left(\mathrm{I}_A\otimes \mathcal{E}\right)\left(\ket{\Phi}_A\bra{\Phi}\right).
\end{equation}
\end{theorem}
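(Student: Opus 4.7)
The plan is to show that any entanglement distillation (respectively, key distillation) protocol acting on $n$ copies of the Choi state $\chi_\mathcal{E}$ can be converted into an adaptive channel protocol using $n$ uses of $\mathcal{E}$ that achieves the same rate. The bound then follows by comparing the two rate suprema.

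First I would observe that Alice can locally prepare the maximally entangled state $\ket{\Phi}_{A A'}\bra{\Phi}$ in her own laboratory and send subsystem $A'$ through $\mathcal{E}$ to Bob. By the definition $\chi_\mathcal{E} = (\mathrm{I}_A \otimes \mathcal{E})(\ket{\Phi}_A\bra{\Phi})$, this single use of the channel distributes exactly one copy of the Choi state between Alice and Bob. Repeating this procedure $n$ times with $n$ uses of $\mathcal{E}$ yields the state $\chi_\mathcal{E}^{\otimes n}$ shared between them, and the preparation of $\ket{\Phi}$ together with its distribution via the channel is itself a valid protocol in the two-way adaptive class (the local preparation is a trivial LOCC round before each channel use).

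Next, let $\Lambda$ be any LOCC protocol on $\chi_\mathcal{E}^{\otimes n}$ achieving rate $r$ in the distillation of $\ket{\Phi}_{\,2}\bra{\Phi}^{\otimes rn}$ with vanishing trace distance $\epsilon \to 0$ as $n \to \infty$. Because $\Lambda$ is LOCC, it is admissible as the final post-processing $\Lambda_{n+1}$ in the adaptive scheme for $D_2(\mathcal{E})$ as defined in the excerpt. Thus the overall protocol — prepare, transmit, distill — outputs a state $\rho^n_{AB}$ satisfying
\begin{equation}
\norm{\rho^n_{AB} - \ket{\Phi}_{\,2}\bra{\Phi}^{\otimes rn}}_1 \leq \epsilon,
\end{equation}
so the rate $r$ is achievable for the channel. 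Taking the supremum over $r$ and the limit $n \to \infty$ gives $E_D(\chi_\mathcal{E}) \leq D_2(\mathcal{E})$. For the secret key inequality the argument is identical: replace $\ket{\Phi}_{\,2}\bra{\Phi}^{\otimes rn}$ with $\Psi^{\otimes rn}$ throughout, noting that a key-distillation LOCC on $\chi_\mathcal{E}^{\otimes n}$ is equally admissible as an adaptive post-processing for $K_2(\mathcal{E})$.

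There is not really a hard step here; the only subtlety worth emphasising is that one must verify that the concatenation of local preparation, the $n$ channel uses, and the final distilling LOCC genuinely fits the adaptive template implicit in the definitions of $D_2(\mathcal{E})$ and $K_2(\mathcal{E})$ — in particular that we are allowed to defer all non-trivial LOCC operations to after the $n$ channel uses, and that preparing fresh ancillae entangled with Alice's side does not violate any locality constraint. Once this is acknowledged, the inequalities are immediate by construction of an admissible protocol from an arbitrary Choi-state distillation protocol.
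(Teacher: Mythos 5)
Your proposal is correct and follows essentially the same route as the paper: prepare $\ket{\Phi}$, transmit half through $\mathcal{E}$ to obtain one copy of $\chi_\mathcal{E}$ per use, then run the optimal Choi-state distillation (or key-distillation) LOCC as a particular adaptive protocol, giving a lower bound on the corresponding supremum. The extra remarks about deferring LOCC and admissibility of ancilla preparation are sound but just make explicit what the paper leaves implicit.
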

\begin{proof}
Given $\mathcal{H}_A$ is of dimension $d$, Alice may prepare the maximally entangled state $\ket{\Phi}_d$, and send half of the state through $\mathcal{E}$, resulting in exactly one copy of $\chi_\mathcal{E}$ per use of the channel. Alice and Bob may then use the optimal distillation procedure on these states, obtaining distillation/key rate $r$ - since the number of channel uses to states is one-to-one, this is a specific example of a rate $r$ protocol for $\mathcal{E}$ - thus it is a lower bound, since  the supremum must be at least $r$.
\end{proof}\\
In determining capacities, lower bounding it using the entanglement distillation of the Choi matrix only solves some of the difficulty - since entanglement distillation itself requires considering all LOCC operations on any possible number of copies of the state. Moreover as we have seen in section \ref{ch1:second:a}, most of our bounds of entanglement distillation are upper bounds - and thus we cannot state the relation between them and the capacity in question.\\

This problem has been tackled by a theoretical tool known as \emph{channel simulation}, which will be discussed in more detail in the next chapter; but we shall outline the general idea here. Suppose Alice and Bob share a state $\tau\in\mathcal{H}_A\otimes\mathcal{H}_B$, and that, for any state $\rho\in\mathcal{H}_{A'}$ (another Hilbert space belonging to Alice) they are able to perform an LOCC operation $\Lambda$ \emph{independent} of $\rho$, such that $\mathrm{Tr}_{AA'}\left[\Lambda\left(\rho\otimes\tau\right)\right]=\mathcal{E}\left(\rho\right)\in\mathcal{H}_B$. We can see that they have in effect performed one transmission ``through" $\mathcal{E}$ using one copy of $\tau$.  When this is possible, then the adaptive protocol achieving the capacity of $\mathcal{E}$ can therefore be seen as a specific distillation protocol for $\tau$, giving \cite{PLOB2017}:
\begin{align}
 D_2\left(\mathcal{E}\right) &\leq E_D(\tau), &  K_2\left(\mathcal{E}\right) &\leq K(\tau),
\end{align}
and the upper bounds of $ E_D(\tau)$ and $ K(\tau)$ can then be used to upper bound the capacities of $\mathcal{E}$. Moreover, in some cases $\tau=\chi_\mathcal{E}$, thus establishing $D_2\left(\mathcal{E}\right)= E_D(\chi_{\mathcal{E}})$ and $K_2\left(\mathcal{E}\right)= K(\chi_{\mathcal{E}})$. Although channel simulation was first shown for Pauli channels\footnote{See chapter \ref{ch:simul}.} in \cite{BDSW1996}, the most general formulation of channel simulation for Pauli channels and many more was given by \cite{PLOB2017}.

% ==========================================================================================================

\section{Summary}
\label{ch1:summary}
The aim of this chapter is to provide the reader with the basic tools to understand the processes and results that follow in the next few chapters, and to give an overview of the work done to characterise that subject-defining property that is entanglement. Try explaining the concept of entanglement to a lay person and already one struggles to pin it down; the neat analogy of the perfectly correlated coins does not accurately portray the full extent of entanglement's subtlety. Trying to characterise the ``amount" of entanglement has led to all of the measures above, almost all of them requiring the consideration of infinite dimensions to determine. Even the relationship between entanglement and the other great feature of quantum information, non-locality, is ambiguous; seemingly entanglement is not sufficient for non-locality, nor is a greater quantity of entanglement indicative of a greater degree of non-locality. These foibles will be looked at further in the coming chapters.
\chapter{Simulation of Non-Pauli Channels}
\label{ch:simul}

The work presented in this chapter forms the basis of the paper ``Simulation of non-Pauli Channels", whose authors are (in order) Thomas Cope, Leon Hetzel, Leonardo Banchi, and Stefano Pirandola. Any work by authors other than myself in this chapter will be credited appropriately.  %==========================================================================================================

\section{Structure of this Chapter}
\label{ch2:structure}
This chapter will begin with an overview of channel simulation, defining the concept and giving a history of previous results up until this point. This will include the motivation for studying such a concept. Once this is completed, we will move to look at the specific technique of teleportation simulation, before moving into the new developments presented in this work allowing for the simulation of novel channels which do not have the Pauli property. Once the proofs are fully presented, a summary of results will be given, along with a discussion for future topics of research.

% %==========================================================================================================

\section{The History of Channel Simulation}
\label{ch2:first}
The first introduction of the important tool of channel simulation was, surprisingly, not used to simulate channels at all. The 1996 paper by Bennett, DiVincenzo, Smolin and Wootters\cite{BDSW1996} was interested in two vital concepts in quantum information theory. The first was entanglement purification, in which pure entangled states are distilled from mixed entangled states shared between two parties - typically distilling many mixed states into fewer maximally entangled states. The other concept was error correction, a staple in classical information theory to protect information from noise induced by the transmission through a channel. Their insight was to consider the standard teleportation protocol (explained in chapter \ref{ch:LitRev}) but replacing the maximally entangled resource state by a mixed entangled state - the authors noted that, instead of teleporting the input state exactly, it was distorted as if sent through a quantum channel. Given this information, they were able to deduce to two important equalities for \emph{Bell-diagonal} states:
\begin{equation}
E_{D_1}\left(\sigma\right)=\mathrm{Q}\left(\mathcal{M}(\sigma)\right)
\end{equation}
and
\begin{equation}
E_D\left(\sigma\right)=\mathrm{Q}_2\left(\mathcal{M}(\sigma)\right)
\end{equation}
with $\mathcal{M}$ the mapping from states to channels achieved by performing teleportation over them. 
Bell-diagonal states are convex combinations of the four Bell pairs, and $E_{D_1}$ is the distillable entanglement limited to one-way communication only, whilst $\mathrm{Q}(\mathrm{Q}_2)$ refers the quantum capacity of a channel with no(two-way) communication. These results were derived from their explicit construction of a distillation protocol for $\sigma$ from an error correcting code of $\mathcal{M}(\sigma)$, and vice versa. They also noted that Bell-diagonal states $\sigma$ always gave rise to a \emph{Pauli channel}\footnote{These are also called \emph{generalised depolarising channels}.}: one whose action on a state $\rho$ may be written\footnote{The Pauli matrices are Hermitian.}
\begin{equation}
\mathcal{E}_P\left(\rho\right)=p_0I\rho I+p_1\sigma_x \rho \sigma_x+p_2\sigma_y \rho \sigma_y+p_3\sigma_z \rho \sigma_z,\;\;p_{i}\geq 0,\;\;\sum_{i=0}^3p_{i}=1.
\end{equation}
and that there was a bijection between the two - the weights $p_i$ correspond to the weights of the four Bell states. This result was expanded in the work of Bowen and Bose\cite{BB2001} who were able to show that teleportation over \emph{any} arbitrary two-qubit state would result in a Pauli channel. Furthermore for any two-qudit state, using the generalised teleportation protocol will produce a channel of the form
\begin{equation}\label{Paulips}
\mathcal{E}\left(\rho\right)=\sum_{i=0}^{d^2-1}p_i\sigma_{i}\rho\sigma_{i}^\dagger,\;\;p_{i}\geq 0,\;\;\sum_{i=0}^{d^2-1} p_{i}=1,
\end{equation}
with $\left\{\sigma_i\right\}$ the set of generalised Pauli matrices introduced in section \ref{ch1:Teleportation}.\\ 
\\
Another generalisation in a different direction was performed by Reinhard Werner\cite{W2001}, who looked at maintaining perfect teleportation fidelity, but altering the shared state, Alice's measurements and Bob's corrective operation. This departure from the standard protocol will be mirrored by our approach later in the chapter. He concluded that, under the condition that the protocol is ``tight" - teleporting a $d$-dimensional state with a $d^2$-dimensional resource state and a $d^2$-outcome measurement, that perfect teleportation occurs iff the resource state $\sigma$ is maximally entangled, Alice's measurement distinguishes between $d^2$ orthogonal maximally entangled states, and Bob's corrective operations are the unitaries generating these orthogonal states from resource state $\sigma$. \\

In 2015, the property of \emph{teleportation covariance} was studied by \cite{LM2015}, and then by \cite{PLOB2017} for systems of any dimension. 
\begin{definition}
A channel $\mathcal{E}:\mathcal{H}_d\rightarrow \mathcal{H}_{d'}$ is teleporation covariant iff it satisfies
\begin{equation}
\mathcal{E}\left(U_i\rho U_i^\dagger\right)=V_i\mathcal{E}\left(\rho\right)V_i^\dagger, \forall \rho\in\mathcal{H}_d
\end{equation}
with $\{U_i\}\equiv\{\sigma_i\}$ the generalised Pauli matrices for dimension $d$, and $\{V_i\}$ some corrective unitaries.
\end{definition}
This property is extremely significant, since it allows a channel to be simulated by its own Choi matrix:
\begin{equation}
\chi_\mathcal{E}:=\left(\mathbb{I}\otimes\mathcal{E}\right)\left(\ket{\Phi}_{\,d}\bra{\Phi}\right).
\end{equation}
This we can show in the following way: Consider the teleportation of $\rho$ over the maximally entangled state $\ket{\Phi}_d$. Then the conditional state Bob obtains before learning of Alice's measurement is the state $U_i\rho U_i^{\dagger}$. Suppose he then applies the channel $\mathcal{E}$ to this state, to obtain $\mathcal{E}\left(U_i\rho U_i^{\dagger}\right)=V_i\mathcal{E}\left(\rho\right)V_i^\dagger$ by teleportation covariance. Thus when he obtains Alice's result $i$, by performing the corrective unitary $V_i^\dagger$ he obtains the state $\mathcal{E}\left(\rho\right)$ . As Bob applied the channel $\mathcal{E}$ to his state before receiving Alice's result, we can instead consider him applying the channel \emph{before} the measurement took place. As the two operations take place on locally separated subsystems, this will not affect the output. As $\left(\mathbb{I}_A\otimes \mathcal{E}_B\right)\left(\ket{\Phi}_d\bra{\Phi}\right)=\chi_\mathcal{E}$, this means the same protocol can instead be thought of performing the conventional teleportation protocol over the Choi matrix, still obtaining $\mathcal{E}\left(\rho\right)$. The consequence of this is that we have enacted the channel $\mathcal{E}$ simply by performing local operations and classical communication (LOCC) operations on the state $\chi_\mathcal{E}$ - we call this \emph{simulating} the channel, defined formally in \ref{sigsim}.\\

Leung and Matthew \cite{LM2015} used this teleportation covariance property to upper bound the performance of coding schemes which preserved a positive partial transpose (PPT). Its use in the field of channel simulation was greatly expanded by Pirandola, Laurenza, Laurenza and Banchi \cite{PLOB2017} which introduced many new concepts and results, which we shall go over here. \\

First we must introduce the concept of an ``adaptive protocol". Alice and Bob begin with general Hilbert spaces, $\mathcal{H}_A$, $\mathcal{H}_B$, with which they may prepare any states locally, with free classical communication between them - we label this state $\rho^0_{AB}$, prepared by LOCC operations $\Lambda_0$. Alice then picks a subsystem to send through the channel $\mathcal{E}$ - we label the Hilbert space of this subsystem $\mathcal{H}_{A_1}$ - and transmits it to Bob through $\mathcal{E}$ - the received state living in space $\mathcal{H}_{B_1}$. This is followed by another round of local operations, $\Lambda_1$, resulting in a new shared state $\rho^1_{AB}\in \mathcal{H}_{A\backslash A_1}\otimes \mathcal{H}_{B\cup B_1}$ - we then update $\mathcal{H}_{A}=\mathcal{H}_{A\backslash A_1}$ and $\mathcal{H}_{B}=\mathcal{H}_{B\cup B_1}$. This process is then repeated $n$ times, resulting in the shared state $\rho^n_{AB}$, dependent on the LOCC operations $\boldsymbol{\Lambda}=(\Lambda_0,\ldots, \Lambda_n)$. Such an adaptive protocol is said to have rate $r_n$ if $||\rho^n_{AB}-\phi_n||_1\leq \varepsilon$, where $\phi_n$ is the target state consisting of $nr_n$ copies of some ``desired resource". For example, $\phi_n=\ket{\Phi}_2^{\otimes nr_n}$, if the aim to distill maximally entangled Bell pairs (ebits). The error $\varepsilon$ must be ``sufficiently small"\footnote{In the definitions presented in chapter \ref{ch:LitRev}, this condition is presented more explicitly.} - it vanishes as $n\rightarrow\infty$.\\

%Generally, an adaptive protocol consists of Alice transmitting $n$ states through the channel $\mathcal{E}$, before and after which use Alice and Bob can communicate classically, and perform local operations (LOCC). After $n$ uses of the channel, they are left with shared state $\rho^n_{AB}$, dependent on $\Lambda_0\ldots \Lambda_n$ the LOCC operations performed. 

This concept of an adaptive protocol allows the following definition of a two-way capacity.
\begin{definition}
A two-way assisted capacity of quantum channel $\mathcal{E}$ is given by the optimisation
\begin{equation}
\mathcal{C}\left(\mathcal{E}\right):=\sup_{\mathbf{\Lambda}\in\mathrm{LOCC}}\lim_{n\rightarrow\infty}r_n.
\end{equation}
\end{definition}
This may be interpreted as the optimal asymptotic rate of a desired resource generation over all adaptive protocols\footnote{See chapter \ref{ch:LitRev} for some specific capacities.}.
Generally, this optimisation is almost impossible to do, due to the extremely general nature of $\mathbf{\Lambda}$. However, we can simplify the process using an LOCC simulation of the channel. 
\newpage
\begin{definition}[\cite{PLOB2017}]\label{sigsim}
A channel $\mathcal{E}:\mathcal{H}_A\rightarrow \mathcal{H}_B$ is \emph{$\sigma$-simulable} if there exists a state $\sigma\in \mathcal{H}_{A'}\otimes \left(\mathcal{H}_B\otimes\mathcal{H}_{B'}\right)$ such that, for any state $\rho\in \mathcal{H}_A$, Alice and Bob can transform using trace-preserving LOCCs $\rho\otimes \sigma \rightarrow \tau_{\mathcal{E},\rho}$ satisfying $\mathrm{Tr}_{AA'B'}[\tau_{\mathcal{E},\rho}]=\mathcal{E}\left(\rho\right)\in\mathcal{H}_B$.
\end{definition}
This definition requires that the LOCCs are trace-preserving; this imposes that the channel is simulated with certainty, rather than a probabilistic simulation where many attempts are made to simulated the channel, and the successful ones are post-selected.\\

\begin{figure}
\begin{center}
\includegraphics[width=\columnwidth]{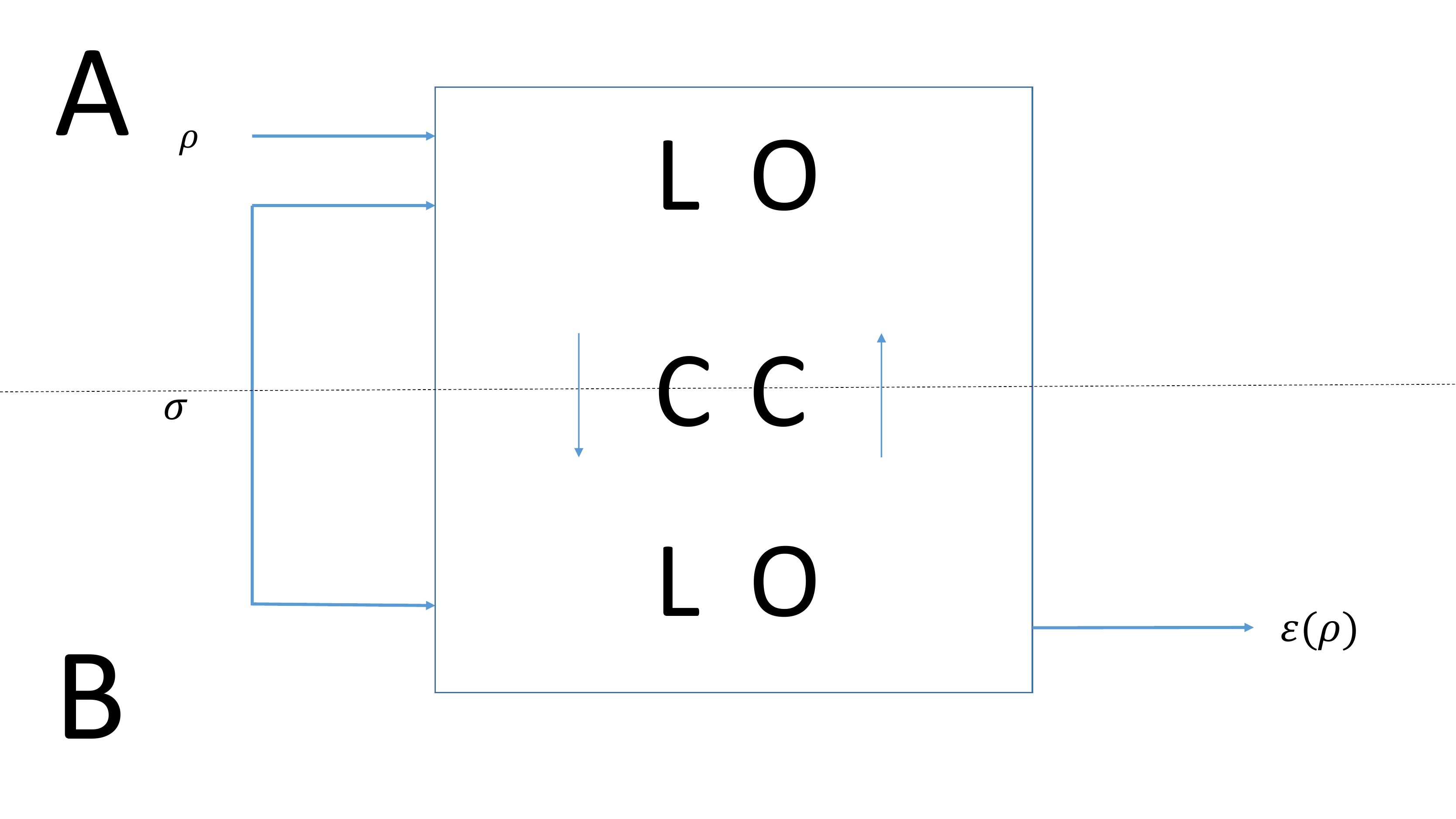}
\caption[LOCC simulation]{The general idea behind LOCC simulation: Alice begins with an arbitrary input state $\rho$, and the two share some resource state $\sigma$. Via an LOCC protocol, they aim for Bob to end up with the state $\mathcal{E}\left(\rho\right)$.}
\end{center}
\end{figure}

Now consider the final state of an adaptive protocol $\rho^n_{AB}$. This state was constructed 
\begin{equation}
\rho^n_{AB}=\Lambda_n\left(\mathbb{I}_{AB\backslash A_N}\otimes\mathcal{E}_{A_N\rightarrow B_N}\right)\left(\rho_{AB}^{\left(n-1\right)}\right)
\end{equation}
 from $\rho_{AB}^{\left(n-1\right)}$, the state after $n-1$ channel transmissions and LOCCs. If $\mathcal{E}$ is $\sigma$-simulable, we may replace the $n^{\mathrm{th}}$ channel use $\mathcal{E}_{A_N\rightarrow B_N}$ by it's LOCC simulation, to obtain
%Now consider the final state of an adaptive protocol, $\rho^n_{AB}$. Originally, we may have thought of it as the operation $\Lambda_n\left(\mathbb{I}_{B_1\ldots B_n}\otimes\mathcal{E}_{A_n\rightarrow A_n}\left(\rho^{n-1}_{AB}\right)\right)$, with $\rho^{n-1}_{AB}$ the state after $n-1$ channel uses and LOCCs. If $\mathcal{E}$ is $\sigma$-simulable however, we may replace this by
\begin{equation}
\rho^n_{AB}=\Lambda_n\Lambda_s\left(\rho^{\left(n-1\right)}_{AB}\otimes\sigma\right)=\Lambda'_n\left(\rho^{\left(n-1\right)}_{AB}\otimes \sigma\right)
\end{equation}
with $\Lambda_s$ the LOCC simulation protocol. Furthermore, we may iterate this process repeatedly to obtain \cite{PLOB2017} $\rho^{n}_{AB}=\Lambda'_n\Lambda'_{\left(n-1\right)}\ldots\Lambda'_0\left(\sigma^{\otimes n}\right)=\bar{\Lambda}\left(\sigma^{\otimes n}\right)$.
Note that the input state $\rho^0_{AB}$ does not feature in this formula, since this state is locally separable and was also prepared by LOCC. This technique is known as ``stretching of the protocol".\\

The consequence of this is that any adaptive protocol for a $\sigma$-simulable channel $\mathcal{E}$ may be simulated using $n$ copies of the resource state $\sigma$. We can take advantage of this in order to simplify the calculation of the two-way assisted capacity. We explain how this can be done, following the methodology in \cite{PBLOCSB2018}.\\

Suppose we have an entanglement measure $E$ which is:
\begin{itemize}
\item \textit{Asymptotically continuous}: $\norm{\rho-\sigma}_1\leq \delta \Rightarrow \abs{E(\rho)-E(\sigma)} \leq k\delta\mathrm{log}d+O(\delta)$,
\item \textit{Monotonic under LOCC}: $E(\Lambda(\rho))\leq E(\rho)$,
\item \textit{Weakly subadditive}: $E(\rho^{\otimes n})\leq n E(\rho)$.
\end{itemize}
Consider an adaptive protocol with rate $r_n$; by definition this means that we have the output state $\rho^n_{AB}$ satisfies $\norm{\rho^n_{AB}-\phi_n}_1\leq \varepsilon$, with $\lim_{n\rightarrow \infty} \varepsilon =0$. By asymptotic continuity, this implies 
\begin{equation}
\abs{E(\rho^n_{AB})-E(\phi_n)}\leq k\varepsilon\mathrm{log}d+O(\varepsilon)\Rightarrow E(\phi_n)\leq E(\rho^n_{AB}) + k\varepsilon\log d+O(\varepsilon)
\end{equation}
where $d=\mathrm{dim}[\phi_n]$. If the adaptive protocol is over a $\sigma$-simulable channel, we may write:
\begin{align}
E(\phi_n)&\leq E(\rho^n_{AB}) + k\varepsilon\mathrm{log}d+O(\varepsilon)\nonumber\\
&= E(\bar{\Lambda}\left(\sigma^{\otimes n}\right)) + k\varepsilon\mathrm{log}d+O(\varepsilon)\nonumber\\
&\leq E(\sigma^{\otimes n}) + k\varepsilon\mathrm{log}d+O(\varepsilon)
\end{align}
where we have applied the monotonicity of $E$.
In order to progress further, we require two more conditions - this time on $\phi_n$, the target state;
\begin{itemize}
\item \textit{Normalisation}: $E(\phi_n)\geq nr_n$. 
\item \textit{Exponential Growth}: $\mathrm{dim} \left[\phi_n\right] \leq 2^{d_n n}$, where $\liminf_{n\rightarrow \infty} d_n =d_k$, a constant value.
\end{itemize}
Applying the normalisation condition, we can further simplify:
\begin{align}
E(\phi_n)& \leq E(\sigma^{\otimes n}) + k\varepsilon\mathrm{log}d+O(\varepsilon)\nonumber\\
\Rightarrow n r_n &\leq  E(\sigma^{\otimes n}) + k\varepsilon\mathrm{log}d+O(\varepsilon)\nonumber\\
 & \leq E(\sigma^{\otimes n}) + k\varepsilon\mathrm{log} \left(2^{d_n n} \right)+O(\varepsilon)\nonumber\\
 & \leq E(\sigma^{\otimes n}) + k\varepsilon d_n n+O(\varepsilon)\nonumber\\
\Rightarrow r_n & \leq \frac{E(\sigma^{\otimes n})}{n} + k\varepsilon d_n +\frac{O(\varepsilon)}{n}.
\end{align}
Finally we can substitute this into our definition of a two-way capacity \cite{PLOB2017,PBLOCSB2018},
\begin{equation}
\mathcal{C}\left(\mathcal{E}\right)=\sup_{\mathbf{\Lambda}\in\mathrm{LOCC}}\lim_{n\rightarrow\infty}r_n \leq \sup_{\mathbf{\Lambda}\in\mathrm{LOCC}}\lim_{n\rightarrow\infty} \frac{E(\sigma^{\otimes n})}{n} + k\varepsilon d_n +\frac{O(\varepsilon)}{n} =E^\infty(\sigma)
\end{equation}
with $E^\infty(\sigma)$ the regularised entanglement measure $\lim_{n\rightarrow\infty} E(\sigma^{\otimes n})/n \leq E(\sigma)$. The other terms disappear because we imposed that $\lim_{n\rightarrow\infty} \varepsilon =0$. Thus $\lim_{n\rightarrow\infty} k\varepsilon d_n = 0$, as does $\lim_{n\rightarrow\infty}O(\varepsilon)/n=0$, as asymptotic continuity requires $\lim_{\varepsilon\rightarrow 0}O(\varepsilon)=0.$\\

We can see how this method adapts the techniques for bounding the distillable entanglement of states, for use in determining channel capacities. For the choice of entanglement measure $E$, the two we shall use primarily in this thesis are the relative entropy of entanglement (REE) $E_R$, and the squashed entanglement $E_{\mathrm{sq}}$. Both satisfy the requirements presented above \cite{VP1998,SH2006,CW2003,C2006}. One of the most commonly investigated capacities is the quantum capacity $Q_2$. It has been shown that $Q_2$ is equal to the entanglement distillation capacity $D_2$, where the target state is $\phi_n=\ket{\Phi
}_2^{\otimes nr_n}$. We have that $E_R\left(\ket{\Phi}_2^{\otimes nr_n}\right)=E_{\mathrm{sq}}\left(\ket{\Phi}_2^{\otimes nr_n}\right)=nr_n$, and that $\mathrm{dim}\left[\ket{\Phi}_2^{\otimes nr_n}\right]=2^{nr_n}$ implying $d_n=d_k=1$.\\

Another commonly investigated capacity is $K$ - the secret key capacity. The target state here is $\Psi^{\otimes nr_n}$, with $\Psi=U_T\left(\ket{\Phi}_{\,2}\bra{\Phi}\otimes \tau_{A'B'}\right)U_T^\dagger$ a ``private state" - $\tau_{A'B'}$ is an arbitrary ``shield state", and $U_T$ a ``twisting unitary". It has been shown $E_R\left(\Psi^{\otimes nr_n}\right)\geq nr_n$\cite{HHHO2009} and $E_\mathrm{sq}\left(\Psi^{\otimes nr_n}\right)\geq nr_n$\cite{C2006}. There has been some controversy over the exponential growth requirement for private states; as $\tau_{A'B'}$ is allowed to be any size. However, in \cite{CEHHOR2007} it was shown that the key distillation rate for states, $K(\rho)$, could be achieved by only considering private states with at most exponential growth in the number of copies of the state; this proof was adapted in \cite{PLOB2017} to show secret key capacity need only consider adaptive protocols with exponential growth private states; and so the above method may be applied.\\ %It has been shown that for private states that $E_R(\Psi^{\otimes nR_n})\geq nR_n E_R(\Psi)\geq n R_n$ \cite{HHHO2009}, and $E_{\mathrm{sq}}(\Psi^{\otimes nR_n})=nR_n E_{\mathrm{sq}}(\Psi) \geq n R_n$ \cite{C2006}.\\

Finally, another entanglement measure worth considering is the relative entropy with respect to positive partial transpose states (RPPT), $E_P$. This too is asymptotically continous, monotonic and subadditive \cite{VP1998,SH2006}; however, whilst $E_P\left(\ket{\Phi}_2^{\otimes nr_n}\right)=nr_n$, $E_P\left(\Psi_2^{\otimes nr_n}\right)\ngeq nr_n$. This means $E_P$ can be used as an upper bound for $Q_2$, but not $K$. \\

These bounds are \emph{weak converse} bounds - they bound the rate of \emph{error}-free communication ($\varepsilon\rightarrow 0$). \\

% They leave open the possibility that one may tolerate some degree of error in order to improve the rate. In the paper \cite{WTB2017}, these bounds were shown to also be \emph{strong converse} bounds - if one increases the rate above this bound, then the associated error $\varepsilon\rightarrow 2$ (the maximal value) as $n\rightarrow \infty$. This means there can be no trade-off between error and generation rate.\\
It is interesting to note that the concept of LOCC simulation can be applied to \emph{any} channel - however, the most general method only gives us a trivial upper bound on two-way capacities.
\begin{lemma}
Every channel is trivially simulable.
\end{lemma}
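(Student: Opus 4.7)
The plan is to construct an explicit resource state $\sigma$ and an LOCC protocol that realises an arbitrary channel $\mathcal{E}:\mathcal{H}_A\rightarrow\mathcal{H}_B$, exploiting the generalised teleportation protocol of Section~\ref{ch1:Teleportation}. The essential idea is that Alice can always first transmit her input to Bob faithfully using teleportation, and then Bob can apply $\mathcal{E}$ locally on his side, the composition being an LOCC implementation of $\mathcal{E}$.

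First, I would set $d=\dim\mathcal{H}_A$ and take the resource state to be the $d$-dimensional maximally entangled state $\sigma=\ket{\Phi}_d\bra{\Phi}$, shared between an ancillary space $\mathcal{H}_{A'}$ on Alice's side and an ancillary space $\mathcal{H}_{B'}$ on Bob's side, both of dimension $d$. This fits the template of Definition~\ref{sigsim}, since $\sigma\in\mathcal{H}_{A'}\otimes(\mathcal{H}_B\otimes\mathcal{H}_{B'})$ (Bob's $\mathcal{H}_B$ factor is simply traced out at the end of the simulation as unused).

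Next, given an arbitrary input $\rho\in\mathcal{H}_A$, Alice performs the projective Bell measurement on $\mathcal{H}_A\otimes\mathcal{H}_{A'}$ in the basis $\{X_d^iZ_d^j\ket{\Phi}_d\}_{i,j}$, which is a local operation. She broadcasts the $d^2$-valued outcome $(i,j)$ through the classical side channel, and Bob applies the inverse Pauli correction on $\mathcal{H}_{B'}$. By the teleportation argument reviewed in Section~\ref{ch1:Teleportation}, Bob's ancilla $\mathcal{H}_{B'}$ now holds an exact copy of $\rho$. Finally, Bob applies $\mathcal{E}:\mathcal{H}_{B'}\rightarrow\mathcal{H}_B$ locally, yielding $\mathcal{E}(\rho)$ on $\mathcal{H}_B$. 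Tracing out $\mathcal{H}_A$, $\mathcal{H}_{A'}$ and $\mathcal{H}_{B'}$ (all of which are disentangled from $\mathcal{H}_B$ after the measurement and correction) recovers exactly $\mathcal{E}(\rho)$, as required.

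There is no real obstacle to overcome here; every step is either a local quantum operation, classical communication, or application of the generalised teleportation protocol, all of which are trace-preserving LOCC. The reason the construction is called \emph{trivial} is operational rather than technical: since $E(\sigma)=E(\ket{\Phi}_d\bra{\Phi})=\log d$ for any reasonable entanglement measure, the resulting bound $\mathcal{C}(\mathcal{E})\leq\log d$ is the uninformative dimensional bound. The substantive content of the theory lies in identifying non-trivial $\sigma$ (such as the Choi matrix of a teleportation-covariant channel) for which the entanglement of $\sigma$ is strictly smaller, so that the stretching argument of Section~\ref{ch2:first} yields a useful converse.
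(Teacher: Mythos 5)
Your construction is exactly the paper's: share $\ket{\Phi}_{d_A}$, teleport $\rho$ from Alice to Bob, then let Bob apply $\mathcal{E}$ locally, and observe that this yields only the trivial dimensional bound $\log d_A$. Your extra care in matching the ancillary spaces to Definition~\ref{sigsim} and explaining the word ``trivial'' is welcome but does not change the argument.
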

\begin{proof}
Suppose we have a channel $\mathcal{E}:\mathcal{H}_{d_A}\rightarrow \mathcal{H}_{d_B}$ between Alice and Bob. If Alice and Bob instead shared the maximally entangled state $\ket{\Phi}_{d_A}$, they can perform the teleportation protocol to send the input state $\rho$ from Alice to Bob with perfect fidelity. Bob can then apply the channel $\mathcal{E}$ locally.
\end{proof}
This is a valid LOCC protocol, which clearly $\ket{\Phi}_{d_A}$-simulates $\mathcal{E}$. However, this is not useful to us, since it implies
$\mathcal{Q}_2\left(\mathcal{E}\right)\leq K\left(\mathcal{E}\right)\leq E_R^{\infty}\left(\ket{\Phi}_{d_A}\right)=\log d_A$. However, this is a trivial upper bound on the capacity of any channel with input Hilbert space $\mathcal{H}_{d_A}$, as $\mathcal{Q}_2(\mathbb{I}_{d_A})=K(\mathbb{I}_{d_A})=\log d_A$. In the next section though, we shall see we can use the teleportation protocol in order to obtain non-trivial upper bounds.

\section{Channel Simulation via Teleportation}\label{ch2:second}
\begin{figure}
\begin{center}
\includegraphics[width=\columnwidth]{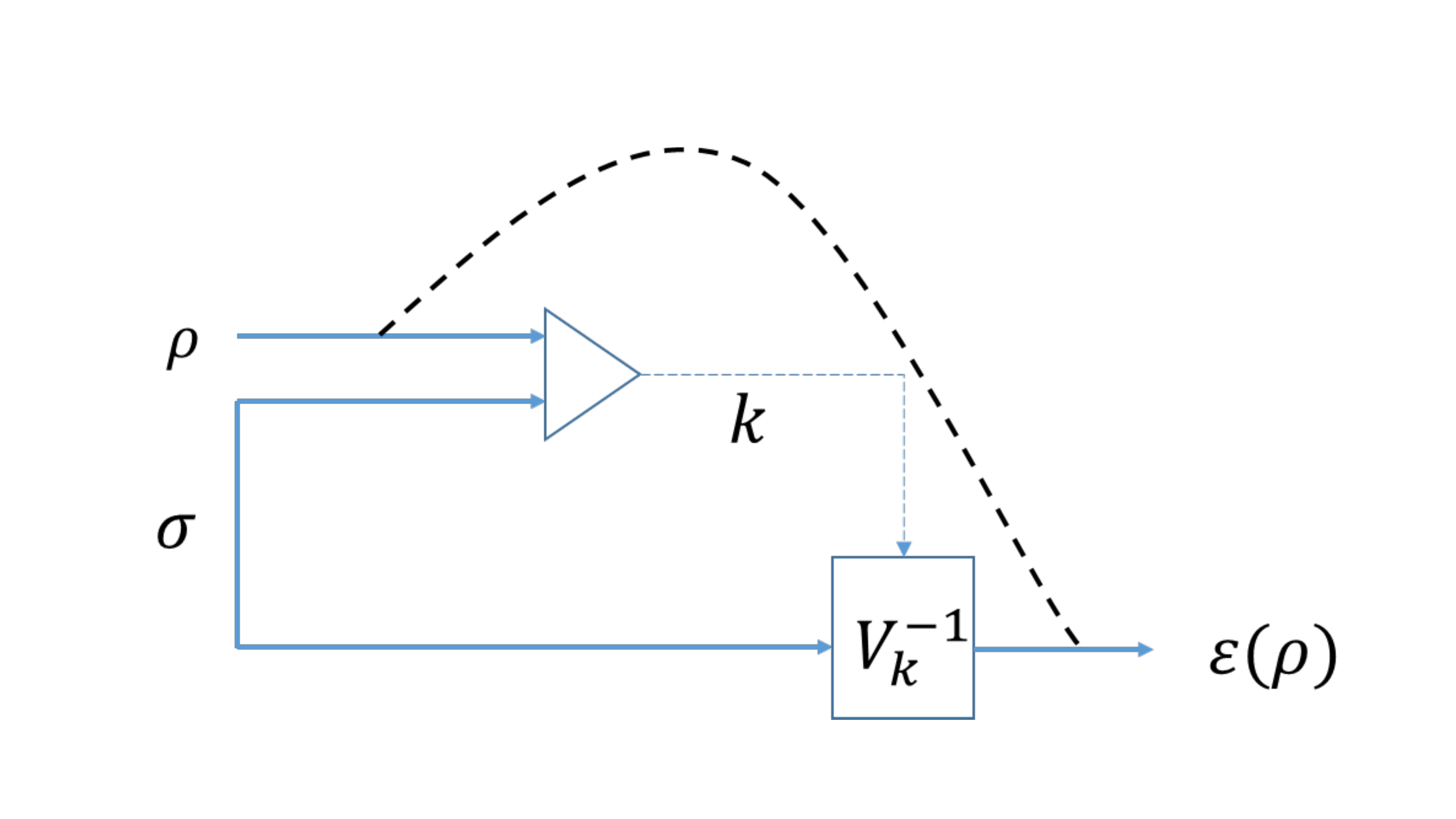}
\caption[Teleportation simulation]{Teleportation simulation - by replacing the  maximally entangled state with a less entangled state $\sigma$, the teleportation protocol has the overall effect of enacting the channel $\mathcal{E}$.}
\end{center}
\end{figure}
As previously discussed, it was shown in \cite{BB2001} that teleportation over \emph{any} arbitrary two-qubit state would result in a Pauli channel. We shall present that result here:
\begin{theorem}[\cite{BB2001}]\label{Bobothe}
The standard teleportation protocol over an arbitrary two-qubit state $\sigma$ will produce a channel of the form
\begin{equation}
\mathcal{E}_P\left(\rho\right)=\sum_{i=0}^3 p_i\sigma_i\rho\sigma_i
\end{equation}
with $p_i=\mathrm{Tr}\left[E_i\sigma\right]$ and
\begin{align*}
&E_0:=\ket{\Phi^+}\bra{\Phi^+},\;\;\ket{\Phi^+}=\frac{1}{\sqrt{2}}\ket{00}+\ket{11},\\
&E_1:=\ket{\Psi^+}\bra{\Psi^+},\;\;\ket{\Psi^+}=\frac{1}{\sqrt{2}}\ket{01}+\ket{10},\\
&E_2:=\ket{\Psi^-}\bra{\Psi^-},\;\;\ket{\Psi^-}=\frac{1}{\sqrt{2}}\ket{01}-\ket{10},\\
&E_3:=\ket{\Phi^-}\bra{\Phi^-},\;\;\ket{\Phi^-}=\frac{1}{\sqrt{2}}\ket{00}-\ket{11}.
\end{align*}
\end{theorem}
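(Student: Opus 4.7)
The plan is to exploit linearity of the teleportation map in the resource state and reduce the claim to computing its action on the ``elementary'' resources $\ket{\beta_k}_{AB}\bra{\beta_l}$, where $\{\ket{\beta_k}\}_{k=0}^{3}=\{\ket{\Phi^+},\ket{\Psi^+},\ket{\Psi^-},\ket{\Phi^-}\}$ is the Bell basis. First I would expand $\sigma_{AB}=\sum_{k,l=0}^{3}\sigma_{kl}\ket{\beta_k}_{AB}\bra{\beta_l}$, noting that each Bell state can be written as $\ket{\beta_k}=(\mathrm{I}\otimes V_k)\ket{\Phi^+}$ with $V_0=\mathrm{I},\,V_1=\sigma_x,\,V_2=i\sigma_y,\,V_3=\sigma_z$. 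Since Alice's measurement, the classical communication, and Bob's Pauli correction are all fixed operations independent of $\sigma$, the simulated channel $\mathcal{E}$ depends linearly on $\sigma_{AB}$, so it suffices to treat the diagonal ($k=l$) and off-diagonal ($k\neq l$) contributions separately.

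For the diagonal part, observe that $\ket{\beta_k}_{AB}\bra{\beta_k}=(\mathrm{I}\otimes V_k)\ket{\Phi^+}_{AB}\bra{\Phi^+}(\mathrm{I}\otimes V_k^\dagger)$. Teleporting over this resource is operationally equivalent to teleporting perfectly over $\ket{\Phi^+}$ and then applying the fixed unitary $V_k$ on Bob's side. The induced channel is therefore $\rho\mapsto V_k\rho V_k^\dagger=\sigma_k\rho\sigma_k$, where the global phase of $V_2=i\sigma_y$ cancels because it appears on both sides of $\rho$. This contributes $\sigma_{kk}\,\sigma_k\rho\sigma_k$ to the output.

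For the off-diagonal coherences, a direct calculation of Bob's post-correction state using the ricochet identity $(M_C\otimes\mathrm{I}_A)\ket{\Phi^+}_{CA}=(\mathrm{I}_C\otimes M^T_A)\ket{\Phi^+}_{CA}$ produces, for Alice's outcome $i$, a contribution proportional to $\chi(i,k)\chi(i,l)\,\sigma_k\rho\sigma_l$, where $\chi(i,j)\in\{+1,-1\}$ records whether $\sigma_i$ commutes ($+$) or anticommutes ($-$) with $\sigma_j$. The crucial fact is the character orthogonality identity for the single-qubit Pauli group,
\begin{equation}
\sum_{i=0}^{3}\chi(i,k)\chi(i,l)=4\,\delta_{kl},
\end{equation}
which follows because for any $k\neq 0$ exactly two of the four Paulis commute with $\sigma_k$ and two anticommute. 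Summing Bob's corrected states over the four measurement outcomes therefore annihilates every off-diagonal Bell coherence of $\sigma$ exactly, while preserving the diagonal entries with weight one.

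Combining both cases yields
\begin{equation}
\mathcal{E}(\rho)=\sum_{k=0}^{3}\sigma_{kk}\,\sigma_k\rho\sigma_k=\sum_{k=0}^{3}p_k\,\sigma_k\rho\sigma_k,
\end{equation}
with $p_k=\sigma_{kk}=\bra{\beta_k}\sigma\ket{\beta_k}=\mathrm{Tr}[E_k\sigma]$, precisely as stated. The main obstacle is the careful bookkeeping of the ricochet step and the commutation signs $\chi(i,k)$; once these are organised correctly, the character orthogonality identity does all the work, automatically decohering $\sigma$ in the Bell basis and leaving only a probabilistic mixture of Pauli operations on $\rho$.
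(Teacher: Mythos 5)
Your argument is correct. The paper itself states this theorem as a citation to Bowen and Bose \cite{BB2001} and does not supply a proof, so there is nothing in-text to compare against; what you have written is a valid, self-contained derivation. Expanding the resource state in the Bell basis, using linearity of the teleportation map in $\sigma$, and then exploiting the character orthogonality $\sum_{i=0}^{3}\chi(i,k)\chi(i,l)=4\delta_{kl}$ is exactly the right organising principle: it shows transparently why teleportation decoheres the resource in the Bell basis, which is the conceptual content of the theorem. One small place where you are compressing a genuinely fussy step is the phrase that teleporting over $\ket{\beta_k}\bra{\beta_k}$ is ``operationally equivalent to teleporting perfectly over $\ket{\Phi^+}$ and then applying $V_k$.'' This is true at the density-matrix level, but not by naive commutation: after outcome $i$ and Bob's standard correction, the state is $V_i V_k V_i\ket{\phi}$ rather than $V_k\ket{\phi}$. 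The rescue is that $V_iV_kV_i=\chi(i,k)\,s_i\,V_k$ with $s_i=\pm 1$ a phase coming from $V_i^2=s_i\mathrm{I}$ (with $s_2=-1$ since $V_2=i\sigma_y$), and these phases square away under $\rho\mapsto(\cdot)\rho(\cdot)^\dagger$. You flag the ``careful bookkeeping'' at the end, so you are clearly aware of this; just be sure when writing it out that the phases from $V_i^*$, $V_i^T$, and $V_i^2$ are tracked explicitly, because they are precisely what makes the character orthogonality appear with the clean prefactor $\chi(i,k)\chi(i,l)$ rather than something outcome-dependent.
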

For the rest of the chapter, it shall be useful to think of qubits in their \emph{Bloch} representation.
\begin{definition}
An arbitrary qubit state can be written in the \emph{Bloch representation}:
\begin{equation}
\rho=\frac{1}{2}\left(\begin{array}{cc}
1+z & x-i y \\
x+ i y & 1-z
\end{array}\right)
\end{equation}
with $\mathbf{x}=(x,y,z)$ the \emph{Bloch vector} of $\rho$, and $\rho=\left(\mathrm{I}+ \mathbf{x}\cdot\boldsymbol{\sigma}\right)/2$, $\boldsymbol{\sigma}=(\sigma_x,\sigma_y,\sigma_z)$.
\end{definition}
\begin{definition}
The \emph{Bloch sphere} $\norm{\mathbf{x}}\leq 1$ defines all valid qubit states. The state is pure iff $\norm{\mathbf{x}}=1$.
\end{definition}
\begin{figure}
\begin{center}
\includegraphics[width=\columnwidth]{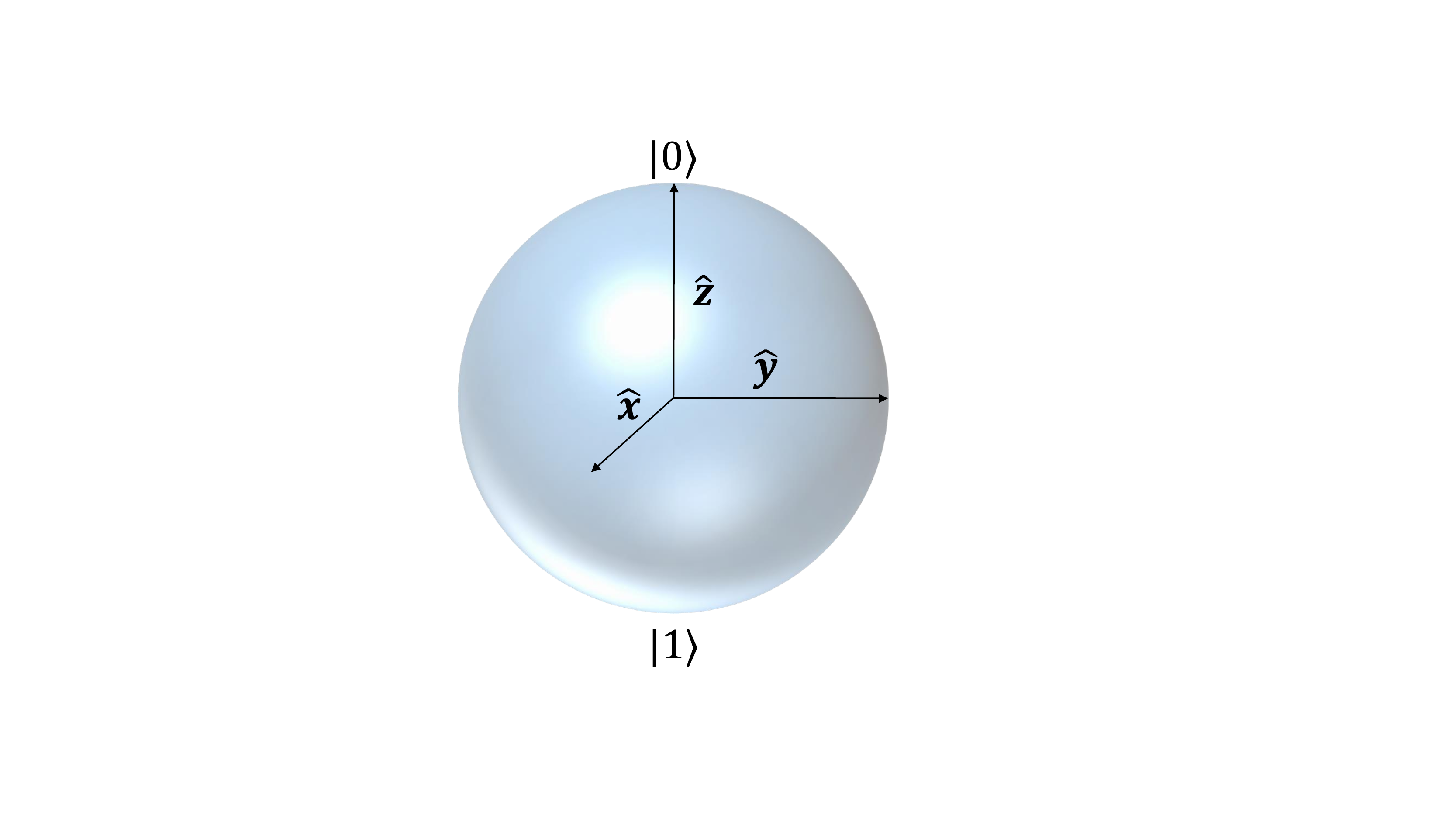}
\caption[The Bloch sphere]{The Bloch Sphere - all qubit states can be represented as a point within the sphere.}
\end{center}
\end{figure}

We can also express an arbitrary two-qubit state in a ``Bloch-style" representation:
\begin{equation}\label{generaltwo}
\sigma=\frac{1}{4}\left(\mathrm{I}\otimes\mathrm{I}+\sum_{i=1}^3 a_i\sigma_i\otimes\mathrm{I}
+\sum_{j=1}^3\mathrm{I}\otimes b_j\sigma_j +
\sum_{i,j=1}^{3}t_{ij}\sigma_i\otimes
\sigma_j\right).
\end{equation}
\begin{corollary}\label{BlochPauli}
In Bloch form, the channel simulated over teleportation by $\sigma$ may be written:
\begin{equation}
\mathcal{E}:\left(x,y,z\right)\rightarrow\left(t_{11}x,-t_{22}y,t_{33}z\right)
\end{equation}
with conversion between expressions given by:
\begin{align}
t_{11}&=p_0+p_1-p_2-p_3=1-2p_2-2p_3, \label{convert1}\\
t_{22}&=-p_0+p_1-p_2+p_3=-1+2p_1+2p_3, \label{convert2}\\
t_{33}&=p_0-p_1-p_2+p_3=1-2p_1-2p_2. \label{convert3}
\end{align}
\end{corollary}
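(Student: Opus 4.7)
The plan is to take $\rho$ in Bloch form, namely $\rho=\tfrac{1}{2}\left(\mathrm{I}+x\sigma_1+y\sigma_2+z\sigma_3\right)$, and apply $\mathcal{E}_P$ term by term. The only algebraic fact needed is that $\sigma_i^2=\mathrm{I}$ together with the anticommutation $\sigma_i\sigma_j=-\sigma_j\sigma_i$ for $i\neq j\in\{1,2,3\}$, so that $\sigma_i\sigma_j\sigma_i=\sigma_j$ when $i=j$ and $\sigma_i\sigma_j\sigma_i=-\sigma_j$ otherwise. Therefore each conjugation $\sigma_i\rho\sigma_i$ flips the sign of exactly the two Bloch coordinates perpendicular to the $\sigma_i$ axis; in particular $\sigma_0\rho\sigma_0\to(x,y,z)$, $\sigma_1\rho\sigma_1\to(x,-y,-z)$, $\sigma_2\rho\sigma_2\to(-x,y,-z)$, $\sigma_3\rho\sigma_3\to(-x,-y,z)$. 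Taking the convex combination with weights $p_i$ and reading off the coefficients of $\sigma_1,\sigma_2,\sigma_3$ gives the map $(x,y,z)\mapsto(ax,by,cz)$ with
\begin{align*}
a&=p_0+p_1-p_2-p_3,\\
b&=p_0-p_1+p_2-p_3,\\
c&=p_0-p_1-p_2+p_3,
\end{align*}
which, upon identifying $a=t_{11}$, $b=-t_{22}$, $c=t_{33}$, is exactly the claimed action.

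Next I would justify the $t_{ij}$ notation by deriving the conversion formulas (\ref{convert1})--(\ref{convert3}) from the expression for $p_i$ in Theorem \ref{Bobothe}. Expanding the Bell projectors in the Pauli basis gives
\begin{align*}
\ket{\Phi^{\pm}}\bra{\Phi^{\pm}}&=\tfrac{1}{4}\!\left(\mathrm{I}\otimes\mathrm{I}\pm\sigma_1\!\otimes\!\sigma_1\mp\sigma_2\!\otimes\!\sigma_2\pm\sigma_3\!\otimes\!\sigma_3\right),\\
\ket{\Psi^{\pm}}\bra{\Psi^{\pm}}&=\tfrac{1}{4}\!\left(\mathrm{I}\otimes\mathrm{I}\pm\sigma_1\!\otimes\!\sigma_1\pm\sigma_2\!\otimes\!\sigma_2\mp\sigma_3\!\otimes\!\sigma_3\right).
\end{align*}
Pairing these with the general state (\ref{generaltwo}) and using the Hilbert--Schmidt orthogonality $\mathrm{Tr}[\sigma_i\sigma_j]=2\delta_{ij}$, the local Bloch components $a_i,b_j$ drop out of every $p_i=\mathrm{Tr}[E_i\sigma]$, leaving $p_0=\tfrac{1}{4}(1+t_{11}-t_{22}+t_{33})$, and cyclically for the other three probabilities. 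A direct substitution then verifies $p_0+p_1-p_2-p_3=t_{11}$, $-p_0+p_1-p_2+p_3=t_{22}$, and $p_0-p_1-p_2+p_3=t_{33}$, which matches the Bloch coefficients $(a,-b,c)$ found above.

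Finally, the alternative single-probability-free expressions on the right of (\ref{convert1})--(\ref{convert3}) follow immediately by using the normalisation $\sum_{i=0}^{3}p_i=1$ to eliminate one probability from each of the three identities just derived. There is no real obstacle here: the only thing to watch is the sign convention arising from the asymmetric labelling (the $-t_{22}$ on the $y$-component is just the reflection of $b=p_0-p_1+p_2-p_3=-t_{22}$), together with the fact that using $i\sigma_2$ rather than $\sigma_2$ as the corrective unitary for $\ket{\Psi^-}$ does not affect the conjugation $\sigma_2\rho\sigma_2^{\dagger}$, so the Pauli-channel representation is unchanged.
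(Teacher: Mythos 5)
Your overall route — expanding $\rho$ in the Pauli basis to read off the Bloch action of each conjugation $\sigma_i\rho\sigma_i$, and then computing $p_i=\mathrm{Tr}[E_i\sigma]$ via the Hilbert--Schmidt expansion of the Bell projectors — is the natural derivation, and it is how the corollary (stated without proof in the paper) should be verified. The first half of your argument is correct: conjugation by $\sigma_i$ fixes the $i$-th Bloch coordinate and flips the other two, and summing with weights $p_i$ gives the diagonal Bloch map with entries $(p_0+p_1-p_2-p_3,\ p_0-p_1+p_2-p_3,\ p_0-p_1-p_2+p_3)$, the middle entry being $-t_{22}$.

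The Bell-projector expansions you wrote down, however, have the sign of the $\sigma_3\otimes\sigma_3$ term wrong: it does not flip with the $\pm$. The correct identities are
\begin{align*}
\ket{\Phi^{\pm}}\bra{\Phi^{\pm}}&=\tfrac{1}{4}\left(\mathrm{I}\otimes\mathrm{I}\pm\sigma_1\otimes\sigma_1\mp\sigma_2\otimes\sigma_2+\sigma_3\otimes\sigma_3\right),\\
\ket{\Psi^{\pm}}\bra{\Psi^{\pm}}&=\tfrac{1}{4}\left(\mathrm{I}\otimes\mathrm{I}\pm\sigma_1\otimes\sigma_1\pm\sigma_2\otimes\sigma_2-\sigma_3\otimes\sigma_3\right);
\end{align*}
the $\sigma_3\otimes\sigma_3$ coefficient is $+1$ for both $\Phi$ states and $-1$ for both $\Psi$ states. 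A quick way to see this is that $\ket{\Phi^-}=(\mathrm{I}\otimes\sigma_z)\ket{\Phi^+}$, and conjugation by $\mathrm{I}\otimes\sigma_z$ leaves $\sigma_z\otimes\sigma_z$ invariant while flipping $\sigma_1\otimes\sigma_1$ and $\sigma_2\otimes\sigma_2$ (similarly $\ket{\Psi^{\pm}}\propto(\mathrm{I}\otimes\sigma_{x,y})\ket{\Phi^+}$). This is not merely cosmetic: carrying your expansions through gives $p_2=\tfrac{1}{4}(1-t_{11}-t_{22}+t_{33})$ and $p_3=\tfrac{1}{4}(1-t_{11}+t_{22}-t_{33})$, whereupon $-p_0+p_1-p_2+p_3=t_{22}-t_{33}$ and $p_0-p_1-p_2+p_3=0$, so the claimed conversion formulas would not be verified. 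With the corrected expansions one obtains $p_2=\tfrac{1}{4}(1-t_{11}-t_{22}-t_{33})$ and $p_3=\tfrac{1}{4}(1-t_{11}+t_{22}+t_{33})$ (and your $p_0$, $p_1$ are right as written), from which (\ref{convert1})--(\ref{convert3}) follow directly; the ``cyclically'' should therefore be replaced by this explicit statement.
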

These points satisfy
\begin{align}
t_{11}+t_{22}+t_{33}&\leq 1,\\
t_{11}-t_{22}-t_{33}&\leq 1, \\
-t_{11}+t_{22}-t_{33}&\leq 1,\\
-t_{11}-t_{22}+t_{33}&\leq 1,
\end{align}
which implies that the vector $(t_{11},t_{22},t_{33})$ characterising the Pauli channel belongs to the tetrahedron $\mathcal{T}$ defined by the convex combination of the four points
\begin{align}\label{tetra}
\mathbf{e}_0&=(\phantom{-}1,-1,\phantom{-}1),&\mathbf{e}_1&=(\phantom{-}1,\phantom{-}1,-1),\\
\mathbf{e}_2&=(-1,-1,-1),&\mathbf{e}_3&=(-1,\phantom{-}1,\phantom{-}1).\nonumber
\end{align}
If these points are substituted into Eq.~(\ref{generaltwo}), and the remaining parameters set\footnote{This is necessary in order to be a valid state.} to 0 then we obtain the four density matrices of the Bell pairs, $\left\{E_i\right\}$.\\
\begin{figure}
\begin{center}
\includegraphics[width=\columnwidth]{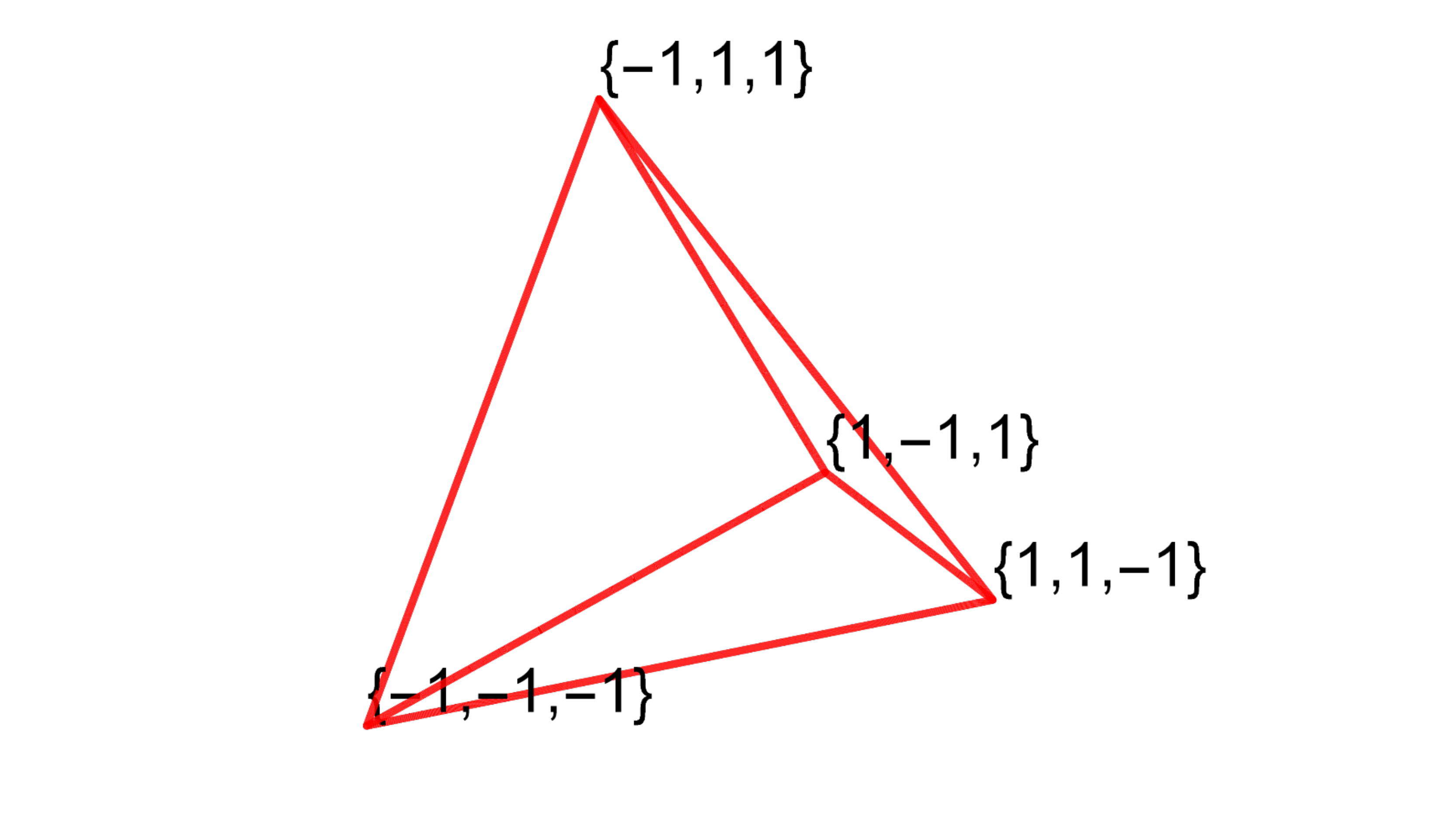}
\caption[The Pauli tetrahedron, $\mathcal{T}$]{The Pauli tetrahedron, $\mathcal{T}$. The four extremal points correspond to the four Bell states.}
\end{center}
\end{figure}

According to corollary~\ref{BlochPauli}, there is a simple way to simulate
a Pauli channel with arbitrary probability distribution
$\left\{p_i\right\}$. One may just take the resource state
\begin{equation}\label{PauliChoi}
\sigma=\frac{1}{4}\left(\mathrm{I}\otimes\mathrm{I}+\sum_{i=1}^3t_{ii}\sigma_i\otimes\sigma_i\right),
\end{equation}
with $t_{ii}$ being connected to $\left\{p_i\right\}$ by Eqs.~(\ref{convert1})-(\ref{convert3}). Note that this resource state is \textit{Bell
diagonal}, i.e., a mixture of the four Bell states. Thus this description presents the result of \cite{BDSW1996} neatly.

\section{Introducing a Noisy Teleportation Protocol}\label{ch2:third}
Pauli channels are important for accurately modelling noise, but are restrictive in terms of the transformations we may apply; corollary \ref{BlochPauli} shows that they only allow for rescaling of individual elements of the Bloch vector. It was our goal to expand the possible channels simulable to include more general transformations. Some notable non-Pauli channels have been shown simulable, including the erasure and amplitude damping channels \cite{PLOB2017}, but none using a two-qubit resource state.\\

To achieve this, it will be necessary to alter the teleportation protocol, whilst retaining its LOCC character. A natural way to do this is to change the \emph{classical channel} used by Alice and Bob - formerly always assumed to be just the identity channel. Since we are retaining the original measurement and correction unitaries, we can without loss of generality consider an arbitrary classical channel as a conditional probability distribution:
\begin{equation}
\Pi=\left\{p_{l|k}\right\},\;\;p_{l|k}\geq 0,\;\;\sum_{k=0}^3p_{l|k}=1.
\end{equation}
This means given Alice receives output $k$, rather than Bob performing deterministically corrective unitary\footnote{We use the convention that $\sigma_0=\mathrm{I}$.} $\sigma_k$, he instead performs $\sigma_l$ with probability $p_{l|k}$. For the original protocol we have $p_{l|k}=\delta_{lk}$. 

\begin{figure}
\begin{center}
\includegraphics[width=\columnwidth]{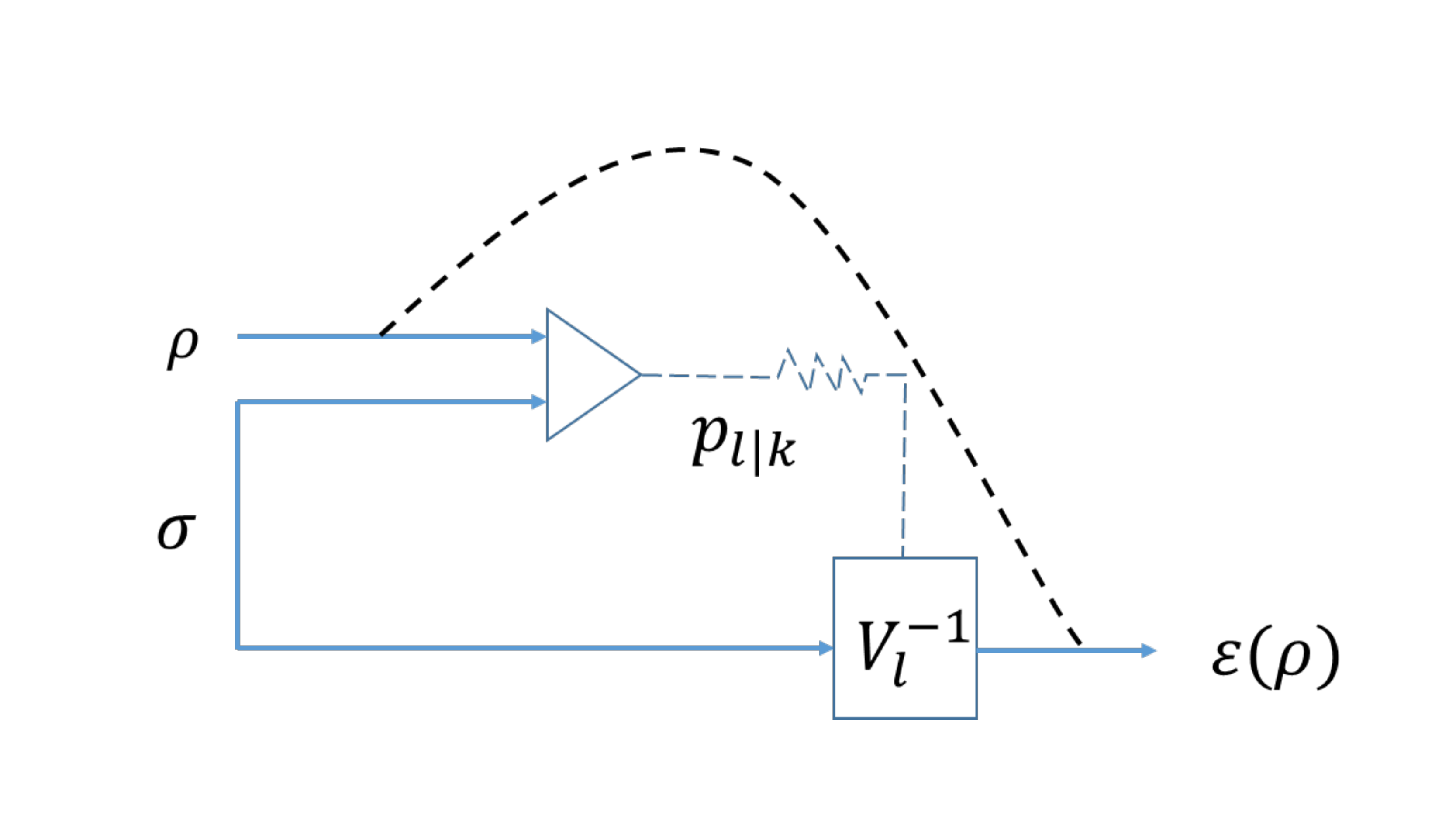}
\caption[Noisy teleportation simulation]{By replacing both the resource state and the classical channel between parties, we hope to increase the possible $\mathcal{E}$ the protocol can simulate.}
\end{center}
\end{figure}

\begin{theorem}
\label{FFormula} Consider a teleportation protocol based on a Bell
detection and Pauli correction unitaries, but where the resource
state is a generic two-qubit state $\sigma$ and the CCs from Alice
to Bob are subject to a classical channel $\Pi$ (``noisy
teleportation''). In this way, we simulate a quantum channel
$\mathcal{E}_{f}$ whose action on the Bloch sphere is described by
\begin{align}
\mathcal{E}_{f}:\left(  x,y,z\right)  \rightarrow( &  f_{10}+f_{11}%
x+f_{12}y+f_{13}z,\nonumber\\
&  f_{20}+f_{21}x+f_{22}y+f_{23}z,\nonumber\\
&  f_{30}+f_{31}x+f_{32}y+f_{33}z)\label{fullform}%
\end{align}
where $f_{ij}$ is given by the formula
$f_{ij}=t_{ji}^{\prime}S_{ij}$, with
\begin{equation}
S_{ij}:=\frac{1}{4}\sum_{k,l=0}^{3}-1^{\delta_{k,0}+\delta_{j,2}+\delta
_{j,0}+\delta_{k,j}+\delta_{i,l}+\delta_{0,l}}p_{l|k},\label{Sdefinition}%
\end{equation}
and $T^{\prime}$ is defined as the \emph{augmented}
$T$ matrix,
\begin{equation}
t_{ji}^{\prime}=%
\begin{cases}
b_{i} & j=0\\
t_{ji} & j\in\left\{  1,2,3\right\}
\end{cases}
i\in\left\{  1,2,3\right\}  ,
\end{equation}
taking $t_{ji}$ from the $T$ matrix of Eq.~(\ref{generaltwo}).
\end{theorem}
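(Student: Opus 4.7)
The plan is to compute the action of the noisy teleportation protocol directly, then read off the coefficients $f_{ij}$ from the resulting Bloch-form expression. Writing the input qubit as $\rho = (\mathrm{I} + x\sigma_1 + y\sigma_2 + z\sigma_3)/2$ and the resource $\sigma$ as in Eq.~(\ref{generaltwo}), and letting $E_k$ denote the Bell projectors (with our convention $E_k = (\mathrm{I}\otimes\sigma_k)\ket{\Phi^+}\bra{\Phi^+}(\mathrm{I}\otimes\sigma_k)$, up to the $\ket{\Psi^-}$ sign), the noisy teleportation channel can be written as
\begin{equation}
\mathcal{E}_f(\rho)=\sum_{k,l=0}^{3}p_{l|k}\,\sigma_l\,\mathrm{Tr}_{CA}\!\left[(E_k\otimes \mathrm{I}_B)(\rho_C\otimes\sigma_{AB})\right]\sigma_l .
\end{equation}
This decouples the problem into three stages: the Bell-measurement partial trace, the corrective unitary $\sigma_l(\cdot)\sigma_l$, and the final classical averaging with weights $p_{l|k}$.

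First I would carry out the partial trace stage. Expanding $\rho$ and $\sigma$ in the Pauli basis and using $\mathrm{Tr}[\sigma_i\sigma_j]=2\delta_{ij}$, each Bell projector produces on Bob's side an \emph{unnormalised} conditional state of Bloch form $(p_k \mathrm{I}+\mathbf{w}^{(k)}\cdot\boldsymbol{\sigma})/2$, where $\mathbf{w}^{(k)}$ depends linearly on $(x,y,z)$ via the correlation matrix $t_{ji}$ and contains an affine contribution from Bob's local Bloch vector $b_i$. Crucially, the local Bloch vector $a_i$ of Alice's half gets annihilated by the Bell measurement (this is the same mechanism that makes $a_i$ irrelevant in the standard Bowen--Bose theorem~\ref{Bobothe}). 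The signs that appear in $\mathbf{w}^{(k)}$ are controlled by the commutation rules $\sigma_k\sigma_j=\pm\sigma_j\sigma_k$, producing the $(-1)^{\delta_{k,0}+\delta_{k,j}+\ldots}$ pattern seen in Eq.~(\ref{Sdefinition}).

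Next I would apply the corrective stage. Conjugation by $\sigma_l$ flips the sign of $\sigma_i$ whenever $\sigma_l$ and $\sigma_i$ anticommute, i.e.\ it introduces a factor $(-1)^{1-\delta_{l,0}-\delta_{i,l}+\delta_{i,0}\delta_{l,0}}$ (equivalent to the $\delta_{i,l}+\delta_{0,l}$ factors modulo the trivial case). Combining this with the signs from the partial trace stage and then averaging with the classical weights $p_{l|k}$, one obtains a Bloch-affine map whose $i$-th output component has the form
\begin{equation}
[\mathcal{E}_f(\rho)]_i = \sum_{j=0}^{3} t'_{ji}\,S_{ij}\,\rho_j,
\end{equation}
with $\rho_0=1$ and $\rho_{1,2,3}=(x,y,z)$, where $S_{ij}$ collects all the sign factors and classical weights, and $t'_{ji}$ is precisely the augmented matrix defined in the statement (the $j=0$ row being $b_i$ because the affine part comes from Bob's reduced state). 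Collecting terms identifies $f_{ij}=t'_{ji}S_{ij}$.

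The main obstacle will be the sign bookkeeping in the combined commutation analysis: verifying that the exponent $\delta_{k,0}+\delta_{j,2}+\delta_{j,0}+\delta_{k,j}+\delta_{i,l}+\delta_{0,l}$ in Eq.~(\ref{Sdefinition}) precisely reproduces the product of (i) the Pauli commutator signs introduced by $E_k$ acting on the $t_{ji}\sigma_j\otimes\sigma_i$ terms of $\sigma$, and (ii) the anticommutation signs introduced by Bob's correction $\sigma_l$. The $\delta_{j,2}$ term in particular reflects the asymmetric sign convention in $\ket{\Psi^-}$ and in the parametrisation of the Bloch sphere used in Corollary~\ref{BlochPauli}. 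As a sanity check I would specialise to the deterministic channel $p_{l|k}=\delta_{lk}$ and to a Bell-diagonal resource ($a_i=b_i=0$, $t_{ij}=t_{ii}\delta_{ij}$): the formula should collapse to the Pauli channel of Corollary~\ref{BlochPauli}, with all off-diagonal $f_{ij}$ and all affine terms $f_{i0}$ vanishing and the diagonal $f_{ii}$ reducing to the $t_{ii}$ values with the sign pattern matching Eqs.~(\ref{convert1})--(\ref{convert3}).
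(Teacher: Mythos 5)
The paper states Theorem~\ref{FFormula} without an explicit derivation (only the accompanying figure tabulating the sign patterns of $\mathbf{S}$), so there is no written proof to compare against; your direct-computation plan is precisely what the result requires and is almost certainly what the authors did. Your decomposition into (i) the Bell-measurement partial trace, (ii) Pauli conjugation by $\sigma_l$, and (iii) classical averaging over $p_{l|k}$ is correct, as is the operator expression $\mathcal{E}_f(\rho)=\sum_{k,l}p_{l|k}\,\sigma_l\,\mathrm{Tr}_{CA}\!\left[(E_k\otimes \mathrm{I}_B)(\rho_C\otimes\sigma_{AB})\right]\sigma_l$, and you correctly identify that $a_i$ is killed by the Bell measurement, that the affine row $t'_{0i}=b_i$ enters via Bob's marginal, and that $\delta_{j,2}$ encodes $\sigma_y^T=-\sigma_y$. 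Your proposed sanity check does go through: with $p_{l|k}=\delta_{lk}$ the $\delta_{k,0}$ and $\delta_{0,l}$ terms in Eq.~(\ref{Sdefinition}) cancel, one finds $S_{ij}=0$ for $i\neq j$ and for $j=0$, while $S_{ii}=(-1)^{\delta_{i,2}}$, so the map collapses to $(x,y,z)\mapsto(t_{11}x,-t_{22}y,t_{33}z)$ as in Corollary~\ref{BlochPauli}. What remains is to mechanically confirm the remaining $\delta$'s in the exponent (from the $\sigma_k$ conjugation inside $E_k$ and the $\sigma_l$ correction), but your outline already accounts for every source of sign, so this is bookkeeping rather than a gap in the argument.
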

\newpage
\begin{definition}
\label{FMatrix} A qubit quantum channel
$\mathcal{E}:(x,y,z)\rightarrow(x^{\prime
},y^{\prime},z^{\prime})$ can be described by its F matrix
$F_{\mathcal{E}}$, where
\begin{equation}
\left(
\begin{array}
[c]{c}%
1\\
x^{\prime}\\
y^{\prime}\\
z^{\prime}%
\end{array}
\right)  =F_{\mathcal{E}}\left(
\begin{array}
[c]{c}%
1\\
x\\
y\\
z
\end{array}
\right)  =\left(
\begin{array}
[c]{cccc}%
1 & 0 & 0 & 0\\
f_{10} & f_{11} & f_{12} & f_{13}\\
f_{20} & f_{21} & f_{22} & f_{23}\\
f_{30} & f_{31} & f_{32} & f_{33}%
\end{array}
\right)  \left(
\begin{array}
[c]{c}%
1\\
x\\
y\\
z
\end{array}
\right) 
\end{equation}
describes its action on the Bloch vector of a qubit state.
\end{definition}
As an example, a Pauli channel $\mathcal{E}_P:\left( x,y,z\right) \rightarrow
\left(
t_{11}x,-t_{22}y,t_{33}z\right)$ has F matrix
\begin{equation}
F_{P}=\left(
\begin{array}{cccc}
1 & 0 & 0 & 0 \\
0 & t_{11} & 0 & 0 \\
0 & 0 & -t_{22} & 0 \\
0 & 0 & 0 & t_{33}%
\end{array}%
\right) .
\end{equation}%

\begin{figure}
\begin{center}
\small
\begin{equation*}
\left(
\begin{array}{cccc}
 \left(
\begin{array}{cccc}
 \phantom{-}1 & \phantom{-}1 & -1 & -1 \\
 \phantom{-}1 & \phantom{-}1 & -1 & -1 \\
 \phantom{-}1 & \phantom{-}1 & -1 & -1 \\
 \phantom{-}1 & \phantom{-}1 & -1 & -1 \\
\end{array}
\right) & \left(
\begin{array}{cccc}
 \phantom{-}1 & \phantom{-}1 & -1 & -1 \\
 \phantom{-}1 & \phantom{-}1 & -1 & -1 \\
 -1 & -1 & \phantom{-}1 & \phantom{-}1 \\
 -1 & -1 & \phantom{-}1 & \phantom{-}1 \\
\end{array}
\right) & \left(
\begin{array}{cccc}
 -1 & -1 & \phantom{-}1 & \phantom{-}1 \\
 \phantom{-}1 & \phantom{-}1 & -1 & -1 \\
 -1 & -1 & \phantom{-}1 & \phantom{-}1 \\
 \phantom{-}1 & \phantom{-}1 & -1 & -1 \\
\end{array}
\right) & \left(
\begin{array}{cccc}
 \phantom{-}1 & \phantom{-}1 & -1 & -1 \\
 -1 & -1 & \phantom{-}1 & \phantom{-}1 \\
 -1 & -1 & \phantom{-}1 & \phantom{-}1 \\
 \phantom{-}1 & \phantom{-}1 & -1 & -1 \\
\end{array}
\right) \\[1cm]
 \left(
\begin{array}{cccc}
 \phantom{-}1 & -1 & \phantom{-}1 & -1 \\
 \phantom{-}1 & -1 & \phantom{-}1 & -1 \\
 \phantom{-}1 & -1 & \phantom{-}1 & -1 \\
 \phantom{-}1 & -1 & \phantom{-}1 & -1 \\
\end{array}
\right) & \left(
\begin{array}{cccc}
 \phantom{-}1 & -1 & \phantom{-}1 & -1 \\
 \phantom{-}1 & -1 & \phantom{-}1 & -1 \\
 -1 & \phantom{-}1 & -1 & \phantom{-}1 \\
 -1 & \phantom{-}1 & -1 & \phantom{-}1 \\
\end{array}
\right) & \left(
\begin{array}{cccc}
 -1 & \phantom{-}1 & -1 & \phantom{-}1 \\
 \phantom{-}1 & -1 & \phantom{-}1 & -1 \\
 -1 & \phantom{-}1 & -1 & \phantom{-}1 \\
 \phantom{-}1 & -1 & \phantom{-}1 & -1 \\
\end{array}
\right) & \left(
\begin{array}{cccc}
 \phantom{-}1 & -1 & \phantom{-}1 & -1 \\
 -1 & \phantom{-}1 & -1 & \phantom{-}1 \\
 -1 & \phantom{-}1 & -1 & \phantom{-}1 \\
 \phantom{-}1 & -1 & \phantom{-}1 & -1 \\
\end{array}
\right) \\[1cm]
 \left(
\begin{array}{cccc}
 \phantom{-}1 & -1 & -1 & \phantom{-}1 \\
 \phantom{-}1 & -1 & -1 & \phantom{-}1 \\
 \phantom{-}1 & -1 & -1 & \phantom{-}1 \\
 \phantom{-}1 & -1 & -1 & \phantom{-}1 \\
\end{array}
\right) & \left(
\begin{array}{cccc}
 \phantom{-}1 & -1 & -1 & \phantom{-}1 \\
 \phantom{-}1 & -1 & -1 & \phantom{-}1 \\
 -1 & \phantom{-}1 & \phantom{-}1 & -1 \\
 -1 & \phantom{-}1 & \phantom{-}1 & -1 \\
\end{array}
\right) & \left(
\begin{array}{cccc}
 -1 & \phantom{-}1 & \phantom{-}1 & -1 \\
 \phantom{-}1 & -1 & -1 & \phantom{-}1 \\
 -1 & \phantom{-}1 & \phantom{-}1 & -1 \\
 \phantom{-}1 & -1 & -1 & \phantom{-}1 \\
\end{array}
\right) & \left(
\begin{array}{cccc}
 \phantom{-}1 & -1 & -1 & \phantom{-}1 \\
 -1 & \phantom{-}1 & \phantom{-}1 & -1 \\
 -1 & \phantom{-}1 & \phantom{-}1 & -1 \\
 \phantom{-}1 & -1 & -1 & \phantom{-}1 \\
\end{array}
\right)
\end{array}
\right)
\end{equation*}\normalsize
\caption[Possible forms of $S_{ij}$]{The matrix $\mathbf{S}$, a succinct representation of the 12 possible forms for $S_{ij}$. The rows of $\mathbf{S}$ corresponds to $i=1,2,3$ respectively, whilst the columns give $j=0\ldots 3$.  Given $i,j$\textsuperscript{th} element $\boldsymbol{\mathcal{S}}_{ij}$, $S_{ij}$ can be obtained by the sum $1/4\sum_{k=0,l=0}^{3,3}[\boldsymbol{\mathcal{S}}_{ij}]_{k,l}\;p_{l|k}$, starting the row/column count at 0.}
\end{center}
\end{figure}
We see immediately that this addition of classical noise greatly expands the range of simulable channel actions; It is now possible to include constant terms in the actions, as well as interdependence between components of the Bloch vector; much more than previously possible in the Pauli framework. Before we go on to discuss the possibilities for two arbitrary qubits, it is worth considering the restriction to Bell-diagonal resource states - for which we can prove the following no-go theorem.
\begin{theorem}\label{nogo}
Using a Bell-diagonal resource state, i.e. of the form in
Eq.~(\ref{PauliChoi}), it is only possible to simulate Pauli
channels regardless of the classical channel in place between the
two parties.
\end{theorem}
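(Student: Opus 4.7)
The plan is to apply Theorem \ref{FFormula} directly to a Bell-diagonal resource state, and observe that almost every entry of the augmented matrix $T'$ vanishes, forcing the resulting $F$-matrix to have the diagonal structure characteristic of a Pauli channel. First I would compare the general two-qubit parameterisation in Eq.~(\ref{generaltwo}) with the Bell-diagonal form in Eq.~(\ref{PauliChoi}): this identifies $a_i = b_i = 0$ for all $i$, together with $t_{ij} = t_{ii}\delta_{ij}$. Consequently every entry of the augmented matrix $t'_{ji}$ vanishes except possibly for the three diagonal entries $t'_{ii} = t_{ii}$, $i \in \{1,2,3\}$.

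Using the formula $f_{ij} = t'_{ji} S_{ij}$ from Theorem \ref{FFormula}, this immediately forces $f_{i0} = b_i S_{i0} = 0$ for all $i$ (no constant shifts), and $f_{ij} = t_{ji} S_{ij} = 0$ whenever $i \neq j$ (no cross-coupling between Bloch components), independently of which classical channel $\Pi = \{p_{l|k}\}$ is chosen. Thus the only potentially nonzero entries of $F$ are the three diagonal coefficients $f_{ii} = t_{ii} S_{ii}$, so the simulated channel acts on the Bloch vector as
\begin{equation}
(x,y,z) \;\longrightarrow\; (f_{11}\,x,\;f_{22}\,y,\;f_{33}\,z),
\end{equation}
which by Corollary \ref{BlochPauli} is precisely the action of a Pauli channel, corresponding to the identification $(t_{11}, t_{22}, t_{33}) = (f_{11}, -f_{22}, f_{33})$.

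To close the argument I would verify that the resulting triple necessarily lies in the Pauli tetrahedron $\mathcal{T}$ of Eq.~(\ref{tetra}): this is automatic, because $\mathcal{E}_f$ arises from an LOCC simulation and is therefore a legitimate CPTP map, and the tetrahedron characterises exactly those diagonal Bloch maps that are physical. The calculation is essentially a direct substitution, so no serious obstacle is anticipated; the only delicate point is tracking the sign conventions in the definition \eqref{Sdefinition} of $S_{ij}$ carefully enough to be certain that the vanishing of each off-diagonal $f_{ij}$ follows purely from the vanishing of the corresponding $t'_{ji}$, rather than from any accidental cancellation among the $p_{l|k}$, so that the conclusion is robust across all admissible classical channels.
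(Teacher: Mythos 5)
Your argument is correct, and it diverges from the paper's after the first step. You and the paper both begin identically: substituting the Bell-diagonal constraints $a_i=b_i=0$ and $t_{ij}=t_{ii}\delta_{ij}$ into Theorem~\ref{FFormula} annihilates every entry of $t'_{ji}$ with $j\neq i$, so $f_{i0}=f_{ij}=0$ for $i\neq j$ and the simulated channel has purely diagonal Bloch action $(t_{11}S_{11}\,x,\,t_{22}S_{22}\,y,\,t_{33}S_{33}\,z)$, independently of $\Pi$. From there the paper proceeds constructively: it interprets $\frac{1}{4}\sum_{k,l}p_{l|k}=1$ as a convex decomposition of the simulated channel into sixteen summands, checks (via the sign structure of $S_{11},S_{22},S_{33}$) that each summand is a Pauli channel obtained by permuting the Bell weights $\{p_i\}$ of the resource, and concludes because Pauli channels form a convex set. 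You instead observe that the simulated channel is automatically CPTP (it is produced by an LOCC protocol) and then invoke the fact that for a unital qubit map with diagonal Bloch action, complete positivity is equivalent to the eigenvalue triple lying in the tetrahedron — i.e.\ the Fujiwara--Algoet characterisation — together with the identification $(t_{11},t_{22},t_{33})=(f_{11},-f_{22},f_{33})$. Both are sound. Your route is shorter and avoids the case analysis of the $S_{ii}$ signs, but it leans on the \emph{converse} of what Corollary~\ref{BlochPauli} literally states: the paper derives $(t_{11},t_{22},t_{33})\in\mathcal{T}$ from positivity of the \emph{resource state} and shows how to hit every tetrahedron point, whereas you need the stronger (standard, but here unstated) claim that $\mathcal{T}$ is precisely the CP region for diagonal unital Bloch maps. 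The paper's proof is self-contained and yields the explicit Pauli decomposition of the simulated channel as a bonus; yours is the cleaner argument if the Fujiwara--Algoet fact is taken as known.
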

\begin{proof}
We know from the structure of Bell-diagonal states means that only $t_{ii}$ are non-zero; this immediately limits the action of any simulated channel to
\begin{equation}
\mathcal{E}:(x,y,z)\rightarrow(t_{11}S_{11}x,t_{22}S_{22}y,t_{33}S_{33}z)~.
\end{equation}
Looking at the structure of the sums $S_{ii}$ for $i\in\{1,2,3\}$, it is possible to verify\footnote{By looking at the sign of each $p_{l|k}$ in $S_{11}$, $S_{22}$, $S_{33}$.} that for any given $p_{l|k}$, the induced action is one of the four transformations
\begin{align}
\mathcal{E}_{p_{l|k}}:(x,y,z) &  \rightarrow(\phantom{-}t_{11}x,-t_{22}%
y,\phantom{-}t_{33}z)\label{idch}\\
&  \rightarrow(\phantom{-}t_{11}x,\phantom{-}t_{22}y,-t_{33}z)\label{sxch}\\
&  \rightarrow(-t_{11}x,-t_{22}y,-t_{33}z)\label{sych}\\
&  \rightarrow(-t_{11}x,\phantom{-}t_{22}y,\phantom{-}t_{33}z)~,\label{szch}%
\end{align}
which are the four Pauli transformations induced by simulation
over the respective states defined by
\begin{align*}
&  (\phantom{-}t_{11},\phantom{-}t_{22},\phantom{-}t_{33}), &  &
(\phantom{-}t_{11},-t_{22},-t_{33}),\\
&  (-t_{11},\phantom{-}t_{22},-t_{33}), &  &  (-t_{11},-t_{22}%
,\phantom{-}t_{33}),
\end{align*}
with perfect classical communication. Since $\left\{t_{ii}\right\}$ define a Bell state, they may be given by a convex weighting of the four Bell pairs, using the probabilities 
given in Eqs.~(\ref{convert1})-(\ref{convert3}). By permuting the probabilities in the following way, it is possible to generate the four states above.
\begin{center}
\begin{tabular}{c|cccc}
State & $\ket{\Phi^+}$ & $\ket{\Psi^+}$ & $\ket{\Psi^-}$ & $\ket{\Phi^-}$\\
$(\phantom{-}t_{11},\phantom{-}t_{22},\phantom{-}t_{33})$ & $p_0$ & $p_1$ & $p_2$ & $p_3$ \\
$(\phantom{-}t_{11},-t_{22},-t_{33})$& $p_1$ & $p_0$ & $p_3$ & $p_2$ \\
$(-t_{11},\phantom{-}t_{22},-t_{33})$& $p_2$ & $p_3$ & $p_0$ & $p_1$ \\
$(-t_{11},-t_{22},\phantom{-}t_{33})$& $p_3$ & $p_2$ & $p_1$ & $p_0$ \\
\end{tabular}
\end{center}
This means that, for any given $p_{l|k}$, a valid Pauli channel is simulated. As $1/4\sum p_{l|k}=1$ and by linearity of the action, we can consider the overall channel a convex combination of these channels, and thus a Pauli channel also.
\end{proof}

It is important to understand the difference between theorem~\ref{Bobothe} and theorem~\ref{nogo}.
Theorem~\ref{Bobothe} tells us that an \textit{arbitrary} two-qubit resource state with \textit{perfect} CC from Alice to Bob
may only simulate Pauli channels, whereas theorem~\ref{nogo}
states that a \textit{Bell-diagonal} resource with an
\textit{arbitrary} classical channel for the CC from Alice to Bob
may only simulate Pauli channels. As a result, we have the
following corollary which will drive us in the choice of the
resource state $\sigma$ in the next section.

\begin{corollary}\label{coro}
In order to simulate a non-Pauli channel via noisy teleportation,
the resource state $\sigma$ of Eq.~(\ref{generaltwo}) must
have $\mathbf{b}\neq0$ or $T$ non-diagonal. This means $\sigma$
cannot be Bell-diagonal (and thus cannot be the Choi matrix of a Pauli channel).
\end{corollary}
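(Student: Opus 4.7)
The plan is to read the corollary as a direct structural consequence of Theorem~\ref{FFormula}, so no new machinery is required; I would simply contrapose the claim and compute. Assume $\mathbf{b}=0$ and $T$ is diagonal, and show the simulated channel must be Pauli. Under $f_{ij}=t_{ji}'S_{ij}$, the constant column of the $F$ matrix is $f_{i0}=t_{0i}'S_{i0}=b_i S_{i0}$, so $\mathbf{b}=0$ forces $f_{10}=f_{20}=f_{30}=0$, eliminating every affine offset in the Bloch action.

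Next, for $i,j\in\{1,2,3\}$ the off-diagonal entries are $f_{ij}=t_{ji}S_{ij}$, and diagonality of $T$ gives $t_{ji}=0$ whenever $i\neq j$, so all cross-terms vanish. What remains is an $F$ matrix of the form $\mathrm{diag}(1,f_{11},f_{22},f_{33})$, which by Definition~\ref{FMatrix} (and the worked example immediately following it) is precisely the $F$ matrix of a Pauli channel. Hence every simulation with such a $\sigma$ is Pauli, which is the contrapositive of the first sentence of the corollary.

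For the second sentence, observe that a Bell-diagonal state as in Eq.~(\ref{PauliChoi}) has $\mathbf{a}=\mathbf{b}=0$ and $T$ diagonal, so it satisfies neither of the two permitted conditions ($\mathbf{b}\neq 0$ or $T$ non-diagonal). Thus no Bell-diagonal $\sigma$ can give rise to a non-Pauli $\mathcal{E}_f$. This also recovers Theorem~\ref{nogo} as a special case, providing an internal consistency check. I would remark in passing that $\mathbf{a}$ is absent from the formula for $f_{ij}$ altogether—reflecting that Alice's local marginal is irrelevant to Bob's output—so the condition genuinely concerns only $\mathbf{b}$ and $T$.

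I do not anticipate a real obstacle here: the work is entirely bookkeeping on the formula of Theorem~\ref{FFormula}. The only subtle point is to confirm that a diagonal $F$ matrix of the stated shape \emph{is} a valid Pauli channel (rather than merely Pauli-like in its action), but this is immediate from Corollary~\ref{BlochPauli}, since by Eqs.~(\ref{convert1})--(\ref{convert3}) any admissible diagonal triple $(f_{11},f_{22},f_{33})$ in the tetrahedron $\mathcal{T}$ of Eq.~(\ref{tetra}) determines a probability vector $\{p_i\}$.
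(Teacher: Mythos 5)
Your argument is sound in structure and takes essentially the paper's route: read off from Theorem~\ref{FFormula} that $\mathbf{a}$ never enters $f_{ij}$, so vanishing $\mathbf{b}$ and diagonal $T$ force $F_{\mathcal{E}_f}=\mathrm{diag}(1,f_{11},f_{22},f_{33})$, which is the Pauli shape. Your observation that the corollary is genuinely about $\mathbf{b}$ and $T$ rather than about Bell-diagonality (since $\mathbf{a}$ is absent from the formula) is worth emphasising, and the paper leaves it more implicit.

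The one place where you wave your hands is the ``subtle point'' you flag at the end. A diagonal $F$-matrix of the form $\mathrm{diag}(1,f_{11},f_{22},f_{33})$ is a Pauli channel only if $(f_{11},-f_{22},f_{33})\in\mathcal{T}$, and you invoke Corollary~\ref{BlochPauli} as if that membership were automatic. Corollary~\ref{BlochPauli} tells you how a Pauli channel's $F$-matrix looks and how to recover the $\{p_i\}$ \emph{once you know} the triple is in $\mathcal{T}$; it does not by itself tell you the triple landed there. You need one more sentence: either note that noisy teleportation is a genuine LOCC applied to a genuine state, so the simulated map is CPTP, and a CPTP unital qubit channel with diagonal $F$-matrix necessarily has its diagonal triple in the tetrahedron; or, as the paper does in the proof of Theorem~\ref{nogo}, decompose each $S_{ii}$ into the contributions of the individual $p_{l|k}$ and show that every term corresponds to one of the four sign-flipped Pauli channels obtained by permuting the $\{p_i\}$ of a Bell-diagonal state, so the total is a convex combination of Pauli channels and hence Pauli. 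The second route is what the paper relies on, and it is the cleaner way to close your gap, since it never invokes CPTP-ness as an external fact.

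With that single step filled in, your proof and the paper's coincide: the corollary is read off from the same $F$-matrix bookkeeping, with the no-go theorem's convexity argument supplying the ``diagonal implies Pauli'' conclusion.
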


\section{Pauli-Damping Channels}\label{ch2:fourth}
Corollary \ref{coro} tells us that a non-Bell-diagonal state is required to simulate non-Pauli channels. We chose the Choi matrix of the amplitude damping channel, motivated by three reasons. It is the most studied (dimension preserving) non-Pauli channel; the Choi matrix has a relatively high number of zero parameters in the Bloch description, and the fact that it was (and still remains) an open question as to whether the amplitude damping channel is Choi-simulable i.e. LOCC simulable by its own Choi matrix - indeed, by any discrete variable resource non-trivially.\\

%The action of the amplitude damping channel is given by:
%\begin{align*}
%\mathcal{E}_{\gamma}:\ket{0} &  \rightarrow\ket{0},\\
%\ket{1} &  \rightarrow\sqrt{\gamma}\ket{0}+\sqrt{1-\gamma}\ket{1},
%\end{align*}
%where $\gamma\in\lbrack0,1]$ is the probability of damping.
The action of the amplitude damping channel on the Bloch sphere is given by:
\begin{equation}
\mathcal{E}_{\gamma}:(x,y,z)\rightarrow\left(
\sqrt{1-\gamma}x,\sqrt {1-\gamma}y,\gamma+(1-\gamma)z\right)  .
\end{equation}
where $\gamma\in[0,1]$ is the probability of damping. The Choi matrix of this channel is
\begin{equation}
\chi_{\gamma}=\left(
\begin{array}
[c]{cccc}%
\frac{1}{2} & 0 & 0 & \frac{\sqrt{1-\gamma}}{2}\\
0 & 0 & 0 & 0\\
0 & 0 & \frac{\gamma}{2} & 0\\
\frac{\sqrt{1-\gamma}}{2} & 0 & 0 & \frac{1-\gamma}{2}%
\end{array}
\right),
\end{equation}
which is a resource state of the form (\ref{generaltwo}), where
the non-zero entries are only
\begin{equation}
b_{3}=\gamma,~t_{11}=\sqrt{1-\gamma},~t_{22}=-\sqrt{1-\gamma},~t_{33}=1-\gamma.
\end{equation}
We may also describe this channel using its F matrix:
\begin{equation}
F_{\gamma }=\left(
\begin{array}{cccc}
1 & 0 & 0 & 0 \\
0 & \sqrt{1-\gamma } & 0 & 0 \\
0 & 0 & \sqrt{1-\gamma } & 0 \\
\gamma  & 0 & 0 & 1-\gamma
\end{array}%
\right) .
\end{equation}%
Using this resource, we state one of our two main results:
\begin{theorem}\label{mainresult}
All channels that are simulable by noisy teleportation over the
amplitude damping Choi matrix $\chi_\gamma$ can be uniquely\footnote{Except in the special cases $\eta=0,1$.}
decomposed in the following way:
\begin{equation}\label{decomposition}
\mathcal{E}_\text{sim}=\sigma_x^{u}\circ\mathcal{E}_\eta\circ\mathcal{E}_{P}
\end{equation}
where $u=0$ or $1$, $\sigma_x$ is the Pauli unitary map
$\sigma_x(\rho)=\sigma_x\rho\sigma_x^\dagger$,
$\mathcal{E}_\eta$ is an amplitude damping channel with parameter $\eta$, and $\mathcal{E}_{P}$ is a Pauli channel with suitable parameters $\mathbf{q}=(q_{1}%
,q_{2},q_{3})$ belonging to the tetrahedron $\mathcal{T}$.
\end{theorem}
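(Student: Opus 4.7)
The plan is to exploit the extreme sparsity of the Bloch representation of $\chi_\gamma$ in order to force the $F$ matrix of any noisy-teleportation simulation into a four-parameter family that is in bijection with the triples $(u,\eta,\mathbf{q})$. I would then extract those parameters algebraically and finally check that the resulting $\mathcal{E}_P$ is a legitimate Pauli channel.

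First I would apply Theorem~\ref{FFormula} with $\sigma=\chi_\gamma$, using that the only non-zero entries of the augmented matrix are $t'_{03}=\gamma$, $t'_{11}=\sqrt{1-\gamma}$, $t'_{22}=-\sqrt{1-\gamma}$, $t'_{33}=1-\gamma$. The identity $f_{ij}=t'_{ji}S_{ij}$ then immediately collapses $F_{\mathrm{sim}}$ into the form
\begin{equation*}
F_{\mathrm{sim}}=\begin{pmatrix}1&0&0&0\\ 0&\sqrt{1-\gamma}\,S_{11}&0&0\\ 0&0&-\sqrt{1-\gamma}\,S_{22}&0\\ \gamma\,S_{30}&0&0&(1-\gamma)\,S_{33}\end{pmatrix}.
\end{equation*}
Computing separately the F matrix of the proposed decomposition $\sigma_x^{u}\circ\mathcal{E}_\eta\circ\mathcal{E}_P$ as the product $F_{\sigma_x}^{u}F_\eta F_P$, and using that $F_P$ is diagonal and $F_{\sigma_x}=\operatorname{diag}(1,1,-1,-1)$, one obtains the same sparsity pattern with entries $(-1)^{u}\eta$, $\sqrt{1-\eta}\,q_1$, $(-1)^{u}\sqrt{1-\eta}\,q_2$, $(-1)^{u}(1-\eta)\,q_3$. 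Matching with $F_{\mathrm{sim}}$ yields four scalar equations that determine $u$ as the sign of $\gamma S_{30}$ (so that $\eta=|\gamma S_{30}|\in[0,\gamma]$), and then fix $q_1,q_2,q_3$ uniquely whenever $\eta\notin\{0,1\}$. Uniqueness fails precisely at those boundary values: at $\eta=0$ the factor $\sigma_x^{u}$ may be absorbed into $\mathcal{E}_P$, and at $\eta=1$ the composition collapses every input to a fixed pure state, rendering $\mathcal{E}_P$ irrelevant.

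The remaining, and hardest, task is to verify that the extracted $\mathbf{q}$ always lies in the tetrahedron $\mathcal{T}$, equivalently that $\mathcal{E}_P$ is itself CPTP. The easy half, $|q_i|\leq 1$, follows from the uniform bound $|S_{ij}|\leq 1$ (a direct consequence of $p_{l|k}\geq 0$ and $\sum_l p_{l|k}=1$) together with $1-\eta\geq 1-\gamma$. The harder half consists of the four signed-sum inequalities $s_1 q_1+s_2 q_2+s_3 q_3\leq 1$ with $s_1s_2s_3=-1$, which reduce after substitution to non-trivial inequalities among the $S_{ij}$ and $\eta/\gamma$; I expect this to be the main obstacle. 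One clean route is a direct combinatorial argument exploiting the explicit sign patterns of the $S_{ij}$ displayed in the $\mathbf{S}$ table, collecting the $\pm p_{l|k}$ into convex combinations bounded by $1-\eta$. A more conceptual alternative is to invoke the fact that $\mathcal{E}_{\mathrm{sim}}$ is CPTP by construction, and then translate the Choi positivity of $\mathcal{E}_{\mathrm{sim}}$ into tetrahedron membership of $\mathbf{q}$ via the bijectivity of the action of $\sigma_x^{u}$ and the fact that $\mathcal{E}_\eta$ acts injectively on the off-damping components of the $F$ matrix; either route would close the proof.
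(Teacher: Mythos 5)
Your mechanical skeleton matches the paper's: apply Theorem~\ref{FFormula} to $\chi_\gamma$ to get the sparse $F_{\text{sim}}$, compute the $F$ matrix of $\sigma_x^u\circ\mathcal{E}_\eta\circ\mathcal{E}_P$ by matrix multiplication, and match entries to read off $u$, $\eta=|\gamma S_{30}|$, and $\mathbf{q}$. (Two small slips: your $f_{22}$ entry should carry $(-1)^{u+1}$, not $(-1)^u$, and the tetrahedron facet normals satisfy $s_1s_2s_3=+1$, not $-1$; these are typos, not conceptual errors.) The paper and you also agree on where uniqueness breaks down, at $\eta\in\{0,1\}$.

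The gap is in the part you correctly flag as the hardest: showing $\mathbf{q}\in\mathcal{T}$. You offer two routes, but neither is carried out, and one of them appears to fail. Route~2 (inferring tetrahedron membership of $\mathbf{q}$ from complete positivity of $\mathcal{E}_{\text{sim}}$) does not go through: $\mathcal{E}_\eta$ is not invertible within the class of CPTP maps, so CPTP-ness of the composition $\sigma_x^u\circ\mathcal{E}_\eta\circ\mathcal{E}_P$ does not propagate backward to CPTP-ness of $\mathcal{E}_P$. The fact that $\mathcal{E}_\eta$ acts injectively on the $F$-matrix coefficients is about linear-algebraic injectivity, not about preservation of the positive cone under ``division.'' As for Route~1, your phrase ``collecting the $\pm p_{l|k}$ into convex combinations bounded by $1-\eta$'' is not quite how the argument runs; the $1-\eta$ does not enter the combinatorics at all.

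What the paper actually does is cleanly factor the hard step into two lemmas. First, it shows directly that $(S_{11},S_{22},S_{33})\in\mathcal{T}$ by substituting the definitions of $S_{11},S_{22},S_{33}$ into each of the four facet inequalities and observing that, e.g., $S_{11}+S_{22}+S_{33}=1-(p_{0|2}+p_{1|3}+p_{2|0}+p_{3|1})\le1$ because the subtracted terms are non-negative conditional probabilities (and similarly with other index patterns for the other three inequalities, and with $(S_{11},-S_{22},-S_{33})$ in the $S_{30}\le 0$ case). Second, a separate geometric lemma shows that the anisotropic shrinking $(x,y,z)\mapsto(\sqrt\alpha\,x,\sqrt\alpha\,y,\alpha z)$ with $\alpha\in[0,1]$ maps $\mathcal{T}$ into itself, proved by writing the images of the four tetrahedron vertices explicitly as convex combinations of those vertices with weights $\{(1\pm\sqrt\alpha)^2/4,(1-\alpha)/4\}$. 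The relevant $\alpha=(1-\gamma)/(1-\eta)$ then lies in $[0,1]$ because $\eta=|\gamma S_{30}|\le\gamma$. Combining the two lemmas gives $\mathbf{q}\in\mathcal{T}$. So while your intuition about which direction the combinatorics must go is right, the step needs this explicit two-lemma decomposition to close, and the purely conceptual alternative you suggest does not work.
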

\begin{proof}
 Making use the formula in Eq.
(\ref{fullform}) we know that any channel
$\mathcal{E}_{\text{sim}}$ simulated with $\chi_\gamma$ will have an
$F$ matrix of the form
\begin{equation}
F_{\text{sim}}=\left(
\begin{array}{cccc}
1 & 0 & 0 & 0\\
0 & \sqrt{1-\gamma}S_{11} & 0 & 0\\
0 & 0 & -\sqrt{1-\gamma}S_{22} & 0 \\
\gamma S_{30} & 0 & 0 & (1- \gamma) S_{33}
\end{array}
\right).
\end{equation}
Before we continue this proof, it is important to state that if two channels have equal 
$F$ matrices then they are
equivalent. This is because they both enact the same action on an
arbitrary qubit state (and thus act identically on the entire input space).
Thus we aim to prove the theorem by
equating the above $F$ matrix of a simulated channel with that of our
decomposition defined in Eq.~(\ref{decomposition}). From the $F$ matrices
of $\mathcal{E}_{\eta }$ and $\mathcal{E}_{P}$, we derive that $\mathcal{E}%
_{+}:=\mathcal{E}_{\eta }\circ \mathcal{E}_{P}$ and
$\mathcal{E}_{-}:=\sigma _{x}\circ \mathcal{E}_{\eta }\circ
\mathcal{E}_{P}$ have $F$ matrices:
\begin{align}
F_{+}& =\left(
\begin{array}{cccc}
\phantom{-}1 & 0 & \phantom{-}0 & \phantom{-}0 \\
\phantom{-}0 & \sqrt{1-\eta }q_{1} & \phantom{-}0 & \phantom{-}0 \\
\phantom{-}0 & 0 & -\sqrt{1-\eta }q_{2} & \phantom{-}0 \\
\phantom{-}\eta  & 0 & \phantom{-}0 & \phantom{-}(1-\eta )q_{3}%
\end{array}%
\right) , \\
F_{-}& =\left(
\begin{array}{cccc}
\phantom{-}1 & 0 & \phantom{-}0 & \phantom{-}0 \\
\phantom{-}0 & \sqrt{1-\eta }q_{1} & \phantom{-}0 & \phantom{-}0 \\
\phantom{-}0 & 0 & \phantom{-}\sqrt{1-\eta }q_{2} & \phantom{-}0 \\
-\eta  & 0 & \phantom{-}0 & -(1-\eta )q_{3}%
\end{array}%
\right) ,
\end{align}%
where $(q_{1},q_{2},q_{3})\in \mathcal{T}$. As $\gamma\geq0$, but $S_{30}\in[-1,1]$ (a fact easy to see from its structure), this means we can reduce theorem \ref{mainresult} to this equivalent proposition.
\begin{proposition}
For any channel $\mathcal{E}_\mathrm{sim}$ simulated by resource state $\chi_\gamma$ using noisy teleportation, we may write the following equality:
\begin{equation}\label{equality}
F_{\text{sim}}
=\begin{cases}
F_+ &\text{if }S_{30}\geq 0 ,\\
F_- &\text{if }S_{30}\leq 0.
\end{cases}
\end{equation}
with $F_+/F_-$ generated by valid Pauli and amplitude damping channels. One can see that this equality defines a unique value for $u$ and $\eta$, and a unique vector $\mathbf{q}$ - except when $\eta=1$, in which case $\mathbf{q}$ is arbitrary, and $\eta=0$, where all decompositions have degeneracy $u=0, \mathbf{q}=(q_1,q_2,q_3)$ and $u=1, \mathbf{q}=(q_1,-q_2,-q_3)$. 
\end{proposition}
We thus split this proof into two cases.\\

\textbf{Case 1:} $S_{30}\geq 0$.\\
First we equate the $f_{30}$ components: $\eta=\gamma S_{30}$. Since both $\gamma,S_{30}\in[0,1]$ this is a valid $\eta$ value. We now equate the diagonal components of the two sides to obtain
\begin{align}
\left(\sqrt{1-\eta}q_1,-\sqrt{1-\eta}q_2,(1-\eta)q_3\right)&=\left(\sqrt{1-\gamma}S_{11},-\sqrt{1-\gamma}S_{22},(1-\gamma)S_{33}\right)\label{eqline1}\\
\Rightarrow \left(q_1,q_2,q_3\right)&=\left(\sqrt{\frac{1-\gamma}{1-\eta}}S_{11},\sqrt{\frac{1-\gamma}{1-\eta}}S_{22},\frac{1-\gamma}{1-\eta}S_{33}\right)\nonumber\\
\Rightarrow \left(q_1,q_2,q_3\right)&=\left(\sqrt{\frac{1-\gamma}{1-\gamma S_{30}}}S_{11},\sqrt{\frac{1-\gamma}{1-\gamma S_{30}}}S_{22},\frac{1-\gamma}{1-\gamma S_{30}}S_{33}\right).
\end{align}
We can always do this rearrangement unless $\eta=1$ - this forces both $\gamma,S_{30}=1$ also, which trivially satisfy Eq.~(\ref{eqline1}). For the rest of the proof, we shall assume $\eta\in[0,1)$. It remains to prove that this vector describes a valid Pauli channel, and from  corollary \ref{BlochPauli} this is equivalent to proving that the vector $\mathbf{q}\in\mathcal{T}$.
\begin{lemma}\label{Sin}
For all classical channels $\Pi$, as defined in our noisy
teleportation protocol, we have that $(S_{11},S_{22},S_{33})$
belongs to the tetrahedron $\mathcal{T}$.
\end{lemma}
\begin{proof}
An alternative way to define $\mathcal{T}$ is by four inequalities which are satisfied by all
points within the tetrahedron, namely
\begin{align*}
x+y+z&\leq 1,\\
x-y-z&\leq 1,\\
-x+y-z&\leq 1,\\
-x-y+z&\leq 1.
\end{align*}
We have already seen these used in corollary \ref{BlochPauli}.
We may substitute into these $S_{11},S_{22}$ and $S_{33}$, obtaining
\begin{align*}
S_{11}+S_{22}+S_{33}&=1-(p_{02}+p_{13}+p_{20}+p_{31})\leq 1,\\
S_{11}-S_{22}-S_{33}&=1-(p_{03}+p_{12}+p_{21}+p_{30})\leq 1,\\
-S_{11}+S_{22}-S_{33}&=1-(p_{00}+p_{11}+p_{22}+p_{33})\leq 1,\\
-S_{11}-S_{22}+S_{33}&=1-(p_{01}+p_{10}+p_{23}+p_{32})\leq 1.
\end{align*}
From this, we can conclude that all $(S_{11},S_{22},S_{33})$ possible belong to the tetrahedron.
\end{proof}
\begin{lemma}\label{shrinking}
If a point $(x,y,z)$ belongs to the tetrahedron defined by $\mathcal{T}$, then so too does the point $(\sqrt{\alpha}x,\sqrt{\alpha}y,\alpha z)$,where $\alpha\in[0,1]$.
\end{lemma}
\begin{proof}
Since any point in the tetrahedron can be expressed as a convex combination of the four extremal points in $\mathcal{T}$, it is sufficient to show that the four points,
\begin{align}
&(\phantom{-}\sqrt{\alpha},-\sqrt{\alpha},\phantom{-}\alpha),&&(\phantom{-}\sqrt{\alpha},\phantom{-}\sqrt{\alpha},-\alpha),\label{rescaled}\\
&(-\sqrt{\alpha},-\sqrt{\alpha},-\alpha),&&(-\sqrt{\alpha},\phantom{-}\sqrt{\alpha},\phantom{-}\alpha),\nonumber
\end{align}
belong to the tetrahedron (i.e. are themselves a convex combination of the four extremal points), and thus any rescaled tetrahedron point also still remains with the full tetrahedron, since it may be written as a convex combination of the four points above.\\
Expressing an arbitrary point in $\mathcal{T}$ as
\begin{equation}
(x,y,z)=\sum_{i=0}^3p_i\mathbf{e}_i,\;\;\sum_{i=0}^3p_i=1,\;\;p_i\geq 0,
\end{equation}
then we can achieve the points in Eq.~(\ref{rescaled})
\begin{equation}
\begin{array}{ccccc}
\text{Point} & p_0 & p_1 & p_2 & p_3\\
(\sqrt{\alpha},-\sqrt{\alpha},\alpha)& \frac{(1+\sqrt{\alpha})^2}{4} & \frac{1-\alpha}{4}& \frac{1-\alpha}{4} & \frac{(1-\sqrt{\alpha})^2}{4} \\
(\sqrt{\alpha},\sqrt{\alpha},-\alpha)& \frac{1-\alpha}{4} & \frac{(1-\sqrt{\alpha})^2}{4}& \frac{(1+\sqrt{\alpha})^2}{4} & \frac{1-\alpha}{4} \\
(-\sqrt{\alpha},-\sqrt{\alpha},-\alpha)&  \frac{1-\alpha}{4} & \frac{(1+\sqrt{\alpha})^2}{4}& \frac{(1-\sqrt{\alpha})^2}{4} & \frac{1-\alpha}{4} \\
(-\sqrt{\alpha},\sqrt{\alpha},\alpha)&  \frac{(1-\sqrt{\alpha})^2}{4} & \frac{1-\alpha}{4}& \frac{1-\alpha}{4} & \frac{(1+\sqrt{\alpha})^2}{4} \\

\end{array}
\end{equation}
The normalisation condition is easy to verify:
\begin{equation}
\frac{(1+\sqrt{\alpha})^2}{4}+ \frac{1-\alpha}{4}+ \frac{1-\alpha}{4} + \frac{(1-\sqrt{\alpha})^2}{4}=1.
\end{equation}
\end{proof}
\begin{lemma}\label{sqrtrange}
Given $S_{30}\geq0$, $\left(1-\gamma\right)/\left(1-\gamma S_{30}\right)\in[0,1]$.
\end{lemma}
\begin{proof}
As $\gamma\in[0,1]$, we have $\left(1-\gamma\right)\in[0,1]$ also. Moreover, $S_{30}\in[0,1]$, and we can therefore write
\begin{align*}
1&\geq \gamma \geq \gamma S_{30}\\
\Rightarrow-1&\leq -\gamma \leq -\gamma S_{30}\\
\Rightarrow0&\leq 1-\gamma \leq 1-\gamma S_{30}\\
\Rightarrow 0&\leq \frac{1-\gamma}{1-\gamma S_{30}} \leq 1 
\end{align*}
%\Rightarrow 0&\leq \sqrt{\frac{1-\gamma}{1-\gamma S_{30}}} \leq 1 \\
as we took $\eta=\gamma S_{30}\in[0,1)$.
\end{proof}
Combining the results of lemmas \ref{shrinking} and \ref{sqrtrange}  by setting $\alpha=\left(1-\gamma\right)/\left(1-\gamma S_{30}\right)$, we can conclude that if $\left(S_{11},S_{22},S_{33}\right)$ is in $\mathcal{T}$, so too is $\left(q_1,q_2,q_3\right)$. Since $\left(S_{11},S_{22},S_{33}\right)$ is in $\mathcal{T}$ by lemma \ref{Sin}, we may conclude the defined $\mathcal{E}_P$ is a valid Pauli channel, verifying our decomposition.\\

\textbf{Case 2:} $S_{30}\leq 0$.\\
Equating $F_\mathrm{sim}$ and $F_-$, we now find $\eta=-\gamma S_{30}$, though due to the sign of $S_{30}$ we still find $\eta\in[0,1]$. The diagonal elements now give us that
\begin{equation}
(q_1,q_2,q_3)=\left(\frac{\sqrt{1-\gamma}}{\sqrt{1+\gamma S_{30}}}S_{11},
-\frac{\sqrt{1-\gamma}}{\sqrt{1+\gamma S_{30}}}S_{22}
-\frac{1-\gamma}{1+\gamma S_{30}}S_{33}\right).
\end{equation}
Again, this rearrangement is acceptable unless $\eta=1$, forcing $\gamma=1$ and $S_{30}=-1$ - which trivially satisfy $F_{\mathrm{sim}}=F_-$ as in case 1. Thus we assume $\eta<1$ for the rest of this proof.
\begin{lemma}\label{SinN}
For all classical channels $\Pi$, as defined in our noisy
teleportation protocol, we have that $(S_{11},-S_{22},-S_{33})$
belongs to the tetrahedron $\mathcal{T}$.
\end{lemma}
\begin{proof}
Consider the four equations in corollary \ref{BlochPauli}, with the vector $(S_{11},-S_{22},-S_{33})$:\\
\begin{align*}
S_{11}+(-S_{22})+(-S_{33})&=\phantom{-}S_{11}-S_{22}-S_{33}\leq 1,\\
S_{11}-(-S_{22})-(-S_{33})&=\phantom{-}S_{11}+S_{22}+S_{33}\leq 1,\\
-S_{11}+(-S_{22})-(-S_{33})&=-S_{11}-S_{22}+S_{33}\leq 1,\\
-S_{11}-(-S_{22})+(-S_{33})&=-S_{11}+S_{22}-S_{33}\leq 1.
\end{align*}
where the inequalities come from the proof of lemma \ref{Sin}.
\end{proof}
\begin{lemma}\label{sqrtrangeN}
Given $S_{30}\leq0$, $\left(1-\gamma\right)/\left(1+\gamma S_{30}\right)\in[0,1]$.
\end{lemma}
\begin{proof}
As $-S_{30}\in[0,1]$:
\begin{align*}
1&\geq \gamma \geq -\gamma S_{30}\\
\Rightarrow-1&\leq -\gamma \leq \gamma S_{30}\\
\Rightarrow0&\leq 1-\gamma \leq 1+\gamma S_{30}\\
\Rightarrow 0&\leq \frac{1-\gamma}{1+\gamma S_{30}} \leq 1 
\end{align*}
\end{proof}
We may then repeat the same logic used in the proof of case 1, applying lemma \ref{shrinking} with $\alpha=\left(1-\gamma\right)\left(1+\gamma S_{30}\right)$ and $\left(S_{11},-S_{22},-S_{33}\right)\in\mathcal{T}$.\\
These two cases are sufficient to cover all $\chi_\gamma$-simulated channels, and thus our result is proved.
\end{proof}

Whilst we have shown that all the channels $\chi_\gamma$ simulable by noisy teleportation are necessarily of the form (\ref{decomposition}), we must also consider the converse - which channels of the form (\ref{decomposition}) are $\chi_\gamma$-simulable? We answer that question in the following theorem.\\

\begin{theorem}
\label{secondmain} 
Using noisy teleportation over the amplitude
damping Choi matrix $\chi_{\gamma}$ with $\gamma\in (0,1)$, it is only possible to
simulate channels of the form in
Eq.~(\ref{decomposition}) where $\eta\in\lbrack0,\gamma]$ and $\mathbf{q}=(q_{1}%
,q_{2},q_{3})$ belonging to the convex space bounded by the points
\begin{align}
&\left(\phantom{-}\sqrt{\frac{1-\gamma}{1-\eta}},\phantom{\left(  1-\frac{\eta}{\gamma}\right)} 
\pm\sqrt{\frac{1-\gamma}{1-\eta}}\left(
1-\frac{\eta}{\gamma}\right),
\mp\frac{1-\gamma}{1-\eta}\left(  1-\frac{\eta}{\gamma}\right)  \right),\nonumber\\
&\left(\pm\sqrt{\frac{1-\gamma}{1-\eta}}\left(  1-\frac{\eta}{\gamma}\right)  , 
\phantom{-}\sqrt{\frac{1-\gamma}{1-\eta}},\phantom{\left(  1-\frac{\eta}{\gamma}\right)}
\mp\frac{1-\gamma}{1-\eta}\left(  1-\frac{\eta}{\gamma}\right)  \right),\nonumber\\
&\left( -\sqrt{\frac{1-\gamma}{1-\eta}},\phantom{\left(  1-\frac{\eta}{\gamma}\right)}
\pm\sqrt{\frac{1-\gamma}{1-\eta}}\left(  1-\frac{\eta}{\gamma}\right)     
,
\pm\frac{1-\gamma}{1-\eta}\left(  1-\frac{\eta}{\gamma}\right)  \right),\nonumber\\
&\left(  \pm\sqrt{\frac{1-\gamma}{1-\eta}}\left(  1-\frac{\eta}{\gamma}\right)    , 
  -\sqrt{\frac{1-\gamma}{1-\eta}} ,\phantom{\left(  1-\frac{\eta}{\gamma}\right)}  
\pm\frac{1-\gamma}{1-\eta}\left(  1-\frac{\eta}{\gamma}\right)  \right).\label{convexspace}
\end{align}
These points correspond to the extremal points of the tetrahedron
$\mathcal{T}$ truncated by the two planes $z=\pm\left(  1-\eta/\gamma\right)  $, and shrunk by the transformation
\begin{equation}
(x,y,z)\rightarrow\left(
\sqrt{\frac{1-\gamma}{1-\eta}}x,\sqrt{\frac
{1-\gamma}{1-\eta}}y,\frac{1-\gamma}{1-\eta}z\right).
\end{equation}
For $\gamma=0$, $\eta=0$ and $\mathcal{E}_P$ is any valid Pauli channel $\mathbf{q}\in \mathcal{T}$, whilst for $\gamma=1$, $\eta\in[0,1]$ and $\mathcal{E}_P$ is in the convex space (\ref{convexspace}) unless $\eta=1$ also - in which case $\mathcal{E}_P$ is irrelevant and we set $\mathbf{q}=\mathbf{0}$.
\end{theorem}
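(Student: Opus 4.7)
The strategy is to complete the converse to Theorem~\ref{mainresult}. That theorem shows every $\chi_{\gamma}$-simulated channel admits a (generically unique) decomposition $\sigma_{x}^{u}\circ\mathcal{E}_{\eta}\circ\mathcal{E}_{P}$ parameterised by the four linear functionals $(S_{11},S_{22},S_{33},S_{30})$ of the noisy classical channel $\Pi$: the sign of $S_{30}$ fixes $u$, its magnitude fixes $\eta=\gamma|S_{30}|$, and $\mathbf{q}$ is obtained from $(S_{11},S_{22},S_{33})$ via the shrinking map
\[
(x,y,z)\ \mapsto\ \left(\sqrt{\tfrac{1-\gamma}{1-\eta}}\,x,\ \sqrt{\tfrac{1-\gamma}{1-\eta}}\,y,\ \tfrac{1-\gamma}{1-\eta}\,z\right).
\]
My task is therefore to pin down the joint range of $(S_{11},S_{22},S_{33},S_{30})$ as $\Pi$ varies over all $4\times 4$ row-stochastic matrices, then push the answer through this map.

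First I would collect the necessary constraints. The bound $|S_{30}|\leq 1$ is immediate from the row-stochasticity of $\Pi$, yielding $\eta\in[0,\gamma]$. Lemma~\ref{Sin} already supplies $(S_{11},S_{22},S_{33})\in\mathcal{T}$. The new key inequality is $|S_{33}|+|S_{30}|\leq 1$: inspecting the sign structures shown in the $\mathbf{S}$ matrix reveals that both quantities involve $\Pi$ only through the row-characters $\alpha_{k}:=p_{0|k}+p_{3|k}-p_{1|k}-p_{2|k}\in[-1,1]$. Setting $u_{\ast}=\alpha_{0}+\alpha_{3}$ and $v_{\ast}=\alpha_{1}+\alpha_{2}$ (both in $[-2,2]$), one has $S_{30}=(u_{\ast}+v_{\ast})/4$ and $S_{33}=(u_{\ast}-v_{\ast})/4$, so $|S_{30}|+|S_{33}|=\tfrac12\max(|u_{\ast}|,|v_{\ast}|)\leq 1$ by the identity $|a+b|+|a-b|=2\max(|a|,|b|)$. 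Combined with Lemma~\ref{Sin}, at fixed $|S_{30}|=\eta/\gamma$ the admissible pre-shrunk triple $(S_{11},S_{22},S_{33})$ lies in the intersection of $\mathcal{T}$ with the slab $|z|\leq 1-\eta/\gamma$. A short geometric computation shows that this truncated tetrahedron has exactly eight vertices, obtained by intersecting each of the four edges of $\mathcal{T}$ joining vertices of opposite $z$-sign (the other two edges have constant $z=\pm 1$ and are entirely cut off) with the two truncation planes $z=\pm(1-\eta/\gamma)$; applying the shrinking map to these eight points recovers precisely the eight extremal points listed in the theorem.

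Achievability is the converse inclusion and will be the main obstacle. Since $\Pi\mapsto(S_{11},S_{22},S_{33},S_{30})$ is a linear map from the convex set of row-stochastic matrices, its image is convex and so is its slice at any fixed $|S_{30}|=\eta/\gamma$; it therefore suffices to exhibit an explicit $\Pi$ realising each of the eight extremal vertices. At such a vertex the constraint $|S_{33}|+|S_{30}|=1$ forces two rows of $\Pi$ to be supported on $\{0,3\}$ (or $\{1,2\}$) and the other two to be deterministic, with the remaining one-parameter freedom $p_{0|k}\in[0,1]$ in each of the supported rows precisely encoding $\eta\in[0,\gamma]$; one verifies by direct computation that this parametric family sweeps out the vertex as $\eta$ varies, and convexity then fills in the interior of the truncated tetrahedron.

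Finally, the edge cases. For $\gamma=0$ the resource $\chi_{0}$ is maximally entangled and the protocol reduces to the standard teleportation simulation of Pauli channels (Theorem~\ref{Bobothe}), so necessarily $\eta=0$ and $\mathbf{q}\in\mathcal{T}$. For $\gamma=1$ the resource is separable, $\eta$ may attain the value $1$, and when $\eta=1$ the map $\mathcal{E}_{1}$ annihilates the input so $\mathcal{E}_{P}$ is irrelevant and one may set $\mathbf{q}=\mathbf{0}$. The two-fold degeneracy at $\eta=0$ is explained by a direct computation on Bloch vectors: $\sigma_{x}\circ\mathcal{E}_{P}$ is itself a Pauli channel whose defining vector is $(q_{1},-q_{2},-q_{3})$, and since $\mathcal{T}$ is invariant under $(x,y,z)\mapsto(x,-y,-z)$ this vector lies in $\mathcal{T}$ whenever $\mathbf{q}$ does, so the decompositions $(u=0,\mathbf{q})$ and $(u=1,(q_{1},-q_{2},-q_{3}))$ yield the same overall channel.
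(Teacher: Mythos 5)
Your proposal is correct in its overall architecture but differs substantially from the paper's route, and it contains one concrete error in the achievability step.

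The paper's proof obtains the eight extremal points by explicitly enumerating the vertices of the constrained polytopes $\mathcal{P}_\eta^{\pm}$ with the software PANDA, pushing them through the linear maps $\mathcal{S}_\pm$, and then testing each image point numerically for extremality (see the cardinalities reported in Table~\ref{extremaltable}). You instead derive the truncation constraint \emph{analytically}: writing $S_{30}=\tfrac14(u_\ast+v_\ast)$ and $S_{33}=\tfrac14(u_\ast-v_\ast)$ in terms of the row-characters $\alpha_k=p_{0|k}+p_{3|k}-p_{1|k}-p_{2|k}$, and invoking $|a+b|+|a-b|=2\max(|a|,|b|)$, you get $|S_{30}|+|S_{33}|\le 1$ directly. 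Combined with Lemma~\ref{Sin} and the convexity of each slice $\{S_{30}=\pm\eta/\gamma\}$, this reproduces the truncated tetrahedron as an outer bound with no computer assistance. That is a genuinely cleaner argument and, if the achievability leg were filled in, would upgrade the paper's computational step to a fully analytic proof.

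The gap is in your achievability sketch. You assert that at a vertex the rows supported on $\{0,3\}$ (or $\{1,2\}$) carry the one-parameter freedom $p_{0|k}$ that encodes $\eta$, while the remaining two rows are deterministic. This is backwards. If, say, $v_\ast=\alpha_1+\alpha_2=2$, then rows $1,2$ are supported on $\{0,3\}$ with $\alpha_1=\alpha_2=1$ \emph{regardless} of the split of mass between positions $0$ and $3$; they contribute nothing to $S_{30}$, so any freedom in them does not encode $\eta$. The $\eta$-dependence enters through $u_\ast=\alpha_0+\alpha_3=4\eta/\gamma-2$, i.e.\ through rows $0$ and $3$. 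Moreover, once one simultaneously demands the extremal value of $S_{11}$ (say $S_{11}=1$, which forces rows $0,1$ onto $\{0,1\}$ and rows $2,3$ onto $\{2,3\}$), the intersection with the $\{0,3\}$-support condition pins rows $1,2$ down to the singletons $\{0\}$ and $\{3\}$ --- they become deterministic --- while rows $0,3$ retain a one-parameter family (e.g.\ $p_{0|0}=\eta/\gamma$, $p_{1|0}=1-\eta/\gamma$, $p_{2|3}=1-\eta/\gamma$, $p_{3|3}=\eta/\gamma$). With that correction, explicit $\Pi$ realising each of the eight vertices exist, and the rest of your argument (convexity of the slice, then applying the shrink map) goes through; the $\gamma\in\{0,1\}$ edge cases and the $\eta=0$ degeneracy via $\mathcal{T}$-invariance under $(x,y,z)\mapsto(x,-y,-z)$ are handled correctly.
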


Before we prove this result, we take from it the following definition:
\begin{definition}
We define the Pauli-damping channels as the class of qubit
channels that are simulable by teleporting over the amplitude damping
Choi matrix $\chi_{\gamma}$ and using a classical channel $\Pi$
for the CCs. They have a unique decomposition of the form in
theorem~\ref{mainresult}, and must satisfy the criteria in
theorem~\ref{secondmain}.
\end{definition}
\begin{proof}
First we deal with the two extremal cases, $\gamma=0,1$. When $\gamma=0$, we must have that $\eta=\abs{\gamma S_{30}}=0$ also. This means $S_{30}$ is not fixed and $(S_{11},S_{22},S_{33})=\mathbf{q}$. By setting:
\begin{align*}
p_{0|0}=p_{1|1}=p_{2|2}=p_{3|3}=1 &\rightarrow \mathbf{q}=(1,-1,1),\\
p_{1|0}=p_{0|1}=p_{3|2}=p_{2|3}=1 &\rightarrow \mathbf{q}=(1,1,-1),\\
p_{2|0}=p_{3|1}=p_{0|2}=p_{1|3}=1 &\rightarrow \mathbf{q}=(-1,-1,-1),\\
p_{3|0}=p_{2|1}=p_{1|2}=p_{0|3}=1 &\rightarrow \mathbf{q}=(-1,1,1),
\end{align*}
we can produce the extremal points of $\mathcal{T}$ - consequently we can set $\mathbf{q}$ to any point in $\mathcal{T}$, by taking convex combinations of these distributions.\\

For $\gamma=1$, the proof below for $\gamma\in(0,1)$ holds, unless $\eta=1$ also. However for this case the channel $\mathcal{E}_\eta$ sends every state to $\ket{0}$ - making our Pauli channel in the decomposition irrelevant. We can therefore choose it to be the channel defined by $\mathbf{q}=\mathbf{0}$.\\

Fixing $\gamma\in(0,1)$, we know the sum $S_{30}$ may take any value in $[-1,1]$, and that $\eta=\abs{\gamma S_{30}}$ - so we may conclude we are free to choose $\eta\in[0,\gamma]$. For the Pauli channel $\mathcal{E}_{P}$ in the decomposition, we know its defining vector $\mathbf{q}$ must satisfy (using the same case 1: $S_{30}\geq 0$, case 2: $S_{30}\leq 0$ split):\\
\textbf{Case 1:}
\begin{align}
\mathbf{q}  &=\left(
\frac{\sqrt{1-\gamma}}{\sqrt{1-\gamma S_{30}}}S_{11},
\phantom{-}\frac{\sqrt{1-\gamma}}{\sqrt{1-\gamma S_{30}}}S_{22},
\phantom{-}\frac{1-\gamma}{1-\gamma S_{30}}S_{33}\right) \\ &=\left(
\frac{\sqrt{1-\gamma}}{\sqrt{1-\abs{\gamma S_{30}}}}S_{11},
\phantom{-}\frac{\sqrt{1-\gamma}}{\sqrt{1-\abs{\gamma S_{30}}}}S_{22},
\phantom{-}\frac{1-\gamma}{1-\abs{\gamma S_{30}}}S_{33}
\right).
\end{align}
\textbf{Case 2:}
\begin{align}
\mathbf{q} &=\left(
\frac{\sqrt{1-\gamma}}{\sqrt{1+\gamma S_{30}}}S_{11},
-\frac{\sqrt{1-\gamma}}{\sqrt{1+\gamma S_{30}}}S_{22},
-\frac{1-\gamma}{1+\gamma S_{30}}S_{33}\right)  \\
&=\left(
\frac{\sqrt{1-\gamma}}{\sqrt{1-\abs{\gamma S_{30}}}}S_{11},
-\frac{\sqrt{1-\gamma}}{\sqrt{1-\abs{\gamma S_{30}}}}S_{22},
-\frac{1-\gamma}{1-\abs{\gamma S_{30}}}S_{33}
\right).
\end{align}
We have already shown that the two vectors $(S_{11},S_{22},S_{33})$, $(S_{11},-S_{22},-S_{33})$ lie within $\mathcal{T}$, so we know with certainty that allowable Pauli channels will certainly lie within the shrunk tetrahedron with extremal points:
\begin{align}
&  \left(  \phantom{-}\frac{\sqrt{1-\gamma}}{\sqrt{1-\eta}},-\frac
{\sqrt{1-\gamma}}{\sqrt{1-\eta}},\phantom{-}\frac{1-\gamma}{1-\eta}\right)
\nonumber\\
&  \left(
\phantom{-}\frac{\sqrt{1-\gamma}}{\sqrt{1-\eta}},\phantom{-}\frac
{\sqrt{1-\gamma}}{\sqrt{1-\eta}},-\frac{1-\gamma}{1-\eta}\right)  \nonumber\\
&  \left(  -\frac{\sqrt{1-\gamma}}{\sqrt{1-\eta}},-\frac{\sqrt{1-\gamma}%
}{\sqrt{1-\eta}},-\frac{1-\gamma}{1-\eta}\right)\nonumber\\  %
&  \left(  -\frac{\sqrt{1-\gamma}}{\sqrt{1-\eta}},\phantom{-}\frac
{\sqrt{1-\gamma}}{\sqrt{1-\eta}},\phantom{-}\frac{1-\gamma}{1-\eta}\right)
\label{positiveshrink}
\end{align}
where we have used that $\eta=\abs{\gamma S_{30}}$. However, we have now fixed our $\eta$ value, forcing $S_{30}=\pm \eta/\gamma$. As the summations $S_{11},S_{22},S_{33}$ are over the same parameters as $S_{30}$ - the conditional probabilities $p_{l|k}$ - this imposes some structure on them. \\

To analyse the possible values for $S_{11},S_{22},S_{33}$, we use a technique known as \emph{vertex enumeration} (described in detail in section \ref{ch5:second}). This obtains the extremal points of a set $X$ given that $\forall \mathbf{x} \in X$ satisfy a set of inequality constraints $\mathbf{c}_i^T\mathbf{x}\leq b_i,\;\forall i$. In this case we are interested in the vectors $\bm{\pi}=(p_{0|0}\ldots p_{3|3})$ in the sets
\begin{equation}
\mathcal{P}_{\eta}^{\pm}=\left\{\bm{\pi}\mmid p_{l|k}\geq 0,\; \sum_{k=0}^3 p_{l|k}=1,\;  S_{30}=\pm \frac{\eta}{\gamma}\right\}.
\end{equation}
We denote the sets of extremal points $\left\{Q_m^{\pm}\right\}_m$. 
Now we may
consider $(S_{11},S_{22},S_{33})$, $(S_{11},-S_{22},-S_{33})$ as
two linear functions, $\mathcal{S}_{+}$ and $\mathcal{S}_{-}$,
which map
\[
\mathcal{S}_{\pm}:\mathcal{P}_{\eta}^{\pm}\rightarrow\mathcal{T}.
\]
Since $\left\{Q_m^{\pm}\right\}$ are extremal points of $\mathcal{P}_{\eta}^{\pm}$ we may express any $\bm{\pi}\in \mathcal{P}_{\eta}^{\pm}$ as $\bm{\pi}=\sum_m \pi_m Q_m^{\pm}$, $\pi_m\geq 0$, $\sum_m \pi_m =1$. This means the 
transformations $\mathcal{S}_{\pm}$ act as the following
\begin{equation}
\mathcal{S}_{\pm}(\bm{\pi})=\mathcal{S}_{\pm}\left(
\sum_{m}\pi_{m}Q_{m}^{\pm }\right)
=\sum_{m}\pi_{m}\mathcal{S}_{\pm}\left(  Q_{m}^{\pm}\right)
\end{equation}
due to the linearity of $S_{\pm}$. From this, we may conclude that the extremal points of the allowable $(S_{11},S_{22},S_{33}),\,(S_{11},-S_{22},-S_{33})$ are simply a subset  $\mathcal{S}\subseteq\left\{\mathcal{S}_{\pm}\left(  Q_{m}^{\pm}\right)\right\}_m$.\\

Using the vertex enumeration program PANDA \cite{LSR2015} to find the extremal points $Q_m^{\pm}$, the set $\left\{\mathcal{S}_{\pm}\left(  Q_{m}^{\pm}\right)\right\}$  was then calculated; points $\mathbf{s}\in \left\{\mathcal{S}_{\pm}\left(  Q_{m}^{\pm}\right)\right\}$ were tested for extremality by testing whether they could be written as a convex combination of $\left\{\mathcal{S}_{\pm}\left(  Q_{m}^{\pm}\right)\right\}\backslash\left\{\mathbf{s}\right\}$ - if it cannot, it must be extremal. Thus $\mathcal{S}$ was obtained. A summary of the process is presented in  table~\ref{extremaltable}.\\

\begin{table}
\begin{center}
\begin{tabular}{c|c|c|c}
$S_{30}$ & $\#\left(\left\{Q_m\right\}\right)$ & $\#\left(\left\{\mathcal{S}_{\pm}\left(  Q_{m}^{\pm}\right)\right\}\right)$ & $\#\left(\mathcal{S}\right)$.\\
\hline
$\abs{S_{30}}=0$ & 96 & 19 & 4\\
$0<\abs{S_{30}}<\frac{1}{2}$ & 384 & 80 & 8\\
$\abs{S_{30}}=\frac{1}{2}$  & 64 & 16 & 8\\
$\frac{1}{2}<\abs{S_{30}}<1$ & 128 & 48 & 8\\
$\abs{S_{30}}=1$ & 16 & 9 & 4
\end{tabular}
\caption[Cardinality of $S_{30}$-constrained sets]{The cardinality of sets constrained by the condition $S_{30}=k$.}\label{extremaltable}
\end{center}
\end{table}
The 8 points referenced in table~\ref{extremaltable} (which reduce to four points in the special $S_{30}=0,\pm 1$ cases) are the points
\begin{align}
\bigg( &  \phantom{-}1 & , &  \pm\left(
1-\frac{\eta}{\gamma}\right)   &  &
,\mp\left(  1-\frac{\eta}{\gamma}\right)  \bigg),
\bigg( &  \pm\left(  1-\frac{\eta}{\gamma}\right)   & , &
\phantom{-}1 &  &
,\mp\left(  1-\frac{\eta}{\gamma}\right)  \bigg),\nonumber\\
\bigg( &  -1 & , &  \pm\left(  1-\frac{\eta}{\gamma}\right)   &  &
,\pm\left(  1-\frac{\eta}{\gamma}\right)  \bigg),
\bigg( &  \pm\left(  1-\frac{\eta}{\gamma}\right)   & , &  -1 &  &
,\pm\left(  1-\frac{\eta}{\gamma}\right)  \bigg),
\end{align}
regardless of the case ($S_{30}$ positive or negative). These
points correspond to $\mathcal{T}$, truncated by two planes at
$z=\pm\left( 1-\eta/\gamma\right)$.
Combining this result with the shrinking factors of $\left(\sqrt{\left(1-\gamma\right)/\left(1-\eta\right)},\sqrt{\left(1-\gamma\right)/\left(1-\eta\right)},\left(1-\gamma\right)/\left(1-\eta\right)\right)$ gets our final result.\end{proof}
\begin{corollary}\label{notChoi}
Any ``non-trivial" amplitude damping channel $\mathcal{E}_\gamma$ (i.e. $\gamma\neq 0,1$) cannot be $\chi_\gamma$-simulated using the noisy teleportation protocol.
\end{corollary}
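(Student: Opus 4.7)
The plan is to exploit the uniqueness of the decomposition guaranteed by Theorem \ref{mainresult}, combined with the explicit characterisation of the admissible Pauli component in Theorem \ref{secondmain}, to show that $\mathcal{E}_\gamma$ cannot live in the simulable set when $\gamma\in(0,1)$.

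First I would match the F matrix of $\mathcal{E}_\gamma$, which has $f_{30}=\gamma>0$ together with diagonal entries $(\sqrt{1-\gamma},\sqrt{1-\gamma},1-\gamma)$, against the two possible forms $F_+$ and $F_-$ that arose in the proof of Theorem \ref{mainresult}. Since $F_-$ has $f_{30}=-\eta\leq 0$, only $F_+$ is compatible, which forces $u=0$ and $\eta=\gamma$. Equating the remaining diagonal entries of $F_+$ then pins $\mathbf{q}=(1,-1,1)$, i.e.\ the identity Pauli channel. Because $\gamma\in(0,1)$ places $\eta$ strictly inside $(0,1)$, Theorem \ref{mainresult} guarantees that this triple $(u,\eta,\mathbf{q})$ is the \emph{only} way to write $\mathcal{E}_\gamma$ in the simulated form.

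Next I would substitute $\eta=\gamma$ into the eight extremal points (\ref{convexspace}) of Theorem \ref{secondmain}. The crucial observation is that the truncation factor $1-\eta/\gamma$ vanishes at $\eta=\gamma$, so in every vertex the two coordinates multiplied by this factor collapse to zero. The admissible region for $\mathbf{q}$ therefore degenerates to the convex hull of the four points $(\pm 1,0,0)$ and $(0,\pm 1,0)$, a flat square sitting in the plane $q_3=0$.

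Since the required $\mathbf{q}=(1,-1,1)$ has $q_3=1\neq 0$, it lies strictly outside this degenerate region, so no noisy teleportation protocol over $\chi_\gamma$ can reproduce $\mathcal{E}_\gamma$. The only subtle point is ensuring the uniqueness clause of Theorem \ref{mainresult} genuinely applies, i.e.\ that we avoid the degenerate cases $\eta\in\{0,1\}$; this is automatic from the hypothesis $\gamma\in(0,1)$. Everything else reduces to direct matching of F matrices and a one-line substitution into the vertex formulas, so this is where the clean no-go conclusion falls out.
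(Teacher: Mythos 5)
Your proposal is correct and follows essentially the same approach as the paper's own proof: force $\eta=\gamma$ and $\mathbf{q}=(1,-1,1)$ by matching $F$ matrices, then observe that the truncation factor $1-\eta/\gamma$ vanishes so every admissible $\mathbf{q}$ must have $q_3=0$, contradicting $q_3=1$. You simply spell out the $F_+/F_-$ matching and the degeneration of the vertex set in more detail than the paper does.
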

\begin{proof}
In order for this to occur, we require $\eta=\gamma$, and $\mathcal{E}_P=\mathbb{I}$, which have seen already equates to $\mathbf{q}=(1,-1,1)$. However, $\eta=\gamma$ means every extremal point $\mathbf{s}\in S$ has $z$ coefficient $\pm\left(1-\gamma/\gamma\right)=0$, and thus every allowable $\mathbf{q}$ must have $z$ component 0 also.\end{proof}

Special consideration should also be given to the two extremal cases, $\gamma=0,1$. For $\gamma=0$, we have $\mathcal{E}_\gamma = \mathbb{I}$, and $\chi_{\gamma}=\ket{\Phi^+}\bra{\Phi^+}$. We can Choi-simulate $\mathcal{E}_\gamma$ simply by following the standard teleportation protocol. When $\gamma=1$, $\mathcal{E}_\gamma$ maps every state to $\ket{0}\bra{0}$, and so the Choi matrix is $\mathrm{I}/2\otimes \ket{0}\bra{0}$. Therefore Bob may simply apply the identity matrix to simulate the channel\footnote{Although this channel clearly has 0 capacity.}, regardless of Alice's measurement result. For this resource state, only convex combinations of $\ket{0}\bra{0}$ and $\ket{1}\bra{1}$ are possible from the application of the corrective Pauli unitaries.  This follows from our decomposition, as the constraint on $\mathbf{q}$ forces $\mathcal{E}_P=\mathcal{D}_1$ - the depolarising channel sending all states to $\mathrm{I}/2$, with $\mathcal{E}_\eta$ and $\sigma_x^u$ then determining the exact convex combination.

\begin{figure}
\begin{center}
\includegraphics[width=\columnwidth]{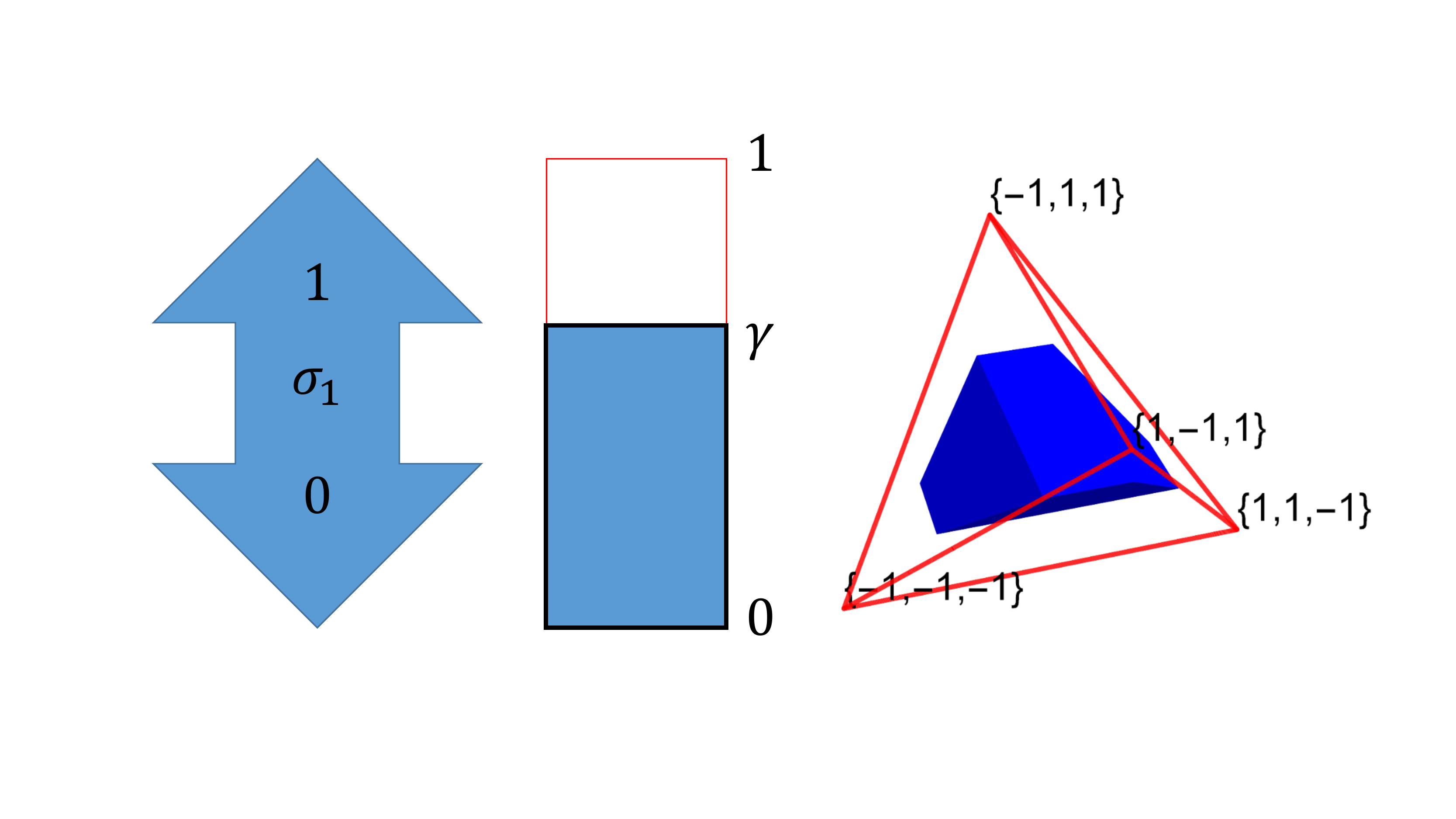}
\caption[The decomposition of Pauli-damping channels]{A representation of the three components of the Pauli-damping channels.}
\end{center}
\end{figure}

\subsection{Distinguishability of Pauli-Damping Channels}
\label{ch2:fourth:a}
%Now we have an explicit example of a simulable non-Pauli channel, a natural question arises - how distinguishable are they from Pauli channels? 
We have shown how we can simulate non-Pauli channels using noisy teleportation. In this section we quantify the distance between the Pauli-damping channels and the closest Pauli channel, known as the \emph{distinguishability}.

\subsubsection{Distance in Trace Norm}

\begin{definition}
The trace norm distance between two quantum channels $\mathcal{E}_1,\mathcal{E}_2$ is defined as
\begin{equation}
\norm{\mathcal{E}_1-\mathcal{E}_2}_1:=\sup_{\rho}\norm{\mathcal{E}_1(\rho)-\mathcal{E}_2(\rho)}_1,
\end{equation}
where $\norm{\sigma}_1=\mathrm{Tr}\left[\sqrt{\sigma\sigma^{\dagger}}\right]$.
\end{definition}
 For Hermitian $\sigma$, this is equivalent to the sum of the absolute values of the eigenvalues of $\sigma$. The trace norm between two channels also has the following operational meaning.
\begin{lemma}[\cite{H1976}]
Given a single copy of a channel $\mathcal{E}:\mathcal{H}_A\rightarrow\mathcal{H}_B $ chosen uniformly at random from $\left\{\mathcal{E}_1,\mathcal{E}_2\right\}$, the minimal probability of incorrectly identifying the channel using an input state $\rho\in\mathcal{H}_A$ is given by:
\begin{equation}
p_{\mathrm{min}}=\frac{1}{2}-\frac{\norm{\mathcal{E}_1-\mathcal{E}_2}_1}{4}.
\end{equation}
\end{lemma}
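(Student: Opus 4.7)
The plan is to reduce this channel-discrimination statement to the more familiar state-discrimination problem (Helstrom's bound) by first using the input state to convert the two channels into two output states, then optimising the measurement over POVMs, and finally optimising the input.

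First I would fix an arbitrary input $\rho\in\mathcal{H}_A$ and reduce the problem to distinguishing the two output states $\mathcal{E}_1(\rho)$ and $\mathcal{E}_2(\rho)$ with equal prior $1/2$. Any discrimination strategy is then specified by a two-element POVM $\{E_0,E_1\}$ (guess ``$i$'' on outcome $i$), so the error probability is
\begin{equation}
p_{\text{err}}(\rho,\{E_i\})=\tfrac{1}{2}\mathrm{Tr}[E_1\mathcal{E}_1(\rho)]+\tfrac{1}{2}\mathrm{Tr}[E_0\mathcal{E}_2(\rho)].
\end{equation}
Using $E_0=\mathrm{I}-E_1$ this rearranges to $\tfrac{1}{2}-\tfrac{1}{2}\mathrm{Tr}[E_1(\mathcal{E}_1(\rho)-\mathcal{E}_2(\rho))]$.

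Second, I would establish the Helstrom bound for this reduced problem. Let $\Delta:=\mathcal{E}_1(\rho)-\mathcal{E}_2(\rho)$ and take its spectral decomposition $\Delta=\Delta_+-\Delta_-$ into orthogonal positive and negative parts. Then $\mathrm{Tr}[E_1\Delta]\leq\mathrm{Tr}[\Delta_+]$ for any $0\leq E_1\leq\mathrm{I}$, with equality when $E_1$ is the projector onto the support of $\Delta_+$. Since $\mathrm{Tr}[\Delta]=0$ (both output states are normalised), we have $\mathrm{Tr}[\Delta_+]=\mathrm{Tr}[\Delta_-]=\tfrac12\|\Delta\|_1$. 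Hence the minimum over measurements is
\begin{equation}
\min_{\{E_i\}}p_{\text{err}}(\rho,\{E_i\})=\frac{1}{2}-\frac{\|\mathcal{E}_1(\rho)-\mathcal{E}_2(\rho)\|_1}{4}.
\end{equation}

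Third, I would optimise over the input state $\rho$. Minimising the error probability is equivalent to maximising $\|\mathcal{E}_1(\rho)-\mathcal{E}_2(\rho)\|_1$ over $\rho\in\mathcal{H}_A$, which is precisely the definition of the channel trace norm $\|\mathcal{E}_1-\mathcal{E}_2\|_1$ given above. Substituting gives the claimed expression. The main conceptual point to handle carefully is the Helstrom inequality $\mathrm{Tr}[E_1\Delta]\leq\mathrm{Tr}[\Delta_+]$: it is the only nontrivial step and follows immediately from writing $E_1=\sum_k\lambda_k P_k$ in the $\Delta$-eigenbasis and bounding each term, so no genuine obstacle arises. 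I would also remark that the supremum in the definition of $\|\mathcal{E}_1-\mathcal{E}_2\|_1$ as quoted here is over $\rho\in\mathcal{H}_A$ rather than over states on an extended system $\mathcal{H}_A\otimes\mathcal{H}_R$ (the diamond norm), so the statement is consistent with restricting to unentangled inputs, matching the convention adopted in the preceding definition.
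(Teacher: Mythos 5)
Your proof is correct and is the standard Helstrom-bound argument: reduce to state discrimination for fixed $\rho$, rearrange the error probability via $E_0=\mathrm{I}-E_1$, bound $\mathrm{Tr}[E_1\Delta]$ by $\mathrm{Tr}[\Delta_+]=\tfrac12\norm{\Delta}_1$ using the Jordan decomposition and tracelessness of $\Delta$, and then optimise over inputs to recover the channel trace norm. The paper simply cites this result to Helstrom~\cite{H1976} without reproducing a proof, so there is no in-text argument to compare against; your derivation fills that gap in the expected way, and your closing remark about restricting to $\rho\in\mathcal{H}_A$ (trace norm rather than diamond norm) correctly matches the convention the paper uses immediately before this lemma.
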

\begin{proposition}\label{postrace}
Given a decomposition
$\mathcal{E}_{\text{sim}}=\sigma_{x}^u\circ\mathcal{E}_{\eta}\circ\mathcal{E}_{P}$
characterised by $\eta$ and $(q_{1}%
,q_{2},q_{3})$ respectively, then the trace norm between $\mathcal{E}%
_{\text{sim}}$ and the closest Pauli channel
$\mathcal{E}_{\text{cl}}$ is
simply $\eta$. Moreover, the closest Pauli channel has F matrix
\begin{equation}
(f_{11},f_{22}
,f_{33})=\begin{cases}
\left(
\sqrt{1-\eta}q_{1},-\sqrt{1-\eta}q_{2},\phantom{-}(1-\eta)q_{3}\right) & \mbox{ for } u=0,\\
\left(
\sqrt{1-\eta}q_{1},\phantom{-}\sqrt{1-\eta }q_{2},-(1-\eta)q_{3}\right)& \mbox{ for } u=1.
\end{cases}
\end{equation}
with $f_{ij}=0,\;i\neq j$.
\end{proposition}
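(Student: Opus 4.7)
The plan is to reduce the trace-norm computation to a single affine difference of Bloch vectors and then obtain optimality from a one-state test. First I would write down the $F$-matrix of $\mathcal{E}_\text{sim}$ directly from its decomposition. For $u=0$, composing $\mathcal{E}_P$ (with parameters $\mathbf{q}$) with $\mathcal{E}_\eta$ gives exactly $F_+$ as computed in the proof of theorem~\ref{mainresult}, whose nonzero off-diagonal entry is $f_{30}=\eta$; for $u=1$ applying the extra $\sigma_x$ (which acts on the Bloch sphere as $(x,y,z)\mapsto (x,-y,-z)$) changes the signs of $f_{22}$, $f_{33}$ and $f_{30}$, producing $F_-$. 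Subtracting the candidate Pauli F-matrix $F_\text{cl}$ stated in the proposition leaves a matrix whose only nonzero entry is in the $f_{30}$ slot, equal to $\pm\eta$.

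Next, I would exploit the affine structure of $F_\text{sim}-F_\text{cl}$. Since the difference acts on an arbitrary input Bloch vector $(x,y,z)$ by sending it to $(0,0,\pm\eta)$ regardless of $(x,y,z)$, we have
\begin{equation}
\mathcal{E}_\text{sim}(\rho)-\mathcal{E}_\text{cl}(\rho)=\pm\tfrac{\eta}{2}\sigma_z
\end{equation}
for \emph{every} qubit state $\rho$. This is a traceless Hermitian operator with eigenvalues $\pm\eta/2$, so its trace norm is $\eta$. Taking the supremum over $\rho$ therefore yields $\norm{\mathcal{E}_\text{sim}-\mathcal{E}_\text{cl}}_1=\eta$, establishing the upper bound.

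For optimality, observe that every Pauli channel has $F$-matrix with vanishing first column below the leading $1$ (i.e.\ $f_{i0}=0$ for $i=1,2,3$), and hence maps the maximally mixed state $\mathrm{I}/2$ to itself. Evaluating any candidate Pauli channel $\mathcal{E}_\text{cl}'$ on $\rho=\mathrm{I}/2$ then gives
\begin{equation}
\norm{\mathcal{E}_\text{sim}(\mathrm{I}/2)-\mathcal{E}_\text{cl}'(\mathrm{I}/2)}_1=\norm{\pm\tfrac{\eta}{2}\sigma_z}_1=\eta,
\end{equation}
since $\mathcal{E}_\text{sim}(\mathrm{I}/2)$ has Bloch vector $(0,0,\pm\eta)$ (the only component surviving from $F_\text{sim}$ at zero input Bloch vector is $f_{30}$). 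Thus $\norm{\mathcal{E}_\text{sim}-\mathcal{E}_\text{cl}'}_1\geq\eta$ for \emph{every} Pauli channel, so the bound is tight and the stated $\mathcal{E}_\text{cl}$ is (among) the closest.

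The only mild obstacle is recognising that one need not try to minimise over the diagonal entries of the competing Pauli channel: the maximally-mixed-state probe forces the distance to be at least $\eta$ no matter what those diagonal entries are, because $f_{30}$ alone already produces a Bloch-vector offset of norm $\eta$ at $\rho=\mathrm{I}/2$. Once this is observed, matching $(f_{11},f_{22},f_{33})$ to the diagonal of $F_\text{sim}$ is the natural choice that actually attains the bound for all other inputs.
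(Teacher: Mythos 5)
Your argument takes a genuinely cleaner route than the paper's. The paper's proof sets up a minimax over the Pauli tetrahedron $\mathcal{T}$ (variables $c_1,c_2,c_3$) and the Bloch sphere (variables $x,y,z$), writes out the squared Euclidean distance term by term, and maximizes the final term by a case split on the sign of $z$ before minimizing over $\mathbf{c}$. You instead observe that the difference $F_{\text{sim}}-F_{\text{cl}}$ of F-matrices has nonzero content only in the affine slot $f_{30}=\pm\eta$, so that $\mathcal{E}_{\text{sim}}(\rho)-\mathcal{E}_{\text{cl}}(\rho)=\pm\tfrac{\eta}{2}\sigma_z$ for \emph{every} input $\rho$, giving $\norm{\mathcal{E}_{\text{sim}}-\mathcal{E}_{\text{cl}}}_1=\eta$ immediately; and you obtain the matching lower bound for \emph{every} Pauli competitor by probing at the single state $\rho=\mathrm{I}/2$, which every Pauli channel fixes but which $\mathcal{E}_{\text{sim}}$ sends to Bloch vector $(0,0,\pm\eta)$. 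This cleanly separates upper and lower bounds and avoids the explicit worst-case optimization in the paper. It also makes transparent \emph{why} the answer is $\eta$: no Pauli channel can move $\mathrm{I}/2$, so the offset $f_{30}$ is an irreducible cost.

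However, there is one step you skip that the paper does carry out and that is needed to make the tightness claim complete: you must verify that the candidate $\mathcal{E}_{\text{cl}}$ is actually a \emph{valid} Pauli channel. In the paper's parametrization this means checking that the vector $(\sqrt{1-\eta}\,q_1,\sqrt{1-\eta}\,q_2,(1-\eta)q_3)$ lies in the tetrahedron $\mathcal{T}$ (for $u=0$), and similarly $(\sqrt{1-\eta}\,q_1,-\sqrt{1-\eta}\,q_2,-(1-\eta)q_3)\in\mathcal{T}$ for $u=1$. Without this check your probe argument only gives $\min_{\mathcal{E}'\in\text{Pauli}}\norm{\mathcal{E}_{\text{sim}}-\mathcal{E}'}_1\geq\eta$; it does not establish that the bound is attained. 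The paper closes this via lemma~\ref{shrinking}, which shows $\mathcal{T}$ is stable under the shrinking $(x,y,z)\mapsto(\sqrt{\alpha}\,x,\sqrt{\alpha}\,y,\alpha z)$ for $\alpha\in[0,1]$ (here $\alpha=1-\eta$), and for the $u=1$ case additionally uses that $\mathcal{T}$ is invariant under the Bloch action of $\sigma_x$, i.e.\ $(x,y,z)\mapsto(x,-y,-z)$ permutes the corners of $\mathcal{T}$. With that verification added, your argument is complete and is, I think, preferable.
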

\begin{proof}
For qubit states, the trace norm between two states is equivalent to the Euclidean distance between their respective Bloch vectors. We shall split the proof into two cases:\\

\textbf{Case 1:} $u=0$.\\
 When $u=0$, the action of $\mathcal{E}_{\mathrm{sim}}$ is 
\begin{equation}
\mathcal{E}_{\mathrm{sim}}:\left(x,y,z\right)\rightarrow \left(\sqrt{1-\eta}q_1 x,-\sqrt{1-\eta}q_2 y,(1-\eta)q_3 z + \eta\right)
\end{equation}
whilst an arbitrary Pauli channel\footnote{We use subscript $C$, to distinguish it from $\mathcal{E}_{\mathrm{cl}}$ and $\mathcal{E}_P$.} $\mathcal{E}_{C}$ with parameters $(c_1,c_2,c_3)\in\mathcal{T}$ has action:
\begin{equation}
\mathcal{E}_{C}:\left(x,y,z\right)\rightarrow \left(c_1 x,-c_2 y,c_3 z\right).
\end{equation}
This means we can reformulate our problem to:
\begin{align}
&  \min_{\left(  c_{1},c_{2},c_{3}\right)  \in\mathcal{T}}\;\;\max
_{x,y,z:\;x^{2}+y^{2}+z^{2}\leq 1}\nonumber\\
&  \left(  (\sqrt{1-\eta}q_{1}-c_{1})x\right)  ^{2}+\left(
-(\sqrt{1-\eta
}q_{2}-c_{2})y\right)  ^{2}
+ &  \Big(((1-\eta)q_{3}-c_{3})z+\eta\Big)^{2},
\end{align}
minimising the square of the trace norm between the two output states. Looking at the final term $\left(((1-\eta)q_{3}-c_{3})z+\eta\right)^{2}$, given that the maximum occurs for a given $\abs{z}$ value, this term can be rewritten as 
\begin{equation}
\max\left\{\Big(((1-\eta)q_3-c_3)\abs{z}+\eta\Big)^2,\Big(-((1-\eta)q_3-c_3)\abs{z}+\eta\Big)^2\right\}=\left(\abs{(1-\eta)q_3-c_3}\abs{z}+\eta\right)^2.
\end{equation}
This term is minimised at $c_3=\left(1-\eta\right)q_3$ regardless of the value of $\abs{z}$, and has value $\eta^2$.\\
For the term 
\begin{equation}
\left(  (\sqrt{1-\eta}q_{1}-c_{1})x\right)  ^{2}
\end{equation}
we can see this will be minimised when $c_1=\sqrt{1-\eta}q_{1}$ and will have value 0, regardless of $\abs{x}$. Similarly, 
\begin{equation}
\left(
-(\sqrt{1-\eta
}q_{2}-c_{2})y\right)  ^{2}
\end{equation}
is minimised when $c_2=\sqrt{1-\eta}q_{2}$. \\
Summarising this, we have found our trace norm-minimising channel provided
\begin{equation}
\mathcal{E}_C:(x,y,z)\rightarrow\left(\sqrt{1-\eta}q_{1}x,-\sqrt{1-\eta}q_{2}y,\left(1-\eta\right)q_3z\right)
\end{equation}
is a valid Pauli channel, equivalent to the condition $\left(\sqrt{1-\eta}q_{1},\sqrt{1-\eta}q_{2},\left(1-\eta\right)q_3\right)\in\mathcal{T}$. Since $(q_1,q_2,q_3)\in\mathcal{T}$ and $\left(1-\eta\right)\in[0,1]$, we may apply lemma \ref{shrinking} to conclude this is indeed the case.\\

\textbf{Case 2:} $u=1$.\\
Using a similar method, our problem can be written in this case as

\begin{align}
&  \min_{\left(  c_{1},c_{2},c_{3}\right)  \in\mathcal{T}}\;\;\max
_{x,y,z:x^{2}+y^{2}+z^{2}\leq1}\nonumber\\
&  \left(  (\sqrt{1-\eta}q_{1}-c_{1})x\right)  ^{2}+\left(
(-\sqrt{1-\eta
}q_{2}-c_{2})y\right)  ^{2}
+ &  \Big((-(1-\eta)q_{3}-c_{3})z-\eta\Big)^{2}.
\end{align}
Again, we begin by looking at the final part of the sum. For a
fixed value of
$\abs{z}$, this term will be%
\begin{align}
\max&\bigg\{   \big(\left(  -(1-\eta)q_{3}-c_{3}\right)  \abs{z}-\eta
\big)^{2},
  \big(\left(  (1-\eta)q_{3}+c_{3}\right)  \abs{z}-\eta\big)^{2}%
\bigg\}\nonumber\\
=\max&\bigg\{   \big(\left(  (1-\eta)q_{3}+c_{3}\right)  \abs{z}+\eta
\big)^{2},
  \big(-\left(  (1-\eta)q_{3}+c_{3}\right)  \abs{z}+\eta\big)^{2}%
\bigg\}\nonumber\\
= &  \big(\abs{(1-\eta)q_{3}+c_{3}}\abs{z}+\eta\big)^{2}.
\end{align}
This is clearly minimised when $c_{3}=-(1-\eta)q_{3}$. For the $x$
and $y$ terms, they are minimised for
\[
c_{1}=\sqrt{1-\eta}q_{1},~c_{2}=-\sqrt{1-\eta}q_{2}.
\]
To see that this vector defines a valid Pauli channel, one can notice that $\mathcal{T}$ is invariant under the Bloch action of $\sigma_{x}$ as it simply permutes the extremal points. Therefore we can conclude that if
$(q_{1},q_{2},q_{3})$ belongs to the tetrahedron so too does $(q_{1}%
,-q_{2},-q_{3})$. As this is exactly the relation between the terms $(c_1,c_2,c_3)$ for cases 1 and 2, we may infer that  $(c_1,c_2,c_3)$ again defines a valid Pauli channel, and the second part of our proposition is proved.
 \end{proof}
 
In general the trace norm is \emph{not} the most accurate way of measuring channel distinguishability. This is due to the detail that, by sending part of an entangled state through the channel, some channels may be distinguished with better probability than that provided by the trace norm. A better distance measure is given by the \emph{diamond norm}, defined in the following way:
\begin{definition}
The diamond norm distance between two quantum channels $\mathcal{E}_1,\mathcal{E}_2$ is defined as
\begin{equation}
\norm{\mathcal{E}_1-\mathcal{E}_2}_\diamond:=\sup_{\rho\in \kappa\otimes\mathcal{H}}\norm{\left(\mathbb{I}_\kappa\otimes \mathcal{E}_1\right)(\rho)-\left(\mathbb{I}_\kappa\otimes \mathcal{E}_2\right)(\rho)}_1.
\end{equation}
with $\kappa$ an ancillary Hilbert space of any dimension.
\end{definition}
It is immediately obvious that $\norm{\mathcal{E}_1-\mathcal{E}_2}_1\leq \norm{\mathcal{E}_1-\mathcal{E}_2}_\diamond$ - and like the trace norm, this measure too has an operational motivation.
\begin{lemma}[\cite{W2018}]
Given a single copy of a channel $\mathcal{E}:\mathcal{H}_A\rightarrow\mathcal{H}_B $ chosen uniformly at random from $\left\{\mathcal{E}_1,\mathcal{E}_2\right\}$, the minimal probability of incorrectly identifying the channel is given by:
\begin{equation}
p_{\mathrm{min}}=\frac{1}{2}-\frac{\norm{\mathcal{E}_1-\mathcal{E}_2}_\diamond}{4}.
\end{equation}
\end{lemma}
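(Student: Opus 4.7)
The plan is to reduce channel discrimination to state discrimination, and then to recognise the supremum over input states as exactly the diamond norm. The starting point is the analogous Holevo--Helstrom bound for states, already essentially used in the trace-norm lemma preceding this one: given two states $\rho_1, \rho_2$ with equal prior, the minimum error probability using any POVM $\{E_1, E_2 = \mathrm{I} - E_1\}$ is
\begin{equation}
p_{\min}^{\mathrm{state}}(\rho_1,\rho_2) = \frac{1}{2} - \frac{1}{4}\|\rho_1-\rho_2\|_1,
\end{equation}
achieved by taking $E_1$ to be the projector onto the positive eigenspace of $\rho_1-\rho_2$. This I will take as a known fact.

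Next I would characterise the most general one-shot channel discrimination protocol. By Stinespring/purification arguments, without loss of generality the discriminator prepares some (possibly entangled) input state $\rho_{RA}$ on the channel input space $\mathcal{H}_A$ together with an ancillary Hilbert space $\kappa = \mathcal{H}_R$ of arbitrary dimension, applies $\mathbb{I}_R \otimes \mathcal{E}_i$, and then performs a two-outcome POVM on $\mathcal{H}_R \otimes \mathcal{H}_B$. For fixed input $\rho_{RA}$ the optimal measurement is governed by the state-discrimination bound above applied to the output states $(\mathbb{I}_R\otimes \mathcal{E}_1)(\rho_{RA})$ and $(\mathbb{I}_R\otimes \mathcal{E}_2)(\rho_{RA})$.

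Consequently the minimal error probability over all strategies is
\begin{equation}
p_{\min} = \inf_{\kappa,\,\rho_{RA}}\left[\frac{1}{2} - \frac{1}{4}\bigl\|(\mathbb{I}_R\otimes\mathcal{E}_1)(\rho_{RA}) - (\mathbb{I}_R\otimes\mathcal{E}_2)(\rho_{RA})\bigr\|_1\right] = \frac{1}{2} - \frac{1}{4}\sup_{\kappa,\,\rho_{RA}} \bigl\|(\mathbb{I}_R\otimes\mathcal{E}_1)(\rho_{RA}) - (\mathbb{I}_R\otimes\mathcal{E}_2)(\rho_{RA})\bigr\|_1.
\end{equation}
The supremum on the right is, by definition, $\|\mathcal{E}_1-\mathcal{E}_2\|_\diamond$, yielding the claimed formula.

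The main subtlety — and the only real obstacle — is justifying that entangling with an ancilla really is the most general strategy and that no adaptive or post-processing trick can help in the single-shot setting. This is handled by noting that any protocol can be dilated into: prepare a pure input on $\mathcal{H}_R \otimes \mathcal{H}_A$, send $A$ through the channel, then measure; classical randomisation over inputs can only be a convex combination of such strategies and so cannot improve the supremum (which is attained on extreme, i.e.\ pure, inputs by convexity of the trace norm). A standard compactness argument (or an explicit reduction to $\dim\kappa \le \dim\mathcal{H}_A$) shows the supremum is attained, so the minimum error probability is genuinely achievable and matches the stated expression.
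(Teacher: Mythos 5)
The paper does not actually prove this lemma; it is stated as a cited fact from \cite{W2018} with no in-text argument, so there is no in-paper proof to compare against. Your derivation is the standard one from the literature and is correct: the Holevo--Helstrom bound for states, the observation that the most general single-shot strategy is to prepare a (possibly entangled) input $\rho_{RA}$, apply $\mathbb{I}_R\otimes\mathcal{E}_i$, and measure with a binary POVM, and then the recognition of the optimized trace-norm distance as the diamond norm by definition. Your treatment of the subtleties is also fine, though a couple of them are lighter than you make them sound. Adaptivity is vacuous here because there is only a single use of the channel, so there is literally nothing to adapt to; the only genuine issue is whether entangling with an ancilla is the most general pre-processing, which follows because any prepare-and-measure protocol is of exactly this form. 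For attainment of the supremum, the bound $\dim\kappa\leq\dim\mathcal{H}_A$ (which the paper itself separately cites from \cite{BS2010}) together with compactness of the state space settles it; invoking convexity to restrict to pure inputs is a further refinement but is not strictly needed to get existence of a maximizer.
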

\begin{proposition}\label{posdiamond}
Given a decomposition
$\mathcal{E}_{\text{sim}}=\sigma_{x}^u\circ\mathcal{E}_{\eta}\circ\mathcal{E}_{P}$
characterised by $\eta$ and $(q_{1}%
,q_{2},q_{3})$ respectively, then the diamond norm between $\mathcal{E}%
_{\text{sim}}$ and the closest Pauli channel
$\mathcal{E}_{\text{cl}}$ is $\eta$, equivalent to the trace norm, with the closest channel under $\norm{\cdot}_\diamond$ the same as $\norm{\cdot}_1$ given in proposition \ref{postrace}.
\end{proposition}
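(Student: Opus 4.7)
The plan is to leverage the trace-norm result of proposition~\ref{postrace} together with the basic inequality $\norm{\mathcal{E}_1-\mathcal{E}_2}_1 \leq \norm{\mathcal{E}_1-\mathcal{E}_2}_\diamond$. Since every Pauli channel satisfies $\norm{\mathcal{E}_{\text{sim}}-\mathcal{E}_C}_1 \geq \eta$ with equality at $\mathcal{E}_{\text{cl}}$, the same inequality transfers to the diamond norm, giving the lower bound $\inf_{\mathcal{E}_C}\norm{\mathcal{E}_{\text{sim}}-\mathcal{E}_C}_\diamond \geq \eta$. So the real work is to show that the optimizer $\mathcal{E}_{\text{cl}}$ from proposition~\ref{postrace} also achieves diamond norm exactly $\eta$; this upper bound, combined with the lower bound, immediately gives both the value and the optimality claim.

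For the upper bound I would analyze the difference map $D := \mathcal{E}_{\text{sim}} - \mathcal{E}_{\text{cl}}$ directly via the Bloch formalism. Comparing the $F$-matrices of $\mathcal{E}_{\text{sim}}$ and $\mathcal{E}_{\text{cl}}$ in both cases $u=0$ and $u=1$, the linear coefficients $f_{ij}$ coincide and only the affine shift $f_{30}=\pm\eta$ remains. Consequently, for any qubit state $\rho$ the Bloch vector of $D(\rho)$ is the constant vector $(0,0,\pm\eta)$, i.e.
\begin{equation}
D(\rho) = \pm\frac{\eta}{2}\sigma_z.
\end{equation}
By linearity and trace-preservation of $\mathcal{E}_{\text{sim}}$ and $\mathcal{E}_{\text{cl}}$, this extends to arbitrary operators as $D(A) = \pm\frac{\eta}{2}\,\mathrm{Tr}[A]\,\sigma_z$.

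With $D$ in hand, I compute the action of $\mathbb{I}_\kappa \otimes D$ on an arbitrary bipartite input $\rho_{\kappa A}$. Because $D$ factors as ``trace then tensor with $\pm\frac{\eta}{2}\sigma_z$'', we obtain
\begin{equation}
(\mathbb{I}_\kappa \otimes D)(\rho_{\kappa A}) = \pm\frac{\eta}{2}\,\rho_\kappa \otimes \sigma_z,
\end{equation}
where $\rho_\kappa = \mathrm{Tr}_A[\rho_{\kappa A}]$ is a valid density operator. The trace norm multiplies over tensor products, giving $\norm{\rho_\kappa \otimes \frac{\eta}{2}\sigma_z}_1 = \norm{\rho_\kappa}_1 \cdot \frac{\eta}{2}\norm{\sigma_z}_1 = 1 \cdot \eta = \eta$, independently of the input and the ancilla dimension. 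Taking the supremum yields $\norm{\mathcal{E}_{\text{sim}}-\mathcal{E}_{\text{cl}}}_\diamond = \eta$.

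The main (modest) obstacle is the upper-bound step: one must verify that the Bloch-level ``constant shift'' structure of $D$ really does extend to $\mathbb{I}_\kappa \otimes D$ on entangled inputs without any amplification. The clean factorization $D(\cdot) = \pm\frac{\eta}{2}\mathrm{Tr}[\cdot]\sigma_z$ makes this immediate, since $\mathbb{I}_\kappa \otimes D$ simply partial-traces $A$ and tensors in $\pm\frac{\eta}{2}\sigma_z$, so entanglement with the ancilla gives no advantage --- a feature specific to the fact that the only discrepancy between $\mathcal{E}_{\text{sim}}$ and the trace-optimal Pauli approximant is an affine offset in the Bloch picture. Combining this with the lower bound and the explicit form of $\mathcal{E}_{\text{cl}}$ from proposition~\ref{postrace} establishes the proposition.
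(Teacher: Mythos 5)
Your proof is correct and reaches the same conclusion, but the route is genuinely different from the paper's. The paper proves the upper bound by invoking the lemma from \cite{BS2010} that the diamond norm of qubit channels is achieved on a two-qubit input, then explicitly writes out the $4\times 4$ difference matrix $M_D=(\mathbb{I}_2\otimes\mathcal{E}_\text{sim})(\rho)-(\mathbb{I}_2\otimes\mathcal{E}_\text{cl})(\rho)$ in the general Bloch-style parametrisation and computes its four eigenvalues, checking that their absolute values sum to $\eta$ regardless of $\rho$; it then handles $u=1$ separately by conjugating the $u=0$ computation with $\mathbb{I}_2\otimes\sigma_x$ and using unitary invariance of the trace norm. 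Optimality is then argued by contradiction using the chain $\norm{\cdot}_1\leq\norm{\cdot}_\diamond$ and proposition~\ref{postrace}. You instead identify the clean algebraic structure of the difference map, $D(A)=\pm\tfrac{\eta}{2}\,\mathrm{Tr}[A]\,\sigma_z$ (which follows from $D(\rho)=\pm\tfrac{\eta}{2}\sigma_z$ for all states plus linearity and trace-annihilation), and observe that $\mathbb{I}_\kappa\otimes D$ then factors as partial-trace-then-tensor-with-$\pm\tfrac{\eta}{2}\sigma_z$, so the trace norm is $\eta$ on every ancilla-extended input. This buys three things over the paper's proof: it treats $u=0$ and $u=1$ in one stroke, it avoids the explicit eigenvalue computation, and it does not need the cited lemma restricting the ancilla to a qubit, since the bound is manifestly dimension-independent. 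Your lower-bound argument (apply $\norm{\cdot}_1\leq\norm{\cdot}_\diamond$ to every Pauli channel and use proposition~\ref{postrace}) is a direct rephrasing of the paper's contradiction argument and is equivalent. Both proofs are sound; yours is the more economical.
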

Although the diamond norm appears to be a difficult quantity to obtain, due to the unbounded dimension of the ancillary Hilbert space, we may use the following lemma to greatly simplify the problem.
\begin{lemma}[\cite{BS2010}]
There exists a $\rho\in\mathcal{H}\otimes\mathcal{H}$ which achieves the diamond norm between two channels $\mathcal{E}_1,\mathcal{E}_2:\mathcal{H}\rightarrow\mathcal{H}'$.
\end{lemma}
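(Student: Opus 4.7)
The plan is to reduce the supremum in the definition of the diamond norm, which is taken over states in $\kappa \otimes \mathcal{H}$ for arbitrary ancillary spaces $\kappa$, to a maximum over pure states in $\mathcal{H} \otimes \mathcal{H}$. This involves two key reductions: first from mixed states to pure states, and then bounding the effective ancillary dimension via the Schmidt decomposition. Define $\Delta := \mathcal{E}_1 - \mathcal{E}_2$ and recall that $\mathbb{I}_\kappa \otimes \Delta$ is linear.

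First I would show the supremum can be restricted to pure states. For any $\rho = \sum_i p_i \ket{\psi_i}\bra{\psi_i}$ on $\kappa \otimes \mathcal{H}$, the triangle inequality for $\norm{\cdot}_1$ together with linearity of $\mathbb{I}_\kappa \otimes \Delta$ gives
\begin{equation}
\norm{(\mathbb{I}_\kappa \otimes \Delta)(\rho)}_1 \leq \sum_i p_i \norm{(\mathbb{I}_\kappa \otimes \Delta)(\ket{\psi_i}\bra{\psi_i})}_1 \leq \max_i \norm{(\mathbb{I}_\kappa \otimes \Delta)(\ket{\psi_i}\bra{\psi_i})}_1,
\end{equation}
so a pure state in the decomposition attains at least the same value, and the supremum is preserved under the restriction to pure inputs.

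Next I would use the Schmidt decomposition to bound the ancillary dimension. For any pure state $\ket{\psi} \in \kappa \otimes \mathcal{H}$, write $\ket{\psi} = \sum_{i=1}^r \lambda_i \ket{a_i}\ket{b_i}$ with $r \leq \dim \mathcal{H}$. Let $V: \mathrm{span}\{\ket{a_i}\} \to \mathcal{H}$ be any isometry extending the $\ket{a_i}$ to an orthonormal subset of $\mathcal{H}$, and set $\ket{\tilde\psi} := (V \otimes \mathbb{I}_\mathcal{H})\ket{\psi} \in \mathcal{H} \otimes \mathcal{H}$. Because $V \otimes \mathbb{I}_{\mathcal{H}'}$ commutes with $\mathbb{I} \otimes \Delta$ on the relevant tensor factors, we get
\begin{equation}
(\mathbb{I}_\mathcal{H} \otimes \Delta)(\ket{\tilde\psi}\bra{\tilde\psi}) = (V \otimes \mathbb{I}_{\mathcal{H}'}) (\mathbb{I}_\kappa \otimes \Delta)(\ket{\psi}\bra{\psi}) (V^\dagger \otimes \mathbb{I}_{\mathcal{H}'}),
\end{equation}
and since $V^\dagger V = \mathbb{I}$, conjugation by the isometry $V \otimes \mathbb{I}$ preserves the singular values and hence the trace norm. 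Consequently $\norm{(\mathbb{I}_\kappa \otimes \Delta)(\ket{\psi}\bra{\psi})}_1 = \norm{(\mathbb{I}_\mathcal{H} \otimes \Delta)(\ket{\tilde\psi}\bra{\tilde\psi})}_1$, so the supremum over all pure states in all $\kappa \otimes \mathcal{H}$ equals the supremum over pure states in $\mathcal{H} \otimes \mathcal{H}$.

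Finally, the set of pure states in $\mathcal{H} \otimes \mathcal{H}$ (identified with density operators of rank one) is a compact subset of the finite-dimensional operator space, and the map $\ket{\psi}\bra{\psi} \mapsto \norm{(\mathbb{I}_\mathcal{H} \otimes \Delta)(\ket{\psi}\bra{\psi})}_1$ is continuous in $\ket{\psi}$ since $\mathbb{I} \otimes \Delta$ is linear and $\norm{\cdot}_1$ is continuous. A continuous function on a compact set attains its supremum, yielding the desired $\rho \in \mathcal{H} \otimes \mathcal{H}$. The main obstacle is the bookkeeping in the isometry step: one must verify carefully that the extension $V$ really does intertwine the two channel actions and that conjugation by a (non-unitary) isometry leaves the trace norm unchanged, which follows from the fact that $V^\dagger V = \mathbb{I}$ implies $VXV^\dagger$ has the same non-zero singular values as $X$.
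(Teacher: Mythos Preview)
Your proof is correct and follows the standard argument for this well-known fact: convexity of the trace norm reduces to pure inputs, the Schmidt decomposition bounds the effective ancillary dimension by $\dim\mathcal{H}$, and compactness ensures the supremum is attained. Each step is sound, including the isometry bookkeeping you flagged.

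Note, however, that the paper does not actually prove this lemma: it is stated with a citation to \cite{BS2010} and then immediately used (via its qubit corollary) to compute the diamond norm between a Pauli-damping channel and its closest Pauli channel. So there is no ``paper's own proof'' to compare against; you have supplied the standard proof that the cited reference would contain.
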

\begin{corollary}[\cite{BS2010}]
The diamond norm between two qubit channels $\mathcal{E}_1,\mathcal{E}_2$ is achieved by a two-qubit state.
\end{corollary}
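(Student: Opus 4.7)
The plan is to derive this corollary as an immediate specialisation of the preceding lemma. That lemma states that for any channels $\mathcal{E}_1, \mathcal{E}_2 : \mathcal{H} \to \mathcal{H}'$, the supremum defining $\norm{\mathcal{E}_1-\mathcal{E}_2}_\diamond$ is attained by some state $\rho \in \mathcal{H} \otimes \mathcal{H}$ (rather than needing an ancilla $\kappa$ of unbounded dimension). First I would take $\mathcal{H} = \mathcal{H}_2$ and $\mathcal{H}' = \mathcal{H}_2$, matching the hypothesis that $\mathcal{E}_1,\mathcal{E}_2$ are qubit channels. The lemma then directly supplies an optimising $\rho \in \mathcal{H}_2 \otimes \mathcal{H}_2$, which is by definition a two-qubit state.

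For completeness, I would briefly indicate why the lemma itself reduces the ancilla dimension to $\dim \mathcal{H}$. Two standard observations suffice. First, because $\mathbb{I}_\kappa \otimes (\mathcal{E}_1 - \mathcal{E}_2)$ is linear and the trace norm is convex, the supremum over density operators on $\kappa \otimes \mathcal{H}$ is attained on an extreme point, i.e. a pure state $\ket{\psi}$. Second, by the Schmidt decomposition, $\ket{\psi}$ has Schmidt rank at most $\dim \mathcal{H}$, so there is a local isometry $V : \mathrm{span}\left\{\ket{i_\kappa}\right\} \hookrightarrow \kappa$ under which $\ket{\psi}$ is the image of a pure state $\ket{\psi'} \in \mathcal{H} \otimes \mathcal{H}$. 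Since the trace norm is preserved by isometries and $V$ acts trivially with respect to $\mathcal{E}_1 - \mathcal{E}_2$, we have $\norm{(\mathbb{I}_\kappa \otimes (\mathcal{E}_1-\mathcal{E}_2))(\ket{\psi}\bra{\psi})}_1 = \norm{(\mathbb{I}_{\mathcal{H}} \otimes (\mathcal{E}_1-\mathcal{E}_2))(\ket{\psi'}\bra{\psi'})}_1$, so $\ket{\psi'}$ achieves the same value.

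There is no real obstacle here: the corollary is a one-line specialisation once the lemma is granted, and the lemma is itself a standard purification-plus-Schmidt-rank argument. The only thing worth emphasising in writing is that ``two-qubit state'' in the corollary refers to a state on $\mathcal{H}_2 \otimes \mathcal{H}_2$, where one tensor factor is the input space of the channel and the other plays the role of the ancilla $\kappa$; the corollary thereby makes the optimisation in $\norm{\mathcal{E}_1 - \mathcal{E}_2}_\diamond$ finite-dimensional and tractable, which is precisely what will be needed to compute the diamond norm in the subsequent application to Pauli-damping channels.
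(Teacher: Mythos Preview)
Your proposal is correct and matches the paper's approach exactly: the paper gives no separate proof of this corollary, treating it as an immediate specialisation of the preceding lemma with $\mathcal{H}=\mathcal{H}_2$. Your additional sketch of why the lemma holds (convexity plus Schmidt decomposition) is standard and accurate, though the paper simply cites it to \cite{BS2010} without elaboration.
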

\begin{proof}
Once more we split the proof of proposition \ref{posdiamond} into two cases, depending on the value of $u$.\\

\textbf{Case 1:} $u=0$.\\
We start with the knowledge that 
\begin{equation}
\min_{\mathcal{E}'\in\text{Pauli}}\norm{\mathcal{E}_{\text{sim}}-\mathcal{E}'}_\diamond\leq \norm{\mathcal{E}_{\text{sim}}-\mathcal{E}_{\text{cl}}}_\diamond.
\end{equation}
since $\mathcal{E}_{\mathrm{cl}}$ is itself Pauli. To find $\norm{\mathcal{E}_{\text{sim}}-\mathcal{E}_{\text{cl}}}_\diamond$, we shall look at the quantity
\begin{equation}
\norm{\left(\mathbb{I}_2\otimes\mathcal{E}_\text{sim}\right)\left(\rho\right)-
\left(\mathbb{I}_2\otimes\mathcal{E}_\text{cl}\right)\left(\rho\right)}_1
\end{equation}
for an arbitrary two-qubit state $\rho$, using the decomposition given in Eq.~(\ref{generaltwo}). We find that
$M_D=\left(\mathbb{I}_2\otimes\mathcal{E}_\text{sim}\right)\left(\rho\right)-
\left(\mathbb{I}_2\otimes\mathcal{E}_\text{cl}\right)\left(\rho\right)$
is given by
\begin{equation}
M_D=\frac{1}{4}\left(\begin{array}{cccc}
(1+a_3)\eta & 0 & (a_1-i a_2)\eta & 0\\
0 & (1+a_3)\eta & 0 & -(a_1- i a_2)\eta\\
(a_1+i a_2)\eta & 0 & (1-a_3)\eta & 0\\
0 & -(a_1+i a_2)\eta & 0 & (-1+a_3)\eta
\end{array}\right).
\end{equation}
As $M_D$ is Hermitian, the trace norm is equal to the sum of the absolute values of eigenvalues. The eigenvalues of $M_D$ are:
\begin{align}
\frac{1}{4}\Big(-1 - &\sqrt{a_1^2+a_2^2+a_3^2}\Big)\eta, &
\frac{1}{4}\Big(1 - &\sqrt{a_1^2+a_2^2+a_3^2}\Big)\eta,\\
\frac{1}{4}\Big(-1 + &\sqrt{a_1^2+a_2^2+a_3^2}\Big)\eta, &
\frac{1}{4}\Big(1 + &\sqrt{a_1^2+a_2^2+a_3^2}\Big)\eta.\nonumber
\intertext{Since $a_1^2+a_2^2+a_3^2\leq 1$, the absolute values are:}
\frac{1}{4}\Big(1 + &\sqrt{a_1^2+a_2^2+a_3^2}\Big)\eta,  &
\frac{1}{4}\Big(1 - &\sqrt{a_1^2+a_2^2+a_3^2}\Big)\eta,\\
\frac{1}{4}\Big(1 - &\sqrt{a_1^2+a_2^2+a_3^2}\Big)\eta, &
\frac{1}{4}\Big(1 + &\sqrt{a_1^2+a_2^2+a_3^2}\Big)\eta.\nonumber
\end{align}
We can see that the sum of these values is $\eta$, irrespective of the input state $\rho$. We can thus conclude that $\norm{\mathcal{E}_\mathrm{sim}-\mathcal{E}_{\mathrm{cl}}}_\diamond=\eta=\norm{\mathcal{E}_\mathrm{sim}-\mathcal{E}_{\mathrm{cl}}}_1$.\\

Now let us suppose that there exists a channel $\mathcal{E}'$ which has a strictly smaller diamond norm from $\mathcal{E}_{\mathrm{sim}}$ than our posited closest channel $\mathcal{E}_{\mathrm{cl}}$. This means we may write the following chain of inequalities:
\begin{equation}\label{chain}
\norm{\mathcal{E}_\text{sim}-\mathcal{E}'}_1\leq\norm{\mathcal{E}_\text{sim}-\mathcal{E}'}_\diamond<\norm{\mathcal{E}_\text{sim}-\mathcal{E}_\text{cl}}_\diamond=\eta=\norm{\mathcal{E}_\text{sim}-\mathcal{E}_\text{cl}}_1
\end{equation}
however, we proved in proposition \ref{postrace} that the closest channel under trace norm was $\mathcal{E}_{\mathrm{cl}}$, thus leading to a contradiction. We can therefore conclude that the diamond norm between $\mathcal{E}_{\mathrm{sim}}$ and $\mathcal{E}_{\mathrm{cl}}$ is minimal.\\
\textbf{Case 2:} $u=1$.\\
In this case, we have our simulated channel is defined by $\mathcal{E}_{\mathrm{sim}}=\sigma_x\circ\mathcal{E}_\eta\circ\mathcal{E}_{P}$ - we can define a related channel, $\mathcal{E}_{\mathrm{pos}}=\mathcal{E}_\eta\circ\mathcal{E}_{P}$. From case 1, we know that the closest channel to $\mathcal{E}_{\mathrm{pos}}$ is the channel $\mathcal{E}_{\mathrm{poscl}}$ with action $\left(f_{11},f_{22},f_{33}\right)=\left(\sqrt{1-\eta}q_1,-\sqrt{1-\eta}q_2,(1-\eta)q_3\right)$, with diamond norm distance $\eta$. This means we can write:
\begin{align}
\eta&=\norm{\mathcal{E}_\text{pos}-\mathcal{E}_\text{poscl}}_\diamond \nonumber\\
 &=\sup_{\rho}\norm{\left(\mathbb{I}_2\otimes\mathcal{E}_\text{pos}\right)\left(\rho\right)-
\left(\mathbb{I}_2\otimes\mathcal{E}_\text{poscl}\right)\left(\rho\right)}_1\nonumber\\
&=\sup_{\rho}\norm{\left(\mathbb{I}_2\otimes\sigma_x\right)\left(\mathbb{I}_2\otimes\mathcal{E}_\text{pos}\right)\left(\rho\right)-
\left(\mathbb{I}_2\otimes\sigma_x\right)\left(\mathbb{I}_2\otimes\mathcal{E}_\text{poscl}\right)\left(\rho\right)}_1\nonumber\\
&=\sup_{\rho}\norm{\left(\mathbb{I}_2\otimes\mathcal{E}_\text{sim}\right)\left(\rho\right)-
\left(\mathbb{I}_2\otimes\mathcal{E}_\text{cl}\right)\left(\rho\right)}_1\nonumber\\
&=\norm{\mathcal{E}_\text{sim}-\mathcal{E}_\text{cl}}_\diamond,
\end{align}
where we have used the fact that the trace norm is invariant under unitary operations, and that $\mathcal{E}_{\mathrm{cl}}$ for the case $u=1$ is equivalent to $\sigma_x\circ \mathcal{E}_{\mathrm{poscl}}$. We can now use the same argument used in Eq.~(\ref{chain}) to conclude that this norm is indeed minimal.
\end{proof}

From this proposition we can conclude that, given resource state $\chi_\gamma$, we can simulate a channel at most $\norm{\cdot}_\diamond=\gamma$ distinguishable from the set of Pauli channels (since that is the largest allowable $\eta$). We can also state that, given a Pauli-damping channel $\mathcal{E}_\text{sim}=\sigma_x^{u}\circ\mathcal{E}_\eta\circ\mathcal{E}_{P}$, we can one-copy distinguish it from any Pauli channel with probability $p\geq 1/2+\eta/4$, and give the Pauli channel for which equality holds.\footnote{In fact, for this particular channel, we may obtain the optimal probability by sending the maximally mixed state through the channel, measuring $\sigma_z$, and choosing $\mathcal{E}_{\mathrm{sim}}$ if the output is 0 and $\mathcal{E}_{\mathrm{cl}}$ otherwise.}

\subsection{Capacities of Pauli-Damping Channels}
\label{ch2:fourth:b}
As mentioned in the first section of this chapter, if a channel can be shown to be LOCC simulated over a pre-shared resource, we may upper bound the two-way entanglement distillation, quantum, private and secret key capacities by the REE of the resource state. Since the Pauli-damping channels were explicitly constructed by an LOCC protocol over resource state $\chi_\gamma$, we may apply this result for a  $\chi_\gamma$-constructed Pauli-damping channel $\mathcal{E}$ to obtain
\begin{align}
D_2(\mathcal{E})&=Q_{2}(\mathcal{E})\leq P_{2}(\mathcal{E})=K(\mathcal{E}),\nonumber\\
K(\mathcal{E})&\leq E_{R}\left(
\chi_{\gamma}\right) \leq \Phi(\mathcal{E}_\gamma):=\frac{1}{2}-\frac{1-\gamma}{2}\log_{2}\left(
\frac{1-\gamma}{2}\right)
+\frac{2-\gamma}{2}\log_{2}\left(  \frac{2-\gamma}{2}\right)  .\label{YBsss}%
\end{align}
This specific bound is obtained by calculating the relative entropy to a specific separable state, $\sigma_\gamma=\left(\mathbb{I}\otimes\mathcal{E}_\gamma\right)\left((\ket{00}\bra{00}+\ket{11}\bra{11})/2 \right)$. It is possible to see that this bound may not always be optimal; for example, if $\gamma=0$ we have the maximally entangled state as our resource state, and $E_{R}\left(
\chi_{0}\right)=E_R\left(\ket{\Phi^+}\bra{\Phi^+}\right)=1$ is our upper bound. However, if the Pauli-damping channel we are considering is the completely depolarising channel $\mathcal{D}_1:\rho\rightarrow \mathrm{I}/2$, then the capacity of the channel is 0. This is an extreme example, but it illustrates how the bound can fail to portray the true nature of the channel.\\

To counter this, we can consider an alternative upper bound, obtained by a \emph{different} LOCC simulation protocol. This simulation takes advantage of the fact Pauli-damping channels are composite.
\begin{itemize}
\item Take a specific Pauli-damping channel $\mathcal{E}_{\mathrm{sim}}=\sigma_{x}^u\circ\mathcal{E}_{\eta}\circ\mathcal{E}_{P}$. 
\item Alice and Bob have the pre-shared resource $\chi_{\mathcal{E}_P}$ - the Choi matrix of channel $\mathcal{E}_{P}$ in the decomposition of $\mathcal{E}_{\mathrm{sim}}$. 
\item Alice and Bob perform the standard teleportation protocol over $\chi_{\mathcal{E}_P}$ - by corollary \ref{BlochPauli} this simulates the channel $\mathcal{E}_P$ from Alice to Bob.
\item Bob then applies locally the channel $\sigma_{x}^u\circ\mathcal{E}_{\eta}$ to his received state; his final state is thus $\sigma_{x}^u\circ\mathcal{E}_{\eta}\circ\mathcal{E}_{P}\left(\rho\right)\equiv \mathcal{E}_{\mathrm{sim}}\left(\rho\right)$.
\item The channel $\mathcal{E}_{\mathrm{sim}}$ has been $\chi_{\mathcal{E}_P}$-simulated, and therefore 
\begin{equation}
D_{2}(\mathcal{E})=Q_{2}(\mathcal{E})\leq P_{2}(\mathcal{E})=K(\mathcal{E})\leq
E_{R}\left(\chi_{\mathcal{E}_P}\right)\leq \Phi\left(\mathcal{E}_P\right):= \begin{cases}
1-H_2\left(p_\mathrm{max}\right) & p_\mathrm{max} \geq \frac{1}{2}, \\
0 & p_\mathrm{max}\leq \frac{1}{2},
\end{cases}
\end{equation}
with the final inequality coming from \cite{PLOB2017}, and $p_{\mathrm{max}}=\max_i\left\{p_i\right\}$ in the decomposition (\ref{Paulips}) of $\mathcal{E}_P$. $\Phi\left(\mathcal{E}_P\right)$ is obtained by choosing the specific separable state $\sigma_P=\left(\mathbb{I}\otimes\mathcal{E}_P\right)\left((\ket{00}\bra{00}+\ket{11}\bra{11})/2\right)$. 
\end{itemize}
This upper bound is effectively the same principle that for composite channels, $\mathcal{C}\left(\mathcal{E}_2\circ\mathcal{E}_1\right)\leq \mathcal{C}\left(\mathcal{E}_1\right)$.
Although the upper bound is decomposition specific, we can compare the efficacy with the resource-based bound by noting that $\Phi(\mathcal{E}_P)$ is a convex function over the tetrahedron $\mathcal{T}$ - the four values $p_i$ corresponding to the weights of each corner in the convex decomposition. Therefore, for a given $\gamma$, we can compare the upper bound given by $\chi_\gamma$ to that given by the extremal Pauli channels of the restricted tetrahedron in theorem \ref{secondmain}. The Pauli-based bound will be greatest when the restricted tetrahedron is largest; occuring when $\eta=0$. Our convex space is then bounded by the points:

\begin{align*}
&\bigg(\sqrt{1-\gamma},\pm\sqrt{1-\gamma}, \mp (1-\gamma)\bigg),\\
&\bigg(\pm \sqrt{1-\gamma},\sqrt{1-\gamma}, \mp (1-\gamma)\bigg),\\
&\bigg(-\sqrt{1-\gamma},\pm\sqrt{1-\gamma}, \pm (1-\gamma)\bigg),\\
&\bigg(\pm \sqrt{1-\gamma},-\sqrt{1-\gamma}, \pm (1-\gamma)\bigg).
\end{align*}
 Each of these is generated by a permutation of the four probabilities:
\begin{equation}
\left\{\frac{\gamma}{4},\frac{\left(1-\sqrt{1-\gamma}\right)^2}{4},\frac{\left(1+\sqrt{1-\gamma}\right)^2}{4},\frac{\gamma}{4}\right\}.
\end{equation}
As $\gamma\in[0,1]$, it must be the case that $p_{\mathrm{max}}=\foo{\left(1+\sqrt{1-\gamma}\right)^2}{4}$, and $p_{\mathrm{max}}\geq 1/2$ when $\gamma\leq 2(\sqrt{2}-1)$. Therefore, applying our convexity argument, we have that
%\begin{equation}
% \Phi\left(\mathcal{E}_P\right)\leq  \begin{cases}
% 1+\frac{\left(1+\sqrt{1-\gamma}\right)^2}{4}\mathrm{log}\left(\frac{\left(1+\sqrt{1-\gamma}\right)^2}{4}\right)+\left(1-\frac{\left(1+\sqrt{1-\gamma}\right)^2}{4}\right)\mathrm{log}\left(1-\frac{\left(1+\sqrt{1-\gamma}\right)^2}{4}\right) & \gamma\leq  2(\sqrt{2}-1),\\
% 0 & \gamma \geq  2(\sqrt{2}-1).
% \end{cases}
 %\end{equation}
\begin{equation}
\Phi\left(\mathcal{E}_P\right)\leq  \begin{cases}
1-H_2\left(\frac{\left(1+\sqrt{1-\gamma}\right)^2}{4}\right) & \gamma\leq  2(\sqrt{2}-1),\\
0 & \gamma\geq  2(\sqrt{2}-1).
\end{cases}
\end{equation}
In figure \ref{upperbounds}, we compare the the two bounds for varying $\gamma$. We see that the bound obtained by the Pauli channel in the decomposition is always tighter than that obtained by using the resource state! This means that, although we were able to simulate the Pauli-damping channel in a non-trivial way using a novel LOCC protocol, it is actually more beneficial to simulate part of the channel trivially (by local application of the $\sigma_x^u\circ\mathcal{E}_\eta$) allowing us to use an alternate resource state.\\

It should be noted that for both of our analytical bounds, we did not find the exact REE of our resource state, instead choosing a particular separable state to calculate the relative entropy - providing an upper bound for the REE. With this in mind, it is possible that the $E_R(\chi_\gamma)$ provides a better upper bound than $E_R(\chi_{\mathcal{E}_P})$ when the two quantities are exactly calculated. To see if this is the case, we note that for $\mathcal{H}_2\otimes\mathcal{H}_2$ the REE and the RPPT coincide\cite{HHH1996}. We use a semidefinite programming-based\footnote{These problems are explained in detail in chapter \ref{ch:chapter5}.} numerical approximation of the RPPT, ``CVXQUAD" \cite{cvxquad,FF2018}. The numerical analysis of the two resource states is given in figure \ref{numupperbounds}, showing that $E_R(\chi_{\mathcal{E}_P})$ still provides a better bound than $E_R(\chi_\gamma)$.

\begin{figure}
\begin{center}
\includegraphics[width=0.65\columnwidth]{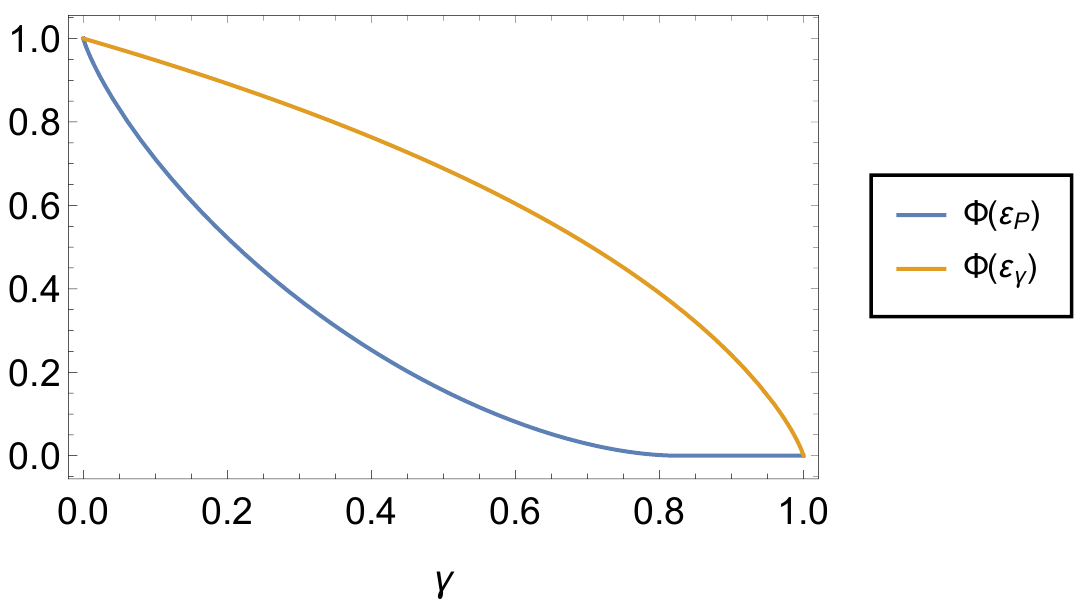}
\caption[Analytical capacity bounds for Pauli-damping channels]{Here we compare our two analytical bounds for Pauli-damping channels, one based on the non-Bell-diagonal resource state $\chi_\gamma$, the other from the Pauli channel in the decomposition. The Pauli-based bound is considerably tighter.}\label{upperbounds}
\end{center}
\end{figure}

\begin{figure}[h]
\begin{center}
\includegraphics[width=0.5\columnwidth]{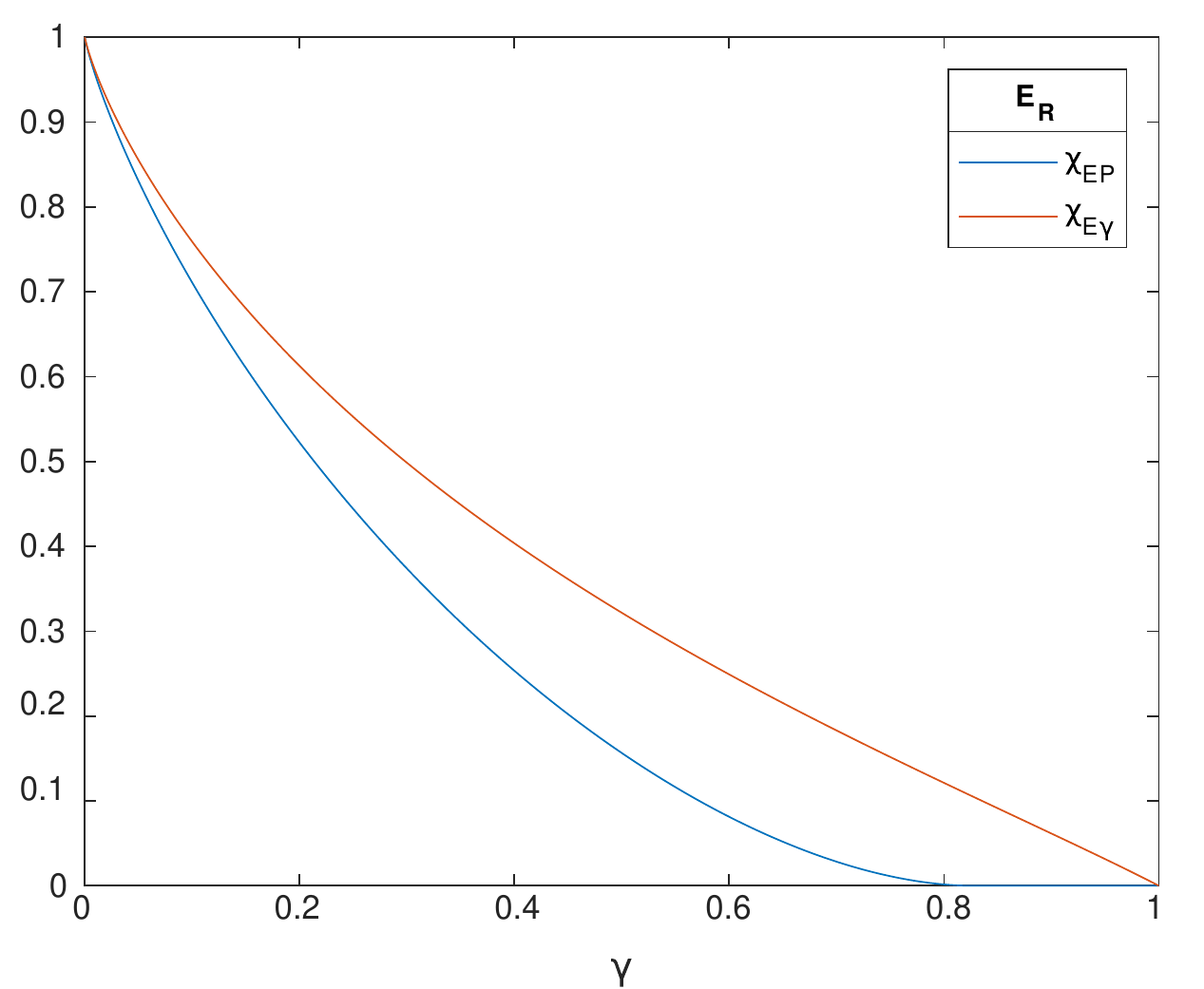}
\caption[Numerical capacity bounds for Pauli-damping channels]{This figure compares the numerical REE bounds for secret key capacity of Pauli-damping channels. Although we see an improvement on $\Phi(\mathcal{E}_\gamma)$, the upper bound provided by $E_R(\chi_\gamma)$ is still looser than $E_R(\chi_{\mathcal{E}_P})$.}\label{numupperbounds}
\end{center}
\end{figure}
\subsection{The ``Squared" Pauli-Damping Channel}
\label{ch2:fourth:b}
Corollary \ref{notChoi} proved that it was impossible to $\chi_\gamma$-simulate $\mathcal{E}_\gamma$ (for $\gamma\neq 0,1$); but we were still interested in how close a channel it \emph{was} possible to simulate. Motivated by this, my summer project student Leon Hetzel performed the least-squares minimisation
\begin{equation}\label{leastsquaresmin}
\min_{\mathcal{E}_{\mathrm{sim}}} \sum_{i,j} \abs{[F_\gamma]_{ij}-[F_{\mathrm{sim}}]_{ij}]}^2
\end{equation}
with $F_{\mathrm{sim}}$ the F-matrix of a Pauli-damping channel,
\begin{equation}
F_{\mathrm{sim}}=\left(
\begin{array}{cccc}
1 & 0 & 0 & 0\\
0 & \sqrt{1-\eta}q_1 & 0 & 0\\
0 & 0 & -\sqrt{1-\eta}q_2 & 0 \\
\eta & 0 & 0 & (1- \eta) q_3
\end{array}
\right).
\end{equation}
Note that generally this type of minimisation does \emph{not} coincide with the minimisation of the trace/diamond norm. The reason why we chose this measure was because we wanted to find the channel with the closest action \emph{regardless} of the input state, rather than the trace/diamond norms, which only considers the worst case scenario input state.\\

The result of the optimisation (\ref{leastsquaresmin}) was the channel
\begin{equation}
\mathcal{E}_{\text{sq}}:\left(x,y,z\right)\rightarrow \left(\sqrt{1-\gamma}\left(1-\frac{\gamma}{2}\right)x,\sqrt{1-\gamma}\left(1-\frac{\gamma}{2}\right)y,\gamma^2+(1-\gamma^2)z\right)
\end{equation}
with F matrix 
\begin{equation}
F_{\text{sq}}=\left(
\begin{array}{cccc}
1 & 0 & 0 & 0\\
0 & \sqrt{1-\gamma}\left(1-\frac{\gamma}{2}\right) & 0 & 0\\
0 & 0 & \sqrt{1-\gamma}\left(1-\frac{\gamma}{2}\right) & 0 \\
\gamma^2 & 0 & 0 & (1-\gamma^2)
\end{array}
\right).
\end{equation}
We name this channel the \emph{squared} channel due to its similarity of action on the $z$ component of the Bloch vector to the original channel $\mathcal{E}_\gamma$, except with the square exponent.\\

This channel is a Pauli-damping channel (being $\chi_\gamma$-simulable by noisy teleportation) and thus may be decomposed into the catenation of channels given in theorem \ref{mainresult}. This particular channel has decomposition $u=0$, $\eta=\gamma^2$, and
\begin{equation}
\mathbf{q}=\left(\frac{1-\frac{\gamma}{2}}{\sqrt{1+\gamma}},-\frac{1-\frac{\gamma}{2}}{\sqrt{1+\gamma}},\frac{1-\gamma}{1+\gamma}\right).
\end{equation}

\section{Discussion and Further Directions}
\label{ch2:second:b}
The tool of LOCC-simulation in order to provide weak converse bounds of channel capacities is one with lots of potential; but comes with the caveat that a suitable resource state and LOCC protocol need to be found in order to apply the result. The most commonly studied LOCC protocol for this task has been the standard teleportation protocol, over an arbitrary resource state. We extended this protocol by introducing classical noise into the communication step between Alice and Bob, allowing teleportation simulation of more than just the Pauli channels. One could generalise this further by allowing Bob different operations conditional on Alice's measurement outcome, or changing Alice's measurement. Not all non-Pauli channels are simulable using this new protocol though; although we have been able to introduce constant terms in the Bloch vector, as well as interdependent terms between elements, the interdependence of the sums $S_{ij}$ constrains these terms. We were able to prove a no-go theorem which states that a non-Bell-diagonal resource state is necessary in order to simulate non-Pauli channels with this method, and were able to characterise the channels possible to simulate using the specific resource states $\chi_\gamma$. Upon further inspection however, the composite nature of these channels meant that the standard teleportation protocol, followed by a local application of the remainder of the channel, provided a better upper bound on these channels' capacities. This is perhaps because the noise in the classical channel $\Pi$ leads to a lower-capacity simulated channel, yet our upper bound is independent of $\Pi$. Whether the noisy teleportation protocol can provide better bounds for other channels remains uncertain; an important step towards discerning this would be to see if all channels simulated this way are composite, potentially with the first component channel always a Pauli channel.\\

A natural generalisation of this research is the extension to $d>2$ - akin to the generalisation of Bowen and Bose. Though the Pauli matrices and noisy classical channel have natural generalisations to higher dimensions, there exists no concise Bloch representation for either the input nor the resource state, making the F-matrix representation redundant. Generally, the simulated channel would depend on $d^4-1$ parameters in the resource state, subject to positivity constraints, and $d^4-d^2$ free parameters for the classical communication channel. Most interest in higher dimensional channels is limited to channels which are already teleportation covariant, and therefore this generalisation would be hard to present and likely of little relevance to the field. A more reasonable application of theorem \ref{FFormula} could be to numerically optimise over resource states, in order to find the tightest bound for general qubit channels. I have written a Matlab program using a least-squares approach to determine whether there exists $\sigma$, $\Pi$ which simulate a given channel - this approach could be combined with $E_{R}$-approximating semidefinite programs to find an upper bound.\\

Another area of investigation is \emph{port based teleportation} \cite{IH2008}, as another LOCC protocol candidate \cite{PLL2018}. In this process, many copies of the maximally entangled states are employed, to allow for teleportation of a state \emph{without} the requirement of a corrective unitary - at the cost of fidelity. This has been shown to LOCC simulate a depolarising channel, with the probability of depolarisation decreasing as the number of maximally entangled states used increases. By replacing these maximally entangled states with more general qudit states, we will be able to simulate much more general channels - moreover, it appears likely that methods such as ``parallel teleportation" (in which one sends over many copies of the state) and entanglement recycling (in which the resource state is reused for another round of port based teleportation) can provide tight, asymptotic upper bounds.\\

Overall, this work takes its place in the literature as a proof-of-concept that the teleportation protocol may be used as a template for LOCC simulation for more general channels than previously thought - and provides an explicit example, the Pauli-damping channels which are simulated this way. They are one of the few non-Pauli channels to be shown non-trivially simulable, and the first using a two-qubit resource state.

% ==========================================================================================================

%\section{Summary}
%\label{ch2:summary}

\chapter{Werner States and Phase Werner States}
\label{ch:Werner}

The work presented in this chapter stems from a discussion of work between myself, Stefano Pirandola, and Kenneth Goodenough, a PhD student at TU Delft. The original concept to introduce a phase into the extremal Werner state was Kenneth's idea, which I then expanded to the full phase Werner and isotropic families, with input from the two parties mentioned. I hope to develop these ideas sufficiently into a paper in the future. %==========================================================================================================

\section{Structure of this Chapter}
\label{ch3:structure}
The first section of this chapter will begin by introducing the family of Werner states, along with the related isotropic states, and explain their importance among the development and understanding of quantum information. The next section will then introduce the concept of the \emph{phase} Werner state, and its generalisation to the whole family, extending the concept to phase isotropic states. The final section will then prove some properties of these states, and discuss some conjectured (and disproved) results.

% %==========================================================================================================

\section{Werner States and their Role in Quantum Theory}
\label{ch3:first}
Werner states, as they are now called, were first introduced by Reinhard Werner in \cite{W1989}, in order to prove a remarkable result - there exist entangled states for which  a local hidden variable model can describe the measurement statistics of any local (projective) measurements upon the state. This means that despite being entangled, these states cannot violate a Bell inequality. This result was extended in \cite{B2002} to POVM measurements. The consequence of this is that entanglement is a \emph{necessary} condition for non-locality, but \emph{not} a sufficient one. We shall look at non-locality and Bell inequalities in more detail in chapter \ref{ch:chapter5}, but non-locality is often thought of as what makes measurement statistics ``truly quantum", and forms the basis of many quantum cryptographic protocols.

\begin{definition}[\cite{W1989}]
$d$-dimensional Werner states $W_{\eta,d}$ are exactly the set of states such that, for all $d$-dimensional unitaries $U_d$,
\begin{equation}
\left(U_d\otimes U_d\right)W_{\eta,d}\left(U_d\otimes U_d\right)^\dagger=W_{\eta,d}.
\end{equation}
\end{definition}
\begin{definition}
The $d$-dimensional flip operator $\mathbb{F}_d$, is defined as 
\begin{equation}
\mathbb{F}_d:=\sum_{i,j=0}^{d-1}\ket{ij}\bra{ji}
\end{equation}
and has the effect of ``swapping" two subsystems, with $\mathbb{F}_d\ket{\phi}\otimes\ket{\psi}=\ket{\psi}\otimes\ket{\phi}$.
\end{definition}
\begin{definition}[\cite{W1989}]
The $d$-dimensional Werner states $W_{\eta,d}$ can be parametrised by one parameter $\eta=\mathrm{Tr}\left[\mathbb{F}_dW_{\eta,d}\right]$, and explicitly expressed as
\begin{equation}
W_{\eta,d}:=\frac{(d-\eta)\mathrm{I}_{d^2}+(d\eta-1)\mathbb{F}_d}{d^3-d}
\end{equation}
with $\mathrm{I}_{d^2}$ the $d^2$ identity operator. The parameter $\eta$ satisfies $\eta\in[-1,1]$.
\end{definition}
These states have continued to have a remarkable influence on the development of quantum information theory - often providing a counter example to a seemingly logical claim. One of the most striking examples of this is the following theorem:
\begin{theorem}[\cite{VW2001}]
The relative entropy of entanglement, $E_R(\rho)$, is \emph{not} additive in general. In particular, we have
\begin{equation}
\frac{E_R(W_{-1,d}\otimes W_{-1,d})}{2} < E_R(W_{-1,d}),\;\;d\geq 3.
\end{equation}
\end{theorem}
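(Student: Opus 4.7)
My plan is to prove the inequality in two stages: first compute $E_R(W_{-1,d})$ exactly by exploiting symmetry, then exhibit a specific separable state $\sigma^{(2)}$ on $\mathcal{H}_d^{\otimes 4}$ with $S(W_{-1,d}^{\otimes 2}\Vert\sigma^{(2)}) < 2 E_R(W_{-1,d})$.

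For the single-copy REE, I would use the $U\otimes U$ symmetry of $W_{-1,d}$. Given any candidate separable state $\sigma$, the twirl $\tilde\sigma = \int dU\, (U\otimes U)\sigma(U\otimes U)^\dagger$ remains separable (as the twirl is an LOCC operation, a convex combination of local unitaries) and, by joint convexity of the relative entropy together with the invariance of $W_{-1,d}$, satisfies $S(W_{-1,d}\Vert\tilde\sigma)\leq S(W_{-1,d}\Vert\sigma)$. Thus the minimization can be restricted to Werner states $W_{\eta,d}$, which are separable iff $\eta\geq 0$. Diagonalising both $W_{-1,d}$ and $W_{\eta,d}$ simultaneously in the symmetric/antisymmetric eigenbasis of $\mathbb{F}_d$ reduces the problem to a one-dimensional calculus optimisation over $\eta\in[0,1]$, giving a closed-form expression for $E_R(W_{-1,d})$ (the minimiser being $\eta=0$, i.e.\ the maximally mixed state, with $E_R(W_{-1,d})=\log\bigl(d/(d-1)\bigr)$ after checking the derivative vanishes there only at this endpoint for $d\geq 2$).

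For the two-copy bound, the central idea is that $W_{-1,d}^{\otimes 2}$ inherits a much larger symmetry group than a naive product: it is invariant under $U^{\otimes 4}$ and also under the swap that exchanges the two copies. I would construct the separable ansatz $\sigma^{(2)}$ as a $U^{\otimes 4}$-invariant mixture supported on products of pure states in the full four-fold Hilbert space, decomposing the latter under $S_4$ and $U(d)$ (Schur--Weyl). Concretely, I would take $\sigma^{(2)}$ to be a uniform (Haar) mixture of states of the form $|\psi\phi\rangle_{A_1A_2}\otimes|\psi\phi\rangle_{B_1B_2}$ with suitable symmetrisation/antisymmetrisation between the $\psi$'s and $\phi$'s, or equivalently a specific convex combination of the projectors onto the isotypic components of $U^{\otimes 4}$ that are separable across the $A_1A_2|B_1B_2$ cut. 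Because $W_{-1,d}^{\otimes 2}$ is supported on $P_{\mathrm{anti}}\otimes P_{\mathrm{anti}}$, only components compatible with this support contribute to $S(W_{-1,d}^{\otimes 2}\Vert\sigma^{(2)})$, and for $d\geq 3$ extra irreducible components of $U^{\otimes 4}$ appear (absent for $d=2$) which allow $\sigma^{(2)}$ to put extra weight on the support, strictly lowering the relative entropy below $2\log\bigl(d/(d-1)\bigr)$.

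The main obstacle is the second stage: identifying and verifying an explicit separable ansatz $\sigma^{(2)}$ whose weight on the antisymmetric-antisymmetric sector strictly exceeds the product value, and doing the diagonalisation carefully enough to extract a clean closed-form bound. This is where the dimensional threshold $d\geq 3$ enters, since for $d=2$ the antisymmetric subspace is one-dimensional and the Schur--Weyl decomposition collapses, leaving no room for the strict improvement. Once the ansatz is fixed, the remaining work is a direct computation of $\mathrm{Tr}[W_{-1,d}^{\otimes 2}\log\sigma^{(2)}]$ using the spectral decomposition given by $U^{\otimes 4}$-isotypic components, and comparison with $2E_R(W_{-1,d})$.
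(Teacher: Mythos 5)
Your high-level strategy (twirl to symmetric states for one copy, exhibit a better separable state for two copies) is the right one, and it is broadly the route taken by Vollbrecht--Werner. But there are concrete errors in Stage 1 and a missing key lemma in Stage 2 that you would need to repair.

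In Stage 1, the twirling-to-Werner-states argument is sound, but the final value is wrong. For $\eta=-1$, $W_{-1,d}$ is supported entirely on the antisymmetric subspace with all nonzero eigenvalues equal to $2/(d(d-1))$, while a separable Werner state $W_{\zeta,d}$ with $\zeta\in[0,1]$ has antisymmetric eigenvalues $(1-\zeta)/(d(d-1))$. The simultaneous-diagonalisation you invoke therefore gives $S(W_{-1,d}\Vert W_{\zeta,d})=\log\bigl(2/(1-\zeta)\bigr)$, whose minimum over $\zeta\in[0,1]$ sits at the endpoint $\zeta=0$ with value $\log 2$, \emph{independent of $d$}. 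Your claimed answer $\log\bigl(d/(d-1)\bigr)$ is incorrect. You also misidentify the minimiser: $\eta=0$ is not the maximally mixed state (that is $\eta=1/d$); it is the normalised projector combination $W_{0,d}=(d\mathrm{I}-\mathbb{F})/(d^3-d)$, which still has distinct eigenvalues on the symmetric and antisymmetric sectors.

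In Stage 2 you have the right intuition but the wrong symmetry group and a missing lemma. The relevant invariance group for $W_{-1,d}^{\otimes 2}$ is $(U_1\otimes U_1)\otimes(U_2\otimes U_2)$ with \emph{independent} unitaries on the two copies (together with the copy swap), not the diagonal $U^{\otimes 4}$. Twirling over this larger group collapses the search for a closest separable state to a three-parameter family of mixtures of $W_{\pm 1,d}\otimes W_{\pm 1,d}$ -- the $n=2$ case of what the thesis writes in Eq.~(\ref{WerCSSD}) -- which is far more tractable than a Schur--Weyl analysis of $U^{\otimes 4}$-invariant states (the latter has a larger commutant, hence a bigger search space and no usable separability characterisation). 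The indispensable ingredient your proposal never names, and which is proved in \cite{VW2001}, is that within this twirl-invariant family at $n=2$ separability coincides with PPT, turning ``find the closest separable state'' into a finite convex programme (the thesis carries this out explicitly via KKT in Chapter~\ref{ch:WernerChannels}, obtaining $E_R^2(W_{\eta,d})$). Without that separability-equals-PPT lemma, your ``explicit ansatz whose weight on the antisymmetric--antisymmetric sector strictly exceeds the product value'' cannot be shown both separable and optimal. Your explanation for why $d\geq 3$ is special (the antisymmetric subspace being one-dimensional at $d=2$) is a reasonable heuristic, but the strict inequality actually drops out of the explicit $E_R^2$ computation rather than a dimension count of irreps.
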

It is results such as this which has motivated the study of the regularised entanglement measures we have seen in previous chapters. This property of subadditivity motivates the work in the next chapter.

\subsection{Isotropic States}
Sitting alongside their close cousins, isotropic states have also become an integral part of quantum theory. They too exhibit the unusual property of admitting no non-local probability distributions for some levels of entanglement\cite{APBTA2007}. Moreover, they have been shown to exhibit ``super-activation of non-locality" - joint measurements on multiple copies of entangled isotropic states can produce non-local distributions, despite being unable to do so on a single copy.\\

Like the Werner states, isotropic states also exhibit an invariance under symmetry. 
\begin{definition}[\cite{HH1997}]
$d$-dimensional isotropic states $I_{\eta,d}$ are exactly the set of states such that, for all $d$-dimensional unitaries $U_d$,
\begin{equation}
\left(U_d\otimes U_d^*\right)I_{\eta,d}\left(U_d\otimes U_d^*\right)^\dagger=I_{\eta,d}.
\end{equation}
\end{definition}
\begin{definition}
The $d$-dimensional maximally entangled operator, $\mathbb{M}_d$ is defined as:
\begin{equation}
\mathbb{M}_d:=\sum_{i,j=0}^{d-1}\ket{ii}\bra{jj}.
\end{equation}
This is the (unnormalised) density matrix of the $d$-dimensional maximally entangled state, $\ket{\Phi}_d$. It relates to $\mathbb{F}_d$ in that $\mathbb{F}_d^{T_B}=\mathbb{M}_d$.
\end{definition}
Using this operator, we can give the exact expression of the isotropic states.
\begin{definition}[\cite{HH1997}]
The $d$-dimensional isotropic states $I_{\eta,d}$ can be parametrised by one parameter $\eta=\mathrm{Tr}\left[\mathbb{M}_d I_{\eta,d}\right]$, and explicitly expressed as
\begin{equation}\label{isoform}
I_{\eta,d}:=\frac{(d-\eta)\mathrm{I}_{d^2}+(d\eta-1)\mathbb{M}_d}{d^3-d}
\end{equation}
and are valid states for $\eta\in[0,d]$.
\end{definition}
From here on, we shall drop $d,d^2$ subscripts from the operators $\mathrm{I},\mathbb{F}$ and $\mathbb{M}$, as they shall be clear from context.
\begin{lemma}
Werner states are related to isotropic states by partial transpose $W_{\eta,d}^{T_B}=I_{\eta,d}$ when $\eta\in[0,1]$. This is the separable region for both families.
\end{lemma}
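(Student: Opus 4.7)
The plan is to verify the identity by direct computation, since both sides are already written in closed form in the excerpt, and then to check that the interval $\eta \in [0,1]$ is exactly where both sides are valid states and coincides with the separability range.

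First I would note that the partial transpose is a linear map, that $\mathrm{I}^{T_B} = \mathrm{I}$ (the identity is a tensor product of local identities), and that the excerpt has already recorded the key fact $\mathbb{F}^{T_B} = \mathbb{M}$. Applying $T_B$ term-by-term to the expression
\begin{equation*}
W_{\eta,d} = \frac{(d-\eta)\mathrm{I} + (d\eta - 1)\mathbb{F}}{d^3 - d}
\end{equation*}
therefore gives
\begin{equation*}
W_{\eta,d}^{T_B} = \frac{(d-\eta)\mathrm{I} + (d\eta - 1)\mathbb{M}}{d^3 - d},
\end{equation*}
which is exactly the expression for $I_{\eta,d}$ in Eq.~(\ref{isoform}). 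This is a one-line calculation and is the entire algebraic content of the lemma.

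Next I would address the range of $\eta$. The Werner parametrisation is valid for $\eta \in [-1,1]$ and the isotropic parametrisation for $\eta \in [0,d]$, so both sides of the identity represent legitimate density matrices precisely on the intersection $\eta \in [0,1]$. Outside this interval the equation still holds as an operator identity but one of the two operators fails to be positive semi-definite, so the statement is only meaningful when $\eta \in [0,1]$.

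Finally, for the separability remark I would invoke the standard PPT-equivalence-separability results for these symmetric families: $W_{\eta,d}$ has positive partial transpose iff $\eta \geq 0$, and since $W_{\eta,d}^{T_B} = I_{\eta,d}$, the state $I_{\eta,d}$ is PPT iff $\eta \leq 1$. For Werner and isotropic states PPT is known to coincide with separability (as they live in the UU-invariant / $U\otimes U^{*}$-invariant subspaces, where a twirling argument explicitly constructs a separable decomposition), so the common range $\eta \in [0,1]$ is indeed the separable region of each family. The main obstacle, if any, is the last step: the algebraic identity is trivial, but the separability claim relies on invoking (rather than re-proving) the twirling-based classification of entanglement for these symmetric families; I would cite the standard results on Werner and isotropic entanglement for this.
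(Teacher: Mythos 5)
Your proof is correct and takes the same approach as the paper: the core step is the one-line term-by-term partial transpose of $W_{\eta,d}$ using linearity, $\mathrm{I}^{T_B}=\mathrm{I}$, and $\mathbb{F}^{T_B}=\mathbb{M}$, which is exactly the paper's computation. Your extra remarks on the parameter ranges and the twirling-based separability classification go beyond what the paper proves in this lemma (the paper leaves those facts to the surrounding unproved lemmas and cited literature), but they are accurate and do not change the essential argument.
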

\begin{proof}
The proof of this lemma is simple:
\begin{align*}
W_{\eta,d}^{T_B}&=\left(\frac{(d-\eta)\mathrm{I}+(d\eta-1)\mathbb{F}}{d^3-d}\right)^{T_B}\\
&=\frac{(d-\eta)\mathrm{I}^{T_B}+(d\eta-1)\mathbb{F}^{T_B}}{d^3-d}\\
&=\frac{(d-\eta)\mathrm{I}+(d\eta-1)\mathbb{M}}{d^3-d}=I_{\eta,d}.
\end{align*}
\end{proof}

\subsection{Basic Properties of Werner and Isotropic states}
In this part, we outline some of the basic properties of these families of states, so that we may better understand them. The focus here will mainly be \emph{entanglement}-based properties.\\

\begin{lemma}
Werner states are separable for $\eta\in[0,1]$ and entangled for $\eta\in[-1,0)$. Isotropic states are separable for $\eta\in[0,1]$, and entangled for $\eta\in(1,d]$. 
\end{lemma}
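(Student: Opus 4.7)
The strategy is to prove separability constructively and entanglement via the PPT criterion (computing eigenvalues of partial transposes explicitly). All four claims follow by combining these with the partial-transpose duality $W_{\eta,d}^{T_B}=I_{\eta,d}$ stated in the previous lemma.

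First I would establish the separability of $W_{\eta,d}$ for $\eta\in[0,1]$ by a twirling argument. Define the $U\otimes U$ twirl
\begin{equation}
\mathcal{T}(\rho):=\int dU\,(U\otimes U)\rho (U\otimes U)^\dagger,
\end{equation}
which, by Schur--Weyl, maps any state into the span of $\{\mathrm{I},\mathbb{F}\}$, i.e.\ into the Werner family, and preserves $\mathrm{Tr}[\mathbb{F}\rho]$. Then I compute $\mathrm{Tr}[\mathbb{F}(\ket{\phi\phi}\bra{\phi\phi})]=1$ and $\mathrm{Tr}[\mathbb{F}(\ket{01}\bra{01})]=0$, so twirling these two separable product states produces $W_{1,d}$ and $W_{0,d}$ respectively. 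Since twirling is a convex combination of local unitary conjugations, it preserves separability; hence both endpoints are separable, and any $\eta\in[0,1]$ is realised as a convex mixture of these two twirled separable states, and so is itself separable. Separability of $I_{\eta,d}$ for $\eta\in[0,1]$ then follows immediately from the identity $I_{\eta,d}=W_{\eta,d}^{T_B}$ proved in the preceding lemma, together with the fact that partial transposition maps separable states to separable states (if $W_{\eta,d}=\sum_i p_i \rho_A^i\otimes\rho_B^i$ then $I_{\eta,d}=\sum_i p_i \rho_A^i\otimes(\rho_B^i)^T$, still a valid separable decomposition).

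For the entanglement claims I would use the PPT criterion. Using $\mathbb{F}^{T_B}=\mathbb{M}$, the partial transpose of $W_{\eta,d}$ is given algebraically by the expression
\begin{equation}
W_{\eta,d}^{T_B}=\frac{(d-\eta)\mathrm{I}+(d\eta-1)\mathbb{M}}{d^3-d}.
\end{equation}
Since $\mathbb{M}$ is $d$ times the projector onto $\ket{\Phi}_d$, this operator has spectrum $\{\eta/d,\,(d-\eta)/(d^3-d)\}$ with multiplicities $1$ and $d^2-1$ respectively. For $\eta\in[-1,0)$ the eigenvalue $\eta/d$ is strictly negative, so $W_{\eta,d}^{T_B}\not\geq 0$ and the Peres--Horodecki criterion yields entanglement. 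An analogous computation for $I_{\eta,d}^{T_B}$, using $\mathbb{M}^{T_B}=\mathbb{F}$ and the eigenvalues $\pm 1$ of $\mathbb{F}$ on the symmetric/antisymmetric subspaces, gives spectrum $\{(1+\eta)/(d(d+1)),\,(1-\eta)/(d(d-1))\}$; the second eigenvalue is negative precisely when $\eta>1$, proving entanglement of $I_{\eta,d}$ on $(1,d]$.

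Finally I would note that the cases $\eta\in\{-1,1\}$ for Werner and $\eta\in\{1,d\}$ for isotropic states are limiting and agree with both the twirling construction (the symmetric/antisymmetric projectors are the extremal Werner states) and the eigenvalue calculation. The main subtlety is really just the first step: one must verify that twirling genuinely produces the parameters $\eta=1$ and $\eta=0$ from explicit separable seeds; the PPT computations are routine linear algebra on the two-dimensional operator algebra spanned by $\mathrm{I}$ and $\mathbb{F}$ (or $\mathbb{M}$). No regularisation or asymptotic argument is needed, making this a clean finite-dimensional result.
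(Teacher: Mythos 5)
Your proof is correct and follows essentially the same route as the paper: the paper establishes separability of $W_{\eta,d}$ for $\eta\in[0,1]$ by twirling product states (its remark in the twirling subsection twirls $\ket{u}\bra{u}\otimes(\alpha\ket{u}\bra{u}+(1-\alpha)\ket{v}\bra{v})$ to hit the full range, while you twirl two endpoints and mix), deduces separability of $I_{\eta,d}$ from partial transposition, and establishes entanglement from the explicit eigenvalues of the partial transposes, exactly as in the negativity computations in section 3.2. The only cosmetic difference is that you construct separable seeds for the two extreme parameters and appeal to convexity rather than parameterising a one-parameter family of separable seeds, which is an entirely equivalent presentation.
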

%For the entangled region, we may use the PPT criterion. For the separable region however, we must remember this is only a \emph{necessary} condition, and not sufficient to show separability.\footnote{Except for $d=2$}. A nice proof of separability of Werner state $\eta\in[0,1]$ is given in \cite{W2018}. In it, the twirl \ref{WernTwirl} is shown to be a LOCC operation, and is applied to separable states of the form $\ket{u}\bra{u}\otimes\left(\alpha\ket{u}\bra{u}+(1-\alpha)\ket{v}\bra{v}\right)$ to generate exactly the Werner states $\eta\in[0,1]$. Since the partial transpose of a separable state is also separable, this gives us the separability of $I_{\eta,d}$ for $\eta\in[0,1]$ also.\\

\begin{lemma}
The relative entropy of entanglement of $W_{\eta,d}$ is strictly subadditive for the region $\eta\in[-1,-2/d)$ and additive elsewhere, whilst the relative entropy of entanglement is additive for isotropic states.
\end{lemma}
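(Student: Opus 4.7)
The plan is to exploit the rich symmetries of each family to reduce the computation of $E_R$ to a one-parameter optimisation, then analyse how the optimiser behaves under tensor products. The twirl $\mathcal{T}_W(\rho) := \int dU \, (U\otimes U)\rho (U\otimes U)^\dagger$ is an LOCC operation (shared randomness and local unitaries) and therefore maps separable states to separable states; moreover $\mathcal{T}_W(W_{\eta,d})=W_{\eta,d}$. By joint convexity of relative entropy, the closest separable state $\sigma^{*}$ to $W_{\eta,d}$ can be assumed $\mathcal{T}_W$-invariant, hence lies in the commutant spanned by $\mathrm{I}$ and $\mathbb{F}$. Intersecting with the separable cone yields $\sigma^{*}=W_{\mu,d}$ for some $\mu\in[0,1]$, so
\begin{equation}
E_R(W_{\eta,d})=\min_{\mu\in[0,1]} S(W_{\eta,d}\,\Vert\,W_{\mu,d}),
\end{equation}
which is tractable in the eigenbasis $\{P_+,P_-\}$ of $\mathbb{F}$. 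The analogous $(U\otimes U^{*})$-twirl for isotropic states yields $E_R(I_{\eta,d})=\min_{\mu\in[0,1]} S(I_{\eta,d}\,\Vert\,I_{\mu,d})$ and an explicit closed form.

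For isotropic states, I would establish additivity by showing that the unique optimiser $\mu^{*}(\eta)$ has the property $I_{\mu^{*},d}^{\otimes n}$ is still the closest separable state to $I_{\eta,d}^{\otimes n}$. The direction $E_R(I_{\eta,d}^{\otimes n})\leq n E_R(I_{\eta,d})$ is free from the product choice of closest separable state; for the converse I would apply the same twirl argument on the doubled system, using the fact that the local twirl on each factor (still LOCC) forces the optimiser into the \emph{product} isotropic cone, where the problem decouples into $n$ independent single-copy optimisations giving exactly $n E_R(I_{\eta,d})$.

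The Werner case splits according to where $\mu^{*}(\eta)$ lies in $[0,1]$. Solving $\partial_\mu S(W_{\eta,d}\,\Vert\,W_{\mu,d})=0$ yields an explicit interior optimiser for large enough $\eta$, but at $\eta=-2/d$ the stationarity equation forces $\mu^{*}$ to the boundary of $[0,1]$, changing the functional form of $E_R(W_{\eta,d})$. For $\eta\in[-2/d,0)$, the single-copy minimiser still factorises: on two copies the twirl argument reduces the ansatz to Werner-symmetric states, where the minimiser is exactly $W_{\mu^{*},d}^{\otimes 2}$, giving additivity by induction.

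For $\eta\in[-1,-2/d)$, strict subadditivity requires exhibiting a separable state on $(\mathcal{H}_d\otimes\mathcal{H}_d)^{\otimes 2}$ strictly better than $W_{\mu^{*},d}^{\otimes 2}$. The key observation is that $W_{\eta,d}^{\otimes 2}$ has additional $S_2$ swap symmetry between the two copies; states invariant under the combined group lie in the algebra generated by $\mathrm{I}$, $\mathbb{F}\otimes\mathrm{I}$, $\mathrm{I}\otimes\mathbb{F}$, $\mathbb{F}\otimes\mathbb{F}$, and the cross-swap operator between the copies. The separable subset of this four-parameter family strictly contains $\{W_{\mu,d}\otimes W_{\mu,d}\}$, so one can lower the relative entropy. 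I would construct a witness by twirling a carefully chosen product state (inspired by Vollbrecht--Werner) to sit in this enlarged commutant, verify separability via an explicit convex decomposition, and compute the relative entropy analytically. The main obstacle is precisely this verification: identifying the separable boundary of the enlarged commutant and showing the optimum within it beats $2E_R(W_{\eta,d})$ exactly on the region $\eta<-2/d$, thereby matching the transition threshold to the single-copy boundary computation.
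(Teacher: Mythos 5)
The paper does not actually prove this lemma: it is stated as a known fact and deferred to the literature, specifically Vollbrecht--Werner \cite{VW2001} (subadditivity of $E_R$ for Werner states) and Audenaert \emph{et al.} \cite{AEJPVM2001} (closed forms for the regularised RPPT), with the concrete two-copy computation carried out later in chapter~\ref{ch:WernerChannels}. Your proposal is therefore attempting a genuine proof where the paper cites; that is legitimate, but it contains a conceptual error that needs flagging.

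The symmetry-reduction step is sound: the twirl is LOCC, $W_{\eta,d}$ is twirl-invariant, joint convexity lets you restrict the closest separable state to the Werner line, and separable twirl-invariant states are exactly $W_{\mu,d}$ with $\mu\in[0,1]$. The error is your claim that the threshold $\eta=-2/d$ is where the stationarity equation pushes $\mu^{*}$ to the boundary of $[0,1]$, ``changing the functional form of $E_R(W_{\eta,d})$''. Work out the stationarity condition: with
\begin{equation}
S(W_{\eta,d}\Vert W_{\mu,d})=\frac{1+\eta}{2}\log\frac{1+\eta}{1+\mu}+\frac{1-\eta}{2}\log\frac{1-\eta}{1-\mu},
\end{equation}
the unconstrained minimiser is $\mu=\eta$, which lies outside $[0,1]$ for \emph{every} entangled Werner state $\eta<0$. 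Hence the constrained minimiser is the boundary point $\mu^{*}=0$ throughout the whole entangled range $[-1,0)$, and the single-copy REE has one and the same functional form on all of it. Nothing special happens to the one-copy optimisation at $\eta=-2/d$. The threshold is a purely \emph{multi-copy} phenomenon: it is the point at which the two-copy closest-separable optimisation ceases to return the product $W_{0,d}^{\otimes 2}$. This is precisely what the KKT analysis in chapter~\ref{ch:WernerChannels} shows --- the product solution (Case~4 there) is valid only for $p\in[1/2,1/2+1/d]$, i.e.\ $\eta\in[-2/d,0]$, while a genuinely non-product symmetric separable state takes over for $\eta<-2/d$. Your proposal needs to drop the ``boundary transition in the single-copy problem'' narrative and instead locate the threshold entirely inside the two-copy (Vollbrecht--Werner) commutant analysis.

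Two further soft spots: (i) your additivity argument for isotropic states asserts that the local twirl ``forces the optimiser into the product isotropic cone''. It doesn't --- twirling each factor separately only lands you in the span of $\{\mathrm{I},\mathbb{M}\}^{\otimes 2}$, which strictly contains product isotropic states. You would still need to show that the relative-entropy minimum over the full separable part of that four-parameter family is attained at a product, which is the substantive content of the isotropic additivity result rather than something the symmetry reduction hands you for free. (ii) For the subadditive Werner region you correctly identify the enlarged commutant on two copies, but ``carefully chosen product state, twirl it, verify separability, compute the relative entropy'' is exactly the hard part; as written it restates the problem rather than solving it. The concrete route --- parametrising the separable twirl-invariant states by $\mathbf{x}$ subject to the PPT/separability conditions (\ref{PPTcons}) and running KKT --- is what both \cite{VW2001} and the paper's chapter~\ref{ch:WernerChannels} actually do.
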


\subsubsection{Eigensystems of Werner and Isotropic States}\label{eigensystemsWI}
Werner states have the following eigensystem:
\begin{itemize}
\item $d(d-1)/2$ eigenvectors of the form $\left(\ket{ij}+\ket{ji}\right)/\sqrt{2}$, with eigenvalue $\left(1+\eta\right)/\left(d(d+1)\right)$,
\item $d$ eigenvectors of the form $\ket{ii}$, each with eigenvalue $\left(1+\eta\right)/\left(d(d+1)\right)$.
\item $d(d-1)/2$ eigenvectors of the form $\left(\ket{ij}-\ket{ji}\right)/\sqrt{2}$, ($i>j$) with eigenvalue $\left(1-\eta\right)/\left(d(d-1)\right)$.
\end{itemize}
These are easy to verify since $\left(\ket{ij}+\ket{ji}\right)/\sqrt{2},\ket{ii}$ are eigenvectors of the flip operator with value 1, whilst $\left(\ket{ij}-\ket{ji}\right)/\sqrt{2}$ are eigenvectors with value -1; as Werner states are linear combinations of $\mathbb{F}$ and $\mathrm{I}$, it is easy to generate the above system.\\

One noticeable property is that the eigenvectors of $W_{\eta,d}$ are independent of $\eta$, meaning the family of Werner states (for a given dimension) are \emph{simultaneously diagonalisable} - a convenient property for calculating things such as relative entropy, as we shall see in the next chapter.\\

Isotropic states have a similarly well structured eigensystem. They have:
\begin{itemize}
\item 1 eigenvector $\sum_{i=0}^{d-1} \ket{ii}/\sqrt{d}$, with eigenvalue $\eta/d$.
\item $d(d-1)$ eigenvectors $\ket{ij},\;i\neq j$, each with eigenvalue $\left(d-\eta\right)/\left(d^3-d\right)$.
\item  $d-1$ eigenvectors $\ket{v_k}=\sqrt{k/\left(k+1\right)}\ket{kk}-\sum_{j=0}^{k-1}\ket{jj}/\sqrt{k(k+1)},\;k\in\{1,\ldots d-1\}$. These also have eigenvalue $\left(d-\eta\right)/\left(d^3-d\right)$.
\end{itemize}
As before, these can be relatively easily verified - the eigenvector $\sum_{i=0}^{d-1} \ket{ii}/\sqrt{d}$ is the sole eigenvector of the rank 1 matrix $\mathbb{M}$, with eigenvalue $d$ - the other eigenvectors provide an orthonormal basis for the nullspace of $\mathbb{M}$. We can then use the same linearity argument as for Werner states.\\

Given that we have these decompositions, we may easily give the negativity of both states: 
\begin{align}
N(W_{\eta,d})&=\frac{\abs{\eta}+\abs{d-\eta}-d}{2d},\\
N(I_{\eta,d})&=\frac{\abs{1-\eta}+\abs{1+\eta}-2}{4}.
\end{align}
\begin{figure}
\begin{center}
\includegraphics{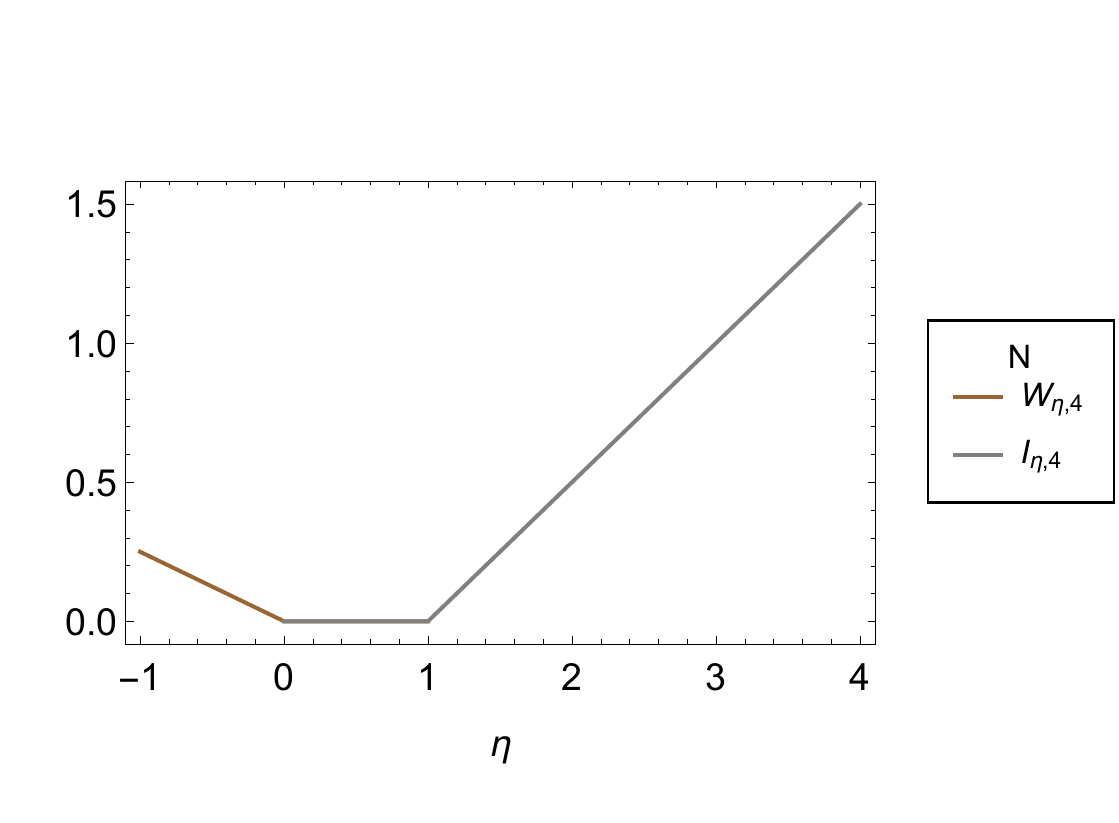}
\caption[Negativity of $W_{\eta,4}$ and $I_{\eta,4}$]{The negativity of the states $W_{\eta,4}$ and $I_{\eta,4}$. Both are 0 in the separable range $[0,1]$.}
\label{Expectation Negativity}
\end{center}
\end{figure}
\begin{figure}
\begin{center}
\includegraphics{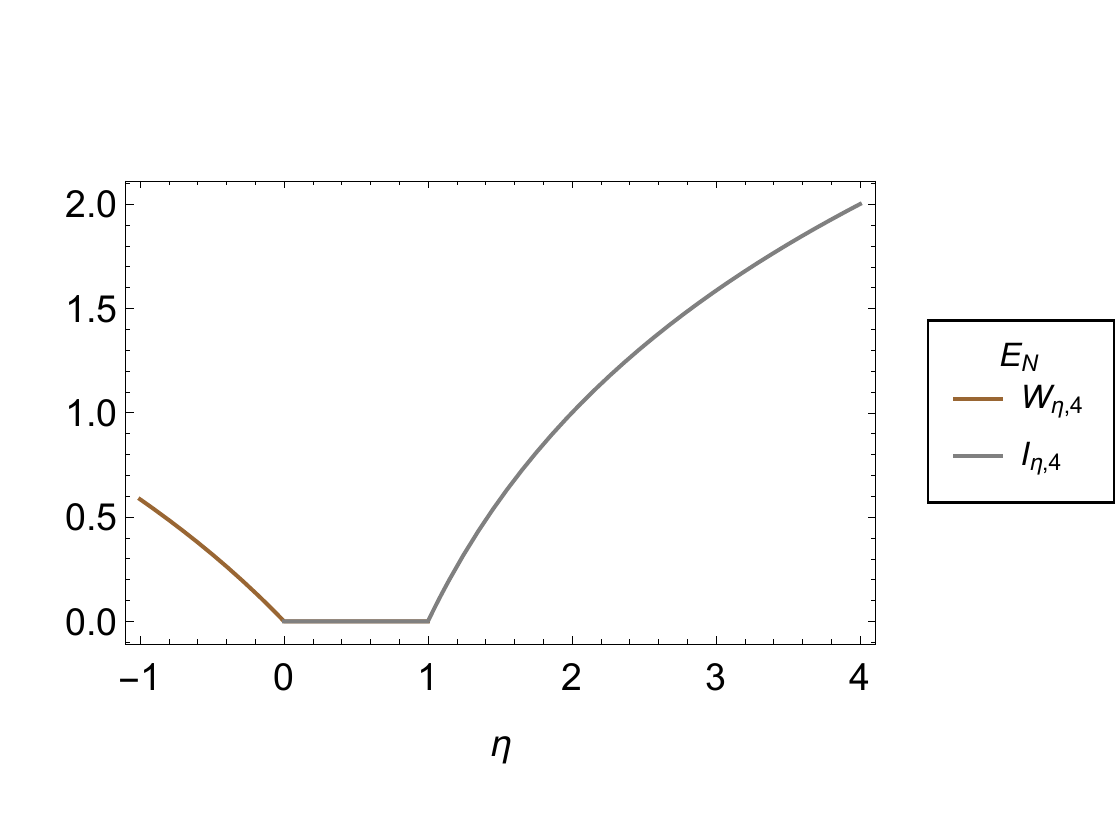}
\caption[Logarithmic negativity of $W_{\eta,4}$ and $I_{\eta,4}$]{We also plot the log negativity of $W_{\eta,4}$, $I_{\eta,4}$, which provides an upper bound to the entanglement of distillation.}
\label{LogExpectation Negativity}
\end{center}
\end{figure}

\subsection{Separability Criterion and Bound Entanglement}
The isotropic states were introduced by Michal and Pawel Horodecki in order to provide a sufficiency criterion for distilling states: the \emph{reduction criterion}. By transforming states into the form~(\ref{isoform}), and then providing an explicit protocol for the distillation of isotropic states, this gives a method for distilling quantum states into maximally entangled Bell pairs. The transformation used to obtain these states is known as twirling, which we define in section \ref{ch3:first:twirl}.\\

\begin{samepage}
Using these states, the Horodeckis were able to prove the following:
\begin{theorem}[Reduction Criterion]
For any separable state $\rho_{AB}$ the following must hold:
\begin{align}
\rho_A\otimes \mathrm{I}_B - \rho_{AB} &\geq 0,\label{reduction1}\\
 \mathrm{I}_A\otimes \rho_B - \rho_{AB} &\geq 0\label{reduction2}
 \end{align}
 with $\rho_A=\mathrm{Tr}_B[\rho_{AB}],\;\rho_B=\mathrm{Tr}_A[\rho_{AB}]$. If a state violates either of these conditions, then it is necessarily \emph{distillable}. This means that we can distill $n$ copies of the state into $m$ Bell pairs using LOCC operations only, achieving a conversion rate $m/n> 0$ - the optimal rate is the distillable entanglement seen in chapter \ref{ch:LitRev}.\\
\end{theorem}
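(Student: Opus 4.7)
The theorem has two logically independent parts: the reduction inequalities are a necessary condition for separability (the ``only if'' direction), and their violation is sufficient for distillability. I would treat them in sequence.

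For the separability direction, the plan is a direct convex decomposition argument. Writing $\rho_{AB} = \sum_i p_i\, \rho_A^i \otimes \rho_B^i$, one has $\rho_A = \sum_i p_i \rho_A^i$, so
\begin{equation}
\rho_A \otimes \mathrm{I}_B - \rho_{AB} = \sum_i p_i\,\rho_A^i \otimes \bigl(\mathrm{I}_B - \rho_B^i\bigr),
\end{equation}
which is a convex combination of tensor products of positive operators (every density matrix satisfies $\rho_B^i \leq \mathrm{I}_B$) and hence is itself positive. The companion inequality follows by swapping the roles of Alice and Bob.

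For the distillability direction, the plan is to reduce to the known distillability of isotropic states $I_{\eta,d}$ with $\eta > 1$, which is precisely what motivated their introduction earlier in the chapter. Assuming $\rho_A$ is invertible (the degenerate case is handled by restricting to its support via a projective measurement on Alice's side), I would apply the local filter $F_A = \rho_A^{-1/2}$ to obtain $\tilde{\rho}_{AB} = (\rho_A^{-1/2} \otimes \mathrm{I})\rho_{AB}(\rho_A^{-1/2} \otimes \mathrm{I})$, which satisfies $\tilde{\rho}_A = \mathrm{I}_A$ and hence $\mathrm{Tr}[\tilde{\rho}_{AB}] = d$. Conjugation of the reduction inequality by the same invertible operator preserves definiteness, so a violation of $\rho_A \otimes \mathrm{I} - \rho_{AB} \geq 0$ becomes equivalent to $\tilde{\rho}_{AB}$ having a maximal eigenvalue $\lambda > 1$ with some eigenvector $\ket{\psi}$. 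Writing its Schmidt decomposition $\ket{\psi} = \sum_i \sqrt{\lambda_i}\ket{e_i}\ket{f_i}$ and applying local unitaries aligning the Schmidt bases with the computational basis, the singlet fidelity becomes at least $\lambda(\sum_i \sqrt{\lambda_i})^2/d$. A final $U \otimes U^*$ twirl then produces an isotropic state of parameter $\eta = \mathrm{Tr}[\mathbb{M}\tilde{\rho}_{AB}]/d$, and distillability follows from the isotropic-state protocol already cited.

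The main obstacle is the quantitative estimate in the penultimate step, namely showing $\lambda\bigl(\sum_i \sqrt{\lambda_i}\bigr)^2 > 1$. This rests on two ingredients I would need to verify carefully: first, the constraint $\tilde{\rho}_A = \mathrm{I}$ together with $\tilde{\rho}_{AB}\ket{\psi} = \lambda\ket{\psi}$ forces the Schmidt coefficients to satisfy $\lambda_i \leq 1/\lambda < 1$, so no $\lambda_i$ equals $1$ and $\ket{\psi}$ is genuinely entangled; second, the elementary inequality $\sum_i \sqrt{\lambda_i} > 1$ holds for any entangled pure state with $\sum_i \lambda_i = 1$, since $\sqrt{\lambda_i} \geq \lambda_i$ with equality only at $\lambda_i \in \{0,1\}$. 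Together with $\lambda > 1$ these yield $\eta > 1$ and hence distillability. This Schmidt analysis is the only part of the argument that is not purely mechanical, and is where I expect the bulk of the proof to lie.
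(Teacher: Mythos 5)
The paper does not prove this theorem (it is cited from the literature), so your proposal stands on its own; the separability half is standard and correct, but the distillability half has a factor-of-$d$ gap. You correctly obtain $\tilde\rho_{AB}$ with trace $d$, $\tilde\rho_A=\mathrm{I}$, and maximal eigenvalue $\lambda>1$ on eigenvector $\ket{\psi}=\sum_i\sqrt{\lambda_i}\ket{e_i}\ket{f_i}$. After aligning the Schmidt basis by local unitaries, the twirled \emph{normalized} state $\tilde\rho_{AB}/d$ has isotropic parameter $\eta=\mathrm{Tr}[\mathbb{M}\tilde\rho_{AB}]/d=\bra{\Phi}_d\tilde\rho_{AB}\ket{\Phi}_d$, and the bound $\tilde\rho_{AB}\geq\lambda\ket{\psi}\bra{\psi}$ gives $\eta\geq\frac{\lambda}{d}\left(\sum_i\sqrt{\lambda_i}\right)^2$. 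Distillability requires $\eta>1$, hence $\lambda\left(\sum_i\sqrt{\lambda_i}\right)^2>d$ --- not $>1$ as you state. Your Schmidt estimates ($\lambda>1$, $\lambda_i\leq1/\lambda$, $\sum_i\sqrt{\lambda_i}>1$) are all correct, but only yield $\lambda\left(\sum_i\sqrt{\lambda_i}\right)^2>1$; for $\lambda$ just above $1$ with the Schmidt weight concentrated on two terms this quantity is approximately $1+2\sqrt{\lambda-1}$, far below $d$ for any $d\geq2$.

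The underlying issue is that a local \emph{unitary} only rotates $\ket{\psi}$ into $\sum_i\sqrt{\lambda_i}\ket{ii}$ without reshaping the Schmidt spectrum, so its overlap with $\ket{\Phi}_d$ remains of order $1/\sqrt{d}$. The Horodecki--Horodecki argument instead uses a Schmidt-\emph{weighted} local filter: with $\ket{\psi}=\sum_i\sqrt{\lambda_i}\ket{e_i}\ket{f_i}$, take $A=\sum_i\sqrt{\lambda_i}\ket{i}\bra{e_i}$ and $B=\sum_i\ket{i}\bra{f_i}$. Then $(A^\dagger\otimes B^\dagger)\ket{\Phi}_d=\frac{1}{\sqrt{d}}\ket{\psi}$ gives $\bra{\Phi}_d(A\otimes B)\tilde\rho_{AB}(A^\dagger\otimes B^\dagger)\ket{\Phi}_d=\lambda/d$, while $\mathrm{Tr}\left[(A^\dagger A\otimes B^\dagger B)\tilde\rho_{AB}\right]\leq\mathrm{Tr}\left[A^\dagger A\,\tilde\rho_A\right]=\sum_i\lambda_i=1$, so the normalized singlet fidelity is at least $\lambda/d>1/d$ and the isotropic twirl succeeds. (The original proof in fact skips the $\rho_A^{-1/2}$ step entirely, applying such a filter directly to $\rho_{AB}$ via the identity $\bra{\Phi}_r(A\otimes B)\rho_{AB}(A^\dagger\otimes B^\dagger)\ket{\Phi}_r=\frac{1}{r}\bra{\psi}\rho_{AB}\ket{\psi}$, with $\ket{\psi}$ the vector witnessing the violation and $r$ its Schmidt rank.) Replacing your unitary-alignment step with such a filter repairs the argument.
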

\end{samepage}
The reduction criterion is an example of  a \emph{separability criterion}, the violation of which proves entanglement of the state. Another common separability criterion is the  positive partial transpose (PPT) criterion, which states:
\begin{theorem}[\cite{P1996}]
For any separable state $\rho_{AB}$, the partial transpose $\rho^{T_B}=\left(\mathbb{I}\otimes T_B\right)\left(\rho\right)$ must be positive semidefinite.
Thus if $\rho^{T_B}\ngeq 0$ then $\rho$ is entangled.
\end{theorem}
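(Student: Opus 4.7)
The plan is to prove the contrapositive directly: I will show that if $\rho_{AB}$ is separable, then $\rho^{T_B} \geq 0$, and the entanglement statement then follows immediately as the contrapositive.

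First I would begin from the definition of separability given earlier in the excerpt, namely that a separable bipartite state admits a decomposition
\begin{equation}
\rho_{AB} = \sum_i q_i\, \rho_A^i \otimes \rho_B^i, \qquad q_i \geq 0,\quad \sum_i q_i = 1,
\end{equation}
with each $\rho_A^i, \rho_B^i$ a valid density operator on the respective subsystem. The partial transpose is linear and acts trivially on the first factor, so applying it term-by-term yields
\begin{equation}
\rho^{T_B}_{AB} = (\mathbb{I} \otimes T)\left(\sum_i q_i\, \rho_A^i \otimes \rho_B^i\right) = \sum_i q_i\, \rho_A^i \otimes (\rho_B^i)^T.
\end{equation}

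Next, the key observation is that the transpose is a positive (though not completely positive) map: for any density matrix $\sigma$, the matrix $\sigma^T$ is Hermitian with the same spectrum as $\sigma$, since transposition preserves eigenvalues. Hence $(\rho_B^i)^T \geq 0$ and $\mathrm{Tr}[(\rho_B^i)^T]=1$, so $(\rho_B^i)^T$ is itself a legitimate density operator. Consequently each summand $\rho_A^i \otimes (\rho_B^i)^T$ is a tensor product of positive semidefinite operators and is therefore positive semidefinite. A convex combination of positive semidefinite operators is positive semidefinite, so $\rho^{T_B} \geq 0$, proving the forward direction. The entanglement criterion follows by contraposition: if $\rho^{T_B} \not\geq 0$, then $\rho$ cannot admit any separable decomposition and must be entangled.

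This proof is essentially computational and involves no significant obstacle; the only subtle point worth emphasising is \emph{why} partial transposition preserves positivity on product states even though it is not completely positive in general. The resolution is that here we never apply $\mathbb{I} \otimes T$ to an entangled state in the argument — the sum is performed \emph{after} the map acts on separable pieces, and on product states positivity of $T$ alone suffices. It is precisely this gap between positivity and complete positivity of the transpose that makes the criterion non-trivial and produces entangled states with positive partial transpose (the bound entangled states alluded to earlier in the chapter), which show that the converse of this theorem fails in dimensions beyond $2 \otimes 2$ and $2 \otimes 3$.
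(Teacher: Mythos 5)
Your proof is correct and complete; note that the paper itself does not prove this theorem — it is simply stated and attributed to Peres via the citation \cite{P1996}, so there is no proof in the paper to compare against. What you have written is the standard argument: linearity of $\mathbb{I}\otimes T$, the fact that transposition is a positive (trace- and spectrum-preserving) map on each factor, and closure of the positive semidefinite cone under nonnegative combinations. Your closing remark correctly identifies the gap between positivity and complete positivity of $T$ as the source of the criterion's nontriviality and of the PPT bound entangled states discussed later in the chapter. One minor point worth noting for alignment with the paper's conventions: the paper's definition of a separable density matrix takes the product factors $\rho_A^i,\rho_B^i$ to be pure projectors $\ket{\phi_i}\bra{\phi_i}\otimes\ket{\psi_i}\bra{\psi_i}$; your argument applies verbatim (and is slightly more general, allowing mixed factors), since transposition preserves positivity of any density operator, not just rank-one ones.
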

For identifying entangled states, the PPT criterion is stronger than the reduction criterion; for example, entangled Werner states $d\geq 3$ all satisfy the reduction criterion, but violate the PPT criterion. Unlike the reduction criterion, violation of the PPT criterion does not guarantee distillability. Instead, the following is true.
\begin{theorem}\label{boundent}
If an entangled state $\rho$ satisfies the PPT criterion, it is undistillable. We call such states \emph{bound entangled} states.
\end{theorem}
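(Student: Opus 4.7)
The plan is to use the logarithmic negativity as a single-letter upper bound on the distillable entanglement, which sidesteps the need to reason directly about the asymptotic LOCC optimisation. From the earlier discussion in Section~\ref{ch1:second:a}, we have $E_D(\rho) \leq E_N(\rho) = \log \lVert \rho^{T_B}\rVert_1$ for every state $\rho$, and this is all we need.

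First I would observe that if $\rho$ satisfies the PPT criterion, then $\rho^{T_B} \geq 0$, so its (real) eigenvalues $\{\lambda_i\}$ are all non-negative. Partial transposition preserves the trace, since it acts as the identity on the diagonal in the chosen product basis; therefore
\begin{equation}
\lVert \rho^{T_B}\rVert_1 = \sum_i |\lambda_i| = \sum_i \lambda_i = \mathrm{Tr}[\rho^{T_B}] = \mathrm{Tr}[\rho] = 1.
\end{equation}
Hence $E_N(\rho) = \log 1 = 0$, and combining this with the bound quoted above gives $E_D(\rho) \leq 0$. Since $E_D$ is manifestly non-negative from its definition as a supremum of achievable rates, we conclude $E_D(\rho) = 0$; no LOCC protocol can extract maximally entangled pairs from $\rho$ at any positive asymptotic rate, which is exactly the statement that $\rho$ is undistillable.

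If instead one wanted to avoid citing the $E_D \leq E_N$ bound and argue from scratch, the back-up plan would be structural: (i) PPT is closed under tensor products, since $(\rho \otimes \sigma)^{T_B} = \rho^{T_B} \otimes \sigma^{T_B}$, so each $\rho^{\otimes n}$ is PPT; (ii) PPT is preserved under separable (and therefore LOCC) operations, via the identity $[(A \otimes B)\rho(A \otimes B)^\dagger]^{T_B} = (A \otimes \bar B)\,\rho^{T_B}\,(A \otimes \bar B)^\dagger$, so each term in a separable Kraus decomposition maps $\rho^{T_B} \geq 0$ to something $\geq 0$; and (iii) the target state $\ket{\Phi^+}\bra{\Phi^+}^{\otimes rn}$ is NPT for every $r>0$, since $[\ket{\Phi^+}\bra{\Phi^+}]^{T_B}$ has eigenvalue $-1/2$. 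One then needs a contradiction with trace-norm convergence, obtained from the elementary inequality $\lVert \sigma - \tau\rVert_1 \geq 2N(\tau)$ whenever $\sigma$ is PPT, which shows the distilled sequence cannot approach the NPT target.

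The main obstacle is essentially absent once the log-negativity upper bound is invoked: the entire proof reduces to a one-line spectral observation about $\rho^{T_B}$. The only subtlety, if one pursues the direct approach instead, lies in the final convergence step, where one must rule out a PPT sequence approaching an NPT limit at positive rate — but this is exactly what the negativity of the target quantifies, so the argument closes cleanly.
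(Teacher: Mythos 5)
The paper does not actually prove Theorem~\ref{boundent}; it is stated as a known result from the literature, so there is no proof of record to compare against. Your proposal supplies one, and the main line of argument is correct: the paper quotes the bound $E_D(\rho) \leq E_N(\rho)$ from \cite{VW2002} in Section~\ref{ch1:second:a}, and you correctly observe that for PPT $\rho$ all eigenvalues of $\rho^{T_B}$ are non-negative, so
\begin{equation}
\norm{\rho^{T_B}}_1 = \sum_i \abs{\lambda_i} = \sum_i \lambda_i = \mathrm{Tr}\left[\rho^{T_B}\right] = \mathrm{Tr}\left[\rho\right] = 1,
\end{equation}
giving $E_N(\rho) = 0$ and hence $E_D(\rho) = 0$. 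This is the cleanest standard route and sits naturally on top of the machinery the paper has already set up.

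The back-up structural plan is sound in outline (PPT is preserved under tensor powers and under LOCC, and the target is NPT), but the closing inequality you quote, $\norm{\sigma - \tau}_1 \geq 2N(\tau)$ for PPT $\sigma$, is false as stated: the left-hand side is always at most $2$, whereas the negativity of $\ket{\Phi}_{\,2}\bra{\Phi}^{\otimes rn}$ grows like $(2^{rn}-1)/2$, so the right-hand side is unbounded. The repair is the standard overlap bound: for any PPT state $\sigma$,
\begin{equation}
\mathrm{Tr}\left[\sigma\,\ket{\Phi}_{\,d}\bra{\Phi}\right] = \frac{1}{d}\mathrm{Tr}\left[\sigma^{T_B}\,\mathbb{F}_d\right] \leq \frac{1}{d}\mathrm{Tr}\left[\sigma^{T_B}\right] = \frac{1}{d},
\end{equation}
using $\norm{\mathbb{F}_d}_\infty = 1$ and $\sigma^{T_B}\geq 0$, from which $\norm{\sigma - \ket{\Phi}_{\,d}\bra{\Phi}}_1 \geq 2\left(1 - 1/d\right)$. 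With $d = 2^{rn}$ this keeps the trace distance bounded away from zero for every $r>0$, which is the contradiction you want. Since you offer this only as an alternative, the flaw does not undermine the proposal's correctness, but the inequality as written would mislead if taken literally.
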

For $\mathcal{H}_2\otimes \mathcal{H}_2$ and $\mathcal{H}_2\otimes \mathcal{H}_3$, the PPT criterion is sufficient for separability; but for higher dimensions explicit examples of bound entangled states have been found \cite{H1997}. A natural question that arises is - is PPT \emph{necessary} for undistillability? Or do there exist negative partial transpose (NPT) states $\rho^{T_B}\ngeq 0$ which are undistillable also? This has led to the following conjecture:
\begin{conjecture}[\cite{HH1997}]\label{NPTconj1}
Entangled Werner states $W_{\eta,d}$ with $d\geq 3$ and \\
$\eta\in[\left(2-d\right)/\left(2d-1\right),0)$ are NPT bound entangled.
\end{conjecture}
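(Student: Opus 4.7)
The plan splits into two parts: verifying the NPT status and establishing undistillability. The first is a direct eigenvalue computation. Applying partial transposition to the explicit form $W_{\eta,d} = [(d-\eta)\mathrm{I} + (d\eta-1)\mathbb{F}]/(d^3-d)$ and using $\mathbb{F}^{T_B} = \mathbb{M}$ from the excerpt, one obtains $W_{\eta,d}^{T_B} = [(d-\eta)\mathrm{I} + (d\eta-1)\mathbb{M}]/(d^3-d)$. Since $\mathbb{M}$ has a single eigenvalue $d$ on $\ket{\Phi}_d$ and is zero on the $(d^2-1)$-dimensional orthogonal complement, the partial transpose has spectrum $\{\eta/d,\,(d-\eta)/(d^3-d)\}$ with multiplicities $\{1, d^2-1\}$. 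For $\eta \in [(2-d)/(2d-1), 0)$ the first eigenvalue is strictly negative, so these states are NPT.

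For undistillability I would invoke the standard characterisation: $\rho$ is distillable if and only if there exist $n \in \mathbb{N}$ and a Schmidt-rank-2 vector $\ket{\psi}$ in $(\mathbb{C}^d)^{\otimes n} \otimes (\mathbb{C}^d)^{\otimes n}$ with $\bra{\psi}(\rho^{\otimes n})^{T_B}\ket{\psi} < 0$. The goal is to show no such $\ket{\psi}$ exists when $\eta$ lies in our range. The $n=1$ step is tractable by direct computation: writing $\ket{\psi} = \lambda_0\ket{e_0 f_0} + \lambda_1\ket{e_1 f_1}$ with $\lambda_0^2 + \lambda_1^2 = 1$ and evaluating against the explicit form of $W_{\eta,d}^{T_B}$, one gets an expression proportional to $(d-\eta) - 2\lambda_0\lambda_1(d\eta-1)$, which is non-negative over all admissible $\lambda_0,\lambda_1$ precisely when $\eta \geq (2-d)/(2d-1)$. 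This matches the claimed left endpoint, giving one-copy undistillability.

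The hard part---and the reason this remains a conjecture rather than a theorem---is extending the one-copy analysis to arbitrary $n$. Here I would exploit the $U\otimes U$ symmetry defining Werner states: $W_{\eta,d}^{\otimes n}$ lies in the commutant of the diagonal $U(d)$ action on $n$ copies, and by Schur--Weyl duality this commutant decomposes into blocks labelled by irreducible representations of $S_n$, reducing the optimisation over $\ket{\psi}$ to a finite-dimensional problem parameterised by Young diagrams with at most $d$ rows. The strategy would be either an inductive step from $n$-undistillability to $(n+1)$-undistillability, or a uniform positivity bound across all irreps.

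The main obstacle is that known approaches fail to control the off-diagonal contributions coming from Schmidt-rank-2 vectors that mix different irreducible components, and the effective ``worst-case'' projector can depend intricately on $n$. I would therefore write up the proof cleanly through $n=1$, present the Schur--Weyl reduction as a structural framework for general $n$, and isolate the remaining gap as an explicit positivity inequality on characters of $S_n$ that would close the conjecture. The following sections of the chapter presumably attack special sub-families (the ``phase Werner'' states) where this representation-theoretic analysis simplifies enough to make genuine progress, and I would align my plan with that scaffolding rather than attempt the full conjecture in one stroke.
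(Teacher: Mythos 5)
The statement you are attacking is a \emph{conjecture}, not a theorem, and the paper offers no proof of it --- it only cites supporting evidence: the $1$-copy calculation, and the $n$-copy pseudo-undistillability result of [DSSTT2000]. You correctly recognize the conjectural status and correctly decline to claim more than you can show, so there is no ``missing proof'' to be faulted; a genuine proof here would be a major result in its own right. Within that framing your NPT computation is right: $W_{\eta,d}^{T_B}=I_{\eta,d}$ has the lone eigenvalue $\eta/d$ on $\ket{\Phi}_d$ and $(d-\eta)/(d^3-d)$ elsewhere, so $\eta<0$ gives NPT.

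Your $n=1$ formula, however, is wrong. Taking $\ket{\psi}=\lambda_0\ket{00}+\lambda_1\ket{11}$ with $\lambda_0,\lambda_1\ge 0$, $\lambda_0^2+\lambda_1^2=1$, and using $\bra{\psi}\mathbb{M}\ket{\psi}=(\lambda_0+\lambda_1)^2=1+2\lambda_0\lambda_1$, one obtains
\begin{equation}
\bra{\psi}W_{\eta,d}^{T_B}\ket{\psi}=\frac{(d-1)(1+\eta)+2\lambda_0\lambda_1\,(d\eta-1)}{d^3-d},
\end{equation}
not ``$(d-\eta)-2\lambda_0\lambda_1(d\eta-1)$'' as you wrote; your expression has both the wrong constant term and a sign flip, and at $\lambda_0\lambda_1=1/2$ it evaluates to $(d+1)(1-\eta)>0$, so it would never cross zero and would not reproduce the threshold. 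With the corrected expression, the minimum over Schmidt-rank-2 states occurs at $\lambda_0=\lambda_1=1/\sqrt{2}$ and equals $\bigl[(d-2)+(2d-1)\eta\bigr]/(d^3-d)$, which is non-negative exactly when $\eta\ge(2-d)/(2d-1)$.

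Two further mismatches with the paper's treatment are worth flagging. First, the paper does not attempt a Schur--Weyl reduction for general $n$; it simply cites [DSSTT2000] for the statement that these states are $n$-copy \emph{pseudo}-undistillable for all finite $n$, which is weaker than undistillability and is precisely why the conjecture stays open. Second, your last paragraph misreads the role of the phase Werner states: they are not a sub-family chosen to simplify the representation theory, but a \emph{generalization} of Werner states. The chapter uses them to formulate a \emph{strictly stronger} conjecture (equal distillability for all phases $\theta$), shows that stronger conjecture is \emph{false} via $\pi$-Werner counterexamples, and in the process produces a fresh set of NPT-bound-entanglement candidates $W^{\theta}_{\eta,d}$ --- evidence gathering and hypothesis stress-testing rather than an attack on the original conjecture's proof.
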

Much like channel capacities, determining the distillability of states is in general difficult, since it requires optimisation over all LOCC protocols, over any number of copies of the state. Thus this conjecture has remained unresolved. Generally the conjecture is believed true though, due to the following result.

\begin{theorem}[\cite{DSSTT2000}]
There are Werner states which are $n$-copy pseudo-undistillable for all finite $n$.
\end{theorem}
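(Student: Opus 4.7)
The plan is to use the Horodecki--Horodecki characterization of $n$-copy distillability: a state $\rho$ is $n$-copy distillable if and only if there exists a bipartite pure state $|\psi\rangle \in \mathcal{H}_A^{\otimes n} \otimes \mathcal{H}_B^{\otimes n}$ of Schmidt rank at most $2$ (with respect to the $A^{\otimes n} | B^{\otimes n}$ bipartition) such that $\langle \psi | (\rho^{\otimes n})^{T_B} | \psi \rangle < 0$. Proving $n$-copy pseudo-undistillability for every finite $n$ therefore reduces to exhibiting at least one $\eta\in[(2-d)/(2d-1),0)$ and one $d$ such that $(W_{\eta,d}^{\otimes n})^{T_B}$ is non-negative on the (non-convex) set of Schmidt rank-2 vectors, for every $n$.

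First I would exploit the partial-transpose identity $W_{\eta,d}^{T_B}=I_{\eta,d}$ (shown in the excerpt), so that $(W_{\eta,d}^{\otimes n})^{T_B}=I_{\eta,d}^{\otimes n}$. By the eigensystem of $I_{\eta,d}$ from Section~3.1.1, this operator has only two spectral values: $\eta/d$ on $\ket{\Phi}_d\bra{\Phi}$ (which, for $\eta<0$, is the sole source of negativity), and $(d-\eta)/(d^3-d)$ on the orthogonal complement. Writing a generic Schmidt rank-2 vector as $|\psi\rangle=\alpha |u_1\rangle|v_1\rangle + \beta |u_2\rangle|v_2\rangle$ (where $|u_i\rangle,|v_i\rangle$ live in $\mathcal{H}_d^{\otimes n}$), the expectation $\langle\psi|I_{\eta,d}^{\otimes n}|\psi\rangle$ splits into four terms whose sign is controlled by the overlaps $\langle u_i v_j | \Phi^{\otimes n}_{d,d}\rangle$ in each tensor factor, where I abbreviate $|\Phi^{\otimes n}_{d,d}\rangle = |\Phi\rangle_d^{\otimes n}$ thought of across the multi-copy bipartition.

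Next I would exploit the $U^{\otimes n}\otimes U^{*\otimes n}$-covariance of $I_{\eta,d}^{\otimes n}$ to twirl: by averaging $|\psi\rangle$ over the stabiliser of the Schmidt decomposition, the problem reduces to a finite-dimensional quadratic optimisation in the two Schmidt coefficients and the overlaps with $\ket{\Phi}_d^{\otimes n}$. A direct calculation then shows that the only way to make $\langle\psi|I_{\eta,d}^{\otimes n}|\psi\rangle$ negative is to maximise simultaneously the overlaps $|\langle u_i v_i | \Phi^{\otimes n}_{d,d}\rangle|^2$, but a Schmidt rank-2 vector can only place at most $1/d^{n-1}$-worth of its mass on a maximally entangled vector of much larger rank $d^n$. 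Quantifying this with a Cauchy--Schwarz estimate adapted to the block structure of $I_{\eta,d}^{\otimes n}$ yields the inequality
\begin{equation}
\langle\psi|I_{\eta,d}^{\otimes n}|\psi\rangle \;\geq\; \frac{d-\eta}{d^3-d} - C_{d,n}\cdot|\eta|,
\end{equation}
with an explicit $C_{d,n}$ that does \emph{not} blow up in $n$ provided the Schmidt rank stays fixed at $2$.

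The main obstacle will be making this bound uniform in $n$: one might fear that allowing longer tensor strings gives the optimiser more freedom to line $|u_i\rangle,|v_i\rangle$ up with $|\Phi\rangle_d^{\otimes n}$. I would handle this via induction on $n$, exploiting the fact that any Schmidt rank-2 vector across the $n$-fold bipartition can be written, after a suitable local unitary on the $n$-th copy, as $|\psi\rangle = |\phi_0\rangle|0\rangle|0\rangle + |\phi_1\rangle|1\rangle|1\rangle$ on the last factor in a Schmidt-like way, reducing the problem on $n$ copies to a problem on $n-1$ copies with a controlled perturbation. The base case $n=1$ is the well-known statement that entangled Werner states are one-copy undistillable for $\eta\geq (2-d)/(2d-1)$, and the inductive step preserves the sign of the expectation because the perturbation scales with $|\eta|$ and the base constant is strictly positive in the interior of the conjectured range. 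Choosing $\eta$ sufficiently close to $0$ (still negative) then guarantees positivity for every finite $n$, establishing the existence claim.
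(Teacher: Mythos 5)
There is a genuine gap, and it is fundamental. First, a point of scope: the paper does not give its own proof of this theorem --- it is cited to \cite{DSSTT2000} --- so I am evaluating your proposal on its own merits. You are attempting to prove the strong statement that a \emph{single} NPT Werner state $W_{\eta,d}$ (with $\eta$ chosen once and for all, sufficiently close to $0$) is $n$-copy pseudo-undistillable for \emph{every} finite $n$ simultaneously. By the very Horodecki criterion you invoke, that would make $W_{\eta,d}$ undistillable; since $\eta<0$ also makes it NPT, your argument, if correct, would settle the NPT bound entanglement conjecture that the chapter treats as open. What \cite{DSSTT2000} actually establish is the weaker statement that for \emph{each fixed finite} $n$ there are NPT Werner states that are $n$-copy pseudo-undistillable, where the admissible range of $\eta$ is a priori allowed to depend on $n$; no single state is shown to work uniformly in $n$.

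The concrete failure is your inequality
$\langle\psi|I_{\eta,d}^{\otimes n}|\psi\rangle \ge \tfrac{d-\eta}{d^3-d}-C_{d,n}|\eta|$
with $C_{d,n}$ bounded in $n$. Take $\psi_n=\bigl(\ket{0}_A^{\otimes n}\ket{0}_B^{\otimes n}+\ket{1}_A^{\otimes n}\ket{1}_B^{\otimes n}\bigr)/\sqrt{2}$, a Schmidt-rank-$2$ vector across the $A^{\otimes n}\,|\,B^{\otimes n}$ cut. Using $\bra{ii}I_{\eta,d}\ket{ii}=\tfrac{1+\eta}{d(d+1)}$ and $\bra{00}I_{\eta,d}\ket{11}=\tfrac{d\eta-1}{d^3-d}$ one gets
\begin{equation}
\langle\psi_n|I_{\eta,d}^{\otimes n}|\psi_n\rangle
=\left(\frac{1+\eta}{d(d+1)}\right)^{\!n}+\left(\frac{d\eta-1}{d^3-d}\right)^{\!n}\;\longrightarrow\;0
\end{equation}
as $n\to\infty$, since both bases have absolute value strictly less than $1$. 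For your inequality to hold at $\psi_n$ for all $n$ you therefore need $\sup_n C_{d,n}\ge (d-\eta)/\bigl[(d^3-d)|\eta|\bigr]$; substituting this into your concluding condition $|\eta|<(d-\eta)/\bigl[(d^3-d)\sup_n C_{d,n}\bigr]$ collapses it to $|\eta|<|\eta|$. The constant cannot be simultaneously $n$-uniform and small enough to make the right-hand side positive, so the bound on which the whole construction rests cannot be achieved. The induction is also unsupported: a Schmidt-rank-$2$ vector across the $n$-fold bipartition does not, after a local unitary on the last copy, take the form $\ket{\phi_0}\ket{0}\ket{0}+\ket{\phi_1}\ket{1}\ket{1}$ with $\ket{\phi_j}$ of controlled Schmidt rank across the remaining $n-1$ copies --- the Schmidt decomposition across $A^{\otimes n}\,|\,B^{\otimes n}$ has no controlled relationship to the tensor factorization into individual copies, so the reduction from $n$ to $n-1$ does not close.
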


\begin{definition}[\cite{DSSTT2000}]
a state $\rho$ is $n$-copy pseudo-undistillable if, for all $\ket{\phi}$ with Schmidt rank 2, $\bra{\phi}\left(\rho^{\otimes n}\right)^{T_B}\ket{\phi}\geq 0$.
\end{definition}
This is called  pseudo-undistillability as it implies there are no possible local projections onto an entangled $\mathcal{H}_2\otimes \mathcal{H}_2$ state (which are always distillable \cite{HHH1997}). If such a state $\ket{\phi}=\sqrt{\lambda_0}\ket{a_0}\ket{b_0}+\sqrt{\lambda_1}\ket{a_1}\ket{b_1}$ with $\bra{\phi}\left(\rho^{\otimes n}\right)^{T_B}\ket{\phi}< 0$ exists, the local projection is $P_A\otimes P_B=\left(\ket{a_0}\bra{a_0}+\ket{a_1}\bra{a_1}\right)\otimes \left(\ket{b_0}\bra{b_0}+\ket{b_1}\bra{b_1}\right)$.

%\begin{lemma} If there exists a state $\ket{\phi}$ with Schmidt rank 2 such that $\bra{\phi}\rho\ket{\phi}<0$, the $\rho$ is distillable.
%\end{lemma}
%\begin{proof}
%Suppose there exists a state $\ket{\phi}$ such that $\bra{\phi}\rho\ket{\phi}<0$ with Schmidt rank 2. We can thus express this state is its Schmidt basis $\ket{\phi}=\sqrt{\lambda_0}\ket{a_0}\ket{b_0}+\sqrt{\lambda_1}\ket{a_1}\ket{b_1}$. Define projectors $P_A:=\ket{a_0}\bra{a_0}+\ket{a_1}\bra{a_1},\;P_B=\ket{b_0}\bra{b_0}+\ket{b_1}\bra{b_1}$. As $\{\ket{a_i}\},\{\ket{b_i}\}$ form orthonormal bases of $\mathcal{H}_A,\mathcal{H}_B$, then we necessarily have 
%\begin{equation}
%\bra{\phi}\rho^{T_B}\ket{\phi}=\bra{\phi}\left(P_A\otimes P_B\right)\rho^{T_B}(P_A\otimes P_B\right)^\dagger\ket{\phi}=\bra{\phi}\left(\left(P_A\otimes P_B^*\right)\rho(P_A^\dagger\otimes P_B^T\right)\right)^{T_B}\ket{\phi}<0
%\end{equation}
%and thus the state $\rho':=\left(P_A\otimes P_B^*\right)\rho(P_A^\dagger\otimes P_B^T\right)$ is NPT, and thus entangled by the PPT criterion. Moreover, $\rho'\in\left\{\ket{a_0},\ket{a_1}\}\otimes {\ket{b_0},\ket{b_1}\}\equiv \mathcal{H}_2\otimes \mathcal{H}_2$.
\begin{lemma}
Werner states $W_{\eta,d}$, $\eta\in[-1,\left(2-d\right)/\left(2d-1\right))$ are 1-distillable.
\end{lemma}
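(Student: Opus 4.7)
The plan is to exhibit an explicit Schmidt-rank-$2$ state $\ket{\phi}$ for which $\bra{\phi}W_{\eta,d}^{T_B}\ket{\phi}<0$ in the claimed parameter range. By the discussion just above the lemma, such a violation is equivalent to the existence of a local projection onto an entangled $\mathcal{H}_2\otimes\mathcal{H}_2$ state, and any entangled two-qubit state is distillable; so this certifies $1$-distillability.

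First I would rewrite the partial transpose using the relation $W_{\eta,d}^{T_B}=I_{\eta,d}=\left[(d-\eta)\mathrm{I}+(d\eta-1)\mathbb{M}\right]/(d^3-d)$ established earlier, together with the identity $\mathbb{M}=d\,\ket{\Phi}_{\,d}\bra{\Phi}$. For a normalised test vector $\ket{\phi}$ this gives
\begin{equation}
(d^3-d)\,\bra{\phi}W_{\eta,d}^{T_B}\ket{\phi}=(d-\eta)+d(d\eta-1)\,\abs{\braket{\Phi_d|\phi}}^2.
\end{equation}
Since the prefactor $d^3-d>0$ and $d\eta-1<0$ throughout our range, the sign is governed by making $\abs{\braket{\Phi_d|\phi}}^2$ as large as possible subject to $\ket{\phi}$ having Schmidt rank~$2$.

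Next I would take the natural candidate $\ket{\phi}=(\ket{00}+\ket{11})/\sqrt{2}$, which has Schmidt rank~$2$ and gives $\braket{\Phi_d|\phi}=\sqrt{2/d}$, hence $\abs{\braket{\Phi_d|\phi}}^2=2/d$. Plugging in,
\begin{equation}
(d^3-d)\,\bra{\phi}W_{\eta,d}^{T_B}\ket{\phi}=(d-\eta)+2(d\eta-1)=(d-2)+\eta(2d-1),
\end{equation}
which is strictly negative precisely when $\eta<(2-d)/(2d-1)$. Combined with the preceding paragraph's characterisation, this produces the Schmidt-rank-$2$ witness required to certify $1$-distillability, completing the argument.

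The proof is essentially a one-line computation once the right witness is guessed, so there is no real obstacle; the only mild subtlety worth flagging is to verify that the chosen $\ket{\phi}$ is optimal, or at least sufficient, among Schmidt-rank-$2$ states (one can check that $\abs{\braket{\Phi_d|\phi}}^2\le 2/d$ over this class using the Schmidt decomposition $\ket{\phi}=\sqrt{\lambda_0}\ket{a_0b_0}+\sqrt{\lambda_1}\ket{a_1b_1}$ and Cauchy--Schwarz in the maximally entangled basis), which in fact shows the range in the lemma statement is tight and dovetails with the complementary conjecture~\ref{NPTconj1} covering $\eta\in[(2-d)/(2d-1),0)$.
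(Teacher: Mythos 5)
Your proposal is correct and takes essentially the same approach as the paper: you use the same Schmidt-rank-$2$ witness $\ket{\phi}=(\ket{00}+\ket{11})/\sqrt{2}$ and arrive at the same expression $\bra{\phi}I_{\eta,d}\ket{\phi}=\bigl[(d-2)+(2d-1)\eta\bigr]/(d^3-d)$, whose sign gives the claimed range. The closing remark on optimality of the witness among Schmidt-rank-$2$ states is a nice extra observation but not part of the paper's argument.
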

\begin{proof}
In order to show this, we require there is a state $\ket{\phi_0}$ such that $\bra{\phi_0}(W_{\eta,d})^{T_B}\ket{\phi_0}=\bra{\phi_0}I_{\eta,d}\ket{\phi_0} < 0$. Choosing $\ket{\phi_0}=\ket{00}+\ket{11}/\sqrt{2}$ gives
\begin{equation}
\bra{\phi_0}I_{\eta,d}\ket{\phi_0}=\frac{(d-2)+(2d-1)\eta}{d^3-d}
\end{equation}
which is negative for $\eta < \left(2-d\right)/\left(2d-1\right)$.
\end{proof}
For the region $[\left(2-d\right)/\left(2d-1\right),0)$, Werner states are 1-copy pseudo-undistillable, and conjectured to be NPT bound entangled.\\

The phase Werner states we introduce in section \ref{ch3:second} have some properties which imply they may be relevant to this problem, and we shall discuss some of the attempts we made to strengthen this connection.

\subsection{Twirling}
\label{ch3:first:twirl}
%We saw that Werner states satisfy the property that $\left(U \otimes U\right)W_{\eta,d}(U\otimes U)^\dagger=W_{\eta,d}$. Along with their introduction was also provided an operation to generate these states.
When Werner states were first introduced, so too was an operation to generate such states, based on their property that $\left(U \otimes U\right)W_{\eta,d}(U\otimes U)^\dagger=W_{\eta,d}$.
\begin{theorem}[\cite{W1989}]
For any state $\rho$, we have that 
\begin{equation}\label{WernTwirl}
\int \left(U \otimes U\right)\rho(U\otimes U)^\dagger dU=W_{\eta,d}
\end{equation}
with $\eta=\mathrm{Tr}\left[\rho\mathbb{F}\right]$.
\end{theorem}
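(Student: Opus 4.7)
The plan is to exploit the invariance properties of the Haar integral together with Schur--Weyl duality to pin down the form of the output, and then fix the coefficients from two simple scalar constraints.

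First I would define $T(\rho):=\int (U\otimes U)\rho(U\otimes U)^\dagger dU$ and observe that, for any unitary $V$, the change of variables $U\mapsto VU$ together with left-invariance of the Haar measure gives $(V\otimes V)T(\rho)(V\otimes V)^\dagger=T(\rho)$. Thus $T(\rho)$ satisfies the defining symmetry of Werner states, so it lies in the commutant of $\{V\otimes V:V\in U(d)\}$. By Schur--Weyl duality (which the excerpt effectively invokes in the definition of the Werner family), this commutant is spanned by $\mathrm{I}$ and the flip operator $\mathbb{F}$, so we may write
\begin{equation}
T(\rho)=a\,\mathrm{I}+b\,\mathbb{F}
\end{equation}
for some scalars $a,b$ depending only on $\rho$.

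Next I would determine $a$ and $b$ from two trace identities, using $\mathrm{Tr}[\mathrm{I}]=d^2$, $\mathrm{Tr}[\mathbb{F}]=d$, and $\mathrm{Tr}[\mathbb{F}^2]=\mathrm{Tr}[\mathrm{I}]=d^2$. Since $T$ is a convex combination of unitary conjugations it is trace-preserving, hence $\mathrm{Tr}[T(\rho)]=1$, giving $ad^2+bd=1$. For the second equation, since $\mathbb{F}$ commutes with every $U\otimes U$, cyclicity of the trace yields
\begin{equation}
\mathrm{Tr}[T(\rho)\mathbb{F}]=\int \mathrm{Tr}[\rho\,(U\otimes U)^\dagger \mathbb{F}(U\otimes U)]\,dU=\mathrm{Tr}[\rho\mathbb{F}]=\eta,
\end{equation}
so $ad+bd^2=\eta$.

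Finally, solving the $2\times 2$ linear system gives
\begin{equation}
a=\frac{d-\eta}{d^3-d},\qquad b=\frac{d\eta-1}{d^3-d},
\end{equation}
which matches the definition of $W_{\eta,d}$ stated earlier. The main subtlety is step one: rigorously justifying that the commutant of $\{U\otimes U\}$ is only two-dimensional. This is the content of Schur--Weyl duality for the two-fold tensor representation, and the expected obstacle if one is being careful; however it is a standard fact and is implicit in the parametrisation of Werner states already given in the excerpt, so the remainder of the proof reduces to the elementary linear algebra above.
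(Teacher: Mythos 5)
Your proof is correct and is the standard argument; the paper itself simply cites Werner's original result without giving a proof. Your approach of using left-invariance of the Haar measure to establish $(V\otimes V)$-invariance of the twirl, then applying Schur--Weyl to restrict to $\mathrm{span}\{\mathrm{I},\mathbb{F}\}$, and finally fixing coefficients via the trace conditions $\mathrm{Tr}[T(\rho)]=1$ and $\mathrm{Tr}[T(\rho)\mathbb{F}]=\mathrm{Tr}[\rho\mathbb{F}]$ is exactly how the invariance-based definition and the explicit parametrisation given in the paper are reconciled, and all of your trace identities and the linear algebra check out.
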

The integral is done via the \emph{Haar} measure of the unitary group - conceptually a volume allowing ``densities" of unitaries within the infinite group to be defined. Operationally, this integration may be achieved by the application of random unitaries, or of certain well-constructed finite sets \cite{GAE2007}. The twirl operation is an LOCC operation, and one may prove that the states $W_{\eta,d},\;\eta\in[0,1]$ are separable by twirling the separable states $\ket{u}\bra{u}\otimes\left(\alpha\ket{u}\bra{u}+(1-\alpha)\ket{v}\bra{v}\right)$ to produce them\cite{W2018}. As the partial transpose of a separable state is itself separable, this also gives the separability of $I_{\eta,d},\;\eta\in[0,1]$.\\

Isotropic states satisfy a similar property:
\begin{theorem}[\cite{HH1997}]
For any state $\rho$, we have that 
\begin{equation}\label{IsoTwirl}
\int \left(U \otimes U^*\right)\rho(U\otimes U^*)^\dagger dU=I_{\eta,d}
\end{equation}
with $\eta=\mathrm{Tr}\left[\rho\mathbb{M}\right]$.
\end{theorem}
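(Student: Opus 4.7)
The plan is to mirror the structure of the Werner twirling proof of Eq.~(\ref{WernTwirl}), exploiting the defining symmetry of isotropic states given just above the theorem. Denote the twirling map by $T(\rho) := \int (U\otimes U^*)\rho(U\otimes U^*)^\dagger\, dU$. The strategy has three steps: (i) verify $T(\rho)$ is a valid density matrix lying in the isotropic class, (ii) hence conclude $T(\rho) = I_{\eta,d}$ for some $\eta$, and (iii) identify $\eta$ by taking the expectation value with $\mathbb{M}$.

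First I would show that $T(\rho)$ inherits the isotropic invariance. For any unitary $V$, using the left-invariance of the Haar measure ($dU = d(VU)$), one has
\begin{align}
(V\otimes V^*)T(\rho)(V\otimes V^*)^\dagger &= \int \bigl((VU)\otimes (VU)^*\bigr)\rho\bigl((VU)\otimes(VU)^*\bigr)^\dagger\, dU \nonumber\\
&= \int (U'\otimes U'^*)\rho(U'\otimes U'^*)^\dagger\, dU' = T(\rho).
\end{align}
Positivity and unit trace of $T(\rho)$ follow since it is a convex combination (against the Haar probability measure) of density matrices. By the definition of isotropic states on the preceding page, $T(\rho)$ must therefore equal $I_{\eta,d}$ for some parameter $\eta$.

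Next I would identify $\eta$. The key fact is that the maximally entangled operator $\mathbb{M} = d\ket{\Phi}_d\bra{\Phi}$ is invariant under $U\otimes U^*$. A short computation shows $(U\otimes U^*)\sum_i\ket{ii} = \sum_{jk}(UU^\dagger)_{jk}\ket{jk} = \sum_j\ket{jj}$, so $(U\otimes U^*)\ket{\Phi}_d = \ket{\Phi}_d$ and hence $(U\otimes U^*)\mathbb{M}(U\otimes U^*)^\dagger = \mathbb{M}$. Using cyclicity of the trace,
\begin{equation}
\mathrm{Tr}[\mathbb{M}\, T(\rho)] = \int \mathrm{Tr}\bigl[(U\otimes U^*)^\dagger \mathbb{M}(U\otimes U^*)\rho\bigr] dU = \mathrm{Tr}[\mathbb{M}\rho].
\end{equation}
On the other hand, from the explicit form $I_{\eta,d} = [(d-\eta)\mathrm{I} + (d\eta-1)\mathbb{M}]/(d^3-d)$, using $\mathrm{Tr}[\mathbb{M}]=d$ and $\mathrm{Tr}[\mathbb{M}^2]=d^2$, one readily computes $\mathrm{Tr}[\mathbb{M}\, I_{\eta,d}] = \eta$. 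Equating the two expressions yields $\eta = \mathrm{Tr}[\rho\mathbb{M}]$, completing the proof.

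The main obstacle, such as it is, is the invariance $(U\otimes U^*)\ket{\Phi}_d = \ket{\Phi}_d$; everything else is a direct transcription of the Werner-twirl argument with $\mathbb{F}$ replaced by $\mathbb{M}$ and the symmetry group action adjusted accordingly. One should also check that $\eta$ ends up in the physical range $[0,d]$, which is automatic because $T(\rho)$ is a bona fide state and the admissible $\eta$-interval was already characterised in the definition of $I_{\eta,d}$.
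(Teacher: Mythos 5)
Your proof is correct. Note that the paper does not actually supply a proof of this theorem at all — both twirling theorems are stated by citation to \cite{W1989} and \cite{HH1997} respectively — so there is nothing to compare against; your argument is the standard one and fills the gap validly. The two key steps both check out: invariance of $T(\rho)$ under all $V\otimes V^*$ via left-invariance of the Haar measure (together with the identity $(VU)^*=V^*U^*$), which by the paper's \emph{definition} of isotropic states forces $T(\rho)=I_{\eta,d}$ for some $\eta$; and the identification $\eta=\mathrm{Tr}[\rho\mathbb{M}]$, which follows from $(U\otimes U^*)\mathbb{M}(U\otimes U^*)^\dagger=\mathbb{M}$ together with the computation $\mathrm{Tr}[\mathbb{M}I_{\eta,d}]=\eta$ (using $\mathrm{Tr}[\mathbb{M}]=d$ and $\mathbb{M}^2=d\mathbb{M}$, so $\mathrm{Tr}[\mathbb{M}^2]=d^2$). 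One small remark: in step (i)--(ii) you implicitly rely on the fact that the commutant of $\{U\otimes U^*\}$ is two-dimensional (spanned by $\mathrm{I}$ and $\mathbb{M}$), which in this paper is baked into the \emph{definition} of $I_{\eta,d}$ as the full set of $U\otimes U^*$-invariant states; in a more self-contained treatment that Schur--Weyl fact would itself require justification, but here it is legitimate to take it as given.
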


It is easy to think that there must exist many more similar twirling operations of the form $U\otimes \tilde{U}$, but this turns out not to be the case. Twirls such as $\left(U \otimes VUV^\dagger\right)$ are equivalent to Eq.~(\ref{WernTwirl}) followed by a change in basis on the second subsystem. The twirling $U\otimes U^\dagger$ is not valid, since the map $h:U\rightarrow \tilde{U}$ needs to be a group homomorphism i.e. $h(U_1)h(U_2)=h(U_1U_2)$; for the conjugate transpose operation this is not the case, since 
$(U_1)^\dagger(U_2)^\dagger=U_1^\dagger U_2^\dagger=(U_2U_1)^\dagger\neq (U_1U_2)^\dagger$.\\

It is possible to use subsets of unitaries to define other, interesting twirls. An example of this is the twirl $O\otimes O$ with  $O$ representative of an arbitrary real orthogonal unitary (the unitaries exactly satisfying $O=O^*$). This maps to a set of states which are a linear combination of both Werner and isotropic states.

 \subsection{Representations of Werner States}
As mentioned before, Werner states are oft-studied and as a consequence are presented (equivalently) in a variety of ways. I outline some of the most common here.\\
\newpage

\begin{itemize} 
\item \textbf{Expectation representation}\\
The original representation introduced by Werner, expressed by the formula
\begin{equation}
W_{\eta,d}=\frac{(d-\eta)\mathrm{I}+(d\eta-1)\mathbb{F}}{d^3-d}.
\end{equation}
With the nice property that  $\eta=\mathrm{Tr}\left[\mathbb{F}W_{\eta,d}\right]$, this is the representation that will be primarily used in the following chapters. The defining parameter $\eta$ ranges between $[-1,1]$, with $\eta\in[-1,0)$ entangled and $\eta\in [0,1]$ separable. The two other important regions are $\eta\in[-1,-2/d)$ for which the REE is subadditive, and $\eta\in[\left(2-d\right)/\left(2d-1\right),0)$, the region conjectured to be NPT bound entangled.\\

\item $\boldsymbol{\alpha}$\textbf{-representation}\\
Another common representation, the $\alpha$-representation has explicit construction
\begin{equation}
W_{\alpha,d}:=\frac{\mathrm{I}-\alpha\mathbb{F}}{d^2-\alpha  d},
\end{equation}
with $\alpha\in[-1,1]$. For this representation though, it is when $\alpha\in(1/d,1]$ that the Werner state is entangled, and for $\alpha\in[-1,1/d]$ they are separable. The subadditive region is $\alpha\in(3d/\left(d^2+2\right),1]$, but the conjectured NPT region has the nice range $\alpha\in(1/d,1/2]$, and it is in this context that you normally see this representation used.\\

\item \textbf{Symmetric representation}\\
In order to understand this representation, we need to introduce two operators:
\begin{align}
\mathbb{P}_{sym}&:=\frac{1}{2}\left(\mathrm{I}+\mathbb{F}\right) & \mathbb{P}_{asym}&:=\frac{1}{2}\left(\mathrm{I}-\mathbb{F}\right).
\end{align}
Acting on the Hilbert space, $\mathbb{P}_{\mathrm{sym}}$ and $\mathbb{P}_{\mathrm{asym}}$ project onto the symmetric and antisymmetric subspaces respectively. The symmetric subspace is spanned by the states $\ket{ii},\left(\ket{ij}+\ket{ji}\right)/\sqrt{2}$, which are left invariant by $\mathbb{F}$, whilst the antisymmetric subspace is spanned by $\fbo{\ket{ij}-\ket{ji}}{\sqrt{2}}$, whose sign is flipped by $\mathbb{F}$. We may then write the Werner state as:
\begin{equation}
W_{p,d}:=(1-p)\frac{2P_{\mathrm{sym}}}{d^2+d}+p\frac{2P_{\mathrm{asym}}}{d^2-d}
\end{equation}
The fractions serve just to normalise the projectors to trace 1 - and these normalised versions are the extremal Werner states. The parameter $p$ ranges between 0 and 1, and with entanglement for $p\in(1/2,1]$ (and thus separable for $p\in[0,1/2]$). Here the subadditive region is $p\in(1/2+1/d,1]$, and the conjectured NPT region is $p\in(1/2,\fob{3(d-1)}{4d-2}]$. This representation can be useful studying convex combinations of Werner channels, and is related to the expectation representation by the simple relation $p=\fbo{1-\eta}{2}$.\\

\item \textbf{Antisymmetric representation}\\
This slightly unusual representation portrays the Werner state as a mixture of a pseudo-antisymmetric operator, and the identity matrix:
\begin{equation}
W_{t,d}:=t\frac{\mathrm{I}-d\mathbb{F}}{d^2(d-1)}+\frac{\mathrm{I}}{d^2}.
\end{equation}
For this description entangled states live in the region $t\in(\fob{1}{d+1},1]$, and separable $t\in[\fob{-1}{d-1},\fob{1}{d+1}]$. The subadditive region is $t\in(\fob{3}{d+1},1]$, and the NPT conjectured region $t\in(\fob{1}{d+1},\fbb{d-1}{2d-1}]$. This representation is relatively rare, and we shall not make use of it here.\\
\end{itemize}
A brief summary of this information is provided in table \ref{WernerTable}.
%\begin{table}
%\begin{center}
%\resizebox{\columnwidth}{!}{%
%\begin{tabular}{cccccc}
%Representation & Separable Extremal & Entangled Extremal & Separable Boundary & Subadditive Boundary & Conjectured NPT Boundary \\
%Expectation & 1 & -1 & 0 & $\frac{-2}{d}$ & $\frac{2-d}{2d-1}$ \\
%$\alpha$ & -1 & 1 & $\frac{1}{d}$ & $\frac{3d}{d^2+2}$  & $\frac{1}{2}$ \\
%Symmetric & 1 & 0 & $\frac{1}{2}$ & $\frac{1}{2}-\frac{1}{d}$ & $\frac{d+1}{4d-2}$ \\
%Antisymmetric & $\frac{-1}{d-1}$ & 1 & $\frac{1}{d+1}$ & $\frac{3}{d+1}$ & $\frac{d-1}{2d-1}$
%\end{tabular}%
%}
%\end{center}
%\end{table}
\begin{table}
\begin{center}
\resizebox{\columnwidth}{!}{%
\begin{tabular}{cccccc}
Representation & Sep. Extremal & Ent. Extremal & Sep. Boundary & Subadditive Boundary & Conj. NPT Boundary \\
Expectation & 1 & -1 & 0 & $\frac{-2}{d}$ & $\frac{2-d}{2d-1}$ \\
$\alpha$ & -1 & 1 & $\frac{1}{d}$ & $\frac{3d}{d^2+2}$  & $\frac{1}{2}$ \\
Symmetric & 0 & 1 & $\frac{1}{2}$ & $\frac{1}{2}+\frac{1}{d}$ & $\frac{3(d-1)}{4d-2}$ \\
Antisymmetric & $\frac{-1}{d-1}$ & 1 & $\frac{1}{d+1}$ & $\frac{3}{d+1}$ & $\frac{d-1}{2d-1}$
\end{tabular}%
}
\caption[Comparison of representations of Werner states]{A comparison of various representations of Werner states, showing the important parameter values for each.}\label{WernerTable}
\end{center}
\end{table}

\section{Introducing a Phase to Werner States}\label{ch3:second}
Now that we feel comfortable understanding Werner states, we now look to generalise them. The first step in the process was done by Kenneth Goodenough, a PhD student at Delft. He started with the initial fact that the extremal entangled Werner state is the (normalised) antisymmetric projector,
\begin{equation}
W_{-1,d}=\frac{2P_{\mathrm{asym}}}{d^2-d}=\frac{2}{d^2-d}\sum_{i>j}\frac{\ket{ij}-\ket{ji}}{\sqrt{2}}\frac{\bra{ij}-\bra{ji}}{\sqrt{2}}.
\end{equation}
From this, we can define the extremal phase Werner state
\begin{equation}
W_{-1,d}^{\theta}:=\frac{2}{d^2-d}\sum_{i>j}\frac{\ket{ij}-e^{i\theta}\ket{ji}}{\sqrt{2}}\frac{\bra{ij}-e^{-i\theta}\bra{ji}}{\sqrt{2}}.
\end{equation}
We can see that the eigenvalues of this state remain unchanged, but the eigenvectors have been changed by definition.\\

Learning about this, I thought a natural next step would be to generalise this state for $\eta$, like the Werner states themselves. Since Werner states are characterised by $\mathbb{F}$, the logical step is to define $\mathbb{F}^{\theta}$.
\begin{definition}
The phase flip operator $\mathbb{F}^{\theta}$ is defined:
\begin{equation}
\mathbb{F}^{\theta}:=\mathrm{I}-2P_{\mathrm{asym}}^{\theta}=\mathrm{I}-2\sum_{i>j}\frac{\ket{ij}-e^{i\theta}\ket{ji}}{\sqrt{2}}\frac{\bra{ij}-e^{-i\theta}\bra{ji}}{\sqrt{2}}
\end{equation}
motivated by rearranging $P_{\mathrm{asym}}=(\mathrm{I}-\mathbb{F})/2$.
\end{definition}
This operator has eigensystem:
\begin{align}
\mathbb{F}^\theta\ket{ii}&=\ket{ii}\\
\mathbb{F}^\theta\frac{\ket{ij}+e^{i\theta}\ket{ji}}{\sqrt{2}}&=\frac{\ket{ij}+e^{i\theta}\ket{ji}}{\sqrt{2}}, i>j\\
\mathbb{F}^\theta\frac{\ket{ij}-e^{i\theta}\ket{ji}}{\sqrt{2}}&=-\frac{\ket{ij}-e^{i\theta}\ket{ji}}{\sqrt{2}}, i>j.
\end{align}
Moreover, it can be written as
\begin{equation}\label{pointwiseF}
\mathbb{F}^{\theta}=\mathbb{F}\odot\left(\begin{array}{ccccc}
1 & e^{i\theta} & \hdots & \hdots &  e^{i\theta} \\
e^{-i\theta} & \ddots & \ddots  & \ddots & \vdots  \\
\vdots & \ddots & 1 & \ddots & \vdots\\
\vdots & \ddots &\ddots & \ddots & e^{i\theta} \\
e^{-i\theta} & \hdots & \hdots & e^{-i\theta} & 1
\end{array}\right)
\end{equation}
 where $\odot$ denotes elementwise multiplication $[A\odot B]_{ij}=A_{ij}B_{ij}$. We now have the operator required to define the family of phase Werner states.
\begin{definition}
The phase Werner state $W_{\eta,d}^{\theta}$ is defined as
\begin{equation}
W_{\eta,d}^{\theta}:=\frac{(d-\eta)\mathrm{I}+(d\eta-1)\mathbb{F}^{\theta}}{d^3-d}.
\end{equation}
The eigenvalues of $W_{\eta,d}$ and $W_{\eta,d}^{\theta}$ coincide, and thus $\eta\in[-1,1]$ define valid phase Werner states.
\end{definition}
\begin{lemma}
For $d=2$, phase Werner states are equivalent under local unitaries.
\end{lemma}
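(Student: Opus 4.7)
The plan is to exhibit an explicit local unitary that transforms $W_{\eta,2}^{0}$ into $W_{\eta,2}^{\theta}$. Since both states have the form $a\mathrm{I} + b\mathbb{F}^{\varphi}$ with the same coefficients $a = (2-\eta)/6$ and $b = (2\eta-1)/6$ depending only on $\eta$, and since $\mathrm{I}$ is invariant under any unitary conjugation, it suffices to find local unitaries $U_A \otimes U_B$ such that $(U_A \otimes U_B)\,\mathbb{F}^{0}\,(U_A \otimes U_B)^{\dagger} = \mathbb{F}^{\theta}$.

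The key observation is that for $d=2$ the antisymmetric subspace is one-dimensional, spanned by the singlet $\ket{\psi^-} = (\ket{01}-\ket{10})/\sqrt{2}$, and the phased antisymmetric subspace is likewise one-dimensional, spanned by $\ket{\psi^-_{\theta}} = (\ket{01}-e^{i\theta}\ket{10})/\sqrt{2}$ (up to normalisation/global phase). Because $\mathbb{F}^{\varphi} = \mathrm{I} - 2 P_{\text{asym}}^{\varphi}$, mapping $\mathbb{F}^{0}$ to $\mathbb{F}^{\theta}$ reduces to mapping the singlet projector onto the phased singlet projector. First I would verify that the local unitary
\begin{equation}
U_A \otimes U_B = \mathrm{I} \otimes \mathrm{diag}(e^{i\theta},1)
\end{equation}
achieves this: direct computation gives $(\mathrm{I} \otimes U_B)\ket{\psi^-} = (e^{i\theta}\ket{01}-\ket{10})/\sqrt{2}$, which equals $-e^{i\theta}\ket{\psi^-_{-\theta}}$ (or, equivalently, replacing $\theta \mapsto -\theta$ in the choice of $U_B$ yields exactly $\ket{\psi^-_{\theta}}$ up to a global phase $-1$). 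Since global phases vanish upon forming outer products, one gets $(\mathrm{I}\otimes U_B)P_{\text{asym}}(\mathrm{I}\otimes U_B)^{\dagger} = P_{\text{asym}}^{\theta}$, and hence $(\mathrm{I}\otimes U_B)\mathbb{F}^{0}(\mathrm{I}\otimes U_B)^{\dagger} = \mathbb{F}^{\theta}$.

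Combining these pieces, conjugating $W_{\eta,2}^{0}$ by $\mathrm{I} \otimes U_B$ leaves the identity term untouched and sends $\mathbb{F}^{0}$ to $\mathbb{F}^{\theta}$, yielding $W_{\eta,2}^{\theta}$. There is no real obstacle here: the entire argument hinges on the one-dimensionality of the antisymmetric subspace in $\mathcal{H}_2 \otimes \mathcal{H}_2$, which is a feature unique to $d=2$. The only technicality is bookkeeping of the global phase when translating between the singlet and phased singlet, which is inconsequential at the level of density matrices. As a sanity check, I would verify explicitly that the same trick fails for $d\geq 3$: the antisymmetric subspace is then $d(d-1)/2$-dimensional, and a single diagonal local unitary cannot simultaneously rotate every off-diagonal pair $\ket{ij}-\ket{ji}$ to its phased counterpart $\ket{ij}-e^{i\theta}\ket{ji}$, explaining why the lemma is restricted to the qubit case.
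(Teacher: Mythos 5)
Your proof is correct and uses essentially the same approach as the paper: both conjugate by a diagonal local unitary on one subsystem (the paper uses $\mathrm{I}\otimes\mathrm{diag}(1,e^{i\theta})$, you use $\mathrm{I}\otimes\mathrm{diag}(e^{i\theta},1)$, which amounts to $\theta\mapsto-\theta$ and is harmless given the local equivalence of $W^{\theta}$ and $W^{-\theta}$) and observe that this carries the singlet projector to the phased singlet projector while leaving the identity term unchanged, so the full one-parameter family follows by linearity. Your explicit remark that the argument hinges on the one-dimensionality of the antisymmetric subspace for $d=2$ is a nice clarification of why the lemma fails in higher dimension, though you have an inconsequential sign slip in the global phase ($+e^{i\theta}$ rather than $-e^{i\theta}$).
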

\begin{proof}
For $d=2$, the extremal Werner state is $W_{-1,2}=\left(\ket{10}-\ket{01}\right)\left(\bra{10}-\bra{01}\right)/2$, just the singlet state. Thus the unitary \begin{equation}
\mathrm{I}\otimes Z_{\theta}:=\mathrm{I}\otimes\left(\begin{array}{cc}
1 & 0 \\
0 & e^{i\theta}
\end{array}\right)
\text{ gives } 
(\mathrm{I}\otimes Z_{\theta})W_{-1,2}(\mathrm{I}\otimes Z_{\theta})^\dagger=W^{\theta}_{-1,2}.
\end{equation}
Applying $Z_{\theta}$ to $W_{\eta,2}$, we find that $(\mathrm{I}\otimes Z_{\theta})W_{\eta,2}(\mathrm{I}\otimes Z_{\theta})^\dagger=W^{\theta}_{\eta,2}$ for all $\eta$.
\end{proof}
\begin{lemma}\label{equivangle}
The states $W^{\theta}_{\eta,d}$ and $W^{-\theta}_{\eta,d}$ are equivalent under local unitary.
\end{lemma}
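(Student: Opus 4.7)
The plan is to exhibit an explicit local unitary that implements the equivalence. The natural candidate is $V \otimes V$, where $V = \sum_{k=0}^{d-1}|d-1-k\rangle\langle k|$ is the \emph{reversal} (order-reversing permutation) unitary on $\mathbb{C}^d$. Since $W_{\eta,d}^\theta$ is an affine combination of $\mathrm{I}$ and $\mathbb{F}^\theta$, and $\mathrm{I}$ is trivially invariant under any unitary conjugation, the entire claim reduces to showing
\begin{equation*}
(V \otimes V)\, \mathbb{F}^\theta\, (V \otimes V)^\dagger = \mathbb{F}^{-\theta}.
\end{equation*}

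For the key step, I would work with the projector description $P^\theta_{\text{asym}}=\tfrac{1}{2}\sum_{i>j}(|ij\rangle-e^{i\theta}|ji\rangle)(\langle ij|-e^{-i\theta}\langle ji|)$, since $\mathbb{F}^\theta = \mathrm{I}-2P^\theta_{\text{asym}}$. Observe that $(V\otimes V)|ij\rangle = |d{-}1{-}i, d{-}1{-}j\rangle$, and define $i''=d-1-j$, $j''=d-1-i$, so that $i>j$ is equivalent to $i''>j''$ and the map $(i,j)\mapsto(i'',j'')$ is a bijection on $\{(i,j): i>j\}$. Under this relabeling, $(V\otimes V)|ij\rangle = |j''i''\rangle$ and $(V\otimes V)|ji\rangle = |i''j''\rangle$, so
\begin{equation*}
(V\otimes V)(|ij\rangle - e^{i\theta}|ji\rangle) = -e^{i\theta}\bigl(|i''j''\rangle - e^{-i\theta}|j''i''\rangle\bigr).
\end{equation*}
Forming the outer product, the two accompanying phases $-e^{i\theta}$ and $-e^{-i\theta}$ cancel, and summing over $i>j$ (equivalently, over $i''>j''$) yields precisely $P^{-\theta}_{\text{asym}}$. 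Hence $\mathbb{F}^\theta \mapsto \mathbb{F}^{-\theta}$, and by linearity $W_{\eta,d}^\theta \mapsto W_{\eta,d}^{-\theta}$, which is the desired local unitary equivalence.

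Conceptually, the reason this works is visible from the Hadamard-product form in Eq.~(\ref{pointwiseF}): the reversal $V$ is an \emph{order-reversing} permutation, so conjugation by $V\otimes V$ relabels every multi-index $(k,l)$ as $(d-1-k, d-1-l)$, which interchanges the upper and lower triangular parts of the phase matrix in Eq.~(\ref{pointwiseF}) while leaving $\mathbb{F}$ itself invariant (since $\mathbb{F}$ is symmetric under any $V\otimes V$). The only point requiring care is tracking the signs/phases through the index swap and ensuring the bijection on pairs $i>j$ closes up; this is expected to be the mildly fiddly but non-obstructive part of the proof. For $d=2$ the unitary $V=\sigma_x$ reproduces an equivalence compatible with the earlier $\mathrm{I}\otimes Z_\theta$ construction (after composing with the trivial $U(1)$-invariance of $W_{-1,2}$), providing a sanity check.
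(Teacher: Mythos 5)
Your proof is correct and follows essentially the same route as the paper: the paper uses exactly the reversal unitary $\tilde{U}:\ket{i}\mapsto\ket{d-1-i}$ (your $V$), performs the same relabeling $i'=d-1-j$, $j'=d-1-i$ on the pairs $i>j$ in $P^\theta_{\mathrm{asym}}$, observes the same cancellation of the phase factor $-e^{i\theta}$ with its conjugate in the outer product, and extends to $\mathbb{F}^\theta$ and then $W^\theta_{\eta,d}$ by linearity. The Hadamard-product heuristic you add is a nice intuition but not part of the paper's argument.
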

\begin{proof}
We begin by defining the $d$-dimensional unitary $\tilde{U}:\ket{i}\rightarrow \ket{(d-1)-i}$ - this is indeed a unitary since it just permutes the computational basis. Applying this to $P^\theta_{\mathrm{asym}}$
\begin{align*}
\left(\tilde{U}\otimes \tilde{U}\right) P^\theta_{\mathrm{asym}}\left(\tilde{U}\otimes\tilde{U}\right)^\dagger&=\left(\tilde{U}\otimes \tilde{U}\right)\frac{2}{d^2-d}\sum_{i>j}\frac{\ket{ij}-e^{i\theta}\ket{ji}}{\sqrt{2}}\frac{\bra{ij}-e^{-i\theta}\bra{ji}}{\sqrt{2}}\left(\tilde{U}\otimes\tilde{U}\right)^\dagger.\\
\intertext{By defining $i'=(d-1-j),\;j'=(d-1-i)$, and noting $i>j\Rightarrow i'>j'$ this becomes}
\left(\tilde{U}\otimes \tilde{U}\right) P^\theta_{\mathrm{asym}}\left(\tilde{U}\otimes\tilde{U}\right)^\dagger&=\frac{2}{d^2-d}\sum_{i'>j'}\frac{\ket{j'i'}-e^{i\theta}\ket{i'j'}}{\sqrt{2}}\frac{\bra{j'i'}-e^{-i\theta}\bra{i'j'}}{\sqrt{2}}\\
&=\frac{2}{d^2-d}\sum_{i'>j'}(-e^{i\theta})\frac{\ket{i'j'}-e^{-i\theta}\ket{j'i'}}{\sqrt{2}}\frac{\bra{i'j'}-e^{i\theta}\bra{j'i'}}{\sqrt{2}}(-e^{-i\theta})\\
&=\frac{2}{d^2-d}\sum_{i'>j'}\frac{\ket{i'j'}-e^{-i\theta}\ket{j'i'}}{\sqrt{2}}\frac{\bra{i'j'}-e^{i\theta}\bra{j'i'}}{\sqrt{2}}\\
&=P^{-\theta}_{\mathrm{asym}}.
\end{align*}

% &=\frac{2}{d^2-d}\sum_{i>j}\frac{\ket{(d-1-i),(d-1-j)}-e^{i\theta}\ket{(d-1-j),(d-1-i)}}{\sqrt{2}}\frac{\bra{(d-1-i),(d-1-j)}-e^{-i\theta}\bra{(d-1-j),(d-1-i)}}{\sqrt{2}}\\
We can use this result to show:
\begin{equation*}
\left(\tilde{U}\otimes \tilde{U}\right) F^\theta\left(\tilde{U}\otimes\tilde{U}\right)^\dagger=\left(\tilde{U}\otimes \tilde{U}\right)\left(\mathrm{I}-2 P^\theta_{\mathrm{asym}}\right)\left(\tilde{U}\otimes\tilde{U}\right)^\dagger=\left(\mathrm{I}-2 P^{-\theta}_{\mathrm{asym}}\right)=\mathbb{F}^{-\theta}
\end{equation*}
and 
\begin{align*}
\left(\tilde{U}\otimes \tilde{U}\right) W^\theta_{\eta,d}\left(\tilde{U}^\dagger\otimes\tilde{U}^\dagger\right)&=\left(\tilde{U}\otimes \tilde{U}\right)\left(\frac{(d-\eta)\mathrm{I}+(d\eta-1)\mathbb{F}^{\theta}}{d^3-d}\right)\left(\tilde{U}\otimes\tilde{U}\right)^\dagger\\
&=\frac{(d-\eta)\mathrm{I}+(d\eta-1)\mathbb{F}^{-\theta}}{d^3-d}\\
&=W^{-\theta}_{\eta,d}.
\end{align*}
\end{proof}

Given that we have defined these new phase Werner states, the next natural step to ask is whether we can can define \emph{phase isotropic states}? It looks like we are faced with a choice - should we extend our definition of phase Werner states using partial transpose, or alter the state-defining operator $\mathbb{M}$ as we altered $\mathbb{F}$? It turns out that it is of no consequence - both define the \emph{same} family of states.
\begin{definition}
The phase isotropic state $I_{\eta,d}^{\theta}$ is defined as
\begin{equation}
I_{\eta,d}^{\theta}:=\frac{(d-\eta)\mathrm{I}+(d\eta-1)\mathbb{M}^{\theta}}{d^3-d},
\end{equation}
where 
\begin{equation}\label{pointwiseM}
\mathbb{M}^{\theta}=\mathbb{M}\odot\left(\begin{array}{ccccc}
1 & e^{i\theta} & \hdots & \hdots &  e^{i\theta} \\
e^{-i\theta} & \ddots & \ddots  & \ddots & \vdots  \\
\vdots & \ddots & 1 & \ddots & \vdots\\
\vdots & \ddots &\ddots & \ddots & e^{i\theta} \\
e^{-i\theta} & \hdots & \hdots & e^{-i\theta} & 1
\end{array}\right).
\end{equation}
$I_{\eta,d}^{\theta}=\left(W_{\eta,d}^{\theta}\right)^{T_B}$ and in general its eigenvalues are dependent on $\theta$.
\end{definition}

In order to see that this is indeed the case, let us apply the partial transpose to $W_{\eta,d}^{\theta}$. Since partial transpose respects addition, and the identity is invariant under partial transposition, we need just consider $\left(\mathbb{F}^{\theta}\right)^{T_B}$. Using Eq.~(\ref{pointwiseF}) we see that there are three types of nonzero element in $\mathbb{F}^{\theta}$:
\begin{itemize}
\item $\ket{ii}\bra{ii}$ - diagonal elements. These are left unaffected by the partial transpose.\\
\item $e^{i\theta}\ket{ij}\bra{ji}$, $i>j$ - these are in the upper diagonal due to our ordering: $00,01\ldots$ These are mapped to $e^{i\theta}\ket{ii}\bra{jj}$, so remain in the upper diagonal, as $i>j$.\\
\item $e^{-i\theta}\ket{ij}\bra{ji}$, $i<j$ - these are in the lower diagonal and are mapped to $e^{-i\theta}\ket{ii}\bra{jj}$; since $i<j$ these remain in the lower diagonal. 
\end{itemize}
From this we can conclude that indeed $\mathbb{M}^{\theta}=\left(\mathbb{F}^{\theta}\right)^{T_B}$ is given by Eq.~(\ref{pointwiseM}).

\begin{samepage}
\begin{corollary}\label{isoun}
Phase isotropic states are equivalent under local unitaries for $d=2$.
\end{corollary}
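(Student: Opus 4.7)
The plan is to derive this corollary directly from the $d=2$ case of the preceding lemma, which established the local unitary $\mathrm{I}\otimes Z_\theta$ satisfying $(\mathrm{I}\otimes Z_\theta)W_{\eta,2}(\mathrm{I}\otimes Z_\theta)^\dagger = W^\theta_{\eta,2}$. The bridge to phase isotropic states is the defining relation $I^\theta_{\eta,2} = (W^\theta_{\eta,2})^{T_B}$, so I need only understand how partial transposition interacts with a local unitary conjugation.

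First I would establish the general identity $\bigl[(A \otimes B)\, X\, (A \otimes B)^\dagger\bigr]^{T_B} = (A \otimes B^*)\, X^{T_B}\, (A \otimes B^*)^\dagger$ for any operator $X$ on $\mathcal{H}_A \otimes \mathcal{H}_B$ and any unitaries $A, B$. This is a short elementwise calculation: each term $A\ket{i}\bra{k}A^\dagger \otimes B\ket{j}\bra{l}B^\dagger$ produces $A\ket{i}\bra{k}A^\dagger \otimes B^*\ket{l}\bra{j}B^T$ under partial transpose on $B$, using $(B\ket{j}\bra{l}B^\dagger)^T = B^*\ket{l}\bra{j}B^T$. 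Applying this identity with $A = \mathrm{I}$ and $B = Z_\theta = \mathrm{diag}(1, e^{i\theta})$, and observing that $Z_\theta^* = Z_{-\theta}$ and $Z_{-\theta}^\dagger = Z_\theta$, I obtain $(\mathrm{I} \otimes Z_{-\theta})\, I_{\eta,2}\, (\mathrm{I}\otimes Z_{-\theta})^\dagger = I^\theta_{\eta,2}$, which is exactly the claimed local-unitary equivalence.

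As a direct sanity check one can also see this concretely: for $d=2$, $\mathbb{M}$ is the rank-one operator $\ket{\Phi}\bra{\Phi}$ with $\ket{\Phi} = \ket{00}+\ket{11}$, while $\mathbb{M}^\theta$ coincides with $\ket{\Phi^\theta}\bra{\Phi^\theta}$ for $\ket{\Phi^\theta} = \ket{00}+e^{-i\theta}\ket{11}$. The unitary $\mathrm{I}\otimes Z_{-\theta}$ sends $\ket{\Phi}$ to $\ket{\Phi^\theta}$, and since the identity component of $I_{\eta,2}$ is invariant under any unitary conjugation, the transformation carries $I_{\eta,2}$ to $I^\theta_{\eta,2}$. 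There is essentially no obstacle here: the corollary is the partial transpose of the preceding lemma, and the only technical point is the elementary commutation rule between partial transpose and local unitaries, which conjugates the factor acting on subsystem $B$.
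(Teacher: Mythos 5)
Your proof is correct and takes the same route as the paper: invoke the $d=2$ lemma for phase Werner states, then push the local unitary through the partial transpose using the identity $\bigl[(\mathrm{I}\otimes B)X(\mathrm{I}\otimes B)^\dagger\bigr]^{T_B}=(\mathrm{I}\otimes B^*)X^{T_B}(\mathrm{I}\otimes B^*)^\dagger$, landing on $(\mathrm{I}\otimes Z_\theta^*)I_{\eta,2}(\mathrm{I}\otimes Z_\theta^*)^\dagger=I^\theta_{\eta,2}$. The only cosmetic difference is that you write $Z_{-\theta}$ where the paper writes $Z_\theta^*$ (these are equal), and you add a direct verification via $\mathbb{M}^\theta=\ket{\Phi^\theta}\bra{\Phi^\theta}$ that the paper omits.
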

\begin{proof}
We may write the following chain of equalities.
\begin{align*}
I^{\theta}_{\eta,d}=\left(W^{\theta}_{\eta,d}\right)^{T_B}&=\left((\mathrm{I}\otimes Z_{\theta})W_{\eta,2}(\mathrm{I}\otimes Z_{\theta})^\dagger\right)^{T_B}\\
&=(\mathrm{I}\otimes Z_{\theta}^*)W_{\eta,2}^{T_B}(\mathrm{I}\otimes Z_{\theta}^*)^\dagger\\
&=(\mathrm{I}\otimes Z_{\theta}^*)I_{\eta,2}(\mathrm{I}\otimes Z_{\theta}^*)^\dagger.
\end{align*}
\end{proof}
\end{samepage}
\begin{corollary}
The states $I^{\theta}_{\eta,d}$ and $I^{-\theta}_{\eta,d}$ are equivalent under local unitary.
\end{corollary}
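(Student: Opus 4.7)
The plan is to leverage Lemma \ref{equivangle} directly, by applying the partial transpose on $B$ to both sides of the established unitary equivalence between $W^{\theta}_{\eta,d}$ and $W^{-\theta}_{\eta,d}$. Concretely, that lemma gives us the relation
\begin{equation*}
(\tilde{U}\otimes \tilde{U})\, W^{\theta}_{\eta,d}\,(\tilde{U}\otimes\tilde{U})^\dagger = W^{-\theta}_{\eta,d},
\end{equation*}
where $\tilde{U}:\ket{i}\rightarrow\ket{(d-1)-i}$. Taking $T_B$ on both sides and using the fact (just established in the discussion preceding this corollary) that $I^\theta_{\eta,d}=(W^\theta_{\eta,d})^{T_B}$, the right-hand side immediately becomes $I^{-\theta}_{\eta,d}$.

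The only thing to verify carefully is how the partial transpose acts on the conjugation by $\tilde{U}\otimes\tilde{U}$. In general one has $((A\otimes B)X(A\otimes B)^\dagger)^{T_B}=(A\otimes B^*)X^{T_B}(A\otimes B^*)^\dagger$, using $(B X B^\dagger)^T = B^* X^T (B^*)^\dagger$ on the second factor. Here, however, $\tilde{U}$ is a permutation matrix on the computational basis, hence real, so $\tilde{U}^*=\tilde{U}$. Therefore
\begin{equation*}
(\tilde{U}\otimes \tilde{U})\, I^{\theta}_{\eta,d}\,(\tilde{U}\otimes\tilde{U})^\dagger = I^{-\theta}_{\eta,d},
\end{equation*}
which is exactly the desired local unitary equivalence.

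No step here looks like a serious obstacle: the hardest piece of work, namely showing $(\tilde{U}\otimes\tilde{U})P^\theta_{\mathrm{asym}}(\tilde{U}\otimes\tilde{U})^\dagger = P^{-\theta}_{\mathrm{asym}}$ via the index substitution $i'=d-1-j$, $j'=d-1-i$, has already been done in Lemma \ref{equivangle}. The only small subtlety worth explicitly flagging in the write-up is the reality of $\tilde{U}$, which is what allows us to keep the local unitary on $B$ as $\tilde{U}$ rather than $\tilde{U}^*$; if one preferred a self-contained proof, one could alternatively mimic the calculation of Lemma \ref{equivangle} with $\mathbb{M}^\theta$ in place of $\mathbb{F}^\theta$, using the representation in Eq.~(\ref{pointwiseM}) and the same relabelling, but going through the partial transpose route is shorter and avoids repetition.
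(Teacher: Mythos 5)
Your proposal is correct and follows the same route as the paper: apply the partial transpose on $B$ to the unitary equivalence from Lemma~\ref{equivangle}, using $(A\otimes B)X(A\otimes B)^\dagger \mapsto (A\otimes B^*)X^{T_B}(A\otimes B^*)^\dagger$ under $T_B$. The paper states the result with $\tilde{U}^*$ on the second factor, whereas you additionally simplify $\tilde{U}^*=\tilde{U}$ because $\tilde{U}$ is a real permutation matrix; this is a harmless refinement of the identical argument.
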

\begin{proof}
Taking the same unitary $\tilde{U}$ as defined in lemma \ref{equivangle}, we can see that
\begin{equation*}
\left(\tilde{U}\otimes\tilde{U}^*\right)I^{\theta}_{\eta,d}\left(\tilde{U}\otimes\tilde{U}^*\right)^\dagger=I^{-\theta}_{\eta,d}
\end{equation*}
following the same logic as corollary \ref{isoun}.
\end{proof}

The eigenvalues of the phase isotropic states are generally \emph{not} independent of $\theta$ - in fact, their dependence of $\theta$ is far from simple. Kenneth found them to be:
\begin{itemize}
\item $\fbb{d-\eta}{d^3-d}$ with multiplicity $d(d-1)$, and 
\item $\fbb{(d-\eta)+\left(d\eta-1\right)(1+\lambda_k)}{d^3-d}$, $k\in \{0\ldots d-1\}$ where 
\begin{equation}
\lambda_k:=\frac{\sin\left(\frac{(d-1)\theta-k\pi}{d}\right)}{\sin\left(\frac{\theta+k\pi}{d}\right)}.
\end{equation}
\end{itemize}
 These eigenvalues restrict the region for which $I^{\theta}_{\eta,d}$ is a valid quantum state\footnote{Since we require the state to be positive semidefinite.}. We plot this region for $d=3,5$ in figure \ref{PhaseIsoRange}. We see it is widest when $\theta=0$, ranging from $0$ to $d$ as we expect, and narrowest when $\theta=\pi$, where the phase isotropic state is valid in the range $\eta\in[\fbb{2-d}{2d-1},\fob{2}{d-1}]$. The bound of $\eta_c:=\fbb{2-d}{2d-1}$ may look familiar - it is the \emph{conjectured} bound for NPT undistillable entanglement of Werner states! As $I^{\pi}_{\eta_c,d}$ is a valid quantum state, that means that $W^{\pi}_{\eta_c,d}$ is a PPT state, and hence undistillable. This seemed to me a big coincidence and, along with the fact that for $d=2$ the phase Werner states are locally equivalent, led me to make the following conjecture.
\begin{conjecture}\label{PhaseConjecture}
The entanglement distillation of $W^{\theta}_{\eta,d}$ is equal for all $\theta$. 
\end{conjecture}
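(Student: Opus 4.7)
The plan is to establish a trace-preserving LOCC conversion between $W_{\eta,d}$ and $W^{\theta}_{\eta,d}$ in both directions, which by the monotonicity of $E_{D}$ under LOCC would immediately yield the conjectured equality. The $d=2$ case is the template: there the conjugation $(\mathrm{I}\otimes Z_{\theta})W_{\eta,2}(\mathrm{I}\otimes Z_{\theta})^{\dagger}=W^{\theta}_{\eta,2}$ already gives a local unitary equivalence, so the first move is to try to extend this to arbitrary $d$.

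The most direct attempt is a diagonal local unitary $U\otimes V$ with $U=\sum_{k}e^{i\alpha_{k}}\ket{k}\bra{k}$ and $V=\sum_{k}e^{i\beta_{k}}\ket{k}\bra{k}$. Since $W_{\eta,d}$ and $W^{\theta}_{\eta,d}$ differ only in the $\mathbb{F}$-component, one needs $(U\otimes V)\mathbb{F}(U\otimes V)^{\dagger}=\mathbb{F}^{\theta}$. A direct computation gives $(U\otimes V)\ket{ij}\bra{ji}(U\otimes V)^{\dagger}=e^{i(\gamma_{i}-\gamma_{j})}\ket{ij}\bra{ji}$ with $\gamma_{k}:=\alpha_{k}-\beta_{k}$. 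Comparing with the pattern $e^{i\theta}$ (for $i>j$) and $e^{-i\theta}$ (for $i<j$) in $\mathbb{F}^{\theta}$, one needs $\gamma_{i}-\gamma_{j}=\theta$ whenever $i>j$, which is inconsistent as soon as $d\geq 3$, since $\gamma_{j+2}-\gamma_{j}$ must simultaneously equal $\theta$ and $2\theta$. Non-diagonal local unitaries can be excluded on symmetry grounds, because they generically break the $U\otimes U$ invariance of $W_{\eta,d}$ without producing the very specific phase structure of $\mathbb{F}^{\theta}$.

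Having failed with a single local unitary, the next step is to look for a genuine LOCC simulation. The natural idea is to identify the stabiliser subgroup $G_{\theta}\subset U(d)\times U(d)$ of $W^{\theta}_{\eta,d}$ and to check whether twirling $W_{\eta,d}$ over $G_{\theta}$ reproduces $W^{\theta}_{\eta,d}$; by lemma~\ref{equivangle} we already know that $G_{\theta}$ relates $\theta$ to $-\theta$, and one would try to enlarge this to a $\theta$-dependent family of local conjugations whose classically-conditioned average produces the required off-diagonal phases. A complementary attempt would be a measurement-and-correction protocol in which Alice projects onto a basis tailored to $Z_{\theta}$, communicates the outcome, and Bob applies a compensating phase. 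In both cases one must also construct the reverse LOCC $W^{\theta}_{\eta,d}\to W_{\eta,d}$, and check compatibility with the projector form of the extremal states $W^{\theta}_{-1,d}$.

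The main obstacle is that the antisymmetric sign $\mathrm{sgn}(i-j)$ attached to the phases of $\mathbb{F}^{\theta}$ does not admit a tensor factorisation in the computational basis, which suggests that no classical mixture of local unitaries (and hence no LOCC acting multiplicatively in the product basis) can map $W_{\eta,d}$ exactly onto $W^{\theta}_{\eta,d}$ for $d\geq 3$. If this is indeed a genuine obstruction, I would fall back on indirect bounds: since the spectra are $\theta$-independent, the log-negativity, the REE, and all one-copy distillability tests based on Schmidt-rank-2 expectations transform covariantly and coincide, which already forces equality at the level of standard upper bounds on $E_{D}$. However, since even the $\theta=0$ case has an open NPT bound-entanglement question in the range $\eta\in[(2-d)/(2d-1),0)$, I expect any definitive proof to be at least as hard as that open problem — and I would not be surprised if the conjecture is in fact false for some phases, with the $d=2$ case being misleading because there the two states happen to be locally unitarily equivalent.
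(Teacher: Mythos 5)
Your proposal sets out to \emph{prove} the conjecture by constructing an LOCC conversion (ideally a local-unitary equivalence) between $W_{\eta,d}$ and $W^{\theta}_{\eta,d}$, correctly observes that a diagonal local unitary forces the contradictory condition $\gamma_{j+2}-\gamma_j=\theta$ and $=2\theta$ for $d\geq 3$, and hedges at the end that the conjecture might be false with $d=2$ being misleading. That final hedge is exactly right and should be promoted from afterthought to conclusion: the conjecture is \emph{false}, and the paper does not prove it but rather \emph{disproves} it by counterexample. Since you are trying to establish a statement that does not hold, every LOCC-equivalence or twirling construction you attempt in the main body of the proposal is doomed, and the obstruction you identify (the $\mathrm{sgn}(i-j)$ phase pattern of $\mathbb{F}^{\theta}$ has no local tensor factorisation for $d\geq 3$) is best read as evidence against the conjecture rather than as a stumbling block on the way to a proof.

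The explicit counterexample is the one the paper draws out of \cite{DSSTT2000} via the identification of $W^{\pi}_{\eta,d}$ with the two-parameter family $\rho_{b,c,d}$ of that paper. Theorem~\ref{upperpidistill} shows that $W^{\pi}_{\eta,d}$ is $1$-distillable for $\eta\in\left(\frac{d^{2}+2d-4}{d(2d-3)},\,1\right]$ by exhibiting a Schmidt-rank-$2$ vector $\ket{\psi_{\pi}}$ with $\bra{\psi_{\pi}}\left(W^{\pi}_{\eta,d}\right)^{T_B}\ket{\psi_{\pi}}<0$ there. For $d>4$ this interval is non-empty and lies entirely inside $\eta>0$, where the ordinary Werner state $W_{\eta,d}=W^{0}_{\eta,d}$ is \emph{separable} and hence has $E_{D}=0$. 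So $E_{D}\left(W^{\pi}_{\eta,d}\right)>0=E_{D}\left(W^{0}_{\eta,d}\right)$ for those $\eta$, refuting the conjecture. The Werner-twirling bound in Eq.~(\ref{distillconrearr}) gives a further family of counterexamples with $\cos\theta<-1/d$. Your observation that all single-copy spectral quantities (eigenvalues, trace norm to the identity, log-negativity) are $\theta$-independent is also worth keeping, but it cuts the other way: it shows why the conjecture is \emph{plausible-looking} while the distillability itself, which depends on the eigenvectors of the partial transpose and not just the spectrum of the state, is genuinely $\theta$-dependent. A clean way to close your argument would therefore be to replace the speculative final paragraph with the concrete statement: for $d>4$ and $\eta\in\left(\frac{d^{2}+2d-4}{d(2d-3)},1\right]$, $W^{\pi}_{\eta,d}$ is $1$-distillable while $W^{0}_{\eta,d}$ is separable, hence Conjecture~\ref{PhaseConjecture} is false.
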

This conjecture is strictly stronger than conjecture \ref{NPTconj1} as not only does it conjecture that for Werner states $W_{\eta,d}$ the region $\eta\in[\fbb{2-d}{2d-1},0)$ is NPT bound entangled, but also that \emph{any} phase Werner state $W^{\theta}_{\eta,d}$ with a negative partial transpose in the region $\eta\in[\fbb{2-d}{2d-1},1]$ would \emph{also} be bound entangled. This is possible in two regions:$[\fbb{2-d}{2d-1},0)$ and $(\fob{2}{d-1},1]$.\\

Unfortunately, this stronger conjecture is \emph{false}; the key to showing this is to limit our consideration to the two extremal angles, $\theta=0,\pi$. This is because for these particular values the eigenstates coincide, with different associated eigenvalues.

%If this conjecture is true, then not only would $W_{\eta,d}$ be NPT bound entangled  for the region $\frac{2-d}{2d-1}\leq \eta < 0$, but so too would \emph{all} $W^{\theta}_{\eta,d}$ with a negative partial transpose in the region $\frac{2-d}{2d-1}\leq \eta \leq 1$, which may occur in the two regions $\frac{2-d}{2d-1}\leq \eta < 0$ and $\frac{2}{d-1} \leq 1$. In order to determine the validity of the conjecture, we shall focus on the two extremal values of $\theta$, $0$ and $\pi$, since between them they define the widest regions of interest. Moreover, the eigenvalues for $I_{\eta,d}$ and $I^{\pi}_{\eta,d}$ are markedly simpler then for general $\theta$. The eigenvalues for $I^{\pi}_{\eta,d}$ are
%\begin{itemize}
%\item $\frac{d-\eta}{d^3-d}$ with multiplicity $d(d-1)$, and 
%\item $\frac{2-(d-1)\eta}{d(d+1)}$,
%\item $\frac{(d-2)+(2d-1)\eta}{d^3-d}$ with multiplicity $(d-1)$.
%\end{itemize}
\begin{figure}
\begin{center}
\includegraphics{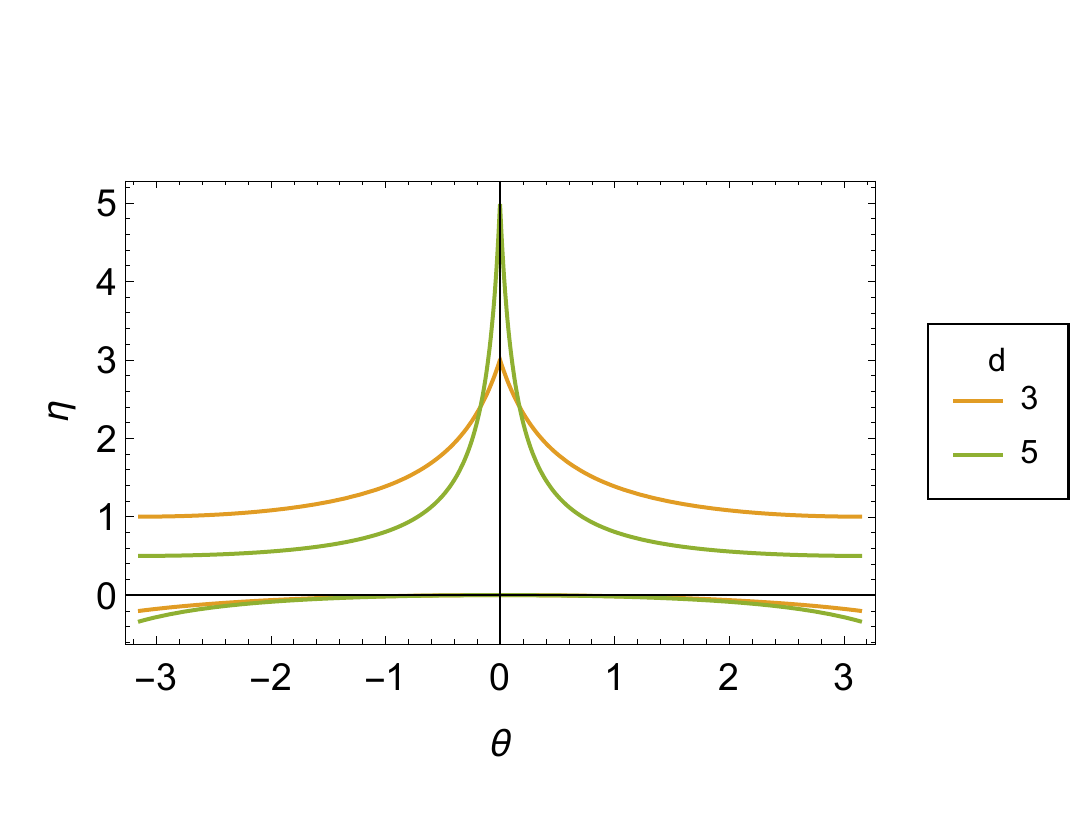}
\caption[Range of valid $I^{\theta}_{\eta,d}$ states]{Comparison of the allowable range of $\eta$ for valid $I^{\theta}_{\eta,d}$ states. Note the steep incline around $\theta=0$ and the dip at $\theta=\pi$.}\label{PhaseIsoRange}
\end{center}
\end{figure}

\subsection{Entanglement Properties of $\pi$-Werner States}
The eigensystem of $I_{\pi,d}$ is markedly simpler than that for general $\theta$;
\begin{itemize}
\item $d(d-1)$ eigenvectors $\ket{ij},\;i\neq j$, each with eigenvalue $\fbb{d-\eta}{d^3-d}$.
\item 1 eigenvector $\sum_{i=0}^{d-1} \ket{ii}/\sqrt{d}$, with eigenvalue $\fbb{2-(d-1)\eta}{d(d+1)}$.
\item $d-1$ eigenvectors $\ket{v_k},\;k\in\{1,\ldots d-1\}$. These have eigenvalue $\fbb{(d-2)+(2d-1)\eta}{d^3-d}$.
\end{itemize}

The states $\ket{v_k}$ are the same defined in section \ref{eigensystemsWI}. From this, we see that $I_{\pi,d}$ is a valid state for $\eta\in[\fbb{2-d}{2d-1},\fob{2}{d-1}]$. Therefore, the negativity of $W^{\pi}_{\eta,d}$ is non-zero for $\eta\in[-1,\fbb{2-d}{2d-1})$ and $\eta\in(\fob{2}{d-1},1]$, and so for these regions $W^{\pi}_{\eta,d}$ is NPT entangled - as shown in figure \ref{NegPhaseWer}.\\
\begin{figure}
\begin{center}
\includegraphics{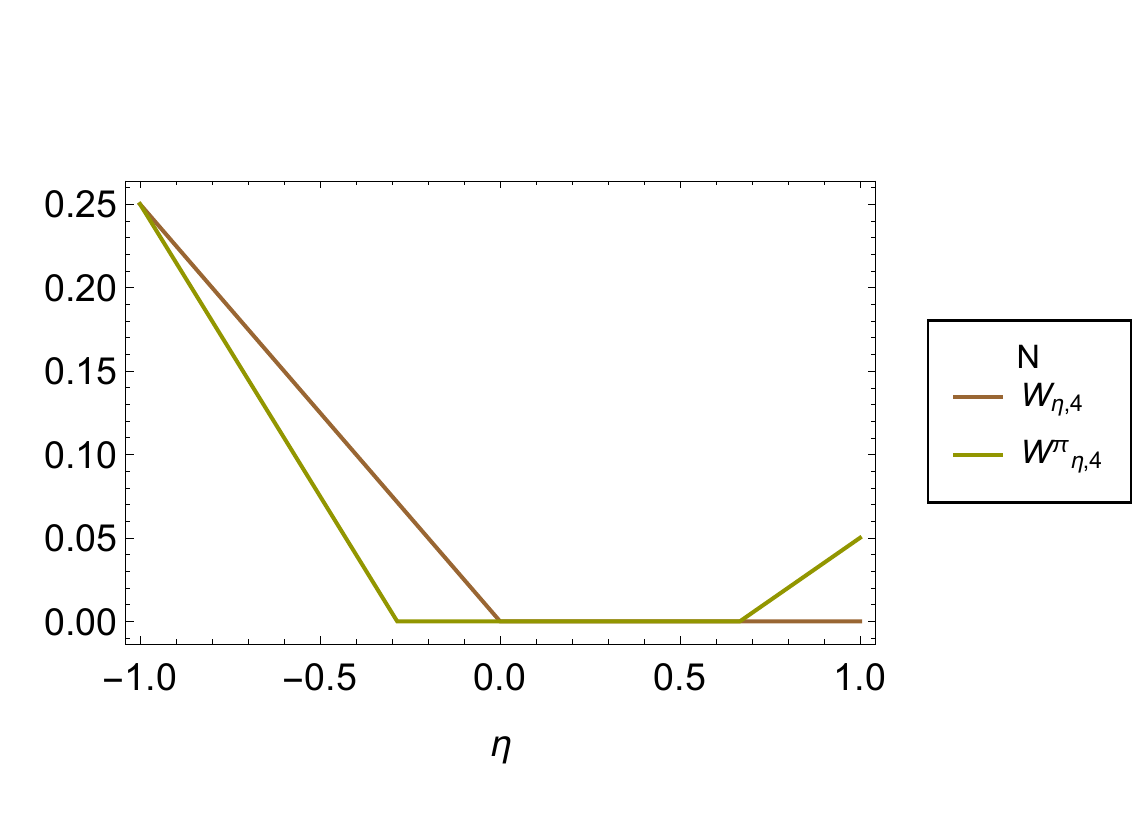}
\caption[Negativity of Werner and $\pi$-Werner states]{The negativity of Werner and $\pi$-Werner states; unlike Werner states, there exists NPT $\pi$-Werner states with $\eta>0$. Note the coincidence at $\eta=-1$; this holds for all $d$.}\label{NegPhaseWer}
\end{center}
\end{figure}

We can however, learn much more about these states by exploiting the results of \cite{DSSTT2000}. In this paper, the authors study the states:
\begin{equation}\label{twoparamstates}
\rho_{b,c,d}:=a\sum_{i=0}^{d-1}\ket{ii}\bra{ii}+b\sum_{i,j=0,\;i<j}^{d-1}\frac{\ket{ij}-\ket{ji}}{\sqrt{2}}\frac{\bra{ij}-\bra{ji}}{\sqrt{2}}+c\sum_{i,j=0,\;i<j}^{d-1}\frac{\ket{ij}+\ket{ji}}{\sqrt{2}}\frac{\bra{ij}+\bra{ji}}{\sqrt{2}}
\end{equation}
with $a=1/d-\foo{(b+c)(d-1)}{2}$, by the normalisation condition.\\
Whilst generally non-equal to the phase-Werner states, they coincide for
\begin{align}
\theta&=0 & b&=\frac{1-\eta}{d(d-1)} & c&=\frac{1+\eta}{d(d+1)}, \\
\theta&=\pi & b&=\frac{1+\eta}{d(d+1)} & c&=\frac{1-\eta}{d(d-1)}. 
\end{align}
Using this, we can interpret their results for variables $b,c$ to learn more about $\pi$-Werner states. This gives us three important results.
\begin{theorem}
States $W^{\pi}_{\eta,d}$, $\eta\in[\fbb{2-d}{2d-1},\fob{2}{d-1}]$ are separable.
\end{theorem}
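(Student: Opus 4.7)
Plan. I would exploit the fact that $W^{\pi}_{\eta,d}$ is invariant under the diagonal action $g\otimes g$ for every $g$ in the monomial group $G = D(d)\rtimes S_d$ of permutation matrices composed with diagonal phase unitaries---this is a direct calculation from the representation $\mathbb{F}^{\pi} = \sum_i \ket{ii}\bra{ii} - \sum_{i\neq j}\ket{ij}\bra{ji}$ derived in the excerpt. The $G\otimes G$-invariant subspace is three-dimensional, spanned by $A = \sum_i\ket{ii}\bra{ii}$, $B = \sum_{i\neq j}\ket{ij}\bra{ij}$ and $C = \sum_{i\neq j}\ket{ij}\bra{ji}$; in this basis
\begin{equation*}
W^{\pi}_{\eta,d} = \frac{1+\eta}{d(d+1)}A + \frac{d-\eta}{d^3-d}B + \frac{1-d\eta}{d^3-d}C.
\end{equation*}

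The main engine is the twirl $\mathcal{T}(\rho):=\int_G (g\otimes g)\rho(g\otimes g)^\dagger\,dg$. For a product state $\ket{\alpha}\otimes\ket{\beta}$, averaging first over $D(d)\otimes D(d)$ kills all off-diagonal phases, and the subsequent permutation average yields
\begin{equation*}
\mathcal{T}(\ket{\alpha\beta}\bra{\alpha\beta}) = \frac{p}{d}A + \frac{1-p}{d(d-1)}B + \frac{q-p}{d(d-1)}C,
\end{equation*}
with $p = \sum_k|\alpha_k|^2|\beta_k|^2$ and $q = |\braket{\beta|\alpha}|^2$. Since $\mathcal{T}$ is a mixture of local unitaries $(g,g)$ coordinated by shared randomness, it is an LOCC map and sends product states to separable states.

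Next I would realize $W^{\pi}_{\eta,d}$ at the two endpoints as twirls of explicit product states. At $\eta_2 = 2/(d-1)$ the matching demands $(p,q) = (1/(d-1),0)$, achieved (for $d\geq 3$) by $\ket{\alpha} = (d-1)^{-1/2}\sum_{k=0}^{d-2}\ket{k}$ and $\ket{\beta} = (d-1)^{-1/2}\sum_{k=0}^{d-2}e^{2\pi ik/(d-1)}\ket{k}$, using that the $(d-1)$-th roots of unity sum to zero. At $\eta_1 = (2-d)/(2d-1)$ the matching demands $(p,q) = (1/(2d-1),d/(2d-1))$, which saturates the Cauchy--Schwarz inequality $q \leq dp$; this forces $|\alpha_k||\beta_k| = 1/\sqrt{d(2d-1)}$ for every $k$ together with constant relative phase $\arg(\alpha_k\beta_k^*)$. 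Taking $\alpha_k,\beta_k$ real and positive with $r_k := \alpha_k^2$, $s_k := \beta_k^2$, the problem reduces to finding positive reals satisfying $r_k s_k = 1/(d(2d-1))$ and $\sum_k r_k = \sum_k s_k = 1$, equivalently $\sum_k r_k^{-1} = d(2d-1)$; since this exceeds the AM--HM minimum $d^2$ for every $d\geq 2$, non-uniform solutions exist (e.g.\ $r_1 = \cdots = r_{d-1} = a$, $r_d = 1-(d-1)a$ with $a$ solving a quadratic). The case $d=2$ is trivial: phase Werner states are then locally unitarily equivalent to ordinary Werner states, whose separability range $[0,1]$ matches the theorem's interval after intersecting with the valid parameter region.

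Finally, because the $(A,B,C)$-coefficients depend affinely on $\eta$, any intermediate $\eta\in[\eta_1,\eta_2]$ satisfies $W^{\pi}_{\eta,d} = \lambda\,W^{\pi}_{\eta_1,d} + (1-\lambda)W^{\pi}_{\eta_2,d}$ with $\lambda = (\eta_2-\eta)/(\eta_2-\eta_1)\in[0,1]$, and convexity of the set of separable states closes the argument. The main obstacle is the lower-endpoint construction: verifying realisability of $(p,q)=(1/(2d-1),d/(2d-1))$ by a genuine product state, which lies on the Cauchy--Schwarz boundary and forces the joint amplitude-equality condition above. A viable alternative is to invoke directly the separability characterization of the two-parameter family $\rho_{b,c,d}$ from \cite{DSSTT2000} into which $W^{\pi}_{\eta,d}$ is already shown to embed via $(b,c)=((1+\eta)/(d(d+1)),(1-\eta)/(d(d-1)))$.
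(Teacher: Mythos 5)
Your proposal is correct, and it proves the theorem by a genuinely different and more self-contained route than the paper. The paper's treatment of this claim is essentially a citation: it notes that $W^{\pi}_{\eta,d}$ falls into the two-parameter family $\rho_{b,c,d}$ of \cite{DSSTT2000} under the identification $b=(1+\eta)/(d(d+1))$, $c=(1-\eta)/(d(d-1))$, and then imports the separability region established there for that family. You instead build a constructive argument from scratch: you observe that $W^{\pi}_{\eta,d}$ is invariant under $g\otimes g$ for $g$ in the monomial group $D(d)\rtimes S_d$, identify the three-dimensional commutant spanned by $A=\sum_i\ket{ii}\bra{ii}$, $B=\sum_{i\neq j}\ket{ij}\bra{ij}$, $C=\sum_{i\neq j}\ket{ij}\bra{ji}$, compute the twirl of an arbitrary product state $\ket{\alpha\beta}$ in terms of $p=\sum_k|\alpha_k|^2|\beta_k|^2$ and $q=|\braket{\beta|\alpha}|^2$, exhibit explicit product states realizing the two interval endpoints (a roots-of-unity construction for $\eta=2/(d-1)$ and a Cauchy--Schwarz--saturating construction solving $\sum_k r_k=1$, $\sum_k r_k^{-1}=d(2d-1)$ for $\eta=(2-d)/(2d-1)$), and close by convexity, since the $(A,B,C)$ coefficients are affine in $\eta$. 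I checked the twirl coefficients and the endpoint matchings and they are correct; the normalization constraint makes the $B$-coefficient match automatically once the $A$- and $C$-coefficients are matched, so your two-parameter fit is not under-determined. What the two approaches buy: the paper's route is shorter and establishes the result in one line once the embedding into $\rho_{b,c,d}$ is noticed, but it is opaque and relies on the reader unpacking an external reference; yours is longer but fully explicit, and actually produces a separable decomposition (a $G$-twirl of two concrete product states, convexly combined), which is information the cited result does not directly supply. You also correctly flag the paper's citation route as an alternative at the end of your proposal, so you are aware of both. The only presentational caveat is the $d=2$ case, where the stated interval $[0,2]$ exceeds the valid range $[-1,1]$ of the parameter; you handle this correctly by restricting to $[0,1]$ and invoking local-unitary equivalence to ordinary Werner states.
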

These states are exactly the $\pi$-Werner states with a PPT. As a consequence, this tells us \emph{all} $\pi$-isotropic states are separable. This means that for $\theta=0$, PPT is equivalent to separability for both Werner and isotropic states, and so too for $\theta=\pi$. We can therefore speculate that this equivalence holds for all values of $\theta$.\\

\begin{theorem}\label{lowerpidistill}
States $W^{\pi}_{\eta,d}$, $\eta\in[-1,\fbb{2-d}{2d-1})$ are 1-distillable.
\end{theorem}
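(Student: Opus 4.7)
The plan is to use the characterization of $1$-distillability via the partial transpose: a state $\rho$ is $1$-distillable if and only if there exists a Schmidt rank $2$ vector $|\phi\rangle$ such that $\langle \phi | \rho^{T_B} | \phi \rangle < 0$, as used in the proof for the standard Werner case preceding this theorem. Since $(W^{\pi}_{\eta,d})^{T_B} = I^{\pi}_{\eta,d}$, it suffices to exhibit a Schmidt rank $2$ vector giving a negative expectation value against $I^{\pi}_{\eta,d}$ throughout the claimed range.

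First, I would recall the eigensystem of $I^{\pi}_{\eta,d}$ listed just above the theorem statement: the eigenvectors $|v_k\rangle = \sqrt{k/(k+1)}|kk\rangle - \sum_{j=0}^{k-1}|jj\rangle/\sqrt{k(k+1)}$ for $k \in \{1,\ldots,d-1\}$ carry the eigenvalue $\lambda(\eta) := ((d-2)+(2d-1)\eta)/(d^3-d)$. The crucial observation is that $\lambda(\eta) < 0$ precisely when $\eta < (2-d)/(2d-1)$, which is exactly the range stipulated in the theorem.

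Next I would pick the simplest such eigenvector, namely $|v_1\rangle = (|11\rangle - |00\rangle)/\sqrt{2}$. Its Schmidt decomposition $(1/\sqrt{2})|1\rangle_A|1\rangle_B - (1/\sqrt{2})|0\rangle_A|0\rangle_B$ has exactly two non-zero Schmidt coefficients, so $|v_1\rangle$ has Schmidt rank $2$. Then I would compute directly, using the eigenvalue equation,
\begin{equation}
\langle v_1 | (W^{\pi}_{\eta,d})^{T_B} | v_1 \rangle = \langle v_1 | I^{\pi}_{\eta,d} | v_1 \rangle = \frac{(d-2)+(2d-1)\eta}{d^3-d},
\end{equation}
which is strictly negative for $\eta \in [-1, (2-d)/(2d-1))$, establishing $1$-distillability by the stated criterion.

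There is no significant obstacle here: the eigensystem of $I^{\pi}_{\eta,d}$ has already been computed, and $|v_1\rangle$ automatically lives in the $2 \otimes 2$ subspace $\mathrm{span}\{|0\rangle,|1\rangle\} \otimes \mathrm{span}\{|0\rangle,|1\rangle\}$, so the associated local projection $P_A \otimes P_B = (|0\rangle\langle 0| + |1\rangle\langle 1|)^{\otimes 2}$ explicitly realizes the distillation step, mirroring the argument given for standard Werner states. The only mild subtlety worth noting is that the negative eigenvalue branch of $I^{\pi}_{\eta,d}$ is governed by the same threshold $\eta_c = (2-d)/(2d-1)$ that appears in the Werner conjecture — which makes the matching of ranges in the theorem statement immediate rather than a coincidence.
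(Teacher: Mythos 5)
Your proposal is correct and follows essentially the same route as the paper: both take $(W^{\pi}_{\eta,d})^{T_B}=I^{\pi}_{\eta,d}$, pick the Schmidt-rank-$2$ eigenvector $\pm(\ket{00}-\ket{11})/\sqrt{2}$ of $I^{\pi}_{\eta,d}$, and observe its eigenvalue $\left((d-2)+(2d-1)\eta\right)/(d^3-d)$ is negative exactly for $\eta<(2-d)/(2d-1)$, establishing $1$-distillability.
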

We show this is the case using the condition for 1-distillability - that there exists a state $\ket{\phi_\pi}$ satisfying $\bra{\phi_\pi}(W^{\pi}_{\eta,d})^{T_B}\ket{\phi_\pi}=\bra{\phi_\pi}I^{\pi}_{\eta,d}\ket{\phi_\pi}<0$. The state $\ket{\phi_\pi}=\fbo{\ket{00}-\ket{11}}{\sqrt{2}}$ is an eigenvector of $I^{\pi}_{\eta,d}$ with eigenvalue $\fbb{(d-2)+(2d-1)\eta}{d^3-d}$. Clearly, this has Schmidt rank 2, and shows the 1-distillability of $W^{\pi}_{\eta,d}$ when $\eta < \fbb{2-d}{2d-1}$.\\

\begin{theorem}\label{upperpidistill}
States $W^{\pi}_{\eta,d}$, $\eta\in\left(\fbb{d^2+2d-4}{d (2 d-3)},1\right]$ are 1-distillable.
\end{theorem}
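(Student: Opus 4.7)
The plan is to exhibit an explicit Schmidt-rank-$2$ witness $\ket{\phi_\pi}$ with $\bra{\phi_\pi}I^{\pi}_{\eta,d}\ket{\phi_\pi} < 0$ throughout the stated range, in the same spirit as theorem~\ref{lowerpidistill}. First I would substitute $\mathbb{M}^{\pi} = 2\sum_i \ket{ii}\bra{ii} - \mathbb{M}$ into the definition of $I^{\pi}_{\eta,d}$; after using $\|\phi\|=1$, the expectation value on any normalised $\ket{\phi}$ collapses to
\[
\bra{\phi}I^{\pi}_{\eta,d}\ket{\phi} \;=\; \frac{(d-\eta) + (d\eta-1)(2\Sigma - T)}{d(d^2-1)},
\]
where $\Sigma := \sum_i |\phi_{ii}|^2$ and $T := \bigl|\sum_i \phi_{ii}\bigr|^2$ depend only on the diagonal amplitudes $\phi_{ii} := \braket{ii|\phi}$. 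Writing $M := T - 2\Sigma$, and using $d\eta - 1 > 0$ in the target range, the negativity condition is equivalent to $\eta > (d+M)/(dM+1)$, so the problem reduces to maximising $M$ under the constraint that $\ket{\phi}$ has Schmidt rank at most $2$.

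For the witness itself I would take $\ket{\phi_\pi} = \tfrac{1}{\sqrt{2}}\bigl(\ket{a_1}\ket{a_1} + \ket{a_2}\ket{a_2}\bigr)$, with $\{\ket{a_1},\ket{a_2}\}$ an orthonormal pair of real vectors in $\mathbb{R}^d$ whose rank-$2$ projector $\ket{a_1}\bra{a_1}+\ket{a_2}\bra{a_2}$ has every diagonal entry equal to $2/d$. Orthogonality of $\ket{a_1}\ket{a_1}$ and $\ket{a_2}\ket{a_2}$ (their inner product is $\braket{a_1|a_2}^2 = 0$) ensures Schmidt rank exactly $2$, and the diagonal condition forces $\phi_{kk} = \sqrt{2}/d$ for every $k$, so that $T = 2$, $\Sigma = 2/d$, and $M = 2(d-2)/d$. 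Inserting this into the threshold reproduces precisely $\eta > (d^2+2d-4)/(d(2d-3))$. Two successive Cauchy--Schwarz estimates, $T \leq 2$ and $T \leq d\Sigma$, confirm $M \leq 2(d-2)/d$ over all Schmidt-rank-$2$ states, so this threshold is the best achievable by any such witness.

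The main technical obstacle is constructing the pair $\{\ket{a_1},\ket{a_2}\}$ explicitly. For even $d$ the immediate choice $\ket{a_1} = d^{-1/2}\sum_k\ket{k}$ and $\ket{a_2} = d^{-1/2}\sum_k \epsilon_k\ket{k}$ with signs $\epsilon_k \in \{\pm 1\}$ and $\sum_k \epsilon_k = 0$ does the job. For odd $d$ the uniform-amplitude construction fails since no sign pattern balances, so one must perturb two amplitudes: I would take $(a_1)_{0,1} = \sqrt{(2 \pm \sqrt{3})/(2d)}$, $(a_2)_{0,1} = \sqrt{(2 \mp \sqrt{3})/(2d)}$, and the remaining entries $\pm 1/\sqrt{d}$ with signs $\tau_k$ satisfying $\sum_{k \geq 2}\tau_k = -1$ (possible since $d-2$ is odd), and verify that this preserves $(a_1)_k^2 + (a_2)_k^2 = 2/d$ at every $k$ while enforcing $\braket{a_1|a_2} = 0$. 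This closes the argument uniformly in $d$.
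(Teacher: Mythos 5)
Your proof is correct, and it takes a genuinely different and somewhat richer route than the paper's. The paper simply quotes a single explicit Schmidt-rank-$2$ witness (taken, with a sign correction, from \cite{DSSTT2000}), a complex Fourier-phase state of the form
\[
\ket{\psi_\pi} = \frac{1}{\sqrt{2}}\bigl(\ket{u}\ket{u} + \ket{v}\ket{v^*}\bigr),\qquad \ket{u} = \tfrac{1}{\sqrt{d}}\textstyle\sum_j\ket{j},\ \ket{v} = \tfrac{1}{\sqrt{d}}\textstyle\sum_j e^{2\pi i j/d}\ket{j},
\]
which is defined uniformly for every $d\geq 2$ (the $/d$ in the exponents is omitted in the thesis's typeset formula, but is clearly intended), and then evaluates $\bra{\psi_\pi}I^{\pi}_{\eta,d}\ket{\psi_\pi}$ directly. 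You instead first extract the key structural observation $\mathbb{M}^{\pi} = 2\sum_i\ket{ii}\bra{ii} - \mathbb{M}$, which reduces the expectation to the two scalars $\Sigma = \sum_i|\phi_{ii}|^2$ and $T = |\sum_i\phi_{ii}|^2$, and then you treat the problem as an optimisation of $M = T - 2\Sigma$ over Schmidt-rank-$2$ states. Your explicit real-amplitude witness $\ket{\phi_\pi} = \tfrac{1}{\sqrt{2}}(\ket{a_1}\ket{a_1}+\ket{a_2}\ket{a_2})$ with $(a_1)_k^2+(a_2)_k^2 = 2/d$ hits the same value $\phi_{kk}=\sqrt{2}/d$, hence the same $M = 2(d-2)/d$ and the same threshold; the cost is that you need a case split between even and odd $d$ where the paper's Fourier construction does not. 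What you buy in return is the two Cauchy--Schwarz estimates $T\leq 2$ (from rank $\leq 2$ plus unit Frobenius norm) and $T\leq d\Sigma$, which prove $M\leq 2(d-2)/d$ and hence that no Schmidt-rank-$2$ witness of this kind can do better — an optimality statement the paper does not make. Both the algebra (rearranging to $\eta > (d+M)/(dM+1)$ under $d\eta>1$, monotonicity in $M$) and the odd-$d$ construction ($1/(2d)+1/(2d)+\tfrac{1}{d}\sum_{k\geq2}\tau_k=0$ via $\sum_{k\geq 2}\tau_k = -1$ over $d-2$ odd terms) check out.
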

In the same manner as theorem \ref{lowerpidistill}, we choose an explicit $\ket{\psi_\pi}$ with Schmidt rank 2 satisfying $\bra{\psi_\pi}I^{\pi}_{\eta,d}\ket{\psi_\pi}<0$. The state $\ket{\psi_\pi}$ is defined\footnote{In \cite{DSSTT2000} this state is incorrectly defined with no negative in the exponent of the final sum.}
\begin{equation}
\ket{\psi_\pi}:=\frac{1}{d\sqrt{2}}\left(\left(\sum_{j=0}^{d-1}\ket{j}\right)\otimes\left(\sum_{k=0}^{d-1}\ket{k}\right)+\left(\sum_{j=0}^{d-1}e^{2\pi i j}\ket{j}\right)\otimes\left(\sum_{k=0}^{d-1}e^{-2\pi i k}\ket{k}\right)\right).
\end{equation}
The expectation for this state gives:
\begin{equation}
\bra{\psi_\pi}I^{\pi}_{\eta,d}\ket{\psi_\pi}=\frac{(d^2+2d-4)-(d (2 d-3))\eta}{d^2(d^2-1)}
\end{equation}
which is negative when $\eta> \fbb{d^2+2d-4}{d (2 d-3)}$.\\

For $d > 4$, the region $(\fbb{d^2+2d-4}{d (2 d-3)},1]$ is non-trivial. This means we have a set of states  $W^{\pi}_{\eta,d}$, $\eta\in\left(\fbb{d^2+2d-4}{d (2 d-3)},1\right]$ which are 1-distillable, while their $W_{\eta,d}$ counterparts are separable. This provides a counterexample to conjecture \ref{PhaseConjecture}. An interesting point is that
\begin{equation}
\frac{2}{d-1} \leq \frac{d^2+2d-4}{d (2 d-3)}
\end{equation}
for all $d>2$, and thus for $d\geq 4$ the distillability of $\pi$-Werner states with\\
 $\eta\in\left(\fob{2}{d-1},\fbb{d^2+2d-4}{d (2 d-3)}\right]$ is still an open question - though \cite{DSSTT2000} gives they are pseudo 1-copy undistillable.

\subsection{Distillation of Phase Werner States}
\subsubsection{Separability of Phase Werner States}
We have seen that all Werner and $\pi$-Werner states with a positive partial transpose are separable. This result follows from \cite{DSSTT2000}, in which all such states are written as convex combinations of separable states of the form~(\ref{twoparamstates}). Thus, we cannot apply this result for general phase Werner states. Instead, we exploit a \emph{sufficient} condition for separability, given in \cite{GB2002}. The idea behind it is intuitive; if a state is ``sufficiently close" to the maximally mixed state, it is unentangled. This is formalised mathematically in the following theorem.
\begin{theorem}[\cite{GB2002}]
The state $\rho\in\mathcal{H}_{d_\rho}$ is separable if 
\begin{equation}
\mathrm{Tr}[\rho^2]\leq \frac{1}{d_\rho-1}
\end{equation}
where $\mathrm{Tr}[\rho^2]$ is the purity of $\rho$, with $\mathrm{Tr}[\rho^2]=1$ iff $\rho$ is pure.
\end{theorem}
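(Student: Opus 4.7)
The plan is to establish a ``separable ball'' around the maximally mixed state $\mathrm{I}/d_\rho$ in the Hilbert--Schmidt norm, and then to show that the purity hypothesis places $\rho$ inside this ball. First I would decompose $\rho=\mathrm{I}/d_\rho+A$ with $A$ traceless and Hermitian. A direct computation gives
\begin{equation}
\mathrm{Tr}[\rho^2]=\frac{1}{d_\rho}+\mathrm{Tr}[A^2],
\end{equation}
so the hypothesis $\mathrm{Tr}[\rho^2]\leq 1/(d_\rho-1)$ is equivalent to the Hilbert--Schmidt bound $\|A\|_2^2:=\mathrm{Tr}[A^2]\leq 1/(d_\rho(d_\rho-1))$. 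Since $\|A\|_\infty\leq\|A\|_2\leq 1/d_\rho$, the operator $\rho$ is automatically positive semidefinite and hence a valid state.

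Second, I would fix a bipartition $d_\rho=d_Ad_B$ and observe that $\mathrm{I}/d_\rho=(\mathrm{I}_A/d_A)\otimes(\mathrm{I}_B/d_B)$ is trivially separable; the goal is therefore to transport separability to a neighbourhood of this point. The natural way is via the dual characterisation of separability: $\rho$ is separable if and only if $\mathrm{Tr}[W\rho]\geq 0$ for every entanglement witness $W$, that is, every Hermitian $W$ satisfying $\bra{\alpha}\bra{\beta}W\ket{\alpha}\ket{\beta}\geq 0$ on all product vectors but failing positive semidefiniteness. Splitting $W=\mu\mathrm{I}+W_0$ with $W_0$ traceless and $\mu=\mathrm{Tr}[W]/d_\rho$,
\begin{equation}
\mathrm{Tr}[W\rho]=\mu\,\mathrm{Tr}[\mathrm{I}]\cdot\frac{1}{d_\rho}\cdot d_\rho+\mathrm{Tr}[W_0 A]=\mathrm{Tr}[W]\cdot\frac{1}{d_\rho}+\mathrm{Tr}[W_0 A]\geq\frac{\mathrm{Tr}[W]}{d_\rho}-\|W_0\|_2\|A\|_2,
\end{equation}
by the Cauchy--Schwarz inequality for the Hilbert--Schmidt inner product.

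The main obstacle is the third, structural step: bounding $\|W_0\|_2$ in terms of $\mathrm{Tr}[W]$. The aim is to prove that every witness satisfies
\begin{equation}
\|W_0\|_2\leq\mathrm{Tr}[W]\sqrt{\frac{d_\rho-1}{d_\rho}},
\end{equation}
which combined with the purity-derived bound $\|A\|_2\leq 1/\sqrt{d_\rho(d_\rho-1)}$ yields $\mathrm{Tr}[W\rho]\geq 0$. This bound is the genuinely non-local content of the theorem and cannot be obtained by single-site spectral estimates; I would expect to prove it by averaging the positivity condition $\bra{\alpha}\bra{\beta}W\ket{\alpha}\ket{\beta}\geq 0$ over a sufficiently rich orbit of product states (for example, a $2$-design generated by the local Weyl--Heisenberg group), translating the pointwise constraint into an inequality between $\mathrm{Tr}[W]$ and $\mathrm{Tr}[W^2]$.

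Finally, assembling the ingredients, the inequality $\mathrm{Tr}[W\rho]\geq 0$ holds for every witness, which by the Hahn--Banach-style duality of the separable cone forces $\rho$ to be separable. The technical heart of the argument is the witness bound; everything else is routine linear algebra. A useful sanity check is that the radius $1/\sqrt{d_\rho(d_\rho-1)}$ is tight, being saturated on the boundary by certain rank-one perturbations of $\mathrm{I}/d_\rho$ that lie exactly on the edge of the separable set.
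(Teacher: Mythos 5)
The paper does not prove this theorem at all; it is imported as a cited result from Gurvits and Barnum \cite{GB2002}, so there is no in-paper proof to compare against. That said, your outline does faithfully reproduce the architecture of the actual Gurvits--Barnum argument: centre $\rho$ at $\mathrm{I}/d_\rho$, convert the purity bound to a Hilbert--Schmidt radius, and push it through the entanglement-witness duality. Your arithmetic is right: $\mathrm{Tr}[\rho^2]\leq 1/(d_\rho-1)$ is equivalent to $\|A\|_2\leq 1/\sqrt{d_\rho(d_\rho-1)}$, and the separability conclusion would follow once one knows $\|W_0\|_2\leq \mathrm{Tr}[W]\sqrt{(d_\rho-1)/d_\rho}$, which (expanding $\|W_0\|_2^2=\mathrm{Tr}[W^2]-(\mathrm{Tr}[W])^2/d_\rho$) is precisely the clean statement $\mathrm{Tr}[W^2]\leq(\mathrm{Tr}[W])^2$ for block-positive $W$. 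You have correctly located the one non-trivial lemma.

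There are two problems. The first is minor: the inequality chain $\|A\|_\infty\leq\|A\|_2\leq 1/d_\rho$ is wrong --- the hypothesis gives $\|A\|_2\leq 1/\sqrt{d_\rho(d_\rho-1)}$, which exceeds $1/d_\rho$. Positivity of $\rho$ is nonetheless automatic, but the argument must use $\mathrm{Tr}[A]=0$: a traceless Hermitian $A$ with $\|A\|_2\leq 1/\sqrt{d_\rho(d_\rho-1)}$ has its most negative eigenvalue bounded below by $-\|A\|_2\sqrt{(d_\rho-1)/d_\rho}=-1/d_\rho$. (In any case $\rho$ is given as a state, so this remark is not load-bearing.)

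The second problem is genuine and is exactly where you flagged uncertainty: the proposed proof of $\mathrm{Tr}[W^2]\leq(\mathrm{Tr}[W])^2$ by averaging the pointwise constraint $\bra{\alpha\beta}W\ket{\alpha\beta}\geq 0$ over a local $2$-design does not work. Computing the moments over Haar-random product states, one finds $\mathbb{E}[f]=\mathrm{Tr}[W]/d_\rho$ and
\begin{equation}
\mathbb{E}[f^2]=\frac{(\mathrm{Tr}W)^2+\|\mathrm{Tr}_A W\|_2^2+\|\mathrm{Tr}_B W\|_2^2+\mathrm{Tr}[W^2]}{d_A(d_A+1)d_B(d_B+1)},
\end{equation}
with $f(\alpha,\beta):=\bra{\alpha\beta}W\ket{\alpha\beta}$. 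Both of these quantities are non-negative for \emph{any} Hermitian $W$, not just block-positive ones, so the averaging extracts no constraint whatsoever from $f\geq 0$; and Jensen's inequality $\mathbb{E}[f^2]\geq(\mathbb{E}[f])^2$ points in the wrong direction. One would need to exploit the pointwise sign condition, e.g.\ via $\mathbb{E}[f^2]\leq\|W\|_\infty\,\mathbb{E}[f]$, but that introduces the uncontrolled quantity $\|W\|_\infty$. The actual Gurvits--Barnum proof of the witness bound goes through a separate structural argument about block-positive operators (not a moment calculation), and until that lemma is supplied the proof is incomplete at its technical heart.
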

We can calculate 
\begin{equation}
\mathrm{Tr}\left[\left(W^{\theta}_{\eta,d}\right)^2\right]=\frac{d-2\eta+d\eta^2}{d^3-d}.
\end{equation} 
For phase Werner states $d_\rho=\mathrm{dim}\left[W^{\theta}_{\eta,d}\right]=d^2$, and thus we require
\begin{equation}
\frac{d-2\eta+d\eta^2}{d^3-d}\leq \frac{1}{d^2-1}\Rightarrow \eta\in\left[0,\frac{2}{d}\right].
\end{equation}
This criterion is not tight for $W_{\eta,d}$ nor $W^{\pi}_{\eta,d}$, whilst for general phase Werner states we have two regions of uncertainty for the PPT states with $\eta< 0$ and $\eta > \foo{2}{d}$. We conjecture these are also separable, since this occurs for the extremal angles $0,\pi$.\\

Although it is interesting for completeness whether these PPT phase Werner states are separable, we know that these states are undistillable by theorem \ref{boundent}. Of more interest to us is the distillation of NPT  phase Werner states. Whilst we cannot use directly the results of \cite{DSSTT2000}, we can generalise the result of theorem \ref{lowerpidistill}. 

\begin{theorem}\label{lowerthetadistill}
States $W^{\theta}_{\eta,d}$, $\eta\in[-1,\fbb{2-d}{2d-1})$ are distillable.
\end{theorem}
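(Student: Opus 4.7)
The plan is to mirror the strategy of Theorem~\ref{lowerpidistill}: exhibit an explicit Schmidt rank 2 vector $\ket{\phi_\theta}$ such that $\bra{\phi_\theta}(W^{\theta}_{\eta,d})^{T_B}\ket{\phi_\theta} = \bra{\phi_\theta}I^{\theta}_{\eta,d}\ket{\phi_\theta} < 0$ throughout $\eta \in [-1,(2-d)/(2d-1))$. The Horodecki distillability criterion referenced in the proof of Theorem~\ref{lowerpidistill} then immediately gives 1-distillability (project onto $\mathrm{span}\{\ket{a_0},\ket{a_1}\}\otimes\mathrm{span}\{\ket{b_0},\ket{b_1}\}$ to obtain an entangled two-qubit state, which is always distillable).

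The natural ansatz is $\ket{\phi_\theta} := (\ket{00} + e^{-i\theta}\ket{11})/\sqrt{2}$, which interpolates the two extremal cases already established: at $\theta=0$ it reduces to $(\ket{00}+\ket{11})/\sqrt{2}$, the witness used for ordinary Werner states, and at $\theta=\pi$ it reduces to $(\ket{00}-\ket{11})/\sqrt{2}$, exactly the witness of Theorem~\ref{lowerpidistill}. First I will read off, from the explicit matrix form (\ref{pointwiseM}) of $\mathbb{M}^\theta$, the restriction of $\mathbb{M}^\theta$ to the two-dimensional subspace $\mathrm{span}\{\ket{00},\ket{11}\}$, which is the $2\times2$ block
\begin{equation}
\begin{pmatrix} 1 & e^{i\theta} \\ e^{-i\theta} & 1 \end{pmatrix}.
\end{equation}
A direct computation shows that $\ket{\phi_\theta}$ is an eigenvector of this block with eigenvalue $2$. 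The next step is to verify that the components of $\mathbb{M}^\theta\ket{\phi_\theta}$ living outside this subspace (i.e.\ along $\ket{ii}$ for $i\geq 2$) are orthogonal to $\ket{\phi_\theta}$, which is automatic from the support structure; hence $\bra{\phi_\theta}\mathbb{M}^\theta\ket{\phi_\theta}=2$.

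Substituting into the definition of $I^{\theta}_{\eta,d}$ then yields
\begin{equation}
\bra{\phi_\theta}I^{\theta}_{\eta,d}\ket{\phi_\theta} = \frac{(d-\eta)+2(d\eta-1)}{d^3-d} = \frac{(d-2)+(2d-1)\eta}{d^3-d},
\end{equation}
which is strictly negative precisely when $\eta < (2-d)/(2d-1)$. Since $\ket{\phi_\theta}$ has Schmidt rank $2$ for every value of $\theta$, the Horodecki one-copy distillation criterion applies and the theorem follows.

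I do not expect a serious obstacle. The computation is a clean generalization of Theorem~\ref{lowerpidistill}, and the only delicate point is correctly tracking the off-diagonal phases in the definition of $\mathbb{M}^\theta$ so as to pick the right sign of $e^{i\theta}$ in the witness; the fact that $\ket{\phi_0}$ and $\ket{\phi_\pi}$ both fall out as special cases is a good consistency check.
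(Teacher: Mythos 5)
Your proof is correct and is essentially identical to the paper's: both use the ansatz $\ket{\phi_\theta}=(\ket{00}+e^{-i\theta}\ket{11})/\sqrt{2}$ and arrive at $\bra{\phi_\theta}I^{\theta}_{\eta,d}\ket{\phi_\theta}=\bigl((d-2)+(2d-1)\eta\bigr)/(d^3-d)$. Your intermediate observation that $\ket{\phi_\theta}$ is an eigenvector of the $2\times 2$ block of $\mathbb{M}^\theta$ on $\mathrm{span}\{\ket{00},\ket{11}\}$ with eigenvalue $2$ is a nice explicit justification of the expectation value which the paper leaves implicit, and it is fully consistent with the paper's remark that $\ket{\phi_\theta}$ is not an eigenstate of the full $I^\theta_{\eta,d}$ (the off-block components of $\mathbb{M}^\theta$ spoil that, as you note).
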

\begin{proof}
To prove this result, we again provide an explicit $\ket{\phi_\theta}$ such that $\bra{\phi_\theta}(W^{\theta}_{\eta,d})^{T_B}\ket{\phi_\theta}<0$, implying 1-distillability. Inspired by the result that for $\theta=0$, $\ket{\phi_0}=\fbo{\ket{00}+\ket{11}}{\sqrt{2}}$, and for $\theta=\pi$, $\ket{\phi_\pi}=\fbo{\ket{00}-\ket{11}}{\sqrt{2}}$, we make the ansantz that \\
$\ket{\phi_\theta}=\fbo{\ket{00}+e^{-i\theta}\ket{11}}{\sqrt{2}}$. We find that
\begin{equation}
\bra{\phi_\theta}I^{\theta}_{\eta,d}\ket{\phi_\theta}=\frac{(d-2)+(2d-1)\eta}{d^3-d}
\end{equation}
implying that phase Werner states with $\eta < \fbb{2-d}{2d-1}$ are 1-distillable. Except for $\theta=\pi$, the state $\ket{\phi_{\theta}}$ is \emph{not} an eigenstate of $I^{\theta}_{\eta,d}$.
\end{proof}

For the NPT region $\fob{2}{d-1} < \eta$, we have no ansantz state akin to the one used to prove theorem \ref{upperpidistill}. Instead we use another distillation technique, relying on \emph{twirling}.

%For phase Isotropic states, we find $\mathrm{Tr}\left[\left(I^{\theta}_{\eta,d}\right)^2\right]=\frac{d-2\eta+d\eta^2}{d^3-d}$ also. Thus too, we can conclude $I^{\pi}_{\eta,d},\; 0 \leq \eta \leq \frac{2}{d}$ are separable, with the PPT region $\frac{2}{d} < \eta \leq \frac{2}{d-1}$ still uncertain.
\subsubsection{Twirling Phase Werner States}
Whilst phase Werner states are only equivalent under local unitaries for the special case $d=2$, we can still create Werner states from them in higher dimensions via twirling. Labelling the Werner twirl (Eq.~(\ref{WernTwirl})) as $T_W$,  we necessarily have $E_D(T_W(\rho))\leq E_D(\rho)$, since this twirl is an LOCC operation. \\

Performing a Werner twirl of $W^{\theta}_{\eta,d}$ we find:
\begin{equation}
T_W(W^{\theta}_{\eta,d})=W_{\zeta_{\theta},d},\;\zeta_\theta=\frac{(1+\eta)+(d\eta-1)\cos\theta}{d+1}.
\end{equation}
We know Werner states $W_{\zeta,d}$ with $\zeta<\fbb{2-d}{2d-1}$ are distillable ($E_D(W_{\zeta,d})>0$). Thus, if 
\begin{equation}\label{distillcon}
\frac{(1+\eta)+(d\eta-1)\cos\theta}{d+1} < \frac{2-d}{2d-1}
\end{equation}
then $W^{\theta}_{\eta,d}$ is distillable by first twirling to a Werner state, then the projection onto the $\{\ket{0},\ket{1}\}\otimes\{\ket{0},\ket{1}\}$ subspace. Rearranging Eq.~(\ref{distillcon}) we require
\begin{equation}\label{distillconrearr}
\eta\left(1+d\cos\theta\right) < \frac{(2-d)(d+1)}{2d-1} -(1-\cos\theta).
\end{equation}
For $\cos\theta <-\foo{1}{d}$, this defines a range of \emph{positive} values of $\eta$ for which $W^{\theta}_{\eta,d}$ are distillable via this method. Figure \ref{8Distill} gives a plot of distillable $\eta$ for $d=8$, varying over $\theta$ - we see that there indeed exist valid $W^{\theta}_{\eta,d},\;\eta>0$ distillable via this method - more counterexamples to conjecture \ref{PhaseConjecture}.\\
\begin{figure}
\begin{center}
\includegraphics{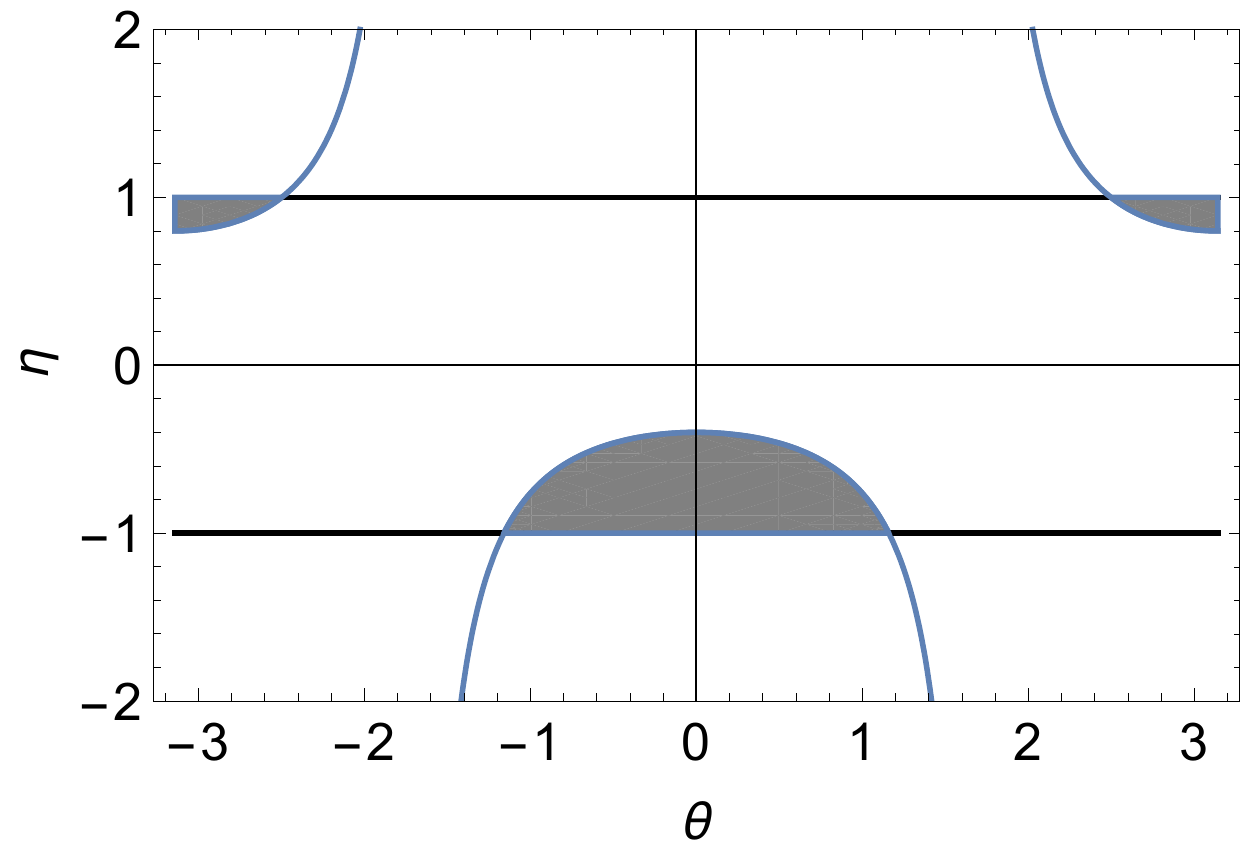}
\caption[Distillable states $W^{\theta}_{\eta,8}$] {The shaded region denotes states $W_{\eta,8}^\theta$ which are distillable after the Werner twirling $T_W$. The blue line is the criticial boundary given by Eq.~(\ref{distillconrearr}).}\label{8Distill}
\end{center}
\end{figure}

%This matches the transformation we saw on Isotropic states under Isotropic twirls.
For $\theta=\pi$, the condition (\ref{distillconrearr}) simplifies to:
\begin{equation}
\eta > \frac{4+d}{2d-1}.
\end{equation}
This gives a non-trivial region $\eta\in(\fbb{4+d}{2d-1},1]$ of states $W^{\pi}_{\eta,d}$ which are distillable via Werner twirl when $d>5$. This is illustrated in figure \ref{WernTwirlFig}. We can compare this condition to that of theorem \ref{upperpidistill} to find:
\begin{equation}
\frac{d^2+2d-4}{d (2 d-3)} < \frac{4+d}{2d-1}
\end{equation}
when $d>2$. Thus our twirling method is not an optimal distillation method, as there exists 1-distillable $\pi$-Werner which are twirled to 1-copy pseudo-undistillable Werner states. It is therefore likely that condition (\ref{distillconrearr}) only defines a subset of the distillable phase Werner states.\\
%This matches the transformation we saw on Isotropic states under Isotropic twirls.
%In particular, we find for $\theta=\pi$ that
%\begin{equation}
%T_W(W^{\pi}_{\eta,d})=W_{\zeta_\pi,d}\;\zeta_\pi=\frac{2-(d-1)\eta}{d+1}.
%\end{equation}
%This transformation is plotted for figure \ref{WernTwirlFig}. Since we know $E_D(W_{\zeta,d})>0$ for $\zeta<\frac{2-d}{2d-1}$, we wish to know when 
%\begin{equation}
%\frac{2-(d-1)\eta}{d+1} < \frac{2-d}{2d-1} \Rightarrow \eta>\frac{4+d}{2d-1}.
%\end{equation}
%For $d>5$, this determines a non-trivial class of states $W^{\pi}_{\eta,d}$, $\frac{4+d}{2d-1} < \eta \leq 1$ which are distillable, whereas their $W_{\eta,d}$ counterparts are separable. This disproves conjecture \ref{PhaseConjecture}.

%Importantly, provided $d>5$, $\frac{4+d}{2d-1}<1$. Thus there exists $\frac{4+d}{2d-1}\leq \eta \leq 1$ that define valid $\pi$-Werner states $W^{\pi}_{\eta,d}$ where $T_W(W^{\pi}_{\eta,d})=W_{\zeta,d}$ with $\zeta<\frac{2-d}{2d-1}$. For these states, $E_D(W^{\pi}_{\eta,d})\geq E_D(W_{\zeta,d})>0$. As $E_D(W_{\eta,d})=0$ for $\frac{4+d}{2d-1}\leq \eta \leq 1$, we may conclude \textbf{conjecture} \ref{PhaseConjecture} \textbf{is false!}. Seemingly intuitive, we find that the equivalence of distillation over $\theta$ cannot be the case.\\
\begin{figure}
\begin{center}
\includegraphics{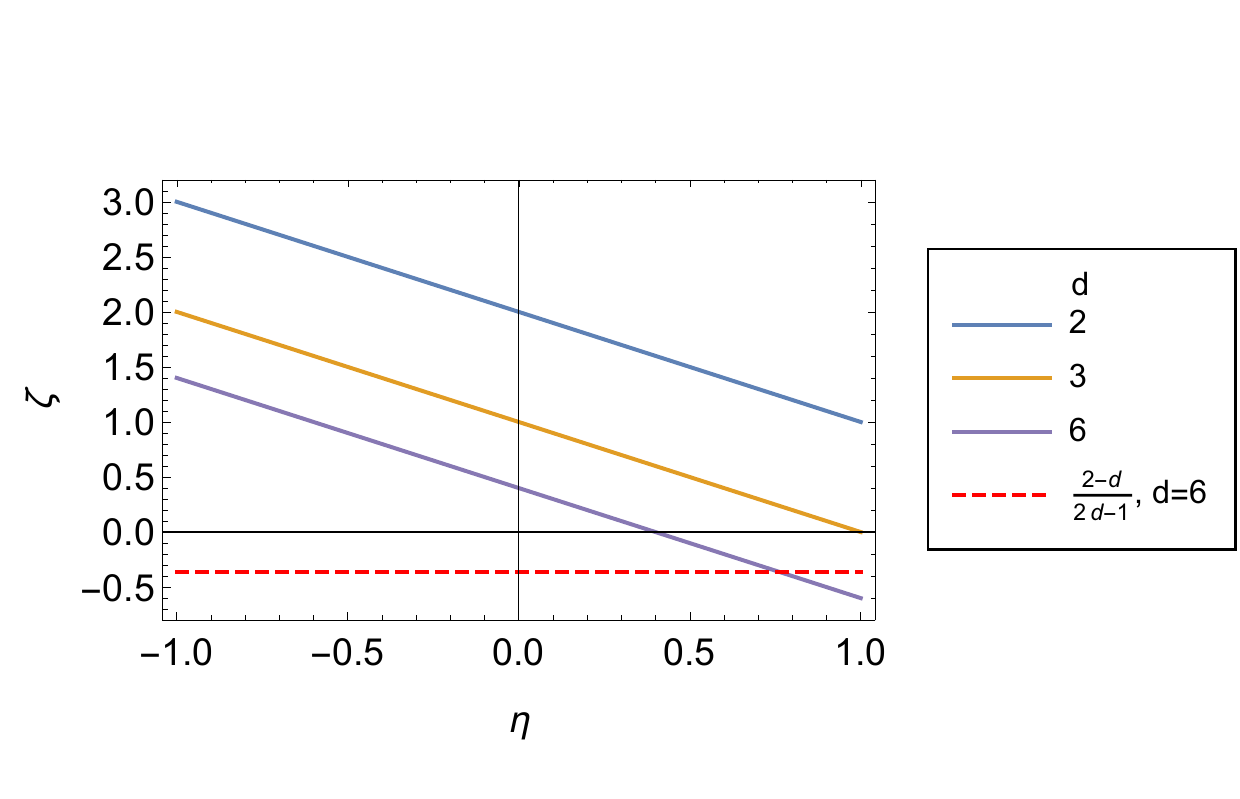}
\caption[Werner twirling of $W^{\pi}_{\eta,d}$] {The transformation $T_W\left(W^{\pi}_{\eta,d}\right)=W_{\zeta,d}$, for $d=2,3,6$. The dashed line shows when the resultant states becomes guaranteably distillable.}\label{WernTwirlFig}
\end{center}
\end{figure}

One more note should be made regarding twirling, especially as a lower bound for distillability. We can also consider the isotropic twirl, $T_I$, whose definition is given in Eq.~(\ref{IsoTwirl}). We can calculate $E_D(T_I(\rho))$ exactly, since the twirled state is an isotropic state. For $\theta$-Werner states we obtain
\begin{equation}
T_I(W^{\theta}_{\eta,d})=I_{\mu,d},\;\mu=\frac{1+\eta}{d+1}.
\end{equation}
$I_{\mu,d}$ is separable for all $\eta$, so we learn nothing about $E_D(W^{\theta}_{\eta,d})$. A more interesting twirl is $T_I(I^{\pi}_{2,2})$, where we obtain the state $I_{0,2}$ - which is completely separable. From corollary \ref{isoun} we know that $I^{\pi}_{2,2}$ is locally equivalent to the maximally entangled state $I_{2,2}=\ket{\Phi}_{\,2}\bra{\Phi}$; by twirling the state we have lost all of its entanglement! This example highlights how such a twirl can completely miss the true entanglement of the state. A potential solution (which would avoid the scenario above, but may not in general) would be to define the ``optimal twirl"
\begin{equation}
T_{I,\mathrm{opt}}=\sup_{V\in U(d)}\int \left(U \otimes U^*\right)\left(\mathbb{I}\otimes V\right)\rho\left(\mathbb{I}\otimes V\right)^\dagger(U\otimes U^*)^\dagger dU
\end{equation}
and could potentially lead to better bounds.

\section{Another Generalisation of Werner States}
To end this chapter we introduce another set of states inspired by Werner states. They are inspired by the observation that Werner states can be expressed as a convex combination of all possible Bell pairs:
\begin{align*}
\ket{\Phi^+_{ij}}&=\frac{\ket{ii}+\ket{jj}}{\sqrt{2}} & \ket{\Phi^-_{ij}}&=\frac{\ket{ii}-\ket{jj}}{\sqrt{2}} \\
\ket{\Psi^+_{ij}}&=\frac{\ket{ij}+\ket{ji}}{\sqrt{2}} & \ket{\Psi^-_{ij}}&=\frac{\ket{ij}-\ket{ji}}{\sqrt{2}}
\end{align*}
taking $i<j$ and $i,j\in\left\{0,\ldots d-1\right\}$. Let us consider convex combinations for which the weights are independent of $i,j$; these are of the form:
\begin{equation}\label{weightstates}
\rho=\sum_{i,j=0,\;i<j}^{d-1} w_s^+\ket{\Phi^+_{ij}}\bra{\Phi^+_{ij}}+w_s^-\ket{\Phi^-_{ij}}\bra{\Phi^-_{ij}}+w_a^+\ket{\Psi^+_{ij}}\bra{\Psi^+_{ij}}+w_a^-\ket{\Psi^-_{ij}}\bra{\Psi^-_{ij}}.
\end{equation}
For Werner states, we have that
\begin{align}
w_s^+&=\frac{1+\eta}{d(d^2-1)} & w_s^-&=\frac{1+\eta}{d(d^2-1)} & w_a^+&=\frac{(d-1)(1+\eta)}{d(d^2-1)} & w_a^-&=\frac{(d+1)(1-\eta)}{d(d^2-1)}.
\end{align}
We may also express the $\pi$-Werner states in this manner:
\begin{align}
w_s^+&=\frac{1+\eta}{d(d^2-1)} & w_s^-&=\frac{1+\eta}{d(d^2-1)} & w_a^+&=\frac{(d+1)(1-\eta)}{d(d^2-1)} & w_a^-&=\frac{(d-1)(1+\eta)}{d(d^2-1)}.\label{Zweights}
\end{align}
We see that the weights of $\ket{\Psi^+_{ij}}$ and $\ket{\Psi^-_{ij}}$ have been swapped\footnote{This matches the definition of the phase Werner states.}. We name this the $z$-permutation of the weights. This is because, for $d=2$, the unitary operation
$\mathrm{I}\otimes \sigma_z$ swaps $\ket{\Psi^+}$ and $\ket{\Psi^-}$ and $\ket{\Phi^+}$ and $\ket{\Phi^-}$. It is important to note however, that this permutation is \emph{not} a local operation (except for $d=2$). We define $W^z_{\eta,d}\equiv W^\pi_{\eta,d}$ as the set of states with the weightings in Eq.~(\ref{Zweights}).\\

From this observation, we can see this naturally implies two more permutations to consider: the $x$-permutation, performing $\ket{\Phi^+_{ij}}\leftrightarrow\ket{\Psi^+_{ij}}$ and $\ket{\Phi^-_{ij}}\leftrightarrow \ket{\Psi^-_{ij}}$; and the $y$-permutation $\ket{\Phi^+_{ij}}\leftrightarrow\ket{\Psi^-_{ij}}$ and $\ket{\Phi^-_{ij}}\leftrightarrow \ket{\Psi^+_{ij}}$. These create new families of states $W^x_{\eta,d}$ and $W^y_{\eta,d}$ respectively. \\
Explicitly, the weights for $W^x_{\eta,d}$ are:
\begin{align}
w_s^+&=\frac{(d-1)(1+\eta)}{d(d^2-1)} & w_s^-&=\frac{(d+1)(1-\eta)}{d(d^2-1)} & w_a^+&=\frac{1+\eta}{d(d^2-1)} & w_a^-&=\frac{1+\eta}{d(d^2-1)}.
\end{align}
The weights for $W^y_{\eta,d}$ are:
\begin{align}
w_s^+&=\frac{(d+1)(1-\eta)}{d(d^2-1)} & w_s^-&=\frac{(d-1)(1+\eta)}{d(d^2-1)} & w_a^+&=\frac{1+\eta}{d(d^2-1)} & w_a^-&=\frac{1+\eta}{d(d^2-1)}.
\end{align}\\

The states $W^x_{\eta,d}$ are valid for the range $\eta\in[-1,\fbb{d^2-d+1}{2d-1}]$ - the only one of the three permutations with a range larger\footnote{For $\eta>1$, the weight $w_s^-$ is negative, and therefore no longer expressible as a convex combination.} than $[-1,1]$. These states are NPT entangled for two regions: $\eta\in[-1,0)$ and $\eta\in (\fob{2}{d-1},\fbb{d^2-d+1}{2d-1}]$. Interestingly, one can calculate
\begin{align}
\bra{\Psi^+_{ij}}(W^x_{\eta,d})^{T_B}\ket{\Psi^+_{ij}}&=\frac{\eta}{d(d-1)},&\Rightarrow \bra{\Psi^+_{ij}}(W^x_{\eta,d})^{T_B}\ket{\Psi^+_{ij}}&<0\text{ when }  \eta<0,\\
\bra{\Psi^-_{ij}}(W^x_{\eta,d})^{T_B}\ket{\Psi^-_{ij}}&=\frac{2-(d-1)\eta}{d(d^2-1)},&\Rightarrow \bra{\Psi^-_{ij}}(W^x_{\eta,d})^{T_B}\ket{\Psi^-_{ij}}&<0\text{ when }  \eta>\frac{2}{d-1}.
\end{align}

The states $W^y_{\eta,d}$ are valid for the range $\eta\in[-1,1]$, and also NPT entangled for the regions: $\eta\in[-1,0)$ and $\eta\in (\fob{2}{d-1},1]$. For these states one obtains:
\begin{align}
\bra{\Psi^-_{ij}}(W^y_{\eta,d})^{T_B}\ket{\Psi^-_{ij}}&=\frac{\eta}{d(d-1)},&\Rightarrow \bra{\Psi^-_{ij}}(W^y_{\eta,d})^{T_B}\ket{\Psi^-_{ij}}&<0\text{ when }  \eta<0,\\
\bra{\Psi^+_{ij}}(W^y_{\eta,d})^{T_B}\ket{\Psi^+_{ij}}&=\frac{2-(d-1)\eta}{d(d^2-1)},&\Rightarrow \bra{\Psi^+_{ij}}(W^y_{\eta,d})^{T_B}\ket{\Psi^+_{ij}}&<0\text{ when }  \eta>\frac{2}{d-1}.
\end{align}

This means that NPT entanglement is equivalent to 1-distillability for the states $W^x_{\eta,d}$, $W^y_{\eta,d}$, whilst the same cannot be said for $W_{\eta,d}$, $W^z_{\eta,d}$.\\

States of form in Eq.~(\ref{weightstates}) reduce to those of the form Eq.~(\ref{twoparamstates}) when 
\begin{align}
w_s^+&=\frac{a}{d-1} & w_s^-&=\frac{a}{d-1} & w_a^+&=c & w_a^-&=b.
\end{align}
We speculate that these more general states (and the specific permutations applied to the Werner states) could help in the process to understand NPT entanglement.

\section{Discussion and Further Directions}
\label{ch2:second}
The main aim of this chapter is to familiarise the reader with Werner and isotropic states, knowledge of which is required for the results in the next chapter. Of particular use are the \emph{eigensystems} of the two classes of states - the independence of their eigenvectors from the defining parameter $\eta$ simplifies many of the calculations.\\

Also introduced was a new class of quantum states, the \emph{phase Werner} states, along with their counterpart, \emph{phase isotropic} states. We conjectured a stronger version of the conjecture made by Michal Horodecki and Pawel Horodecki in \cite{HH1997}, which has remained unsolved for over 20 years. Our stronger conjecture can be shown to be false from results in the literature, notably \cite{DSSTT2000} from which we obtained many properties of the specific phase Werner state when $\theta=\pi$. \\

For the region $\eta\in [-1,\fbb{2-d}{2d-1})$, in which the Werner states are 1-distillable, we were able to show that general phase-Werner states are also 1-distillable. Phase Werner states with $\eta\in[\fbb{2-d}{2d-1},0)$ and a negative partial transpose provide new candidates for NPT bound entanglement, and looking at these in more detail would be a valid line of future research - the addition of a phase adding an extra layer of complexity. Another interesting line of investigation would be try to extend the result for $\pi$-Werner states regarding 1-distillability in the region $\eta\in(\fbb{d^2+2d-4}{d(2d-3)},1]$ for general phase Werner states. This region is unlike $[-1,\fbb{2-d}{2d-1})$ in that there exist both NPT and PPT phase Werner states, and so the lower bound for the 1-distillable region will be dependent on $\theta$. We were able to provide an (implicit) bound via the twirling process, although comparison for $\theta=\pi$ implies that this is not tight, as there are $\pi$-Werner states which are 1-distillable which do not result in a 1-distillable state after twirling. Something we observed during calculations is that negativity of the state before and after the twirl is preserved - an illustration of how negativity does not fully capture the notion of distillable entanglement.\\

We also introduced an alternative characterisation of the Werner states in terms of convex combinations of Bell pairs with symmetric weightings, and that these convex combinations generalise the states studied in \cite{DSSTT2000}. Here we only proved some simple properties for specific states of this form, and I believe a rigorous investigation of these states would be worthwhile. In the case $d=2$, the states simply reduce to Bell diagonal states\footnote{Which play an important role in chapter \ref{ch:simul}.}, which are well understood, but for $d>2$ many interesting results could be found.\\

The observation that the conjectured boundary for NPT bound entangled states at $\fbb{2-d}{2d-1}$ coincides with the boundary for PPT $\pi$-Werner states provides strong motivation for studying these states further. We provide two generalisations of Werner states which we hope may prove illuminating to researchers in the field.

\chapter{Holevo-Werner Channels}
\label{ch:WernerChannels}
The work in this chapter will be based on two papers: ``Adaptive estimation and discrimination of Holevo-Werner channels" and ``Converse bounds for quantum and private communication over Holevo-Werner channels". The authors of the former are Thomas Cope and Stefano Pirandola, and the latter Thomas Cope, Kenneth Goodenough and Stefano Pirandola. 

% ====================================================================================================================
\section{Structure of this Chapter}
\label{ch4:structure}
This chapter begins by introducing the Holevo-Werner (HW) channels, through channel-state duality with the Werner states discussed in the previous chapter. We are then introduced to the field of quantum metrology, and prove some results bounding the optimal discrimination of Holevo-Werner channels. Finally, we use the channel simulation results introduced in chapter \ref{ch:simul} in order to bound the secret key and quantum capacities of these channels, exploiting the unusual entanglement properties of Werner states in order to show the necessity of considering several entanglement measures.

%This chapter builds on the previous one; I introduce the Holevo-Werner channels through channel-state duality with the Werner states we have already seen. We then go on to explain useful measures in the field of metrology (in which the aim is to identify a given quantum or channel), and revisit some capacity measures from chapter \ref{ch:LitRev}. 

% ==========================================================================================================

\section{Holevo-Werner Channels}
\label{ch4:first}
We have already been introduced to Choi matrices in chapter \ref{ch:simul}, but we revisit them here. The Choi matrix of a channel $\mathcal{E}:\mathcal{H}_{d_1}\rightarrow \mathcal{H}_{d_2}$ is the state
\begin{equation}
\left(\mathbb{I}_{d_1}\otimes\mathcal{E}\right)\left(\ket{\Phi}_{\,d_1}\bra{\Phi}\right)
\end{equation}
with $\ket{\Phi}_{d_1}=\sum_{i=0}^{d_1-1}\ket{ii}/\sqrt{d_1}$. This map is an isomorphism between completely positive trace-preserving maps $\mathcal{E}:\mathcal{H}_{d_1}\rightarrow \mathcal{H}_{d_2}$ and density matrices on $\mathcal{H}_{d_1}\otimes\mathcal{H}_{d_2}$.\\

%Applying the inverse of this map to the Werner state  $W_{\eta,d}$, we obtain the following channel:%(DO I EXPLAIN THE KRAUS METHOD?)
We can use this map to consider the channel whose Choi matrix is the Werner state $W_{\eta,d}$. The corresponding channel is known as the \emph{Holevo-Werner} channel, defined as:
\begin{equation}
\mathcal{W}_{\eta,d}\left(\rho\right):=\frac{1}{d^2-1}\left(\left(d-\eta\right)\mathrm{Tr}[\rho]\mathrm{I}+\left(d\eta-1\right)\rho^{T}\right)
\end{equation}
with $\eta\in[-1,1]$. When $\rho$ is a density matrix this reduces to 
\begin{equation}\label{Wernerchannel}
\mathcal{W}_{\eta,d}\left(\rho\right):=\frac{1}{d^2-1}\left(\left(d-\eta\right)\mathrm{I}+\left(d\eta-1\right)\rho^{T}\right)
\end{equation}
and hitherto we shall omit the $\mathrm{Tr}[\rho]$ term, as we are primarily concerned with sending valid quantum states.
It is interesting to note that, whilst transposition is \emph{not} a completely positive map (and thus not a valid quantum channel), it appears as the input-dependent term of this channel. A particular instance of this channel which has been shown much interest is the ``extremal" Holevo-Werner channel,
\begin{equation}
\mathcal{W}_{-1,d}\left(\rho\right)=\frac{1}{d-1}\left(\mathrm{I}-\rho^{T}\right).
\end{equation}

Before we explore Holevo-Werner channels further, it is also worth looking at the channel whose Choi matrix is the isotropic state $I_{\eta,d}$. We find this to be
%\begin{equation}\label{DepolarChannel}
%\mathcal{D}_{\eta,d}:=\frac{1}{d^2-1}\left(\left(d-\eta\right)\mathrm{Tr}[\rho]\mathrm{I}+\left(d\eta-1\right)\rho\right),
%\end{equation}
\begin{equation}\label{DepolarChannel}
\mathcal{D}_{\eta,d}\left(\rho\right):=\frac{1}{d^2-1}\left(\left(d-\eta\right)\mathrm{I}+\left(d\eta-1\right)\rho\right),
\end{equation}
with $\eta\in[0,d]$. This has a similar form to Eq.~(\ref{Wernerchannel}), but without transposition. This channel is the well known \emph{depolarising channel}, most commonly parametrised in the following way:
%Applying this channel and the Holevo-Werner channel to the maximally entangled state, we see they will only differ by transposition of the second subsystem - exactly the relation between Werner and isotropic states. This channel is in fact the well known \emph{depolarising channel}, often seen in the form:
\begin{equation}
D_{p}\left(\rho\right)=p\rho+(1-p)\frac{\mathrm{I}}{d},\;\;p\in\left[-\frac{1}{d^2-1},1\right]
\end{equation}
where with some probability\footnote{Technically pseudo-probability, as $p$ can take negative values.} $p$ the state is transmitted, else it is instead mapped to the maximally mixed state. The two forms are related by $p= \fbb{d\eta-1}{d^2-1}$.
\begin{lemma}\label{lemmacovar}
Depolarising channels and Holevo-Werner channels are teleportation covariant.
\end{lemma}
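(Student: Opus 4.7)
The plan is to verify the teleportation covariance definition directly for both channels by exploiting their explicit forms given in Eqs.~(\ref{Wernerchannel}) and (\ref{DepolarChannel}). In each case the channel is an affine combination of the identity (or $\mathrm{Tr}[\rho]\mathrm{I}$) and a simple input-dependent term -- either $\rho$ or $\rho^T$ -- so verifying covariance reduces to checking how that single term transforms under conjugation by a generalised Pauli $U_i$.

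For the depolarising channel $\mathcal{D}_{\eta,d}$, the calculation is immediate: since $\mathrm{I}$ is invariant under unitary conjugation and $\rho\mapsto U_i\rho U_i^\dagger$ only appears linearly, one gets
\begin{equation}
\mathcal{D}_{\eta,d}(U_i\rho U_i^\dagger)=U_i\,\mathcal{D}_{\eta,d}(\rho)\,U_i^\dagger,
\end{equation}
so teleportation covariance holds with the trivial choice $V_i=U_i$.

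For the Holevo-Werner channel $\mathcal{W}_{\eta,d}$ the key observation is the identity $(U_i\rho U_i^\dagger)^T = U_i^*\,\rho^T\,U_i^T$. Substituting into Eq.~(\ref{Wernerchannel}) one finds
\begin{equation}
\mathcal{W}_{\eta,d}(U_i\rho U_i^\dagger)=\frac{1}{d^2-1}\!\left((d-\eta)\mathrm{I}+(d\eta-1)\,U_i^*\rho^T U_i^T\right),
\end{equation}
which equals $V_i\,\mathcal{W}_{\eta,d}(\rho)\,V_i^\dagger$ for $V_i:=U_i^*$, since $\mathrm{I}=V_i\mathrm{I}V_i^\dagger$ and $U_i^*\rho^T(U_i^*)^\dagger=U_i^*\rho^T U_i^T$. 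All that remains is to check that $V_i=U_i^*$ is itself unitary; this is automatic from $U_i$ being unitary, but in fact $U_i^*$ is again a generalised Pauli, since with the definitions $[X_d]_{kl}=\delta_{k,l-1\,\mathrm{mod}\,d}$ and $[Z_d]_{kl}=e^{2\pi i(k-1)/d}\delta_{kl}$ we have $X_d^*=X_d$ and $Z_d^*=Z_d^{d-1}$, so $(X_d^iZ_d^j)^*=X_d^iZ_d^{d-j}$ up to a global phase.

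There is no real obstacle here -- the whole proof is a one-line computation per channel. The only subtlety worth flagging explicitly is the transposition identity for the Holevo-Werner case, which is what forces the corrective unitary to be $U_i^*$ rather than $U_i$; this is the feature that later enables the channel's Choi matrix to be used as a simulating resource via standard teleportation.
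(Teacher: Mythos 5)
Your proof is correct and follows essentially the same approach as the paper's: for the depolarising channel you use $U\,\mathrm{I}\,U^\dagger=\mathrm{I}$ to take $V_i=U_i$, and for the Holevo--Werner channel you use $(U_i\rho U_i^\dagger)^T=U_i^*\rho^T U_i^T$ to take $V_i=U_i^*$, exactly as the paper does. Your closing observation that $U_i^*$ is again a generalised Pauli (up to phase) is a pleasant extra, though the definition of teleportation covariance used here only requires the $V_i$ to be unitaries, not Paulis.
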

\begin{proof}
We saw in chapter \ref{ch:simul} that for a channel $\mathcal{E}$ to be teleportation covariant, there must exists a set of unitaries $\{V_k\}$ satisfying 
% teleportation covariance of a channel $\mathcal{E}$ means that, for all teleportation unitaries $U_k$, there exist unitaries $V_k$ such that
\begin{equation}\label{telerem}
\mathcal{E}\left(U_k\rho U_k^\dagger\right)=V_k\mathcal{E}\left(\rho\right)V_k^\dagger,\;\;\forall\rho
\end{equation}
where $\{U_k\}$ are the generalised Pauli matrices used in $d$-dimensional teleportation\footnote{See chapter \ref{ch:LitRev}.}.
For depolarising channels we can see that for an arbitrary unitary $U$:
\begin{align}
\mathcal{D}_{\eta,d}\left(U\rho U^\dagger\right)&=\frac{1}{d^2-1}\left(\left(d-\eta\right)\mathrm{I}+\left(d\eta-1\right)U\rho U^\dagger\right)\nonumber\\
&=\frac{1}{d^2-1}\left(\left(d-\eta\right)U\mathrm{ I }U^\dagger+\left(d\eta-1\right)U\rho U^\dagger\right)\nonumber\\
&=U \mathcal{D}_{\eta,d}\left(\rho\right)U^\dagger\label{depolarcov}
\end{align}
whilst for the Holevo-Werner channels:
\begin{align}
\mathcal{W}_{\eta,d}\left(U\rho U^\dagger\right)&=\frac{1}{d^2-1}\left(\left(d-\eta\right)\mathrm{I}+\left(d\eta-1\right)(U\rho U^\dagger)^T\right)\nonumber\\
&=\frac{1}{d^2-1}\left(\left(d-\eta\right)U^*\mathrm{ I }U^T+\left(d\eta-1\right)U^*\rho^T U^T\right)\nonumber\\
&=U^* \mathcal{W}_{\eta,d}\left(\rho\right)(U^*)^\dagger.\label{Wernercov}
\end{align}
Thus we set $V_k=U_k$ for depolarising channels, and $V_k=U_k^*$ for Holevo-Werner channels.
\end{proof}

\section{Metrology of Quantum Channels}
\label{ch4:second}
Metrology, both is the classical and quantum regimes, is a huge area of study\cite{TA2014}, and an overview of the field could easily fill a thesis by itself. Most generally, the term ``metrology" refers to the study of measurement; this encompasses such concepts as the definition and implementation of SI units, calibration of industrial devices, and the gravitational wave detection experiments, among many others. Quantum metrology combines this field with that of quantum information, looking to perform highly precise measurements by exploiting features of quantum theory such as entanglement.\\

In this chapter, we shall focus on a specific scenario in which one is given $n$ copies of an unknown quantum state, which is known to be parametrised by a single variable $\theta$. Alternatively, one is allowed to utilise an unknown quantum channel (also parametrised by a single variable $\theta$) $n$ times. This value $n$ is referred to as the the number of \emph{probings} one is allowed to make. We want to know the precision, $\Delta\theta$, with which one can optimally estimate $\theta$,   and how this precision scales with $n$. We are looking at \emph{adaptive} estimation of $\theta$, where the choice of probe may change conditionally on the outcome of the previous probe. For states, one can think of a probe as a measurement, while for channels it is sending a known quantum state (or part of an entangled state) through the channel, followed by measurement.\\
%In this chapter, we shall focus on a specific scenario in which one has an unknown state or channel belonging to a family of states/channels parametrised by a single variable $\theta$. One wishes to know the precision $\Delta \theta$ with which one can optimally estimate $\theta$ as a function of $n$, the number of \emph{probings}. For state estimation this refers to the number of identical copies of the state that one has measured, whilst for channel estimation it refers to number of states send through the channel and then analysed. We are looking at \emph{adaptive} estimation of $\theta$, where the choice of probe may change conditionally on the outcome of the previous probe.\\

%Metrology, both is the classical and quantum regimes, is a huge area of study\cite{TA2014}, and an overview of the field could easily fill a thesis by itself. For this chapter, we shall focus on a particular scenario on which one has a state or channel belonging to a family of states/channels parametrised by a single variable $\theta$. One wishes to know the \emph{precision} $\Delta\theta$ of our estimation of $\theta$,  dependent on $n$ probings - either the number of identical particles measured, or the number of states sent through the channel and then analysed (depending on the context). We are also interested in the scaling under \emph{adaptive} methods, in which the probing method may change conditionally on the previous outcome.\\

In classical systems the optimal parameter estimation precision necessarily scales as $\left(\Delta\theta\right)^2 \approx \foo{1}{N}$, known as the \emph{shot noise limit}. In quantum systems this scaling is also observed when performing probing without entanglement. By incorporating entanglement into the probing - for example sending part of a maximally entangled state through the channel - then in some scenarios optimal probing may achieve $\left(\Delta\theta\right)^2 \approx \foo{1}{N^2}$ - this is known as the \emph{Heisenberg limit}. It is this improvement which motivates much of the study of quantum metrology.\\ 

We shall also in this chapter consider the related problem of  \emph{binary discrimination} - in this scenario, one is again given an unknown state or channel belonging to a family of states/channels which is parametrised by a single variable $\theta$. This time however, one is asked to determine between two possible values of $\theta$. This is also known as  \emph{quantum hypothesis testing}. We wish to know the minimal probability of an incorrect choice, $p_{\mathrm{err}}$. In chapter \ref{ch:simul} we saw the one-shot\footnote{In which only a single probe is used.} version of this problem for channels is characterised by the diamond-norm.\\

%Another, related problem is that of \emph{binary discrimination} - rather than try to determine the exact parameter value with a given precision, we are instead presented with two possible values, and asked to make our best estimate on which is correct. This is known also known as \emph{quantum hypothesis testing}, and we aim to minimize the probability of an incorrect choice. \\

This chapter takes advantage of the results in \cite{PL2017}, which allow for the simplification of adaptive metrological protocols for teleportation covariant channels - we shall present and explain later in the chapter. First though we introduce some metrological concepts, in order to understand the results presented here. \\

%Consider sampling a (classical) probability distribution X, which is determined by some unknown parameter $\theta$. This gives a conditional probability distribution function $f(x|\theta)$, defining the probability of outcome $x$ given a specific $\theta$. We could also consider this a function of $\theta$, in which we are given the ``likelihood"\footnote{This function does not need to integrate to 1 over all $\theta$ - hence the term likelihood over probability} of the underlying value determining $X$ being $\theta$, given that we observed value $x$. The variance of the natural logarithm of this function is what is known as the \emph{Fisher Information} $I\left(\theta\right)$, and can be seen as a measure of how much we can learn about $\theta$ by sampling $X$. In particular we have an important result:
%\begin{theorem}
%The Cram\'{e}r-Rao bound states that any unbiased estimator $\hat{\theta}$ of a fixed unknown $\theta$ is bounded
%\begin{equation}
%\mathrm{Var}\left(\hat{\theta}\right)\geq \frac{1}{I\left(\theta\right)}
%\end{equation}
%where an unbiased estimator is a function $\hat{\theta}\left(X\right)$ such that $E\left(\hat{\theta}\right)=\theta$.
%\end{theorem}

Consider sampling a (classical) probability distribution X, which is determined by some unknown parameter $\theta$. This gives a conditional probability distribution function $f(x|\theta)$, defining the probability of outcome $x$ given a specific $\theta$. We could also consider this a function of $\theta$, in which we are given the ``likelihood"\footnote{This function does not need to integrate to 1 over all $\theta$ - hence the term likelihood over probability.} of the underlying value determining $X$ being a certain $\theta$ value, given that we observed value $x$ when $X$ was sampled. From this, the \emph{Fisher Information} $I\left(\theta\right)$ is defined, as the variance of the natural logarithm of this likelihood function. The Fisher information is a measure of how much we can learn about $\theta$ by sampling $X$, and we have the following important result:
\begin{theorem}
The Cram\'{e}r-Rao bound states that any unbiased estimator $\hat{\theta}$ of a fixed unknown $\theta$ is bounded
\begin{equation}
(\Delta\hat{\theta})^2=\mathrm{Var}\left(\hat{\theta}\right)\geq \frac{1}{I\left(\theta\right)}
\end{equation}
where an unbiased estimator is a function $\hat{\theta}\left(X\right)$ such that $E\left(\hat{\theta}\right)=\theta$.
\end{theorem}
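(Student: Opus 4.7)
The plan is to prove the Cramér-Rao inequality using the classical Cauchy–Schwarz approach, exploiting the relationship between the score function $\partial_{\theta}\log f(x|\theta)$ and the Fisher information. First I would express the unbiasedness of $\hat{\theta}$ as $\int \hat{\theta}(x) f(x|\theta)\,dx = \theta$ and differentiate both sides with respect to $\theta$, assuming sufficient regularity to exchange differentiation and integration. This gives $\int \hat{\theta}(x)\partial_{\theta}f(x|\theta)\,dx = 1$.

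Next I would rewrite the derivative using the identity $\partial_{\theta}f(x|\theta) = f(x|\theta)\partial_{\theta}\log f(x|\theta)$, so that the integral becomes $E[\hat{\theta}(X)\,\partial_{\theta}\log f(X|\theta)] = 1$. A brief side-calculation shows that the score function has zero mean: $E[\partial_{\theta}\log f(X|\theta)] = \int \partial_{\theta}f(x|\theta)\,dx = \partial_{\theta}\!\int f(x|\theta)\,dx = \partial_{\theta}(1) = 0$. Consequently we may interpret the above expectation as a covariance, $\mathrm{Cov}(\hat{\theta},\partial_{\theta}\log f) = 1$, and observe that $I(\theta) = \mathrm{Var}(\partial_{\theta}\log f(X|\theta))$.

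The final step is to apply the Cauchy–Schwarz inequality in its covariance form:
\begin{equation}
\bigl[\mathrm{Cov}(\hat{\theta},\partial_{\theta}\log f)\bigr]^{2} \leq \mathrm{Var}(\hat{\theta})\cdot\mathrm{Var}(\partial_{\theta}\log f),
\end{equation}
which immediately yields $1 \leq \mathrm{Var}(\hat{\theta})\cdot I(\theta)$, i.e.\ $(\Delta\hat{\theta})^{2} \geq 1/I(\theta)$, as required.

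The main obstacle is not conceptual but technical: the whole argument hinges on being able to differentiate under the integral sign, which requires mild regularity assumptions on $f(x|\theta)$ (continuity of $\partial_{\theta}f$, an integrable dominating function, and a support for $X$ that does not depend on $\theta$). For the purposes of this chapter I would simply state these standard regularity conditions and refer the reader to a textbook treatment, since the statement is presented here as background for the later quantum generalisation (where the Fisher information will be replaced by the quantum Fisher information and Cauchy–Schwarz by its operator analogue). I would also note in passing that equality in Cauchy–Schwarz characterises the efficient estimators, a fact that will be useful later when discussing saturability of the bound for teleportation-covariant channels.
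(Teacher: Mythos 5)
The paper states this theorem as background material without giving a proof at all --- it is presented as a known result from classical estimation theory before moving on to the quantum Cram\'{e}r-Rao bound, so there is no paper proof to compare against. Your proposed proof is the standard and correct Cauchy--Schwarz argument: differentiating the unbiasedness identity under the integral sign to obtain $E[\hat{\theta}\,\partial_\theta\log f]=1$, noting the score has zero mean so this is the covariance of $\hat{\theta}$ and the score, identifying $I(\theta)$ as the variance of the score, and applying Cauchy--Schwarz in its covariance form. The regularity caveat you flag (differentiation under the integral and $\theta$-independent support) is exactly the right thing to acknowledge and defer to a textbook; your closing remark about saturation characterising efficient estimators is a sensible forward reference for the quantum discussion that follows.
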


In the quantum scenario, we have a similar task, except instead of sampling a probability distribution we are measuring a quantum state, with the outcomes being determined using quantum theory. The Fisher information was generalised for quantum scenarios in \cite{BC1994}. In the quantum setting, we are trying to determine parameter $\theta$ using $n$ probings of the channel\footnote{We focus on channels in this chapter, though the same argument applies equally to states.}. We are considering the most general adaptive case - this may involve global quantum operations on output and to-be-inputted states, along with ancilla dimensions and outcome dependent operations - however we are limited to $n$ probings of the channel. Once we have performed $n$ probes subject to the adaptive procedure $\Lambda$, we process the resulting state space is into an unbiased estimator $\hat{\theta}$ for $\theta$. As the quantum behaviour of the channel is determined by the unknown parameter $\theta$, for a given $n,\Lambda$ estimation there is associated a quantum Fisher information\footnote{The superscript $n$ refers to the $n$ probings of the channel - this could equally be applied classically by constraining $n$ samplings of $X$.} $I_\theta^{n,\Lambda}$. In \cite{BC1994} it was shown that, given an adaptive estimation method $\Lambda$, the associated variance of the estimator $\hat{\theta}$ satisfies the \emph{quantum Cram\'{e}r-Rao bound},

%In the quantum scenario, we have a similar task - except instead of sampling a probability distribution, we are measuring a quantum state, with the outcomes being determined using quantum theory. The Fisher information was generalised for quantum scenarios in \cite{BC1994}. Without defining rigorously, this is the \emph{quantum Fisher information}. Consider a scenario where we are trying to determine parameter $\eta$ using $n$ probings of the channel - this generally is an \emph{adaptive protocol} which may involve global quantum operations on output and to-be-inputted states, along with ancilla dimensions and outcome dependent operations. After $n$ such probings, the resultant state is processed into an unbiased estimator for $\hat{\eta}$ for $\eta$. For a given adaptive operation $\Lambda$, the quantum behaviour is dependent on the unknown parameter $\eta$, and thus there is associated a quantum Fisher information\footnote{The superscript $n$ referring to the $n$ probings of the channel - this could equally be applied classically for $n$ samplings of $X$} $I_\eta^n$. In \cite{BC1994} it was shown that , given  $\Lambda$, the associated variance satisfies the \emph{quantum Cram\'{e}r-Rao bound},
\begin{equation}
\mathrm{Var}\left(\hat{\theta}\right)\geq \frac{1}{I_\theta^{n,\Lambda}}
\end{equation}
and thus the \emph{optimal} estimator $\tilde{\theta}$ (in which the variance is minimised) satisfies
\begin{equation}
\mathrm{Var}\left(\tilde{\theta}\right)=\inf_\Lambda \mathrm{Var}\left(\hat{\theta}\right)\geq \frac{1}{\sup_\Lambda I_\theta^{n,\Lambda}}=:\frac{1}{I_\theta^{n}},
\end{equation}
with $I_\theta^{n}$ the quantum Fisher information optimised over all adaptive protocols.\\

We have discussed in previous chapters the difficulty of optimisation over adaptive protocols in the context of entanglement measures, and the same is true in the metrological context. This is where the results of \cite{PL2017} come in. They prove that, due to channel simulation of teleportation covariant channels, the following holds true.
\begin{theorem}[ \cite{PL2017}]\label{covariancebound}
For a set of jointly\footnote{This jointly means the covariance unitaries $V_k$ in Eq.~(\ref{telerem}) are $\theta$-independent.} teleportation covariant channels $\left\{\mathcal{E}_\theta\right\}$, dependent on a single parameter $\theta$, the optimal quantum Fisher information is given by 
\begin{equation}\label{covarianceboundeq}
I_\theta^n=\lim_{\delta\theta\rightarrow 0}8n\frac{1-F(\chi_{\mathcal{E}_\theta},\chi_{\mathcal{E}_{\theta+\delta\theta}})}{d\theta^2}
\end{equation}
with $F(\rho,\sigma)=\mathrm{Tr}\left[\sqrt{\sqrt{\rho}\sigma\sqrt{\rho}}\right]$ the quantum fidelity.
\end{theorem}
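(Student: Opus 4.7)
The plan is to prove the formula by establishing matching upper and lower bounds on $I_\theta^n$, using channel simulation to reduce the adaptive problem to an i.i.d. one on Choi matrices. The main structure exploits the fact that the quantum Fisher information can be rewritten in the Bures form $I_\theta(\rho_\theta) = \lim_{\delta\theta\to 0} 8\left(1-F(\rho_\theta,\rho_{\theta+\delta\theta})\right)/\delta\theta^2$, so establishing $I_\theta^n = n\,I_\theta(\chi_{\mathcal{E}_\theta})$ yields the stated identity directly.

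For the upper bound, the first step is to invoke teleportation covariance. Since $\mathcal{E}_\theta$ is teleportation covariant with corrective unitaries $V_k$ that are \emph{jointly} independent of $\theta$, the channel can be $\chi_{\mathcal{E}_\theta}$-simulated by a fixed LOCC (standard teleportation followed by the $V_k^\dagger$) whose description does not depend on $\theta$. I would then apply the stretching technique from Section~\ref{ch2:first}: any adaptive protocol $\Lambda$ using $n$ probes of $\mathcal{E}_\theta$ can be rearranged so that all $n$ channel uses are replaced by LOCC simulations, producing an output state of the form $\rho_\theta^n = \bar{\Lambda}(\chi_{\mathcal{E}_\theta}^{\otimes n})$ for some trace-preserving LOCC $\bar{\Lambda}$ that is independent of $\theta$. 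Applying the monotonicity (data-processing inequality) of the quantum Fisher information under $\theta$-independent CPTP maps gives $I_\theta^{n,\Lambda} \leq I_\theta\bigl(\chi_{\mathcal{E}_\theta}^{\otimes n}\bigr)$, and additivity of the Fisher information over tensor products yields $I_\theta\bigl(\chi_{\mathcal{E}_\theta}^{\otimes n}\bigr) = n\, I_\theta(\chi_{\mathcal{E}_\theta})$. Taking the supremum over $\Lambda$ gives $I_\theta^n \leq n\, I_\theta(\chi_{\mathcal{E}_\theta})$.

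For the lower bound, I would exhibit an explicit adaptive protocol (in fact a non-adaptive one) that saturates the inequality. Alice prepares $n$ copies of the maximally entangled state $\ket{\Phi}_d$, sends one half of each through $\mathcal{E}_\theta$, and thereby generates exactly $\chi_{\mathcal{E}_\theta}^{\otimes n}$. She then performs the optimal joint measurement on the $n$ Choi matrices to estimate $\theta$, achieving Fisher information $n\, I_\theta(\chi_{\mathcal{E}_\theta})$ by additivity. This is a valid instance of the general adaptive class, so $I_\theta^n \geq n\, I_\theta(\chi_{\mathcal{E}_\theta})$, matching the upper bound. Combining the two bounds with the Bures-fidelity representation of the single-copy quantum Fisher information then yields the claimed formula (\ref{covarianceboundeq}).

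The main obstacle is the upper bound, specifically the justification that stretching produces a $\theta$-independent LOCC $\bar{\Lambda}$. One must be careful that the teleportation-covariance property lets us commute each channel use past the preceding LOCC rounds and absorb the corrective unitaries into subsequent local operations, iterating this $n$ times without introducing $\theta$-dependence anywhere outside the $n$ resource states $\chi_{\mathcal{E}_\theta}$. The fact that the $V_k$ are jointly $\theta$-independent is essential here, as it allows the same simulation LOCC to be used at every step regardless of the unknown parameter. Once this decomposition is secured, the remainder of the proof follows from standard facts about the quantum Fisher information (monotonicity and additivity).
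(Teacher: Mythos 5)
Your proof is correct and reconstructs the argument of the cited reference \cite{PL2017}; note that the thesis itself does not reprove this theorem, so there is no in-paper proof to compare against. Your two-sided structure — upper bound by teleportation stretching plus monotonicity and additivity of the quantum Fisher information, lower bound by the explicit non-adaptive protocol that prepares $\chi_{\mathcal{E}_\theta}^{\otimes n}$ — is exactly the method of that paper, and your emphasis on the joint $\theta$-independence of the corrective unitaries $V_k$ (which is what keeps the stretched LOCC $\bar{\Lambda}$ $\theta$-independent, so that data processing applies) correctly identifies the crux. One small technical caveat you might flag for completeness: the Bures-fidelity representation $I_\theta(\rho_\theta)=\lim_{\delta\theta\to 0}8\bigl(1-F(\rho_\theta,\rho_{\theta+\delta\theta})\bigr)/\delta\theta^2$ agrees with the symmetric-logarithmic-derivative quantum Fisher information only where the rank of $\rho_\theta$ is locally constant; at rank-changing points the Bures metric can exceed the QFI, so the identity should be read with that genericity assumption (for Werner/isotropic Choi matrices on the interior of the parameter range this holds, so the application in this chapter is unaffected).
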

\begin{corollary}
Parameter estimation for $\theta$ defining $\left\{\mathcal{E}_\theta\right\}$, a set of jointly teleportation covariant channels, cannot beat the shot noise precision.
\end{corollary}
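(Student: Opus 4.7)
The plan is to read the conclusion directly off the theorem stated immediately above. Equation~(\ref{covarianceboundeq}) gives an \emph{exact} expression for the optimal quantum Fisher information $I_\theta^n$ of the adaptive estimation problem, and the key observation is that the right-hand side is manifestly linear in $n$. The only $n$-dependence appears as an explicit prefactor, while the limit
\begin{equation*}
c(\theta) := \lim_{\delta\theta\rightarrow 0} 8\,\frac{1-F(\chi_{\mathcal{E}_\theta},\chi_{\mathcal{E}_{\theta+\delta\theta}})}{d\theta^{2}}
\end{equation*}
depends only on the family $\{\mathcal{E}_\theta\}$ at the point $\theta$ and not on the number of probings. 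Hence $I_\theta^n = n\, c(\theta)$.

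Next I would invoke the quantum Cram\'er-Rao bound for the optimal unbiased estimator $\tilde\theta$:
\begin{equation*}
(\Delta\tilde\theta)^{2} \;\geq\; \frac{1}{I_\theta^n} \;=\; \frac{1}{n\, c(\theta)}.
\end{equation*}
Since $c(\theta)$ is a constant with respect to $n$, this bound forces the variance of the best estimator to scale (at best) as $1/n$, which is by definition the shot noise limit; in particular the Heisenberg scaling $1/n^{2}$ is unattainable.

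There is essentially no obstacle here beyond the content of theorem~\ref{covariancebound} itself: the corollary is purely a structural consequence of the linearity in $n$ of the channel-simulation-based Fisher information formula. The only minor point worth noting is that one should briefly justify that $c(\theta)$ is finite and positive for a nontrivial family (otherwise the bound is vacuous or trivially saturated), which follows from $\chi_{\mathcal{E}_\theta}$ varying smoothly and non-degenerately in $\theta$. No deep work is required; the entire content of the corollary has already been imported through the covariance-based reduction proved in \cite{PL2017} and restated as theorem~\ref{covariancebound}.
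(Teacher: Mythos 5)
Your argument is exactly the paper's: factor the explicit $n$ out of the Fisher information formula, write $I_\theta^n = n\,L(\theta)$ with $L(\theta)$ (your $c(\theta)$) independent of $n$, and apply the quantum Cram\'er--Rao bound to conclude the variance can scale at best as $1/n$. The added remark about finiteness of $c(\theta)$ is a reasonable small refinement but does not change the route.
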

\begin{proof}
The optimal quantum Fisher information for such channels is given by
\begin{equation}
I_\theta^n=\lim_{\delta\theta\rightarrow 0}8n\frac{1-F(\chi_{\mathcal{E}_\theta},\chi_{\mathcal{E}_{\theta+\delta\theta}})}{d\theta^2}=n\lim_{\delta\theta\rightarrow 0}8\frac{1-F(\chi_{\mathcal{E}_\theta},\chi_{\mathcal{E}_{\theta+\delta\theta}})}{d\theta^2}:=nL(\theta).
\end{equation}
Thus, for the optimal estimator of $\theta$, $\tilde{\theta}$:
\begin{equation}
\mathrm{Var}(\tilde{\theta})\geq \frac{1}{I_{\theta}^n} = \frac{1}{n L(\theta)}.
\end{equation}
As $L(\theta)$ is independent of $n$ , at best $\mathrm{Var}(\tilde{\theta})$ can scale with with $\foo{1}{n}$ - this is the shot noise limit.
\end{proof}

\section{Metrology of Holevo-Werner Channels}
\label{ch4:third}
The Choi matrices of Holevo-Werner channels are the Werner states. We have seen in lemma \ref{lemmacovar} that Holevo-Werner channels of a given dimension are jointly teleportation covariant; we may therefore apply theorem \ref{covariancebound} to these channels. In order to do this, we require the fidelity between two Werner states. In chapter \ref{ch:Werner} it was shown that the eigenvectors of these states are dependent only on their dimension. We can therefore do all calculations in this basis, where all Werner states are diagonal - a property we refer to as ``simultaneously diagonalisable". This means that the fidelity simplifies:
%Thus we need only calculate the  fidelity between two Holevo-Werner channels to obtain  our achievable Cram\'{e}r-Rao bound. Since the states $W_{\eta,d}$ are simultaneously diagonalisable, this reduces to:
\begin{equation}
F(W_{\eta,d},W_{\zeta,d})=\mathrm{Tr}\left[\sqrt{\sqrt{W_{\eta,d}}W_{\zeta,d}\sqrt{W_{\eta,d}}}\right]=\sum_i \sqrt{\sqrt{p_i}q_i\sqrt{p_i}}=\sum_i\sqrt{p_iq_i}
\end{equation}
%\begin{equation}
%F(W_{\eta,d},W_{\zeta,d})=\left(\sum_i\sqrt(p_iq_i)\right)
%\end{equation}
with $\left\{p_i\right\},\left\{q_i\right\}$ the eigenvalues of $W_{\eta,d},W_{\zeta,d}$ respectively. The eigenvalues of $W_{\eta,d}$ are\footnote{See chapter \ref{ch:Werner}.}:
\begin{itemize}
\item $\fbo{d(d+1)}{2}$  eigenvalues with value  $\fbb{1+\eta}{d(d+1)}$,
\item $\fbo{d(d-1)}{2}$  eigenvalues with value  $\fbb{1-\eta}{d(d-1)}$. \end{itemize}
%\begin{center}
%\begin{tabular}{ccc}
%$\frac{d(d+1)}{2}$ & eigenvalues with value & $\frac{1+\eta}{d(d+1)}$, \\
%$\frac{d(d-1)}{2}$ & eigenvalues with value & $\frac{1-\eta}{d(d-1)}$. \\
%\end{tabular}
%\end{center}
This means the fidelity between the two Werner states is:
\begin{equation}
F(W_{\eta,d},W_{\zeta,d})=\frac{\sqrt{1+\eta}\sqrt{1+\zeta}}{2}+\frac{\sqrt{1-\eta}\sqrt{1-\zeta}}{2}.
\end{equation}
Therefore we have that:
\begin{equation}
F(W_{\eta,d},W_{\eta+\delta\eta,d})=\frac{\sqrt{1+\eta}\sqrt{1+(\eta+\delta\eta})}{2}+\frac{\sqrt{1-\eta}\sqrt{1-(\eta+\delta\eta})}{2}:=F(\delta\eta).
\end{equation}
In order to use this in Eq.~(\ref{covarianceboundeq}), we shall Taylor expand the above function around $\delta\eta\approx 0$. This is justified, since we take the limit $\delta\eta\rightarrow 0$ in the final formula.\\

First we consider $F(0)$. It is easy to verify that
\begin{equation}
F(0)=\frac{1+\eta}{2}+\frac{1-\eta}{2}=1.
\end{equation}
We now differentiate $F(\delta\eta)$ (with respect to $\delta\eta$), to obtain:
\begin{equation}
F'(\delta\eta)=\frac{\left(1+\eta\right)^\frac{1}{2}}{4\left(1+(\eta+\delta\eta)\right)^\frac{1}{2}}-\frac{\left(1-\eta\right)^\frac{1}{2}}{4\left(1-(\eta+\delta\eta)\right)^\frac{1}{2}}
\end{equation}
and so 
\begin{equation}
F'(0)=\frac{\left(1+\eta\right)^\frac{1}{2}}{4\left(1+\eta\right)^\frac{1}{2}}-\frac{\left(1-\eta\right)^\frac{1}{2}}{4\left(1-\eta\right)^\frac{1}{2}}=0.
\end{equation}
Differentiating again, we find:
\begin{equation}
F''(\delta\eta)=-\frac{\left(1+\eta\right)^\frac{1}{2}}{8\left(1+(\eta+\delta\eta)\right)^\frac{3}{2}}-\frac{\left(1-\eta\right)^\frac{1}{2}}{8\left(1-(\eta+\delta\eta)\right)^\frac{3}{2}}
\end{equation}
which we then evaluate at $\delta\eta=0$;
\begin{align}
F''(0)&=-\frac{\left(1+\eta\right)^\frac{1}{2}}{8\left(1+\eta\right)^\frac{3}{2}}-\frac{\left(1-\eta\right)^\frac{1}{2}}{8\left(1-\eta\right)^\frac{3}{2}}\nonumber\\
&=-\frac{1}{8(1+\eta)}-\frac{1}{8(1-\eta)}\nonumber\\
&=\frac{-1}{4(1-\eta^2)}.
\end{align}
Substituting these into the Taylor expansion 
\begin{equation}
F(\delta\eta)= 1+F'(\delta\eta)\delta\eta+F''(\delta\eta)\frac{\delta\eta^2}{2}+O(\delta\eta^3),
\end{equation} we find
\begin{equation}
F(\delta\eta)=1-\frac{\delta\eta^2}{8(1-\eta^2)}+O(\delta\eta^3)\ldots
\end{equation}
and so 
\begin{equation}
I_{\eta}^n\leq \lim_{\delta\eta\rightarrow 0}8n\frac{1-F(\chi_{\mathcal{E}_\eta},\chi_{\mathcal{E}_\eta+\delta\eta})}{\delta\eta^2}=\frac{n}{1-\eta^2}.
\end{equation}
This gives us our first main result.
\begin{theorem}
For a fixed-dimension Holevo-Werner channel with unknown parameter $\eta$, the optimal unbiased estimator $\hat{\eta}$ must satisfy:
\begin{equation}\label{etametro}
\mathrm{Var}(\hat{\eta})\geq \frac{1-\eta^2}{n}
\end{equation}
and this limit is achievable in large $n$ \cite{PL2017}. This is the standard shot noise limit.
\end{theorem}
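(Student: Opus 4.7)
The plan is to combine three ingredients already available in the excerpt: the joint teleportation covariance of Holevo-Werner channels established in Lemma 4.1.1 (with $V_k = U_k^*$), the explicit bound on the optimal quantum Fisher information from Theorem 4.3.1 (Pirandola--Laurenza), and the quantum Cram\'er-Rao inequality. The lower bound on $\mathrm{Var}(\hat\eta)$ will then follow by explicit computation of the Choi-state fidelity $F(W_{\eta,d},W_{\eta+\delta\eta,d})$ to second order in $\delta\eta$.

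First I would verify that $\{\mathcal{W}_{\eta,d}\}_{\eta}$ is a jointly teleportation covariant family at fixed $d$. This was already checked in Eq.~(4.5): the corrective unitaries $V_k = U_k^*$ are independent of $\eta$, so Theorem 4.3.1 applies and yields
\begin{equation}
I_\eta^n \;\le\; \lim_{\delta\eta\to 0} 8n\,\frac{1-F(W_{\eta,d},W_{\eta+\delta\eta,d})}{\delta\eta^2}.
\end{equation}
The key simplification is that, as noted in the excerpt, all Werner states of a fixed dimension share a common eigenbasis. Thus the root-fidelity collapses to the classical Bhattacharyya coefficient $\sum_i \sqrt{p_i q_i}$, and only the two distinct eigenvalues $(1\pm\eta)/(d(d\pm 1))$ contribute, with multiplicities that cancel against the prefactors.

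Next I would compute the Taylor expansion of
\begin{equation}
F(\delta\eta) \;=\; \tfrac{1}{2}\sqrt{(1+\eta)(1+\eta+\delta\eta)} + \tfrac{1}{2}\sqrt{(1-\eta)(1-\eta-\delta\eta)}
\end{equation}
about $\delta\eta = 0$. One checks $F(0)=1$ and $F'(0)=0$ (so no linear drift), while the second derivative gives $F''(0) = -1/[4(1-\eta^2)]$. Hence $F(\delta\eta) = 1 - \delta\eta^2/[8(1-\eta^2)] + O(\delta\eta^3)$, which on substitution into the Fisher-information bound above produces $I_\eta^n \le n/(1-\eta^2)$. The quantum Cram\'er-Rao bound $\mathrm{Var}(\hat\eta) \ge 1/I_\eta^n$ then gives the stated inequality~(\ref{etametro}).

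There is no serious obstacle here: the teleportation-covariance reduction does all the heavy lifting by collapsing the adaptive optimisation to a single-use Choi fidelity. The one thing to be careful about is the boundary behaviour $\eta \to \pm 1$, where $W_{\pm 1,d}$ become rank-deficient and the expansion formally diverges; this is consistent with the fact that the extremal Werner states are pure on their supports and perfectly distinguishable from nearby mixed states, so no nontrivial variance lower bound is available there. Achievability of the bound in the asymptotic $n\to\infty$ regime is quoted from \cite{PL2017} and requires no further argument here.
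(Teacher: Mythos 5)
Your proposal reproduces the paper's proof essentially step for step: apply the teleportation-covariance reduction of Theorem~\ref{covariancebound} using the $\eta$-independent corrective unitaries from Lemma~\ref{lemmacovar}, exploit the common eigenbasis of Werner states to reduce the fidelity to a Bhattacharyya coefficient, Taylor-expand to find $F(0)=1$, $F'(0)=0$, $F''(0)=-1/[4(1-\eta^2)]$, substitute to get $I^n_\eta \le n/(1-\eta^2)$, and conclude via the quantum Cram\'er-Rao bound. The only addition is your remark on the degenerate boundary behaviour at $\eta\to\pm 1$, which the paper does not discuss but which is consistent with it.
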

\begin{figure}
\begin{center}
\includegraphics{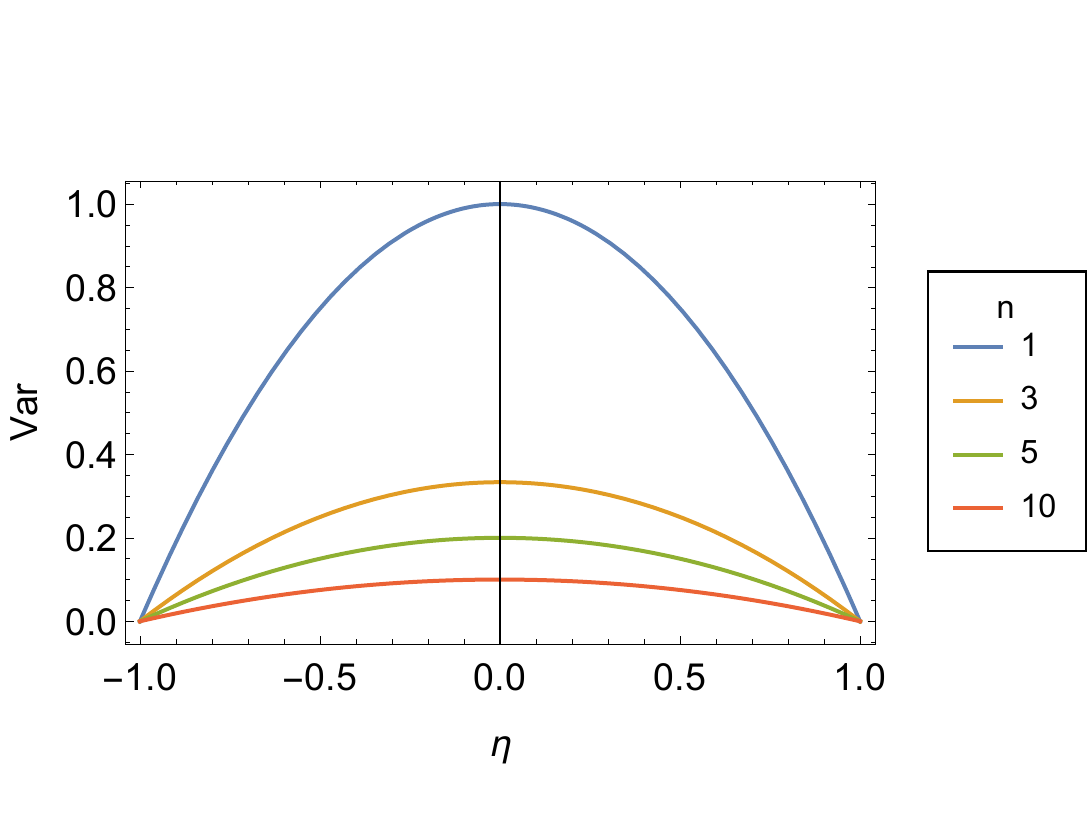}
\caption[Optimal variance of $n$ probes of the Holevo-Werner channel]{This figure shows how the variance bound varies with $\eta$, as the number of probes increases. This holds for all dimensions $d$.}
\label{Expectation Variance}
\end{center}
\end{figure}
\begin{figure}
\begin{center}
\includegraphics{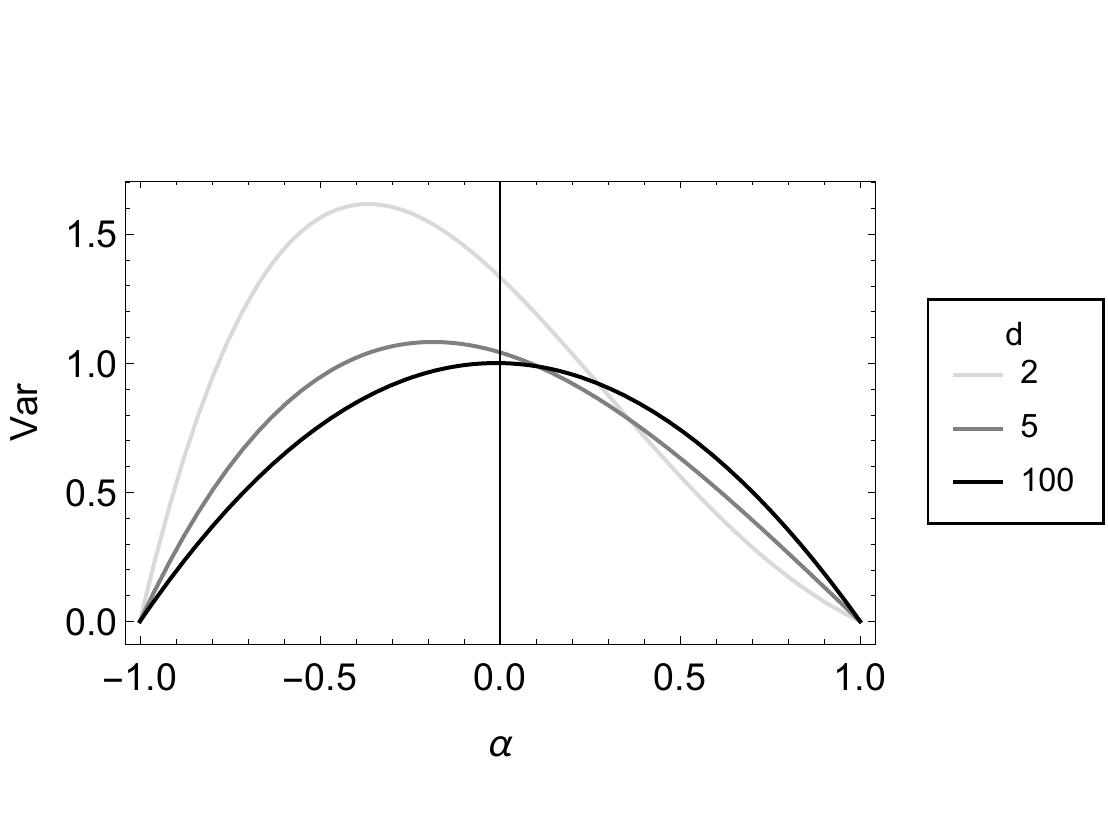}
\caption[$\alpha$-estimation of Holevo-Werner channels, using a single probe]{This graph for a single probe shows how the bound varies with the channel dimension. Surprisingly, estimation of certain $\alpha$ values improves with higher dimension, but deteriorates for others.}
\label{Alpha Variance}
\end{center}
\end{figure}

This result has two interesting consequences: the first is that this bound is dimension-independent; a surprising result which implies $\eta$ is equally difficult to estimate regardless of the channel's dimension. Another interesting note is that the two channels $\mathcal{W}_{\zeta,d},\mathcal{W}_{-\zeta,d},\;\zeta > 0$ are identically bounded, 
even though  $\mathcal{W}_{\zeta,d}$ is an entanglement-breaking channel whereas  $\mathcal{W}_{-\zeta,d}$ is not. It is the trace norm $\norm{W_{\zeta,d}-W_{0,d}}_1=\abs{\zeta}$ which determines optimal estimation.\\

It is worth noting the dimensional independence is particular to the expectation representation of the Holevo-Werner channels. If instead we were to use $\alpha$-representation, we would find:
\begin{equation}
\mathrm{Var}(\hat{\alpha})\geq \frac{\left(1-\alpha^2\right) (d-\alpha )^2}{n(d^2-1)}.
\end{equation}
Curiously, we find that this bound is greatest when $\alpha=\left(d-\sqrt{d^2+8}\right)/4$ - corresponding to $\eta=\fbb{4-d^2+d\sqrt{d^2+8}}{3d+\sqrt{d^2+8}}$. Since this does \emph{not} correspond to the maximal bound for $\mathrm{Var}(\hat{\eta})$ - simply $\eta=0$ - we find that, depending on the parametrisation used, the ease with which we can estimate a particular channel relative to the other channels in the family varies. Despite this, we cannot exploit this property - for example by estimating $\alpha$ with $\hat{\alpha}$, then converting\footnote{This is a valid unbiased estimator for $\eta$.} to $\hat{\eta}=-\fbb{1-\hat{\alpha}  d}{\hat{\alpha} -d}$. This is because of the generalised Cram\'{e}r-Rao bound, which states that an unbiased estimator $T(X)$ of a function $f(\theta)$ must satisfy 
\begin{equation}
\mathrm{Var}(T)\geq \frac{|f'(\theta)|^2}{I(\theta)},
\end{equation}
and thus 
\begin{equation}
\mathrm{Var}(\hat{\eta})\geq \frac{\left(1-\alpha ^2\right) \left(d^2-1\right)}{n(d-\alpha )^2}:=v(\eta(\alpha))
\end{equation}
where $\eta(\alpha)=-\fbb{1-\alpha  d}{\alpha -d}$. One finds that $v(\eta(\alpha))=\fbo{1-\eta(\alpha)^2}{n}$, giving us the same bound as Eq.~(\ref{etametro}).
%which when the right hand side is plotted parametrically $(\eta(\alpha),v(\eta(\alpha))$ reduces to the same bound (Eq. \ref{etametro}) as before.

\subsection{Binary Discrimination of Holevo-Werner Channels}
\label{ch4:fourth:a}
As stated before, we are also interested in the binary discrimination problem for Holevo-Werner channels. We will consider the situation where we have one of two possible Holevo-Werner channels of the same dimension, and we try to discern which using $n$ channel probes. As the two channels are teleportation covariant, we may employ the following theorem.%As stated before, another problem we are interested in is the minimum probability of error when determining between two possible channels. For this we may employ the following theorem.
\begin{theorem}[\cite{PL2017}]
For two jointly teleportation covariant channels $\mathcal{E}_0,\mathcal{E}_1$, the optimum error probability in discerning the channels after $n$ probes is bounded by
\begin{equation}\label{discBB}
\frac{1-\sqrt{\min\{1-F^{2n},nS\}}}{2}\leq p_{\mathrm{min}}\leq \frac{Q^n}{2} \leq \frac{F^n}{2}
\end{equation}
where $F=F(\chi_{\mathcal{E}_0},\chi_{\mathcal{E}_1})$ is the fidelity, $Q$ is the \emph{quantum Chernoff bound} (QCB) defined below, and S is a function of the quantum relative entropy given by
\begin{equation}
S:=\left(\ln\sqrt{2}\right)\min\{S(\chi_{\mathcal{E}_0}\Vert\chi_{\mathcal{E}_1}),S(\chi_{\mathcal{E}_1}\Vert\chi_{\mathcal{E}_0})\}.
\end{equation}
\end{theorem}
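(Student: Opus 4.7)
The plan is to prove the theorem by leveraging the teleportation covariance of both channels, reducing the adaptive channel discrimination problem to a state discrimination problem on tensor powers of the Choi matrices. I will then apply standard bounds from state discrimination theory (Helstrom, Fuchs--van de Graaf, Pinsker, and quantum Chernoff) and take the minimum/maximum of the resulting inequalities.

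First I would establish the stretching reduction. Because $\mathcal{E}_0$ and $\mathcal{E}_1$ are \emph{jointly} teleportation covariant, they share the same set of corrective unitaries $\{V_k\}$ (up to being simultaneously simulable by teleportation over their respective Choi matrices). This means any adaptive discrimination protocol $\Lambda$ using $n$ probes of the unknown channel $\mathcal{E}_i$ may be simulated by an $i$-independent LOCC protocol $\bar{\Lambda}$ acting on $\chi_{\mathcal{E}_i}^{\otimes n}$, following the stretching argument described in Section \ref{ch2:first}. Consequently, the optimal one-shot error probability for discriminating $\mathcal{E}_0$ and $\mathcal{E}_1$ with $n$ adaptive uses equals the optimal error probability for discriminating $\chi_{\mathcal{E}_0}^{\otimes n}$ from $\chi_{\mathcal{E}_1}^{\otimes n}$, since LOCC cannot decrease distinguishability (and any state-discrimination POVM can be implemented by a valid final LOCC step). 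By the Helstrom bound, this reduced error probability is
\begin{equation}
p_{\mathrm{min}} = \frac{1}{2}\left(1 - \frac{1}{2}\norm{\chi_{\mathcal{E}_0}^{\otimes n} - \chi_{\mathcal{E}_1}^{\otimes n}}_1\right).
\end{equation}

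For the upper bounds, I would invoke the quantum Chernoff bound, which asserts that $p_{\mathrm{min}} \leq \tfrac{1}{2}\,Q(\chi_{\mathcal{E}_0},\chi_{\mathcal{E}_1})^n$, where $Q(\rho,\sigma) := \inf_{0 \leq s \leq 1}\mathrm{Tr}[\rho^s \sigma^{1-s}]$. Substituting $s=1/2$ in the infimum yields $Q \leq \mathrm{Tr}[\sqrt{\rho}\sqrt{\sigma}] \leq F(\rho,\sigma)$ (the second inequality being standard), giving both the $Q^n/2$ and $F^n/2$ upper bounds.

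For the lower bound, I would apply two separate inequalities to the trace distance in the Helstrom expression and take the tighter (smaller $\tfrac{1}{2}\|\cdot\|_1$ bound gives the larger $p_{\mathrm{min}}$ bound, hence the $\min$ inside the square root). The Fuchs--van de Graaf inequality $\tfrac{1}{2}\norm{\rho-\sigma}_1 \leq \sqrt{1-F(\rho,\sigma)^2}$, combined with the multiplicativity $F(\rho^{\otimes n},\sigma^{\otimes n}) = F(\rho,\sigma)^n$, yields the $\sqrt{1-F^{2n}}$ term. Pinsker's inequality $\tfrac{1}{2}\norm{\rho-\sigma}_1^2 \leq (\ln 2)\,S(\rho\Vert\sigma)$, combined with the additivity $S(\rho^{\otimes n}\Vert\sigma^{\otimes n}) = n\,S(\rho\Vert\sigma)$ and the fact that we may optimise by taking the smaller of the two asymmetric relative entropies, yields $\tfrac{1}{2}\norm{\cdot}_1 \leq \sqrt{nS}$ with $S$ as defined. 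Combining gives the lower bound in Eq.~(\ref{discBB}).

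The main obstacle is the stretching step: one must verify carefully that the LOCC simulation can be performed with the same protocol for both hypotheses, which relies on the \emph{joint} covariance condition (same $\{U_k\},\{V_k\}$ independent of which $\mathcal{E}_i$ is in place). Once this reduction is secure, the remaining inequalities are standard consequences of well-known state-discrimination results (Helstrom, Fuchs--van de Graaf, Pinsker, Audenaert--Calsamiglia quantum Chernoff), applied in tensor-product form.
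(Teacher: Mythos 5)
Your overall strategy is the one used in the cited reference [PL2017]: stretch the adaptive protocol to a state-discrimination problem on $\chi_{\mathcal{E}_0}^{\otimes n}$ versus $\chi_{\mathcal{E}_1}^{\otimes n}$, then apply Helstrom, the quantum Chernoff bound with the elementary $Q\leq F$ estimate, Fuchs--van de Graaf, and quantum Pinsker. The constants work out (in particular, $\tfrac{1}{4}\norm{\rho-\sigma}_1^2\leq (\ln\sqrt{2})\,S(\rho\Vert\sigma)$ combined with additivity of relative entropy and multiplicativity of fidelity gives exactly the stated lower bound). The only substantive flaw is in your justification of the reduction step, which has the data-processing inequality the wrong way round: you write that ``LOCC cannot \emph{decrease} distinguishability,'' whereas what you need is that the simulating operation (a CPTP map, not necessarily LOCC in the single-party discrimination setting) cannot \emph{increase} distinguishability, i.e.\ $\norm{\bar\Lambda(\chi_0^{\otimes n})-\bar\Lambda(\chi_1^{\otimes n})}_1\leq\norm{\chi_0^{\otimes n}-\chi_1^{\otimes n}}_1$, which yields $p_{\min}^{\text{adaptive}}\geq p_{\min}^{\text{Choi}}$. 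The converse direction does not require the (generally false) claim that any discriminating POVM is LOCC-implementable; it follows simply because the non-adaptive parallel strategy --- probing with $n$ halves of maximally entangled states and performing the optimal joint measurement --- is itself a valid adaptive protocol, so $p_{\min}^{\text{adaptive}}\leq p_{\min}^{\text{Choi}}$. With those two corrections the reduction is secure and the rest of your argument goes through.
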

The quantum Chernoff bound is defined as:
\begin{equation}
Q:=\inf_{s\in[0,1]}\mathrm{Tr}\left[\chi^{s}_{\mathcal{E}_0}\chi^{1-s}_{\mathcal{E}_1}\right]
\end{equation}
and has the following operational definition.
\begin{definition}[\cite{ACMBMAV2007}]
Given $n$ copies of a state known to be either $\rho_0$ or $\rho_1$, the minimal probability of incorrectly identifying the state is bounded by:
\begin{equation}
p_{\mathrm{min}}\leq \frac{Q^n}{2}=\frac{1}{2}\left(\inf_{s\in[0,1]}\mathrm{Tr}\left[\rho_0^s\rho_1^{1-s}\right]\right)^n.
\end{equation}
\end{definition}
We can see that this operational definition extends to teleportation covariant channels.\\

We have already seen that for Werner states the fidelity is
\begin{equation}
F=\frac{\sqrt{1+\eta}\sqrt{1+\zeta}+\sqrt{1-\eta}\sqrt{1-\zeta}}{2}.
\end{equation} In order to calculate $S$ we first calculate the relative entropy between two Werner states:
\begin{align}
S(W_{\eta,d}\Vert W_{\zeta,d})&=\mathrm{Tr}[W_{\eta,d}\log(W_{\eta,d})-W_{\eta,d}\log(W_{\zeta,d})]
=\sum_i p_i\log{\frac{p_i}{q_i}}
\end{align}
where we have again exploited the shared eigenbasis between the two states. Substituting in the eigenvalues of $W_{\eta,d},W_{\zeta,d}$ this gives:
\begin{align}
S(W_{\eta,d}\Vert W_{\zeta,d})&=
\frac{d(d+1)}{2}\frac{(1+\eta)}{d(d+1)}\log\left(\frac{1+\eta}{1+\zeta}\right)
+\frac{d(d-1)}{2}\frac{(1-\eta)}{d(d-1)}\log\left(\frac{1-\eta}{1-\zeta}\right)\nonumber\\
&=\frac{1+\eta}{2}\log\left(\frac{1+\eta}{1+\zeta}\right)+\frac{1-\eta}{2}\log\left(\frac{1-\eta}{1-\zeta}\right).\label{WernRel}
\end{align}
In order to determine which relative entropy to use for $S$, we may look at the function
\begin{align}
\Delta S:&=S(W_{\eta,d}\Vert W_{\zeta,d})-S(W_{\zeta,d}\Vert W_{\eta,d})\nonumber\\
&=\frac{1+\eta}{2}\log\left(\frac{1+\eta}{1+\zeta}\right)+\frac{1-\eta}{2}\log\left(\frac{1-\eta}{1-\zeta}\right)-\frac{1+\zeta}{2}\log\left(\frac{1+\zeta}{1+\eta}\right)-\frac{1-\zeta}{2}\log\left(\frac{1-\zeta}{1-\eta}\right)\nonumber\\
&=\left(1+\frac{\eta+\zeta}{2}\right)\log\left(\frac{1+\eta}{1+\zeta}\right)+\left(1-\frac{\eta+\zeta}{2}\right)\log\left(\frac{1-\eta}{1-\zeta}\right).
\end{align}
Clearly we can see that $\Delta S=0$ when $|\eta|=|\zeta|$. By plotting $\Delta S$, we can show (numerically) that $\Delta S < 0$ when $|\eta| > |\zeta|$, giving us that:
\begin{equation}
S=\begin{cases}
\ln\left(\sqrt{2}\right)\frac{1+\eta}{2}\log\left(\frac{1+\eta}{1+\zeta}\right)+\frac{1-\eta}{2}\log\left(\frac{1-\eta}{1-\zeta}\right)&\;\;|\eta|\geq|\zeta|,\\
\mathrm{ln}\left(\sqrt{2}\right)\frac{1+\zeta}{2}\log\left(\frac{1+\zeta}{1+\eta}\right)+\frac{1-\zeta}{2}\log\left(\frac{1-\zeta}{1-\eta}\right)&\;\;|\eta|\leq|\zeta|.
\end{cases}
\end{equation}
\begin{figure}
\begin{center}
\includegraphics[width=0.4\columnwidth]{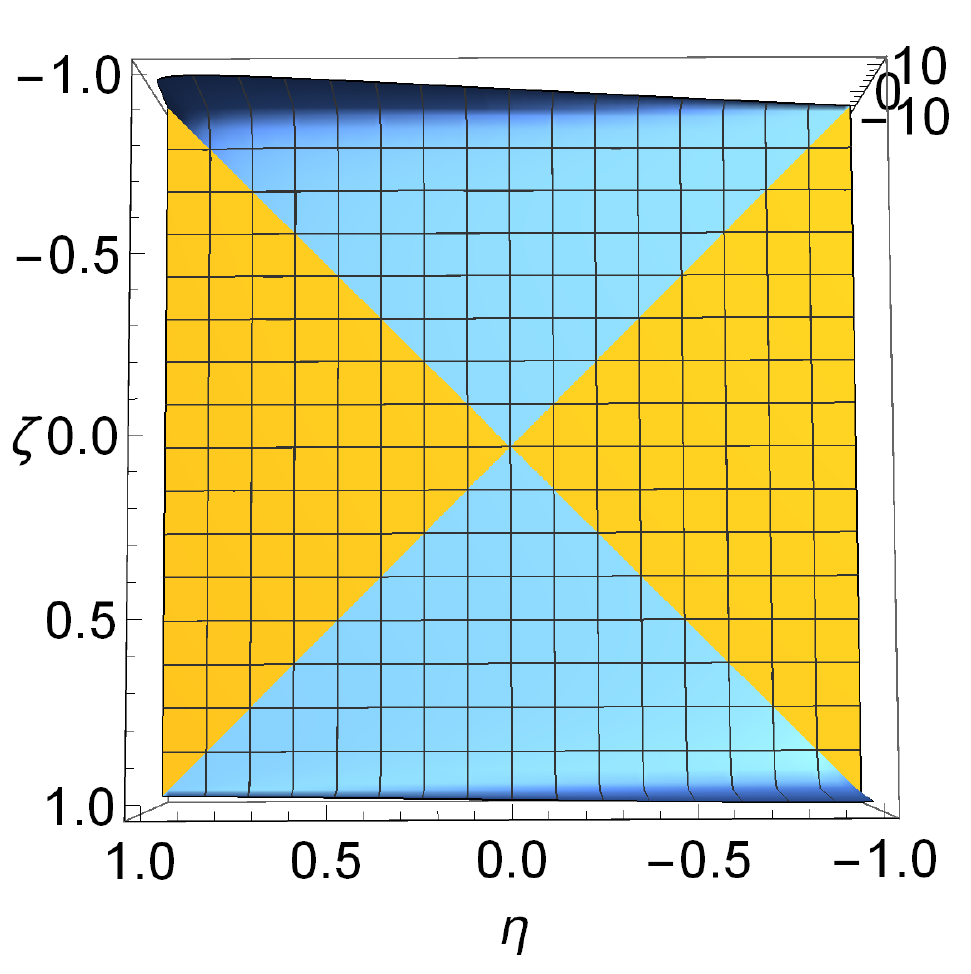}
\caption[A plot of $\Delta S$]{A plot of $\Delta S$. The yellow region denotes $\Delta S < 0$, attained when $|\eta|>|\zeta|$.}
\end{center}
\end{figure}

\newpage
For the QCB we begin by noting again that our ability to simultaneously diagonalise Werner states of equal dimension allows us to write:
\begin{align}
Q_s:&=\mathrm{Tr}\left[W_{\eta,d}^sW^{1-s}_{\eta,d}\right]=\sum_i p_i^s q_i^{1-s}\nonumber\\
&=\frac{d(d+1)}{2}\left(\frac{1+\eta}{d(d+1)}\right)^s\left(\frac{1+\zeta}{d(d+1)}\right)^{1-s}+\frac{d(d-1)}{2}\left(\frac{1-\eta}{d(d-1)}\right)^s\left(\frac{1-\zeta}{d(d-1)}\right)^{1-s}\nonumber\\
&=\frac{1+\zeta}{2}\left(\frac{1+\eta}{1+\zeta}\right)^s+\frac{1-\zeta}{2}\left(\frac{1-\eta}{1-\zeta}\right)^s\label{QCBWerner}.
\end{align}
It is clear that the infimum $Q=\inf_{s\in[0,1]} Q_s$ will not occur\footnote{Unless trivially the QCB is 1.} at $s=0,1$ as $Q_0=Q_1=1$. Thus we can instead take the infimum as $s\in (0,1)$. First, we must consider some singular cases of $Q_s$:
\begin{itemize}
\item $\eta=\zeta$: Clearly for this scenario $Q_s=1$ for all values of $s$. We shall choose to take $s=\foo{1}{2}$ (the fidelity) here.
\item $\zeta=1$: This simplifies $Q_s=\left(\fbo{1+\eta}{2}\right)^s$. Since $\fbo{1+\eta}{2} \in [0,1]$, we achieve $\inf_s Q_s=\fbo{1+\eta}{2}$ as $s\rightarrow 1^-$.
\item $\zeta=-1$: This time we obtain the simplification $Q_s=\left(\fbo{1-\eta}{2}\right)^s$. Since $\fbo{1-\eta}{2} \in [0,1]$ also, we achieve $\inf_s Q_s=\fbo{1-\eta}{2}$ as $s\rightarrow 1^-$.
\item $\eta=1$: In this case we find $Q_s=\left(\fbo{1+\zeta}{2}\right)^{1-s}$. $\fbo{1+\zeta}{2} \in [0,1]$ also, giving $\inf_s Q_s=\fbo{1+\zeta}{2}$ as $s\rightarrow 0^+$.
\item $\eta=-1$: For the final singular case, $Q_s=\left(\fbo{1-\zeta}{2}\right)^{1-s}$. Again $\fbo{1-\zeta}{2} \in [0,1]$, meaning that $\inf_s Q_s=\fbo{1-\zeta}{2}$ as $s\rightarrow 0^+$.
\end{itemize}
For other values of $\eta,\zeta$, our function (\ref{QCBWerner}) behaves well and we can find the infimum by more conventional means. Let us define the following:
\begin{equation}
k_{\pm}:=\frac{1\pm\zeta}{2},~a:=\frac{1+\eta}{1+\zeta},~m:=\frac{1-\eta
}{1-\zeta}. \label{mlml}%
\end{equation}
so that$\ $%
\begin{equation}
Q_{s}=k_{+}a^{s}+k_{-}m^{s}=k_{+}e^{s\;\ln (a)}+k_{-}e^{s\;\ln(m)}.
\end{equation}
Let us now compute the derivative in $s$%
\begin{equation}
\frac{dQ_{s}}{ds}=k_{+}\ln (a)e^{s\;\ln (a)}+k_{-}\ln%
(m)e^{s\;\ln (m)}. \label{returnform}%
\end{equation}
By setting $dQ_{s}/ds=0$, we derive
\begin{align}
0  &  =k_{+}\ln(a)e^{s\;\ln (a)}+k_{-}\ln(m)e^{s\;\ln%
(m)}\nonumber\\
k_{+}\ln(a)e^{s\;\ln (a)}  &  =-k_{-}\ln(m)e^{s\;\ln%
(m)}\nonumber\\
\frac{e^{s\ln (a)}}{e^{s\ln (m)}}  &  =\frac{-k_{-}\ln%
(m)}{k_{+}\ln (a)}\nonumber\\
e^{s\left(  \ln (a)-\ln (m)\right)  }  &  =\frac{-k_{-}\ln%
(m)}{k_{+}\ln (a)}\nonumber\\
s\left(  \ln (a)-\ln (m)\right)   &  =\ln\left(
\frac{-k_{-}\ln (m)}{k_{+}\ln (a)}\right) \nonumber\\
s  &  =\frac{\ln\left(  \frac{-k_{-}\ln (m)}{k_{+}\ln%
(a)}\right)  }{\ln\left(  \frac{a}{m}\right)  }.\label{deriv}
\end{align}
Substituting back in our definitions of $k_{\pm},\; a$ and $m$, we obtain
\begin{equation}
s=\frac{\ln\left(  \frac{\zeta-1}{\zeta+1}\frac{\ln\left(%
\frac{1-\eta}{1-\zeta}\right)}{\ln\left(\frac{1+\eta}{1+\zeta}\right)}\right)
}{\ln\frac{\left(  1+\eta\right)  \left(  1-\zeta\right)  }{\left(
1+\zeta\right)  \left(  1-\eta\right)  }}=:s_{\eta,\zeta}. \label{sval}%
\end{equation}
To show this stationary point in a suitable candidate for the infimum, we must first show $s_{\eta,\zeta}\in[0,1],\;\forall \;\eta,\zeta\in(-1,1),\;\eta\neq\zeta$. We first show that $s_{\eta,\zeta}$ is positive - we do this by considering both the denominator and numerator separately, and in two different cases:\\

\textbf{Denominator, Case 1:} $-1<\zeta<\eta<1$.\\
In this scenario, both fractions $\fbb{1+\eta
}{1+\zeta}$ and $\fbb{1-\zeta}{1-\eta}$ must necessarily be
greater than 1; thus the overall denominator is the logarithm of
a value greater than 1, and therefore positive.\\

\textbf{Denominator, Case 2:} $-1<\eta<\zeta<1$.\\
 Conversely, in this case both fractions
$\fbb{1+\eta}{1+\zeta}$ and $\fbb{1-\zeta}{1-\eta}$ are less
than 1, but positive, and so too is their product; forcing the
overall denominator to be negative when the logarithm is taken.\\

\textbf{Numerator, Case 1:} $-1<\zeta<\eta<1$.\\
 Since the denominator is positive, we thus require the numerator to be non-negative. Equivalently, we require the following inequality holds:
\begin{equation}
\frac{\zeta-1}{\zeta+1}\frac{\ln\left(\frac{1-\eta}{1-\zeta}\right)}{\ln%
\left(\frac{1+\eta}{1+\zeta}\right)}\geq1. \label{firstcaseapp}%
\end{equation}
Since $\zeta+1>0$, and $\fbb{1+\eta}{1+\zeta}>1$, we have the denominator of
Eq.~(\ref{firstcaseapp}) is positive, and thus we may rearrange the inequality to:
\begin{equation}
\left(  \zeta-1\right)  \ln\left(\frac{1-\eta}{1-\zeta}\right)-\left(
\zeta+1\right)  \ln\left(\frac{1+\eta}{1+\zeta}\right)\geq0.\label{endpoint1}
\end{equation}

\textbf{Numerator, Case 2:} $-1<\eta<\zeta<1$.\\
 Due to the negative denominator, we require a non-positive numerator - this requires the inequality
\begin{equation}
\frac{\zeta-1}{\zeta+1}\frac{\ln\left(\frac{1-\eta}{1-\zeta}\right)}{\ln\left(%
\frac{1+\eta}{1+\zeta}\right)}\leq1.\label{secondcaseapp}%
\end{equation}
 Again $\zeta+1>0$, but in this case  $\fbb{1+\eta}{1+\zeta}<1$, and thus when multiplying out by $\ln\left(\fbb{1+\eta}{1+\zeta}\right)$ we must flip the sign, to obtain the equation:
\begin{equation}
\left(  \zeta-1\right)  \ln\left(\frac{1-\eta}{1-\zeta}\right)-\left(
\zeta+1\right)  \ln\left(\frac{1+\eta}{1+\zeta}\right)\geq0.\label{endpoint2}%
\end{equation}
Comparing Eq.~(\ref{endpoint1}) and Eq.~(\ref{endpoint2}), we see the condition for $s_{\eta,\zeta}$ to be positive is the same for both cases. Moreover, we can see this inequality holds iff
\begin{equation}
\left(  \frac{1-\zeta}{2}\right)  \ln\left(  \frac{\frac{1-\zeta}{2}%
}{\frac{1-\eta}{2}}\right)  +\left(  \frac{\zeta+1}{2}\right)  \ln%
\left(  \frac{\frac{1+\zeta}{2}}{\frac{1+\eta}{2}}\right)  \geq0
\end{equation}
holds. Substituting $p_{\eta}=\fbo{1-\eta}{2}$, $p_{\zeta}=\fbo{1-\zeta}{2}$, we can rewrite the left hand side as
\begin{equation}
p_{\zeta}\ln\left(\frac{p_{\zeta}}{p_{\eta}}\right)+(1-p_{\zeta})\ln\left(\frac{1-p_{\zeta}}{1-p_{\eta}}\right)
\end{equation}
with $p_{\eta},p_{\zeta}\in(0,1).$ This formula is the \emph{classical relative entropy}, also known as the Kullback-Leibler (KL) divergence, of two biased coin flips (in a different logarithmic basis). Therefore we may use Gibbs' inequality, which states that this quantity is always non-negative (regardless of logarithmic basis). This proves the first statement, that $s_{\eta,\zeta}\geq 0$.\\

In order to show $s_{\eta,\zeta}\leq 1$, consider the sum of two non-negative values $s_{\eta,\zeta}+s_{\zeta,\eta}$. Using formula~(\ref{sval}) we obtain:
\begin{equation}
s_{\eta,\zeta}+s_{\zeta,\eta}=\frac{\ln\left(  \frac{\zeta-1}{\zeta+1}\frac{\ln\left(%
\frac{1-\eta}{1-\zeta}\right)}{\ln\left(\frac{1+\eta}{1+\zeta}\right)}\right)
}{\ln\frac{\left(  1+\eta\right)  \left(  1-\zeta\right)  }{\left(
1+\zeta\right)  \left(  1-\eta\right)  }}+\frac{\ln\left(  \frac{\eta+1}{\eta-1}\frac{\ln\left(%
\frac{1+\zeta}{1+\eta}\right)}{\ln\left(\frac{1-\zeta}{1-\eta}\right)}\right)
}{\ln\frac{\left(  1+\eta\right)  \left(  1-\zeta\right)  }{\left(
1+\zeta\right)  \left(  1-\eta\right)  }}\\
\end{equation}
where we have used $-\log(x)=\log(x^{-1})$ on both numerator and denominator of $s_{\zeta,\eta}$. We see they share a denominator, so we shall focus on the sum of the numerators:
\begin{align}
&  \ln\left(  \frac{\zeta-1}{\zeta+1}\frac{\ln\left(\frac{1-\eta
}{1-\zeta}\right)}{\ln\left(\frac{1+\eta}{1+\zeta}\right)}\right)  +\ln\left(
\frac{\eta+1}{\eta-1}\frac{\ln\left(\frac{1+\zeta}{1+\eta}\right)}{\ln%
\left(\frac{1-\zeta}{1-\eta}\right)}\right) \nonumber\\
&  =\ln\frac{\left(  1+\eta\right)  \left(  1-\zeta\right)  }{\left(
1+\zeta\right)  \left(  1-\eta\right)  },
\end{align}
where we have used the following basic properties: $\log(x)+\log(y)=\log(xy)$, $-\log(x)=\log(x^{-1})$, and $-1^2=1$. We can see that the numerator and denominator coincide, and so $s_{\eta,\zeta}+s_{\zeta,\eta}=1$ - since both these values are non-negative, we can conclude $0\leq s_{\eta,\zeta}\leq 1$, as required. Remembering that $\frac{dQ_{s}}{ds}|_{s=s_{\eta,\zeta}}=0$, it remains to show this value is a minima. Looking at the second order differential, we find
\begin{equation}
\frac{d^{2}Q_{s}}{ds^{2}}=k_{+}[\ln(a)]^{2}e^{s\ln(a)}%
+k_{-}[\ln(m)]^{2}e^{s\ln(m)}.
\end{equation}
For all $\zeta\in(-1,1)$, both $k_{+},k_{-}$ are strictly
positive, and when $\eta\neq\zeta$ we also have
$\ln(a),\ln(m)\neq0$ - thus their squares are
positive too. Finally, $e$ to the power of any real value is
strictly positive, and we so we may conclude $\frac{d^{2}Q_{s}}{ds^{2}}>0$
for all values of $s$, including $s_{\eta,\zeta}$. Thus our value $s_{\eta,\zeta}$ is a true minima. Moreover, the limits $s\rightarrow0^+,s\rightarrow1^-$ for the non-singular $\eta,\zeta$ are $1$, and therefore we may be confident our minimum achieves the infimum. Incorporating this result alongside our singular cases, we have provided the analytical QCB for Holevo-Werner channels.

\begin{figure*}[ptb]
\vspace{-1.5cm}
\par
\begin{center}
\includegraphics[width=0.95\textwidth]{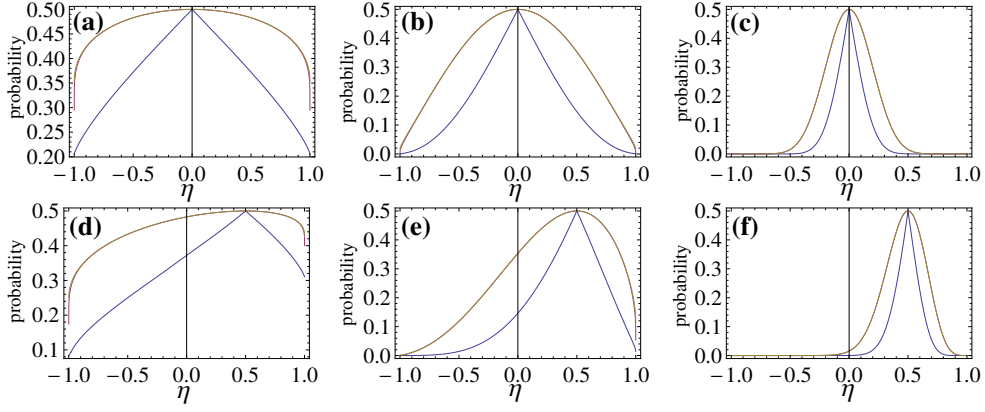}
\end{center}
\par
\vspace{-1.5cm}\caption[A comparison of $p_{\mathrm{min}}$ for Holevo-Werner channel discrimination]{We plot the fidelity-based lower bound and the QCB
(upper bound) to the optimal error probability $p_{\mathrm{min}}$
in Eq.~(\ref{discBB}) for two HW channels $\mathcal{W}_{\eta,d}$ and
$\mathcal{W}_{\zeta,d}$ in an arbitrary finite dimension $d\geq2$. In panels
(a)-(c), we set $\zeta=0$ and we plot the bounds as a function of $\eta$,
considering (a) $n=1$, (b) $n=10$, and (c) $n=100$. In panels (d)-(f), we
repeat the study with the same parameters as before but setting $\zeta=1/2$.}%
\label{error1}%
\end{figure*}

\subsection{Quantum Chernoff Bound for Isotropic States}
\label{ch4:fourth:b}
In addition to an analytic result for Holevo-Werner channels, we also present a closed form for the QCB of two isotropic states - this could then be used to bound the binary discrimination of two depolarising channels, as they are also jointly teleportation covariant (from lemma \ref{lemmacovar}). Like Werner states, isotropic states of the same dimension are also simultaneously diagonalisable, and thus we have for two isotropic states $I_{\alpha,d},I_{\beta,d}$ that
\begin{equation}
Q=\inf_{s\in(0,1)} Q_s,\;Q_s:=\mathrm{Tr}[I_{\alpha,d}^sI_{\beta,d}^{1-s}]
\end{equation}
reduces to 
\begin{align}
Q_s&=\sum_i r_i^st_i^{1-s}\nonumber\\
&=\left(\frac{\alpha}{d}\right)^s\left(\frac{\beta}{d}\right)^{1-s}+(d^2-1)\left(\frac{d-\alpha}{d(d^2-1)}\right)^s\left(\frac{d-\beta}{d(d^2-1)}\right)^{1-s}\nonumber\\
&=\frac{\beta}{d}\left(\frac{\alpha}{\beta}\right)^s+\frac{d-\beta}{d}\left(\frac{d-\alpha}{d-\beta}\right)^s
\end{align}
where we have taken $\{r_i\}$, $\{t_i\}$ as the eigenvalues of $I_{\alpha,d},I_{\beta,d}$ respectively (which we saw in chapter \ref{ch:Werner}).
Once again, we must first look at the singular scenarios:
\begin{itemize}
\item $\alpha=\beta$: As with Werner states, we have $Q_s=1$, for all $s$. We again choose $s=\foo{1}{2}$.

\item $\beta=d$: This simplifies $Q_s=\left(\foo{\alpha}{d}\right)^{s}$. Here we achieve infimum $\foo{\alpha}{d}$ at $s\rightarrow1^{-}$, since $\foo{\alpha}{d}\in[0,1]$.

\item $\beta=0$: As $Q_s=\left(\fbo{d-\alpha}{d}\right)^{s}$ in this scenario, we obtain infimum $\fbo{d-\alpha}{d}$ at $s\rightarrow1^{-}$ using that $\fbo{d-\alpha}{d}\in[0,1]$ also.

\item $\alpha=d$: This simplifies $Q_s=\left(\foo{\beta}{d}\right)^{1-s}$, and thus the infimum is $\foo{\beta}{d}$ at $s\rightarrow0^{+}$.

\item $\alpha=0$: As this gives $Q_s=\left(\fbo{d-\beta}{d}\right)^{1-s}$, the infimum is $\fbo{d-\beta}{d}$ at $s\rightarrow0^{+}$.

For all other values, we may differentiate $Q_s$ in the same manner as Eq.~(\ref{deriv}) to obtain
\begin{equation}
\frac{dQ_{s}}{ds}=l_{+}\ln (a_{\Omega})e^{s\ln (a_{\Omega})}%
+l_{-}\ln (m_{\Omega})e^{s\ln (m_{\Omega})},
\end{equation}
where we have set%
\begin{equation}
l_{+}=\frac{\beta}{d},~l_{-}=\frac{d-\beta}{d},~a_{\Omega}=\frac{\alpha}%
{\beta},~m_{\Omega}=\frac{d-\alpha}{d-\beta}.
\end{equation}
Setting this to $0$, rearranging, and substituting back in our definitions, we obtain the critical point:
\begin{equation}
s=\frac{\ln\left(  \frac{\beta-d}{\beta}\frac{\ln\left(%
\frac{d-\alpha}{d-\beta}\right)}{\ln\left(\frac{\alpha}{\beta}\right)}\right)
}{\ln\left(\frac{\alpha(d-\beta)}{\beta(d-\alpha)}\right)}=:s_{\alpha,\beta
}^{\Omega}.
\end{equation}
Although this point is dimension-dependent, we may transform $\alpha$, $\beta$ to $\eta=\fbo{2\alpha-d}{d}\in[-1,1]$, $\zeta=\fbo{2\beta-d}{d}\in[-1,1]$ to obtain the dimension-independent
\begin{equation}
s_{\eta,\zeta
}^{\Omega}=\frac{\ln\left(  \frac{\zeta-1}{\zeta+1}\frac{\ln\left(%
\frac{1-\eta}{1-\zeta}\right)}{\ln\left(\frac{1+\eta}{1+\zeta}\right)}\right)
}{\ln\frac{\left(  1+\eta\right)  \left(  1-\zeta\right)  }{\left(
1+\zeta\right)  \left(  1-\eta\right)  }}
\end{equation}
and we see that actually $s_{\eta,\zeta}^{\Omega}=s_{\eta,\zeta}$. We have already shown $s_{\eta,\zeta}$ to be a valid minimum in the open interval $(-1,1)$, and therefore we may conclude $s_{\alpha,\beta}^{\Omega}$ achieves the isotropic QCB, for the non-singular cases. %(DO I PLOT ISO RESULTS?)
\end{itemize}

\section{Capacities of Holevo-Werner Channels}
\label{ch4:fourth}
In this section, we apply results from \cite{PBLOCSB2018} and \cite{PLOB2017}, which allow for the bounding of teleportation covariant channel capacities using entanglement measures. We have already seen the application of some of these results in chapter \ref{ch:simul}, in the case of Pauli-damping channels; however these bounds take on a specific character for Holevo-Werner channels in that they may be \emph{subadditive} - thus motivating the use of \emph{regularised} entanglement measures where possible.\\

To begin with, we recap some results that prove to be useful. 
\begin{definition}
The relative entropy of entanglement (REE) of a state $\rho$ is given by
\begin{equation}
E_R\left(\rho\right):=\min_{\sigma\in\mathrm{sep}}S(\rho\Vert\sigma)
\end{equation}
and is a valid entanglement monotone. This can be generalised to the $n$-copy version
\begin{equation}
E^n_R\left(\rho\right):=\frac{1}{n}\min_{\sigma\in\mathrm{sep}}S(\rho^{\otimes n}\Vert\sigma)
\end{equation}
and the \emph{regularised} version,
\begin{equation}
E^\infty_R\left(\rho\right):=\lim_{n\rightarrow\infty}\frac{1}{n}\min_{\sigma\in\mathrm{sep}}S(\rho^{\otimes n}\Vert\sigma).
\end{equation}
This measure is (weakly) subadditive, namely $E^\infty_R\left(\rho\right)\leq\ldots E^{(n+1)}_R\left(\rho\right)\leq E^{n}_R\left(\rho\right) \leq E_R\left(\rho\right).$
\end{definition}
\begin{theorem}[\cite{PLOB2017}]\label{PLOB4}
For any teleportation covariant channel $\mathcal{E}$, we can bound the secret key capacity\footnote{Also the two-way entanglement distillation and quantum capacities, since $D_2=Q_2\leq K$.} of the channel $K(\mathcal{E})$ by
\begin{equation}
K(\mathcal{E})\leq E_R^\infty(\chi_{\mathcal{E}}),
\end{equation}
with $\chi_\mathcal{E}$ the Choi matrix of the channel.
\end{theorem}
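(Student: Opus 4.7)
The plan is to assemble three ingredients that have already been laid out in the excerpt. First, use teleportation covariance to establish that $\mathcal{E}$ is $\chi_\mathcal{E}$-simulable in the sense of Definition \ref{sigsim}: running the standard teleportation protocol with resource state $\chi_\mathcal{E}$ and replacing Bob's correction $U_i^\dagger$ by $V_i^\dagger$ (from the covariance relation) reproduces $\mathcal{E}(\rho)$ at Bob's output, as the chapter on channel simulation explains when introducing teleportation covariance.

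Second, I would invoke ``stretching of the protocol''. Given any $n$-round adaptive protocol $\boldsymbol{\Lambda}=(\Lambda_0,\ldots,\Lambda_n)$ for key distillation over $\mathcal{E}$, each channel use can be replaced by its LOCC simulation over one copy of $\chi_\mathcal{E}$, and the simulation LOCCs can be merged with the $\Lambda_i$. Iterating this $n$ times, as in the chapter on channel simulation, gives a single LOCC $\bar{\Lambda}$ such that $\rho^n_{AB}=\bar{\Lambda}(\chi_\mathcal{E}^{\otimes n})$.

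Third, I would apply the REE-based converse machinery. The REE is asymptotically continuous, monotone under LOCC, and subadditive on tensor products; moreover $E_R(\Psi^{\otimes rn})\geq rn$ for any private state (from \cite{HHHO2009}). For an adaptive protocol with rate $r_n$ attaining $\|\rho^n_{AB}-\Psi^{\otimes nr_n}\|_1\leq \varepsilon$, asymptotic continuity yields
\begin{equation}
nr_n\;\leq\; E_R(\Psi^{\otimes nr_n})\;\leq\; E_R(\rho^n_{AB})+k\varepsilon\log d+O(\varepsilon),
\end{equation}
with $d=\dim[\Psi^{\otimes nr_n}]$. Substituting the stretched form of $\rho^n_{AB}$ and using monotonicity plus subadditivity gives
\begin{equation}
r_n\;\leq\;\frac{E_R(\chi_\mathcal{E}^{\otimes n})}{n}+k\varepsilon\,\frac{\log d}{n}+\frac{O(\varepsilon)}{n}.
\end{equation}
Taking $n\to\infty$ (so $\varepsilon\to 0$) and then the supremum over protocols delivers $K(\mathcal{E})\leq E_R^\infty(\chi_\mathcal{E})$.

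The one genuinely delicate point is the exponential-growth requirement: the shield of a private state can in principle be arbitrarily large, so the term $k\varepsilon\log d/n$ need not vanish. This is the main obstacle, and it is resolved exactly as flagged earlier in the text: by \cite{CEHHOR2007} (adapted in \cite{PLOB2017}) one may restrict the adaptive protocol's target to private states of dimension at most $2^{d_n n}$ with $\liminf_n d_n$ finite, without loss of generality for the secret key capacity. With that restriction the error term is controlled and the argument goes through cleanly. The bound on $D_2=Q_2\leq K$ then follows automatically.
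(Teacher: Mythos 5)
Your proposal is correct and follows essentially the same route as the paper: the paper's proof of Theorem~\ref{PLOB4} is a one-paragraph pointer back to the channel-simulation chapter, which assembles exactly your three ingredients (teleportation covariance $\Rightarrow$ $\chi_{\mathcal{E}}$-simulability, stretching of the adaptive protocol to $\rho^n_{AB}=\bar{\Lambda}(\chi_{\mathcal{E}}^{\otimes n})$, and the REE-based weak-converse machinery using asymptotic continuity, monotonicity, subadditivity, and normalisation on private states). You have also correctly flagged and resolved the one genuinely delicate step — controlling $\dim[\Psi^{\otimes nr_n}]$ via the exponential-growth restriction from \cite{CEHHOR2007} as adapted in \cite{PLOB2017} — precisely as the paper does.
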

We saw the proof of this in chapter \ref{ch:simul} - in which it was shown that a channel $\mathcal{E}$ which can be LOCC simulated over resource state $\sigma$ satisfies the bound $K(\mathcal{E})\leq E_R^\infty(\sigma)$. We also saw that for teleportation covariant channels we can take the resource state to be the Choi matrix of the channel, $\chi_\mathcal{E}$. This gives the above result.
\begin{corollary}
For any $n\in \mathbb{N}$, $E_R^n\left(\chi_\mathcal{E}\right)$ is a valid upper bound on $K(\mathcal{E})$ for a $\mathcal{E}$ teleportation covariant.
\end{corollary}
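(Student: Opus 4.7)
The plan is to obtain the corollary as a direct consequence of two facts already established in the excerpt: Theorem (PLOB4), which gives $K(\mathcal{E})\leq E_R^\infty(\chi_\mathcal{E})$ whenever $\mathcal{E}$ is teleportation covariant, and the weak subadditivity chain
\[
E_R^\infty(\rho)\leq \cdots \leq E_R^{(n+1)}(\rho) \leq E_R^{n}(\rho) \leq E_R(\rho),
\]
which is stated as part of the definition of the $n$-copy REE. Putting these together, for any teleportation covariant $\mathcal{E}$ and any $n\in\mathbb{N}$,
\[
K(\mathcal{E})\;\leq\;E_R^\infty(\chi_\mathcal{E})\;\leq\;E_R^n(\chi_\mathcal{E}),
\]
which is exactly the claim. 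So my proof would be essentially a two-line citation argument, with the only ``step'' being to invoke the previous theorem and then chain the inequality with the monotonic decrease of the $n$-copy REE towards its regularisation.

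If a more self-contained justification of the monotonicity $E_R^{n+1}\leq E_R^n$ were desired, the standard argument is to take $\sigma_n$ achieving the minimum in $E_R^n(\rho)$, and use the separable product extension $\sigma_n\otimes \sigma_1$ (with $\sigma_1$ any separable state on $\mathcal{H}_A\otimes\mathcal{H}_B$, for instance $\rho_A\otimes\rho_B$) as a feasible point for the minimisation defining $E_R^{n+1}(\rho)$, and then exploit the additivity of quantum relative entropy under tensor products together with the factor $1/(n+1)$ out front. This gives $(n+1)E_R^{n+1}(\rho) \leq n E_R^n(\rho) + E_R(\rho)$, and iterating (or using the well-known lemma that a subadditive sequence $a_n = nE_R^n$ satisfies $a_n/n$ decreasing to its infimum) yields the chain above.

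There is essentially no obstacle here: the corollary is a packaging of previously stated results, and its value is purely practical, since the single-letter quantities $E_R^n(\chi_\mathcal{E})$ (for small $n$) are computable whereas $E_R^\infty(\chi_\mathcal{E})$ is typically not. The only thing one needs to be slightly careful about is to observe that theorem (PLOB4), as proved earlier via channel stretching, indeed yields the regularised bound rather than merely the single-letter one, so that the subadditivity chain can be inserted between $K(\mathcal{E})$ and $E_R^n(\chi_\mathcal{E})$.
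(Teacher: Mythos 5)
Your proof is correct and matches the paper's approach exactly: the paper also simply cites theorem \ref{PLOB4} together with the subadditivity chain $E_R^\infty \leq \cdots \leq E_R^n \leq E_R$ stated in the definition of the $n$-copy REE. The optional self-contained justification of the monotonicity you sketch is sound but goes beyond what the paper does.
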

\begin{proof}
This simply follows from theorem \ref{PLOB4}, and the subadditivity of the REE.
\end{proof}

For many teleportation covariant channels, it occurs that $E_R^{\infty}(\chi_\mathcal{E})=E_R(\chi_\mathcal{E})$, and thus the so-called \emph{single-letter} bound $E_R(\chi_\mathcal{E})$ is sufficient to provide the tightest bound (via this method) - this occurs for Pauli channels, including the depolarising channel\footnote{whose Choi matrix is the isotropic state.} $D_{\eta,d}$. However, the extremal Werner state $W_{-1,d}$, $d\geq 3$ was the counterexample used to disprove additivity of the REE, and thus we see that for the extremal Holevo-Werner channel  $\mathcal{W}_{-1,d}$ when $d\geq 3$ the $n$-copy and the regularised REE will provide tighter secret key bounds for this channel than the single-letter version. Unfortunately $E_R^{\infty}\left(W_{-1,d}\right)$ is not known; but we may exploit some other results in order to gain a more accurate picture.\\

For $W_{\eta,d}$ the relative entropy of entanglement \emph{with respect to partial positive transpose states} (RPPT) is known, and so too is the \emph{regularised} RPPT.  We were introduced to this measure in chapter \ref{ch:LitRev}, which is defined:
\begin{equation}
E_P(\rho):=\min_{\sigma\; \mathrm{s.t.}\; \sigma^{\mathrm{T_B}}\geq 0}S(\rho\Vert\sigma).
\end{equation}
\begin{theorem}[\cite{AEJPVM2001}]\label{Audthe}
The regularised RPPT for $W_{\eta,d}$, is given by
\begin{equation}
  E_{P}^{\infty}\left(  W_{\eta,d}\right)
= 
\begin{cases}
0=E_{P}\left(  W_{\eta,d}\right) & \text{ if }\eta\geq0,\\
\frac{1+\eta}{2}\log\left(  1+\eta\right)  +\frac{1-\eta}%
{2}\log\left(  1-\eta\right)=E_{P}\left(  W_{\eta,d}\right)  & \text{ if }-\frac{2}{d}\leq
\eta\leq0,\\
\log\left(  \frac{d+2}{d}\right)  +\frac{1+\eta}{2}\log%
\left(  \frac{d-2}{d+2}\right)\neq E_{P}\left(  W_{\eta,d}\right)  & \text{ if }\eta\leq-\frac{2}{d}.
\end{cases}\label{PPTres}
\end{equation}
\end{theorem}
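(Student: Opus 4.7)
My strategy is to exploit the $U\otimes U$ twirling symmetry of Werner states to reduce the optimisation over all PPT states to one over PPT Werner states, compute the single-letter values in each regime, and then address the regularisation separately, with the subadditive regime requiring an explicit non-product PPT construction on the $n$-copy space.

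First I would observe that the Werner twirl $T_W(\rho)=\int (U\otimes U)\rho(U\otimes U)^\dagger\,dU$ is an LOCC operation that fixes every Werner state and maps arbitrary bipartite states to Werner states; crucially, since $[T_W(\sigma)]^{T_B}=\int (U\otimes U^*)\sigma^{T_B}(U\otimes U^*)^\dagger\,dU$ is a convex combination of positive operators whenever $\sigma^{T_B}\geq 0$, the twirl preserves the PPT property. By LOCC monotonicity of the relative entropy, $S(W_{\eta,d}\Vert\sigma)\geq S(W_{\eta,d}\Vert T_W(\sigma))$ for any PPT $\sigma$, so the single-letter $E_P$ reduces to $\min_{\eta^*\in[0,1]}S(W_{\eta,d}\Vert W_{\eta^*,d})$. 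Using the shared eigenbasis of Werner states,
\[
S(W_{\eta,d}\Vert W_{\eta^*,d})=\frac{1+\eta}{2}\log\frac{1+\eta}{1+\eta^*}+\frac{1-\eta}{2}\log\frac{1-\eta}{1-\eta^*},
\]
whose unconstrained stationary point (obtained by elementary differentiation) is $\eta^*=\eta$. Therefore for $\eta\geq 0$ we get $E_P=0$ (case 1), while for $\eta<0$ the constrained minimum sits at the boundary $\eta^*=0$, producing exactly the binary-entropy-style formula that appears in cases 2 and 3.

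For the regularised measure I would treat the two nontrivial regimes differently. In the range $-2/d\leq\eta\leq 0$, additivity must be established: the tensor product $W_{0,d}^{\otimes n}$ is PPT and gives $E_P^\infty\leq E_P$, while a matching lower bound should follow by applying the $n$-fold twirl $T_W^{\otimes n}$ (which again preserves PPT by the same argument) to reduce to Werner-invariant candidates, then using convexity of relative entropy together with the fact that in this regime no non-product PPT Werner-type state improves on $W_{0,d}^{\otimes n}$. In the subadditive range $\eta\leq -2/d$ I would follow the approach of \cite{AEJPVM2001} and construct an explicit non-product PPT candidate $\sigma_n$ from the Schur-Weyl decomposition of $(\mathbb{C}^d)^{\otimes 2n}$ under the $U^{\otimes n}\otimes U^{\otimes n}$ action, assigning weights proportional to $((d-2)/(d+2))^k$ across the isotypic blocks indexed by the antisymmetric content $k$. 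Evaluating $n^{-1}S(W_{\eta,d}^{\otimes n}\Vert\sigma_n)$ as $n\to\infty$ should yield the claimed value $\log\frac{d+2}{d}+\frac{1+\eta}{2}\log\frac{d-2}{d+2}$.

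The main obstacle will be the matching lower bound in the third regime: showing that no sequence of PPT states in the $n$-copy space can asymptotically beat the above construction. This requires a careful block-diagonal analysis within the isotypic components, combining the invariance of the PPT feasible region under $T_W^{\otimes n}$ with a variational argument inside each block. A useful sanity check, which I have verified by direct algebra, is that the second and third formulas coincide at $\eta=-2/d$; this confirms that the transition between additive and subadditive behaviour happens exactly at the threshold where the extremal PPT direction in the single-copy space ceases to be the best $n$-copy choice, so that genuinely non-product PPT states must be recruited to saturate the regularised bound.
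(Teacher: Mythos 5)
The paper does not prove this theorem: it is cited directly from Audenaert et al.\ \cite{AEJPVM2001}, and the surrounding text only sketches the machinery involved (the Vollbrecht--Werner reduction to the $(U_d^1\otimes U_d^1\otimes\cdots\otimes U_d^n\otimes U_d^n)$-invariant ansatz of Eq.~(\ref{WerCSSD}) and the PPT constraint Eq.~(\ref{PPTcons})). So there is no in-paper proof to compare against; I will evaluate your sketch on its own terms.

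Your single-letter argument is correct and is the standard one: the Werner twirl preserves PPT, is trace-preserving and separable-preserving, and fixes $W_{\eta,d}$; joint convexity (or monotonicity under CPTP maps) of the relative entropy then lets you restrict the infimum to PPT Werner states, which are exactly those with $\eta^*\in[0,1]$, and simultaneous diagonalisability reduces the relative entropy to the classical KL divergence between the spectra, giving $E_P=0$ for $\eta\geq 0$ and the binary-entropy formula (with the boundary point $\eta^*=0$) for $\eta<0$. That part is complete and rigorous. Your identification of $-2/d$ as the transition point, and the algebraic check that the two formulas agree there, are also correct.

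The weakness is in the regularisation, and specifically in the symmetry you invoke. You propose to parametrise $n$-copy invariants via Schur--Weyl for the diagonal action $U^{\otimes n}\otimes U^{\otimes n}$. That is a smaller group than the one actually available: the minimiser can be taken invariant under \emph{independent} $U^i\otimes U^i$ twirls on each of the $n$ copies, which is what the Vollbrecht--Werner argument establishes (Eq.~(\ref{mincond})). Using the full product symmetry collapses the commutant to the $(n+1)$-parameter family (\ref{WerCSSD}), whereas the diagonal $U^{\otimes n}\otimes U^{\otimes n}$ commutant is far larger and would leave you with a much harder optimisation over nontrivial isotypic blocks; in that setting the ``block-diagonal variational argument'' you gesture at in the final paragraph does not simply reduce to the tractable linear-programming problem (\ref{PPTcons}). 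Your proposed weights $\propto((d-2)/(d+2))^k$ are at best a plausible ansatz that you have not tied to the PPT constraint; the actual optimal $\mathbf{x}$ in (\ref{WerCSSD}) must saturate a subset of the $2^n$ inequalities in (\ref{PPTcons}), and deriving the closed form in the third regime requires tracking exactly which ones bind as $n\to\infty$. So the main obstacle you name --- the matching lower bound for $\eta\leq -2/d$ --- is genuinely the hard part, and your sketch as written does not provide enough structure to close it: you need the stronger product-symmetry reduction before the finite-dimensional convex analysis becomes feasible.
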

You can see that for $\eta\in[-2/d,1]$, the RPPT is additive. Moreover, since $E_R(W_{\eta,d})=E_P(W_{\eta,d})$ - that is, the one-copy values coincide - we can conclude that $E_R(W_{\eta,d})$ must also be additive in this region, using the following chain of inequalities.
\begin{equation}
E_R(W_{\eta,d})=E_P(W_{\eta,d})=E_P^\infty(W_{\eta,d})\leq E_R^{\infty}(W_{\eta,d})\leq E_R(W_{\eta,d})
\end{equation}
remembering that for all separable states $\sigma$, $\sigma^{\mathrm{T_B}}\geq 0$, and therefore  minimising over all PPT states will always give a lower/equal value than over separable states.\\

Whilst we are unable to provide the tightest bound for the region $\eta<\foo{-2}{d}$, we are able to improve on the single letter bound - to do this, we exploit a result in \cite{VW2001}, in which it was proved that the state $\sigma$ minimising $E_R^n(W_{\eta,d})$ necessarily satisfies:
\begin{equation}\label{mincond}
\left(U_{d}^{1}\otimes U_{d}^{1}\otimes\ldots U_{d}^{n}\otimes U_{d}^{n}\right)
\sigma  (U_{d}^{1}\otimes U_{d}^{1}\otimes\ldots U_{d}^{n}\otimes
U_{d}^{n})^{\dagger},
\end{equation}
for all unitaries on $\mathcal{H}_d$, where each $U_{d}^{i}\otimes U_{d}^{i}$ acts on the $d\times d$ Hilbert space
occupied by the $i^{\mathrm{th}}$ copy of $W_{\eta,d}$. States which are
invariant under this action may be expressed as the convex combination\cite{AEJPVM2001}:
\begin{align}
\sigma_{\mathbf{x}}^{n} =\;&x_{0}W_{-1,d}^{\otimes n}  +\frac{x_{1}}{n}\left(  W_{-1,d}^{\otimes (n-1)}\otimes W_{1,d}+\ldots
W_{1,d}\otimes W_{-1,d}^{\otimes (n-1)}\right) \nonumber\\
  +\ldots&\frac{x_{k}}{\binom{n}{k}}\left(  W_{-1,d}^{\otimes (n-k)}\otimes W_{1,d}^{\otimes k}\ldots W_{1,d}^{\otimes k}\otimes W_{-1,d}^{\otimes (n-k)}\right)   +\ldots+x_{n}W_{1,d}^{\otimes n},\label{WerCSSD}
\end{align}
where $\mathbf{x}=\left(  x_{0},x_{1},\ldots,x_{n}\right)  ^{T}$ satisfies $x_{i}\geq0$ and $\sum_{i=0}^{n}x_{i}=1$. We also have
an explicit condition on $\mathbf{x}$ to ensure that $\sigma_{\mathbf{x}}^{n}$
is PPT. This is~\cite{AEJPVM2001}
\begin{equation}
\left(
\begin{array}
[c]{cc}%
-1 & 1\\
1 & \frac{d-1}{d+1}%
\end{array}
\right)  ^{\otimes n}\mathbf{x^{\prime}}\geq0, \label{PPTcons}%
\end{equation}
where 
\begin{equation}
\mathbf{x^{\prime}}=\left(  x_{0},\frac{x_{1}}{n}^{\times n}\ldots\frac{x_{k}%
}{\binom{n}{k}}^{\times \binom{n}{k}}\ldots x_{n}\right)  ^{T}.
\end{equation}
It was via this result that theorem \ref{Audthe} was obtained, by taking the limit as $n\rightarrow \infty$. In the paper \cite{VW2001}, it was proved that for $n=2$ the set of states of the form~(\ref{WerCSSD}) which are separable and those which are PPT exactly coincide. Thus to minimise $E_R^2(W_{\eta,d})$, we just need to consider the states 
\begin{equation}
\sigma^2_\mathbf{x}=x_0 W_{-1,d}^{\otimes 2} +\frac{x_1}{2}\left(W_{-1,d}\otimes W_{1,d}+W_{1,d}\otimes W_{-1,d}\right)+\left(1-x_1-x_2\right) W_{1,d}^{\otimes 2}
\end{equation}
which satisfy 
\begin{align}
1-2x_{1}  &  \geq0,\label{condition1}\\
(d-1)-2dx_{0}+(2-d)x_{1}  &  \geq0,\label{condition2}\\
(d-1)^{2}+4dx_{0}+2(d-1)x_{1}  &  \geq0,\label{condition3}
\end{align}
where we have eliminated the dependent variable $x_2$.\\

As Werner states of a given dimension are simultaneously diagonalisable, so too are states of the form $\sigma^n_\mathbf{x}$ - and moreover, the relative entropy between $\sigma^n_\mathbf{y}$ and $\sigma^n_\mathbf{x}$ may be expressed simply as
\begin{equation}\label{relentWercop}
S(\sigma^n_\mathbf{y}\Vert\sigma^n_\mathbf{x})=\sum_{i=0}^n y_i\log\left(\frac{y_i}{x_i}\right).
\end{equation}
 Note that the state $W_{\eta,d}^{\otimes n}$ is invariant under operation given in Eq.~(\ref{mincond}), and so may also be expressed in the form~(\ref{WerCSSD}). Setting $W_{\eta,d}^{\otimes n}=\sigma^n_\mathbf{y}$ gives 
\begin{equation}
y_i=\frac{\binom{n}{i}(1-\eta)^{(n-i)}(1+\eta)^{i}}{2^{n}}.
\end{equation}
Plugging these values into Eq.~(\ref{relentWercop}) for $n=2$ we find that
\begin{align}
E_R^2\left(W_{\eta,d}\right)=\min_{x_0,x_1} &\frac{(1-\eta)^{2}}{8}%
\log\frac{(1-\eta)^{2}}{4x_{0}}+\frac{(1+\eta)(1-\eta)}{4}\log\frac {(1+\eta)(1-\eta)}{2x_{1}}\nonumber\\
+&\frac{(1+\eta)^{2}}{8}\log\frac{(1+\eta)^{2}}{4\left(
1-x_{0}-x_{1}\right)  }
\end{align}
subject to conditions (\ref{condition1})-(\ref{condition3}). For convenience, we shall substitute variables with $p=\fbo{1-\eta}{2}$ - the \emph{symmetric} representation - making our equation
\begin{align}
E_R^2\left(W_{p,d}\right)=\min_{x_0,x_1} &\frac{p^2}{2}%
\log\frac{p^2}{x_{0}}+p(1-p)\log\frac{2p(1-p)}{x_{1}}\nonumber\\
+&\frac{(1-p)^2}{2}\log\frac{(1-p)^2}{\left(
1-x_{0}-x_{1}\right)  }.
\end{align}
It is worth remembering in this representation that $W_{p,d}$ are entangled for $p\geq\foo{1}{2}$, and $E_R(W_{p,d})$ is subadditive for $p>\foo{1}{2}+\foo{1}{d}$.\\

In order to perform this minimisation, we shall take advantage of the Karush–Kuhn–Tucker (KKT) conditions \cite{BV2004} - necessary conditions for a solution of a non-linear optimisation to be optimal, which generalise the concept of Lagrangian multipliers.
\begin{theorem}
Consider an optimisation problem of the following form:\\
\begin{center}
Minimise $f(\mathbf{x})$, subject to $g_i(\mathbf{x})\leq 0$.\\
\end{center}
If $\mathbf{x}^*$ is a local minima and $f(\mathbf{x})$ is continuously differentiable at $\mathbf{x}^*$, then the following must be true: there exist $\lambda_i$, called \emph{KKT multipliers} such that
\begin{equation}\label{mainKKT}
-\grad f(\mathbf{x}^*)=\sum_i \lambda_i \grad g_i(\mathbf{x}^*),
\end{equation}
holds, along with the following conditions:
\begin{equation}
g_i(\mathbf{x}^*)\leq 0\; \forall i,
\end{equation}
\begin{equation}
\lambda_i \geq 0 \; \forall i,
\end{equation}
\begin{equation}
\lambda_ig_i(\mathbf{x}^*)=0\;\forall i.\label{slackness}
\end{equation}
Moreover, if $f(\mathbf{x}),g_i(\mathbf{x})$ are convex, then such a point $\mathbf{x}^*$ satisfying these conditions is the global minima.
\end{theorem}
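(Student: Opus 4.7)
The plan is to prove the KKT theorem in two halves: first the necessity of the stationarity, primal feasibility, dual feasibility, and complementary slackness conditions at a local minimum, and then the sufficiency of these conditions for a global minimum under convexity. I would treat the two halves quite differently: the necessity part is a geometric/analytic statement about tangent cones at $\mathbf{x}^{\ast}$, while the sufficiency part is a short direct calculation from the first-order characterisation of convex functions.

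For the necessity direction, my first step would be to split the constraints into active and inactive at $\mathbf{x}^{\ast}$, letting $A = \{i : g_i(\mathbf{x}^{\ast}) = 0\}$. For $i \notin A$ I would immediately set $\lambda_i = 0$, which trivially enforces complementary slackness, dual feasibility and primal feasibility on the inactive side. The real content is restricted to $i \in A$. I would then show that for any direction $\mathbf{d}$ with $\nabla g_i(\mathbf{x}^{\ast}) \cdot \mathbf{d} \leq 0$ for all $i \in A$, one must have $\nabla f(\mathbf{x}^{\ast}) \cdot \mathbf{d} \geq 0$, because otherwise a short curve tangent to $\mathbf{d}$ stays feasible to first order (after correcting for curvature via the implicit function theorem, invoking an appropriate constraint qualification) while decreasing $f$, contradicting local optimality of $\mathbf{x}^{\ast}$. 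Once this geometric statement is in hand, I would apply Farkas' lemma (or equivalently the Minkowski--Weyl characterisation of finitely generated convex cones) to the system
\begin{equation}
\{\nabla g_i(\mathbf{x}^{\ast}) \cdot \mathbf{d} \leq 0,\; \forall i \in A\} \;\Rightarrow\; -\nabla f(\mathbf{x}^{\ast}) \cdot \mathbf{d} \leq 0,
\end{equation}
which immediately yields non-negative scalars $\{\lambda_i\}_{i \in A}$ producing the stationarity equation~(\ref{mainKKT}).

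For the sufficiency direction, with $f$ and all $g_i$ convex, my plan is to take any other feasible point $\mathbf{y}$ and chain two first-order convexity inequalities. Convexity of $f$ gives
\begin{equation}
f(\mathbf{y}) \geq f(\mathbf{x}^{\ast}) + \nabla f(\mathbf{x}^{\ast}) \cdot (\mathbf{y} - \mathbf{x}^{\ast}),
\end{equation}
and substituting the stationarity equation yields $f(\mathbf{y}) - f(\mathbf{x}^{\ast}) \geq -\sum_i \lambda_i \nabla g_i(\mathbf{x}^{\ast}) \cdot (\mathbf{y} - \mathbf{x}^{\ast})$. Convexity of each $g_i$ together with $g_i(\mathbf{y}) \leq 0$ gives $\nabla g_i(\mathbf{x}^{\ast}) \cdot (\mathbf{y} - \mathbf{x}^{\ast}) \leq g_i(\mathbf{y}) - g_i(\mathbf{x}^{\ast})$. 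For $i \in A$ this bound is $\leq 0$ and $\lambda_i \geq 0$, while for $i \notin A$ the multiplier vanishes by complementary slackness~(\ref{slackness}); either way every term in the sum is non-positive, so $f(\mathbf{y}) \geq f(\mathbf{x}^{\ast})$, proving global optimality.

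The main obstacle I expect is in the necessity half: turning the ``infinitesimally feasible direction'' idea into a rigorous statement requires a constraint qualification (LICQ, Mangasarian--Fromovitz, or Slater, depending on taste) so that the linearised cone actually agrees with the true tangent cone of the feasible set. The statement as given in the excerpt does not name one, so in a careful write-up I would either add such a hypothesis or note that one needs to restrict to a regular $\mathbf{x}^{\ast}$; without it, there are well-known counter-examples (e.g.\ cusp-like feasible sets) where the KKT equation~(\ref{mainKKT}) fails even though $\mathbf{x}^{\ast}$ is a local minimum. The convex sufficiency half, by contrast, is clean and requires no such qualification, which is why the whole argument is ultimately applied in this paper to a convex problem where KKT becomes an exact characterisation of optimality.
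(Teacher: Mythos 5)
The paper does not itself prove this theorem; it is quoted and cited to Boyd and Vandenberghe, so there is no in-paper argument to compare against. Your outline is the standard textbook proof --- active/inactive constraint split, linearised feasible cone plus Farkas' lemma for necessity, and chained first-order convexity inequalities for sufficiency --- and both halves are correct as described. Your observation about the missing constraint qualification is accurate and worth flagging: the necessity direction is genuinely false as stated without some regularity hypothesis. A standard counter-example is minimising $f(x,y)=x$ subject to $g_1(x,y)=y-x^3\leq 0$ and $g_2(x,y)=-y\leq 0$: the unique global minimum is the origin, but there $\nabla g_1=(0,1)$ and $\nabla g_2=(0,-1)$ both lie along the $y$-axis while $\nabla f=(1,0)$ points along $x$, so no multipliers can satisfy equation~(\ref{mainKKT}). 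That said, the only application in this paper minimises a convex $f$ over the affine constraints $g_1,g_2$ of the two-copy REE problem; affine constraints satisfy every standard qualification automatically, and in any case it is the sufficiency direction --- which, as you correctly note, needs no qualification --- that actually certifies the global minimum in the ensuing computation.
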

In our case,
\begin{equation}
f(\mathbf{x})=\frac{p^2}{2}%
\log\frac{p^2}{x_{0}}+p(1-p)\log\frac{2p(1-p)}{x_{1}}
+\frac{(1-p)^2}{2}\log\frac{(1-p)^2}{\left(
1-x_{0}-x_{1}\right)  },
\end{equation}
and 
\begin{align} 
g_1(\mathbf{x})&=2x_1-1, \\
 g_2(\mathbf{x})&=2dx_0+(d-2)x_1-(d-1).
\end{align} 
$f(\mathbf{x})$ is not well defined outside of the region $X=\left\{\mathbf{x}\mmid x_0\geq 0 ,x_1\geq 0, x_0+x_1 \leq 1\right\}$, and thus the minimisation is limited to this region implicitly, without the need for explicit $g_i(\mathbf{x})$ - it has also enabled us to remove condition (\ref{condition3}), since it is satisfied by all $x_0,x_1\geq 0$. $f(\mathbf{x})$ is continuously differentiable everywhere in $X$ except $x_0= 0$ when $p\neq 0$, $x_1=0$ when $p\not\in\{0,1\}$, or $1-x_0-x_1=0$ when $p\neq 1$ - in all cases it takes value $\infty$. Thus the minimum value cannot occur at such points, and we can be confident any $\mathbf{x}^*$ found is the true global minima by the sufficiency condition. We may employ the sufficiency condition since $g_1(\mathbf{x})$ and $g_2(\mathbf{x})$ are linear, and thus convex, whilst the quantum relative entropy is jointly convex in both terms - since we have fixed the first input (dependent on $p$), the function $f(\mathbf{x})$ is convex in $\mathbf{x}$, since there is a one-to-one mapping between states of the form (\ref{WerCSSD}) and $\mathbf{x}$.\\

Substituting these functions into Eq.~(\ref{mainKKT}) gives us two equations
\begin{align}
\frac{p^2}{2 x_0 \ln (2)}-\frac{(1-p)^2}{2 \ln (2) (1-x_0-x_1)}&=2 d \lambda_2,\label{KKT1}\\
 \frac{(1-p) p}{x_1 \ln (2)}-\frac{(1-p)^2}{2 \ln (2) (1-x_0-x_1)}&=2 \lambda_1+(d-2) \lambda_2.\label{KKT2}
\end{align}
whilst the conditions 
\begin{align}
\lambda_1\left(2x_1-1\right)&=0 & \lambda_2\left(2dx_0+(d-2)x_1-(d-1)\right)&=0
\end{align}
give us four possible cases.\\

\textbf{Case 1:} $\lambda_1=\lambda_2=0$.\\
Substituting these into Eq.~(\ref{KKT1})-(\ref{KKT2}), we find that 
\begin{align}
x_0&=p^2, & x_1&=2p(1-p).
\end{align}
Using these values we obtain the condition that $g_2(\mathbf{x})=(2p-1)(2p+d-1)\leq 0$; this only holds for $ p\in[\fbo{1-d}{2},\foo{1}{2}]$, which provides a valid $x_1$ value when $p\in[0,\foo{1}{2}]$. This solution gives $x_0=y_0,\;x_1=y_1$, and thus corresponds to the minimal solution $f(\mathbf{x})=0$ for when $W_{p,d}$ is separable - exactly the region $p\in[0,\foo{1}{2}]$.\\

\textbf{Case 2:} $\lambda_1=0,x_1=\fbb{d-1-2dx_0}{d-2}$.\\
In this case, the solutions for our two unknown variables are much more complicated, and are twofold:
\begin{align}
\lambda_{2,\alpha}&=\frac{a-\sqrt{b}}{4d(d+2)} & x_{0,\alpha}&=\frac{a+2(d-1)+\sqrt{b}}{4d(d+2)},\\
\lambda_{2,\beta}&=\frac{a+\sqrt{b}}{4d(d+2)} & x_{0,\beta}&=\frac{a+2(d-1)-\sqrt{b}}{4d(d+2)}.
\end{align}
with
\begin{align}
a&=2 d^2 p^2-2 d^2 p+d^2+2 d p-d-4 p^2+4 p,\\
b&=\left(d^2 \left(2 p^2-2 p+1\right)+2 d p+d-4 p^2+4 p-2\right)^2-8 (d-1) d (d+2) p^2.
\end{align}
Although complicated, we may still analyse whether they satisfy the KKT conditions. $\lambda_{2,\alpha}\geq 0$ iff $p\geq\foo{1}{2}$, whilst $\lambda_{2,\beta}\geq 0$ for all values of $p$. However $g_2(\mathbf{x}_\beta)>0$ (and thus invalid) for all values of $p$; but $g_2(\mathbf{x}_\alpha)\leq 0$ in the regions $p\in[0,\foo{1}{2}]$ and $p\in[\foo{1}{2}+\foo{1}{d},1]$ - from this, we may conclude that the $\alpha$ solution is the global minima for the region $p\in[\foo{1}{2}+\foo{1}{d},1]$.\\

\textbf{Case 3:} $x_1=\foo{1}{2},\lambda_2=0$.\\
Solving Eq.~(\ref{KKT1}) and Eq.~(\ref{KKT2}) we find that 
\begin{align}
x_0&=\frac{p^2}{4 p^2-4 p+2} & \lambda_1&=\frac{-4 p^2+4 p-1}{2 \ln (2)}.
\end{align}
However, $\fbb{-4 p^2+4 p-1}{2 \ln (2)} < 0$ for all values of $p$ except $p=\foo{1}{2}$, and thus is not valid as a minima except for this value. Moreover, for $p=\foo{1}{2}$ we have $x_1=\foo{1}{2}, x_2=\foo{1}{4}$ - so this solution is a special instance of the case below.\\

\textbf{Case 4:} $x_1=\foo{1}{2}, x_2=\foo{1}{4}$.\\
This time we obtain solutions for the two KKT multipliers:
\begin{align}
\lambda_1&=-\frac{4 d p-d+4 p-2-4 d p^2}{2 d \ln (2)} & \lambda_2&=\frac{2p-1}{d \ln(2)}.
\end{align}
Clearly our choices of $x_i$ are feasible, and satisfy the condition (\ref{slackness}) - thus it remains only to check when the values of $\lambda_i$ are non-negative. For $\lambda_2$ this occurs when $p \geq 1/2$, whilst for $\lambda_1$ this occurs only for $p\in[1/2,1/2+1/d]$. This should not surprise us, as the choice of $x_0,x_1$ correspond to the separable state $W_{p=\foo{1}{2},d}\otimes W_{p=\foo{1}{2},d}$, giving the $E_R^2=E_R$ for the additive region.\\

%\footnote{We did not explicitly include the constraints $x_0,x_1\geq 0$ in our KKT analysis - however our obtained $x_0$, $x_1$ are positive for $p\in[0,1]$}
For the subadditive region $p\in(1/2+1/d,1]$ our closest separable state is that given by $\mathbf{x}_\alpha$ - which coincides with the separable state $W_{p=\foo{1}{2},d}\otimes W_{p=\foo{1}{2},d}$ at the subadditivity boundary $p=\foo{1}{2}+\foo{1}{d}$. We can convert this state back into the expectation representation, in order to gain our result for $\eta<-\foo{2}{d}$.
%Looking at the various cases we see only one possible candidate satisfies \emph{all} the necessary conditions for the subadditive region, is thus the only local minima -  and thus must too be the global minimum of the function. We convert this result back into expectation representation to give that, for $\eta<-\frac{-2}{d}$.
\begin{align}
E_R^2\left(W_{\eta,d}\right)=&\frac{(1-\eta)^{2}}{8}%
\log\frac{(1-\eta)^{2}}{4x_{0}}+\frac{(1+\eta)(1-\eta)}{4}\log\frac{(1+\eta)(1-\eta)}{2x_{1}}\nonumber\\
+&\frac{(1+\eta)^{2}}{8}\log\frac{(1+\eta)^{2}}{4\left(
1-x_{0}-x_{1}\right)  }
\end{align}
with 
\begin{align}
x_0&=\frac{c+\sqrt{d}}{8d(d+2)} & x_1 &= \frac{4(d+2)(d-1)-c-\sqrt{d}}{4(d^2-4)},
\end{align}
where 
\begin{align}
c&=d^2 \left(\eta^2+1\right)-2 d (\eta-2)-2 \eta^2-2,\\
d&=d^4 \left(\eta^2+1\right)^2-4 d^3 \eta \left(\eta^2-3\right)-4 d^2 \left(\eta^4+3 \eta^2-1\right)+8 d \eta \left(\eta^2-3\right)+4 \left(\eta^2+1\right)^2.
\end{align}
We plot this result for varying dimension in figure \ref{ER2keybounds}; by consequence of theorem \ref{PLOB4}, it provides a tighter bound of the secret key rate of $\mathcal{W}_{\eta,d}$ than the single-copy REE.

\begin{figure}
\begin{center}
\includegraphics{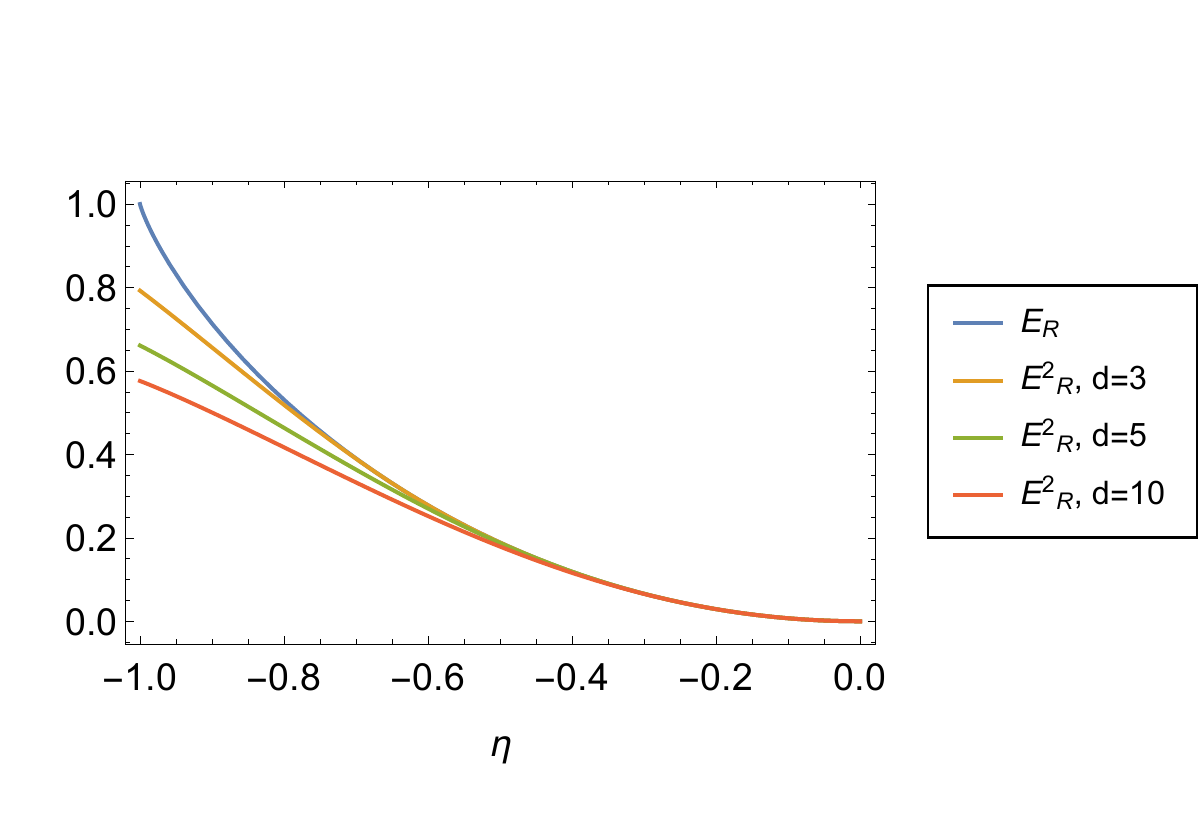}
\caption[Secret key bounds of $\mathcal{W}_{\eta,d}$, using the subadditive REE]{A comparison of the one-copy and two-copy Werner states' REE, for varying dimensions $d>2$. These provide upper bounds for the secret key capacity of $\mathcal{W}_{\eta,d}$.}\label{ER2keybounds}
\end{center}
\end{figure}

\subsection{Squashed Entanglement as a Secret Key Bound}
\label{ch4:fourth:a}
In chapter \ref{ch:simul}, we referenced how there were \emph{two} entanglement measures for which we may upper bound the secret key capacity of a $\sigma$-simulable channel; The first is the one just discussed, the (regularised) REE. As the REE is subadditive, it does not provide the best bound for Holevo-Werner channels - the one-copy bound is not tight, whilst for $n>2$ the $n$-copy value is not known. Thus we look to the other entanglement measure to provide a better bound, \emph{squashed entanglement}\cite{CW2003}. We recap its definition here.
\begin{definition}
The squashed entanglement of a state $\rho_{AB}\in \mathcal{H}_A\otimes \mathcal{H}_B$ is given by:
\begin{equation}
E_{\mathrm{sq}}\left(\rho_{AB}\right)=\frac{1}{2}\min_{\rho'_{ABE}\in\Omega_{AB}} S(A:B|E)
\end{equation}
where $\Omega_{AB}$ is the set of density matrices $\rho'_{ABE}$ such that $\mathrm{Tr}_E[\rho'_{ABE}]=\rho_{AB}$ - the size of the ancilla space $\mathcal{H}_E$ is generally unbounded. The value $S(A:B|E)$ is the \emph{quantum conditional mutual information}, given by
\begin{equation}
 S(A:B|E):=S(\rho'_{AE})+S(\rho'_{BE})-S(\rho'_E)-S(\rho'_{ABE})
 \end{equation}
where $S(\rho'_{AE})$ denotes the von Neumann entropy of the reduced state $\rho'_{AE}=\mathrm{Tr}_B[\rho'_{ABE}]$.\\
\end{definition}
This entanglement measure is strictly \emph{additive} over tensor products - thus we avoid the need for regularisation, which is our difficulty with the REE. Unfortunately, there is a trade-off for this simplification - the optimisation over $\Omega_{AB}$ is extremely difficult\footnote{NP hard in fact \cite{H2014}.} - this is because of the unbounded ancilla dimensions $\mathcal{H}_E$. For Holevo-Werner states, we may upper bound the secret key capacity by the squashed entanglement of its Choi matrix $W_{\eta,d}$ - which we then upper bound by choosing a particular $\rho'_{ABE}$ - thus we obtain an analytical bound at the cost of having potentially chosen the non-minimising extension.\\

In order to upper bound $E_{\mathrm{sq}}(W_{\eta,d})$, we choose the extension where $\rho'_{ABE}$ is the \emph{purification} of $W_{\eta,d}$ - the pure $\mathcal{H}_{d^4}$ state $\tilde{W}_{\eta,d}\in\mathcal{H}_A\otimes\mathcal{H}_B\otimes \mathcal{H}_{\tilde{A}}\otimes\mathcal{H}_{\tilde{B}}$ which satisfies $\mathrm{Tr}_{\tilde{A}\tilde{B}}[\tilde{W}_{\eta,d}]=W_{\eta,d}$.\\

As $\tilde{W}_{\eta,d}$ is pure, we have that $S(\rho'_{ABE})=S(\tilde{W}_{\eta,d})=0$. Moreover, $\rho'_E=W_{\eta,d}$, and so
\begin{equation}
S(\rho'_E)=S(W_{\eta,d})=-\sum_i p_i\log p_i = -\left(\frac{1-\eta}{2}\log\left(\frac{1-\eta}{d(d-1)}\right)+\frac{1+\eta}{2}\log\left(\frac{1+\eta}{d(d+1)}\right)\right).
\end{equation}
Finally, we compute $S(\rho'_{AE})=S(\rho'_{BE})=\log d$. This means we have 
\begin{equation}
E_{\mathrm{sq}}(W_{\eta,d})\leq \tilde{E}_{\mathrm{sq}}(\eta):= \log d +\frac{1-\eta}{4}\log\left(\frac{1-\eta}{d(d-1)}\right)+\frac{1+\eta}{4}\log\left(\frac{1+\eta}{d(d+1)}\right).
\end{equation}
 We present a comparison of this bound in various dimensions in figure \ref{squashedgraph}.
\begin{figure}
\begin{center}
\includegraphics{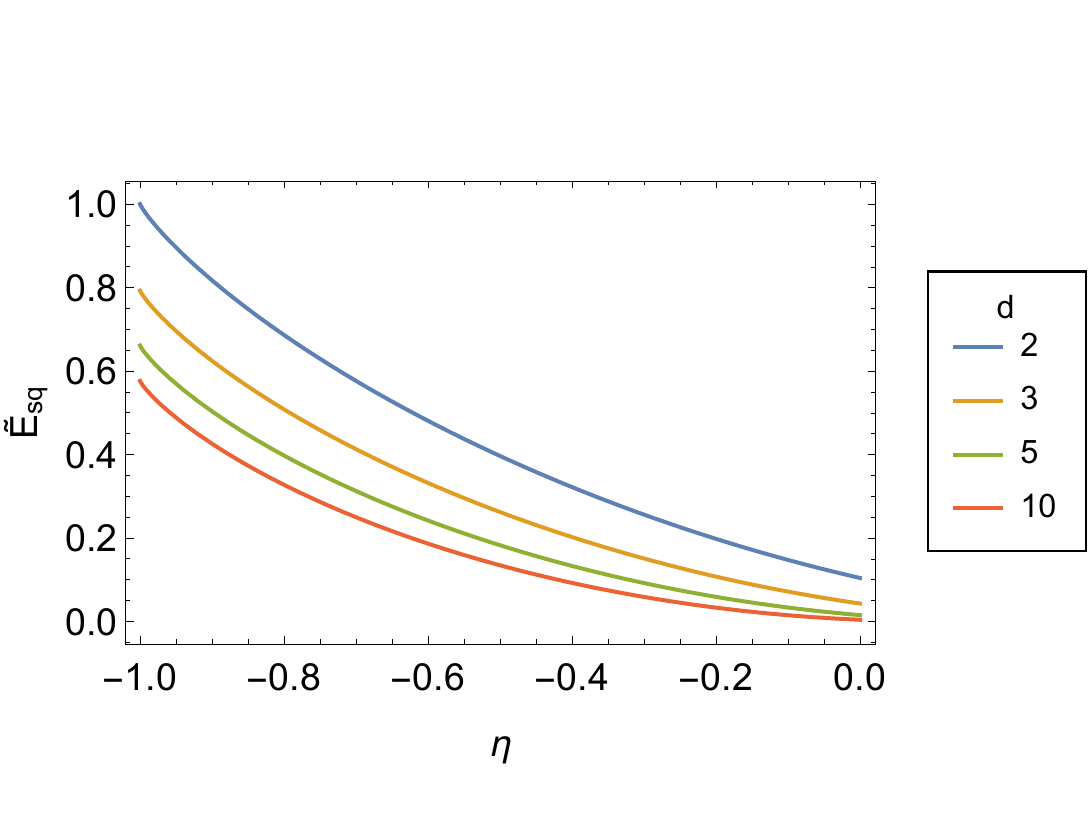}
\caption[An upper bound on $E_{\mathrm{sq}}\left(W_{\eta,d}\right)$]{A comparison of $\tilde{E}_{\mathrm{sq}}\left(W_{\eta,d}\right)$ for entangled Werner states of varying dimension - note the failure to reach 0 at $\eta=0$.}\label{squashedgraph}
\end{center}
\end{figure}
We can see from this formula we have certainly not chosen the optimal $\rho'_{ABE}$, as for $\eta=0$ we have $\tilde{E}_{\mathrm{sq}}(\eta)=\log\left(\fob{d^2}{d^2-1}\right)/4$, whereas the optimal choice of $\rho'_{ABE}$ would give $E_{\mathrm{sq}}\left(W_{0,d}\right)=0$, since this state is separable. Nevertheless , $\tilde{E}_{\mathrm{sq}}(\eta)$ provides a valid, easy to calculate upper bound for the squashed entanglement.\\

There is one final bound to secret key capacity we may exploit for Werner states, one which also relies on the squashed entanglement.  This exploits the \emph{convexity} of the squashed entanglement \cite{C2006}. We first note that
\begin{align}
W_{\eta,d}&=\frac{(d-\eta)\mathrm{I}+(d\eta-1)\mathbb{F}}{d^3-d}
=(1+\eta)\frac{d\mathrm{I}-\mathbb{F}}{d^3-d}+(-\eta)\frac{(d+1)\mathrm{I}+(-d-1)\mathbb{F}}{d^3-d}\nonumber\\&=(1+\eta)W_{0,d}+(-\eta)W_{-1,d}.
\end{align}
Therefore for entangled states $\eta\in[-1,0)$ we may write $W_{\eta,d}$ as a convex combination of the separable state $W_{0,d}$ and the extremal entangled state $W_{-1,d}$. This means we may write
\begin{equation}
E_{\mathrm{sq}}\left(W_{\eta,d}\right)\leq (1+\eta)E_{\mathrm{sq}}\left(W_{0,d}\right)-\eta E_{\mathrm{sq}}\left(W_{-1,d}\right)=-\eta E_{\mathrm{sq}}\left(W_{-1,d}\right)
\end{equation}
since $E_{\mathrm{sq}}\left(W_{0,d}\right)=0$.
In the paper \cite{CSW2012}, an analysis of the squashed entanglement for the extremal state $W_{-1,d}$ was performed, and the following bound obtained:
\begin{equation}
E_{\mathrm{sq}}\left(W_{-1,d}\right)\leq \begin{cases}
\log \left(\frac{d+2}{d}\right) & \text{if } d \text{ even},\\
\frac{1}{2}\log \left(\frac{d+3}{d-1}\right) & \text{if } d \text{ odd}
\end{cases}
\end{equation}
This gives us the following useful bound on the squashed entanglement
\begin{equation}
E_{\mathrm{sq}}\left(W_{\eta,d}\right)\leq E^*_{\mathrm{sq}}\left(W_{\eta,d}\right):=\begin{cases}
-\eta\log \left(\frac{d+2}{d}\right) & \text{if } d \text{ even},\\
-\frac{\eta}{2}\log \left(\frac{d+3}{d-1}\right) & \text{if } d \text{ odd}.
\end{cases}
\end{equation}
\begin{figure}
\begin{center}
\subfloat[$d=4$\label{twosquashd4}]{
\includegraphics[width=.45\linewidth]{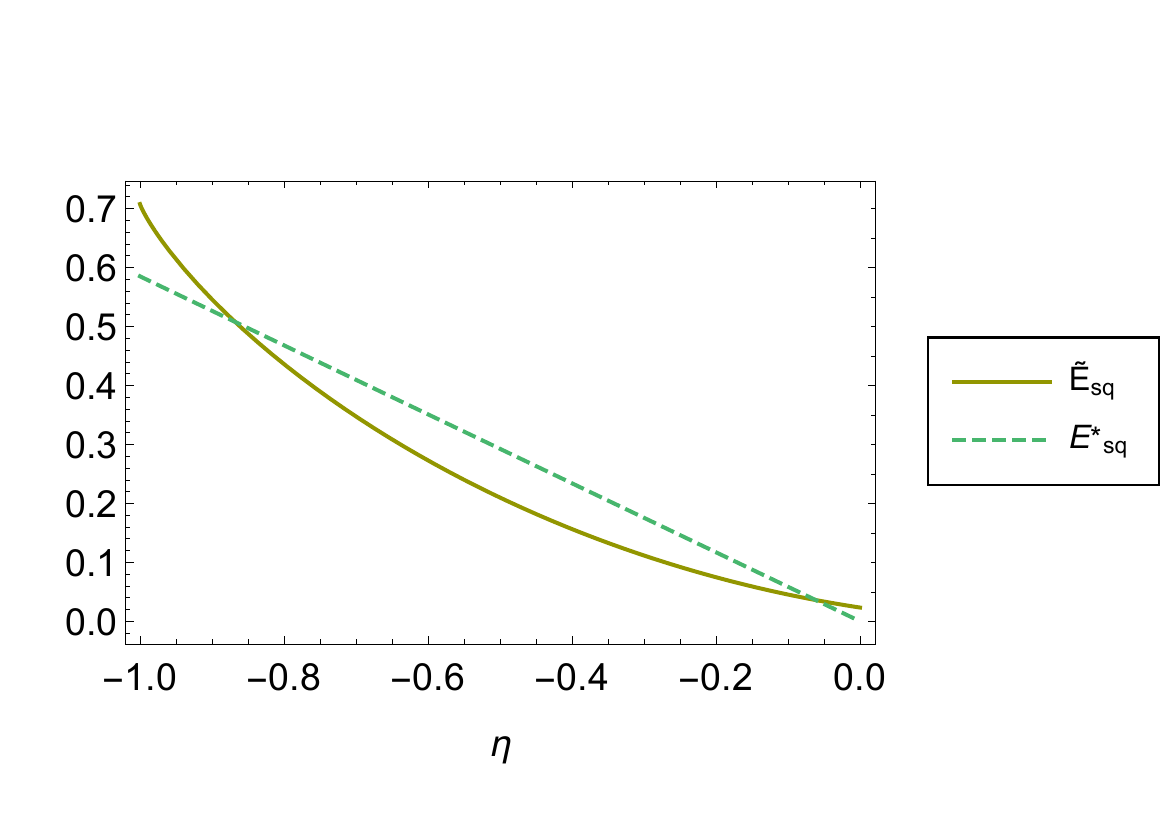}}
\subfloat[$d=5$\label{twosquashd5}]{
\includegraphics[width=.45\linewidth]{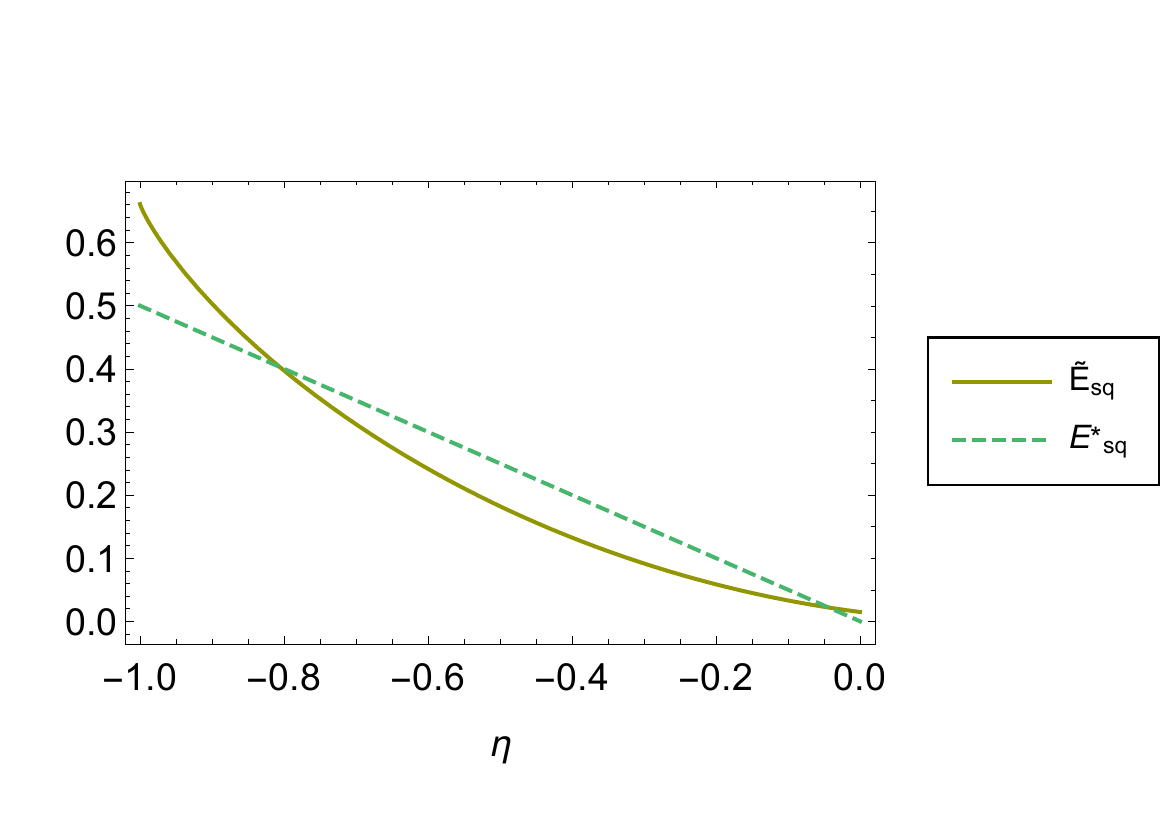}}
\caption[A comparison of $E_\mathrm{sq}\left(W_{\eta,d}\right)$ bounds]{A comparison of our two upper bounds for $E_\mathrm{sq}\left(W_{\eta,d}\right)$, for dimensions $d=4,5$. Which bound is tighter is dependent on the value of $\eta$.}
\label{twosquashcomp}
\end{center}
\end{figure}
\subsection{Comparing bounds, and use of the RPPT}

%Since both the REE and squashed entanglement are so difficult to find, a reasonable question to ask if whether one may instead use the RPPT to upper bound key rate - since we know the regularised value.  Moreover, it has same properties as squashed entanglement, namely that it is monotonous under LOCC and 
%\begin{equation}
%||\lambda_n(\chi_{\mathcal{E}}^n)-\gamma_m||_1\leq \epsilon \Rightarrow ||E_{\mathrm{sq}}\left(\lambda_n(\chi_\mathcal{E})\right)-E_{\mathrm{sq}}\left(\gamma_m\right)||_1 \leq \frac{\epsilon}{2} n \log d +\left(1+\frac{\epsilon}{2}\right)H_2(\frac{\epsilon}{2+\epsilon})
%\end{equation} - another gap which tends to 0 as $\epsilon\rightarrow 0$. This allows us to write 
%\begin{equation}
%nE_{PPT}\left(\chi_{\mathcal{E}}\right)\geq E_{PPT}\left(\chi_{\mathcal{E}}^n\right)\geq E_{PPT}\left(\gamma_m\right)- \left(\frac{\epsilon}{2} n \log d +\left(1+\frac{\epsilon}{2}\right)H_2(\frac{\epsilon}{2+\epsilon})\right).
%\end{equation}
%This is where the technique breaks down, since $E_{PPT}\left(\gamma_m\right)\ngeq m$ in general - this is due to the possibility to create\cite{HHHO2005} secret key bits from \emph{bound entangled states}, states with a partial positive transpose and thus $E_{PPT}=0$. However, If we instead replace $\gamma_m$ with $\ket{\Phi}_2^{\otimes m}$ - we now obtain the definition for the two way quantum capacity $Q_2$ - and since $E_{PPT}\left(\ket{\Phi}_2^{\otimes m}\right)=m$, we may indeed bound $Q_2(\mathcal{E})\leq E_{PPT}$ or even more strongly $Q_2(\mathcal{E})\leq E^\infty_{PPT}$. \\

For Werner states the regularised RPPT is exactly known; and although we cannot use $E^{\infty}_P(W_{\eta,d})$ as an upper bound to the secret key capacity, we can use it as an upper bound\footnote{See chapter \ref{ch:simul}.} for $Q_2$. As for teleportation covariant channels we have that $Q_2(\mathcal{E})=E_D(\chi_{\mathcal{E}})$, we can compare this bound to the logarithmic negativity\footnote{Which is an upper bound on distillable entanglement for states.} of $W_{\eta,d}$, which we do in figure \ref{staterates}. In figure \ref{allcomp}, we present a comparison of the four considered secret key capacity bounds for $\mathcal{W}_{\eta,d}$, along with the regularised RPPT which bounds the two-way quantum capacity. In figure \ref{onlycomp}, we show only the tightest bound across the whole entangled range $\eta<0$. Each of the four considered bounds is the tightest for some region of $\eta$, and therefore in order to obtain the most accurate picture of a channel's capacity we should not limit ourselves to one entanglement measure.
\begin{figure}
\begin{center}
\includegraphics{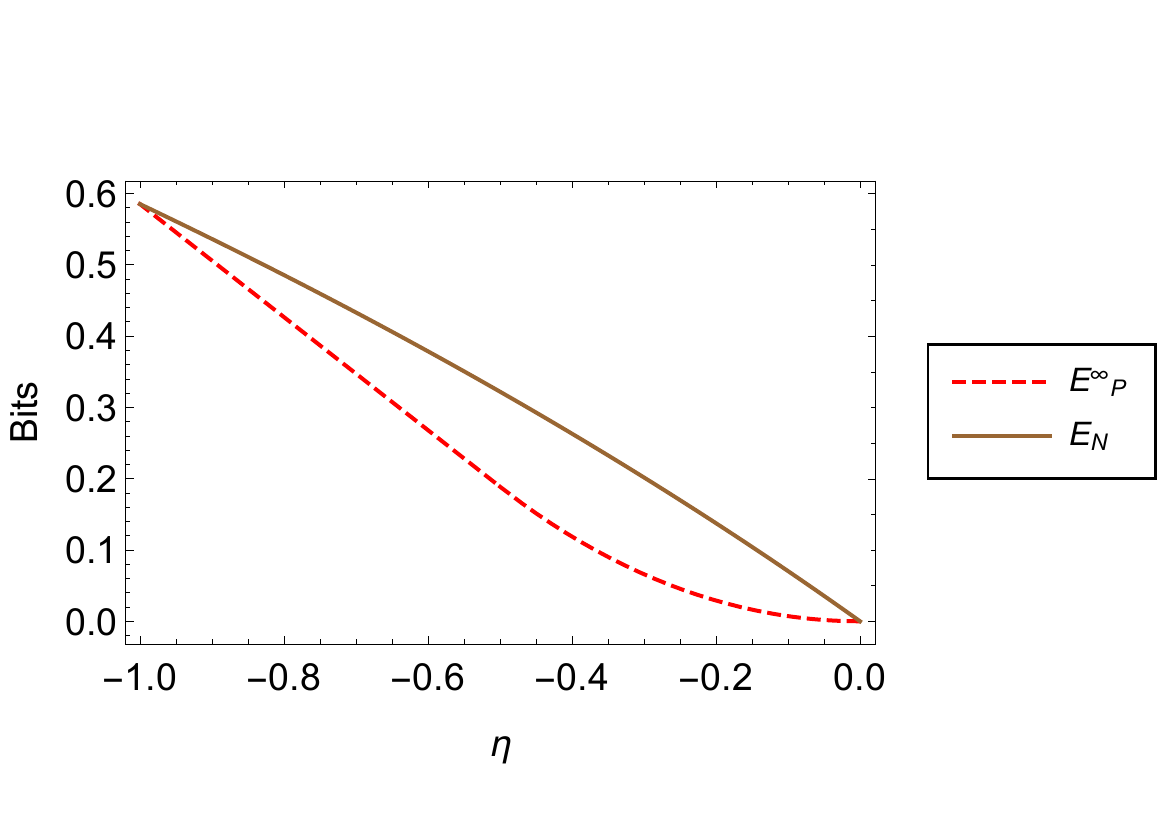}
\caption[Comparison of the bounds on  $Q_2(\mathcal{W}_{\eta,4})$]{We can see that the regularised RPPT provides a tighter bound for $Q_2(\mathcal{W}_{\eta,4})$ for all entangled $\eta$ than the logarithmic negativity.}\label{staterates}
\end{center}
\end{figure}
\begin{figure}
\begin{center}
\includegraphics{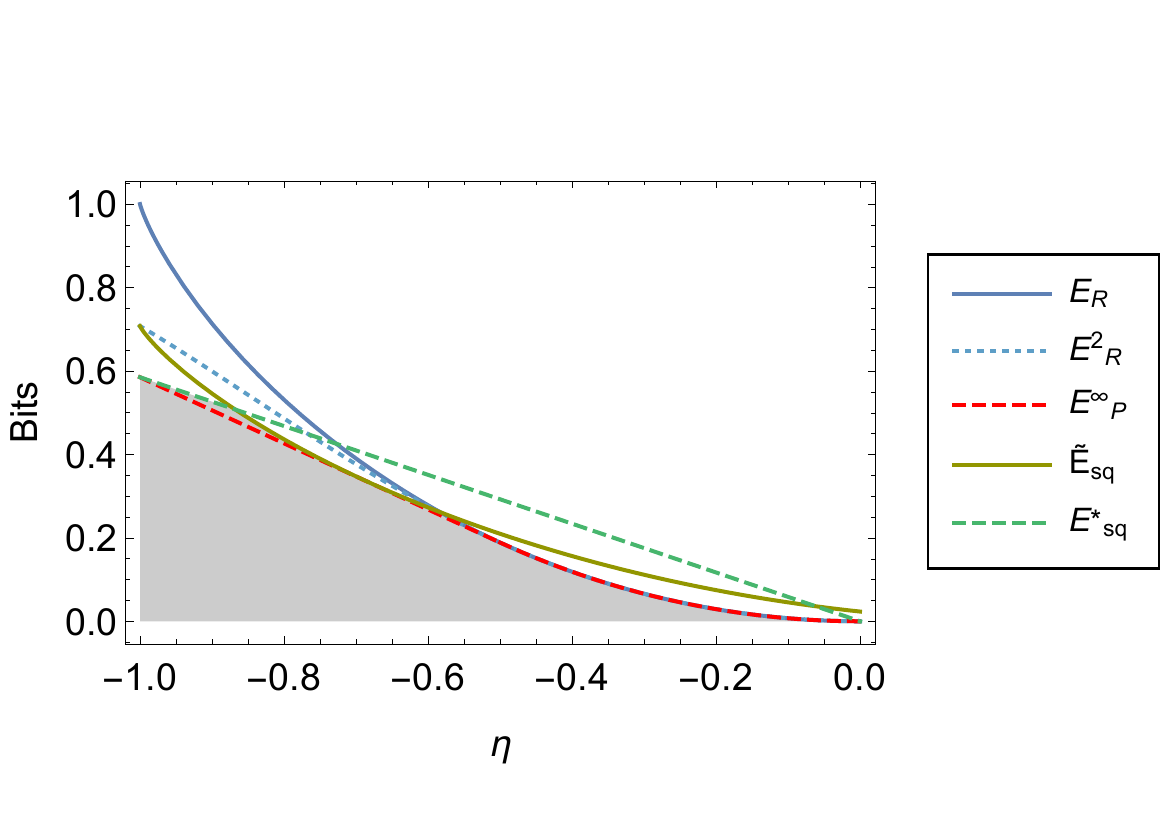}
\caption[Comparison of the bounds on  $K(\mathcal{W}_{\eta,4})$]{A comparison of the various upper bounds for the secret key capacity $K(\mathcal{W}_{\eta,4})$ - the allowable range is given by the grey shading.}\label{allcomp}
\end{center}
\end{figure}
\begin{figure}
\begin{center}
\includegraphics{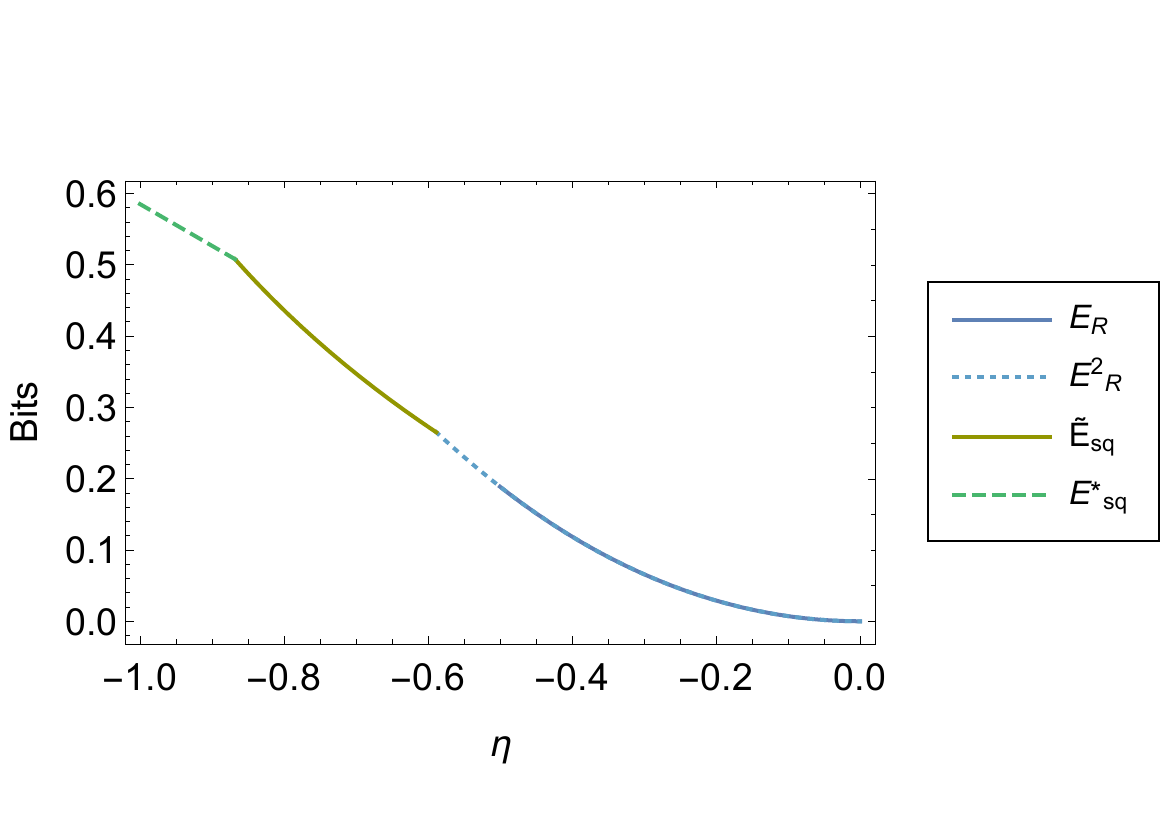}
\caption[Our best bound on $K(\mathcal{W}_{\eta,4})$]{The tightest bound for  $K(\mathcal{W}_{\eta,4})$ - note how all four entanglement measures provide the best bound for certain regions of $\eta$.}\label{onlycomp}
\end{center}
\end{figure}
% ==========================================================================================================
% ==========================================================================================================

\section{Discussion and Further Directions}
\label{ch4:summary}
The first part of this chapter applied some cutting edge results in metrology to the unusual Holevo-Werner channels, which we were able to exploit by proving these channels are teleportation covariant - one of the few non-Pauli channels\footnote{When $d>2$.} shown to have this property. As a consequence of this, the optimal parameter estimation may only achieve the shot noise limit. We also obtained the surprising result that entanglement-breaking is not the determining factor in optimal estimation for these channels - although for teleportation covariant states the optimal precision may be achieved\cite{PL2017} by sending half of a maximally entangled state through the channel, the lack of dependence on entanglement-breaking implies the same precision may be achievable with unentangled states also. We also found that the choice of parametrisation is important for determining the role of the channel's dimension - for the expectation representation it plays no role, whilst for the $\alpha$-representation different values change their difficulty as the channel's dimension increases.

We then progressed to the problem of binary discrimination of Holevo-Werner channels, providing for the first time analytic forms for the true quantum Chernoff bound for both Holevo-Werner and depolarising channels, allowing for more accurate bounding of their binary discrimination. \\

%The first part of this chapter applied some cutting edge metrology results to the unusual Holevo-Werner channels, one of the few non-Pauli\footnote{when $d>2$} channels shown to be teleportation covariant. This allowed us to exploit the results of \cite{PL2017}, in order to show that adaptive estimation can only achieve the shot noise limit. We also obtained the surprising result that entanglement-breaking is not the determining factor in optimal estimation for these channels, and that, depending on the parametrization used, the channel's dimension may either play a vital role or be completely irrelevant. \\

%We then progressed to the problem of binary discrimination of HW channels - we gave analytic results for the true quantum Chernofff bound for both Holevo-Werner and Depolarising channels, allowing for more accurate bounding of their binary discrimination. 

After the discussion on metrology, we looked at bounding the capacities of Holevo-Werner channels. The ability to bound the secret key distillation rate of a state by various entanglement measures is a useful tool, and mirrors the use of entanglement measures to bound other operational measures such as the distillable entanglement and entanglement cost. The extension of this to \emph{channels} in \cite{PLOB2017} by use of LOCC simulation was a huge step; in the cases where the channel is teleportation covariant, it seemed to work particularly well, as for many channels the upper bound provided by the  single-copy REE coincided with the lower bound provided by coherent \cite{SN1996,L1997} and reverse coherent \cite{GPLS2009,PGBL2009} information, achievable lower bounds on quantum and secret key capacities, allowing for the first time the capacities of several channels to be exactly determined; for example the dephasing and erasure channels, along with important continuous variable channels such as the pure loss channel. The aim of the work presented in this chapter was to look at a channel for which, by construction, this upper bound would \emph{not} work as well, since the REE was known to be subadditive for Werner states. By calculating the analytical form for $E_R^2$, the two-copy REE, we saw that indeed there was a substantial difference between the two, at the extremal value $\eta=-1$ already reaching a gap of $0.2$ bits for the smallest subadditive dimension $d=3$. In the limit as $d\rightarrow \infty$ this gap reaches 0.5 bits, so the one-copy bound becomes quite loose indeed. This suggests that for these states, an alternative entanglement measure is preferable - from figure \ref{onlycomp} we see that, for much of the subadditive region, the best upper bound is provided by one of the two bounds motivated by the squashed entanglement, which is strictly additive over tensor products.\\
%\begin{align}\label{UUUinvar}
%\sigma^3_{\mathbf{x}}=&x_0W_{-1,d}^{\otimes 3}+\frac{x_1}{3}\left(W_{-1,d}\otimes W_{-1,d}\otimes W_{1,d}+W_{-1,d}\otimes W_{1,d}\otimes W_{-1,d}+W_{1,d}\otimes W_{-1,d}\otimes W_{-1,d}\right)\nonumber\\+&\frac{x_2}{3}\left(W_{-1,d}\otimes W_{1,d}\otimes W_{1,d}+W_{1,d}\otimes W_{-1,d}\otimes W_{1,d}+W_{1,d}\otimes W_{1,d}\otimes W_{-1,d}\right)+(1-x_0-x_1-x_2)W_{1,d}^{\otimes 3}
%\end{align}

In order to try and tighten the upper bound for $K(\mathcal{W}_{\eta,d})$,  one could try to obtain $E_R^3\left(W_{\eta,d}\right)$. One possible way to do this would be to try and prove that all states of the form shown in Eq.~(\ref{WerCSSD}) which are PPT are also separable, for $n=3$. This could be done by using vertex enumeration\footnote{Explained in detail in chapter \ref{ch:chapter5}.} to find the extremal states of this set subject to the linear constraints in Eq.~(\ref{PPTcons}) with $n=3$, and then show each of these extremal states is separable. One could then apply the KKT conditions as with $n=2$. There is no guarantee that separable and PPT states coincide for these states though, and furthermore we cannot say how well $E_R^3$ will compare to our squashed entanglement-based bounds, for the region where they are tightest. With this in mind, perhaps a more fruitful approach could be try to obtain the true value of $E_{\mathrm{sq}}(W_{\eta,d})$ - as noted in \cite{CSW2012}, the coincidence of $E^*_{\mathrm{sq}}\left(W_{-1,d}\right)$ for even $d$, $E_{P}^\infty\left(W_{-1,d}\right)$ and $E_N\left(W_{-1,d}\right)$ implies this may be the \emph{true} entanglement of the extremal state, suggesting the best way to approach this problem would be to generalise an extension of $W_{-1,d}$ achieving this value. An interesting question would be to investigate whether the optimal extension $\rho'_{ABE}$ could change dependent on the range of $\eta$, akin to the way either $E^*_{\mathrm{sq}}<\tilde{E}_{\mathrm{sq}}$ or $\tilde{E}_{\mathrm{sq}}<E^*_{\mathrm{sq}}$ depending on the region.\\

Another consideration to take into account is the conjecture that Werner states with $\eta\in[\fbb{2-d}{2d-1},0)$ are NPT bound entangled. If this conjecture is true, then $Q_2(\mathcal{W}_{\eta,d})=0$ for this region, as for teleportation covariant channels states  $Q_2(\mathcal{E})=D_2(\mathcal{E})=E_D(\chi_\mathcal{E})$. This would illustrate that our tightest bound for $Q_2$, the RPPT, would still be very loose. As there exist PPT bound entangled states from which a secure secret key can be distilled, it is likely that we still have $K(\mathcal{W}_{\eta,d})>0$ in this region and therefore we have strong motivation to study these entanglement-based upper bounds.

\chapter{Bell Polytopes and the Detection Loophole}
\label{ch:chapter5}

This chapter details work done in collaboration with Roger Colbeck, in which we have used a linear programming algorithm to generate new Bell inequalities, and tested their efficacy with regards to the detection loophole. We hope to turn this into a paper in the near future.

% ====================================================================================================================
\section{Structure of this Chapter}
\label{ch5:structure}
This chapter begins with a review of Bell inequalities and their relevance to quantum information, as well as an overview of some useful mathematical tools for studying them, including polytopes and both linear and semidefinite programming. We then move on to explain the linear programming algorithm used to obtain new Bell inequalities, before presenting evidence for how they perform in regards to determining non-locality with detection failures.

% ==========================================================================================================

\section{Bell Correlations and Non-locality: An Introduction}
\label{ch5:first}
In 1935 Einstein, Podolsky and Rosen published a paper\cite{EPR1935} containing a thought experiment aimed to challenge the Heisenberg uncertainty principle. It started with two particles (which we shall refer to as A and B) maximally entangled in position and momentum - one can measure the one particle's position or momentum, and the other particle will have exactly the opposite value. Quantum mechanics allows for particles to have this property, and to retain it when separated such that they cannot interact. The paradox proposed was as follows: the uncertainty principle asserts that both the position and momentum of particle B cannot be exactly known. However, by measuring particle A, one can determine exactly the position or momentum of particle B - without particle B having been disturbed in any way. Thus one would expect to be able to measure the other property exactly on B, since it has not previously been disturbed by measurement. In actuality when measuring B this turns out not to be the case, and the value of the quantity not measured on particle A turns out to be uncertain. Assuming that information cannot be transferred from A to B about the measurement - due to the limitations of the speed of light - and that every particle has an intrinsic position/momentum determining the outcome of measurements, this gives a paradox - why does quantum theory not allow us to measure exactly the position and momentum of particle B, when they must have exact values - seeing as how we could measure either the position or momentum of A, in order to determine B's corresponding counterpart?\\

The solution favoured by the authors was that of \emph{hidden variables} - that when the two particles interacted to form the entanglement, there existed more fundamental variables which determined the true properties of the particles - the quantum mechanical description only describes the behaviour because we do not have a more complex theory, in which both position and momentum can be described exactly. Perhaps even we are unable to ever measure these variables - yet they still determine the particles' behaviours exactly. \\

Einstein was always a vocal critic of quantum uncertainty, disliking the idea that probability is fundamental to physical behaviour. The argument over probability versus hidden determinism continued until 1964, when John Bell\cite{B1964} approached the problem. He began with the assumptions made by Einstein et al. - so called ``local realism" - that there exists real properties intrinsic to the particles determining the outcomes of measurement, and that measurements in one location cannot remotely affect the outcomes of distant measurements. From these assumptions he was able to show that any model determined by local hidden variables must satisfy a certain correlation relation for measurements - known as \emph{Bell's inequality}. Furthermore, this inequality is \emph{violated} by quantum mechanics; and thus the assumptions made by Einstein et al. (and therefore the local hidden variable model) must be false. Quantum mechanics is intrinsically random - there cannot be an underlying deterministic behaviour.

\subsection{CHSH}
\label{ch5:first:a}
Bell's original inequality concerned the spin of maximally entangled particles - in particular requiring that the two measured particles are perfectly anti-correlated; measuring in opposite directions will result in an identical outcome, whilst measuring in the same direction giving exactly opposite outcomes - the same correlations analysed by the authors of \cite{EPR1935}. Whilst such perfect correlation is predicted by quantum mechanics, experimentally this is impossible to achieve. Thus a big step was acheived in \cite{CHSH1969}, in which the CHSH inequality (named after the four authors) was introduced. They stated that, for any set of two party measurements $\left\{A_1,A_2\right\}$, $\left\{B_1,B_2\right\}$ with outcome values $\pm 1$, then the following holds
\begin{equation}\label{chshex}
\langle A_1B_1\rangle + \langle A_1B_2\rangle + \langle A_2B_1\rangle - \langle A_2B_2\rangle \leq 2
\end{equation}
for any local hidden variable theory. In quantum mechanics, it is possible to achieve 
$2\sqrt{2}$, known as Tsirelson's bound, using maximally entangled states. The advantage of this inequality is that \emph{any} value greater than 2 provides a counterexample to the hidden variable theory, allowing it to be tested with noisy states and measurements.

\section{Preliminaries}\label{ch5:second}
\subsection{Polytope Theory}
Before we look in more detail at the world of Bell inequalities, it is worthwhile to take a detour and familiarise ourselves with some concepts from geometry. In particular, it is useful to understand at least the basics of a branch of geometry looking at \emph{polytopes}. There are two ways of defining a polytope.
\begin{definition}
The vertex representation (V-representation) of a polytope $\mathcal{P}$ is the set of points $\left\{\mathbf{x}_i\right\}$ such that
\begin{equation}
\mathcal{P}=\left\{\mathbf{x}\,\middle\vert\, \mathbf{x}=\sum_i\lambda_i\mathbf{x}_i,\;\lambda_i\geq 0,\;\sum_i\lambda_i=1\right\}
\end{equation}
i.e. all convex combinations of the set of points $\mathbf{x}_i$.
\end{definition}
Obviously this definition is not unique - we could always take the set $\left\{\mathbf{x}_i\right\}\cup \{\left(\mathbf{x_1}+\mathbf{x_2}\right)/2\}$ and it would define the same polytope. Thus when referring to the V-representation of a polytope, we are implicitly referring to the \emph{minimal} V-representation, which is unique\cite{G1967}. A polytope with this definition is also known as a \emph{convex hull}, as $\mathcal{P}$ is the convex hull\footnote{The convex hull of a set $X$ is the smallest convex set $C(X)$ such that $X\subseteq C(X)$.} of the points $\{\mathbf{x}_i\}$.

\begin{definition}
The half-space representation (H-representation) of a polytope $\mathcal{P}$ is a finite set of linear inequalities, such that:
\begin{equation}
\mathcal{P}=\left\{\mathbf{x}\,\middle\vert\, A\mathbf{x}\leq \mathbf{b}\right\}.
\end{equation}
\end{definition}
Once again this representation is not unique - there may be redundant inequalities. Therefore we refer to the minimal H-representation\cite{G1967}, which is again unique\footnote{Up to the trivial multiplication of inequalities by a constant.}. The inequalities in the H-representation are known as ``facet-defining inequalities", as each equation $A_i\mathbf{x}= b_i$ defines a \emph{supporting hyperplane} of dimension strictly one less that that of the polytope; and the intersection of the hyperplane with the polytope $\mathcal{P}$ defines a face of the polytope (again of dimension $\mathrm{dim}[\mathcal{P}]-1)$ - these faces are known as \emph{facets}. Any point in the polytope satisfying $A_i\mathbf{x}= b_i$ is said to \emph{saturate} the inequality. \\

The concept of a face is closely linked to that of \emph{interior points} of a polytope. An interior point $\mathbf{x}_{\mathrm{int}}$ of the polytope is a point such that there exists an $\epsilon>0$ where
\begin{equation}
B_{\epsilon}\left(\mathbf{x}_{\mathrm{int}}\right):=\left\{\mathbf{x}'\mmid \norm{\mathbf{x}'-\mathbf{x}_{\mathrm{int}}}\leq \epsilon\right\}\subseteq \mathcal{P}.
\end{equation}
Faces are then defined as non-empty intersections of a half-plane with the polytope, such that no interior points of the polytope lie within the intersection. Given a $d$-dimensional polytope, faces can range in dimension from the $(d-1)$-dimensional facets down to the $0$-dimensional vertices.
%The concept of a face is closely linked to that of \emph{interior points} of a polytope; it is a non-empty intersection of a half-plane with the polytope such that no interior points of the polytope lie within the intersection; whilst an interior point of the polytope $\mathbf{x}_{\mathrm{int}}$ is a point such that there exists an $\epsilon>0$ where
%\begin{equation}
%B_{\epsilon}\left(\mathbf{x}_{\mathrm{int}}\right):=\left\{\mathbf{x}'\mid ||\mathbf{x}'-\mathbf{x}_{\mathrm{int}}||\leq \epsilon\right\}\subseteq \mathcal{P}.
%\end{equation}
%Faces can range in dimension, for a $d$ dimensional polytope, from the $d-1$ dimensional facets down to the $0$ dimensional vertices.
\subsubsection{A Simple Example: The Cube}
Consider a cube whose sides are of length 2, centred around the origin. This is a 3-dimensional polytope, and its V-representation is:
\begin{align}
V_{\mathrm{cube}}=\{\mathbf{x}_1&=(-1,-1,-1),&\mathbf{x}_2&=(-1,-1,\phantom{-}1),&\mathbf{x}_3&=(-1,\phantom{-}1,-1),&\mathbf{x}_4&=(-1,\phantom{-}1,\phantom{-}1),\nonumber\\
\mathbf{x}_5&=(\phantom{-}1,-1,-1),&\mathbf{x}_6&=(\phantom{-}1,-1,\phantom{-}1),&\mathbf{x}_7&=(\phantom{-}1,\phantom{-}1,-1),&\mathbf{x}_8&=(\phantom{-}1,\phantom{-}1,\phantom{-}1)\}
\end{align}
which are the corners/vertices of the cube. We also give the H-representation
\begin{equation}
H_{\mathrm{cube}}=\left(A=\left(\begin{array}{ccc}
1 & 0 & 0 \\
0 & 1 & 0 \\
0 & 0 & 1 \\
-1 & 0 &0 \\
0 & -1 &0 \\
0 & 0 & -1 \\
\end{array}\right),\mathbf{b}=\left(\begin{array}{c}
1 \\
1 \\
1 \\
1 \\
1 \\
1 
\end{array}\right)\right)
\end{equation}
where we have just provided the set-defining matrix A and vector $\mathbf{b}$. As you see, each row $A_i\mathbf{x}=b_i$ gives a 2-dimensional face (and thus a facet) of the cube. Moreover, we may obtain the 1-dimensional faces (edges) of the cube by taking two compatible equalities: the first two rows define the edge $\lambda(1,1,-1)+(1-\lambda)(1,1,1),\;\lambda\in[0,1]$, and the first three rows taken together define the 0-dimensional face (vertex) $(1,1,1)$. Generally, a $(d-k)$-dimensional face of a $d$-polytope\footnote{A polytope of dimension $d$.} will satisfy exactly $k$ equalities in the H-representation, and can thus be seen as the intersection of exactly $k$ supporting hyperplanes.\\

Note that we could express the cube polytope in \emph{both} a vertex representation and a half-plane representation. It turns out that this is always the case, and this forms the result often called the the \emph{main theorem of polytope theory}.
\begin{theorem}[Minkowski-Weyl Theorem]
Every bounded\footnote{There exists $N$ such that $\norm{\mathbf{x}}<N,\;\forall \mathbf{x}\in\mathcal{P}$.} polytope admits a V-representation and an H-representation of the form described above. If instead the polytope is unbounded, both representations still exist with the V-representation instead given by two sets: the vertex set $\left\{\mathbf{x}_i\right\}$ and the ray set $\left\{\mathbf{r}_j\right\}$. The polytope then consists of all points expressible as
\begin{equation}
\mathcal{P}=\left\{\mathbf{x}\,\middle\vert\, \mathbf{x}=\sum_i\lambda_i\mathbf{x}_i+\sum_j\alpha_j\mathbf{r}_j,\;\lambda_i\geq 0,\;\sum_i\lambda_i=1,\;\alpha_i\geq 0,\right\}.
\end{equation}
\end{theorem}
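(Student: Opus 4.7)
The plan is to establish the equivalence of the two representations by proving each direction separately, treating the bounded case first and then extending to unbounded polytopes by incorporating a recession cone. The main technical engine on both sides will be \emph{Fourier-Motzkin elimination}, which establishes that the projection of any H-polyhedron onto a coordinate hyperplane is again an H-polyhedron. This is the standard workhorse for polytope duality results, and will allow me to translate between the two representations by moving to a higher dimensional space and projecting down.

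For the direction V-representation $\Rightarrow$ H-representation, I would first observe that the standard simplex $\Delta_N = \{\boldsymbol{\lambda} \in \mathbb{R}^N : \lambda_i \geq 0, \sum_i \lambda_i = 1\}$ has an obvious H-representation. Given a V-polytope with vertices $\{\mathbf{x}_1,\ldots,\mathbf{x}_N\}$ in $\mathbb{R}^d$, I can express it as the image of $\Delta_N$ under the linear map $\boldsymbol{\lambda} \mapsto \sum_i \lambda_i \mathbf{x}_i$. Writing this image as a projection of the graph of that linear map (which lives in $\mathbb{R}^{N+d}$ and is defined by linear equalities together with the simplex inequalities), repeated application of Fourier-Motzkin elimination to remove the $\boldsymbol{\lambda}$ coordinates yields a finite system of linear inequalities in $\mathbf{x}$ alone, giving the desired H-representation.

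For the direction H-representation $\Rightarrow$ V-representation in the bounded case, my plan is to use extreme points. I would first show that a bounded H-polyhedron $\mathcal{P} = \{\mathbf{x} : A\mathbf{x} \leq \mathbf{b}\}$ has only finitely many extreme points, since each extreme point must saturate at least $d$ linearly independent inequalities from the system, giving at most $\binom{m}{d}$ candidates where $m$ is the number of inequalities. Then I would argue by induction on dimension that every point of $\mathcal{P}$ is a convex combination of extreme points: any interior point lies on a line segment whose endpoints lie on the boundary (faces of lower dimension), to which the inductive hypothesis applies. Taking the extreme points as the vertex set $\{\mathbf{x}_i\}$ completes the V-representation.

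The main obstacle, and what I expect to take the most care, is the extension to the unbounded case. Here I would decompose the polyhedron via the \emph{Minkowski sum} $\mathcal{P} = \mathrm{conv}\{\mathbf{x}_i\} + \mathrm{cone}\{\mathbf{r}_j\}$, where the recession cone is $\mathrm{rec}(\mathcal{P}) = \{\mathbf{r} : A\mathbf{r} \leq \mathbf{0}\}$. The strategy is to reduce to the bounded case by intersecting $\mathcal{P}$ with a large box, obtaining a bounded polytope whose vertices decompose into those lying in the interior of the box (true vertices of $\mathcal{P}$) and those on the box's boundary (which encode directions of unboundedness and yield the rays $\mathbf{r}_j$). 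The delicate part is verifying that this decomposition is finite and independent of the chosen box, which requires a careful application of Fourier-Motzkin to the homogenized system defining the recession cone, together with a separation argument (effectively Farkas' lemma) to show that every unbounded direction in $\mathcal{P}$ is a non-negative combination of extreme rays. Once this is in hand, the Minkowski-Weyl statement for unbounded polytopes follows by combining the bounded vertex set with the finitely generated recession cone.
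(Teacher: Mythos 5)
The paper states the Minkowski--Weyl theorem without giving a proof (it is a classical result in polytope theory, and the surrounding text only cites Gr\"{u}nbaum for the related uniqueness statements), so there is no in-paper argument to compare against. Your outline is the standard textbook route and it is essentially correct: Fourier--Motzkin elimination to show that the affine image of the standard simplex is an H-polyhedron gives V $\Rightarrow$ H; finiteness of extreme points (each saturating $d$ linearly independent inequalities, hence at most $\binom{m}{d}$ of them) plus induction on dimension gives H $\Rightarrow$ V in the bounded case; and the decomposition $\mathcal{P} = \mathrm{conv}\{\mathbf{x}_i\} + \mathrm{cone}\{\mathbf{r}_j\}$ with $\mathrm{rec}(\mathcal{P}) = \{\mathbf{r} : A\mathbf{r} \leq \mathbf{0}\}$ handles the unbounded case.

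One place where the write-up would need to be tightened is the unbounded step. The claim that the box-boundary vertices of $\mathcal{P} \cap [-M,M]^d$ ``encode directions of unboundedness and yield the rays $\mathbf{r}_j$'' is intuitively right but does not by itself produce a finite generating set for the recession cone independent of $M$; what you actually need is a separate lemma that the polyhedral cone $\{\mathbf{r} : A\mathbf{r} \leq \mathbf{0}\}$ is finitely generated (Minkowski--Weyl for cones), after which the Minkowski-sum decomposition follows from Farkas/separation as you say. You do gesture at this by mentioning Fourier--Motzkin on the homogenized system, which is the right tool --- the cleanest formalization is to homogenize $\mathcal{P}$ into a cone in $\mathbb{R}^{d+1}$, prove the cone version of the theorem once (both directions via Fourier--Motzkin), and then recover both the bounded and unbounded polyhedron cases by slicing with the hyperplane $x_0 = 1$, which also avoids having to run the dimension-induction argument and the box argument separately. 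As written, your proof has the right ingredients but leaves the finite generation of the recession cone as an unstated lemma rather than something the box construction delivers for free.
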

\subsubsection{Facet Enumeration}
The problem of finding the set of facets (the H-representation) given the extremal points (the V-representation) is known as \emph{facet enumeration}, or the \emph{convex hull problem}. This problem is generally difficult; one of the most popular algorithms\cite{AF1992} runs in $O(ndk)$, with $n$ the number of extremal points, $d$ the polytope's dimension, and $k$ the number of facets. In the worst case this can be of $O(n^{\lfloor\frac{d}{2}\rfloor})$\cite{MS1971}. The converse problem, known as \emph{vertex enumeration}, is equivalently hard - taking $n$ as the number of facets and $k$ as the number of extremal points instead. This problem is known to be NP-hard in general\cite{KBBEG2008}. As we shall see later in the chapter, the problem of facet enumeration is particularly relevant in quantum theory.

\subsubsection{Dual Polytopes}
Given a polytope $\mathcal{P}$, it is possible to construct a graph known as a ``face lattice" - an ordered graph detailing subset inclusion of all faces of the polytope (from vertices up to facets), the empty set, and the polytope itself. By taking this graph and reversing the ordering one may create a new ``hypothetical" face lattice. Any polytope $\mathcal{D}$ whose own face lattice structure matches this hypothetical lattice is said to be a \emph{combinatorial dual} of $\mathcal{P}$. Every polytope admits a dual and there exists an explicit construction we can use. First we require that the origin of the polytope belongs to the interior of $\mathcal{P}$; if this not the case, a coordinate transform is required. Given this, the polar dual of $\mathcal{P}$ is defined:
\begin{equation}
\mathcal{P}^{\Delta}:=\left\{\mathbf{y}\mmid \mathbf{x}^T\mathbf{y}\leq 1,\;\forall \mathbf{x}\in \mathcal{P}\right\}.
\end{equation}
Importantly these polar duals introduce the idea of duality between vertices and facets; they have the easily verifiable property that:
\begin{equation}
\mathcal{P}^{\Delta}=\left\{\mathbf{y}\mmid \mathbf{x}_i^T\mathbf{y}\leq 1\;\forall \mathbf{x}_i\in V_{\mathcal{P}}\right\}
\end{equation}
and furthermore this turns out to be exactly the H-representation of $\mathcal{P}^{\Delta}$. Thus the vertices of the primal polytope $\mathcal{P}$ are exactly the facets of the polar dual polytope $\mathcal{P}^{\Delta}$. Moreover, $\left(\mathcal{P}^{\Delta}\right)^{\Delta}=\mathcal{P}$ - meaning that the vertices of $\mathcal{P}^{\Delta}$ are the rows of $A\in H_{\mathcal{P}}$, provided the inequalities have been normalised so that $\mathbf{b}=\mathbf{1}$. This concept of the polar dual is one that we shall return to later. \\

For our simple example of the cube, we can see that we have already cleverly chosen our inequalities to be of the form $\mathbf{b}=\mathbf{1}$. Thus we can immediately form the polar dual.
\begin{align}
V_{\mathrm{cube}^{\Delta}}=\{\mathbf{y}_1&=(1,0,0),&\mathbf{y}_2&=(0,1,0),&\mathbf{y}_3&=(0,0,1),\nonumber\\
\mathbf{y}_4&=(-1,0,0),&\mathbf{y}_5&=(0,-1,0),&\mathbf{y}_6&=(0,0,-1)\},
\end{align}
\begin{equation}
H_{\mathrm{cube}^{\Delta}}=\left(A=\left(\begin{array}{ccc}
-1  & -1 & -1 \\
-1  & -1 & \phantom{-}1 \\
-1  & \phantom{-}1 & -1 \\
-1  & \phantom{-}1 & \phantom{-}1 \\
\phantom{-}1   & -1 & -1 \\
\phantom{-}1   & -1 & \phantom{-}1 \\
\phantom{-}1   & \phantom{-}1 & -1 \\
\phantom{-}1   & \phantom{-}1 & \phantom{-}1 
\end{array}\right),\mathbf{b}=\left(\begin{array}{c}
1 \\
1 \\
1 \\
1 \\
1 \\
1 \\
1 \\
1
\end{array}\right)\right).
\end{equation}

This eight-sided shape with six vertices is a regular octahedron centred around $\mathbf{0}$.\\

\subsubsection{Dimensions, Affine and Vector Spaces}
So far we have been discussing the dimensions of polytopes without explicitly defining it. To understand exactly what we refer to, we shall briefly outline affine and vector spaces.\\

For a vector space $\mathcal{V}$ (or rather subspace - we are taking subspaces of $\mathbb{R}^d$, for some dimension $d$) we may always find a \emph{basis}, a set of vectors $\left\{\mathbf{z}_i\right\}_{i=1}^{k}$ which are linearly independent and any $\mathbf{v}\in \mathcal{V}$ can be expressed:
$\mathbf{v}=\sum_i^{k} \beta_i\mathbf{z}_i,\; \beta_i\in \mathbb{R}$. The dimension of this vector space is $k$, the number of basis vectors. Given a set of vectors $X=\left\{\mathbf{x}_i\right\}$ we can construct a vector space $\mathcal{V}(X)$ by taking all combinations $\sum_i \beta_i \mathbf{x}_i,\;\beta_i\in \mathbb{R}$, and find the dimension of this space by taking the size of the largest linearly independent subset of $X$. \\

Given that same set $X$, we could also construct an \emph{affine} subspace, $\mathcal{A}(X)$. This is the set of points $\mathbf{x}= \mathbf{x}_1+\beta_i\left(\mathbf{x}_i-\mathbf{x}_1\right),\;\beta_i\in \mathbb{R}$. The dimension of this affine space is given by the vector space obtained when $\mathcal{A}(X)$ is translated such that $\mathbf{x}_1$ is taken to the origin, and thus $\mathrm{dim}\left[\mathcal{A}(X)\right]=\mathrm{dim}\left[\mathcal{V}(\left\{\mathbf{x}_i-\mathbf{x}_1\right\}\right]$. The choice of $\mathbf{x}_1$ is not unique, and one may choose any $\mathbf{x}\in \mathcal{A}(X)$.\\

Finally, the dimension of a polytope\footnote{Or a face of a polytope - which is itself a polytope.} is taken to be the dimension of the affine subspace of its vertex set. Although the polytope (being restricted to convex combinations) is a subset of the affine space, it is clear its dimensionality must be equal to that of the affine space, since its extremal points are the vectors $\mathbf{x}_1$, $\mathbf{x}_1+\left(\mathbf{x}_2-\mathbf{x}_1\right)\ldots$ and therefore the number of basis vectors required to fully describe the space must exactly coincide with that of the full affine space.

\subsection{An Introduction to Linear Programming}
Linear programming refers to the methods used to solve a class of problems which require the optimisation of a linear objective function over a set of variables constrained by a series of linear equalities and/or inequalities. They have a strong practical application, with much development of their study being motivated by economic theory and military strategy; often these problems are formulated as the optimal use of limited resources. In this section we shall introduce the canonical form of a linear programming problem, and the most common methods for solving them. We shall also explain their relation to polytope theory\cite{W1971}. \\

The canonical form of a linear programming problem is as follows: given a fixed $\mathbf{c},\mathbf{q}$ and $A$,
\begin{equation}\label{primallin}
\text{maximise } \mathbf{c}^T\mathbf{x} \text{ subject to: } A\mathbf{x} \leq \mathbf{q},\;\mathbf{x}\geq \mathbf{0}.
\end{equation}
We also refer to this as the (canonical) \textit{primal} form. It should be noted that all linear programming problems can be converted into this form; if the problem instead asks to minimise $\mathbf{c}^T\mathbf{x}$ then we may set $\mathbf{c}'=-\mathbf{c}$; clearly our original problem is minimised when $\mathbf{c}'^T\mathbf{x}$ is maximised. If one of our constraints has that  $A_k\mathbf{x}\geq q_k$, we may instead use the equivalent constraint that $-A_k\mathbf{x}\leq -q_k$. Another common issue is that one of our variables $x_k$ is bounded by some value other than zero. We may either add this constraint to the matrix $A$, or redefine the variable $x_k$ so that it is correctly bounded.\\

Given a linear programming problem, there are three possibilities: if there are no $\mathbf{x}$ which satisfy the constraints, then the problem is \emph{infeasible}. If the domain of the variables is unbounded in such a way that we may increase the objective function's\footnote{Objective function refers to $\mathbf{c}^T\mathbf{x}$.} value indefinitely, then we say the problem is \emph{unbounded}. If neither of these are true, then the region satisfying the constraints is known as the \emph{feasible region}, and there exists with certainty a finite optimal solution on this space.\\

Note that the constraints are of the form $A\mathbf{x}\leq \mathbf{q}$ and $\mathbf{x}\geq 0$ - these are half-spaces, and form a H-representation of a polytope, which is our solution space. Thus we have a strong connection between linear programming and polytope theory.

\begin{lemma}[The Maximum Principle \cite{R1970}]\label{Maxprinciple}
The optimum value of the objective function is achieved at an extremal point of the solution space polytope.
\end{lemma}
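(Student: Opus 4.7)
The plan is to leverage the Minkowski-Weyl theorem, which has just been stated, to pass from the half-space description of the feasible region to a vertex description, and then exploit linearity of the objective on convex combinations.

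First I would observe that the constraints $A\mathbf{x}\leq\mathbf{q}$ together with $\mathbf{x}\geq\mathbf{0}$ are exactly a collection of half-space inequalities, so the feasible region is itself a polytope $\mathcal{P}$ in the sense of the previous subsection. I would dispose of the uninteresting cases first: if $\mathcal{P}$ is empty the statement is vacuous, and if the problem is unbounded there is nothing to prove because no optimum exists. So I assume a finite optimum is attained on $\mathcal{P}$.

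Next I would invoke the Minkowski-Weyl theorem to write $\mathcal{P}$ in the form
\begin{equation}
\mathcal{P}=\left\{\mathbf{x}\,\middle\vert\, \mathbf{x}=\sum_i\lambda_i\mathbf{x}_i+\sum_j\alpha_j\mathbf{r}_j,\;\lambda_i\geq 0,\;\sum_i\lambda_i=1,\;\alpha_j\geq 0\right\},
\end{equation}
where $\{\mathbf{x}_i\}$ is the (minimal) vertex set and $\{\mathbf{r}_j\}$ is the ray set, with the ray set empty when $\mathcal{P}$ is bounded. For any ray $\mathbf{r}_j$ I would note that $\mathbf{c}^{T}\mathbf{r}_j>0$ would allow an arbitrarily large objective value by scaling $\alpha_j$, contradicting the assumption of a finite optimum; hence $\mathbf{c}^{T}\mathbf{r}_j\leq 0$ for every ray. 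Taking any optimal $\mathbf{x}^{*}\in\mathcal{P}$ and expanding in this representation, I would compute
\begin{equation}
\mathbf{c}^{T}\mathbf{x}^{*}=\sum_i\lambda_i\,\mathbf{c}^{T}\mathbf{x}_i+\sum_j\alpha_j\,\mathbf{c}^{T}\mathbf{r}_j\leq\sum_i\lambda_i\,\mathbf{c}^{T}\mathbf{x}_i\leq\max_i\mathbf{c}^{T}\mathbf{x}_i,
\end{equation}
where the last inequality follows because a convex combination of real numbers never exceeds their maximum. Since $\mathbf{x}^{*}$ is optimal and each vertex $\mathbf{x}_i$ is itself feasible, both inequalities must in fact be equalities, so the maximum is attained at some extremal point $\mathbf{x}_{i^{*}}$ that achieves $\mathbf{c}^{T}\mathbf{x}_{i^{*}}=\max_i\mathbf{c}^{T}\mathbf{x}_i$.

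The main obstacle I expect is the careful treatment of the unbounded case: one has to argue that the existence of a finite optimum forces $\mathbf{c}^{T}\mathbf{r}_j\leq 0$ on every ray before the convex-combination estimate can be applied cleanly. Everything else reduces to linearity of $\mathbf{c}^{T}(\cdot)$ together with the elementary fact that the maximum of a finite set of real numbers dominates any convex combination of them.
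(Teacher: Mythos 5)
The paper does not prove this lemma; it merely states it with a citation to Rockafellar \cite{R1970}, so there is no in-paper argument to compare against. Your proof is the standard and correct one: pass to the Minkowski--Weyl generator representation, show that finiteness of the optimum forces $\mathbf{c}^{T}\mathbf{r}_j\leq 0$ on every ray, and then bound the objective at any feasible point by the maximum over the vertices.

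One small point worth making explicit: the decomposition you invoke, with a genuine vertex set and $\sum_i\lambda_i=1$, is only available when the feasible polyhedron is \emph{pointed} (contains no line); a polyhedron containing a line has no extremal points at all and the lemma would be false for it. In the canonical form this is automatic, since $\mathbf{x}\geq\mathbf{0}$ confines the feasible set to the nonnegative orthant, which contains no line. It would strengthen the write-up to note this, because it is exactly where the hypothesis $\mathbf{x}\geq\mathbf{0}$ is actually used; without it the statement as given is not true in general.
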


\subsubsection{The Dual Problem}

Every primal linear programming problem has a corresponding linear problem which we call the \emph{dual problem}; given a primal problem in the form (\ref{primallin}), the dual problem is given as
\begin{equation}\label{duallin}
\text{minimise } \mathbf{q}^T\mathbf{y} \text{ subject to: } A^{T}\mathbf{y} \geq \mathbf{c},\;\mathbf{y}\geq \mathbf{0}.
\end{equation}
Note that this problem too takes place over a polytope - one should be careful to not to automatically connect this to the \emph{dual} polytope, due to the manner in which it has been constructed; note that the origin is never in the interior of the solution space polytope. However we shall see later on there are some connections we may make. 
\begin{lemma}
Linear programming problems are \emph{strongly dual}; given an optimum solution $\mathbf{x^*}$ for (\ref{primallin}), and $\mathbf{y^*}$ for (\ref{duallin}), then $\mathbf{c}^{T}\mathbf{x^*}$=
$\mathbf{q}^{T}\mathbf{y^*}$. If either the primal or dual is unbounded, then the other is infeasible.
\end{lemma}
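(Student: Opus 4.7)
The plan is to prove the two claims in the order: (i) weak duality, (ii) the unboundedness/infeasibility statement, and (iii) strong duality proper. Weak duality is the key stepping stone and is a straightforward algebraic manipulation: given any primal-feasible $\mathbf{x}$ and any dual-feasible $\mathbf{y}$, I would chain the inequalities
\begin{equation}
\mathbf{c}^T\mathbf{x} \;\leq\; (A^T\mathbf{y})^T\mathbf{x} \;=\; \mathbf{y}^T A\mathbf{x} \;\leq\; \mathbf{y}^T\mathbf{q} \;=\; \mathbf{q}^T\mathbf{y},
\end{equation}
using $A^T\mathbf{y}\geq \mathbf{c}$ together with $\mathbf{x}\geq \mathbf{0}$ for the first inequality, and $A\mathbf{x}\leq \mathbf{q}$ together with $\mathbf{y}\geq \mathbf{0}$ for the second.

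The second claim is then an immediate corollary. If the primal is unbounded, then for every $M\in\mathbb{R}$ there exists a feasible $\mathbf{x}$ with $\mathbf{c}^T\mathbf{x}>M$; weak duality would force $\mathbf{q}^T\mathbf{y}>M$ for every dual-feasible $\mathbf{y}$, which is impossible unless no such $\mathbf{y}$ exists. The symmetric argument disposes of the dual-unbounded case. So the real work is in (iii).

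For strong duality, I plan to invoke a theorem of the alternative. The cleanest route is Farkas' lemma, which the excerpt's polytope/half-space language already primes us for: given a system $A\mathbf{x}\leq \mathbf{q}$, $\mathbf{x}\geq 0$, exactly one of the two alternatives holds---either the system is feasible, or there exists $\mathbf{y}\geq 0$ with $A^T\mathbf{y}\geq 0$ and $\mathbf{q}^T\mathbf{y}<0$. I would then argue as follows. Let $v^*=\mathbf{c}^T\mathbf{x}^*$ be the primal optimum. Consider the augmented system $A\mathbf{x}\leq \mathbf{q}$, $-\mathbf{c}^T\mathbf{x}\leq -v^*-\varepsilon$, $\mathbf{x}\geq 0$, which by definition of $v^*$ must be infeasible for every $\varepsilon>0$. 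Applying Farkas' lemma to this system produces multipliers $(\mathbf{y},\lambda)\geq 0$ certifying infeasibility; rescaling and taking $\varepsilon\to 0$ yields a dual-feasible $\mathbf{y}^*$ with $\mathbf{q}^T\mathbf{y}^*\leq v^*$, which combined with weak duality gives equality.

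The main obstacle is establishing (or citing) Farkas' lemma itself, since this is what carries the geometric content: it is a separating-hyperplane fact about the convex cone $\{A\mathbf{x}:\mathbf{x}\geq 0\}$. If I wanted to keep the chapter self-contained I would either quote Farkas as a named result (consistent with the expository tone of the chapter) or sketch its proof via a closed-convex-cone separation argument---projecting $\mathbf{q}$ onto the cone and using the projection's supporting hyperplane to build the certifying $\mathbf{y}$. A viable alternative, since the maximum principle (Lemma~\ref{Maxprinciple}) has already been stated, is to run the simplex method to optimality on the primal and read off a complementary dual solution from the final basis; this avoids Farkas but requires one to argue that the simplex method terminates (using, e.g., Bland's rule to rule out cycling), which is arguably more machinery than the separation approach.
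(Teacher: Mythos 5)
The paper does not prove this lemma: it appears in the preliminaries section of Chapter~5 as a stated background fact about linear programming, alongside the Maximum Principle and Complementary Slackness, which are likewise presented without proof (the surrounding discussion cites a general reference for the LP--polytope connection). So there is no ``paper's own proof'' to compare against, and your task here is simply to verify the standard result, which you do.

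Your outline is correct and follows the standard textbook route. Weak duality is exactly the chain of inequalities you write, and the unboundedness-implies-infeasibility claim is indeed an immediate corollary. The Farkas-based strong-duality argument is the right idea, but two small points would need to be handled carefully when fleshing it out. First, the theorem of the alternative gives multipliers $(\mathbf{y},\lambda)\geq 0$, not both zero; you must separately dispose of the $\lambda=0$ case (it would certify primal infeasibility, contradicting the existence of $\mathbf{x}^*$) before you may rescale by $\lambda$. Second, the phrase ``taking $\varepsilon\to 0$'' produces, for each $\varepsilon>0$, a \emph{different} dual-feasible $\mathbf{y}_\varepsilon$ with $v^*\leq\mathbf{q}^T\mathbf{y}_\varepsilon<v^*+\varepsilon$; this shows the dual infimum equals $v^*$, but it does not yet exhibit a single optimal $\mathbf{y}^*$. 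To close that gap you can either use a Farkas variant phrased for a strict inequality $\mathbf{c}^T\mathbf{x}>v^*$ so no limit is needed, or observe that the dual feasible region is a polyhedron on which $\mathbf{q}^T\mathbf{y}$ is bounded below, and then invoke Lemma~\ref{Maxprinciple} (the Maximum Principle) to conclude the infimum is attained at a vertex. The latter fits the paper's toolkit nicely and is the more natural fix given what has already been introduced. Your remark that the simplex-based alternative requires a termination argument (Bland's rule) is accurate, and agreeing with you that the separation route is leaner for a preliminaries section.
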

We can see that changing from the primal to the dual form switches the number of constraints and variables; the most efficient algorithms for linear programs utilise both the primal and dual problem to find the solution. %, whilst methods such as the simplex algorithm give you the dual solution ``for free". %thus in cases with (best time to use each?). Moreover, methods such as the simplex algorithm give you the dual solution ``for free" i.e. as a consequence of the algorithm; so it can be worth converting your problem to its dual form before solving.\\
 There is also another extremely important result within linear programming, which we will take advantage of later on in the paper.
\begin{theorem}[Complimentary Slackness.]\label{Slack}
The solutions $\mathbf{x^*},\mathbf{y^*}$ to the primal/dual of a linear program are optimal iff 
\begin{align}
\left(\mathbf{q}-A\mathbf{x^*}\right)^{T}\mathbf{y^*}&=0, \text{ and}\\
\left(A^{T}\mathbf{y^*}-\mathbf{c}\right)^{T}\mathbf{x^*}&=0.
\end{align}
\end{theorem}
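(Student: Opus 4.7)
The plan is to base everything on the weak-duality chain of inequalities, which is the engine behind the whole theorem. For any primal-feasible $\mathbf{x}$ and dual-feasible $\mathbf{y}$, the constraints $A^{T}\mathbf{y}\geq \mathbf{c}$ together with $\mathbf{x}\geq \mathbf{0}$ give $\mathbf{c}^{T}\mathbf{x}\leq \mathbf{y}^{T}A\mathbf{x}$, and the constraints $A\mathbf{x}\leq \mathbf{q}$ together with $\mathbf{y}\geq \mathbf{0}$ give $\mathbf{y}^{T}A\mathbf{x}\leq \mathbf{y}^{T}\mathbf{q}$. Chaining these yields weak duality
\begin{equation}
\mathbf{c}^{T}\mathbf{x}\;\leq\;\mathbf{y}^{T}A\mathbf{x}\;\leq\;\mathbf{q}^{T}\mathbf{y},
\end{equation}
in which the two ``gaps'' are precisely $(A^{T}\mathbf{y}-\mathbf{c})^{T}\mathbf{x}$ and $(\mathbf{q}-A\mathbf{x})^{T}\mathbf{y}$, each a sum of products of nonnegative quantities.

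For the forward direction, I would assume $\mathbf{x}^{*},\mathbf{y}^{*}$ are both optimal (and hence feasible). The strong duality lemma stated earlier in the excerpt gives $\mathbf{c}^{T}\mathbf{x}^{*}=\mathbf{q}^{T}\mathbf{y}^{*}$, so the whole chain collapses to equalities. The two gaps must vanish, which is exactly the pair of slackness conditions.

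For the converse, I would take $\mathbf{x}^{*},\mathbf{y}^{*}$ feasible and assume both slackness conditions hold. Then the two inequalities in the chain are equalities, so $\mathbf{c}^{T}\mathbf{x}^{*}=\mathbf{q}^{T}\mathbf{y}^{*}$. Weak duality applied to any other primal-feasible $\mathbf{x}$ then gives $\mathbf{c}^{T}\mathbf{x}\leq \mathbf{q}^{T}\mathbf{y}^{*}=\mathbf{c}^{T}\mathbf{x}^{*}$, so $\mathbf{x}^{*}$ is primal-optimal; symmetrically $\mathbf{y}^{*}$ is dual-optimal.

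I do not anticipate a hard obstacle here — the only subtle point is remembering that feasibility is implicit in the word ``solutions'' so that the two dot products have the correct sign structure (termwise nonnegative), which is what makes their sum vanishing a genuinely substantive condition rather than a coincidence of cancellations. Everything else reduces to bookkeeping of the weak-duality chain together with an invocation of the strong-duality lemma.
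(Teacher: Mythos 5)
The paper states Theorem~\ref{Slack} without any proof, so there is no argument in the text to compare yours against. Your proof is the correct standard one: the weak-duality chain $\mathbf{c}^{T}\mathbf{x}\leq\mathbf{y}^{T}A\mathbf{x}\leq\mathbf{q}^{T}\mathbf{y}$ with the two gaps identified as the slackness quantities, strong duality to collapse the chain in the forward direction, and weak duality against arbitrary feasible points in the converse. You also correctly flag the one subtlety that makes the argument work — each gap is a sum of termwise nonnegative products (a consequence of feasibility and the sign constraints), so vanishing of the sum forces termwise vanishing, which is the operationally useful form of the condition. Nothing to fix.
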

\subsubsection{Methods for Solving Linear Programming Problems}
There are two main approaches for solving this class of problems. We shall provide a qualitative explanation of them both here.
\begin{itemize}
\item \textbf{The simplex algorithm}.\\
Formulated by George Dantzig\cite{D1947}, this algorithm begins by calculating the objective function's value at one vertex of the feasible polytope, known as the \textit{initial feasible solution}. It then moves along the polytope's edges vertex to vertex in such a way that the objective function is always non-decreasing. Provided the choice of edge to follow in the scenario where there are multiple options is well chosen\footnote{Many suitable rules for this choice exist - Bland's rule for example.}, then this method is guaranteed to find the global optimum in finite time. This method can be done using exact values, and is practically efficient, although there exist non-polynomial worst case scenarios. There exist other similar edge-following algorithms; an example is the criss-cross method (\cite{T1985},\cite{W1987}); however the simplex is the most widely implemented, and the one we use here.
\item \textbf{Interior point methods}\cite{MRR2006}.\\
Unlike the previous approach, interior point methods stay strictly inside the interior of the feasible region, and takes successive steps closer to the optimal solution until the sequence converges to within some allowed tolerance. Though different algorithms choose their step determination differently, the general idea is to move along the direction of the greatest change in the objective function with each successive step. There exist particular algorithms which are polynomial for all problems, and modern methods are efficient for numerical precision problems.
\end{itemize}

\subsection{Semidefinite Programming}
Related to linear programming is an area of convex optimisation known as \emph{semidefinite programming}. The canonical form of a semidefinite programming problem is:
\begin{equation}\label{primalsemidefinite}
\text{minimise } \mathrm{Tr}\left[C^T X\right] \text{ subject to } \mathrm{Tr}\left[A_k^{T} X\right]=b_k,\;k=1,\ldots m,\; X \geq 0.
\end{equation}
Similar to linear programming, we have a  linear objective function we wish to maximise: $\mathrm{Tr}\left[C^T X\right]=\sum_{i,j} C_{ij}X_{ij}$. This time however, our variable space is that of the positive semidefinite matrices - subject to a set of linear equalities, given by the matrices $\left\{A_k\right\}$ and vector $\mathbf{b}$.
\begin{lemma}
The problem: 
\begin{align}%\label{slacksemi}
\text{minimise } &\mathrm{Tr}\left[C^T X\right] \text{ subject to }\nonumber\\
 &\mathrm{Tr}\left[A_k^T X\right] = b_k,\;k=1,\ldots m_1,\nonumber\\
 &\mathrm{Tr}\left[N_j^T X\right] \leq q_j,\; j=1,\ldots m_2,\; X \geq 0\label{slacksemi}
\end{align}
can be written as a semidefinite programming problem.
\end{lemma}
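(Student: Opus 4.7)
The plan is to convert the inequality constraints into equality constraints by introducing non-negative slack variables, and then to encode all the variables (the original $X$ together with the slacks) as a single block-diagonal positive semidefinite matrix, so that the problem fits the canonical form (\ref{primalsemidefinite}).

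First I would introduce, for each inequality $\mathrm{Tr}[N_j^T X]\leq q_j$, a scalar slack $s_j\geq 0$ satisfying $\mathrm{Tr}[N_j^T X]+s_j=q_j$. The non-negativity of $s_j$ is captured by requiring $s_j$ to be an eigenvalue of a positive semidefinite object, which I achieve by placing the $s_j$ along the diagonal of a companion $m_2\times m_2$ positive semidefinite matrix $S$, and forming the block-diagonal matrix
\[
\tilde X \;=\; \begin{pmatrix} X & 0 \\ 0 & S \end{pmatrix}.
\]
Then $\tilde X\geq 0$ is equivalent to $X\geq 0$ together with $S\geq 0$; in particular the diagonal entries $s_j=S_{jj}$ are non-negative whenever $\tilde X\geq 0$.

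Next I would lift each data matrix to act on the enlarged block space: set $\tilde C=C\oplus 0_{m_2}$ so that $\mathrm{Tr}[\tilde C^T\tilde X]=\mathrm{Tr}[C^T X]$ (the objective is unchanged); set $\tilde A_k=A_k\oplus 0_{m_2}$ for each equality constraint, so that $\mathrm{Tr}[\tilde A_k^T\tilde X]=b_k$ reproduces the original equalities. For the inequalities, let $E_{jj}$ denote the $m_2\times m_2$ matrix with a $1$ in position $(j,j)$ and zeros elsewhere, and set $\tilde N_j=N_j\oplus E_{jj}$, so that
\[
\mathrm{Tr}[\tilde N_j^T\tilde X]\;=\;\mathrm{Tr}[N_j^T X]+s_j,
\]
and the equation $\mathrm{Tr}[\tilde N_j^T\tilde X]=q_j$ encodes both the original inequality and the definition of the slack. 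I would also need to enforce that $S$ is diagonal, which can be done by adding the linear equality constraints $\mathrm{Tr}[(0\oplus E_{jk})^T\tilde X]=0$ for $j\neq k$; these are of the same form as the other equality constraints and therefore admissible in (\ref{primalsemidefinite}).

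Putting this together, the reformulated problem is to minimise $\mathrm{Tr}[\tilde C^T\tilde X]$ subject to $m_1+m_2+m_2(m_2-1)$ linear equalities in $\tilde X$ and the single conic constraint $\tilde X\geq 0$, which is exactly the canonical semidefinite program (\ref{primalsemidefinite}). A feasible $\tilde X$ yields a feasible $(X,\{s_j\})$ and conversely, with matching objective values, so the two problems are equivalent. The main subtlety, and the only non-routine step, is the enforcement that the slack block $S$ behave like a vector of non-negative scalars rather than a general PSD matrix; this is handled cleanly by the off-diagonal zero constraints above (equivalently, one could dispense with $S$ entirely and work with the $m_2$ scalar slacks as additional variables, at the cost of departing slightly from the stated canonical form).
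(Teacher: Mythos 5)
Your proof follows the same route as the paper: introduce slacks for the inequality constraints, embed them together with $X$ as blocks of a single larger positive semidefinite matrix, and lift $C$, $A_k$, $N_j$ accordingly. The one place you diverge is unnecessary: the paper takes the slacks to be $m_2$ separate $1\times 1$ blocks $z_1,\dots,z_{m_2}$ on the diagonal of $X'$, so there is no $m_2\times m_2$ companion matrix $S$ and hence nothing to force to be diagonal. Your extra equality constraints $\mathrm{Tr}[(0\oplus E_{jk})^T\tilde X]=0$ for $j\neq k$ do no harm, but they are not needed either: the objective and all constraints only read the upper-left $n\times n$ block and the diagonal entries of the slack block, and for any PSD $\tilde X$ the former is PSD and the latter are nonnegative, so the off-diagonal parts of $\tilde X$ (within the slack block \emph{and} between $X$ and the slacks) are simply inert. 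Your closing parenthetical is also off: working with $m_2$ scalar slacks does \emph{not} depart from the canonical form~(\ref{primalsemidefinite}), because those scalars live as diagonal entries of the single PSD variable $X'$, exactly as in the paper's construction. So the two proofs are the same modulo some dead weight; the paper's is the tidier packaging.
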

\begin{proof}
For each inequality $j=1\ldots m_2$ define a slack variable $z_j$. We can rewrite our variable matrix as the block diagonal matrix 
\begin{equation}
X'=\left(\begin{array}{cccc}
X &&& \\
& z_1 & & \\
&& \ddots & \\
&&& z_{m_2}
\end{array}\right).
\end{equation}
Clearly $X'\geq 0 $ iff $X\geq 0$ and $z_j\geq 0,\;\forall j$. We can now write the problem as 
\begin{align}
\text{minimise } &\mathrm{Tr}\left[C'^T X'\right] \text{ subject to }\nonumber\\ &
\mathrm{Tr}\left[A_{k}'^T X'\right] = b_k,\;k=1,\ldots m_1,\nonumber\\&
\mathrm{Tr}\left[N_{j}'^T X'\right] = q_j,\; j=1,\ldots m_2,\; X' \geq 0
\end{align}
where
\begin{align}
C'&=\left(\begin{array}{cccc}
C &&& \\
& 0 & & \\
&& \ddots & \\
&&& 0
\end{array}\right) & A'_k&=\left(\begin{array}{cccc}
A_k &&& \\
& 0 & & \\
&& \ddots & \\
&&& 0
\end{array}\right) & N'_j&=\left(\begin{array}{cccc}
N_j &&& \\
& \delta_{j1} & & \\
&& \ddots & \\
&&& \delta_{jm_2}
\end{array}\right).
\end{align}
This is the canonical form of a semidefinite programming problem. 
\end{proof}
\begin{corollary}
Every linear program can be written as a semidefinite programming problem.
\end{corollary}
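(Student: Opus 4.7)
The plan is to take a linear program in canonical form and encode its $n$-dimensional variable vector as the diagonal of an $n \times n$ positive semidefinite matrix, then appeal to the preceding lemma (which accommodates mixed equality and inequality trace constraints) to conclude.

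First I would take the linear program: maximise $\mathbf{c}^T \mathbf{x}$ subject to $A\mathbf{x} \leq \mathbf{q}$ and $\mathbf{x} \geq \mathbf{0}$, and immediately convert it to a minimisation by replacing $\mathbf{c}$ with $-\mathbf{c}$. Next, given a candidate matrix variable $X \in \mathcal{L}(\mathbb{R}^n)$, I would define $C = -\mathrm{diag}(c_1,\ldots,c_n)$ so that $\mathrm{Tr}[C^T X] = -\sum_i c_i X_{ii}$, and for each constraint row $k = 1,\ldots,m$ I would define $N_k = \mathrm{diag}(A_{k1},\ldots,A_{kn})$, so that the inequality $\mathrm{Tr}[N_k^T X] \leq q_k$ recovers the $k$-th row of $A\mathbf{x} \leq \mathbf{q}$ when $X_{ii} = x_i$.

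The key step is ensuring that $X$ is forced to be diagonal, so that its entries genuinely correspond to the vector $\mathbf{x}$. To this end, I would add equality constraints of the form $\mathrm{Tr}[E_{ij}^T X] = 0$ for every pair $i \neq j$, where $E_{ij}$ has a $1$ in position $(i,j)$ and zeros elsewhere. Since any positive semidefinite $X$ is Hermitian, $X_{ij}$ and $X_{ji}$ are complex conjugates, so these constraints suffice to force all off-diagonal entries to vanish. With $X$ now necessarily diagonal, the condition $X \geq 0$ is equivalent to $X_{ii} \geq 0$ for all $i$, which recovers $\mathbf{x} \geq \mathbf{0}$ exactly.

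At this point the problem is in the form of equation (\ref{slacksemi}): a linear objective in $\mathrm{Tr}[C^T X]$, a collection of equality constraints (the off-diagonal-killing ones), a collection of inequality constraints (the rows of $A\mathbf{x} \leq \mathbf{q}$), and the positivity constraint $X \geq 0$. The preceding lemma then directly converts this into a semidefinite program in canonical form (\ref{primalsemidefinite}) via the block-diagonal slack variable trick. I do not anticipate a serious obstacle here — the only subtle point is recognising that one must explicitly enforce diagonality of $X$, which is easily achieved through linear equality constraints on the off-diagonal entries; everything else is bookkeeping.
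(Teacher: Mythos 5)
Your proof is correct and follows the same overall strategy as the paper's: encode $\mathbf{x}$ on the diagonal of $X$, use diagonal $C$ and $N_j$ to reproduce the objective and inequality constraints, and then invoke the preceding lemma to reach canonical SDP form. The one genuine difference is that you explicitly force $X$ to be diagonal by adding equality constraints on the off-diagonal entries, whereas the paper simply declares $X$ diagonal by fiat. Strictly speaking the paper's formulation leaves the off-diagonals of $X$ as free variables, so your added constraints make the correspondence between SDP-feasible $X$ and LP-feasible $\mathbf{x}$ exact and bijective, which is cleaner. That said, the extra constraints are not actually necessary: since neither the objective $\mathrm{Tr}[C^T X]$ nor any constraint $\mathrm{Tr}[N_j^T X]\leq q_j$ involves an off-diagonal entry, and since any $X\geq 0$ has non-negative diagonal, the map $X\mapsto\mathrm{diag}(X)$ takes SDP-feasible points to LP-feasible points with the same objective value (and $\mathbf{x}\mapsto\mathrm{diag}(\mathbf{x})$ does the reverse), so the optimal values already coincide without the diagonality-forcing equalities. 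One minor slip: your appeal to the Hermitian/complex-conjugate structure of $X$ is out of place here, since the paper works with real symmetric matrices; in that setting $X_{ij}=X_{ji}$ trivially, so you need only one constraint per unordered off-diagonal pair, but adding both is harmless.
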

\begin{proof}
One may convert the objective function
\begin{equation}
\text{maximise } \mathbf{c}^T\mathbf{x}=\sum_i c_i x_i =\sum_i C_{ii}X_{ii} = \mathrm{Tr}\left[C^T X\right] \equiv \text{minimise } \mathrm{Tr}\left[(-C)^T X\right]
\end{equation}
where $C$ and $X$ are diagonal matrices with $C_{ii}=c_i$, $X_{ii}=x_{i}$. Each linear programming constraint thus becomes 
\begin{equation}
[A]_j\mathbf{x}\leq q_j \rightarrow \mathrm{Tr}[N_j^TX]\leq q_j
\end{equation}
with $N_j$ a diagonal matrix $[N_{j}]_{ii}=[A]_{ji}$. We than then turn these inequalities into equalities using the slack variables described in lemma \ref{slacksemi}.
\end{proof}

Similar to linear programming, every canonical semidefinite program has a dual form. For a problem of the form (\ref{primalsemidefinite}), the dual is given by:
\begin{equation}
\text{maximise } \mathbf{b}^T\mathbf{y} \text{ subject to } \sum_k y_k A_k + S = C,\; S\geq 0.
\end{equation}
You can see that for the dual there are two variables; a semidefinite matrix $S$, and a real vector $\mathbf{y}$ (sometimes called the multiplier vector). 
\begin{lemma}
Semidefinite programming problems are \emph{weakly} dual; the solution to the primal $X^*$ and the solution to the dual $\{\mathbf{y}^*,S^*\}$ satisfy 
\begin{equation}
\mathrm{Tr}\left[C^TX^*\right]\geq \mathbf{b}^T\mathbf{y}^*. 
\end{equation}
\end{lemma}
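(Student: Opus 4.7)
The plan is to exploit both primal and dual feasibility of the optimal solutions in order to rewrite $\mathrm{Tr}[C^T X^*]$ in a form where its relationship to $\mathbf{b}^T\mathbf{y}^*$ becomes manifest, with the gap captured by a trace of a product of two positive semidefinite matrices. Concretely, I would start from the dual feasibility constraint $C = \sum_k y_k^* A_k + S^*$, substitute this expression into the primal objective, and then apply linearity of the trace together with the primal equality constraints $\mathrm{Tr}[A_k^T X^*] = b_k$.

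Carrying this out: substituting gives
\begin{align}
\mathrm{Tr}[C^T X^*] &= \mathrm{Tr}\left[\left(\sum_k y_k^* A_k + S^*\right)^{\!T} X^*\right] \nonumber\\
&= \sum_k y_k^* \,\mathrm{Tr}[A_k^T X^*] + \mathrm{Tr}[(S^*)^T X^*] \nonumber\\
&= \sum_k y_k^* b_k + \mathrm{Tr}[S^* X^*] \nonumber\\
&= \mathbf{b}^T \mathbf{y}^* + \mathrm{Tr}[S^* X^*],
\end{align}
where I have used that $S^*$ is symmetric since $S^* \geq 0$. The claim $\mathrm{Tr}[C^T X^*] \geq \mathbf{b}^T\mathbf{y}^*$ then reduces to showing the nonnegativity of the residual term $\mathrm{Tr}[S^* X^*]$.

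The step I expect to be the \emph{only} real content of the argument (though still elementary) is the final inequality $\mathrm{Tr}[S^* X^*] \geq 0$ for two positive semidefinite matrices. I would establish this by writing $S^* = T^T T$ (a Cholesky-type factorisation, guaranteed since $S^* \geq 0$) and then using the cyclicity of trace to get
\begin{equation}
\mathrm{Tr}[S^* X^*] = \mathrm{Tr}[T^T T X^*] = \mathrm{Tr}[T X^* T^T] \geq 0,
\end{equation}
where the last step is because $T X^* T^T$ is itself positive semidefinite (as $X^* \geq 0$ implies $v^T T X^* T^T v = (T^T v)^T X^* (T^T v) \geq 0$ for all $v$), and the trace of any positive semidefinite matrix is nonnegative. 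Combining these steps yields the desired weak duality inequality.

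As a brief sanity check, I would note that the argument in fact holds for \emph{any} primal feasible $X$ and \emph{any} dual feasible $(\mathbf{y}, S)$, not merely the optima; the weak duality statement for the optima follows immediately. I would also remark that the nonnegative ``duality gap'' $\mathrm{Tr}[S^* X^*]$ is precisely the quantity whose vanishing characterises strong duality, providing a natural bridge to complementary slackness in the semidefinite setting — analogous to theorem~\ref{Slack} for the linear case — should it be needed later.
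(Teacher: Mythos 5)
The paper states this lemma without proof, so there is no paper argument to compare against; you are essentially filling a gap left implicit by the authors. Your proof is correct and complete: you substitute the dual feasibility condition $C = \sum_k y_k^* A_k + S^*$ into the primal objective, use linearity of trace together with the primal equality constraints to isolate the residual $\mathrm{Tr}[S^*X^*]$, and then show this residual is nonnegative via the factorisation $S^* = T^T T$ and cyclicity of trace. This is the standard weak-duality argument for SDPs, and your observation that it applies to \emph{any} primal-feasible $X$ and dual-feasible $(\mathbf{y},S)$ — not just the optima — is a worthwhile refinement, since the lemma as stated would otherwise presuppose that optima exist (which for SDPs, unlike LPs, they need not). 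One very minor presentational note: the justification ``$S^*$ is symmetric since $S^* \geq 0$'' is better read as part of the \emph{definition} of positive semidefinite in this setting rather than an implication, but this does not affect the validity of the argument.
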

\begin{lemma}[Slater's Condition]
%Suppose there exist strictly feasible solutions $\tilde{X}>0$ and $(\tilde{\mathbf{y}},\tilde{S})$ with $\tilde{S}>0$. Then the optimal values of the primal and dual are achievable, and coincide.
If both the primal and dual are feasible, and there exists a strictly feasible solution $\tilde{X}>0$ to the primal or $(\tilde{\mathbf{y}},\tilde{S})$ with $\tilde{S}>0$ to the dual, then the optimal values of the primal and dual are achievable and coincide.
\end{lemma}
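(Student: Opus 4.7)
The plan is to proceed in two stages. First, I would establish weak duality by a direct calculation: for any primal feasible $X$ and dual feasible $(\mathbf{y},S)$,
\[
\mathrm{Tr}[C^T X] - \mathbf{b}^T \mathbf{y} = \mathrm{Tr}\!\left[\Bigl(C - \sum_k y_k A_k\Bigr)^T X\right] = \mathrm{Tr}[S^T X] \geq 0,
\]
since the trace of the product of two positive semidefinite matrices is nonnegative. This gives $p^* \geq d^*$, where $p^*, d^*$ denote the primal and dual optima. The remaining, and harder, task is to show $p^* = d^*$ with both attained. By the symmetry between primal and dual formulations of an SDP, it suffices to treat the case in which the primal admits a strictly feasible $\tilde{X} > 0$ (the dual Slater case being analogous after recasting the dual as a primal SDP).

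For that case, the key tool I would use is a separating hyperplane argument in an auxiliary image space. Define the linear map $\mathcal{A}(X)_k := \mathrm{Tr}[A_k^T X]$ and the convex set
\[
V := \bigl\{\,(\mathcal{A}(X) - \mathbf{b},\ \mathrm{Tr}[C^T X] - p^*) : X \geq 0\,\bigr\} \subset \mathbb{R}^m \times \mathbb{R}.
\]
By definition of the infimum $p^*$, the set $V$ is disjoint from the open ray $R = \{(\mathbf{0}, s) : s < 0\}$, and the separating hyperplane theorem then yields a nonzero $(\mathbf{y},\lambda)$ satisfying
\[
\mathbf{y}^T\bigl(\mathcal{A}(X) - \mathbf{b}\bigr) + \lambda\bigl(\mathrm{Tr}[C^T X] - p^*\bigr) \geq 0 \qquad \forall\, X \geq 0,
\]
together with $\lambda \geq 0$ obtained by testing against $R$.

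The hard part will be ruling out the degenerate case $\lambda = 0$, and this is precisely where Slater's strict feasibility becomes decisive. If $\lambda = 0$, substituting $X = \tilde{X}$ yields equality, and perturbing $\tilde{X} + \epsilon H$ by an arbitrary Hermitian $H$ remains in the PSD cone for sufficiently small $\epsilon$ because $\tilde{X}$ lies in the interior; applying the inequality to both $+H$ and $-H$ then forces $\sum_k y_k A_k = 0$. Assuming the $A_k$ are linearly independent (redundant constraints being removed without loss of generality), this would give $\mathbf{y} = \mathbf{0}$, contradicting $(\mathbf{y},\lambda) \neq \mathbf{0}$. Hence $\lambda > 0$; rescaling to $\lambda = 1$ the inequality becomes $\mathrm{Tr}\!\left[(C + \sum_k y_k A_k)^T X\right] \geq p^* + \mathbf{y}^T \mathbf{b}$ for all $X \geq 0$. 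Taking $X = 0$ and then $X = t\ket{v}\bra{v}$ with $t \to \infty$ forces both $C + \sum_k y_k A_k \geq 0$ and $p^* + \mathbf{y}^T \mathbf{b} \leq 0$. Setting $\mathbf{y}' := -\mathbf{y}$ and $S := C - \sum_k y_k' A_k \geq 0$ produces a dual feasible pair with $\mathbf{b}^T \mathbf{y}' \geq p^*$; combined with weak duality this gives $\mathbf{b}^T \mathbf{y}' = p^* = d^*$, establishing strong duality and simultaneous attainment. The main technical subtlety to handle in a full writeup is the closedness of $V$ required for the cleanest form of the separation theorem, but Slater's interior point $\tilde{X}$ in fact supplies exactly the slack needed to push the argument through, either by passing to $\overline{V}$ or by invoking a strict-separation variant applied to $R$ and the relative interior of $V$.
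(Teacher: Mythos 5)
The paper states Slater's condition without proof, treating it as a standard result from convex optimization (implicitly via the citation of Boyd--Vandenberghe), so there is no in-paper argument to compare against. Your separating-hyperplane approach is the standard textbook route and is essentially sound: the weak duality computation, the construction of the convex image set $V$, its disjointness from the negative ray, the use of the interior point $\tilde{X}>0$ to rule out a vanishing multiplier $\lambda$, and the extraction of a dual-feasible, dual-optimal pair are all correct, and you rightly flag the closedness/proper-separation technicality as the main thing to tighten in a full write-up.

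The one genuine gap concerns attainment. Your argument, under primal strict feasibility, produces a dual feasible $(\mathbf{y}',S)$ achieving $d^*=p^*$, i.e.\ it establishes \emph{dual} attainment. It does not, and cannot, establish \emph{primal} attainment from one-sided Slater alone. The lemma as worded in the text (``the optimal values of the primal and dual are achievable and coincide'' under strict feasibility of just one side) therefore claims slightly more than the standard theorem delivers, and more than your proof supports. A standard counterexample: minimize $X_{11}$ over $X\geq 0$ subject to $X_{12}+X_{21}=2$; the primal is strictly feasible, the dual is feasible with optimum $0$ and is attained, and $p^*=d^*=0$, but the primal infimum is not attained since feasibility forces $X_{11}X_{22}\geq 1$. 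So you should either (i) explicitly restrict your conclusion to strong duality plus attainment on the side opposite the strictly feasible one, noting that attainment of both requires strict feasibility of both, or (ii) flag the paper's phrasing as a minor overstatement of the standard result. As written, your proposal is correct for the correct theorem, but does not quite close out the stronger claim in the lemma as stated.
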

Thus in many cases, Slater's condition gives us strong duality - though unlike linear programming it is not guaranteed. 
\subsubsection{Methods for Solving Semidefinite Programming Problems}
There does not exists a direct simplex method analog for semidefinite programs; this is because the boundary consists of an infinite number of extremal points (in general). Thus the majority of solvers utilise interior point methods; once again relying on objective-reducing steps converging to the optimal solution. Like linear programming this can be done in polynomial time, often utilising both the primal and dual form of the problem.

\subsection{Conditional Probability Distributions and Polytopes.}
\label{ch5:first:b}
Returning to quantum theory, we have seen how the idea of non-locality is key to what is possible with quantum theory. In particular, we have seen how non-locality can be verified by use of a Bell inequality. This subsection will explain exactly how this is done in more detail. The first step is to expand on Bell's set-up. Consider a situation in which $n$ parties are spatially separated\footnote{Sufficiently separated to prevent communication.}, and each has $m$ measurement choices (often referred to as \emph{inputs}), each of which has $k$ outcomes (referred to as \emph{outputs}). This is commonly called a $n$-party, $m,k$ Bell scenario. Since we will almost exclusively deal with two-party scenarios in this chapter, we shall use the notation $(m,k)$ to represent a two-party, $m$ measurement $k$ outcome scenario. We will also use the notation $(m_A,m_B,k_A,k_B)$ for situations where the two parties have differing numbers of measurements/outcomes. \\

In the two party case, we can characterise the situation by the \emph{conditional probabilities} $p(ab|xy)$ - that is, the joint probability of Alice obtaining outcome $a$ given measurement $x$ whilst Bob obtains outcome $b$ from $y$. Naturally we require that $p(ab|xy)\geq 0$, and that $\sum_{a,b}p(ab|xy)=1$. We shall express this either as a vector $\boldsymbol{\pi}\in\mathbb{R}^{m^2k^2}$, or as a matrix $\Pi\in\mathbb{R}^{mk\times mk}$ in the following way:
\begin{equation}\label{tableform}
\resizebox{\textwidth}{!}{$%
\Pi=\left(\begin{array}{ccc|cccc|ccc}
p(11|11) & \ldots & p(1k_B|11) & \ldots&\ldots&\ldots&\ldots & p(11|1m_B) & \ldots & p(1k_B|1m_B) \\
\vdots & \ddots & \vdots & \ldots&\ldots&\ldots&\ldots & \vdots & \ddots & \vdots \\
p(k_A1|11) & \ldots & p(k_Ak_B|11) & \ldots&\ldots&\ldots&\ldots & p(k_A1|1m_B) & \ldots & p(k_Ak_B|1m_B)\\
\hline
\vdots & \vdots & \vdots & \ddots&&& & \vdots & \vdots &\vdots\\
\vdots & \vdots & \vdots & &&&\ddots & \vdots & \vdots &\vdots\\
\hline
p(11|m_A1) & \ldots & p(1k_B|m_A1) & \ldots&\ldots&\ldots&\ldots & p(11|m_Am_B) & \ldots & p(1k_B|m_Am_B) \\
\vdots & \ddots & \vdots & \ldots&\ldots&\ldots&\ldots & \vdots & \ddots & \vdots \\
p(k_A1|m_A1) & \ldots & p(k_Ak_B|m_A1) & \ldots&\ldots&\ldots&\ldots & p(k_A1|m_Am_B) & \ldots & p(k_Ak_B|m_Am_B)
\end{array}\right).
$}%
\end{equation}
These two representations should be thought of as equivalent, and we shall move from one to the other when convenient. We refer to the submatrix of $\Pi$ corresponding to measurement choices $x,y$ as $\Pi_{|xy}$, and $\Pi(ab|xy)$ as shorthand for $p(ab|xy)$ for a given distribution $\Pi$. \\

These are not the only constraints on our distribution though - it is widely accepted\footnote{If one assumes the Born rule, this follows as a consequence.} that, although non-local, spatially separated quantum probability distributions are $\emph{no-signalling}$ - it is impossible to transmit information between parties by local operations (including measurements) - mathematically this forces:
\begin{align}
\sum_b p(ab|xy)&=\sum_b p(ab|xy'),\;\;\forall a,x,y,y'\\
\sum_a p(ab|xy)&=\sum_a p(ab|x'y),\;\;\forall b,y,x,x'
\end{align}
You can see this condition means that Alice and Bob cannot learn the other's measurement choice from their local choices and outcomes alone. Although the vector $\boldsymbol{\pi}$ is in the space $\mathbb{R}^{m^2k^2}$, the no-signalling and normalisation constraint means that the true dimension of the no-signalling space is 
\begin{equation}
t=m^2(k-1)^2+2m(k-1)
\end{equation} 
or 
\begin{equation}\label{localpolydim}
t=m_Am_B(k_A-1)(k_B-1)+m_A(k_A-1)+m_B(k_B-1)
\end{equation}
when $m_A\neq m_B,\, k_A\neq k_B$.\\

An interesting result is that there exists no-signalling correlations which \emph{cannot} be achieved by quantum mechanics; the most famous example of these is the ``PR box"\cite{PR1994} which gives the following correlations, for outcomes $a,b\in\{0,1\}$ and measurement choices $x,y\in\{0,1\}$:
\begin{equation}
p(ab|xy)=\begin{cases}
\frac{1}{2} &\textrm{if } a\oplus b=xy\\
0 &\textrm{else}
\end{cases}
\end{equation}
which allow for a CHSH value of 4. The physical reason behind why quantum distributions are more restricted than general no-signalling distributions is a topic of much research.\\

Regarding \emph{quantum} distributions, there are a few definitions - some of which are provably \emph{not} equivalent. The first construction is all distributions that can be written 
\begin{equation}\label{qcorr1}
p(ab|xy)=\bra{\phi}M_{a|x}\otimes M_{b|y}\ket{\phi}
\end{equation}
with $\ket{\phi}$ some quantum state, and $\{M_{a|x}\}$ measurement operators satisfying $M_{a|x}M_{a'|x}=\delta_{aa'}$ and $\sum_a M_{a|x}=\mathbb{I}$ (and the same for $\left\{M_{b|y}\right\}$) - this may seem like we are dismissing mixed states and POVMs, but we can always convert these into the above form using purification of mixed states and Naimark's theorem. The above definition is ambiguous as to whether to allow infinite dimensional states or not - it was recently shown in \cite{CS2018} the two scenarios are non-equivalent; however for our purposes this distinction is not important. We shall refer to the set of correlations of this form as $\mathcal{Q}$.\\

Alternatively, we may define the set of quantum correlations as those that may be expressed
\begin{equation}\label{qcorr2}
p(ab|xy)=\bra{\phi}M_{a|x}M_{b|y}\ket{\phi}
\end{equation}
with $M_{a|x}M_{a'|x}=\delta_{aa'}$ and $\sum_a M_{a|x}=\mathbb{I}$ (and for $\{M_{b|y}\}$ as before) but now we condition $[M_{a|x},M_{b|y}]=0$. This definition is equivalent to the description (\ref{qcorr1}) for finite dimension\cite{T2006} but is \emph{not} equivalent\cite{S2016} for infinite dimensional spaces. We refer to this set of distributions as $\mathcal{Q}'$. %As to whether the sets allowing or disallowing infinite dimensional states/measurements are nonequal, this is an important open problem known as \emph{Connes Embedding Conjecture}. Once more, this distinction is not pertinent to this chapter, and we shall implicitly use definition \ref{qcorr1} for this chapter.
This distinction will not play a role in the rest of this chapter; however it is worth outlining. 

\subsubsection{Local Distributions}
\label{ch5:first:b:one}
Within the set of allowable quantum probability distributions, there lie the set of probability distributions which \emph{may} be expressed as the consequence of a hidden variable theory. These may be written 
\begin{equation}
p(ab|xy)=\int_\Lambda q\left(\lambda\right)p(a|x,\lambda)p(b|y,\lambda) d\lambda
\end{equation}
where $\lambda$ may be an arbitrary (or indeed infinite) number of variables with domain $\Lambda$, and $q(\lambda)$ an arbitrary probability density function. This may seem extremely general but fortunately there exists another, neater way to characterise them. For a given number of inputs/outputs, any local distribution can be expressed as a convex combination of \emph{deterministic distributions},
\begin{equation}\label{detstructure}
\left\{\mathbf{d}_i\right\}=\left\{\boldsymbol{\pi}\mmid p(ab|xy)=\delta_{a,a_x}\delta_{b,b_y}\right\}
\end{equation}
i.e. the correlations for which every measurement choice for Alice and Bob will return a specific outcome with certainty. This type of structure, in which every point in a set can be written as a convex combination of a finite set of points is a \emph{polytope}, as we have seen earlier in the chapter. We denote the local polytope as $\mathcal{L}$, with the V-representation of $\mathcal{L}$ exactly the set $\left\{\mathbf{d}_i\right\}$. We can see that the structure in Eq.~(\ref{detstructure}) gives that there are $k_A^{m_A}k_B^{m_B}$ vertices in this set, for the $(m_A,m_B,k_A,k_B)$ scenario.\\

We saw that the other way to represent polytopes is using the \emph{half-plane} representation, or H-representation. In this form we may express $\mathcal{L}$ as
\begin{equation}
\mathcal{L}=\left\{\boldsymbol{\pi}\mmid \mathbf{b}^T_i\boldsymbol{\pi}\leq c_i\;\forall i\right\}
\end{equation}
i.e. a set of linear inequalities. These inequalities provide a criterion for determining locality of probability distributions - they are \emph{Bell inequalities}. The minimal H-representation is exactly the set of inequalities defining the $(d-1)$-dimensional facets of the polytope. For the local polytope, the dimension is equal to that of the no-signalling space, $t$; and the facets which define the polytope are known as either \emph{facet Bell inequalities} or \emph{tight Bell inequalities}. As the H-representation is minimal, this set of facet Bell inequalities will be sufficient to determine locality/non-locality. Due to the fact that converting between the two representations is a non-trivial problem, and the rapidly growing dimension $t$, for most $(m,k)$ scenarios the full list of inequalities is not known.\\

By contrast, the set of all no-signalling distributions forms the \emph{no-signalling polytope} $\mathcal{NS}$. For this polytope, the facet inequalities are known and are simply the positivity conditions, but the extremal points giving $V_{\mathcal{NS}}$ are generally not known, except for $k=2$.\\

The set of quantum distributions does \emph{not} form a polytope - whilst a convex set, there are an infinite number of extremal points. This means the problem of determining whether a probability distribution is quantum achievable is a difficult one in general. However, we shall see there does exist a set of necessary and sufficient conditions we can apply.

\subsection{The Quantum Hierarchy}
We saw earlier in this chapter two equations, (\ref{qcorr1}) and (\ref{qcorr2}), defining the set of quantum distributions. Unfortunately, given a probability distribution, it is very difficult to determine if there exists states/measurements which achieve that distribution. In particular, the possibility of a high - or even infinite - dimensional state means that this search can be almost impossible. Fortunately, a series of neccessary conditions was introduced in \cite{NPA2008}, which utilise the branch of \emph{semidefinite programming} optimisation we saw earlier in the chapter. The conditions define an infinite series of sets $\mathcal{Q}_1,\mathcal{Q}_2\ldots $ which in the limit $\lim_{i\rightarrow \infty}\mathcal{Q}_i =\mathcal{Q'}$. Thus if one can determine that $\Pi\in\mathcal{Q}_i\,\forall i$, then $\Pi\in\mathcal{Q'}$.\\ 

The idea is as follows: first, suppose there exists a quantum distribution of the form in Eq.~(\ref{qcorr2}). One could then construct a set of operators $\mathcal{O}=\left\{O_1\ldots O_n\right\}$, each of which is a combination of $M_{a|x}$ and $M_{b|y}$ under addition and multiplication. Using the properties of the $M_{a|x},M_{b|y}$, one may write all possible linear constraints that the elements $\bra{\phi}O_i^\dagger O_j\ket{\phi}$ must satisfy, as a function of $p(ab|xy)$. As a simple example, set $O_1=M_{a_1|x_1}, O_2=M_{a_2|x_1}, O_3=M_{b_1|y_1}$. Then three such constraints would be:
\begin{align}
\bra{\phi}O_{1}^{\dagger} O_2\ket{\phi}&=0\\
\bra{\phi}O_1^\dagger O_3\ket{\phi}&=p(a_1b_1|x_1y_1)\\
\bra{\phi}O_1^\dagger O_3\ket{\phi}-\bra{\phi}O_3^\dagger O_1\ket{\phi}&=0
\end{align}
and so on. The set of all linearly independent constraints of this form can be written 
\begin{equation}
F(\mathcal{O})=\left\{\sum_{i,j}[F_k]_{ij}\bra{\phi}O_i^\dagger O_j\ket{\phi}=g_k(\Pi)\right\}_k.
\end{equation}
Consider now the question: does there exist a matrix $\Gamma\geq 0$ such that 
\begin{equation}
\sum_{i,j}[F_k]_{ij}\Gamma_{ij}=g_k(\Pi),\;\forall k
\end{equation}
are satisfied? If there exists the quantum construction of the form (\ref{qcorr1}), then such a matrix \emph{must} exist: we simply set $\Gamma_{ij}=\bra{\phi}O_i^\dagger O_j\ket{\phi}$. This does not preclude however other $\Gamma$ existing satisfying these conditions. The semidefinite positivity comes from the condition that
\begin{equation}
\mathbf{v}^\dagger\Gamma\mathbf{v}=\sum_{i,j}v^*_i\bra{\phi}O_i^\dagger O_j\ket{\phi}v_j=\bra{\phi}\sum_i v^*_i O_i^\dagger\sum_j v_j O_j\ket{\phi}=\bra{\phi}V^\dagger V \ket{\phi}\geq 0.
\end{equation}

\subsubsection{A Canonical Set of Operators}
Given this necessary condition is dependent on choosing a set of operators, it is clear the the strictness of this condition will depend on the set chosen. Fortunately for us; the authors of \cite{NPA2008} constructed a family of sets $\mathcal{S}_n$, with the strictness increasing to define the exact quantum set in the limit $n\rightarrow \infty$. 
\begin{definition}
For a given scenario $(m_A,m_B,k_A,k_B)$, the set $\mathcal{S}_n$ is defined as all operators consisting of all non-equivalent products of length $\leq n$ of $M_{a|x},M_{b|y}$, $a\in\left\{1\ldots \left(k_A-1\right)\right\},b\in\left\{1\ldots \left(k_B-1\right)\right\},x\in\left\{1\ldots m_A\right\},y\in\left\{1\ldots m_B\right\}$, excluding null operators. For example
\begin{align}
\mathcal{S}_0=\left\{\mathbb{I}\right\}\\
\mathcal{S}_1=\mathcal{S}_0 \cup \left\{M_{a|x}\right\} \cup \left\{M_{b|y}\right\}\\
\mathcal{S}_2=\mathcal{S}_0 \cup \mathcal{S}_1 \cup \left\{M_{a|x}M_{a'|x'}\right\} \cup \left\{M_{b|y}M_{b'|y'}\right\} \cup \left\{M_{a|x}M_{b|y} \right\}
\end{align}
Satisfaction of $\sum_{i,j}[F(\mathcal{S}_n)_k]_{ij}\Gamma_{ij}=g_k(\Pi),\;\forall k$ by some matrix $\Gamma$ means that $\Pi\in \mathcal{Q}_n$, level $n$ of the quantum hierarchy. 
\end{definition}
The final outcome measurement operators are omitted as they are expressible as $M_{k_A|x}=\mathbb{I}-\sum_{a=1}^{\left(k_A-1\right)}M_{a|x}$ (and similarly for $M_{k_B|y}$) and so do not contribute any new constraints. Similarly sequences $M_{b|y}M_{a|x}$ are omitted as the equivalent operator $M_{a|x}M_{b|y}$ may be used. Finally, null operators (such as $O_i=M_{1|1}M_{2|1}$) are omitted as they only define the trivial relation $0=0$. These sets give a series of necessary conditions for a given distribution $\Pi\in \mathcal{Q}'$  - there must exist a $|\mathcal{S}_n|\times |\mathcal{S}_n|$ matrix $\Gamma$ satisfying all linear conditions $F(\mathcal{S}_n)$, which depend on $\Pi$. \\

\textbf{A Simple Example}: $m=k=2$.\\
Consider the simplest example, the $(2,2,2,2)$ scenario at level $1$. The matrix $\Gamma$ must be of the form
%\begin{equation*}
%\resizebox{\columnwidth}{!}{$
%\begin{array}{c|ccccccccc}
%O & \mathbb{I} & M_{1|x_1} & M_{2|x_1} & M_{1|x_2} & M_{2|x_2} & M_{1|y_1} & M_{2|y_1} & M_{1|y_2} & M_{2|y_2}\\
%\hline
%\mathbb{I}  & 1 & p(1|x_1) & p(2|x_1) & p(1|x_2) & p(2|x_2) & p(1|y_1) & p(2|y_1) & p(1|y_2) & p(2|y_2) \\
%M_{1|x_1} & & p(1|x_1) & 0 & \alpha_a & \beta_a & p(11|x_1y_1) & p(12|x_1y_1) & p(11|x_1y_2) & p(12|x_1y_2) \\ 
%M_{2|x_1} & & & p(2|x_1) & \gamma_a & \delta_a &  p(21|x_1y_1) & p(22|x_1y_1) & p(21|x_1y_2) & p(22|x_1y_2) \\
%M_{1|x_2} & & & & p(1|x_2) & 0 & p(11|x_2y_1) & p(12|x_2y_1) & p(11|x_2y_2) & p(12|x_2y_2) \\ 
%M_{2|x_2} & & & & & p(2|x_2) & p(21|x_2y_1) & p(22|x_2y_1) & p(21|x_2y_2) & p(22|x_2y_2) \\
%M_{1|y_1} & & & & & & p(1|y_1) & 0 & \alpha_b & \beta_b \\
%M_{2|y_1} & & & & & & & p(2|y_2) & \gamma_b & \delta_b \\
%M_{1|y_2} & & & & & & & & p(1|y_2) & 0\\
%M_{2|y_2} & & & & & & & & & p(2|y_2)
%\end{array}$
%}
%\end{equation*}
\begin{equation*}
\begin{array}{c|ccccc}
O & \mathbb{I} & M_{1|x_1} & M_{1|x_2} & M_{1|y_1} & M_{1|y_2}\\
\hline
\mathbb{I}& 1 & p(1|x_1) & p(1|x_2) & p(1|y_1)     & p(1|y_2) \\
M_{1|x_1} &   & p(1|x_1) & \alpha   & p(11|x_1y_1) & p(11|x_1y_2)\\
M_{1|x_2} &   &          & p(1|x_2) & p(11|x_2y_1) & p(11|x_2y_2)\\
M_{1|y_1} &   &          &          & p(1|y_1)     & \beta \\
M_{1|y_2} &   &          &          &              & p(1|y_2)
\end{array}
\end{equation*}

where we have omitted the lower quadrant as $\Gamma$ is Hermitian. The greek letters are free variables (satisfying $\Gamma\geq 0$).
 
\subsubsection{Bell Inequalities and Bell Inequality Classes}
\label{ch5:first:b:two}
Suppose we have a Bell inequality $\mathbf{b}^T_i\boldsymbol{\pi}\leq c_i$ - is its representation unique? Clearly we may multiply both sides by a constant, and it remains the same inequality. As with our conditional probability distributions, we may also equivalently express them in matrix form, $\mathrm{Tr}[B_i^T\Pi]\leq c_i$. More importantly though, the no-signalling constraints on our probabilities mean that there remain other, non-trivial invariance operations we may perform. To illustrate this, we will take a particular $(2,2)$ Bell inequality, %take the CHSH inequality of Eq. (\ref{chshex}). One way we may represent it is as the following:
\begin{equation}
B=\left(\begin{array}{cc|cc}
0 & 1 & 0 & 1 \\
1 & 0 & 1 & 0 \\
\hline
0 & 1 & 1 & 0 \\
1 & 0 & 0 & 1 \\
\end{array}\right),\;\;\mathrm{Tr}\left[B^T\Pi\right]\geq 1.
\end{equation}
Taking the above matrix and applying our knowledge that, for no-signalling distributions, $p(11|11)+p(12|11)=p(11|12)+p(12|12)$, we see that the plane defined by
%Already we see we have moved from expectations to correlations, flipped the sign and changed the boundary value - yet the sets of distributions satisfying and violating the two inequalities exactly coincide.
\begin{equation}
B'=\left(\begin{array}{cc|cc}
-1 & 0 & 1 & 2 \\
1 & 0 & 1 & 0 \\
\hline
0 & 1 & 1 & 0 \\
1 & 0 & 0 & 1 \\
\end{array}\right),\;\;\mathrm{Tr}\left[B'^T\Pi\right]\geq 1
\end{equation}
within the no-signalling space is entirely equivalent. \\

We can also use the normalisation condition to define different representations of the same inequality. We know that $p(11|11)+p(12|11)+p(21|11)+p(22|11)=1$, and therefore can define
\begin{equation}
B''=\left(\begin{array}{cc|cc}
-1 & 0 & 0 & 1 \\
0 & -1 & 1 & 0 \\
\hline
0 & 1 & 1 & 0 \\
1 & 0 & 0 & 1 \\
\end{array}\right),\;\;\mathrm{Tr}\left[B''^T\Pi\right]\geq 0
\end{equation}
and this too is a representation of the same inequality. Indeed, any transformation $kB+s$, $k\neq 0$ will preserve an inequality. In fact, the three matrices above are all representations of the CHSH inequality in Eq.~(\ref{chshex}), where the expectations have been converted to probabilities, and then such transforms as seen above applied. When referring to a Bell inequality, we are implicitly referring to all such representations. To check whether two matrices $B_1,B_2$ represent the same Bell inequality, we can evaluate them both at the extremal points of the local polytope, to obtain vectors $\mathbf{c}_1,\mathbf{c}_2$. If there exists a transformation $k\mathbf{c}_1+s=\mathbf{c}_2$, we can conclude they are representations of the same inequality.\\

 One final point we must consider is that of ``nomenclature" - suppose we took the Bell inequality of the form $B$, and relabelled Bob's measurements, such that measurement 1 is now referred to as measurement 2 and vice versa. We end up with $B_R$, expressible as:
\begin{equation}
B_R=\left(\begin{array}{cc|cc}
0 & 1 & 0 & 1 \\
1 & 0 & 1 & 0 \\
\hline
1 & 0 & 0 & 1 \\
0 & 1 & 1 & 0 \\
\end{array}\right),\;\;\mathrm{Tr}\left[B_R^T\Pi\right]\geq 1.
\end{equation}
 Clearly this is still \emph{a} CHSH inequality, but it does \emph{not} correspond to the same inequality - it defines a different facet of the local polytope. Thus it is vital to distinguish between inequalities and \emph{inequality classes} - the former refers to unique facets of the local polytope, whilst the latter accounts for all facets that are unique up to relabelling of inputs, outputs and parties. For
the specific $(2,2)$ scenario, there are two Bell inequality classes: the CHSH class consisting of 8 inequalities, and the trivial class, consisting of 16 inequalities of the form $p(ab|xy)\geq 0$.\\
 
For a general $(m_A,m_B,k_A,k_B)$ scenario, the possibilities for relabelling grow rapidly. For measurements we may choose any permutation from the symmetry group $S_{m_A}$ and $S_{m_B}$ respectively, whilst for outcomes we may choose a permutation from $S_{k_A}$ and $S_{k_B}$ for \emph{each} measurement of Alice and Bob respectively. This means there are $\left(k_A!\right)^{m_A}\left(k_B!\right)^{m_B}m_A!m_B!$ possible relabellings, or double this if $m_A=m_B$ and we may swap parties (expressed as transposition in the matrix representation). Despite this, relabelled inequalities are not of much interest to us. This is because any relabelling between inequalities can also be applied to probability distributions, meaning all inequalities in the same class will retain the same properties.\\

%Despite this, we do not learn much by relabelling, since any interesting properties one inequality in a class has will be matched by all other inequalities in the class - this is because we could equally apply the same relabelling to the probability distribution for which this property arises.\\ 
 
For this reason, when discussing characterising a $(m_A,m_B,k_A,k_B)$ scenario, we are really interested in the full set of inequality \emph{classes} - using these we can generate all the inequalities by running through all relabellings for a representative of each class. It is worth noting though that the size of each class is not an obvious result, since a relabelling may result in the same inequality - we shall discuss how to combat this later in the chapter.\\

Since the number of extremal local distributions is $n=k_A^{m_A} k_B^{m_B}$ for the $(m_A,m_B,k_A,k_B)$ scenario, we can see that the the problem of facet enumeration for the local polytope scales badly, especially with the number of inputs. One potential loophole to this is that we are only interested in the \emph{inequality classes} - perhaps there is a technique allowing for the quicker generation of just representatives of each inequality class, than of the full list of facets. Some approaches to try and achieve this are detailed in this chapter.

\subsubsection{The No-Signalling Polytope}
\label{ch5:first:b:three}
As mentioned before, the set of no-signalling distributions is a polytope, which we denote $\mathcal{NS}$. The facet-defining inequalities for this polytope are simple - they are $m_Ak_Am_Bk_B$ inequalities of the form $p(ab|xy)\geq 0$, one for each possible choice of $a,b,x,y$.\\

Generally, the extremal points of $\mathcal{NS}$ are not known, except when the two parties are limited to either 2 measurements or 2 outcomes - we shall focus on the latter. Before we look at those, it is worth noting that the extremal local distributions introduced in section \ref{ch5:first:b:one} are also extremal no-signalling distributions (and thus we always know this subset). This is important for our picture of the no-signalling space, in which the local subset is contained within the no-signalling one, with the extremal local points being extremal for $\mathcal{NS}$, whilst the facets of $\mathcal{NS}$ are always facets of the local polytope, $\mathcal{L}$. \\

For the case when $m=k=2$, all non-local extremal no-signalling points are of the form 
\begin{equation}
\Pi_{PR}=\left(\begin{array}{cccc}
\frac{1}{2} & 0 & \frac{1}{2} & 0 \\
0 & \frac{1}{2} & 0 & \frac{1}{2} \\
\frac{1}{2} & 0 & 0 & \frac{1}{2} \\
0 & \frac{1}{2} & \frac{1}{2} & 0 
\end{array}\right)
\end{equation}
up to relabellings; these are the \emph{PR boxes}\cite{PR1994}, which achieve the maximal no-signalling CHSH value. There are 8 such distributions.\\

When $k=2$, the non-local distributions of $\mathcal{NS}$ take the form\footnote{Up to relabelling.} \cite{JM2005}:
\begin{equation}\label{extremalpoints}
\left(\begin{array}{cccccccc}
S & S & S &\ldots & S & L & \ldots & L \\
S & A & S/A & \ldots & S/A & L & \ldots & L \\
S & S/A & S/A & \ldots & S/A & L & \ldots & L \\
\vdots & \vdots & \vdots & & \vdots &\vdots & & \vdots\\
S & S/A & S/A & \ldots & S/A & L & \ldots & L \\
K & K & K & \ldots & K & M & \ldots & M \\
\vdots & \vdots & \vdots & & \vdots &\vdots & & \vdots\\
K & K & K & \ldots & K & M & \ldots & M \\
\end{array}\right)
\end{equation}
with the following $2\times 2$ blocks:
\begin{align*}
S=&\left(\begin{array}{cc}
\frac{1}{2} & 0\\
 0 & \frac{1}{2}
 \end{array}\right) & 
A= &\left(\begin{array}{cc}
 0 & \frac{1}{2}\\
 \frac{1}{2} & 0
 \end{array}\right) &
 K= &\left(\begin{array}{cc}
 \frac{1}{2} & \frac{1}{2}\\
 0 & 0
 \end{array}\right) &
L= &\left(\begin{array}{cc}
 \frac{1}{2} & 0  \\
 \frac{1}{2} & 0
 \end{array}\right) &
 M=&\left(\begin{array}{cc}
 1 & 0  \\
 0 & 0
 \end{array}\right).
\end{align*}
For the scenario where Alice has $m_A$ inputs, and Bob has $m_B$ inputs, we must consider all the  $2m_A$ by $2m_B$ matrices of the form expressed in  Eq.~(\ref{extremalpoints}) - whilst the upper-leftmost $\left(\begin{array}{cc}
S & S \\
S & A \\
\end{array}\right)$ are always fixed, the blocks of outcomes $K, L$ and $M$ may then range from 0 to $m_A-2$ or $m_B-2$ occurrences respectively.\\

\section{Completely Known Scenarios}
For some $(m_A,m_B,k_A,k_B)$ scenarios, all Bell inequality classes are known. We shall present these below.

\begin{itemize}
\item $\mathbf{(2,2,2,2)}$\\
The simplest scenario is which non-locality can be obtained, there are just two classes of facet Bell inequality in this scenario - the trivial class of positivity inequalities which is of size 16, and the CHSH class consisting of 8 inequalities of the form:
\begin{equation}
B_{\mathrm{CHSH}}=\left(\begin{array}{cc|cc}
0 & 1 & 0 & 1 \\
1 & 0 & 1 & 0 \\
\hline
0 & 1 & 1 & 0 \\
1 & 0 & 0 & 1 \\
\end{array}\right),\;\;\mathrm{Tr}\left[B_{\mathrm{CHSH}}^T\Pi\right]\geq 1
\end{equation}
and relabellings of this. From this point on in the chapter, we shall \underline{omit} writing the $\mathrm{Tr}\left[B^T\Pi\right]\geq 1$ - this bound should be assumed unless otherwise stated.\\

\item $\mathbf{(2,m,2,k)}$\\
 In this scenario Alice is limited to just two inputs and two outputs, whilst Bob may perform any $m\geq 2$ measurements each with $k\geq 2$ outcomes - despite this freedom, all these scenarios retain just two inequality classes, again positivity and CHSH. This scenario is illuminating in how lower dimensional facet inequalities \emph{remain} facet inequalities in higher dimensional scenarios; take for example the $(2,3,2,3)$ CHSH inequality
\begin{equation}
B_{\mathrm{CHSH}_3}=\left(\begin{array}{ccc|ccc|ccc}
0 & 1 & 1 & 0 & 1 & 1 & 0 & 0 & 0 \\
1 & 0 & 0 & 1 & 0 & 0 & 0 & 0 & 0 \\
\hline
0 & 1 & 1 & 1 & 0 & 0 & 0 & 0 & 0 \\
1 & 0 & 0 & 0 & 1 & 1 & 0 & 0 & 0 \\
\end{array}\right).
\end{equation}
%\;\;\mathrm{Tr}\left[B_{CHSH_3}^T\Pi\right]\geq 1.
Notice how there are no coefficients for the additional measurement - it is as if Bob ``never chooses" this measurement. For the remaining measurements, we see the coefficients for the third outcome are an exact copy of the second outcome coefficients; Bob is treating these outcomes exactly the same. Thus the above inequality reduces to the $(2,2,2,2)$ CHSH inequality, by ignoring the extra measurement and treating outcome 3 as outcome 2 for each of Bob's measurements, yet the $B_{\mathrm{CHSH}_3}$ inequality - and all possible relabellings - are facets of the $(2,3,2,3)$ local polytope. In this chapter, we shall refer the the process of adding measurement outcomes as ``lifting" - this term is also used when adding an extra measurement; but we shall explicitly state this when performed.\\

As facets of lower dimensional scenarios are also facets of higher dimensions, there will be CHSH facets for all scenarios. Moreover, since there are more relabellings in higher dimensions, the size of an inequality class will always increase. 

\item $\mathbf{(3,3,2,2)}$\\
This is the first scenario in which a non-trivial, non-CHSH inequality is introduced. Often referred to as $I_{3322}$, it can be written
\begin{equation}
I_{3322}=\left(
\begin{array}{cc|cc|cc}
 0 & 0 & 0 & 1 & 0 & 0 \\
 1 & 0 & 0 & 0 & 1 & 0 \\
\hline
 1 & 0 & 0 & 0 & 0 & 0 \\
 0 & 0 & 1 & 0 & 0 & 0 \\
\hline
 0 & 1 & 0 & 0 & 0 & 0 \\
 0 & 0 & 1 & 0 & 0 & 1 \\
\end{array}
\right).
\end{equation}
%,\;\;\mathrm{Tr}\left[I_{3322}^T\Pi\right]\geq 1.
Already this inequality showcases the complexity of quantum theory; not only is the Tsirelson bound not known for this inequality, it is conjectured that to achieve the bound infinite dimensional quantum states are required \cite{PV2010}. This is perhaps the second best understood Bell inequality, yet already our knowledge is limited.\\

\item $\mathbf{(2,2,3,3)}$\\
Another relatively simple scenario, in which only a single extra output is added. This too provides a new inequality class, $I_{2233}$:
\begin{equation}
I_{2233}=\left(
\begin{array}{ccc|ccc}
 1 & \frac{1}{2} & 0 & 1 & \frac{1}{2} & 0 \\
 \frac{1}{2} & 0 & 1 & 0 & 1 & \frac{1}{2} \\
 0 & 1 & \frac{1}{2} & \frac{1}{2} & 0 & 1 \\
 \hline
 1 & 0 & \frac{1}{2} & 1 & 0 & \frac{1}{2} \\
 \frac{1}{2} & 1 & 0 & 0 & \frac{1}{2} & 1 \\
 0 & \frac{1}{2} & 1 & \frac{1}{2} & 1 & 0 \\
\end{array}
\right)
\end{equation}
with the other two classes the trivial class and CHSH.\\

\item $\mathbf{(3,4,2,2)}$\\
There are 6 inequality classes for this scenario, which were presented in \cite{CG2004}.
%For this scenario, it was shown in \cite{CG2004} that there are 6 inequality classes.
%As well as the above cases, a few more have been completely understood - for the $(3,4,2,2)$ scenario it was shown there are $6$ inequalities classes -these were presented in \cite{CG2004}. 

\item \textbf{Partially known cases}\\
For the scenarios $(3,5,2,2)$ and $(4,4,2,2)$ the number of both facets and classes are known, thanks to an isomorphism of two-outcome local polytopes to a class of geometrically interesting ``cut polytopes"\cite{DL1997}. The cut polytopes corresponding to these scenarios were solved in \cite{DS2015}, and we enumerate a full list of representatives of each class for those local polytopes in this chapter. 
\end{itemize}

\begin{table}
\begin{center}
\begin{tabular}{c|c|c|c}
Scenario & Number of Inequality Classes & Number of Facets & Reference \\
\hline
$(2,2,2,2)$ & 2 & 24 & \cite{CHSH1969,F1981}\\
$(2,m,2,k)$ & 2 &$2(2^m-2)(k^2-2)+4km$  &\cite{P2004}\\
$(3,3,2,2)$ & 3 & 684 & \cite{CG2004} \\
$(2,2,3,3)$ & 3 & 1116 & \cite{CG2004,KKCZO2002,CGLMP2002} \\
$(3,4,2,2)$ & 6  & 12,480 &\cite{CG2004} \\
$(3,5,2,2)$ & 7  & 71,340 &\cite{DS2015} \\
$(4,4,2,2)$ & 175 & 36,391,264 & \cite{DS2015} \\
\end{tabular}
\caption[Solved local polytopes]{A list of the solved scenarios; included in this list are the trivial positivity facets. For the final two a complete list of class representatives was not given, and we do this here.}\label{solvedcases}
\end{center}
\end{table}

For all other bipartite scenarios, the complete lists of Bell inequality classes are \emph{not known} - specific families of Bell inequalities, such as $I_{mm22}$\cite{CG2004} have been constructed, but as figure \ref{solvedcases} shows, the number of classes jumps dramatically, so much so that these constructed classes are but a drop in the ocean.

\section{An Algorithm for Generating Bell Inequalities}

\subsection{Quantum Application of Linear Programming}
Given an arbitrary no-signalling distribution, we would like some measure of ``locality" more informative than simply membership or not to the local polytope. One way to give this locality could be using Bell inequalities; however, this raises this issue of which Bell inequality to use, as in higher dimensions the inequality required may not even be known. By contrast, we always know the local deterministic strategies, which correspond to the extremal points of the local polytope; and moreover are a subset of the extremal no-signalling (NS) points. Since our no-signalling space is a convex polytope, we may write any no-signalling distribution as a (not unique) convex combination of extremal points. We can therefore calculate  the decomposition which maximises the weight of the local extremal distributions - this problem can be formulated as a linear program, and works even when the extremal non-local no-signalling points are not known.
\begin{definition}[\cite{ZKBA1999}]\label{primal}
We define the \emph{local weight} of a no-signalling distribution $\mathbf{q}$ as the solution to the problem:
\begin{equation}
\text{maximise } \sum_ix_i \text{ subject to: } \sum_ix_i\mathbf{d}_i\leq \mathbf{q},\; x_i \geq 0.
\end{equation}
By defining $\mathbf{c}=\mathbf{1}$ and $A$ as the matrix $\left(\mathbf{d}_1\ldots \mathbf{d}_n\right)$, we can express this as a canonical primal linear programming problem.

\end{definition} 
If $\mathbf{q}$ is a local distribution, then the maximal value will be 1; similarly if we take an extremal (non-local) NS point, such as a PR box, then the value of this program will be 0. In general, we are able to split a NS distribution $\mathbf{q}$ into two sub-normalised distributions; $A\mathbf{x^*}$, the local distribution which optimises $\mathbf{c}^{T}\mathbf{x}$, and $(\mathbf{q}-A\mathbf{x^*})$, a NS distribution with no local part.
\begin{corollary}\label{DualBell}
If $\mathbf{q}$ is non-local, i.e. $\mathbf{c}^{T}\mathbf{x^*}<1$, then the solution $\mathbf{y^*}$ to the dual is a Bell inequality.
\end{corollary}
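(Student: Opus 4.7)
The plan is to read off the dual program explicitly, interpret dual feasibility as the statement "evaluated on every local deterministic strategy, $\mathbf{y}$ gives value at least $1$", extend this by convexity to all of $\mathcal{L}$, and then use strong duality of linear programming to show that $\mathbf{y}^\ast$ is actually violated by $\mathbf{q}$. This delivers a separating hyperplane between the local polytope and $\mathbf{q}$, which is precisely a Bell inequality in the sense of Section \ref{ch5:first:b:two}.

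First I would write the primal of Definition \ref{primal} in canonical form, with $\mathbf{c}=\mathbf{1}$ and $A=(\mathbf{d}_1\,\mathbf{d}_2\,\cdots\,\mathbf{d}_n)$ the matrix whose columns are the extremal local deterministic distributions. Then the dual program (\ref{duallin}) reads: minimise $\mathbf{q}^T\mathbf{y}$ subject to $A^T\mathbf{y}\geq \mathbf{1}$, $\mathbf{y}\geq \mathbf{0}$. The feasibility constraint $A^T\mathbf{y}\geq \mathbf{1}$ says that $\mathbf{d}_i^T\mathbf{y}\geq 1$ for every extremal local distribution. By the V-representation of $\mathcal{L}$, any $\boldsymbol{\pi}\in\mathcal{L}$ can be written as $\boldsymbol{\pi}=\sum_i\lambda_i\mathbf{d}_i$ with $\lambda_i\geq 0$ and $\sum_i\lambda_i=1$, so
\begin{equation}
\mathbf{y}^T\boldsymbol{\pi}=\sum_i\lambda_i\,\mathbf{y}^T\mathbf{d}_i\;\geq\;\sum_i\lambda_i\;=\;1.
\end{equation}
Hence every feasible dual vector $\mathbf{y}$, and in particular the optimiser $\mathbf{y}^\ast$, defines an inequality $\mathbf{y}^{\ast T}\boldsymbol{\pi}\geq 1$ satisfied by the whole local polytope.

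Next I would invoke strong duality of linear programs: the primal is feasible (take $\mathbf{x}=\mathbf{0}$) and bounded above by $1$, so an optimal primal solution $\mathbf{x}^\ast$ exists and the dual also has an optimum $\mathbf{y}^\ast$ with
\begin{equation}
\mathbf{q}^T\mathbf{y}^\ast \;=\; \mathbf{c}^T\mathbf{x}^\ast \;=\; \sum_i x_i^\ast.
\end{equation}
By hypothesis $\mathbf{q}$ is non-local, which by the previous bullet is equivalent to $\sum_i x_i^\ast<1$, so $\mathbf{q}^T\mathbf{y}^\ast<1$. Combining this with the previous paragraph, $\mathbf{y}^\ast$ separates $\mathbf{q}$ from $\mathcal{L}$: it yields a linear functional that is $\geq 1$ on every local distribution but strictly less than $1$ on $\mathbf{q}$. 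Up to the sign convention fixed in Section~\ref{ch5:first:b:two} (one may rewrite it as $-\mathbf{y}^{\ast T}\boldsymbol{\pi}\leq -1$), this is exactly the definition of a Bell inequality being violated by $\mathbf{q}$, which is what the corollary asserts.

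There is no real obstacle here: the argument is essentially a direct translation of LP duality into the geometric language of the local polytope. The only point that warrants a sentence of care in the write-up is making sure the orientation/normalisation matches the Bell-inequality convention used elsewhere in the chapter (so that "dual optimum" and "Bell inequality" refer to inequalities of the same sign). I would also add the remark, in passing, that the inequality need not be facet-defining — the corollary only guarantees a valid separating Bell inequality, not a tight one; tightness would require $\mathbf{y}^\ast$ to be a vertex of the dual feasible polytope, which is true generically but not automatic.
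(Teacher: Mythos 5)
Your proof is correct and takes essentially the same route as the paper's: read off the dual, observe that dual feasibility $A^T\mathbf{y}\geq\mathbf{1}$ gives local bound $1$, and use strong duality to get the violation $\mathbf{q}^T\mathbf{y}^\ast<1$. The only minor slip is the remark about rewriting with a sign flip — the paper's stated convention is already $\mathbf{b}^T\boldsymbol{\pi}\geq 1$ with violations below $1$, so $\mathbf{y}^\ast$ is a Bell inequality in that convention as-is; your extra sentences (extending to $\mathcal{L}$ by convexity, verifying strong duality applies, noting the output need not be a facet) are correct but merely make explicit what the paper leaves implicit.
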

\begin{proof}
Suppose we have that the local weight $\mathbf{c}^{T}\mathbf{x^*}=k <1$. The dual problem of the one seen in definition \ref{primal} is \\
\begin{equation}\label{DualBellEq}
\text{minimise } \mathbf{q}^{T}\mathbf{y} \text{ subject to: }
A^T\mathbf{y}\geq \mathbf{1},\; \mathbf{y}\geq \mathbf{0}.
\end{equation}
By strong duality, we have that $\mathbf{q}^{T}\mathbf{y^*}=k$. We also have that $\sum_j A_{ji}y_j\geq 1$, $\forall i$. This means the product of $\mathbf{y}$ with each column of $A$, $\mathbf{d}_i$, is always at least 1. Thus, we have a Bell inequality $\mathbf{y}$ with local bound $1$, violated by $\mathbf{q}$, with all positive entries. %Moreover, since $\mathbf{y^*}$ is an optimal solution to the dual, there is no Bell inequality with local bound 1 that gives a lesser value\footnote{larger violation} at $\mathbf{q}$. (REMOVE?)
\end{proof}

We would like to exploit this result in order to generate \emph{all} facet Bell inequalities for a given scenario. In order to do this, we need to choose suitable $\mathbf{q}$, and be sure it is possible for every facet to be a possible solution for a dual problem of the above form.\\

For the choice of $\mathbf{q}$, we look at the complementary slackness condition (theorem \ref{Slack}), which gives us that:
\begin{align}
\left(\mathbf{q}-A\mathbf{x}^*\right)^{T}\mathbf{y}^*&=0,\label{cs1}\\
 \text{and}\nonumber\\
\left(A^{T}\mathbf{y}^*-\mathbf{c}\right)^{T}\mathbf{x}^*&=0.\label{cs2}
\end{align}
The first condition, Eq.~(\ref{cs1}), tells us that the optimum Bell inequality $\mathbf{y}^*$ achieves value $0$ at the non-local part of $\mathbf{q}$, $(\mathbf{q}-A\mathbf{x}^*)$. Note that this is the minimal possible value, since all elements of both $\mathbf{y}^*$ and $\mathbf{q}-A\mathbf{x}^*$ are non-negative. Furthermore, this implies each extremal NS point in the decomposition of $\mathbf{q}-A\mathbf{x}^*$ must achieve value 0.\\

The second condition, Eq.~(\ref{cs2}), implies that for all $i$ either $x^*_i$ or $\left(A^{T}\mathbf{y}^*-\mathbf{c}\right)^{T}_i$ is zero, as both values are non-negative. $x^*_i$ is non-zero iff the local distribution $\mathbf{d}_i$ is in the local weight of $\mathbf{q}$ - for these points we are forced to conclude that $\mathbf{d}_i\mathbf{y}^*=A_i\mathbf{y}^*=c_i=1$ i.e. that the local part of $\mathbf{q}$ saturates the local bound of $\mathbf{y}^*$.\\

This suggest a suitable $\mathbf{q}$ may be an extremal no-signalling distribution, perhaps in a convex combination with low-weighted local points. The following results support this choice, by providing good evidence we will be able to generate every facet.

\begin{theorem}\label{every}
%Let $\boldsymbol{\pi}^T\mathbf{y^*}\geq 1$ be a  facet Bell inequality.
For every (violatable) facet Bell inequality with non-negative entries and local bound 1, there exists an extremal NS point such that the value of the Bell inequality at that point is 0.
\end{theorem}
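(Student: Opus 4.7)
My plan is to prove the contrapositive using the linear programming duality of Definition~\ref{primal} and Corollary~\ref{DualBell}: I will engineer a no-signalling distribution $\mathbf{q}$ for which the given facet Bell inequality $\mathbf{b}$ is itself an optimal dual solution, and then let complementary slackness do the work.

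Suppose for contradiction that $m := \min_{\boldsymbol{\pi} \in \mathcal{NS}} \mathbf{b}^T \boldsymbol{\pi} > 0$. By the maximum principle (Lemma~\ref{Maxprinciple}) this minimum is attained at an extremal no-signalling point $\mathbf{v}^\star$, and by violatability $m \in (0,1)$. Next I pick a point $\mathbf{d}^\star$ in the relative interior of the facet $F = \{\boldsymbol{\pi} \in \mathcal{L} : \mathbf{b}^T\boldsymbol{\pi} = 1\}$, expressed as a strictly positive convex combination of every local extremal in $F$; this is possible because $\mathbf{b}$ being facet-defining guarantees that $F$ is $(t-1)$-dimensional. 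Setting $\mathbf{q}_\epsilon = (1-\epsilon)\mathbf{d}^\star + \epsilon\mathbf{v}^\star$ for small $\epsilon > 0$ yields a no-signalling distribution with $\mathbf{b}^T\mathbf{q}_\epsilon = 1 - \epsilon(1-m) < 1$, hence non-local.

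The crux of the argument is to show that $\mathbf{b}$ is a dual-optimal Bell inequality for $\mathbf{q}_\epsilon$. Dual-feasibility of $\mathbf{b}$ is automatic (local bound $1$, non-negative entries). Any dual-feasible $\mathbf{y}$ with $\mathbf{d}^{\star T}\mathbf{y} = 1$ must satisfy $\mathbf{y}^T\mathbf{d} = 1$ for every local extremal $\mathbf{d} \in F$ (because $\mathbf{d}^\star$ is a strictly positive convex combination of those $\mathbf{d}$'s and $\mathbf{y}^T\mathbf{d} \geq 1$ for each of them); the facet property of $\mathbf{b}$ then pins down the corresponding hyperplane in the NS space to that of $\mathbf{b}$, so $\mathbf{y}$ is NS-equivalent to $\mathbf{b}$ and consequently $\mathbf{y}^T\boldsymbol{\pi} = \mathbf{b}^T\boldsymbol{\pi}$ on every no-signalling vector. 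By standard LP sensitivity this structure persists for sufficiently small $\epsilon > 0$, so the dual optimum value equals $\mathbf{b}^T\mathbf{q}_\epsilon = 1 - \epsilon(1-m)$ and is achieved by $\mathbf{b}$ itself.

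Applying complementary slackness (Eq.~(\ref{cs1})) with $\mathbf{y}^\star = \mathbf{b}$ then gives $(\mathbf{q}_\epsilon - A\mathbf{x}^\star_\epsilon)^T\mathbf{b} = 0$. The vector $\mathbf{q}_\epsilon - A\mathbf{x}^\star_\epsilon$ is non-zero (since the local weight of $\mathbf{q}_\epsilon$ is strictly less than $1$), non-negative, and a sub-normalised no-signalling vector; rescaling gives a genuine no-signalling distribution $\boldsymbol{\pi}_{NL}$ with $\mathbf{b}^T\boldsymbol{\pi}_{NL} = 0$, directly contradicting $m > 0$. Hence $m = 0$, and the minimiser $\mathbf{v}^\star$ is an extremal no-signalling point on which $\mathbf{b}$ evaluates to zero. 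The main technical obstacle is the third paragraph: rigorously establishing NS-equivalence of dual optima for small $\epsilon$ via LP sensitivity, in the presence of the redundant full-probability-space parameterisation of Bell inequalities where representations differing by no-signalling and normalisation constraint vectors must be treated as equivalent.
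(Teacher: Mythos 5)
Your proof follows the same underlying route as the paper's: choose a no-signalling point $\mathbf{q}$ for which the given facet inequality is dual-optimal for the local-weight linear program, then invoke complementary slackness. The paper asserts that such a $\mathbf{q}$ exists (one that violates only the given facet, justified by minimality of the H-representation) without constructing it; you instead build $\mathbf{q}_\epsilon$ explicitly from a relative-interior point $\mathbf{d}^\star$ of the facet plus a small perturbation towards the NS-minimising extremal, which is a nice concrete realisation of the paper's existential step.

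However, the step in your third paragraph, ``so $\mathbf{y}$ is NS-equivalent to $\mathbf{b}$ and consequently $\mathbf{y}^T\boldsymbol{\pi} = \mathbf{b}^T\boldsymbol{\pi}$ on every no-signalling vector,'' contains a genuine gap, not merely a rigour issue. Forcing $\mathbf{y}^T\mathbf{d} = 1$ on every local extremal $\mathbf{d}$ of the facet only pins the affine functional $\mathbf{y}^T$ down on the $(t-1)$-dimensional span $H$ of $F$; on the full $t$-dimensional no-signalling space one still has the residual freedom $\mathbf{y}^T\boldsymbol{\pi} = (1+c)\,\mathbf{b}^T\boldsymbol{\pi} - c$ for some real $c$, because $\mathbf{y}^T - \mathbf{b}^T$ vanishes on $H$ and hence is proportional to $\mathbf{b}^T(\cdot) - 1$ when restricted to the no-signalling affine subspace. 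These are exactly the ``normalisation-shift'' representations you flag in your last sentence, but they cannot simply be ``treated as equivalent'': they agree only on the facet hyperplane and take genuinely different values on every other no-signalling distribution. This matters because the dual objective $\mathbf{q}_\epsilon^T\mathbf{y} = 1 - \epsilon(1+c)(1-m)$ is strictly decreasing in $c$ (since $\mathbf{b}^T\mathbf{q}_\epsilon < 1$), so if $\mathbf{y} \geq \mathbf{0}$ permits any $c > 0$, such a representation would strictly beat $\mathbf{b}$ in the dual; complementary slackness would then give $\mathbf{y}^{\star T}\mathbf{n}_k = 0$ for the shifted $\mathbf{y}^\star$, which unwinds to $\mathbf{b}^T\mathbf{n}_k = c/(1+c) \neq 0$. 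You therefore need a separate argument that the non-negativity constraints on the dual variable rule out $c > 0$ for the particular non-negative representative $\mathbf{b}$ under consideration, before complementary slackness can be applied to $\mathbf{b}$ itself; ``LP sensitivity'' alone does not supply this.
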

\begin{proof}
Let $\boldsymbol{\pi}^T\mathbf{y^*}\geq 1$ be a (violatable) facet Bell inequality. Let us pick a $\mathbf{q}$ such that $\mathbf{q}^T\mathbf{y^*}<1$ but $\mathbf{q}^T\mathbf{b}\geq 1$ for all other facet Bell inequalities $\mathbf{b}\neq \mathbf{y}^*$. This is always possible since $\mathbf{y^*}$ is in the minimal H-representation of the local polytope. Therefore $\mathbf{y^*}$ is the optimum solution to the dual problem and so the complementary slackness theorem gives $\mathbf{q}_{\mathrm{ns}}^T\mathbf{y^*}=0$ where $\mathbf{q}_{\mathrm{ns}}=(\mathbf{q}-A\mathbf{x}^*)$, the non-local part of $\mathbf{q}$. By convexity we can write $\mathbf{q}_{\mathrm{ns}}$ as a sum of extremal no-signalling distributions $\left\{\mathbf{n}_k\right\}$, each of which much satisfy  $\mathbf{n}^{T}_k\mathbf{y^*}=0$ by linearity.
\end{proof}

The above theorem assumes non-negative entries and a local bound 1; however we have the following lemma. 
\begin{lemma}
All Bell inequalities can be written with non-negative coefficients and local bound 1.
\end{lemma}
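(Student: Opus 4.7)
The plan is to reduce the statement to two elementary manipulations: (i) shift the coefficients block by block, using the normalisation identities $\sum_{a,b} p(ab|xy) = 1$, to make every coefficient non-negative; then (ii) rescale by a positive constant to set the local bound to $1$. First, without loss of generality I would rewrite the inequality in the $\geq$ form ($\mathbf{b}^T\boldsymbol{\pi} \geq c$), flipping the overall sign if the original inequality was of $\leq$ type. This costs nothing since multiplying a Bell inequality by a nonzero constant leaves its separating hyperplane unchanged.

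For step (i), define $m_{xy} := \min_{a,b} b_{ab|xy}$ for each pair of measurement settings, and use the identity $\sum_{a,b} p(ab|xy) = 1$ to subtract $m_{xy}$ from every coefficient in the $(x,y)$-block while replacing the bound by $c' := c - \sum_{xy} m_{xy}$. On the space of normalised conditional distributions this rewritten inequality is strictly equivalent to the original (and so has the same set of violating points), but its coefficients $b'_{ab|xy} := b_{ab|xy} - m_{xy}$ are all $\geq 0$ by construction. For step (ii), I would argue that $c' > 0$ for any violatable facet: by assumption there exists some $\boldsymbol{\pi}^\ast$ with $\mathbf{b}'^T\boldsymbol{\pi}^\ast < c'$, while non-negativity of the $b'_{ab|xy}$ and the $p(ab|xy)$ forces $\mathbf{b}'^T\boldsymbol{\pi}^\ast \geq 0$, hence $c' > 0$. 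Dividing through by $c'$ then yields an inequality of the required form $(\mathbf{b}'/c')^T\boldsymbol{\pi} \geq 1$ with non-negative entries.

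The only genuine subtlety---and the main place care is needed---concerns the trivial positivity facets $p(ab|xy) \geq 0$: these are already non-negative, but their natural local bound is $0$, and the strict positivity argument in step (ii) fails because they are not violated by any valid no-signalling distribution. These facets are therefore properly excluded from the scope of the statement (they cannot appear as optimal dual solutions in corollary~\ref{DualBell}, whose dual programme assumes a violated inequality). I do not expect any further difficulty: the proof is essentially a two-line normalisation-and-rescaling argument, and the only modelling decision is whether to also invoke the no-signalling identities $\sum_{b} p(ab|xy) = \sum_{b} p(ab|xy')$ in step (i)---which is unnecessary for the lemma as stated, although it would give additional freedom in choosing the particular representative.
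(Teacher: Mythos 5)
Your core plan---shift the coefficients to remove negativity, then rescale the bound to~1---is the same idea as the paper's proof, but the two decompositions differ in a way that matters at the edges. The paper first shifts \emph{uniformly} by $(1-k)/(m_Am_B)$ to bring the local bound to~1, then, if negative coefficients remain, shifts uniformly again by $\alpha$ (making the bound $1+\alpha\,m_Am_B$) and finally rescales by $1/(1+\alpha\,m_Am_B)$. Since $1+\alpha\,m_Am_B\ge 1>0$ unconditionally, the rescaling never degenerates and no appeal to violatability is needed. You instead subtract the block-wise minima $m_{xy}$ in a single step and then argue $c'>0$ from the existence of a violating distribution. That argument is correct where it applies, and the block-wise shift is in fact a finer normalisation than the paper's uniform one; but it is more aggressive, and pushes $c'$ all the way to~$0$ for the trivial positivity facets.

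The genuine gap is in how you handle that edge case. You conclude that the positivity facets ``are therefore properly excluded from the scope of the statement,'' but the lemma says \emph{all} Bell inequalities, and the positivity facets do in fact admit a representation with non-negative coefficients and local bound~1---one simply adds $1/(m_Am_B)$ to every coefficient of $p(ab|xy)\ge 0$, which leaves all entries strictly positive and raises the bound from~$0$ to~$1$. Your method fails for them only because the block-wise minimum subtraction leaves $c'=0$, which blocks the rescaling step; that is a limitation of the particular shift you chose, not evidence that the facets fall outside the lemma. The fix is to use the paper's two-stage uniform shift (or any shift that over-shoots the block minima by a positive amount), at which point your argument covers every case and you no longer need to invoke violatability at all. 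Your observation that the no-signalling identities are not required---only per-block normalisation---is correct and worth keeping.
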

\begin{proof}
Suppose our inequality has the form\footnote{We shall use $\mathbf{b}^T\boldsymbol{\pi}$ and $\boldsymbol{\pi}^T\mathbf{b}$ interchangeably as context requires.} $\mathbf{b}^T\boldsymbol{\pi}\geq k\neq 1$. Then we may transform $\mathbf{b'}=\mathbf{b}+\fbb{1-k}{m_Am_B}\mathbf{1}$ - this satisfies $\mathbf{b'}^T\boldsymbol{\pi}\geq 1$. Now suppose $\mathbf{b'}$ has negative coefficients, the most negative of which is $-\alpha$. By adding $\alpha$ uniformly to each coefficient, we obtain an inequality $\tilde{\mathbf{b}}$ with local bound $1+\alpha\left(m_A m_B\right)>1$. Finally we can divide each coefficient of $\tilde{\mathbf{b}}$ by $1+\alpha\left(m_A m_B\right)$, to obtain an inequality $\hat{\mathbf{b}}$ with all non-negative entries, and local bound 1. If $k=1$, we may skip the first transformation.
\end{proof}

Based on these results, we apply the following algorithm. We first present the algorithm in the section below, and then break down and explain the steps and choices made. There are two different versions presented, which will be justified.
\subsubsection{A Linear Programming Algorithm for Bell Inequalities}
We begin by choosing a scenario $(m_A,m_B,k_A,k_B)$ we wish to find Bell inequalities for - to take advantage of the no-signalling points, $k_A=k_B$ must be set to 2. Also chosen is a noise value $\eta$, with a small positive value.
\begin{center}
\textbf{\underline{Algorithm 1}} (\textbf{\underline{Algorithm 2}})
\end{center}
\begin{enumerate}
\item The first step of the algorithm is to generate the subset of extremal no-signalling distributions of the form (\ref{extremalpoints}), with only $S$ and $A$ components. We label this set $\tilde{V}_{\mathcal{NS}}$. Also generated is $A=\left\{\mathbf{d}_1\ldots \mathbf{d}_n\right\}$, and the polytope dimension $t=(m_A(k_A-1)+1)(m_B(k_B-1)+1)-1$.
%\item The first step is for the algorithm to calculate the set of no-signalling points, $\tilde{V}_{\mathcal{NS}}$ - we may take the subset of distributions composed only of $S$ and $A$ components. Also created is $M$, the matrix whose columns are local deterministic distributions, and the dimension of the polytope $d=(m_A(n_A-1)+1)(m_B(n_B-1)+1)-1$.
\item[(1b)] The set of permutations \textit{on the set of deterministic local points} under all possible relabellings, $\Omega$, is generated.
\item A no-signalling point $\mathbf{q}$ is chosen from  $\tilde{V}_{\mathcal{NS}}$. The dual problem
\begin{equation}\label{algorithmdual}
\text{minimise } \mathbf{q}^{T}\mathbf{y} \text{ subject to: }
A^T\mathbf{y}\geq \mathbf{1},\; \mathbf{y}\geq \mathbf{0}
\end{equation}
is then solved using the simplex method, giving optimal solution $\mathbf{y}^*=\mathbf{b}$. 
\item The vector $\mathbf{c_b}=A^T\mathbf{b}$ gives the value of $\mathbf{b}$ at each local deterministic point. We affinely fix this vector (explained in Section~\ref{Affine_Fix}) to obtain vector $\hat{\mathbf{c}}_\mathbf{b}$.
%\item $\mathbf{b}$ is evaluated at all the local deterministic  points, $\left\{\mathbf{L}_i\right\}_{i=1}^n$ to obtain a list of values $\mathcal{F}^{\mathrm{raw}}_\mathbf{b}$. This list is then affinely fixed, as seen in Section~\ref{probandbell}. We name this list $\mathcal{F}_{\mathbf{b}}$.
\item The submatrix $A_{\mathbf{b}}$ is formed with columns $\{\mathbf{d}_i\}$ such that $A_{\mathbf{b}}^T\mathbf{b}=\mathbf{1}$.
%\item The subset of $\left\{\mathbf{L}_i\right\}_{i=1}^n$ which achieve value 1 are stored in a set $\mathcal{S}$.
%\item  The affine rank of $\mathcal{S}$ is calculated. If the rank is $d-1$, then $\mathbf{b}$ is added to our list of facet inequalities, and $\mathcal{F}_{\mathbf{b}}$ is stored with it.
\item The matrix rank of  $A_{\mathbf{b}}$ is calculated. If $\mathrm{rank}[A_{\mathbf{b}}]=t$, the dimension of the local polytope, then $\mathbf{b}$ is added to our list of facet inequalities, and $\hat{\mathbf{c}}_\mathbf{b}$ is stored with it. Also stored is the tally of $\hat{\mathbf{c}}_\mathbf{b}$ elements, $\mathcal{T}_\mathbf{b}$.
\item Two columns, $\mathbf{d}_{\alpha},\mathbf{d}_{\beta}$  are chosen from $A_{\mathbf{b}}$, and a new ``noisy" distribution 
\begin{equation}
\mathbf{q}'=\left(1-\frac{3\eta}{2}\right)\mathbf{q}+\eta\mathbf{d}_{\alpha}+\frac{\eta}{2}\mathbf{d}_{\beta}
\end{equation}
is formed. %Points from $\mathcal{S}$ are used to ensure they are ``close" to $\mathbf{q}$. %- using random local points would increase the chance of having no valid solutions.
The problem 
\begin{equation}
\text{minimise } \mathbf{q}'^{T}\mathbf{y} \text{ subject to: }
A^T\mathbf{y}\geq \mathbf{1},\; \mathbf{y}\geq \mathbf{0}
\end{equation}
is then solved to give optimum $\mathbf{y}^*=\mathbf{b}'$. 
\item The vector $\mathbf{c_{b'}}=A^T\mathbf{b}'$ is calculated, and affinely fixed to give $\hat{\mathbf{c}}_{\mathbf{b}'}$.
%\item The  value of $\mathbf{b}'$ at the local deterministic points are calculated, and the resulting list ($\mathcal{F}^{\mathrm{raw}}_{\mathbf{b}'}$) is affinely fixed to obtain $\mathcal{F}_{\mathbf{b}'}$. 
%\item The subset of $\left\{\mathbf{L}_i\right\}_{i=1}^n$ which achieve value 1 are stored in a set $\mathcal{S}$.
\item The submatrix $A_{\mathbf{b}'}$ is formed with columns $\{\mathbf{d}_i\}$ such that $A_{\mathbf{b}'}^T\mathbf{b}'=\mathbf{1}$.
%\item  The affine rank of $\mathcal{S}$ is calculated. If the rank is $d-1$, then the tally of $\mathcal{F}_{\mathbf{b}'}$, $\mathcal{T}_{\mathbf{b}'}$, is checked against the tallies of all affine lists corresponding to already found facet inequalities.
\item The matrix rank of  $A_{\mathbf{b}'}$ is calculated. If $\mathrm{rank}[A_{\mathbf{b}'}]=t$, then the tally $\mathcal{T}_{\mathbf{b}'}$ of $\hat{\mathbf{c}}_{\mathbf{b}'}$ elements is checked against all stored tallies.
\item If $\mathcal{T}_{\mathbf{b}'}$ is different to all stored tallies, then $\mathbf{b}'$ is added to the facet inequality list. If it is equal, then it is discarded.
\item[(10b)] If $\mathcal{T}_{\mathbf{b}'}$ is found equal to a stored tally, we permute $\hat{\mathbf{c}}_{\mathbf{b}'}$ from Step 7 under every permutation $\omega \in \Omega$. If it matches a previously stored vector, then we may conclude they are inequalities belonging to the same class, and discard $\mathbf{b}'$. If not, $\mathbf{b}'$ is a new facet inequality class and $\mathbf{b}'$, $\hat{\mathbf{c}}_{\mathbf{b}'}$ and $\mathcal{T}_{\mathbf{b}'}$ are stored.
\item Steps 6-10 (6-10b) are then repeated for all possible pairs $\mathbf{d}_{\alpha},\mathbf{d}_{\beta}$ from $A_{\mathbf{b}}$.
\item Steps 2 to 12 are then repeated for a new $\mathbf{q}$ from $\tilde{V}_{\mathcal{NS}}$, until all no-signalling points have been checked.  From here on, Step 5 is also checked in the manner of Steps 9-10 (9-10b).
\end{enumerate}
The simplex algorithm we use is deterministic - although there are many possible optima due to the degeneracy of the problem, rerunning the algorithm will output the same solution. Therefore in order to increase our list of inequalities we employ the technique that, once the full algorithm above has been run, we can change the value of the noise parameter in order to obtain different solutions. \\

\subsection{Choice of the Simplex Algorithm}
We already saw in corollary \ref{DualBell} how a non-local $\mathbf{q}$ will mean the solution to the dual problem will be a Bell inequality - so why do we specify a particular solution method? The reasoning is that we are looking to enumerate the $\emph{facet}$ Bell inequality classes of a given scenario. In a linear programming problem, optimisation takes place over a (possible unbounded) polytope - and the dual problem takes place over a corresponding dual polytope. We saw how, for the polar dual, the extremal points of the dual polytope corresponded to the facets of the primal polytope. For the above linear program the primal polytope is the local polytope; the dual polytope given by our dual problem is not the polar duel, but nevertheless many of its extremal points still correspond to facet inequalities of the local polytope. The simplex algorithm explores these extremal points, and thus our solution will  often result in a facet inequality. If we were to use the interior point method however; we could end up with a solution corresponding to a convex combination of facet inequalities; from which we would be unable to retrieve the facets. This problem is exacerbated by the fact that an extremal no-signalling point often violates (even maximally) many different inequalities. 

\subsubsection{Possible Solutions for the Linear Program}
Given the maximum principle (lemma \ref{Maxprinciple}) states that the optimal solution of the linear program will be acheived at an extremal point of the solution space polytope, we wish to know what the extremal points of problem  (\ref{DualBellEq}) are, and are they the facet inequalities? \\

Due to the computational requirements, we are only able to enumerate these polytopes for two scenarios\footnote{We may also enumerate $(2,3,2,2)$ and $(2,2,3,3)$ - The first scenario does not provide any new classes, whilst we do not know the extremal no-signalling points for the second.} $(2,2,2,2)$ and $(3,3,2,2)$, and so will have to test our predictions on these scenarios. It should be noted however, that these are not in general representative of $(m_A,m_B,2,2)$ scenarios, due to the low number of inequality classes. \\

Performing vertex enumeration of the conditions $\mathbf{d}_i^T\mathbf{y}\geq 1$, $\mathbf{y}\geq \mathbf{0}$ we find the following polytopes:
\begin{itemize}
\item $\mathbf{(2,2,2,2)}$\\ 
The extremal points of the dual solution polytope are given in table \ref{2222poly}. We see this is an unbounded polytope, although the rays correspond to non-violatable Bell inequalities. We also see there exist normalisation condition vertices, but the solution will occur at a CHSH vertex - of which there are many redundant vertices corresponding to the same inequality.
\begin{table}[!h]
\begin{center}
\begin{tabular}{cccc}
Type & $\#$ & Inequality Class & Additional Info \\
\hline
Ray & 16 & Positivity conditions & Each is of the form $y_i=\delta_{ik},k\in{1\ldots 16}$.\\
Vertex & 12 & Normalisation condition & Each of these satisfies $\mathbf{y}^T\mathbf{q}=1$, $\mathbf{q}\in\mathcal{NS}$.\\
Vertex & 16 & Positivity conditions & \\
Vertex & 104 & CHSH & 
%Vertex & 96 & CHSH & \\
%Vertex & 8 &  CHSH & These differ in tally from the previous class.
\end{tabular}
\caption[The linear programming search space for $(2,2,2,2)$]{The linear programming variable search space for $(2,2,2,2)$ - note that the space is unbounded, although the problem itself is not.}\label{2222poly}
%\caption{The affine tally of both $B_1$ and $B_2$}\label{localtally}
\end{center}
\end{table}

%\end{itemize}
%We see that many redundant versions of the facet inequalities are present, as well as vertices corresponding to normalisation condition. Moreover, this is an unbounded polytope, with rays corresponding to the $16$ non-violatable Bell inequalities.\\

Although these new and redundant vertices are not ideal, the algorithm does not enumerate the full polytope, and our solution will necessarily be a CHSH (and therefore facet) inequality, since the normalisation and positivity conditions are non-violatable.
%\begin{itemize}
\item $\mathbf{(3,3,2,2)}$\\ 
The extremal points for this scenario are given in table \ref{33222poly}.
We see the number of vertices grows enormously, and importantly we see the addition new ``non-facet" vertices, in comparison to the $(2,2,2,2)$ case. These correspond to Bell inequalities defining lower dimensional faces; and are expressible as convex combinations of the facet inequalities. They appear as vertices due to the additional constraints that $y_i\geq 0$.

%More importantly, we see the addition of new ``non-facet" vertices - these points correspond to Bell inequalities defining lower dimensional faces; and could be expressible as convex combinations of the facet inequalities. The reason they appear as vertices in this scenario is due to the additional constraints that $y_i\geq 0$.
\begin{table}[!h]
\begin{center}
\resizebox{\textwidth}{!}{%
\begin{tabular}{cccc}
Type & $\#$ & Inequality Class & Additional Info \\
\hline
Ray & 36 & Positivity conditions & Of the form $y_i=\delta_{ik},k\in{1\ldots 36}$.\\
Vertex & 45 &  Normalisation condition & \\
Vertex & 144 & Positivity conditions &\\
Vertex & 2952 & CHSH  & \\%Grouping all CHSH-type vertices together\\
Vertex & 248832 &$I_{3322}$ & \\
Vertex & 442176 & Non-facet inequalities & These vertices correspond to lower dimensional faces of $\mathcal{L}$.
\end{tabular}
}
\caption[The linear programming search space for $(3,3,2,2)$]{The linear programming variable search space for $(3,3,2,2)$ - we see here the introduction of extremal points which correspond to non-facet Bell inequalities.}\label{33222poly}
%\caption{The affine tally of both $B_1$ and $B_2$}\label{localtally}
\end{center}
\end{table}
\end{itemize}

Our constraints that $\mathbf{d}_i^T\mathbf{y}\geq 1$ are intuitive, but the inequalities $y_i\geq 0$ appear simply from the canonical form of the linear program. One could instead consider only the space defined by $\mathbf{d}_i^T\mathbf{y}\geq 1,\;\forall i$. The extremal points of these polytopes are given in table \ref{2222neg} and table \ref{3322neg} for the $(2,2,2,2)$ and $(3,3,2,2)$ scenarios respectively.
\begin{table}[!h]
\begin{center}
\begin{tabular}{cccc}
Type & $\#$ & Inequality Class & Additional Info \\
\hline
Vertex & 1 & Normalisation & Our lone vertex is simply the condition $\mathbf{y}^T\mathbf{q}=1$.\\
Ray & 16 & Positivity conditions\\
Ray & 8 & CHSH 
\end{tabular}
\caption[An alternative $(2,2,2,2)$ search space]{The alternative $(2,2,2,2)$ search space, allowing negative values.}\label{2222neg}
%\caption{The affine tally of both $B_1$ and $B_2$}\label{localtally}
\end{center}
\end{table}

\begin{table}[!h]
\begin{center}
\begin{tabular}{cccc}
Type & $\#$ & Inequality Class & Additional Info \\
\hline
Vertex & 1 & Normalisation & Our lone vertex is again the condition $\mathbf{y}^T\mathbf{q}=1$.\\
Ray & 36 & Positivity conditions\\
Ray & 72 & CHSH \\
Ray & 576 & $I_{3322}$ 
\end{tabular}
\caption[An alternative $(3,3,2,2)$ search space]{The alternative $(3,3,2,2)$ search space, allowing negative values; we see no non-facet vertices, but our facet vertices have become rays.}\label{3322neg}
%\caption{The affine tally of both $B_1$ and $B_2$}\label{localtally}
\end{center}
\end{table}

We see that these solution space look very promising - except for a single vertex, every ray corresponds to a facet inequality, with no degeneracy. We may allow negative values of $\mathbf{y}$ in linear programming by defining for each $y_k$ two new variables $y_k=y_k^+ -y_k^-$. We then replace our conditions:
\begin{align*}
\mathbf{d}_i^T\mathbf{y}&\geq 1,\;\forall i &&\rightarrow& \mathbf{d}_i^T\mathbf{y}_+ - \mathbf{d}_i^T\mathbf{y}_- &\geq 1,\;\forall i,\\
\mathbf{y}&\geq \mathbf{0} &&\rightarrow& \mathbf{y}^+\geq \mathbf{0},\mathbf{y}^-&\geq \mathbf{0}.
\end{align*}
 Unfortunately, this new linear programming problem, with objective function $\mathbf{q}^T\mathbf{y}^+-\mathbf{q}^T\mathbf{y}^-$ is \emph{unbounded}, as we can no longer bound our objective function by 0 from below. Therefore, this solution space cannot be used for our linear program.\\
%So far, this looks very promising - except for a single vertex, every ray corresponds to a facet inequality, with no degeneracy. Naturally we want to know if we may apply this practically to linear programming, so that we may avoid lower-dimensional scenarios. To do this, for each $y_k$ we define two new variables $y_k=y_k^+ -y_k^-$. Now we can replace the conditions  $\mathbf{L}_i\mathbf{y}\geq 1$ by  $\mathbf{L}_i\mathbf{y}_+ - \mathbf{L}_i\mathbf{y}_- \geq 1$, $\mathbf{y}^+,\mathbf{y}^-\geq \mathbf{0}$. Now optimisation is possible over negative values of $\mathbf{y}$. This would be unwise however; now we have lost our $\mathbf{y}_i\geq 0$ condition, no longer is our violation bounded by $0$; in fact, our violation becomes \emph{unbounded}! Thus, we cannot apply our linear program.

We have seen earlier in the chapter there already exists a bounded polytope, whose vertices are \emph{exactly} the facet Bell inequalities; the polar dual of the local polytope. Although $\mathbf{0}\notin \mathcal{L}$, we can shift the origin to account for this. The natural choice for this is the \emph{uniform distribution}, in which every outcome has equal probability - this distribution is also an equal mixture of every local extremal point. Once more, we use $\mathbf{y}=\mathbf{y}^+-\mathbf{y}^-$; since the polar dual is bounded we will not obtain an unbounded problem. Despite this, the polar dual is \emph{not} a good choice, as we shall see in the results section.
%A sharp eyed observer may note we already have an excellent bounded polytope, whose vertices are $\emph{exactly}$ the facet Bell inequalities; namely the polar dual of the local polytope. Although $\mathbf{0}\notin \mathcal{L}$, we can shift the origin to account for this. The natural choice for this is the \emph{uniform distribution}, in which every outcome has equal probability - this distribution is also an equal mixture of every local extremal point. Here we again use $\mathbf{y}=\mathbf{y}^+-\mathbf{y}^-$; since the polar dual is bounded we will not obtain an unbounded problem. Despite this, the polar dual is \emph{not} a good choice, as we shall see in the results section.

\subsection{Omission of the $K,L,M$ components}
Choosing no-signalling points of the form seen in Eq.~(\ref{extremalpoints}) already proffers an advantage when searching for Bell inequality classes - all violatable inequalities can be relabelled appropriately to create a class member violated by a no-signalling point of this form. By limiting the choice of $\mathbf{q}$ to these points, we do not remove any classes, but remove many redundant relabelled inequalities. This form does not completely remove all equivalences of no-signalling points under relabelling - this extra pre-processing step could be done to cut down on redundant inequalities further. One thing we can do with certainty though, is to cut out all no-signalling points with $K,L$ and $M$ components. We prove this in the following lemma:
\begin{lemma}\label{noKLM}
For every positive valued, local bound 1 facet $\mathbf{b}$ of the $(m_A,m_B,2,2)$ local polytope, there exists a facet $\hat{\mathbf{b}}$ in the same inequality class, and an extremal no-signalling distribution $\mathbf{q}$ of the form in Eq.~(\ref{extremalpoints}) whose blocks consist solely of $S$ and $A$, such that $\hat{\mathbf{b}}^T\mathbf{q}=0$.
\end{lemma}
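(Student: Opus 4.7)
The plan is to leverage Theorem~\ref{every}, which guarantees some extremal NS distribution $\mathbf{q}'$ satisfying $\mathbf{b}^T\mathbf{q}'=0$. If $\mathbf{q}'$ already consists solely of $S$ and $A$ blocks, we are done by setting $\hat{\mathbf{b}}=\mathbf{b}$ and $\mathbf{q}=\mathbf{q}'$. Otherwise $\mathbf{q}'$ has a form described by Eq.~(\ref{extremalpoints}) with some $j\leq m_A$ and $j'\leq m_B$, where a $2j\times 2j'$ sub-grid contains only $S$ and $A$ blocks and the remaining grid is filled with $K,L,M$ blocks, which encode the fact that the extra inputs of Alice and/or Bob are used to play local deterministic sub-strategies.

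First I would apply a permutation of Alice's and Bob's input labels, together with the induced permutation of the columns of $A$, to move the $S/A$ core of $\mathbf{q}'$ into the top-left corner; this relabelling is class-preserving and produces a facet $\hat{\mathbf{b}}$ in the same class as $\mathbf{b}$ along with an extremal NS distribution $\mathbf{q}_0$ with $\hat{\mathbf{b}}^T\mathbf{q}_0=0$. Because $\hat{\mathbf{b}}$ has non-negative entries and $\mathbf{q}_0$ has non-negative entries, the equation $\hat{\mathbf{b}}^T\mathbf{q}_0=0$ forces each coefficient of $\hat{\mathbf{b}}$ to vanish on every position where the corresponding entry of $\mathbf{q}_0$ is non-zero. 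In particular, for every input $x>j$ (a $K$- or $M$-row of $\mathbf{q}_0$) the coefficients $\hat{\mathbf{b}}(1b|xy)=0$ for all $b,y$, and analogously the rows of $\hat{\mathbf{b}}$ corresponding to Bob's extra inputs are forced to vanish on the positions where $L$ and $M$ are supported.

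Next I would construct $\mathbf{q}\in\tilde{V}_{\mathcal{NS}}$ by replacing each $K$, $L$ and $M$ block in $\mathbf{q}_0$ with an $S$ block, extending the $S/A$ core to fill the whole $2m_A\times 2m_B$ grid while remaining a valid extremal no-signalling distribution of the form Eq.~(\ref{extremalpoints}). The contributions to $\hat{\mathbf{b}}^T\mathbf{q}$ from the untouched $S/A$ core are exactly those of $\hat{\mathbf{b}}^T\mathbf{q}_0$ and therefore vanish; the contributions from the newly substituted $S$ blocks should vanish because the coefficients of $\hat{\mathbf{b}}$ at the relevant positions were already shown to be zero.

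The main obstacle is that the non-zero support of an $S$-block is not a subset of the non-zero support of a $K$- or $L$-block: $S$ is supported on the diagonal $(1,1)$ and $(2,2)$, while $K$ is supported on $(1,1)$ and $(1,2)$, so the conclusion $\hat{\mathbf{b}}(2b|xy)=0$ does not follow directly from $\hat{\mathbf{b}}^T\mathbf{q}_0=0$. To close this gap I would combine the input permutation above with a relabelling of Alice's outputs on each input $x>j$ (and analogously Bob's outputs on $y>j'$), which swaps $S\leftrightarrow A$ along that row of blocks of any candidate $\mathbf{q}$, and correspondingly permutes which of the four $2\times 2$ deterministic patterns ($K,K',L,L',M,\ldots$) the original $\mathbf{q}'$ displays. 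Systematically choosing these output flips based on the support pattern of the original $K,L,M$ blocks in $\mathbf{q}'$ should yield a representative $\hat{\mathbf{b}}$ in the class of $\mathbf{b}$ together with an $S/A$-only extremal $\mathbf{q}$ for which every coefficient of $\hat{\mathbf{b}}$ hit by the $S$-supports coincides with a coefficient already forced to be zero, giving $\hat{\mathbf{b}}^T\mathbf{q}=0$ as required. The delicate part of the argument, and the one I would expect to require the most care, is verifying that these output flips can be chosen consistently across all extra inputs of Alice and Bob simultaneously rather than individually.
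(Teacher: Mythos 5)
There is a genuine gap, and you have essentially diagnosed it yourself: the support of a $K$ block occupies one entire row of the $2\times 2$ sub-block (positions $(0,0)$ and $(0,1)$), whereas $S$ and $A$ blocks are supported on a diagonal ($(0,0),(1,1)$) or anti-diagonal ($(0,1),(1,0)$). An output flip on Alice's input merely swaps the rows of the $K$ block, so its support is still an entire row and is still incompatible with $S$ or $A$; a flip on Bob's output leaves $K$ unchanged (it is symmetric in its row). The same holds for $L$ and $M$. No combination of output relabellings can ever make the support of a $K/L/M$ block a subset of the support of an $S$ or $A$ block, so the substitution step in your construction can never be justified by the vanishing constraints coming from $\hat{\mathbf{b}}^T\mathbf{q}_0 = 0$ alone. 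Concretely, after any such relabelling you will know exactly one of the two positions that would be activated by $S$ (and exactly one of those activated by $A$) has a zero Bell coefficient, and the other remains unconstrained.

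The missing ingredient is precisely the one the paper exploits: the facet property of $\hat{\mathbf{b}}$ gives additional vanishing constraints through the \emph{saturating local deterministic points}. The paper argues by contradiction: assuming $\hat{\mathbf{b}}^T\mathbf{q}>0$ for all $S/A$-only extremal points, it picks a saturating local point $\mathbf{d}_k$ with $p_A(1|x)=1$ on the $K$-inputs, modifies it to $\mathbf{d}_k'$ by switching those outputs to $0$ (and Bob's $L$-outputs to $0$), and uses monotonicity plus the local bound to force $\hat{B}(10|xy)=0$ or $\hat{B}(11|xy)=0$. Only with this extra column of zeros do the blocks acquire the form shown in Eq.~(\ref{m1form1})--(\ref{m1form2}) that makes an $S$ or $A$ replacement safe. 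The paper then closes the loop not by constructing the NS point at all, but by showing the set of saturating local points would be confined to an affine space of dimension less than $\dim[\mathcal{L}]-1$, contradicting that $\hat{\mathbf{b}}$ defines a facet. In short, your constructive approach cannot be completed with local relabellings alone; you need to bring the local deterministic saturating points into the argument, and the cleanest way to do so (as in the paper) is via contradiction and the dimension count rather than an explicit construction.
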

\begin{proof}
By lemma \ref{every} every facet inequality has at least one $\boldsymbol{\pi}\in V_{\mathcal{NS}}$ such that $\mathbf{b}^T\boldsymbol{\pi}=0$. We may always relabel $\boldsymbol{\pi}$ to obtain $\hat{\boldsymbol{\pi}}$ in the form Eq.~(\ref{extremalpoints}), defining a new inequality $\hat{\mathbf{b}}$ with $\hat{\mathbf{b}}^T\hat{\boldsymbol{\pi}}=0$. Suppose that $\hat{\mathbf{b}}^T\mathbf{q}>0$ for all $\mathbf{q}$ in the form Eq.~(\ref{extremalpoints}) consisting only of $S$ and $A$ blocks. We define the subsets of measurements $\mathcal{M}_K$, $\mathcal{M}_L$ such that, for $\hat{\boldsymbol{\pi}}$:
\begin{align*}
\hat{\Pi}_{|xy}&=K,\; x\in\mathcal{M}_K,\; y\not\in\mathcal{M}_L,\\
\hat{\Pi}_{|xy}&=L,\; x\not\in\mathcal{M}_K,\; y\in\mathcal{M}_L,\\
\hat{\Pi}_{|xy}&=M,\; x\in\mathcal{M}_K,\; y\in\mathcal{M}_L.
\end{align*}
By assumption, at least one of $\mathcal{M}_K,\mathcal{M}_L$ must be non-empty. Suppose without loss of generality it is $\mathcal{M}_K$. As $\hat{\mathbf{b}}^T\hat{\boldsymbol{\pi}}=0$, and the coefficients of $\hat{\mathbf{b}}$ are non-negative, we can therefore conclude
\begin{align}
\hat{B}(0b|xy)&=0,\;b\in\{0,1\},\; x\in\mathcal{M}_K,\; y\not\in\mathcal{M}_L,\label{noughtval1}\\
\hat{B}(a0|xy)&=0,\;a\in\{0,1\},\; x\not\in\mathcal{M}_K,\; y\in\mathcal{M}_L,\label{noughtval2}\\
\hat{B}(00|xy)&=0,\; x\in\mathcal{M}_K,\; y\in\mathcal{M}_L.\label{noughtval3}
\end{align}
Consider the subset $D_K\subset V_{\mathcal{L}}$ of deterministic distributions with $p_A(1|x)=1, x\in\mathcal{M}_K$. Suppose there exists a saturating point $\mathbf{d}_k\in D_K$ i.e. that $\hat{\mathbf{b}}^T\mathbf{d}_k=1$. Now consider another deterministic distribution $\mathbf{d}_k'$ such that $p_A(0|x)=1, x\in\mathcal{M}_K$, $p_B(0|y)=1, y\in\mathcal{M}_L$, but all other probabilities of $\mathbf{d}_k$ are left unchanged. This means that all non-zero probabilities in $\mathbf{d}_k$ are mapped to a probability in $\mathbf{d}_k'$ whose coefficient in $\hat{\mathbf{b}}$ is either unchanged, or 0. Thus we can conclude
\begin{equation}
1\leq \hat{\mathbf{b}}^T\mathbf{d}_k' \leq \hat{\mathbf{b}}^T\mathbf{d}_k =1,
\end{equation}
the lower bound coming from the requirement that $\hat{\mathbf{b}}$ is a Bell inequality. Thus we must have $\hat{\mathbf{b}}^T\mathbf{d}_k' = \hat{\mathbf{b}}^T\mathbf{d}_k$, and therefore:
\begin{align}
\hat{B}(10|xy)=0\text{ or }\hat{B}(11|xy)&=0,\; x\in\mathcal{M}_K,\forall y.\label{noughtval4}
\end{align}
with the choice determined by the deterministic choice of $\mathbf{d}_k$. Combining Eq.~(\ref{noughtval4}) with Eq.~(\ref{noughtval1}) and Eq.~(\ref{noughtval3}) respectively implies
\begin{align}
\hat{B}_{|xy}&=
\left(\begin{array}{cc}
0 & 0\\
0 & \gamma
\end{array}\right)
\text{ or }\hat{B}_{|xy}=
\left(\begin{array}{cc}
0 & 0\\
\gamma & 0
\end{array}\right)
,\;x\in\mathcal{M}_K,\; y\not\in\mathcal{M}_L,\label{m1form1}\\
\hat{B}_{|xy}&=
\left(\begin{array}{cc}
0 & \gamma_1\\
0 & \gamma_2
\end{array}\right)
\text{ or }\hat{B}_{|xy}=
\left(\begin{array}{cc}
0 & \gamma_1\\
\gamma_2 & 0
\end{array}\right)
,\;x\in\mathcal{M}_K,\; y\in\mathcal{M}_L.\label{m1form2}
\end{align}
We can see that these coefficients imply that $\hat{\Pi}_{|xy},\;x\in\mathcal{M}_K,\; y\not\in\mathcal{M}_L$ can be replaced with one of either $S$ or $A$ without increasing the value of $\hat{\mathbf{b}}^T\hat{\boldsymbol{\pi}}$. If $M_L$ is empty, then we can thus conclude that no saturating points lie in $D_K$, as we have contradicted our assumption. If $M_L$ is non-empty, then for our assumption to still hold it must be the case that $\hat{B}_{|xy},\;x\in\mathcal{M}_K,\; y\in\mathcal{M}_L$ must be of the first form in Eq.~(\ref{m1form2}); else we could replace $\hat{\Pi}_{|xy}$ with an $S$ block. Furthermore, both $\gamma_1,\gamma_2\neq 0$, else we could again replace this block with either $S$ or $A$.\\

Consider now the set $D_L\subset V_{\mathcal{L}}$ of deterministic distributions with $p_B(1|y)=1, y\in\mathcal{M}_L$. We may conclude that for any point $\mathbf{d}_l\in D_L$, 
\begin{equation}
1\leq \hat{\mathbf{b}}^T\mathbf{d}_l' <\hat{\mathbf{b}}^T\mathbf{d}_l
\end{equation}
where $\mathbf{d}_l'$ is the deterministic point setting $p_A(0|x)=1, x\in\mathcal{M}_K$, $p_B(0|y)=1, y\in\mathcal{M}_L$, and leaving all other probabilities in $\mathbf{d}_l$ unchanged. This is because all non-zero probabilities are mapped to probabilities with Bell coefficients either the same or 0, while for $x\in\mathcal{M}_K, y\in\mathcal{M}_L$ they are mapped to Bell coefficients \emph{strictly} less than before. Thus no saturating points of $\hat{\mathbf{b}}$ lie in $D_L$.\\

We have thus proved that, for a Bell inequality satisfying our assumption, that the saturating extremal local distributions lie either in $V_{\mathcal{L}}\backslash D_K$ or $V_{\mathcal{L}}\backslash D_L$, with a  non-empty exclusion. Since all deterministic distributions in this set have for at least one measurement either $p_A(0|x)=1$ or $p_B(0|y)=1$, we can conclude 
\begin{align*}
\mathrm{dim}\left[\mathcal{A}(V_{\mathcal{L}}\backslash V_{D_K})\right]\leq \mathrm{dim}\left[(m_A-1,m_B,2,2)\right]<\mathrm{dim}\left[(m_A,m_B,2,2)\right]-1=\mathrm{dim}\left[\mathcal{L}\right]-1,\\
\mathrm{dim}\left[\mathcal{A}(V_{\mathcal{L}}\backslash V_{D_L})\right]\leq \mathrm{dim}\left[(m_A,m_B-1,2,2)\right]<\mathrm{dim}\left[(m_A,m_B,2,2)\right]-1=\mathrm{dim}\left[\mathcal{L}\right]-1.
\end{align*}
This contradicts our assumption that $\hat{\mathbf{b}}$ is a facet, as its affine dimension is too low. Thus we may conclude that such a $\hat{\mathbf{b}}$ is impossible, and we have proved our result.
\end{proof}

\subsection{Addition of Local Distributions}
The simplex algorithm is deterministic: solving the dual problem with the same $\mathbf{q}$ will return us the same Bell inequality candidate. In order to increase the candidates, we add  a small amount of local noise to increase the number of input objective functions. Moreover, complementary slackness tells us that $\left(A^{T}\mathbf{y^*}-\mathbf{c}\right)^{T}\mathbf{x^*}=0$. Provided that the choices $\mathbf{d}_{\alpha}$, $\mathbf{d}_{\beta}$ give the maximal local weight of $\mathbf{q}'$, then the candidate inequality outputted will be saturated by these local points - this will reduce the number of output solutions corresponding to convex combinations of facets. If these local points do $\emph{not}$ provide the maximal local weight, then different $x_i^*$ will be non-zero, and different local points will saturate the candidate inequality. This still cuts down on on the likelihood on a non-facet solution however.\\

As the dimension grows, and the number of extremal points corresponding to non-facet inequalities grows, one could potentially add more local points in order to more stringently constrain the resulting output.

\subsection{Using Matrix Rank to Test a Facet}
Once we have obtained a Bell inequality $\mathbf{y}^*$, we wish to test if it is a facet inequality. This means it must define a face of dimension $(t-1)$, with $t$ the dimension of the local polytope\footnote{Given in Eq.~(\ref{localpolydim}).}. To check this, we calculate the matrix rank of $A_{\mathbf{y}^*}$, the matrix whose columns are $\mathbf{d}_i$ such that $\mathbf{y}^{*\,T}\mathbf{d}_i=1$; exactly the extremal local distributions which lie of the face defined by $\mathbf{y}^*$. The matrix rank gives the dimension of the vector space spanned by the columns of the matrix, and I make the claim this is exactly greater than the dimension of the face defined by $\mathbf{y}^*$ by 1. Let us label the columns of $A_{\mathbf{y}^*}$ as $\left\{\mathbf{d}^*_i\right\}_{i=1}^{k}$. The matrix rank of $A_{\mathbf{y}^*}$ is therefore
\begin{equation}
\mathrm{rank}\left[A_{\mathbf{y}^*}\right]=\mathrm{dim}\left[\mathcal{V}\left(\left\{\mathbf{d}_1^*,\mathbf{d}_2^*,\ldots \mathbf{d}_k^*\right\}\right)\right]\equiv \mathrm{dim}\left[\mathcal{V}\left(\left\{\mathbf{d}_1^*,\mathbf{d}_2^*-\mathbf{d}_1^*,\ldots \mathbf{d}_k^*-\mathbf{d}_1^*\right\}\right)\right].
\end{equation}
We also have that the dimension of the face defined by $\mathbf{y}^*$ is given by the dimension of the affine space $\mathcal{A}_{\mathbf{y}^*}:=\mathcal{A}\left(\left\{\mathbf{d}^*_1,\mathbf{d}^*_2\ldots,\mathbf{d}_k^*\right\}\right)$. For this space we have that:
\begin{equation}
\mathrm{dim}\left[\mathcal{A}_{\mathbf{y}^*}\right]=\mathrm{dim}\left[\mathcal{V}\left(\left\{\mathbf{d}^*_2-\mathbf{d}^*_1,\mathbf{d}^*_3-\mathbf{d}^*_1,\ldots,\mathbf{d}_k^*-\mathbf{d}^*_1\right\}\right)\right].
\end{equation}
From this, we can see that 
\begin{equation}
\mathrm{rank}\left[A_{\mathbf{y}^*}\right]=
\begin{cases}
\mathrm{dim}\left[\mathcal{A}_{\mathbf{y}^*}\right] &\text{ if } \mathbf{d}_1^*\in\mathcal{V}\left(\left\{\mathbf{d}^*_2-\mathbf{d}^*_1,\mathbf{d}^*_3-\mathbf{d}^*_1,\ldots,\mathbf{d}_k^*-\mathbf{d}^*_1\right\}\right)\\
\mathrm{dim}\left[\mathcal{A}_{\mathbf{y}^*}\right]+1 &\text{otherwise}
\end{cases}
\end{equation}
 using the property that $\mathrm{dim}\left[\mathcal{V}+\mathcal{W}\right]\leq \mathrm{dim}[\mathcal{V}]+\mathrm{dim}[\mathcal{W}]$ for vector spaces.\\
 
Suppose $\mathbf{d}_1^*\in\mathcal{V}\left(\left\{\mathbf{d}^*_2-\mathbf{d}^*_1,\mathbf{d}^*_3-\mathbf{d}^*_1,\ldots,\mathbf{d}_k^*-\mathbf{d}^*_1\right\}\right)$. Then there exist $\beta_i\in\mathbb{R}$ such that 
\begin{equation}
\mathbf{d}^*_1=\sum_{i=2}^k\beta_i\left(\mathbf{d}^*_i-\mathbf{d}^*_1\right).
\end{equation}
This implies that 
\begin{equation}
\mathbf{0}=\mathbf{d}^*_1-\sum_{i=2}^k\beta_i\left(\mathbf{d}^*_i-\mathbf{d}^*_1\right)\in\mathcal{A}_{\mathbf{y}^*}.
\end{equation}
However, this cannot be the case: by the normalisation condition, the elementwise sum of any point in $\mathcal{A}_{\mathbf{y}^*}$ must equal $m_Am_B\neq 0$. Therefore we may conclude $\mathbf{0}\not\in \mathcal{A}_{\mathbf{y}^*}$ and therefore $\mathrm{rank}\left[A_{\mathbf{y}^*}\right]=\mathrm{dim}\left[\mathcal{A}_{\mathbf{y}^*}\right]+1$, and so if 
$\mathrm{rank}\left[A_{\mathbf{y}^*}\right]=t$ then $\mathbf{y}^*$ is a facet Bell inequality.\\
%Once we have obtained a Bell inequality $\mathbf{y}^*$, the first test we require is is that the inequality is a facet of the local polytope. Remember this means it must define a half-space of dimension $d-1$, with $d$ the dimension of the polytope. Equivalently, the affine space of the saturating local points, $M_\mathbf{y*}:=\mathcal{A}(\left\{\mathbf{L}_i:\mathbf{y}^*L_i=1\right\})$ must also be of dimension $d-1$. This can easily be calculated by computing the rank of the matrix $[M_\mathbf{y}*]_i:=\mathbf{L_i},\;\mathbf{y}^*L_i=1$, which is known to be equal to the dimension of the linear space $\mathcal{V}(\left\{\mathbf{L}_i:\mathbf{y}^*L_i=1\right\})\equiv \mathcal{A}(\left\{\mathbf{L}_i:\mathbf{y}^*L_i=1\right\}\cup\left\{\mathbf{0}\right\}):=A_\mathbf{0}$. This must be exactly one higher than the dimension of $A_\mathbf{y*}$, since any element $a\in A_\mathbf{y*}$ is expressible $\mathbf{L}_1+\sum_i\lambda_i \left(\mathbf{L}_i-\mathbf{L_1}\right)$, and so it's elements must add up to exactly $m_Am_B$ (by normalisation). Thus $\mathbf{0}\notin A_{\mathbf{y}^*}$, implying $\mathrm{dim}(A_\mathbf{0})=\mathrm{dim}(A_\mathbf{y}^*)+1$.
\subsection{Affine Fixing}\label{Affine_Fix}
Once an inequality $\mathbf{y}^*$ has been determined to be a facet, we then wish to check whether a representative of the facet inequality class has already been found. To do this, we use a method called \emph{affine fixing}, done in the following way.\\

We first calculate the vector $\mathbf{c}_{\mathbf{y}^*}=A^T\mathbf{y}^*$, which gives the value of $\mathbf{y}^*$ at each local extremal distribution. As some extremal points saturate the inequality, the minimum element of this vector will be $1$. We label the next lowest element $\gamma > 1$. We then perform the affine transformation
\begin{equation}
\hat{\mathbf{c}}_{\mathbf{y}^*}=\frac{1}{\gamma-1}\mathbf{c}_{\mathbf{y}^*}+\frac{\gamma-2}{\gamma-1}
\end{equation}
which maps the lowest element (1) to 1, and the second lowest element ($\gamma$) to 2. As Bell inequalities are invariant under transformations of the form $k\mathbf{b}+s$, $k\neq 0$, this vector will be identical for all representations of $\mathbf{y}^*$. It also gives us a representation 
\begin{equation}\label{affinetransform}
\hat{\mathbf{y}}^*=\frac{1}{\gamma-1}\mathbf{y}^*+\frac{\gamma-2}{\gamma-1}\frac{1}{m_Am_B}\mathbf{1}
\end{equation}
which we can be certain has local bound 1 and next lowest local extremal value 2. This can be a useful standardisation to compare properties of Bell inequalities. \\
%The second thing required is to check whether a representative of the facet inequality class has already been found. The way we do this is as follows. First we take the vector $\mathcal{F}^{\text{raw}}_{\mathbf{y}^*}=M^T\mathbf{y}^*$ - this is a vector of the value of $\mathbf{y}^*$ at each local extremal point. We then \emph{affine fix} this list, done in the following way: 
%By complementary slackness the minimum value of this list is 1; we take the next lowest value $\beta>1$. We then perform the affine transformation:
%\begin{equation}
%\mathcal{F}_{\mathbf{y}^*}=\frac{1}{k-1}\mathcal{F}^{\text{raw}}_{\mathbf{y}^*}+\frac{k-2}{k-1}
%\end{equation}
%which maps the lowest element (1) to 1, and the second lowest element ($\beta$) to 2. \\

Now consider another facet inequality $\mathbf{z}$ belonging to the same class as $\mathbf{y}^*$. As they belong to the same class, there exists a relabelling $\ell$ which, when applied to $\mathbf{z}$, gives a representation of $\mathbf{y}^*$; say $\tilde{\mathbf{y}}^*$. We could then take the vector $\mathbf{c}_{\tilde{\mathbf{y}}^*}=A^T\tilde{\mathbf{y}}^*$, and apply the affine fix to obtain $\hat{\mathbf{c}}_{\tilde{\mathbf{y}}^*}$. Clearly $\hat{\mathbf{c}}_{\tilde{\mathbf{y}}^*}=\hat{\mathbf{c}}_{\mathbf{y}^*}$. Instead of this, we could have affine fixed $\mathbf{c}_{\mathbf{z}}$ to obtain $\hat{\mathbf{c}}_{\mathbf{z}}$,  then instead applied the relabelling $\ell$ to the local distributions, inducing a permutation $\omega_\ell$. This would permute the elements of $\hat{\mathbf{c}}_{\mathbf{z}}$, and as the relabelling $\ell:\mathbf{z}\rightarrow \tilde{\mathbf{y}}$, we would obtain the vector 
$\hat{\mathbf{c}}_{\tilde{\mathbf{y}}^*}=\hat{\mathbf{c}}_{\mathbf{y}^*}$. Thus if two facet inequalities $\mathbf{y}^*$ and $\mathbf{z}$ belong to the same class, their affine vectors $\hat{\mathbf{c}}_{\mathbf{y}^*}$ and $\hat{\mathbf{c}}_{\mathbf{z}}$ will be reorderings of each other.\\

In \underline{algorithm 1}, we simply tally the elements of $\hat{\mathbf{c}}_{\mathbf{y}^*}$ and compare it to the tallies of all stored facet inequalities, storing the inequality if it is non-equal to any already obtained, and discarding it otherwise. If the tally does not coincide with the tally of any stored inequalities, $\hat{\mathbf{c}}_{\mathbf{y}^*}$ cannot be a permutation of any previously stored $\hat{\mathbf{c}}_{\mathbf{z}}$ and thus cannot be of the same inequality class. Non-equality of the tally is not a \emph{necessary} condition to belong to a different class however; the two $(4,4,2,2)$ inequalities $B_1$, $B_2$ both have the same ``affine tally", $\mathcal{T}_{B_1}=\mathcal{T}_{B_2}$, but do \emph{not} belong to the same class. These inequalities have been transformed in the manner described\footnote{For presentation, rather than removing the constant uniformly, we have taken it only from the $(1,1)$ measurement pair.} by Eq.~(\ref{affinetransform}). $\mathcal{T}_{B_1}$ is given in  table \ref{localtally}.\\

\begin{equation}
B_1=\left(
\begin{array}{cccccccc}
 -\frac{1}{2} & -\frac{1}{2} & 0 & 0 & 0 & 0 & 0 & 0 \\
 -\frac{1}{2} & -\frac{1}{2} & 0 & 0 & 0 & 0 & 0 & 0 \\
 0 & 1 & 1 & 0 & 0 & \frac{1}{2} & 0 & \frac{1}{2} \\
 0 & 0 & 0 & 0 & \frac{1}{2} & 0 & \frac{1}{2} & 0 \\
 0 & 0 & 0 & 1 & 0 & \frac{1}{2} & \frac{1}{2} & 0 \\
 1 & 0 & 0 & 0 & \frac{1}{2} & 0 & 0 & \frac{1}{2} \\
 0 & 1 & 0 & 0 & 0 & 0 & 0 & 1 \\
 0 & 0 & 1 & 0 & 0 & 1 & 0 & 0 \\
\end{array}
\right)
\end{equation}
\begin{equation}
B_2=\left(
\begin{array}{cccccccc}
 -\frac{1}{2} & -\frac{1}{2} & 0 & 1 & 0 & 0 & 0 & 1 \\
 \frac{1}{2} & -\frac{1}{2} & 0 & 0 & 0 & 0 & 0 & 0 \\
 0 & 1 & 1 & 0 & 0 & 0 & 0 & 0 \\
 0 & 0 & 0 & 0 & 0 & 0 & 1 & 0 \\
 0 & 1 & 0 & 0 & 0 & \frac{1}{2} & \frac{1}{2} & 0 \\
 0 & 0 & 0 & 0 & \frac{1}{2} & 0 & 0 & \frac{1}{2} \\
 0 & 1 & 0 & 0 & \frac{1}{2} & 0 & \frac{1}{2} & 0 \\
 0 & 0 & 0 & 0 & 0 & \frac{1}{2} & 0 & \frac{1}{2} \\
\end{array}
\right)
\end{equation}
\begin{table}
\begin{center}
\begin{tabular}{c|c}
Local Value & $\# \mathbf{d }_i$ \\
\hline
1 & 48 \\
2 & 96 \\
3 & 64 \\
4 & 32 \\
5 & 16 
\end{tabular}
\caption[Equivalence of two Bell inequality affine tallies]{The affine tally of both $B_1$ and $B_2$.}\label{localtally}
\end{center}
\end{table}

It is exactly because of this that we have the variation \underline{algorithm 2}, to avoid throwing away new facet inequalities. Before the search begins, the set of possible permutations of the local points by relabelling, $\Omega$, is generated and stored. Then when a facet inequality $\mathbf{y^*}$ is found, $\mathcal{T}_{\mathbf{y}^*}$ is first compared against that of the classes already found as before, but if found to be equal to a stored tally $\mathcal{T}_{\mathbf{z}}$, rather than throwing the inequality away, we instead apply all permutations $\omega\in\Omega$ to $\hat{\mathbf{c}}_{\mathbf{y}^*}$ - if $\omega\left(\hat{\mathbf{c}}_{\mathbf{y}^*}\right)=\hat{\mathbf{c}}_{\mathbf{z}}$ we may conclude $\mathbf{y}^*,\mathbf{z}$ are in the same inequality class and discard $\mathbf{y}^*$, and if no such $\omega$ is obtained then we conclude $\mathbf{y}^*$ is a new class, and store $\mathbf{y}^*$, $\hat{\mathbf{c}}_{\mathbf{y}^*}$ and $\mathcal{T}_{\mathbf{y}^*}$.\\

The size of $\Omega$ is $\left(k_A!\right)^{m_A}\left(k_B!\right)^{m_B}m_A!m_B!$, so the storage space required grows very rapidly. However, once the permutation set is generated, the variations can be checked equal/non-equal quite rapidly. 
%\subsubsection{Local Point Permutation}

\subsection{Results - and Limitations}

One of the most pertinent limitations to this algorithm is that is does \emph{not} terminate once all facet Bell inequalities have been found. In theory, it is possible for a variation of the algorithm to guarantee finding all facet inequality classes: by cycling through every no-signalling point adding every set of $t$ linearly independent local deterministic points as a small noise factor, one would guarantee finding every facet inequality - since one would eventually hit a no-signalling point with value $0$ at that inequality, with the deterministic noise set a subset of the saturating deterministic points. These conditions are sufficient for the facet to be the unique optimal solution. This would be an extremely unfeasible algorithm to run, however. This means that in general we are running the algorithm with no idea when to stop, or how close the number of inequality classes obtained is to the true number of classes. With the exception of the $(4,4,2,2)$ and $(3,5,2,2)$ scenarios, where the number of classes was enumerated in \cite{DS2015}, the results presented below are necessarily lower bounds only. We can obtain an upper bound to the number of facets from \cite{MS1971}, but it is orders of magnitude larger than our lower bounds, and thus provides little illumination of the true number of classes - especially because the relation between number of facets and number of Bell inequality classes remains obscure.\\

\begin{itemize}
\item $\mathbf{(4,4,2,2)}$\\
For the $(4,4,2,2)$ scenario, we have enumerated all $175$ inequivalent classes of Bell inequality (including the trivial positivity inequality). Previously the most comprehensive list was 129 non-trivial inequalities given in \cite{PV2015}, which employs an alternative linear-program based method of generation. Our generation of these inequalities was performed on a standard desktop computer within a day or two - first running \underline{algorithm 1}, which generated 165 inequalities - after finding no new inequalities for $10000$ iterations, we swapped to \underline{algorithm 2}, which generated the remaining 9 (we added the trivial inequality manually). A full list of these classes can be found at \url{http://www-users.york.ac.uk/~tpwc500}.

In order to gain a clearer picture of this polytope, we also provide a table giving the \emph{size} of each class. This is given in table \ref{facets4422}.

\begin{table}
\begin{center}
\begin{tabular}{c|c}
Size of Class & Number of Classes \\
\hline
64 & 1 \\
288 & 1 \\
9216 & 2 \\
18432 & 4 \\
24576 & 1 \\
36864 & 4 \\
49152 & 2 \\
73728 & 8 \\
98304 & 2 \\
147456 & 61\\
294912 & 89 \\
\hline
36391264  & 175
\end{tabular}
\caption[Facet analysis of the  $(4,4,2,2)$ local polytope]{The size of each facet class for the $(4,4,2,2)$ local polytope. The totals at the bottom coincide with those presented in \cite{DS2015}, as expected.}\label{facets4422}
\end{center}
\end{table}

\item $\mathbf{(3,5,2,2)}$\\
For this scenario, it was given in \cite{DS2015} that there are 7 classes of facet. In \cite{CG2004}, 6 classes were given for the $(3,4,2,2)$ scenario - which we know will be facet classes for $(3,5,2,2)$ also. We enumerated the 7 classes of this scenario using \underline{algorithm 1} and compared to the 6 known, finding the new inequality class to be:
\begin{equation}
I_{3522}=\left(\begin{array}{cccccccccc}
 0 & \frac{2}{3} & 0 & 0 & 0 & \frac{1}{3} & 0 & 0 & 0 & \frac{1}{3} \\
 0 & 0 & \frac{2}{3} & 0 & \frac{1}{3} & 0 & 0 & 0 & \frac{1}{3} & 0 \\
 0 & 0 & 0 & 0 & \frac{2}{3} & 0 & \frac{2}{3} & 0 & 0 & 0 \\
 \frac{2}{3} & 0 & 0 & \frac{2}{3} & 0 & 0 & 0 & 0 & 0 & 0 \\
 0 & 0 & 0 & \frac{2}{3} & 0 & \frac{1}{3} & 0 & 0 & \frac{1}{3} & 0 \\
 0 & 0 & 0 & 0 & \frac{1}{3} & 0 & 0 & \frac{2}{3} & 0 & \frac{1}{3} \\
\end{array}
\right)
\end{equation}
We can also affine fix this to obtain:
\begin{equation}
\hat{I}_{3522}=\left(\begin{array}{cccccccccc}
 -\frac{1}{2} & \frac{1}{2} & 0 & 0 & 0 & \frac{1}{2} & 0 & 0 & 0 & \frac{1}{2} \\
 -\frac{1}{2} & -\frac{1}{2} & 1 & 0 & \frac{1}{2} & 0 & 0 & 0 & \frac{1}{2} & 0 \\
 0 & 0 & 0 & 0 & 1 & 0 & 1 & 0 & 0 & 0 \\
 1 & 0 & 0 & 1 & 0 & 0 & 0 & 0 & 0 & 0 \\
 0 & 0 & 0 & 1 & 0 & \frac{1}{2} & 0 & 0 & \frac{1}{2} & 0 \\
 0 & 0 & 0 & 0 & \frac{1}{2} & 0 & 0 & 1 & 0 & \frac{1}{2} \\
\end{array}
\right)
\end{equation}

\item $\mathbf{(4,5,2,2)}$\\
For the $(3,4,2,2)$  scenario, there are $6$ inequality classes. For $(4,4,2,2)$, there are $175$ classes - an extraordinary jump! However, this pales in comparison to the addition of a further measurement choice. Running \underline{algorithm 1} for the $(4,5,2,2)$ scenario, we have found a staggering $16,642$ inequality classes - and this is only a lower bound. One cannot think of this even as a close lower bound, as we stopped this scenario due to memory constraints. 

\item $\mathbf{(3,3,3,3)}$ \\%- and relation to dual polytope.
For this case, we do \emph{not} know the full set of no-signalling extremal points - and thus cannot use them as our objective function for the linear program. Instead, we generate random quantum distributions. To do this, we generate a vector of 3 real elements $\alpha_i$; from this we define a normalised vector $\hat{\mathbf{\alpha}}_i=\foo{\alpha_i}{\norm{\boldsymbol{\alpha}}}$ - such that $\sum_i \hat{\alpha_i}^2=1$. We take this to be the Schmidt coefficients of some pure entangled (since there exists multiple non-zero coefficients) state, $\ket{\phi}=\sum_{i=1}^3\alpha_i\ket{ii}$. We then generate 6 random unitaries $\left\{U^{A_1},U^{A_2},U^{A_3},U^{B_1},U^{B_2},U^{B_3}\right\}$, each unitary corresponding to a measurement. Since the columns of each $U^i$ are orthonormal, we can define projection operators $P^{i}_k:=\ket{[U^{i}]^T_k}\bra{[U^{i}]^T_k}$ satisfying $\sum_k P^{i}_k=\left(U^i\right)^\dagger U_i=\mathbb{I}_3$. Thus, we obtain the probability distribution:
\begin{equation}
p(ab|xy)=\bra{\phi}(P^{A_x}_a\otimes P^{B_y}_b)\ket{\phi}.
\end{equation}
%=||\left([U^{A_x}]^T_a\otimes [U^{B_y}]^T_b\right)\boldsymbol{\alpha}||^2.
We may then use this as the objective function. Using this, we were able to find $10143$ inequality classes\footnote{In \cite{SBSL2016} a similar linear programming technique was used to provide 19 classes, of which 3 were found here.}. Once more this can be seen as a loose lower bound, being stopped at an arbitrary point. Moreover, in the above generation we have limited ourselves to projective measurements, in dimension $3$. We could easily generalise this to higher dimension (one could take a $d\times d$ unitaries, and partition the columns into three sets) or generate POVM elements instead. However, one should be aware that projective measurements often result in a local distribution - and it is likely POVM elements would increase the chance of this happening.
\end{itemize}

\subsubsection{Facet Density}
As mentioned previously, not every output is a facet inequality. Unfortunately it is too difficult to enumerate the full polytopes in order to determine the exact proportion of these, except in the low dimensional cases, so instead we shall employ an ``operational" test. In table \ref{facetsurvey}, there is an analysis of how often a facet Bell inequality was obtained, by performing 1000 runs over each $S/A$ extremal no-signalling point. Note this makes no reference to how often a \emph{new} facet class was obtained, just to how often a facet was outputted. It appears that the $\%$ success decreases as the dimension increases.
\begin{table}
\begin{center}
\begin{tabular}{ccccc}
Scenario    & $\#$ Extremal S/A points & Min facet $\%$ & Max facet $\%$ & Mean $\%$\\
\hline
$(2,2,2,2)$ & 1                        & $100\%$        & $100\%$        & $100\%$ \\
$(3,3,2,2)$ & 8                        & $44.3\%$       & $80.6\%$       & $65.25\%$ \\
$(4,4,2,2)$ & 256                        & $33.9\%$       & $66.7\%$       & $57.50\%$ \\
\end{tabular}
\caption[A survey of $\%$ success of obtaining a facet-defining solution]{A survey of $\%$ success of obtaining a facet-defining solution.}\label{facetsurvey}
\end{center}
\end{table}

\subsubsection{Use of the Polar Dual}
Earlier in this chapter, we discussed the possibility of using the polar dual of $\mathcal{L}$ as the solution space to our linear program. As the vertices correspond exactly to facet Bell inequalities, this would guarantee our solutions would be facet inequalities by the maximum principle (lemma \ref{Maxprinciple}). In order to use the polar dual, we need to shift the origin, so that it lies in the interior of $\mathcal{L}$ . The natural choice is the uniform distribution $\boldsymbol{\pi}_{u}$, where $p(ab|xy)=\frac{1}{k_Ak_B}, \forall a,b,x,y$. For example, in the $(2,2,2,2)$ scenario this would map the extremal no-signalling point:
\begin{equation}
\Pi_{PR}=\left(\begin{array}{cccc}
\frac{1}{2} & 0 & \frac{1}{2} & 0 \\
0 & \frac{1}{2} & 0 & \frac{1}{2}\\
\frac{1}{2} & 0 & 0 & \frac{1}{2} \\
0 & \frac{1}{2} & \frac{1}{2} & 0 
\end{array}\right)
\rightarrow
\overrightarrow{\Pi}_{PR}=
\left(\begin{array}{cccc}
\phantom{-}\frac{1}{4} & -\frac{1}{4} & \phantom{-}\frac{1}{4} & -\frac{1}{4}  \\
-\frac{1}{4}  & \phantom{-}\frac{1}{4} & -\frac{1}{4}  & \phantom{-}\frac{1}{4}\\
\phantom{-}\frac{1}{4} & -\frac{1}{4}  & -\frac{1}{4}  & \phantom{-}\frac{1}{4} \\
-\frac{1}{4}  & \phantom{-}\frac{1}{4} & \phantom{-}\frac{1}{4} & -\frac{1}{4}  
\end{array}\right).
\end{equation}
We can see that we are now allowed negative values in our distributions, and in order to keep the dual polytope as our solution space we must also relax our constraint that $\mathbf{y}\geq 0$. Therefore our new linear program will be
\begin{equation}\label{dualpolylin}
\text{maximise } \overrightarrow{\mathbf{q}}^{T}\mathbf{y} \text{ subject to: }
\overrightarrow{A}^T\mathbf{y}\leq\mathbf{1}
\end{equation}
where $\overrightarrow{A}$ is the matrix whose columns are the extremal local distributions with shifted origins, $(\overrightarrow{\mathbf{d}}_1\ldots \overrightarrow{\mathbf{d}}_n)$. Notice the inequality sign is flipped from Eq.~(\ref{algorithmdual}) in order to be of the correct form for the polar dual, and consequently our optimisation have become a maximisation. As the local polytope is bounded, so too is the dual polytope, and so the objective function $\overrightarrow{\mathbf{q}}^T\mathbf{y}$ is bounded. However, our conversion of the problem into the form in Eq.~(\ref{dualpolylin}) means we no longer have a known bound on the objective function\footnote{In the original form, we had that $\mathbf{q}^T\mathbf{y}\geq 0$.}. To illustrate the problem with this, we perform the optimisation in Eq.~(\ref{dualpolylin}), for all extremal no-signalling points which are non-local $\overrightarrow{\mathbf{q}}\in V_{\overrightarrow{\mathcal{NS}}}\backslash V_{\overrightarrow{\mathcal{L}}}$. For the $(4,4,2,2)$ scenario, the greatest solution obtained is $\frac{12}{5}$, and the least solution is $2$. Unlike Eq.~(\ref{algorithmdual}), the optimal value varies with the no-signalling point used.\\

We now consider a particular $(4,4,2,2)$ Bell inequality of the form:
\begin{equation}
\overrightarrow{B}_{\mathrm{ex}}=\left(
\begin{array}{cccccccc}
 0 & -\frac{4}{5} & 0 & 0 & 0 & 0 & 0 & -\frac{4}{5} \\
 0 & 0 & -\frac{4}{5} & 0 & 0 & 0 & 0 & 0 \\
 0 & 0 & 0 & 0 & 0 & -\frac{4}{5} & 0 & 0 \\
 -\frac{4}{5} & 0 & 0 & -\frac{4}{5} & 0 & 0 & 0 & 0 \\
 0 & 0 & 0 & 0 & 0 & 0 & 0 & 0 \\
 0 & 0 & 0 & 0 & 0 & 0 & 0 & 0 \\
 0 & 0 & 0 & -\frac{4}{5} & 0 & 0 & -\frac{4}{5} & 0 \\
 0 & 0 & 0 & 0 & -\frac{4}{5} & 0 & 0 & 0 \\
\end{array}
\right), \mathrm{Tr}\left[\overrightarrow{B}_{\mathrm{ex}}^T\overrightarrow{\Pi}\right]\leq 1.
\end{equation}
and a corresponding linear programming problem:
\begin{equation}\label{dualpolylinex}
\text{maximise } \mathbf{q}^{T}\overrightarrow{\mathbf{b}}_{\mathrm{ex}} \text{ subject to: }
\mathbf{q}\in\overrightarrow{\mathcal{NS}}.
\end{equation}
We may do this because the no-signalling constraints are linear equalities, whilst we know the H-representation of $\mathcal{NS}$ is the set of positivity conditions, which may be shifted to define $\overrightarrow{\mathcal{NS}}$. Performing this optimisation we find the maximal value to be $\frac{9}{5}<2$. We can therefore conclude there is \emph{no} extremal no-signalling point that when used as the objective function for Eq.~(\ref{dualpolylin}), will give the solution $\mathbf{y}^*=\overrightarrow{\mathbf{b}}_{\mathrm{ex}}$ - by using the dual polytope, we are no longer able to generate all facet Bell inequalities using extremal no-signalling points. However, one could still use the dual polytope by generating quantum distributions and measurements, as seen for the $(3,3,3,3)$ scenario.

% ==========================================================================================================

\section{The Detection Loophole}
\label{ch5:second}
In the introduction to this chapter we discussed Einstein et al.'s postulate that there existed ``hidden variables", which somehow determined quantum behaviour. Moreover, we saw how Bell's theorem gave an inequality for all models of this form - one which is violated by quantum theory. This inequality takes in correlations\footnote{Or joint probabilities, in alternative forms.} between joint measurements. This is slightly disingenuous - we cannot perform all such joint measurements on a single state. To test the inequality, we require many joint measurements over several copies of the same state. This in itself not a problem, since any observation of a violation will still discount a hidden variable model across these states. What is a concern however, is that experimentally we cannot guarantee a successful measurement with 100$\%$ certainty. If we simply discount these failed detections, then it is possible\cite{P1970} to obtain correlations which violate a Bell inequality, but if the whole distribution (including detection failures) is considered, admits a local hidden variable model.\\ 

It is perhaps reasonable to dismiss the concept that, within the machinations of the universe, particles are behaving in such a way that when detected they are reliably appearing to follow quantum mechanics, whilst in fact being determined by a different model, as paranoid. Indeed, the important experiment detailed in \cite{H2015} put paid to this idea, performing an experiment free of this (and other) ``loopholes" - this particular problem being referred to as the ``detection loophole". What is not paranoid however, is to consider this problem in the context of quantum cryptography. One of the recent advancements quantum cryptography offers is the concept of ``device independent cryptography"\cite{C2006b,MY1998} - in which a secret key is established between parties using \emph{black box} devices, where they know nothing about the inner workings. In classical cryptography to trust such a device would be impossible; they could simply be preprogrammed beforehand to spit out a seemingly random string which is completely known to a malicious party. In quantum cryptography however, the two parties may sacrifice some of their generated string in order to Bell test the devices, and ensure the boxes are behaving in a quantum manner, excluding the possibility of a pre-programmed (i.e. hidden variable) scenario. \\

If the black boxes are allowed to offer no output however, then once more the attacker may proffer a string which appears to violate a Bell inequality, but in fact has been generated by only outputting outcomes desirable to the attacker. This means the failure to output needs to be considered when Bell testing the device. This is especially pertinent given that current commercial quantum cryptography devices, and likely future ones, operate using quantum optics, where the practicalities of sending and detecting single photons inevitably lead to detection failures. The user will want to ensure that any failures to output are genuine errors in the protocol, rather than  deliberately pre-programmed.

\subsection{How to Treat Detection Failures}
\label{ch5:second:a}
In this section, we shall focus on conditional probability distributions, rather than correlations. We need to know how to treat probability distributions in the event of detection failures. Before we do this it is worth mentioning that study of the detection loophole is often split into two disciplines - asymmetric detection failure, where one party has a flawless detector whilst the other detects with probability $\eta$, and symmetric detection failure, where both parties have the same detection efficiency $\eta$. We shall focus on the latter here - though there exist some scenarios where the asymmetric treatment may be appropriate; perhaps Alice is using a diamond cavity, which has a high chance of successful detection, then sending optically an entangled photon which Bob measures with a higher chance of failure. A more realistic scenario is that both Alice and Bob's devices both fail to detect for some runs, succeeding with respective probabilities $\eta_1,\eta_2$ - by taking $\eta=\min\left\{\eta_1,\eta_2\right\}$ we may simplify the maths while allowing us to ``play it safe".\\

Mathematically, we treat the detection loophole in the following way: we begin with a perfect $(m_A,m_B,k_A,k_B)$ scenario with a conditional probability distribution $\boldsymbol{\pi}=\left\{p(ab|xy)\right\}$. Alice and Bob choose their measurement choice $x,y$ respectively, and we assume that the probability of successful detection for each of them is $\eta$. One of the following four things occurs:
\begin{itemize}
\item \textit{Two successful detections}.\\
With probability $\eta^2$, both detections are successful, and Alice and Bob obtain respective outcomes $a,b$, according to distribution $\boldsymbol{\pi}$. The overall probability of this outcome is $\eta^2p(ab|xy)$.
\item \textit{Alice's detection fails and Bob's succeeds}.\\
This occurs with probability $\eta(1-\eta)$; $\eta$ for Bob's success, $(1-\eta)$ for Alice's failure, and Bob's outcome $b$ is determined by his conditional distribution $p(b|xy)\equiv p(b|y)$. Therefore the overall probability is $\eta(1-\eta)p(b|y)$.
\item \textit{Alice's detection succeeds and Bob's fails}.\\
This also occurs with probability $\eta(1-\eta)$, with Alice's outcome $a$ determined by her conditional distribution $p(a|xy)\equiv p(a|x)$. The probability of this outcome is $\eta(1-\eta)p(a|x)$.
\item \textit{Both detections fail}.\\
This occurs with probability $(1-\eta)^2$, with no dependence on $\boldsymbol{\pi}$.
\end{itemize}
With this in mind, we can define a new $(m_A,m_B,k_A+1,k_B+1)$ probability distribution $\boldsymbol{\pi}_\eta$ with an extra outcome $N$.  $\boldsymbol{\pi}_\eta$ is constructed:
\begin{align*}
\boldsymbol{\pi}_\eta(ab|xy)&=\eta^2\boldsymbol{\pi}(ab|xy),\\
\boldsymbol{\pi}_\eta(Nb|xy)&=\eta(1-\eta)\boldsymbol{\pi}(b|y),\\
\boldsymbol{\pi}_\eta(aN|xy)&=\eta(1-\eta)\boldsymbol{\pi}(a|x),\\
\boldsymbol{\pi}_\eta(NN|xy)&=(1-\eta)^2.
\end{align*}

Notice that this in not a \emph{comprehensive} treatment of all detection failure scenarios, as it assumes the two devices fail independently. However, because the two devices are spatially separated, this is the distribution we would expect if our experimental devices were working ``naturally". If we saw significant correlation between detection failures of each party, we should definitely suspect the influence of a malicious third party. The above structure is designed to probe distributions which appear natural, but instead could have been pre-programmed.\\

Now that we have a $(m_A,m_B,k_A+1,k_B+1)$ probability distribution, we know exactly how to test it - using a $(m_A,m_B,k_A+1,k_B+1)$ Bell inequality! If our distribution violates an inequality, we know the underlying system is inherently non-local, and thus contains randomness we can exploit. There are typically two schools of thought when it comes to the Bell inequality used to test a particular inefficient distribution. The first is to take an inequality that is a ``lifting" of a $(m_A,m_B,k_A,k_B)$ inequality - this can be thought of operationally as  treating every failure as one of the valid outcomes - mapping $N$ to $0$, for example\footnote{Note this can be measurement dependent - e.g. treating $N$ as $0$ for measurement 1, and $N$ as $1$ for measurement 2.}. This was the approach taken by Clauser and Horne, who came up with the CH representation\cite{CH1974} of the CHSH inequality to tackle this problem. By contrast, we may also choose an inequality from a class that only appears in the $(m_A,m_B,k_A+1,k_B+1)$ scenario. This is treating the failure outcome as fundamentally a separate outcome. At this point in time, it is not clear which is the better treatment generally, but in this chapter we shall focus on using lifted inequalities.\\

\begin{lemma}\label{local_reduction}
$\boldsymbol{\pi}_\eta$ is non-local only if $\boldsymbol{\pi}$ is.
\end{lemma}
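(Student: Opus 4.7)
The plan is to prove the contrapositive: assuming $\boldsymbol{\pi}$ admits a local hidden variable model, I will construct an explicit local hidden variable model for $\boldsymbol{\pi}_\eta$. The core intuition is that the detection-failure process defined in Section~\ref{ch5:second:a} is, by construction, a pair of \emph{independent local stochastic operations}, one applied to Alice's output and one to Bob's. Since the set of local correlations is closed under such local post-processing, no hidden-variable model can be destroyed by going from $\boldsymbol{\pi}$ to $\boldsymbol{\pi}_\eta$.

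Concretely, first I would write the assumed local model for $\boldsymbol{\pi}$ as
\begin{equation}
p(ab|xy)=\int_\Lambda q(\lambda)\,p_A(a|x,\lambda)\,p_B(b|y,\lambda)\,d\lambda.
\end{equation}
Next I would enlarge the hidden variable to $(\lambda,r_A,r_B)$, where $r_A,r_B\in\{0,1\}$ are independent Bernoulli variables with $\Pr(r_A=1)=\Pr(r_B=1)=\eta$, with joint density $q(\lambda)\mu(r_A)\mu(r_B)$. Define the new local response functions
\begin{align}
p_A^\eta(a|x,\lambda,r_A) &= r_A\,p_A(a|x,\lambda)+(1-r_A)\,\delta_{a,N},\\
p_B^\eta(b|y,\lambda,r_B) &= r_B\,p_B(b|y,\lambda)+(1-r_B)\,\delta_{b,N}.
\end{align}
These are genuinely local: $p_A^\eta$ depends only on $x$ and Alice-side data, $p_B^\eta$ only on $y$ and Bob-side data.

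The third step is a direct check that marginalising over $(\lambda,r_A,r_B)$ reproduces the four cases in the definition of $\boldsymbol{\pi}_\eta$. For $a,b\neq N$ only $(r_A,r_B)=(1,1)$ contributes, giving $\eta^2 p(ab|xy)$; for $a\neq N,\,b=N$ only $(1,0)$ contributes, giving $\eta(1-\eta)p(a|x)$, using that the $\lambda$-marginal of $p_A(a|x,\lambda)$ is $p(a|x)$; symmetrically for $a=N,\,b\neq N$; and for $a=b=N$ only $(0,0)$ contributes, giving $(1-\eta)^2$. This matches the four bullet points in Section~\ref{ch5:second:a} exactly, so $\boldsymbol{\pi}_\eta$ is local.

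There is no serious obstacle here: the main observation is simply that detection failure can be implemented by each party tossing a private coin after receiving $\lambda$, which is an LOCC-type operation and therefore cannot turn a classically correlated distribution into a non-local one. An equivalent reformulation, which one could present instead as a one-line argument after the construction above, is to decompose the mixed strategy $p_A^\eta$, $p_B^\eta$ into deterministic strategies on the extended outcome alphabet $\{1,\dots,k_A,N\}\times\{1,\dots,k_B,N\}$ and observe that $\boldsymbol{\pi}_\eta$ is then a convex combination of deterministic points in the $(m_A,m_B,k_A+1,k_B+1)$ local polytope, hence lies in $\mathcal{L}$.
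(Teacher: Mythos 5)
Your proof is correct, and it rests on the same key observation as the paper's: conditioning on which detectors fire splits $\boldsymbol{\pi}_\eta$ into four pieces, each of which is local. The difference is one of framing. You construct an explicit local hidden-variable model for $\boldsymbol{\pi}_\eta$ by enlarging the hidden variable to $(\lambda,r_A,r_B)$ with two independent Bernoulli coins, whereas the paper works purely at the level of distributions: it writes $\boldsymbol{\pi}_\eta = \eta^2\boldsymbol{\pi} + \eta(1-\eta)\boldsymbol{\pi}_{a|x} + \eta(1-\eta)\boldsymbol{\pi}_{b|y} + (1-\eta)^2\boldsymbol{\pi}_N$, where $\boldsymbol{\pi}_{a|x}$, $\boldsymbol{\pi}_{b|y}$ are product distributions with at least one party deterministically outputting $N$, and $\boldsymbol{\pi}_N$ is a deterministic point, then observes all four terms are local. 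The two arguments are equivalent --- the four cases of $(r_A,r_B)$ in your construction correspond exactly to the paper's four constituent distributions --- but each version has a merit. The paper's is shorter and only invokes membership of $\mathcal{L}$; yours makes the underlying structural reason transparent (detection failure is a local stochastic post-processing on each wing, and $\mathcal{L}$ is closed under such maps), which also gives the paper's follow-on corollary (monotonicity in $\eta$) essentially for free. Your closing remark about decomposing further into deterministic strategies on the enlarged alphabet is also correct, and is the route one would take if one insisted on exhibiting $\boldsymbol{\pi}_\eta$ directly as a convex combination of the $(m_A,m_B,k_A+1,k_B+1)$ local polytope's vertices, though the paper does not need to go that far.
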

\begin{proof}
The distribution $\boldsymbol{\pi}_\eta$ can be thought of as a convex combination of the following distributions: 
\begin{align*}
&\boldsymbol{\pi},\\
&\boldsymbol{\pi}_{b|y}:=\left\{p(ab|xy)|\;p(ab|xy)=\delta_{aN}\boldsymbol{\pi}(b|y)\right\},\\
&\boldsymbol{\pi}_{a|x}:=\left\{p(ab|xy)|\;p(ab|xy)=\delta_{bN}\boldsymbol{\pi}(a|x)\right\},\\
&\boldsymbol{\pi}_N:=\left\{p(ab|xy)|\;p(ab|xy)=\delta_{aN}\delta_{bN}\right\},
\end{align*}
in the following way:
\begin{equation}
\boldsymbol{\pi}_\eta=\eta^2\boldsymbol{\pi}+\eta(1-\eta)\boldsymbol{\pi}_{a|x}+\eta(1-\eta)\boldsymbol{\pi}_{b|y}+(1-\eta)^2\boldsymbol{\pi}_N.
\end{equation}
Distribution $\boldsymbol{\pi}_N$ is local deterministic, whilst $\boldsymbol{\pi}_{a|x}$ and $\boldsymbol{\pi}_{b|y}$ are clearly local distributions, as there exist no correlations between parties. Thus if $\boldsymbol{\pi}$ is local also, then $\boldsymbol{\pi}_{\eta}$ can be written as a convex combination of local distributions and therefore lies within $\mathcal{L}$.
%Suppose $\boldsymbol{\pi}$ is local. As the marginal distributions $\boldsymbol{\pi}_{(a|x)}$ and  $\boldsymbol{\pi}_{(b|y)}$ are necessarily local, as is the deterministic distribution $\boldsymbol{\pi}_{N}(ab|xy)=\delta_{aN}\delta_{bN}$, then this would mean $\boldsymbol{\pi}_\eta$ can be written as the convex combination of local distributions $\eta^2\boldsymbol{\pi}+\eta(1-\eta)\boldsymbol{\pi}_{(a|x)}+\eta(1-\eta)\boldsymbol{\pi}_{(b|y)}+(1-\eta)^2\boldsymbol{\pi}_N$, and is local itself.
\end{proof}
\begin{corollary}
For $\eta_1\geq \eta_2$, $\boldsymbol{\pi}_{\eta_2}\not\in\mathcal{L}$  implies $\boldsymbol{\pi}_{\eta_1}\not\in\mathcal{L}$.
\end{corollary}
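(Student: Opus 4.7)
The plan is to prove the contrapositive: if $\boldsymbol{\pi}_{\eta_1} \in \mathcal{L}$ and $\eta_1 \geq \eta_2$, then $\boldsymbol{\pi}_{\eta_2} \in \mathcal{L}$. Intuitively, a lower detection efficiency should be achievable from a higher one by having each party independently discard successful detections with some probability, and this local post-processing cannot create nonlocal correlations.

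Concretely, I would look for a convex decomposition of $\boldsymbol{\pi}_{\eta_2}$ of the form
\begin{equation}
\boldsymbol{\pi}_{\eta_2} = \alpha\,\boldsymbol{\pi}_{\eta_1} + \beta_A\,\boldsymbol{\pi}_{a|x} + \beta_B\,\boldsymbol{\pi}_{b|y} + \gamma\,\boldsymbol{\pi}_N,
\end{equation}
where $\boldsymbol{\pi}_{a|x}$, $\boldsymbol{\pi}_{b|y}$ and $\boldsymbol{\pi}_N$ are the local distributions introduced in the proof of Lemma~\ref{local_reduction}. Matching the coefficient of the underlying $\boldsymbol{\pi}$ term fixes $\alpha = \eta_2^2/\eta_1^2$, and then matching the marginal-failure blocks and the $NN$ block determines $\beta_A, \beta_B$ and $\gamma$ uniquely. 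The routine algebra I would then grind through yields $\beta_A = \beta_B = \eta_2(\eta_1 - \eta_2)/\eta_1$ and a $\gamma$ that factors as $[1 - \eta_2/\eta_1]\,[(1-\eta_2) + (\eta_2/\eta_1)(1-\eta_1)]$; normalisation $\alpha + \beta_A + \beta_B + \gamma = 1$ then follows automatically because all five distributions are themselves normalised.

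Each of these four coefficients is manifestly non-negative whenever $\eta_1 \geq \eta_2$ (this is precisely where the hypothesis enters). Thus $\boldsymbol{\pi}_{\eta_2}$ is a genuine convex combination of $\boldsymbol{\pi}_{\eta_1}$ with three distributions already shown to be local in Lemma~\ref{local_reduction}. Since $\mathcal{L}$ is convex, $\boldsymbol{\pi}_{\eta_1} \in \mathcal{L}$ then forces $\boldsymbol{\pi}_{\eta_2} \in \mathcal{L}$, which is the contrapositive we wanted.

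The only real obstacle is bookkeeping: verifying the four coefficient-matching equations in parallel without dropping a factor of $\eta_1$ or $\eta_2$ somewhere. An equivalent, cleaner route I would mention as a sanity check is the operational one: each party applies the local channel ``pass my outcome through with probability $\eta_2/\eta_1$, otherwise output $N$''; a short direct calculation shows this maps $\boldsymbol{\pi}_{\eta_1}$ to $\boldsymbol{\pi}_{\eta_2}$, and local post-processing preserves membership of $\mathcal{L}$. Either formulation closes the proof in a few lines.
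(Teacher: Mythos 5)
Your proof is correct and follows essentially the same argument as the paper: the same convex decomposition with $\alpha=\eta_2^2/\eta_1^2$, and your factored forms of $\beta_A,\beta_B,\gamma$ are algebraically identical to the paper's unfactored expressions $\eta_2(1-\eta_2)-\tfrac{\eta_2^2}{\eta_1}(1-\eta_1)$ and $(1-\eta_2)^2-\tfrac{\eta_2^2}{\eta_1^2}(1-\eta_1)^2$. The operational ``discard with probability $1-\eta_2/\eta_1$'' remark is a nice sanity check the paper does not state, but the core argument is the same.
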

\begin{proof}
Suppose $\boldsymbol{\pi}_{\eta_1}\in\mathcal{L}$. As $\boldsymbol{\pi}_{\eta_2}$ can be expressed as the convex combination
\begin{align}
\boldsymbol{\pi}_{\eta_2}=\frac{\eta_2^2}{\eta_1^2}\boldsymbol{\pi}_{\eta_1}+&\left(\eta_2\left(1-\eta_2\right)-\frac{\eta_2^2}{\eta_1}\left(1-\eta_1\right)\right)\boldsymbol{\pi}_{a|x}
+\left(\eta_2\left(1-\eta_2\right)-\frac{\eta_2^2}{\eta_1}\left(1-\eta_1\right)\right)\boldsymbol{\pi}_{b|y}\nonumber\\+&\left(\left(1-\eta_2\right)^2-\frac{\eta_2^2}{\eta_1^2}\left(1-\eta_1\right)^2\right)\boldsymbol{\pi}_N
\end{align}
we have that if $\boldsymbol{\pi}_{\eta_1}\in\mathcal{L}$, then $\boldsymbol{\pi}_{\eta_2}\in\mathcal{L}$ also. This gives our result.
\end{proof}

\subsection{The Best Inequalities (so Far)}
\label{ch5:second:b}
The question of which Bell inequality is best to tackle the detection loophole is somewhat ambiguous, and depends on your purpose. For example, do you want the  secret key verification rate to be robust against detection loophole noise? Or perhaps you prioritise low dimensional states and measurements, since they are easier to create experimentally. One undeniably important question is that of the \emph{detection threshold} of a Bell inequality. 
\begin{definition}
Given a Bell inequality $\mathbf{b}$ such that $\mathbf{b}^T\boldsymbol{\pi}\geq k,\;\forall \boldsymbol{\pi}\in\mathcal{L}$, the detection threshold $\eta_c$ is 
\begin{equation}
\eta_c:=\inf_{\eta} \exists\; \boldsymbol{\pi}\in \mathcal{Q},\; \mathbf{b}^T\boldsymbol{\pi}_\eta < k
\end{equation}
i.e. the lowest detection efficiency for which $\mathbf{b}$ can distinguish a non-local distribution.
\end{definition}
We also may use the following, equivalent definition.
\begin{definition}
Given a Bell inequality $\mathbf{b}$ such that $\mathbf{b}^T\boldsymbol{\pi}_L\geq k,\;\forall \boldsymbol{\pi}\in\mathcal{L}$, the detection threshold $\eta_c$ is 
\begin{equation}
\eta_c:=\sup_{\eta} \forall\; \boldsymbol{\pi}\in \mathcal{Q},\; \mathbf{b}^T\boldsymbol{\pi}_\eta \geq k
\end{equation}
i.e. the highest detection efficiency for which $\mathbf{b}$ cannot distinguish a non-local distribution.
\end{definition}

The first threshold obtained was for the CHSH inequality: Garg and Mermin showed\cite{GM1987} that, for maximally entangled states, so long as the efficiency was above $\fob{2}{\sqrt{2}+1}(\approx 82.8\%$, non-locality could be observed. As these were the states that achieved the maximal violation for the CHSH inequality, it seemed reasonable that these would be the states that could tolerate the lowest efficiency. This turned out \emph{not} to be the case, as Eberhard constructed\cite{E1993} a state and measurements that gave CHSH violations down to an efficiency of $\foo{2}{3}$. Perhaps even more surprisingly, the optimal state achieving this threshold is $\ket{\phi_\varepsilon}=\sqrt{1-\varepsilon^2}\ket{00}+\varepsilon\ket{11}$, with $\varepsilon \rightarrow 0$ as $\eta\rightarrow \foo{2}{3}$ - far from being the maximally entangled state, the state achieving the lowest tolerable efficiency is almost unentangled! In this scenario, a lifting of CHSH is being used; no-click ($N$) is being mapped to 1. Moreover, the limit $\eta\rightarrow\foo{2}{3}$ is achieved using qubit states and measurements.\\

Another significant result is given in \cite{VPB2010}, which looked at a particular $(4,4,2,2)$ inequality, $I^4_{4422}$, once again considering a lifting\footnote{Their binary outcomes were $1$ and $-1$, mapping failure to $-1$.}. For maximally entangled (ququart) states, they were able to show the tolerable efficiency was $76.98\%$, and they introduced explicit ququart states and measurements achieving a minimum tolerable efficiency of $\fbo{\sqrt{5}-1}{2}\approx 61.80\%$. Interestingly the required state is of the form\\ $\ket{\psi_\varepsilon}=\sqrt{\fbo{1-\varepsilon^2}{3}}\left(\ket{00}+\ket{11}+\ket{22}\right)+\varepsilon\ket{33}$, with the threshold achieved as $\varepsilon\rightarrow 0$ - showing a similar form to the Eberhard state; a generalisation of the maximally entangled state in which the final dimension has almost $0$ weight. \\

There is also the important result of Serge Massar in \cite{M2002}. He gave a set of dimension dependent states and measurements for which the corresponding $\boldsymbol{\pi}_\eta$ is non-local for $\eta\geq \eta_c\rightarrow 0$ in the limit as $d\rightarrow \infty$. However, $\eta_c\leq \foo{2}{3}$ only when $d\geq 1600$, and the number of measurements required scales with $2^d$ for both Alice and Bob. This means that implementation of this method is practically impossible.

\subsection{Limits of a Bell Inequality}
\label{ch5:second:c}
Given a Bell inequality, we would like a lower bound on its detection threshold - to prevent situations such as that which occurred with Eberhard, where the threshold is substantially lowered by the discovery of a new state/measurement construction. To do this, we may look at a superset of the set of quantum correlations. Suppose we have a set $\tilde{\mathcal{Q}}$ such that the quantum set $\mathcal{Q}$ satisfies $\mathcal{Q}\subseteq\tilde{\mathcal{Q}}$. Given an $\eta$ value, if $\forall \boldsymbol{\pi}\in \tilde{\mathcal{Q}},\; \mathbf{b}^T\boldsymbol{\pi}_{\eta}\geq k$, then the same must be true for $\mathcal{Q}$, and we have obtained a lower bound on the detection threshold of $\mathbf{b}$. Furthermore, if one has an $\eta$ such that, $\forall \boldsymbol{\pi}\in\tilde{\mathcal{Q}},\;\boldsymbol{\pi}_{\eta}\in \mathcal{L}$, then one has a lower bound on \emph{all} Bell inequalities' (for that scenario) detection threshold, and for that particular efficiency no distributions are useful. \\

There are two obvious choices of $\tilde{\mathcal{Q}}$. The first is $\tilde{\mathcal{Q}}=\mathcal{NS}$, the set of all no-signalling correlations. For $k_A=k_B=2$ we know all extremal NS distributions, and can thus use linear programming in order to determine the lower bound for the detection threshold. This can be done using a \emph{binary chop method}. The method is as follows:
\begin{enumerate}
\item Take the set\footnote{We can take the subset given in section \ref{ch5:first:b:three}, since local/non-locality is preserved by relabelling.} of extremal non-local points $V_{\mathcal{NS}}\backslash V_{\mathcal{L}}=\left\{\mathbf{n}_k\right\}$.
\item For each $\mathbf{n}_k$ in turn, construct the inefficient distribution $\mathbf{n}_{k,\eta}$ according to the method in  section \ref{ch5:second:a}.
\item Choose two values $\eta_{\mathrm{max}}$ and $\eta_{\mathrm{min}}$ such that $\mathbf{n}_{k,\eta_{\mathrm{max}}}\notin \mathcal{L}$ and $\mathbf{n}_{k,\eta_{\mathrm{min}}}\in \mathcal{L}$; we can always choose $\eta_{\mathrm{max}}=1$ and $\eta_{\mathrm{min}}=0$ to ensure this. 
\item Take $\eta=\fbo{\eta_{\mathrm{max}}+\eta_{\mathrm{min}}}{2}$. Find the local weight $w$ of $\mathbf{n}_{k,\eta}$, using the linear program outlined in definition \ref{primal}.
\item If $w=1$, then $\mathbf{n}_{k,\eta}$ is local and we set $\eta_{\mathrm{min}}=\eta$. If not, then $\mathbf{n}_{k,\eta}$  is non-local and we set  $\eta_{\mathrm{max}}=\eta$.
\item Repeating Steps 4-5 we home in on the maximal $\eta$ such that $\mathbf{n}_{k,\eta}\in \mathcal{L}$; we do this until we are satisfied with the precision. This value we call $\eta_k$. 
\item Repeat the process 2-6 with each extremal $\mathbf{n}_k$, obtaining a set $\{\eta_k\}$. For $\eta_L:=\min\{\eta_k\}$, we can conclude by convexity that $\boldsymbol{\pi}_{\eta_L}\in\mathcal{L}$, $\forall \boldsymbol{\pi}\in\mathcal{NS}$. Thus we have obtained our fundamental lower bound for the detection threshold. The results of this process for various $(m_A,m_B,2,2)$ scenarios are presented in table \ref{fundamentalbounds}, and compared to some previous bounds in the literature.
\end{enumerate}
\begin{table}
\begin{center}
\begin{tabular}{cc|ccccc|c|c|}
&$m_A$&2&3&4&5&6&$\frac{1}{\sqrt{m_B}}$&$2/(m_B+1)$\\
$m_B$&&&&&&&&\\\hline
2&&2/3&2/3&2/3&2/3&2/3&0.707&2/3\\
3&&&4/7&5/9&5/9&5/9&0.577&1/2\\
4&&&&1/2&1/2&1/2&1/2&2/5\\
5&&&&&4/9& * &0.447&1/3\\
\end{tabular}
\caption[Fundamental bounds on the detection threshold]{The maximum detection efficiency such that any no-signalling
  distribution can be generated classically. $(5,6,2,2)$ was not evaluated due to the high number of non-local extremal points. The two columns are previous bounds given by \cite{MP2003}.}\label{fundamentalbounds}
\label{tab:1}
\end{center}
\end{table}

Using this method, we can see that Eberhard's construction cannot be beaten for CHSH - since $\foo{2}{3}$ is the lowest possible for the $(2,2,2,2)$ scenario. It is surprising that the best no-signalling efficiency is quantum achievable in this scenario - this does not hold in general.\\

The other sensible choice for $\tilde{\mathcal{Q}}$ is any member of the semidefinite hierarchy $\mathcal{Q}_i$. Given any Bell inequality $\mathbf{b}$ with local bound $\geq k $, we can again use a binary chop method.
\begin{enumerate}
\item Choose an $\eta_{\mathrm{max}}$ such that there exists a $\boldsymbol{\pi}_{\eta_{\mathrm{max}}}\in\mathcal{Q}$ with $\mathbf{b}\boldsymbol{\pi}_{\eta_{\mathrm{max}}}<k$, and an $\eta_{\mathrm{min}}$ such that $\forall \boldsymbol{\pi}\in\mathcal{NS}$, $\boldsymbol{\pi}_{\eta_{\mathrm{min}}}\in \mathcal{L}$ - again we may always take $\eta_{\mathrm{max}}=1$ and $\eta_{\mathrm{min}}=0$.
\item Set $\eta=\fbo{\eta_{\mathrm{max}}+\eta_{\mathrm{min}}}{2}$. Solve the semidefinite program 
\begin{equation}
w=\min_{\Pi\in\mathcal{Q}_i} \mathrm{Tr}\left[B^T\Pi_{\eta}\right].
\end{equation}
\item If $w=k$, then there exists no non-local $\boldsymbol{\pi}_{\eta}$, $\boldsymbol{\pi}\in\mathcal{Q}_i$ distinguishable by $\mathbf{b}$,  and we set $\eta_{\mathrm{min}}=\eta$. If $w<k$, then we set  $\eta_{\mathrm{max}}=\eta$ and make a note of $\boldsymbol{\pi}$.
\item Repeating Steps 2-3 we home in on the maximal $\eta$ such that $w=k$; we do this until we are satisfied with the precision.  This gives us a lower bound of the detection threshold for $\mathbf{b}$, as well as a candidate $\boldsymbol{\pi}$ for achieving the threshold.
\end{enumerate}
We can increase $i$ to obtain tighter bounds on the detection threshold - if we receive the same lower bound for several $i$, this implies we may have found the true threshold, in which case we may look for quantum states and measurements achieving it; moreover, they should reproduce distribution $\boldsymbol{\pi}$.

\subsection{Implementation of The Semidefinite Search, and Results}
\label{ch5:second:d}
To implement the search over levels of the semidefinite hierarchy, I used Matlab 2016a/2017a (depending on the machine used), and some Matlab-specific toolboxes. The first is CVX\cite{cvx,GB2008} - a convex optimisation toolbox which allows the expression of semidefinite variables, and the automatic conversion of ``naturally written" constraints into the formal semidefinite programming constraints. This comes with a built in semidefinite solver SDPT3\cite{TTT1999}, however, I instead primarily used the compatible MOSEK solver\cite{mosek}. Finally, I also used the QETLAB toolbox\cite{qetlab}, which allowed me to specify the $\mathcal{Q}_i$ membership constraint. For all of this optimisation, I have used Bell inequalities of the form local bound $\geq 1$.\\

In table \ref{cvxresults}, I present the results for some known cases, from which we can see that our binary chop only appears accurate to $10^{-3}$. This is unfortunate, since the CVX precision parameter is set to $1.49\times 10^{-8}$, and our binary chop precision is set\footnote{By the number of iterations.} to $1.53\times 10^{-6}$. This discrepancy is most likely due to the quantum violation being extremely small just above the detection threshold. This limitation will come to be important later on, and I spent much time trying to improve this. These attempts will be detailed in subsection 
\ref{improve}.

\begin{table}
\begin{center}
\begin{tabular}{c|c|c|c|c}
Inequality & $\mathcal{Q}_1$ & $\mathcal{Q}_2$ & $\mathcal{Q}_3$ & $\eta_c$ \\
\hline
CHSH & $66.67\%$ & $66.67\%$ & $66.68\%$ & $66.67\%$ \\
$I_{3322}$ & $60.00\%$ & $66.51\%$ & $66.69\%$ & $66.67\%$ \\
$I_{2233}$ & $64.95\%$ & $66.67\%$ & $66.67\%$ & $66.67\%$ \\
$I^4_{4422}$ &$50.03\%$ & $61.83\%$ & $61.85\%$ & $61.80\%$ \\
\end{tabular}
\caption[Detection thresholds obtained from CVX]{A comparison of detection thresholds, as given by CVX. Note the discrepancy between the true value and the one outputted.}\label{cvxresults}
\end{center}
\end{table}

\subsubsection{(4,4,2,2) Scenario Test}
Given that we have a complete list of the $(4,4,2,2)$ inequality classes, we wished to learn whether there existed a Bell inequality with a lower detection threshold than $I^4_{4422}$ for this scenario. As we are using $(4,4,2,2)$ Bell inequalities to check $(4,4,2,2)$ probability distributions with an extra detection failure output, we need to lift our Bell inequalities to the $(4,4,3,3)$ scenario. For each measurement choice for both Alice and Bob, we may choose which output we map $N$ to. This means that for every inequality we have $2^8=256$ possible liftings to consider. We performed the binary chop method on all $256$ liftings for the $174$ representatives of each non-trivial inequality class. In order to speed up the process, we introduced a ``cut" - before performing the binary chop, we tested each lifting at $\eta=\foo{2}{3}$ at level 1 of the hierarchy - if $w=1$, then we can conclude that that particular lifting cannot distinguish non-local distributions better than CHSH (and thus $I^4_{4422}$), and we move on to the next lifting. As this requires only one semidefinite program at the least constrained hierarchy level, this speeds up the process considerably.\\

If a lifted Bell inequality gives a value $w<1$ at $\eta=\foo{2}{3}$, then we performed the binary chop with level $\mathcal{Q}_2$ of the hierarchy, in order to have a reasonable run time. The results of this test is that $I^4_{4422}$ \emph{is} the best for this scenario - it achieved $61.83\%$ with the lifting given in \cite{VPB2010}, again accurate up to $10^{-3}$, whilst the next best inequality had a detection threshold of $64.95\%$. There are two liftings of $I^4_{4422}$ achieving $61.83\%$ - we present them in figures \ref{i44a} and \ref{i44b}. We omit the brackets and partition them into measurement choices, and denote with arrows the output copied for the lifting.\\

Whilst no $(4,4,2,2)$ inequality was able to provide a lower detection threshold than $I_{4422}^4$, we do not rule out the possibility of a $(4,4,3,3)$ Bell inequality achieving a lower threshold, by treating $N$ as a separate outcome.

%Since we have a complete list of the $(4,4,2,2)$ inequalities, we wished to learn if $I^4_{4422}$ is the optimal inequality for the detection loophole in this scenario. To check this, we ran through all $256$ liftings for the $174$ representatives of each non-trivial inequality class. In order to speed things up, we introduced a ``cut" - before performing the binary chop, we tested each lifting at $\eta=\frac{2}{3}$ at level 1 of the hierarchy - if local, then we move on to the next lifting. If the resulting solution is non-local, then level 2 is also tested for this $\eta$. Only if that too returns a non-local value, then we performed the binary chop. We also limited ourselves to $\mathcal{Q}_2$, due to time constraints. From this search we can conclude that $I^4_{4422}$ \emph{is} the best for this scenario - the best competitors are two inequalities which achieve $64.95\%$. Perhaps surprisingly, there are \emph{two} inequivalent liftings of  $I^4_{4422}$ which obtain the best value of $61.83\%$ (remembering our precision problems).
\newpage
\begin{center}
\begin{figure}[h!]
\begin{equation}
\begin{array}{c|cc|cc|cc|cc}
&\downarrow & & \downarrow & & &\downarrow & \downarrow&\\
\hline
\rightarrow&0&\frac{1}{2} & 0 & 0 & 0 & 0 & 0 & \frac{1}{2}\\
&0& 0 & \frac{1}{2} & 0 & \frac{1}{2} & 0 & 0 & 0\\
\hline
&0& 0 & \frac{1}{2} & 0 & 0 & 0 & \frac{1}{2} & 0\\
\rightarrow&\frac{1}{2}& 0 & 0 & 0 & \frac{1}{2} & 0 & 0 & 0\\
\hline
&0& \frac{1}{2} & 0 & \frac{1}{2} & 0 & 0 & \frac{1}{2} & 0\\
\rightarrow&0& 0 & 0 & 0 & 0 & \frac{1}{2} & 0 & 0\\
\hline
&0& \frac{1}{2} & 0 & 0 & \frac{1}{2} & 0 & 0 & \frac{1}{2}\\
\rightarrow&0& 0 & 0 & \frac{1}{2} & 0 & 0 & 0 & 0\\
\end{array}
\end{equation}
\caption[$I^{4}_{4422}$: lifting 1]{The first lifting of $I^{4}_{4422}$, achieving the boundary. The arrows denote the choice of coefficients for the third outcome $N$.}\label{i44a}
\end{figure}
\begin{figure}[h!]
\begin{equation}
\begin{array}{c|cc|cc|cc|cc}
& \downarrow &  &  & \downarrow &  & \downarrow  & &\downarrow\\
\hline
&0&\frac{1}{2} & 0 & 0 & 0 & 0 & 0 & \frac{1}{2}\\
\rightarrow&0& 0 & \frac{1}{2} & 0 & \frac{1}{2} & 0 & 0 & 0\\
\hline
\rightarrow&0& 0 & \frac{1}{2} & 0 & 0 & 0 & \frac{1}{2} & 0\\
&\frac{1}{2}& 0 & 0 & 0 & \frac{1}{2} & 0 & 0 & 0\\
\hline
&0& \frac{1}{2} & 0 & \frac{1}{2} & 0 & 0 & \frac{1}{2} & 0\\
\rightarrow&0& 0 & 0 & 0 & 0 & \frac{1}{2} & 0 & 0\\
\hline
&0& \frac{1}{2} & 0 & 0 & \frac{1}{2} & 0 & 0 & \frac{1}{2}\\
\rightarrow&0& 0 & 0 & \frac{1}{2} & 0 & 0 & 0 & 0\\
\end{array}
\end{equation}
\caption[$I^{4}_{4422}$: lifting 2]{The second lifting of $I^{4}_{4422}$, achieving the boundary. This lifting is a relabelling of the one used in \cite{VPB2010}.}\label{i44b}
\end{figure}
\end{center}

\subsubsection{(3,3,3,3) Scenario Test}
By utilising parallelism and a research computing cluster, we were able to perform the binary chop over all $729$ liftings of the $10143$ Bell inequalities we obtained for the $(3,3,3,3)$ scenario. This testing was also done at level $\mathcal{Q}_2$ of the hierarchy, with an initial cut at $\eta=\foo{2}{3}$. For this scenario, the best detection threshold achieved was $61.8\%$ - the \emph{same} as $(4,4,2,2)$ (up to $10^{-3}$). 8 inequality class representatives obtained this threshold, and for these $8$, the binary chop at the $\mathcal{Q}_3$ level was performed, and the efficiency was non-increasing for 2. These 2 inequalities we believe will either be able to match the threshold of $I^{4}_{4422}$, or even better it by a small amount. These two inequalities are:
\begin{minipage}{0.5\columnwidth}
\begin{equation*}
I^A_3=
\begin{array}{c|ccc|ccc|ccc}
& & & \downarrow& & &\downarrow & \downarrow && \\
\hline
\rightarrow & 0 & 0 & 0 & 1 & 0 & 0 & 0 & 0 & 1 \\
& 0 & 0 & 0 & 1 & 0 & 0 & 0 & 0 & 1 \\
& 1 & 0 & 1 & 0 & 1 & 0 & 0 & 1 & 0 \\
\hline
& 0 & 0 & 0 & 0 & 1 & 1 & 0 & 0 & 1 \\
& 0 & 0 & 0 & 1 & 0 & 1 & 1 & 1 & 0 \\
\rightarrow & 1 & 1 & 0 & 0 & 1 & 0 & 0 & 1 & 0 \\
\hline
\rightarrow & 0 & 0 & 0 & 0 & 0 & 0 & 1 & 0 & 0 \\
& 0 & 1 & 0 & 0 & 0 & 1 & 0 & 0 & 0 \\
& 0 & 0 & 1 & 1 & 0 & 0 & 0 & 0 & 0 \\
\end{array}
\end{equation*}
\end{minipage}
\begin{minipage}{0.5\columnwidth}
\begin{equation*}
I^B_3=
\begin{array}{c|ccc|ccc|ccc}
 & & &\downarrow && \downarrow &&&&\downarrow\\
\hline
& 0 & 0 & 1 & 0 & 0 & 0 & 0 & 1 & 0 \\
& 1 & 0 & 1 & 0 & 0 & 0 & 0 & 0 & 0 \\
\rightarrow & 0 & 0 & 0 & 1 & 0 & 0 & 1 & 0 & 0 \\
 \hline
& 1 & 0 & 1 & 1 & 0 & 0 & 0 & 0 & 0 \\
\rightarrow & 1 & 1 & 0 & 0 & 0 & 0 & 0 & 1 & 0 \\
& 0 & 1 & 0 & 0 & 1 & 1 & 1 & 0 & 0 \\
 \hline
& 1 & 0 & 0 & 0 & 1 & 1 & 0 & 0 & 0 \\
& 0 & 1 & 1 & 1 & 0 & 0 & 0 & 0 & 0 \\
\rightarrow & 0 & 1 & 0 & 0 & 0 & 0 & 0 & 0 & 1 \\
\end{array}
\end{equation*}
\end{minipage}
\\

It is curious to note that for each of these inequalities, there is a measurement with a redundant outcome - for $I^A_3$, Alice's first measurement has identical coefficients for both outcomes\footnote{Labelling outcomes $1,2,3$.} $1,2$, whilst in $I^B_3$ Bob's second measurement treats outcomes $2,3$ identically. Moreover, this redundant outcome is the optimal lifting -  $I^A_3$ treats $1,2$ and $N$ as identical for Alice's first measurement, and similarly for $I^B_3$.  Whether this is merely coincidence, or a hint at the structure of inefficiency-robust Bell inequalities, remains unclear.\\

\subsubsection{(3,5,2,2) Scenario Test}
Since we have also enumerated a single new $(3,5,2,2)$ inequality, it is also worthwhile testing this inequality in order to determine its detection threshold. Searching over all possible liftings, we obtained the values in table \ref{tableI35}.
\begin{table}[h!]
\begin{center}
\begin{tabular}{c|c|c|c}
Inequality & $\mathcal{Q}_1$ & $\mathcal{Q}_2$ & $\mathcal{Q}_3$ \\
\hline
$I_{3522}$ & $64.62\%$ & $66.70\%$ & $66.71\%$ \\
\end{tabular}
\caption[Detection thresholds for $I_{3522}$]{Comparing the threshold values, it appears this inequality cannot beat CHSH.}\label{tableI35}
\end{center}
\end{table}
Although the values in the table for the higher levels of the hierarchy seem to suggest that this inequality does strictly worse than CHSH, this is simply down to imprecision in the solver. To show this, we can consider $I_{3522}$ with Alice restricted to taking just the first and last measurements, and Bob limited to his third and final measurements. This gives us the restricted inequality:
\begin{equation}
I^R_{3522}=\left(\begin{array}{cccc}
0 & \frac{1}{3} & 0 & \frac{1}{3} \\
\frac{1}{3} & 0 & \frac{1}{3} & 0 \\
0 & \frac{1}{3} & \frac{1}{3} & 0 \\
\frac{1}{3} & 0 & 0 & \frac{1}{3}
\end{array}\right)
\end{equation}
this is a CHSH inequality and so by restricting their choice of measurements, Alice and Bob can always achieve the CHSH threshold. It is for this same reason that $I_{3322}$ and $I_{2233}$ are able to achieve a threshold of $\foo{2}{3}$.

\subsection{Attempts to Improve Precision}\label{improve}
In order to compare the known $(4,4,2,2)$ boundary with our $(3,3,3,3)$ candidates, I attempted to improve the precision of our semidefinite solver, in order that we may trust a higher number of decimal places. This list details some of my attempts to do this:
\begin{itemize}
\item \textbf{Increase the internal CVX precision.}\\
The most natural solution to the problem is to alter the ``$\mathrm{cvx\_ precision}$" parameter, to hope for obtaining more precise results. When set to default, CVX states the problem is ``solved" when a precision of $1.49\times 10^{-8}$ is achieved, and ``inaccurately solved" if a precision of only $1.22\times 10^{-4}$ can be reached. By contrast, setting the $\mathrm{cvx\_ precision}$ parameter to ``high" these figures become $1.82\times 10^{-12}$ and $1.35\times 10^{-6}$ respectively. Whilst we have seen that the values do not match up with the precision of our binary chop solutions, our hope is that this more stringent requirement will improve our overall values. The results of this change in parameter are presented in table \ref{highprecisresults}.
\begin{table}
\begin{center}
\begin{tabular}{c|c|c|c}
Inequality & $\mathcal{Q}_1$ & $\mathcal{Q}_2$ & $\mathcal{Q}_3$ \\
\hline
CHSH & $66.67\%$ & $66.67\%$ & $66.67\%$ \\
$I_{3322}$ & $60.00\%$ & $66.50\%$ & $66.67\%$ \\
$I_{2233}$ & $66.67\%$ & $66.66\%$ & $66.67\%$ \\
$I^4_{4422}$ (Lift 1) &$50.00\%$ & $61.80\%$ & $61.86\%$ \\
$I^4_{4422}$ (Lift 2) &$50.00\%$ & $61.80\%$ & $61.81\%$ \\
$I_{3522}$ & $64.03\%$ & $66.67\%$ & $66.67\%$ \\
$I_3^A$ & $50.00\%$ & $61.48\%$ & $61.80\%$ \\
$I_3^B$ & $50.00\%$ & $60.54\%$ & $61.80\%$ \\
\end{tabular}
\caption[High precision CVX thresholds]{The obtained detection thresholds obtained with the $\mathrm{cvx\_ precision}$ set to ``high".}\label{highprecisresults}
\end{center}
\end{table}
Contrary to the default precision setting, in which for all results presented the ``solved" tag was returned, for the high precision setting the status ``inaccurate/solved" was often returned - nevertheless, for $\mathcal{Q}_3$ the thresholds obtained matched the known thresholds up to at least $10^{-4}$. Our two candidates $I_{3}^{A|B}$ match the lowest bound, $\fbo{\sqrt{5}-1}{2}$, up to this precision, suggesting that this may too be the detection threshold for these Bell inequalities. We see also that for this level of precision one of our $I_{44422}^4$ liftings does not match this threshold, suggesting that the lifting given by \cite{VPB2010} is unique in achieving that value. %and although the $\mathcal{Q}_3$ threshold obtained is higher than this exact value, they both have a lower threshold than that returned for CVX for $I^4_{4422}$. This gives more circumstantial evidence that these candidates may marginally lower the bound, although 

\item \textbf{Further increase the internal CVX precision.}\\
The third option for increasing $\mathrm{cvx\_ precision}$ is ``best" - for this, the tags ``solved" and ``inaccurately solved" match those of the default setting, the values $1.49\times 10^{-8}$ and $1.22\times 10^{-4}$; however instead of stopping once these precisions have been satisfied,  instead CVX looks to increase precision by continuing to solve the problem until no progress can be made.\\

For this precision setting, the results for $I^4_{4422}$, $I_3^A$ and $I_3^B$ can be found in table \ref{besttab}.
\begin{table}
\begin{center}
\begin{tabular}{c|c|c|c}
Inequality & $\mathcal{Q}_1$ & $\mathcal{Q}_2$ & $\mathcal{Q}_3$ \\
\hline
$I^4_{4422}$ (Lift 1) &$50.00\%$ & $61.80\%$ & $61.86\%$ \\
$I^4_{4422}$ (Lift 2) &$50.00\%$ & $61.80\%$ & $61.81\%$ \\
$I_3^A$ & $50.00\%$ & $61.48\%$ & $61.80\%$ \\
$I_3^B$ & $50.00\%$ & $60.54\%$ & $61.80\%$ \\
\end{tabular}
\caption[Best precision CVX thresholds]{The obtained detection thresholds obtained with the $\mathrm{cvx\_ precision}$ set to ``best".}\label{besttab}
\end{center}
\end{table}
One can see that the values match those obtained with the ``high precision" parametrisation; in fact, they are identical up to machine precision, implying the solver is performing identically in each scenario. Thus we can reasonably conclude that we learn no more from the ``best" option than we do from high precision, and thus the exact relationship between the thresholds of the two inequalities remains unclear.

\item \textbf{Alternate solvers.}\\
Included with CVX are two open-source semidefinite solvers: SDPT3 and SeDuMi. Running the optimal $(4,4,2,2)$ liftings with these solvers, we obtained the results in table \ref{solvercomp}.
\begin{table}[h!]
\begin{center}
\begin{tabular}{cccc}
 Solver & Lifting & Bound & Notes \\
\hline
SDPT3 & 1 & 61.86$\%$ & Solver returns ``Inaccurate/Solved" tag \\
SDPT3 & 2 & 61.85$\%$ & Solver returns ``Inaccurate/Solved" tag \\
SeDuMi & 1 & 61.82$\%$ & Solver returns ``Inaccurate/Solved" tag \\
SeDuMi & 2 & 61.81$\%$ & Solver returns ``Inaccurate/Solved" tag \\
\end{tabular}
\caption[CVX solver comparison]{The detection thresholds returned by the pre-packaged  CVX solvers; for each case the status ``Inaccurate/Solved" is returned.}\label{solvercomp}
\end{center}
\end{table}
Both solvers seem to match the accuracy of $10^-3$ - but the solver status is returned as ``Inaccurate/Solved" - this corresponds to a precision value of $\sqrt{1.49\times 10^{-8}}\approx 1.22\times 10^{-4}$; as this precision is less than that when CVX returns a successful solution tag \footnote{Which occurs when using MOSEK.}, these solvers only offer \emph{at best} equal precision to MOSEK.

\item \textbf{Allowance for local errors.}\\
Judging by the obtained values, we see that the binary chop is homing in on a value higher than the true bound; we can conclude from this there is a critical $\eta^* > \fbo{\sqrt{5}-1}{2}$ for which an erroneous solution $\boldsymbol{\pi}_{\eta^*}$ with $\mathrm{Tr}\left[I^{4\,T}_{4422}\Pi_{\eta^*}\right]\geq 1$ is returned, and thus the binary chop algorithm concludes that $\eta_c \geq \eta^*$, and can  never home in on the true value. To account for this, we could instead conclude our distribution is local only if  $\mathrm{Tr}\left[I^{4\,T}_{4422}\Pi_{\eta^*}\right]\geq 1+\varepsilon$, with $\varepsilon$ an error parameter. One could then ``tune" this parameter $\varepsilon$, by performing a double binary chop.
\begin{enumerate}
\item First set two bounds of $\varepsilon$; $\varepsilon_{\mathrm{min}}=0$, $\varepsilon_{\mathrm{max}}=0.1$. Then set $\varepsilon=\fbo{\varepsilon_{\mathrm{min}}+\varepsilon_{\mathrm{max}}}{2}$.
\item  Perform the standard binary chop algorithm to determine $\eta_c$, using the locality criterion $\mathrm{Tr}\left[I^{4\,T}_{4422}\Pi_{\eta^*}\right]>1+\varepsilon$. This will gives us an $\varepsilon$ based boundary,  $\eta_c^{\varepsilon}$.
\item  If $\eta_c^{\varepsilon}>\eta_c=\fbo{\sqrt{5}-1}{2}$ then we conclude the same problem is still occurring, and thus we set $\varepsilon_{\mathrm{min}}=\varepsilon$. If  $\eta_c^{\varepsilon}<\eta_c$ we conclude we are being ``too generous", and set $\varepsilon_{\mathrm{max}}=\varepsilon$.
\item By repeating this process, we home in the most accurate error parameter.
\end{enumerate}

Unfortunately, this tuning method fails consistently. This is because for all $\eta<\eta_c$, the local bound $1$ is exactly achievable - thus the solver returns a value of $1+\delta$ with $\delta\approx 10^{-8}$ (the solver's  touted precision) regardless of proximity to $\eta_c$. Moreover, for values $\eta_c < \eta < 61.83$ (at level 2) our solver returns a value of the same form. Thus the tuning method homes in an arbitrary $\varepsilon\approx 10^{-8}$, which bears no relation to locality/non-locality.

\item \textbf{Quadratic description}\\
Another attempt to improve the precision was to improve the precision of the variables themselves. Hitherto to now, all calculations have been done using the standard double-precision format - in which each variable is stored using 64 bits. The semidefinite solver SDPT3 has an implementation which allows for quadratic-precision, which uses 128 bits instead. Unfortunately, this implementation is for C++ only - and thus required manual input of the semidefinite hierarchy conditions, as opposed to being able to use QETLAB. The standard double-description C++ failed to solve the semidefinite program due to ``numerical inaccuracies", whilst the quadratic description was too slow; failing to converge to a solution on any reasonable time-scale.

\item \textbf{Replacing CVX with YALMIP}\\
YALMIP\cite{YALMIP} is an extremely popular package which has been used to much success, in projects such as \cite{VPB2010} and \cite{CGRS2016}. Thus, I decided to try the same problems, using YALMIP rather than CVX. Note that the \emph{solver} choice, MOSEK, does not change; both CVX and YALMIP are toolboxes that process problems to feed to the solver. 

At the time of writing, a bug existed in the code, preventing the use of the QETLAB $\mathcal{Q}_i$ membership function with YALMIP. However, I could utilise the manual constraints generated for the C++ SDPT3. With these constraints, a critical efficiency of $61.92\%$ was obtained for $I_{4422}^4$ in $\mathcal{Q}_2$, implying a worse precision than CVX (in this instance). 

\end{itemize}

\subsection{Quantum Realisations of the Detection Threshold}
As stated in section \ref{ch5:second:b}, \cite{VPB2010} explicitly gives a set of quantum states/measurements achieving the detection loophole threshold for $I_{4422}^4$ - this is how we may compare the performance of the semidefinite program to the true threshold value. With this in mind, and motivated by the numerical difficulties in implementing the quantum hierarchy, we turned to this method in order to try to determine the true value of the detection threshold for $I^{A/B}_3$.\\

This search can be implemented by the aid of semidefinite programming. Before we explain the full method, let us begin by explaining how one can use semidefinite programming to find the quantum state and measurements maximally violating a Bell inequality, $B$, for a $(m_A,m_B,k_A,k_B)$ scenario.\\

\textbf{Problem: minimise $\mathrm{Tr}[B^T\Pi]$, subject to $p(ab|xy)=\mathrm{Tr}[\rho M_{a|x}\otimes M_{b|y}]$.}
\begin{enumerate}
\item Fix the dimension of the search. We constrain our search to quantum states $\rho\in\mathcal{H}_{d^2}$, and POVM measurement operators $M_{a|x},M_{b|y}$ acting on $\mathcal{H}_d$.
\item Generate a random quantum state $\rho\in\mathcal{H}_{d^2}$. This can be done using the RandomDensityMatrix function in QETLAB.
\item Generate $m_B$ random POVMs with $k_B$ outcomes. This can be done using the RandomPOVM function in QETLAB. Note that this gives us $m_Bk_B$ operators of size $d\times d$  which satisfy $M_{b|y}\geq 0$, $\sum_b M_{b|y}=\mathbb{I}_d$. 
\item We can then formulate a semidefinite problem over Alice's POVM elements, $\left\{M_{a|x}\right\}$ - each of these must be positive semidefinite, which occurs iff the block diagonal matrix
\begin{equation}
M_A=\left(\begin{array}{ccc}
M_{1|1} & 0 & 0 \\
0 & \ddots & 0 \\
0 & 0 & M_{k_A|m_A}
\end{array}\right)\geq 0.
\end{equation}
Since we have fixed $\rho,\; \{M_{b|y}\}$, the objective $\mathrm{Tr}[B^T\Pi]=\sum_{a,b,x,y} B(ab|xy)p(ax|xy)=\sum_{a,b,x,y}\sum_{i,j}B(ab|xy)[\rho]_{ij} [M_{a|x}\otimes  M_{b|y}]_{ji}$ is a linear function of $M_A$ (though we do not explicitly give it here). Our constraints\footnote{These can be written in the canonical form by taking the equality elementwise.} of $M_A$ are simply $\sum_a M_{a|x}=\mathbb{I}_d,\;\forall x$.
\item We now repeat the same process with the elements $\left\{M_{b|y}\right\}$, using the $\left\{M_{a|x}\right\}$-defining $M_A$ obtained from Step 4, and our state $\rho$. 
\item Finally, we optimise the function  $\mathrm{Tr}[B^T\Pi]$ over the density matrix $\rho$, using the $M_A$, $M_B$ obtained. Since we require $\rho\geq 0$ in order for $\rho$ to be a valid quantum state, this too forms a semidefinite program, with the constraint $\mathrm{Tr}[\rho]=1$. 
\item Repeat Steps 4-6, until the solution converges to a given precision.
\item Store the solution and optimal value, then return to Step 1 and repeat a set number of times.
\end{enumerate}
This method is known as the ``see-saw method" - whilst the objective value is always non-increasing for each given optimisation, it is \emph{not} guaranteed to converge to the true minima - this is because the function $\mathrm{Tr}[B^T\Pi]$ is not \emph{jointly} convex in the variables $\rho, M_A, M_B$. This means that is possible to get stuck in a local minima, where for each fixed pair of $\rho, M_A, M_B$ no improvement can be made by varying the other variable, yet there exists a better minima which could be achieved by varying at least two of the three simultaneously. Unfortunately, without fixing two of the set one fails to obtain a semidefinite problem, due to the form of $p(ab|xy)$. To counter this, we start the see-saw at many random starting points in the solution space, with the hope that we ``fall into" the correct minima for at least one of these starting points. By this same logic, we can never be sure we have truly found the global minima, unless we have a lower bound we may apply.

\subsubsection{Application the Detection Loophole Problem}
Let us return now to our problem - finding the detection threshold for our two Bell inequalities. One thing we wish to know to for these two inequalities is - do they beat the threshold $\eta_c=\fbo{\sqrt{5}-1}{2}$? Equivalently, does there exist a quantum distribution $\Pi$ such that $\mathrm{Tr}[I^{(A/B) T}_3 \Pi_{\eta_c}]< 1$? Since $\Pi_{\eta_c}$ is linearly dependent  only  on $\Pi$ (having fixed $\eta$), we can apply the see-saw method described above with the modified objective function $\mathrm{Tr}[I^{(A/B) T}_3 \Pi_{\eta_c}]$. 

Performing this using YALMIP\footnote{With MOSEK.} for $d=9$, we were unable to find any solutions significantly different\footnote{Greater than machine precision.} from 1. Bearing in mind the numerical errors previously encountered, we instead decided to increase $\eta=0.65$ - with the aim of extrapolating the resultant state/measurements to those\ achieving the boundary value. However, that too failed to result in a non-local state. Thus could be due to four possibilities:
\begin{itemize}
\item \textit{Numerical issues} - it is possible that still we are unable to achieve enough precision to obtain the small violation attained by the optimal states.
\item \textit{``Bad luck"} - we simply did not choose a good random starting point which will lead to the global minima.
\item \textit{Dimension} - it is possible the dimension required to achieve optimal violation is higher than $9$ - there is evidence to suggest $I_{3322}$ may require infinite dimensional states for optimal violation, so this possibility is not without precedent.
\item \textit{Hierarchy inaccuracy} - Despite the hierarchy appearing to converge on a value of $\approx 61.8\%$, it is possible that the true quantum boundary is higher than this, with violation $<65\%$ being achieved by non-quantum hierarchy states.
\end{itemize}
Of these potential issues, only the second is actionable, and thus we tweak the see-saw method, in order to improve the likelihood of ``falling down" the correct well. \\

\begin{samepage}
\underline{\textbf{Method 1}: Triple Descent}
\begin{enumerate}
\item Start with $\eta=1$. Then perform the standard see-saw method, homing in on (hopefully) optimal $\rho^1$, $M_A^1$ and $M_B^1$.
\item Reduce $\eta$ by a small step $\delta\eta$. Repeat the see-saw algorithm with this new $\eta$, but starting with $\rho^1$, $M_A^1$ and $M_B^1$ as the initial states and measurements. 
\item repeat this process, for each $\eta$ taking $\rho^{\eta+\delta\eta}$, $M_A^{\eta+\delta\eta}$ and $M_B^{\eta+\delta\eta}$ as the initial values, until an $\eta=\eta_{L}$ such that $\mathrm{Tr}[B^T \Pi_{\eta_L}] \geq 1$ is obtained after convergence. For this run, the allowable efficiency is $\eta_L$.
\item Steps 1-3 are then repeated, with the minimal $\eta_L$ taken over all runs.
\end{enumerate}
The motivation behind this method is that for perfect efficiencies, violation of the inequality is significant, and thus easier for the see-saw to converge to. As the efficiency decreases, and the violation gets smaller, we begin with  ``good candidate" states and measurements more likely to obtain a violation. %By lemma \ref{local_reduction} we know the optimal detection threshold $P$ achieving $\eta_c$ will also provide a violation for $\eta > \eta_c$ - so we hope to fall into this well as $\eta$ descends.\\
At the time of writing, using dimension $9$ and $\delta\eta=0.01$, the lowest $\eta_L$ I was able to achieve was $0.67$.\\
\end{samepage}

\underline{\textbf{Method 2}: Super See-Saw}
\begin{enumerate}
\item Set $\eta=1$, and $\rho^1=\ket{\Phi}_{\,d}\bra{\Phi}$. See-saw between optimising $M_A$ and $M_B$, converging on
 $M^{1,1}_A$ and $M^{1,1}_B$.
\item Reduce $\eta$ by $\delta\eta$, and see-saw again between $M_A$ and $M_B$, starting with $M^{1,1}_A$ and $M^{1,1}_B$ and converging on $M^{1,\eta}_A$ and $M^{1,\eta}_B$.
\item Repeat the process, for each $\eta$ taking $M_A^{1,\eta+\delta\eta}$ and $M_B^{1,\eta+\delta\eta}$ as the initial values, until reaching $\eta_1$ such that $\mathrm{Tr}[B^T \Pi_{\eta_1}] \geq 1$ is obtained after convergence.
\item Minimise $\mathrm{Tr}[B^T \Pi_{\eta_1}]$ over $\rho$, using measurements $M^{1,\eta_1+\delta\eta}_{A}$ and $M^{1,\eta_1+\delta\eta}_{B}$ from the previous $\eta$ value. This gives an optimal state $\rho_2$.
\item For the new optimal state $\rho_2$, starting with $M^{1,\eta_1+\delta\eta}_{A}$ and $M^{1,\eta_1+\delta\eta}_{B}$ see-saw over the measurements to obtain optimal measurements $M^{2,\eta}_{A}$ and $M^{2,\eta}_{B}$.
\item repeat Steps 3-5 until an $\eta=\eta_L$ is achieved such that both the measurement see-saw and state minimisation give local ($\geq 1$) values.
\end{enumerate}
We begin with the starting state $\rho_1=\ket{\Phi}_{\,d}\bra{\Phi}$ due to the results of \cite{E1993} and \cite{VPB2010}, in which the states achieving the detection threshold were of the form \\
$\sum_{i=0}^{d-2}\sqrt{\fbb{1-\varepsilon^2}{d-1}}\ket{ii}+\varepsilon\ket{d-1,d-1}$ in the limit as $\varepsilon\rightarrow 0$ - the  maximally entangled state is also of this form, for $\varepsilon= \foo{1}{\sqrt{d}}$. Thus, the hope for this method is to ``track" this optimal family of states along $\varepsilon$, as the detection efficiency decreases.
The lowest $\eta_L$ obtained with this method was $\eta_L=0.66$, with $d=9$ and $\delta\eta=0.01$.

 %==========================================================================================================

\section{Summary}
\label{ch5:summary}
The use of Bell inequalities in practical quantum information is dominated by CHSH. This is because of the difficulty involved in preparing quantum states with dimension higher than 2, and performing non-binary measurements. At our current stage of development, viability is the primary defining factor of an experiment before efficiency. At such a stage, CHSH is the perfect candidate as it is the easiest to implement and allows experimentalists to test violation of locality. \\

This situation will not last forever though. Our techniques for generating, storing and measuring quantum states will become more and more refined. As this occurs, the trade-off between difficulty in implementation of higher dimensional scenarios, and the benefits they confer will shift. It was with this shift in mind that research such as \cite{VPB2010} was performed - investigating the relatively low-dimensional ququart scenario, and showing it outperformed the ``benchmark" inequality that is CHSH - at least in terms of the detection loophole. \\

It is with this low-dimensionality in mind that the research in this chapter has been performed. The inequality generating method that forms the first half of the chapter likely does not scale well as the dimensionality increases, due to it sometimes returning non-facet solutions. Nevertheless, for low dimensions it churns out thousands of inequalities with fundamentally different properties in a reasonable timescale. As the hunt for low-dimensional Bell inequalities with properties suitable to specific quantum information tasks intensifies - as I hope it will with the development of the field - then I believe this method could provide a useful tool.\\

A natural comparison to draw is with the more rigorous method of obtaining Bell inequalities via the simplex algorithm - by choosing a trivial objective function, one can use the simplex algorithm to pivot around the vertices of the polar dual of the local polytope. At every pivot, a new facet Bell inequality will be found. One can even pivot in such a way that every vertex will be visited exactly once. Yet the extreme degeneracy of the facet classes means that this method is not practical - relabellings of inequalities are of no interest to us. Whilst this method is certainly better in principle, it does not suit our purposes. I believe improving the way in which we search for Bell inequalities would be a reasonable direction for future research - given that one is only concerned with having a representative of each Bell inequality class, enumerating all facets is definitely not necessary. There are several symmetrical properties to be exploited - a simple example is that \emph{every} local deterministic distribution $\mathbf{d}_i$ is a saturating point for at least one representative of \emph{every} class - given a saturating point $\mathbf{d}_j$ for Bell inequality $\mathbf{b}$, the relabelling taking $\mathbf{d}_j\rightarrow \mathbf{d}_i$ will define a new inequality $\mathbf{b}'$ with $\mathbf{d}_i$ as a saturating point.\\

Of course, not all research is done with an entirely practical motivation, and I had hoped to lower the detection threshold, by providing explicit states and measurements. Unfortunately, I was unable to do this, and only able to present evidence of many Bell inequalities which are likely \emph{not} useful for this problem, as well as two candidate Bell inequalities for which there is evidence that they either satisfy the same threshold, or very slightly lower it. Trying to improve the precision of the semidefinite programming involved may help in determining this, and I plan to contact someone in the field of convex optimisation to achieve this. \\

There are other directions this work could be extended. All of the analysis in this chapter has been performed using liftings of inequalities, in which detection failure is mapped to a valid output. One could extend this analysis to treating detection failure as a separate outcome. Unfortunately, we can only do this rigorously for the $(2,2,2,2)$ scenario since we know all $(2,2,3,3)$ Bell inequalities. For this scenario we also know that all inefficient distributions $\boldsymbol{\pi}_{\eta}$ with $\eta=\foo{2}{3}$ are local - as this is the detection threshold of CHSH, we can be sure that $I_{2233}$, the only new inequality for $(2,2,3,3)$, will not improve on this. One could use our newly generated $(3,3,3,3)$ inequalities to look at $(3,3,2,2)$ distributions - dealing with binary outcome measurements only may lead to greater precision than our research on $(3,3,3,3)$ distributions.\\

In summary, this chapter has given an algorithm which generates new facet inequalities to the local polytope. Using this, we have been able to generate thousands of new inequalities in a variety of measurement-output scenarios. We have analysed their efficacy in tackling the cryptographically-relevant problem of the ``detection loophole" and shown that, whilst most of the new inequalities are likely worse than the current optimal inequality in the literature, there are two new candidates of a similar dimension for which there is evidence they may be able to marginally reduce the tolerable detection failure threshold.

\chapter{Conclusion and Summary}
\label{ch:conclusions}
Entanglement is a complicated property to understand. For pure bipartite states it is easy to quantify, but for mixed states, the possibility of convex combinations of states increases the difficulty - so much so that the problem of determining bipartite separability has been shown to be NP-hard in general \cite{G2003}. There is no one defining measure for entanglement - instead there exists a variety of measures satisfying certain well-reasoned properties. Some of these are operational, such as the distillable entanglement or entanglement cost, and require a difficult optimisation over all possible local protocols taken in the asymptotic limit. Others are algebraic, such as the relative entropy of entanglement (REE), or squashed entanglement, and the relation between these measures and their operational counterparts mean they can be used to bound the operational usefulness of states, or in some cases exactly determine it. These more abstract entanglement measures are also difficult  to calculate in general; often requiring optimisation over infinite extensions or asymptotic regularisation, but for some states these problems simplify and values can be found. These problems are compounded when considering multi-partite scenarios. No longer can even a maximally entangled state be agreed on - and many of the most popular bipartite measures of entanglement do not generalise.\\

This thesis does not claim to solve any of the difficulties in understanding entanglement. Instead, it looks at using entanglement as a tool to investigate capacities of quantum channels. These capacities share many of the problems of entanglement: optimisation over general protocols, the possibility of infinite dimensional states, and asymptotic limits. Unlike with the entanglement of states however, there does not exist the same range of useful abstractions of capacity which we can instead consider. There exist definitions of relative entropy of entanglement and squashed entanglement for channels, but they are calculated as the supremum taken over all possible output states, over all protocols. The advantage of channel simulation is to connect capacities of quantum channels to entanglement measures on states, in order to bound their value. \\

The limitation of channel simulation is that a specific LOCC simulation and resource state is required; currently the only protocol which has been explored in great detail is the teleportation protocol. Hopefully the range of protocols which are useful for channel simulation will increase, or even a constructive method discovered where, given a channel $\mathcal{E}$, one can then construct resource state $\sigma_\mathcal{E}$ and LOCC protocol $\Lambda$ simulating the channel, allowing one to use the entanglement properties of $\sigma_\mathcal{E}$ to bound the two-way capacities of $\mathcal{E}$. An interesting open question is whether every channel is Choi-simulable, which would give $E_D(\chi_\mathcal{E})=D_2(\mathcal{E})$ for all $\mathcal{E}$ - currently we know this only to be true for teleportation covariant channels \cite{PLOB2017}.\\

The noisy teleportation protocol introduced in chapter \ref{ch:simul} is an attempt to widen the range of simulable channels, by expanding the teleportation protocol with the addition of classical noise. This goal was achieved, allowing for the first time dimension-preserving non-Pauli channels to be simulated over a discrete-variable resource. We proved a no-go theorem on the structure of the resource state in order to allow non-Pauli simulation, and used it in order to choose a specific resource state, for which we completely characterised the set of possible resultant channels - the ``Pauli-damping" family. This work could be extended to ``noisy qudit teleportation" - however, we saw that for Pauli-damping channels this technique did not provide the best upper bound on secret-key capacity. This highlights the importance of not only showing a channel can be LOCC-simulated, but choosing a resource state which accurately conveys the properties of the channel. It is for this reason that the question of whether channels are Choi-simulable is so important. Furthermore, we were able to more accurately bound Pauli-damping channels due to their composite nature; therefore an investigation into the simulation of irreducible channels would also be productive.\\

%A natural way to expand this work would be to generalise this protocol to qudits, as has been done with standard teleportation simulation. Unfortunately, the general form of the simulated qudit channel would depend on $d^4-1$ parameters defining the 2-qudit resource (subject to positivity constraints) and $d^4-d$ classical channel parameters. Presenting and analysing such a form would be difficult.

%I have been able to implement a Matlab program that, given an arbitrary qubit channel, checks whether it is simulable by noisy teleportation. A useful addition to this would be a minimisation of the REE over all resource states simulating the channel; as the RPPT\footnote{the relative entropy of entanglement with respect to positive partial transpose states} coincides with the REE for qubits, this can be implemented numerically. This may not be the best approach to obtaining tight qubit bounds though, as we saw for Pauli-damping channels that noisy teleportation did not provide the tightest upper bound on secret key capacity. Perhaps instead a better strategy would be to look at simulating \emph{irreducible} channels, which satisfy:
%\begin{equation}
%\mathcal{E}=\mathcal{E}_1\circ\mathcal{E}_2\Rightarrow \mathcal{E}_1=\mathbb{I} \text{ or } \mathcal{E}_2=\mathbb{I}.
%\end{equation}
%Unlike Pauli-damping channels, any attempt to locally simulate one component channel would result in a trivial bound, whilst composite channel could use simulation of irreducible components to provide an upper bound.\\

Werner states have proved one of the most fruitful resources for teasing out the unusual features of entanglement. For qubits, Werner states are equivalent under local unitaries to the isotropic states - a pseduo-mixture of a maximally entangled state and the identity matrix. For $d>2$ though, the properties of the two classes diverge. Entangled isotropic states are always distillable, and entanglement measures are additive and easy to calculate. By contrast, Werner states exhibit subadditivity of the REE, with the result that for the parameter range $\eta\in[-1,-2/d)$ the regularised REE is not known. Moreover, the conjecture that bound entangled states exist with a negative partial transpose (NPT) has stood for around 20 years, with certain Werner states the primary candidates. There is much circumstantial evidence to support this conjecture but no one has been able to prove it - in part due to the range of possible distillation procedures. In chapter \ref{ch:Werner} we introduced a new family of states which generalised the Werner states by introducing a phase component. These new states share many similarities with Werner states, and we were able discern some of their entanglement properties to show that they provide new candidates for NPT bound entanglement, which may lead to new insights on the problem.\\

 %  The coincidence between the conjectured NPT boundary for Werner states and the separability boundary for $\pi$-Werner states is intriguing, although I have been unable to exploit this to make inroads on the conjecture. However, I believe that the phase Werner states may benefit investigation of the problem; much of the complexity of Werner state entanglement lies within the relationship between the symmetric maximally entangled states $\frac{\ket{ij}+\ket{ji}}{2}$ and their antisymmetric counterparts $\frac{\ket{ij}-\ket{ji}}{2}$. The phase Werner states offer a way to continuously deform between them, and hopefully may lead to some insights, and the phase-Werner states which provide new candidates for NPT bound entanglement may allow for some novel techniques to be developed. \\

Connected to Werner states by channel-state duality, the Holevo-Werner channels were shown to be teleportation covariant in chapter \ref{ch:WernerChannels}, and consequently that the optimal parameter estimation may only scale with the shot-noise limit, the limit found in classical estimation theory. We found that entanglement-breaking was not a factor in parameter estimation of these channels, a property likely linked to the fact Werner states are simultaneously diagonalisable and so functions such as the quantum relative entropy simplify to their classical counterpart. Choice of parametrisation greatly affects the influence dimension has on estimation - our favoured choice of the expectation representation was dimension-independent, whilst the $\alpha$-representation showed that estimation can improve or become more difficult for different regions as dimension increases. We were able to improve the bounds for binary discrimination of Holevo-Werner channels, by providing the analytic form on the quantum Chernoff bound, and extending it to the depolarising channel also.\\ 

In the second part of chapter \ref{ch:WernerChannels}, we applied entanglement-based upper bounds to the secret key capacity of Holevo-Werner channels. By considering four different bounds we were able to restrict the capacity more tightly than any one of them individually. One of the entanglement measures used was the two-copy REE, calculated for general Werner states here for the first time. We also used two alternative upper bounds on the squashed entanglement; a suitable choice as it is additive. Further work in this direction could be to investigate the $n$-copy REE, $n>2$ - it has been shown for the extremal Werner state the the regularised REE and RPPT do \emph{not} coincide\cite{CSW2012} - but it is not known for which $n$ this first occurs. Another valid line of research would be to try to find the squashed entanglement of entangled Werner states - this is not known even for the extremal Werner state, though the upper bound provided by \cite{CSW2012} coincides with the logarithmic negativity and regularised RPPT for even dimensions, and thus is conjectured to be tight. Both of these directions would improve the secret-key capacity bounds presented here. At this time, it is unknown how the secret-key capacity of these channels relates to the conjecture of NPT bound entanglement - although proof of the conjecture would prove the corresponding Holevo-Werner channels have 0 two-way distillation capacity.\\

In the last chapter we looked at another important aspect of quantum information, non-locality. By exploiting extremal no-signalling distributions - which are non-physical - we were able to enumerate new Bell inequality classes, which may have useful applications in quantum protocols; particularly in the field of device independent cryptography, where Bell inequalities are used to ensure non-locality, and thus the security, of the device. Although the algorithm we presented does return non-facet inequalities, the reduction in relabelling-equivalent solutions means that for lower dimensional cases were able to generate tens of thousands of new non-equivalent Bell inequality classes. A limitation of this algorithm is that it does not terminate when all inequivalent Bell inequalities have been found - meaning we could only provide a complete list of class representatives in scenarios where the number of classes had been previously determined.\\ %Unfortunately though, we were only able to provide a complete list of class representatives in scenarios where the number of classes had been previously determined, since the algorithm we present does not terminate when all inequivalent Bell inequalities have been found.\\
 
Finally, we turned to the problem of the ``detection loophole" - a practically motivated problem where detection failure means that outcome distributions cannot be trusted to be non-local. First we provided tighter bounds on the critical efficiency for which below that detection rate no distribution can be trusted, using the extremal no-signalling points to do this. Then, motivated by obtaining the lowest possible efficiency for which non-local distributions may still be determined with a low number of measurements/outcomes, we tested all the enumerated Bell inequalities to see if they could better the current optimal equality - which can discern non-local distributions down to $\sim 61.80\%$ detection rate. We found two candidate Bell inequalities which appear likely to match this bound, or perhaps even beat it by a small value, although were unable to determine the exact threshold for these inequalities, or provide an explicit quantum state/measurements construction achieving it. Since determining the threshold of an inequality is strongly linked to the use of semidefinite programming, I hope to improve this using insight from an expert in the field. We also restricted our testing to the case where detection failure is mapped to a possible successful outcome - treating it as a separate outcome is an avenue of future research.\\
%Naturally, we wish to obtain the lowest possible efficiency for which non-local distributions may still be obtained, whilst retaining a low number of measurements/outcomes. Although we were unable to provide an explicit quantum construction that lowered the threshold, primarily due to numerical precision, we do present two Bell inequalities for which evidence suggests they may be able to lower the threshold, even if only be a small amount. I believe it would be worthwhile to contact an expert is such optimisation problems, in the hope that they may be able to provide an explicit quantum realisation, or at least a reliable threshold value for the two inequalities. A further investigation into the two ways of treating detection failure - either as a separate measurement or as one of the valid outcomes - is also a potential avenue of further research.\\

This thesis has shown new insights into areas of research including channel simulation, entanglement distillation and generation of Bell inequalities. The work presented here may prove useful when considering larger problems such as the non-trivial simulation of \emph{all} quantum channels, whether all NPT states are distillable, and the minimal detection efficiency for which we can close the detection loophole.

%Whilst this thesis has been able to answer some questions, there are still many loose ends and new questions raised. Hopefully it has been able to provide routes into understanding larger problems - new states, new channels, and new Bell inequalities with interesting properties have been presented; but with those come more questions. How can we non-trivially simulate \emph{all} channels? Which phase-Werner states are distillable? What is the true relative entropy of entanglement of the Werner states? How can we construct lower detection thresholds? Hopefully in the future we will know the answer to these problems.

% appendices
\cleardoublepage
%\include{appendices/append}

% definitions and glossaries
\cleardoublepage
\makeabbreviations

% references
\cleardoublepage
\addcontentsline{toc}{chapter}{\numberline{}References}
\renewcommand\bibname{References}
\bibliographystyle{plain}
\bibliography{Updated_references}

% the end
\end{document}